\def\final{0}  %
\newcommand{\todo}[1]{{\color{red}[{\tiny TODO: \bf #1}]\marginpar{\color{red}*}}}
\newcommand{\yonggang}[1]{{\color{blue}[{\tiny Yonggang: \bf #1}]\marginpar{\color{blue}*}}}
\newcommand{\thatchaphol}[1]{\textcolor{purple}{thatchaphol: #1}}
\newcommand{\shengzhe}[1]{\textcolor{brown}{shengzhe: #1}}
\newcommand{\yaowei}[1]{\textcolor{orange}{yaowei: #1}}
\newcommand{\yonggang}[1]{}
\newcommand{\todo}[1]{}
\newcommand{\thatchaphol}[1]{}
\newcommand{\shengzhe}[1]{}
\newcommand{\yaowei}[1]{}
\definecolor{ForestGreen}{rgb}{0.1333,0.5451,0.1333}
\definecolor{DarkRed}{rgb}{0.65,0,0}
\definecolor{Red}{rgb}{1,0,0}
\declaretheorem[numberwithin=section]{theorem}
\declaretheorem[numberlike=theorem]{lemma}
\declaretheorem[numberlike=theorem]{fact}
\declaretheorem[numberlike=theorem]{corollary}
\declaretheorem[numberlike=theorem]{claim}
\declaretheorem[numberlike=theorem]{observation}
\declaretheorem[numberlike=theorem,style=definition]{definition}
\declaretheorem[numberlike=theorem,style=definition]{remark}
\declaretheorem[numberlike=theorem,name=Open Problem]{question}
\global\long\def\leng{\mathrm{leng}}
\global\long\def\cong{\mathrm{cong}}
\global\long\def\step{\mathrm{step}}
\global\long\def\diam{\mathrm{diam}}
\global\long\def\Emulator{\textsc{Shortcut}}
\global\long\def\cov{\mathrm{cov}}
\global\long\def\ssmall{\mathrm{small}}
\global\long\def\big{\mathrm{large}}
\global\long\def\path{\mathrm{path}}
\global\long\def\wtilde{\widetilde}
\global\long\def\Dem{\mathrm{Dem}}
\global\long\def\jump{\mathrm{jump}}
\global\long\def\vvalue{\mathrm{value}}
\global\long\def\shortcut{\mathrm{shortcut}}
\global\long\def\HvLg{\mathrm{HvLg}}
\global\long\def\light{\mathrm{light}}
\global\long\def\pre{\mathrm{pre}}
\global\long\def\suf{\mathrm{suf}}
\global\long\def\sc{\mathrm{sc}}
\global\long\def\poly{\mathrm{poly}}
\global\long\def\sshort{\mathrm{short}}
\global\long\def\best{\mathrm{best}}
\global\long\def\totlen{\mathrm{totlen}}
\global\long\def\supp{\mathrm{supp}}
\global\long\def\dir{\mathrm{dir}}
\global\long\def\iin{\mathrm{in}}
\global\long\def\out{\mathrm{out}}
\global\long\def\sc{\mathrm{sc}}
\global\long\def\l{\ell}%
\global\long\def\ed{\mathrm{ed}}
\global\long\def\tepsk{\tilde{\epsilon_{\kappa}}}
\global\long\def\epsh{\epsilon_{h}}
\global\long\def\sg{\mathrm{sg}}
\global\long\def\mt{\mathrm{mt}}
\global\long\def\true{\mathrm{true}}
\global\long\def\old{\mathrm{old}}
\global\long\def\new{\mathrm{new}}
\global\long\def\output{\mathrm{output}}
\global\long\def\polylog{\mathrm{polylog}}
\newcommand{\dist}{\mathrm{dist}}
\newcommand{\ADLSC}{\textsc{ApxDLSC}}
\newcommand{\spa}{\operatorname{spars}}
\newcommand{\sep}{\operatorname{sep}}
\newcommand{\cond}{\operatorname{cond}}
\newcommand{\conge}{\operatorname{cong}}
\renewcommand{\l}{\ell}
\newcommand{\val}{\operatorname{val}}
\newcommand{\qLC}{\text{LC}}
\newcommand{\qLEC}{\text{LEC}}
\newcommand{\qLWSC}{\text{LWSC}}
\newcommand{\qLDSC}{\text{DLSC}}
\newcommand{\qLDSCS}{\text{DLSCS}}
\newcommand{\epo}{\mathsf{epoch}}
\newcommand{\LDSC}{\operatorname{DLSC}}
\newcommand{\LDSCS}{\operatorname{DLSCS}}
\newcommand{\eps}{\epsilon}
\newcommand{\Start}{\mathrm{start}}
\newcommand{\End}{\mathrm{end}}
\newcommand{\Succ}{\mathrm{succ}}
\newcommand{\Prec}{\mathrm{prec}}
\newcommand{\copies}{\mathrm{Copies}}
\newcommand{\pg}{\textnormal{\textsf{pg}}\xspace}
\newcommand{\epsl}{\epsilon_\ell}
\newcommand{\epsk}{\epsilon_\kappa}
\newcommand{\cK}{{\mathcal K}}
\newcommand{\cN}{{\mathcal N}}
\newcommand{\cO}{{\mathcal O}}
\newcommand{\cP}{{\mathcal P}}
\newcommand{\cS}{{\mathcal S}}
\newcommand{\mcD}{\mathcal{D}}
\newcommand{\mcN}{\mathcal{N}}
\newcommand{\bbN}{{\mathbb N}}
\newcommand{\bbR}{{\mathbb R}}
\newcommand{\tO}[1]{\widetilde{O}\left(#1\right)}
\newcommand{\Ohat}{\hat{O}}
\newcommand{\tA}{\tilde{A}}
\newcommand{\tB}{\tilde{B}}
\newcommand{\hA}{\hat{A}}
\newcommand{\hB}{\hat{B}}
\newcommand{\hP}{\hat{P}}
\title{Parallel $(1+\epsilon)$-Approximate Multi-Commodity Mincost Flow in Almost Optimal Depth and Work}
\author{Bernhard Haeupler\thanks{
        INSAIT, Sofia University ``St.~Kliment Ohridski'' and ETH Zürich,
        \texttt{bernhard.haeupler@insait.ai}.
        Partially funded by the Ministry of Education and Science of Bulgaria's support for INSAIT as part of the Bulgarian National Roadmap for Research Infrastructure and through the European Research Council (ERC) under the European Union's Horizon 2020 research and innovation program (ERC grant agreement 949272).}
       \and 
       Yonggang Jiang\thanks{MPI-INF and Saarland University, \texttt{yjiang@mpi-inf.mpg.de}.} 
       \and 
       Yaowei Long\thanks{University of Michigan, \texttt{yaoweil@umich.edu}. Part of this work was done while at INSAIT, Sofia University "St. Kliment Ohridski", Bulgaria. Partially funded by the Ministry of Education and Science of Bulgaria's support for INSAIT, Sofia University ``St.~Kliment Ohridski'' as part of the Bulgarian National Roadmap for Research Infrastructure.} 
       \and 
       Thatchaphol Saranurak\thanks{
        University of Michigan,
        \texttt{thsa@umich.edu}.
        Supported by NSF Grant CCF-2238138. Partially funded by the Ministry of Education and Science of Bulgaria's support for INSAIT, Sofia University ``St.~Kliment Ohridski'' as part of the Bulgarian National Roadmap for Research Infrastructure.
    }  
    \and 
    Shengzhe Wang\thanks{ETH Zürich, \texttt{shengzhe.wang@inf.ethz.ch}.}
    }
\date{}
\begin{document}

\maketitle

\pagenumbering{gobble}

\begin{abstract}

We present a parallel algorithm for computing $(1+\epsilon)$-approximate mincost flow on an undirected graph with $m$ edges, where capacities and costs are assigned to both edges and vertices. Our algorithm achieves $\Ohat(m)$ work and $\Ohat(1)$ depth when $\epsilon > 1/\polylog(m)$, making both the work and depth almost optimal, up to a subpolynomial factor.

Previous algorithms with $\Ohat(m)$ work required $\Omega(m)$ depth, even for special cases of mincost flow with only edge capacities or max flow with vertex capacities. Our result generalizes prior almost-optimal parallel $(1+\epsilon)$-approximation algorithms for these special cases, including shortest paths \cite{DBLP:conf/stoc/Li20,DBLP:conf/stoc/AndoniSZ20,DBLP:conf/stoc/RozhonGHZL22} and max flow with only edge capacities \cite{li2023near,DBLP:conf/soda/AgarwalKLPWWZ24}.

Our key technical contribution is the first construction of \emph{length-constrained flow shortcuts} with $(1+\epsilon)$ length slack, $\Ohat(1)$ congestion slack, and $\Ohat(1)$ step bound. This provides a strict generalization of the influential concept of $(\Ohat(1),\epsilon)$-hopsets \cite{DBLP:journals/jacm/Cohen00}, allowing for additional control over congestion. Previous length-constrained flow shortcuts \cite{DBLP:conf/stoc/HaeuplerH0RS24} incur a large constant in the length slack, which would lead to a large approximation factor.
To enable our flow algorithms to work under vertex capacities, we also develop a close-to-linear time algorithm for computing length-constrained \emph{vertex} expander decomposition.

Building on Cohen's idea of \emph{path-count flows} \cite{DBLP:journals/siamcomp/Cohen95}, we further extend our algorithm to solve $(1+\epsilon)$-approximate $k$-commodity mincost flow problems with almost-optimal $\Ohat(mk)$ work and $\Ohat(1)$ depth, independent of the number of commodities $k$.

\end{abstract}

\newpage
\setcounter{tocdepth}{2}
\tableofcontents

\newpage

\pagenumbering{arabic}

\section{Introduction}
\label{sect:Introduction}

The \emph{mincost flow} problem is one of the most fundamental combinatorial optimization problems and has been extensively studied since the 60s \cite{fulkerson1961out,DBLP:journals/jacm/EdmondsK72,DBLP:journals/combinatorica/Tardos85,orlin1984genuinely,DBLP:journals/jacm/GalilT88,DBLP:journals/mor/GoldbergT90,DBLP:journals/mp/AhujaGOT92,DBLP:journals/ior/Orlin93,DBLP:journals/mp/Orlin97,DBLP:conf/stoc/DaitchS08,DBLP:conf/stoc/BrandLLSS0W21,DBLP:conf/focs/AxiotisMV21,DBLP:conf/stoc/BrandGJLLPS22}, which recently culminates in the breakthrough almost-linear-time algorithm \cite{DBLP:conf/focs/ChenKLPGS22}. However, the aforementioned algorithms were mainly designed in the \emph{sequential} setting, and generally cannot be parallelized in an efficient way. 

This motivates the next natural goal of obtaining \emph{parallel} algorithms with almost-optimal\footnote{By convention, the term \emph{almost} means a subpolynomial overhead, e.g. almost-linear in $n$ means $n^{1+o(1)}$. Moreover, \emph{nearly} means a polylogarithmic overhead, and \emph{close-to} means an $n^{\epsilon}$ overhead. We use $\tilde{O}(\cdot)$ and $\hat{O}(\cdot)$ to hide polylogarithmic and subpolynomial factors.} work and depth. Previously, there have been exciting achievements for two special cases of the mincost flow problem, the \emph{$s$-$t$ shortest path} and \emph{maximum flow} problems, in undirected graphs. For shortest paths, the 25-year-old breakthrough by \cite{DBLP:journals/jacm/Cohen00} showed the first $(1+\epsilon)$-approximate algorithm with almost-linear work and subpolynomial depth, and recently, $(1+\epsilon)$-approximate algorithms with nearly-linear work and polylogarithmic depth were developed \cite{DBLP:conf/stoc/Li20,DBLP:conf/stoc/AndoniSZ20,DBLP:conf/stoc/RozhonGHZL22}. 
For maximum flows, the state-of-the-art $(1+\epsilon)$-approximate algorithm also attained nearly-linear work and polylogarithmic depth \cite{DBLP:conf/soda/AgarwalKLPWWZ24}. 
However, there is still no comparable result for either
\begin{enumerate}
\item\label{intro:problem1} the more generalized \emph{mincost flow} problem in undirected graphs (with edge capacities), or 
\item\label{intro:problem2} still the maximum flow problem but in undirected graphs with \emph{vertex capacities}.
\end{enumerate}

Solving either of the above problems will be challenging, and in particular, the techniques in the aforementioned works cannot be directly transferred to either of these two problems. The mincost flow problem requires a joint $\ell_{1}$ (lengths/costs) and $\ell_{\infty}$ (congestion) optimization, while previous techniques including hopsets \cite{DBLP:journals/jacm/Cohen00}, $\ell_{1}$-oblivious routings \cite{DBLP:conf/stoc/Li20,DBLP:conf/stoc/AndoniSZ20,DBLP:conf/stoc/RozhonGHZL22} and congestion approximators based on $\ell_{\infty}$-oblivious routings \cite{li2023near,DBLP:conf/soda/AgarwalKLPWWZ24} only capture either $\ell_{1}$ or $\ell_{\infty}$ information. Also, in vertex-capacitated undirected graphs, it is known that good $\ell_{\infty}$-oblivious routing for vertex-capacitated graphs does \emph{not} exist \cite{DBLP:journals/talg/HajiaghayiKRL07}. This gives a fundamental barrier to previous oblivious-routing-based techniques \cite{DBLP:conf/soda/AgarwalKLPWWZ24}. To summarize, the following natural question is still open, and answering it may require new approaches.

\begin{center}
\textit{Are there parallel $(1+\epsilon)$-approximate algorithms for either problems \ref{intro:problem1} or \ref{intro:problem2}\\ with almost-linear work and subpolynomial depth?}
\end{center}

\subsection{Our Results} 
We answer the above open question affirmatively. In particular, we present the first parallel $(1+\epsilon)$-approximate mincost flow algorithm with almost-linear work and subpolynomial depth for undirected graphs with vertex capacities and costs, which solves problems \ref{intro:problem1} and \ref{intro:problem2} simultaneously.

Formally, consider an $n$-vertex $m$-edge graph $G = (V,E)$ with non-negative polynomially-bounded integral capacities and costs on both edges \emph{and vertices}. Let $B$ be a given number representing the cost budget and $s,t\in V$ be two given vertices. The $(1+\epsilon)$-approximate mincost flow problem asks for a $(1+\epsilon)$-approximate maximum flow among all flows that have total cost at most $B$ and respect the capacity on each edge and vertex. Our result is formally stated in \Cref{thm:IntroMainResult1}.

\begin{theorem}
\label{thm:IntroMainResult1}
Given a graph $G$, cost budget $B$, two vertices $s,t\in V$, and a precision parameter $1/\polylog(n)< \epsilon<1$, there exists a parallel $(1+\epsilon)$-approximate mincost flow algorithm with $\hat{O}(m)$ work and $\hat{O}(1)$ depth.
\end{theorem}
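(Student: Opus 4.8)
The plan is to follow the now-standard multiplicative-weights-update (MWU) framework for $(1+\epsilon)$-approximate mincost flow, but to replace the expensive inner-loop primitive — an approximate shortest-path / min-cost-path oracle under congestion — by the length-constrained flow shortcuts and vertex expander decomposition that are the paper's main technical contributions, so that each MWU phase runs in $\hat O(1)$ depth. Concretely, I would first reduce the mincost flow problem with budget $B$ to a sequence of feasibility questions: given a target throughput $F$, does there exist an $s$-$t$ flow of value $F$, cost at most $B$, respecting all edge and vertex capacities? By binary search over $F$ (losing only an $\hat O(1)$ factor in work and a $\log$ factor in depth), it suffices to solve this decision/routing version approximately. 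To handle vertex capacities and costs uniformly with edge capacities and costs, I would apply the usual vertex-splitting transformation (split each vertex $v$ into $v_{\text{in}}, v_{\text{out}}$ joined by an edge carrying $v$'s capacity and cost), so that from now on only edges carry capacities and costs; one must check this blows up $m$ and $n$ by only a constant factor and preserves $s$-$t$ flows.

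Next, set up the MWU / Lagrangian-relaxation outer loop. Introduce a length function $\ell_e$ that combines the capacity-violation penalty and the cost constraint: the weight of edge $e$ is a product of an exponential potential $\exp(\text{congestion on }e)$ and of the cost $c_e$ folded in via a Lagrange multiplier for the budget $B$ (this is the standard area-convexity / MWU packing-LP view, e.g. as in the Garg–Könemann and Chen–Kyng–Liu–Peng–Gutenberg–Sachdeva line). In each of the $\hat O(\epsilon^{-2})=\hat O(1)$ iterations, the algorithm must route an $s$-$t$ flow along paths that are approximately shortest with respect to the current $\ell$, add a small $\epsilon$-scaled fraction of it to the running solution, and update the weights. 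The key subroutine is therefore: \emph{given edge lengths $\ell$ and a step bound $h$, compute (the routing of) an $s$-$t$ flow whose paths have length within $(1+\epsilon)$ of optimal and whose congestion is $\hat O(1)$-competitive}. This is exactly where length-constrained flow shortcuts with $(1+\epsilon)$ length slack, $\hat O(1)$ congestion slack, and $\hat O(1)$ step bound enter: on the shortcut graph, approximately-shortest length-constrained flows can be found in $\hat O(1)$ depth (few steps, so few "rounds" of a Bellman–Ford / scaling-type computation), and then pulled back to $G$ at the cost of the stated slacks. The length-constrained vertex expander decomposition is used inside the shortcut construction to handle the vertex-capacity setting, since $\ell_\infty$-oblivious routing does not exist there and one instead routes inside low-diameter vertex expanders.

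Then I would assemble the pieces: (i) prove the outer MWU loop converges to a $(1+\epsilon')$-approximately optimal fractional solution after $\hat O(1)$ iterations, absorbing the $(1+\epsilon)$ length slack and $\hat O(1)$ congestion slack of the shortcut oracle into the analysis — here one has to be careful that the congestion slack is merely a polylog/subpolynomial factor, which is fine since it only multiplies the running time (we re-normalize capacities by $\hat O(1)$ and still get a $(1+\epsilon)$-approximation of the \emph{value}), while the length slack must be kept at $(1+\epsilon)$ because it directly governs the approximation ratio of cost/shortest-path; (ii) note that because each shortcut has $\hat O(1)$ step bound, the length function needs to be refreshed at only $\hat O(1)$ distance scales (a standard $O(\log_{1+\epsilon} \text{aspect ratio})$ bucketing of lengths, which is $\hat O(1)$ many scales for polynomially-bounded weights), so the total work is $\hat O(1)\cdot\hat O(1)\cdot\hat O(m)=\hat O(m)$ and the depth is $\hat O(1)\cdot\hat O(1)=\hat O(1)$; (iii) round the final fractional flow to an integral one if needed, or simply report the fractional flow value, since the statement only asks for a $(1+\epsilon)$-approximate mincost flow (value), and a standard $\hat O(1)$-depth flow-decomposition/rounding suffices.

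The main obstacle I anticipate is getting the MWU analysis to tolerate a flow oracle that is only $(1+\epsilon)$-approximate in length and $\hat O(1)$-approximate in congestion \emph{simultaneously}, rather than exactly shortest: one must show the Lagrangian potential still decreases by a $(1-\Omega(\epsilon))$-type factor per step despite both errors, which requires phrasing the inner oracle's guarantee as "the routed flow's $\ell$-weighted length is within $(1+\epsilon)$ of the minimum over all length-$h$ $s$-$t$ flows of comparable value, and its congestion overhead is $\hat O(1)$", and then plugging this into the width/convergence bound so that the $\hat O(1)$ congestion slack only inflates the number of iterations by $\hat O(1)$ (harmless) while the $(1+\epsilon)$ length slack compounds with the per-step $\epsilon$ to a total $(1+O(\epsilon))$ error. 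A secondary subtlety is interleaving the budget constraint $B$ into the same exponential potential as the capacities without the two Lagrange multipliers fighting each other — this is handled by the usual trick of treating "cost" as just one more capacitated resource (a single additional coordinate), so the packing LP has dimension $m+1$ and the analysis goes through verbatim; one then only needs the shortcut oracle to respect this extra coordinate, which it does since it already computes approximately-shortest (i.e. approximately-cheapest) length-constrained flows.
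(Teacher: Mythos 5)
Your proposal follows essentially the same route as the paper: an MWU/flow-boosting outer loop (with the cost budget folded in as one extra dual variable) whose inner oracle is a bicriteria flow routine with $(1+\epsilon)$ length slack and $\hat{O}(1)$ congestion slack, implemented by forward-mapping onto an LC-flow shortcut with $(1+\epsilon)$ length slack, solving a low-step flow problem there over $O(\log_{1+\epsilon})$ length scales, and backward-mapping; the vertex expander decomposition plays exactly the role you assign it. The only piece you gloss over is how the low-step, congestion-controlled flow is actually computed on the shortcut graph in $\hat{O}(1)$ depth --- this is not a Bellman--Ford-style computation but an approximate length-constrained maxflow via lightest-path blockers and blocking flows in an $h$-layer DAG (the paper's Section 5, building on Haeupler--Hershkowitz--Saranurak), though for the single-commodity statement here that primitive is available from prior work.
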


\paragraph{Barriers towards Exact Algorithms or Directed Graphs.} 
It is noteworthy that simultaneously working in undirected graphs (with vertex capacities and costs) and allowing $(1+\epsilon)$-approximation is the strongest setting in which efficient parallel mincost flow algorithms may be achievable based on current knowledge. Considering either \emph{directed} graphs or \emph{exact} solutions will run up against the following long-standing barriers.
\begin{itemize}
\item The parallel (approximate) mincost flow problem in \emph{directed} graphs subsumes the basic \emph{parallel 
$s$-$t$ reachability} problem. Parallel $s$-$t$ reachability is a big open problem studied over decades and we are still far from knowing whether algorithms with almost-linear work and subpolynomial depth exist. The state-of-the-art almost-linear-work algorithm has $O(n^{0.5+o(1)})$ depth \cite{DBLP:conf/focs/LiuJS19}.
\item An efficient parallel \emph{exact} mincost flow in undirected graphs faces the barrier even for two special cases: the \emph{maximum flow} and \emph{$s$-$t$ shortest path} problems. Computing an exact maximum flow in undirected graphs is as hard as in directed graphs \cite{mkadry2011graphs}, so it again suffers from the aforementioned parallel reachability barrier. For the $s$-$t$ shortest path problem, even in the simplest undirected, unweighted graphs, the state-of-the-art almost-linear-work algorithms still have $O(n^{0.5+o(1)})$ depth \cite{DBLP:conf/stoc/RozhonHMGZ23,DBLP:conf/soda/CaoF23}.

\end{itemize}
Indeed, there have been attempts to design parallel exact mincost flow algorithms in directed graphs, but they either suffer from a large polynomial work \cite{DBLP:journals/orl/OrlinS93} or get stuck in roughly $\sqrt{n}$ depth \cite{DBLP:conf/focs/LeeS14,brand2025parallel}.

\paragraph{Multi-Commodity Mincost Flows.} 
There has been a long line of work \cite{shahrokhi1990maximum,leighton1991fast,radzik1996fast,grigoriadis1996approximate,nesterov2009fast,madry2010faster,kelner2012faster,sherman2017area,cohen2021solving,DBLP:conf/soda/ChuzhoyS21,chen2024high,chen2025accelerated} devoted to fast algorithms for computing \emph{multi-commodity flow}. See \Cref{sect:MincostFlowByBoosting} for the formal problem statements.

Quite recently, algorithms for multi-commodity flow with near-optimal work have been discovered. In the discussion below, we assume that $\epsilon > 1/\polylog(m)$. The first algorithm, presented by Sherman \cite{sherman2017area}, is a $(1+\epsilon)$-approximation algorithm for computing $k$-commodity concurrent flow, achieving near-optimal $\tilde{O}(mk)$ work. An alternative approach is to apply known reductions from \cite{radzik1996fast,grigoriadis1996approximate}, which reduce the problem of $(1+\epsilon)$-approximate $k$-commodity concurrent (mincost) flow to $\tilde{O}(k/\epsilon^{2})$ calls of \emph{single-commodity} $(1+\epsilon)$-approximate mincost flow. Plugging in the recent exact $\Ohat(m)$-time mincost flow algorithm \cite{DBLP:conf/focs/ChenKLPGS22}, this also yields an algorithm with $\Ohat(mk)$ work. Unfortunately, none of these approaches provide parallel algorithms with non-trivial sublinear depth.

Notably, by applying \Cref{thm:IntroMainResult1} to the above reduction, we immediately obtain a $(1+\epsilon)$-approximate $k$-commodity concurrent (mincost) flow algorithm with $\Ohat(mk)$ work and $\Ohat(k)$ depth. To our knowledge, this already gives the first parallel multi-commodity flow algorithm with almost-optimal work and non-trivial depth, motivating us to further investigate whether $\Ohat(k)$ depth is inherent.

\begin{center}

\emph{Is there a parallel $(1+\epsilon)$-approximate multi-commodity flow algorithm with almost-optimal work and depth that is sublinear in the number of commodities $k$?}

\par\end{center}

Conceptually, this question asks how much different commodities in a near-optimal flow must ``depend'' on each other. To our knowledge, the only related previous work is \cite{DBLP:conf/stoc/HaeuplerH0RS24}, which shows a parallel $2^{\poly(1/\epsilon)}$-approximation algorithm with $O((m+k)^{1+\epsilon})$ work\footnote{In fact, when the algorithm is required to output a flow solution explicitly, there is an $\Omega(mk)$ work lower bound due to the output size. Their algorithm bypasses this by outputting a flow solution \emph{implicitly} via a flow oracle.} and $O((m+k)^{\epsilon})$ depth for $k$ commodities and arbitrary constant $\epsilon>0$. The significant drawback of \cite{DBLP:conf/stoc/HaeuplerH0RS24} is that its approximation factor is a larger constant, significantly above $1+\epsilon$, and thus unsatisfactory in our context.

We provide a strong affirmative answer to the above question by showing an almost optimal algorithm with depth $\Ohat(1)$:

\begin{theorem}
\label{thm:IntroResult2}
Let $G$ be an undirected graph with non-negative polynomially-bounded integral costs and capacities. Given a precision parameter $1/\poly\log n< \epsilon< 1$, there are $(1+\epsilon)$-approximate algorithms for both the $k$-commodity concurrent and non-concurrent mincost flow problems with work $\hat{O}(mk)$ and depth $\hat{O}(1)$. The flow solution is outputted explicitly in its edge representation.
\end{theorem}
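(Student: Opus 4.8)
The plan is to prove \Cref{thm:IntroResult2} by a \emph{boosting} reduction (carried out in \Cref{sect:MincostFlowByBoosting}) that lifts the single-commodity algorithm of \Cref{thm:IntroMainResult1} to $k$ commodities while keeping the number of sequential rounds independent of $k$. First I would phrase the $k$-commodity concurrent mincost flow problem as a packing linear program, with one packing constraint per edge capacity, one per vertex capacity, and one for the global cost budget $B$, the variables being the per-commodity flows and the objective the concurrency rate $\lambda$. Running a multiplicative-weights / Lagrangian method in the style of \cite{radzik1996fast,grigoriadis1996approximate} on this LP, each iteration maintains a length function $\ell$ on edges and vertices — a weighted combination of the current multipliers of the capacity and budget constraints — and queries a best-response oracle: route each commodity's (scaled) demand from $s_i$ to $t_i$ near-optimally with respect to $\ell$. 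The non-concurrent (maximum multi-commodity) variant is the same LP with $\lambda$ replaced by the total routed flow, handled by the same framework; vertex capacities are harmless, being merely extra packing constraints that the oracle already respects because \Cref{thm:IntroMainResult1} does.

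Next I would observe that, with $\ell$ fixed, the best-response oracle decouples across commodities: routing commodity $i$ near-optimally against $\ell$ while respecting the original cost budget is a single-commodity $(1+\epsilon)$-approximate mincost flow instance on $G$, solvable by \Cref{thm:IntroMainResult1} in $\hat{O}(m)$ work and $\hat{O}(1)$ depth. Running all $k$ such instances in parallel costs $\hat{O}(mk)$ work and $\hat{O}(1)$ depth per iteration, and summing the $k$ returned flows gives the best-response flow in edge representation. Averaging these best responses over the iterations with the appropriate scaling yields the final $(1+\epsilon)$-approximate $k$-commodity flow, again explicitly in edge representation; since each round produces an object of size $O(mk)$ and there are $\hat{O}(1)$ rounds (shown next), the aggregation stays within $\hat{O}(mk)$ work and $\hat{O}(1)$ depth. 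In particular, unlike \cite{DBLP:conf/stoc/HaeuplerH0RS24}, no implicit flow oracle is needed.

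The remaining and central step is to bound the number of iterations by $\hat{O}(1)$, which by the standard analysis reduces to bounding the \emph{width} of the best-response flows — their worst-case congestion relative to the capacities — by $\hat{O}(1)$. The naive choice of routing each commodity along a single $\ell$-shortest path has width $\Theta(k)$ (one commodity may saturate a shared edge), which would blow the depth up to $\hat{O}(k)$. Here I would invoke Cohen's \emph{path-count flow} idea \cite{DBLP:journals/siamcomp/Cohen95}: route each commodity's demand not along one shortest path but as a low-congestion bundle of near-$\ell$-shortest flow, so that the superposition over the $k$ commodities has subpolynomial congestion; such a spread-out routing exists and is computable in $\hat{O}(m)$ work and $\hat{O}(1)$ depth per commodity via the same length-constrained flow-shortcut machinery underlying \Cref{thm:IntroMainResult1}. (It is convenient to precede this by a coarse, $\polylog$-approximate scaling phase that fixes the right concurrency level, and hence the capacity budget available to each commodity, before the fine boosting loop.) With the width controlled, the method converges in $\polylog(m)/\epsilon^{2} = \hat{O}(1)$ rounds whenever $\epsilon > 1/\polylog(n)$, giving total work $\hat{O}(mk)$ and depth $\hat{O}(1)$.

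The hard part, and the step I expect to dominate the technical work, is exactly this width / round-count control: decoupling the number of boosting rounds from $k$ while still presenting each round to \Cref{thm:IntroMainResult1} as a genuine mincost-flow instance with polynomially bounded, suitably rounded weights. Secondary points I would handle carefully but expect to be routine are: (i) folding the $\ell$-length and the original cost into a single instance acceptable to \Cref{thm:IntroMainResult1} (cleanest if the budget constraint is simply one more packing constraint), (ii) checking that the path-count (spread-out) routing is itself computable in $\hat{O}(m)$ work and $\hat{O}(1)$ depth, and (iii) the reduction from the non-concurrent to the concurrent variant, or equivalently running the boosting loop directly on the non-concurrent LP.
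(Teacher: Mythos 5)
Your overall architecture (a multiplicative-weights boosting loop with a per-round best-response oracle, $\hat{O}(1)$ rounds, explicit edge-representation output, and the cost budget folded in as one more packing constraint) matches the paper's, but there is a genuine gap in the oracle, and it sits exactly at the step you yourself flag as the hard part. You propose to decouple the best response into $k$ independent single-commodity mincost-flow instances against the current length function $\ell$, each solved by \Cref{thm:IntroMainResult1}, and to control the width by spreading each commodity's flow via Cohen's path-count idea. This does not work: each of the $k$ returned flows is individually capacity-respecting, but their superposition can have congestion $\Theta(k)$ (all commodities competing for a single bottleneck edge or vertex), and no per-commodity spreading can avoid this, because the congestion blow-up is \emph{across} commodities rather than within one. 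Path-count flows are a device for distributing a fixed single-commodity (or per-commodity) flow over the near-shortest paths of a layered DAG; they do not arbitrate shared capacity between commodities. With width $\Theta(k)$ the boosting loop needs $\Theta(k\epsilon^{-2}\log^{2}n)$ rounds, and you are back to depth $\hat{O}(k)$ — precisely the baseline reduction (via \cite{radzik1996fast,grigoriadis1996approximate}) that \Cref{thm:IntroResult2} is meant to beat.

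What the paper does instead is make the per-round oracle itself a genuinely \emph{joint} $k$-commodity computation that simultaneously achieves congestion $\hat{O}(1)$ and total $\ell$-length within $1+\epsilon$ of the best capacity-respecting flow routing the demand (\Cref{lemma:ApproxMTLFlow}): forward-map onto the LC-flow shortcut graph of \Cref{coro:emulator} so that near-optimal flows have $\hat{O}(1)$ steps, greedily sweep geometric length scales, at each scale solve a length-constrained \emph{multi-commodity} maxflow (\Cref{thm:ApproxLCMCMF}) — a single capacitated packing problem over all $k$ commodities at once — and finally backward-map to $G$. Cohen's path-count flows enter only inside that maxflow routine, to compute multi-commodity blocking flows in $h$-layer DAGs with depth independent of $k$; they control the \emph{depth} of the inner loop, not the \emph{width} of the outer boosting. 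Your proposal is missing this joint bicriteria oracle (which is where the shortcut machinery and the LC-maxflow algorithm are actually used), and without it the round count cannot be decoupled from $k$.
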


\Cref{thm:IntroResult2} gives a parallel $(1+\epsilon)$-approximation multi-commodity flow algorithm with almost optimal depth and work.

We point out that our algorithms are randomized, but they have good potential to be deterministic since the only randomized ingredients are parallel algorithms related to sparse neighborhood covers. We refer to \Cref{remark:deterministic} for a discussion.

\subsection{Our Techniques}

We now highlight our technical components that may be of independent interest. An overview of how everything fits together will be given in \Cref{sec:overview}.

\paragraph{Length-Constrained Flow Shortcuts with $(1+\epsilon)$ Length Slack.} The notion of \emph{length-constrained flow shortcuts}\footnote{They are called \emph{low-step flow emulator} in \cite{DBLP:conf/stoc/HaeuplerH0RS24}.} (LC-flow shortcuts) are introduced recently by \cite{DBLP:conf/stoc/HaeuplerH0RS24}, which is a new kind of graph augmentation capturing both length and congestion information. Roughly speaking, an LC-flow shortcut augments a graph with additional edges and vertices, enabling all multi-commodity flow demands to be routed with approximately the same length and congestion as before, but only along flow paths each of which has a small number (called \emph{step}) of edges.

\cite{DBLP:conf/stoc/HaeuplerH0RS24} showed the existence of an LC-flow shortcut of size $O(n^{1+O(\epsilon)})$ with length slack $O(1/\epsilon^{3})$, congestion slack $n^{O(\epsilon)}$ and $O(1/\epsilon)$ step\footnote{In fact, the shortcut in \cite{DBLP:conf/stoc/HaeuplerH0RS24} has length slack $O(1/\epsilon^{4})$ and step $O(1/\epsilon^{2})$, but they ensure that all shortcut edges have endpoints inside $V$. When allowing endpoints outside $V$, one can shave a factor of $1/\epsilon$ from both length slack and step by replacing their routers with star graphs.}, but for undirected graphs with only edge lengths and capacities. They also presented shortcuts that can be constructed in parallel and fast but with worse quality such as $2^{\poly(1/\epsilon)}$ length slack. For undirected graphs with vertex lengths and capacities, very recently, \cite{HaeuplerLSW25} showed that there exists an LC-flow shortcut of size $O(n^{1+O(\epsilon)})$ with length slack $O(1/\epsilon^{3})$, congestion slack $n^{O(\epsilon)}$ and step $2^{O(1/\epsilon)}$, but without an efficient construction.

The approaches behind both the aforementioned LC-flow shortcuts are based on tools related to \emph{length-constrained (vertex) expanders} \cite{HaeuplerRG22,HaeuplerHT24,HaeuplerLSW25}. However, these approaches fail to go beyond the inherent large length slack of these tools, and therefore do not result in LC-flow shortcuts with $1+\epsilon$ length slack.

We give the first LC-flow shortcut with $1+\epsilon$ length slack. Precisely, our LC-flow shortcut works for undirected graphs with vertex lengths and capacities, and it has size $\hat{O}(n)$, length slack $1+\epsilon$, congestion slack $\hat{O}(1)$ and step $\hat{O}(1)$ for any $1/\polylog(n)<\epsilon<1$. Furthermore, our shortcut can be constructed in $\hat{O}(m)$ work and $\hat{O}(1)$ depth. We refer to \Cref{thm:emulator} for a detailed statement of our result (or \Cref{thm:OverviewShortcut} in the overview for an informal version).

Our new LC-flow shortcut, thus, improves upon previous length slack from a large constant to $(1+\epsilon)$, at an expense of increasing the step from a large constant to $\hat{O}(1)$. However, the step will only affect the work and depth of the final algorithms, not the approximation. The construction of our LC-flow shortcuts can be viewed as a generalization of Cohen's hopset construction \cite{DBLP:journals/jacm/Cohen00} which  provides shortcuts with $(1+\epsilon)$ length slack but no guarantee on congestion slack.

\paragraph{Close-to-Optimal LC-Vertex Expander Decomposition Algorithm.} Our second technical contribution is the first parallel length-constrained vertex expander decomposition algorithm in close-to-linear work and close-to-constant depth. We refer to \Cref{subsec:preliminaryexpander} for related preliminaries and \Cref{thm:vertexLC-ED} for the formal statement of our results.

A fast LC-vertex-expander decomposition algorithm is required when we want our LC-flow shortcuts to work on undirected graphs with vertex lengths and capacities. However, previously \cite{HaeuplerLSW25} only showed the existence of good LC-vertex-expander decompositions without an efficient algorithm for them. 

Our approach mainly follows the fast LC-(edge)-expander decomposition algorithm, developed recently in \cite{HaeuplerHT24}. Given that algorithmic LC-(edge)-expander decompositions have already led to several applications \cite{DBLP:conf/stoc/HaeuplerH0RS24,DBLP:conf/focs/HaeuplerLS24}, our fast algorithm for LC-vertex-expander decomposition will likely have more future applications.

\paragraph{Length-Constrained Multi-Commodity Maxflows in Low Depths.} Our last intermediate result which may be of independent interest is a \emph{$(1+\epsilon)$-approximate length-constrained multi-commodity maxflow algorithm} with $\tilde{O}(mk)\cdot \poly(h,\epsilon^{-1})$ work and $\tilde{O}(1)\cdot \poly(h,\epsilon^{-1})$ depth, for $k$ commodities and a given $h$ representing the length constraint. We refer to \Cref{Sect:ApproxLCMCFlow} for the formal definition and \Cref{thm:ApproxLCMCMF} for the formal result statement.

Previously, \cite{HaeuplerHS23} studied this problem in the single-commodity setting, and showed a single-commodity version with $\tilde{O}(m)\cdot \poly(h,\epsilon^{-1})$ work and $\tilde{O}(1)\cdot \poly(h,\epsilon^{-1})$. They also generalized their result to the multi-commodity setting in an (almost) black-box way, but with an additional $k$ factor in depth compared to ours. 

Our approach towards improvement (i.e. the depth is now independent from $k$) is a white-box generalization of \cite{HaeuplerHS23} combining Cohen's idea of \emph{path-count flows} \cite{DBLP:journals/siamcomp/Cohen95}.

\subsection{Organization}

In \Cref{sec:overview}, we give an overview of our techniques. \Cref{sect:Preliminaries} introduces basic concepts and definitions we will use in the main body. In \Cref{sect:LCVED}, we formally state the result on our fast parallel length-constrained vertex expander decomposition algorithm (with proof deferred to the appendix). In \Cref{sect:Shortcut}, we discuss our LC-flow shortcuts with $(1+\epsilon)$ length slack, which will use the decomposition subroutine in \Cref{sect:LCVED}. In \Cref{Sect:ApproxLCMCFlow} we talk about our 
$(1+\epsilon)$-approximate length-constrained multi-commodity maxflow algorithm. Lastly, in \Cref{sect:MincostFlowByBoosting} we complete the proof of our parallel $(1+\epsilon)$-approximate multi-commodity mincost flow result, using our new tools in \Cref{sect:Shortcut,Sect:ApproxLCMCFlow}.
\section{Overview}
\label{sec:overview}

In this section, we give an overview of our parallel $(1+\epsilon)$-approximate multi-commodity mincost flow algorithm. 

At a high-level, there are three steps towards our final result. The first and most important step is to build \emph{length-constrained flow (LC-flow) shortcuts} with $1+\epsilon$ length slack, subpolynomial congestion slack and subpolynomial step (\Cref{sect:Shortcut}) and towards this we need a fast parallel length-constrained vertex expander decomposition algorithm (\Cref{sect:LCVED}). The second step is to develop a low-step multi-commodity mincost flow algorithm, but we allow a bicriteria approximation with $1+\epsilon$ cost slack and subpolynomial congestion slack (\Cref{sect:LowStepFlow}), which requires an approximate length-constrained multi-commodity maxflow algorithm (\Cref{Sect:ApproxLCMCFlow}). Lastly in the third step, combining the shortcuts and the low-step flow, we can compute a multi-commodity mincost flow in the original graph, also with a bicriteria $1+\epsilon$ cost slack and subpolynomial congestion slack (\Cref{sect:ApproxMTLFlow}). Further using the flow boosting template \cite{DBLP:journals/siamcomp/GargK07,DBLP:conf/stoc/HaeuplerH0RS24} based on multiplicative weight updates, the subpolynomial congestion slack can be boosted down to $1+\epsilon$, at the cost of an additional subpolynomial factor in depth, and this gives our final result (\Cref{sect:MainResultProof}). 

A large fraction of this overview will focus on the first step, i.e. LC-flow shortcuts with $(1+\epsilon)$ length slack. Although the formal argument is somewhat complicated, the key idea is intuitive. Therefore, in \Cref{sect:OverviewPart1}, we will present a relatively succinct argument, which is informal but (almost) precise and self-contained, by simplifying the setting and omitting the proofs of some less important properties. For example, we will mainly focus on undirected graphs with only edge capacities and costs/lengths, and we will only argue the \emph{existence} of such shortcuts without providing a fast construction algorithm (only in \Cref{Overview:VertexShortcut}, we will talk about undirected graphs with vertex costs and capacities, and briefly discuss the fast construction).

In \Cref{Overview:LowStepFlow}, we will briefly discuss the idea behind the second step without going into the technical details. We will not talk about the third step in this overview, because \cite{DBLP:conf/stoc/HaeuplerH0RS24} already presented a formal template of the flow boosting and we only need to plug in our new algorithm as a subroutine.

\subsection{LC-Flow Shortcuts with $(1+\epsilon)$ Length Slack}
\label{sect:OverviewPart1}

Before introducing our LC-flow shortcuts, we need to set up some basic notions on flows. Consider an undirected graph $G = (V,E)$ with edge lengths $\ell$ and capacities $U$. A \emph{(multi-commodity) flow} $F$ is a collection of weighted flow paths $P$. The weight of each path $P$ is called its value, denoted by $F(P)$, and the value of the flow $F$ is $\vvalue(F) = \sum_{P\in F}F(P)$. A \emph{demand} $D$ is a function that assigns each (ordered) vertex pair $(u,v)$ a non-negative number $D(u,v)$. We further have the following concepts about a flow $F$.
The \emph{demand routed by $F$}, denoted by $D_{F}$, has $D_{F}(u,v)$ equals to the total value of $u$-$v$ flow paths for any vertex pair $(u,v)$.
The \emph{length} of $F$, denoted by $\leng(F)$, is the maximum length of flow paths.
The \emph{congestion} of $F$, denoted by $\cong(F)$, is the maximum $F(e)/U(e)$ over $e\in E$, where $F(e)$ is the total value of flow paths going through $e$.
The \emph{step} of $F$, denoted by $\step(F)$, is the maximum number of edges on a flow path.
Our result about low-step LC-flow shortcuts is given in \Cref{thm:OverviewShortcut}. 

\begin{theorem}
\label{thm:OverviewShortcut}
Given an undirected graph $G$ and a parameter $1/\polylog(n)<\epsilon<1$, we can compute an additional edge set $H$ (possibly with endpoints outside $G$) satisfying the following. 
\begin{itemize}
\item (Forward Mapping) For any flow $F^{G}$ in $G$ with congestion $1$, there is a flow $F^{H}$ in $G\cup H$ routing the same demand with length $(1+\epsilon)\leng(F^{G})$, congestion $1$ and step $\hat{O}(1)$.
\item (Backward Mapping) For any flow $F^{H}$ in $G\cup H$ routing a demand between $V(G)$ with congestion $1$, there is a flow $F^{G}$ in $G$ routing the same demand with length $\leng(F^{H})$ and congestion $\hat{O}(1)$.
\end{itemize}
The size of $H$ is $\hat{O}(n)$, and $H$ can be computed in $\hat{O}(m)$ work and $\hat{O}(1)$ depth.
\end{theorem}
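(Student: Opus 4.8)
The plan is to generalize Cohen's $(1+\epsilon,\hat{O}(1))$-hopset construction \cite{DBLP:journals/jacm/Cohen00} so that, on top of controlling lengths and step counts, it also controls congestion, by replacing its ``add a shortcut edge of the right length between every pair of nearby sampled centers'' step with a flow/capacity version in which each bundle of shortcut edges is certified by a low-congestion length-constrained flow in $G$. First I would reduce to a single distance scale: for each $h\in\{2^0,2^1,\dots,\poly(n)\}$ I build a shortcut $H_h$ that re-routes any congestion-$1$ flow whose paths have length in $(h/2,h]$ into $G\cup H_h$ with length slack $1+\epsilon$, step $\hat{O}(1)$ and congestion $1$, and such that $H_h$ admits a ``backward'' embedding into $G$ of length slack $1$ and congestion $\hat{O}(1)$; then set $H=\bigcup_h H_h$. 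Since there are $O(\log n)$ scales and a scale-$h$ edge is used only by length-$\Theta(h)$ flow paths, the union has size $\sum_h|H_h|=\hat{O}(n)$, keeps the forward congestion at $1$ (distinct scales do not interact), and inflates the backward congestion by only an $O(\log n)$ factor.

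For a fixed scale $h$ I would build $H_h$ by Cohen's recursion over $O(\log\log n)$ levels: level $j$ upgrades a shortcut correct for distances up to $W_j$ into one correct up to $W_{j+1}=W_j^{\Theta(1)}$ while adding only $\polylog(n)$ to the step bound, so the final step bound is $\polylog(n)^{O(\log\log n)}=n^{o(1)}=\hat{O}(1)$, and crucially the analysis maintains a \emph{$1+\epsilon$} length approximation at every level (this is what circumvents the fixed constant length slack that any single expander-decomposition layer would incur, and it is the reason we pay $\hat{O}(1)$ rather than $O(1/\epsilon)$ in step). A single level is where the new ingredient enters. On the current graph $G\cup H^{(j)}$ I compute a sparse neighborhood cover / length-constrained \emph{vertex} expander decomposition (\Cref{thm:vertexLC-ED}) at length parameter $\approx W_j$, obtaining clusters $\mathcal C$ of low load (each vertex in $\hat{O}(1)$ clusters) that together cover every short path, with each cluster $C$ being a length-$W_j$ vertex expander. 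For each $C$ I add a fresh center $x_C\notin V(G)$ and, for every $v\in C$, a shortcut edge $(x_C,v)$ whose length is the recursively computed $(1+\epsilon)$-approximate $W_j$-bounded distance from $v$ to a canonical portal of $C$ and whose capacity is proportional to $v$'s vertex capacity, so that the whole bundle of these edges is feasibly routable inside $G[C]$; the expander property then yields a low-congestion, length-budget-respecting routing of this bundle into $G[C]$, which \emph{is} the backward embedding of the level-$j$ edges. The $1+\epsilon$ length slack is obtained by running this machinery at a length scale a $\Theta(\epsilon/\text{step})$-factor finer than the distances that must be preserved and telescoping the resulting per-level $(1+\epsilon/O(\log\log n))$ factors.

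The forward mapping is then as follows: take a congestion-$1$ flow $F^G$, split it by length scale, and for a flow path $P$ of length $\in(h/2,h]$ mark $\hat{O}(1)$ breakpoints decomposing $P$ into segments handled at successive recursion levels; route the two end-fragments of each segment to the canonical portals of their covering clusters (each at cost $\le \epsilon\,\leng(P)/(\#\text{segments})$, using the recursively-built finer-scale shortcuts, so that apart from an $O(1)$-length base case $F^H$ travels entirely on shortcut edges whose capacities were provisioned for exactly this traffic) and cross one level-$j$ shortcut edge, through $x_C$, per segment. Low load of the cover together with the congestion guarantee of each expander routing keeps $\cong(F^H)=1$, and the hop accounting gives $\step(F^H)=\hat{O}(1)$. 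The backward mapping replaces every shortcut edge on a path of $F^H$ by its certified $G[C]$-embedding flow — of the same length and $\hat{O}(1)$ congestion — and sums over the $\hat{O}(1)$ levels, $\hat{O}(1)$-load clusters, and $O(\log n)$ scales, which stays subpolynomial. For the complexity, there are $\hat{O}(1)$ (levels) $\times\,O(\log n)$ (scales) invocations of parallel sparse-neighborhood-cover / LC-vertex-expander-decomposition subroutines (\Cref{thm:vertexLC-ED}) and of recursive finer-scale bounded-distance computations, each in $\hat{O}(m)$ work and $\hat{O}(1)$ depth, so the total is $\hat{O}(m)$ work and $\hat{O}(1)$ depth.

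The hard part will be reconciling the two desiderata simultaneously: $(1+\epsilon)$ length slack forces us out of the ``one layer of expander decomposition'' regime and into Cohen's geometrically-shrinking-scale recursion, while at the same time we must carry \emph{capacities} through that recursion so that every shortcut edge stays backed by a genuine low-congestion flow in $G$, and we must prove that congestion does not blow up superpolynomially as it accumulates across the $\hat{O}(1)$ recursion levels, the $O(\log n)$ scales, and the overlapping clusters of each cover. Making the capacity bookkeeping interact cleanly with Cohen's distance-sampling argument — in particular, guaranteeing that the ``bundle of center edges'' of each cluster is always routable inside that cluster within the length budget dictated by the $1+\epsilon$ telescoping — is the technical heart; and moving from edge to vertex capacities additionally requires the new parallel LC-vertex-expander decomposition of \Cref{sect:LCVED}, since $\ell_\infty$-oblivious routing is unavailable in the vertex-capacitated setting.
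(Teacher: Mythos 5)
Your overall architecture (a Cohen-style multi-scale recursion on top of length-constrained vertex expander decompositions, with shortcut edges certified by low-congestion flows) is the right one and matches the paper's in spirit, but two steps as written would fail.

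First, the length accounting of your forward mapping does not give $1+\epsilon$. You route each of the $\Theta(h/h_{\cov})$ segments of a path through a single cluster center $x_C$, where the edge $(x_C,v)$ has length equal to the approximate distance from $v$ to a \emph{canonical portal} of $C$. Crossing a segment from $u$ to $w$ then costs roughly $d(u,p_C)+d(p_C,w)$, which exceeds $d(u,w)$ by up to the cluster diameter $h_{\diam}=\beta h_{\cov}$ with $\beta\ge 1$; summed over all segments this is an additive $\Omega(\beta h)$, i.e.\ at best a constant-factor length slack. The paper avoids this by adding, for every \emph{pair} of large-cluster centers and every length $h'=(1+\epsilon)^k$, an inter-cluster edge of length $h'+2h_{\diam}s$ whose capacity is certified by an approximate length-constrained single-commodity maxflow between the two centers; a path then makes exactly \emph{one} such jump (over the whole stretch between its first and last heavy large-cluster subpaths), paying $(1+\epsilon)$ multiplicatively and $O(h_{\diam}s)=O(\epsilon h)$ additively once, while all remaining segments are handled recursively. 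Without pairwise center edges calibrated to the actual jumped distance, and without limiting the number of jumps per path to $O(1)$, the $1+\epsilon$ slack is unobtainable.

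Second, your star capacities (``proportional to $v$'s vertex capacity'') are unsound for the backward mapping. The decomposition of \Cref{thm:vertexLC-ED} only makes a node-weighting $A$ expanding in $G-C$, so the demand that the stars place on a cluster can be routed back into $G$ with low congestion only if it respects the node-weighting that is actually expanding there. If every $v$ contributes capacity $U(v)$, the star demand is $A_{\deg}$-respecting, and routing it inside the cluster would require $G$ itself to be a length-constrained expander, which it is not. The paper resolves this with a second induction over the hierarchy $A_1=A_{\deg}$, $A_{i+1}=C_i\cdot U$: level-$i$ stars carry capacity $A_i(v)$ only, and flow paths whose endpoints have $A_i=0$ must first be \emph{relocated} to portals carrying $C_{i-1}$-mass along the path, using the previous level's shortcut. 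This hierarchy, and the resulting classification of subpaths into light and heavy, is absent from your sketch, and it is precisely what bounds the number of iterations and the congestion of the backward mapping.
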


The additional edge set $H$ in \Cref{thm:OverviewShortcut} is what we called an LC-flow shortcut. The length slack and step of the LC-flow shortcut are those in the forward mapping, while the congestion slack is from the backward mapping. In other words, \Cref{thm:OverviewShortcut} shows an LC-flow shortcut with length slack $1+\epsilon$, congestion slack $\hat{O}(1)$ and step $\hat{O}(1)$.

To achieve LC-flow shortcuts with $1+\epsilon$ length slack, we show a new construction inspired by Cohen's \emph{hopsets} \cite{DBLP:journals/jacm/Cohen00}. Roughly speaking, a hopset is an LC-flow shortcut without guarantees on the congestion slack, i.e. it 
$(1+\epsilon)$-approximately preserves pairwise distances. Cohen showed that one can construct hopsets with $1+\epsilon$ length slack using \emph{neighborhood covers} that have a relatively large length slack (i.e., the ratio between the diameter and the covering radius). To further encode congestion information, we hope to integrate the recently developed \emph{LC-expander decompositions} into Cohen's approach, as LC-expander decompositions strengthen neighborhood covers in the sense that they capture both length and congestion information. Moreover, this time we can hope to overcome the large-length-slack barrier in \cite{DBLP:conf/stoc/HaeuplerH0RS24} since Cohen's hopset construction shows the possibility of obtaining $1+\epsilon$ length slack using tools with much larger length slack.

We now start discussing the existence of such LC-flow shortcuts in \Cref{thm:OverviewShortcut}. Our full argument is a \emph{two-dimensional} induction/recursion. In the first dimension, we perform induction on the graph size (this corresponds to Cohen's hopset construction), and the second-dimensional induction is on the levels of the \emph{LC-expander hierarchy}. For better understanding, in \Cref{Overview:ExpanderShortcuts}, we first consider a warm-up case in which the input graph is already an LC-expander, in which the argument only involves the first-dimensional induction. Next, in \Cref{Overview:GeneralShortcuts} we turn to general graphs, introduce the LC-expander hierarchy, and discuss the two-dimensional induction (based on a two-level hierarchy or more).

Before we move on, we emphasize that we will actually aim at shortcuts with $(1+o(\log n)\epsilon)$ length slack instead. Scaling down $\epsilon$ by $\log n$ gives a $1+\epsilon$ (for the original $\epsilon$) length slack. Furthermore, we point out that it suffices to show the existence of a weaker version of LC-flow shortcuts, called \emph{$h$-LC-flow shortcuts}. That is, the forward mapping only needs to work for $h$-length flows $F^{G}$ in $G$ (i.e. $\leng(F^{G})\leq h$). To obtain a (general) LC-flow shortcut, just take the union of $h$-LC-flow shortcuts for $h$ being different powers of $1+\epsilon$. This only multiplicatively increases the length slack by another factor of $1+\epsilon$, and the congestion slack by a factor of $O(\log n/\epsilon)$, assuming polynomially-bounded lengths.

\subsubsection{Warm Up: $h$-LC-Flow Shortcuts for LC-Expanders}
\label{Overview:ExpanderShortcuts}

First we set up the scenario by introducing LC-expanders. A \emph{node-weighting} $A$ is a function that assigns each vertex $v\in V$ a non-negative number $A(v)$. In particular, the \emph{degree node-weighting} of $G$, denoted by $A_{\deg}$, has $A_{\deg}(v) = \sum_{e\in E\text{ 
incident to }v}U(e)$ for each vertex $v$. A demand $D$ is \emph{$A$-respecting} if for each vertex $v$, the total demand units involving $v$, i.e. $D(v):=\sum_{v'}D(v,v') + D(v',v)$, is at most $A(v)$, and $D$ is \emph{$h$-length} if any pair of vertices $(u,v)$ s.t. $D(u,v)>0$ has $\dist_{G}(u,v)\leq h$. Finally, for a length parameter $h_{\ed}$, a length slack $s$ and a congestion parameter $\phi$, a node-weighting $A$ is \emph{$(h_{\ed},s)$-length $\phi$-expanding} in $G$ if any $h_{\ed}$-length $A$-respecting demand $D$ can be routed (by a flow) with length $h_{\ed}s$ and congestion $O(\log n/\phi)$. Particularly, we say $G$ is an $(h_{\ed},s)$-length $\phi$-expander if its degree node-weighting $A_{\deg}$ is expanding (for $h_{\ed},s,\phi$). 

In this warm-up scenario, we assume the input graph $G$ is an $(h_{\ed},s)$-length $\phi$-expander for \[
s = \log n, h_{\ed} = \epsilon h/s,\text{ 
and }\phi = 1/2^{\log^{0.75} n}.
\]
We choose such $s$ and $\phi$ because we want them to fit our argument in \Cref{Overview:GeneralShortcuts} later.

Second, we will need the aforementioned tool of neighborhood covers. A neighborhood cover ${\cal N}$ is a collection of vertex subsets $S$ called \emph{clusters}, which can be partitioned into $\omega$ (called the \emph{width}) \emph{clusterings}, where each clustering is a collection of disjoint clusters. The \emph{diameter} $h_{\diam}$ of ${\cal N}$ is the maximum diameter of a cluster, and ${\cal N}$ has \emph{covering radius} $h_{\cov}$ if for each vertex $v$, there is a cluster $S$ containing all vertices $v'$ within distance $h_{\cov}$ from $v$. 

\paragraph{Shortcut Construction.} Now we are ready to construct the $h$-LC-flow shortcut.

\medskip

\noindent{\underline{Step 1.}} First, we take a neighborhood cover ${\cal N}$ of $G$ with %
\[
h_{\diam} = h_{\ed}, h_{\cov} = h_{\diam}/\log n\text{ and }\omega = O(\log n).
\]
It is known that such ${\cal N}$ exists. 
We further partition clusters in ${\cal N}$ into \emph{small clusters} ${\cal N}_{\ssmall}$ and \emph{large clusters} ${\cal N}_{\big}$, such that ${\cal N}_{\ssmall}$ collects all clusters of size at most $n/\sigma$ and ${\cal N}_{\big}$ contains the rest.
Here $\sigma$ is some subpolynomial parameter, and here we set 
\[
\sigma = 2^{\log^{0.75} n}.
\]

\medskip

\noindent{\underline{Step 2.}} For each small cluster $S\in{\cal N}_{\ssmall}$, we \emph{inductively} assume there exists an $h_{\cov}$-LC-flow shortcut $H_{S}$ for the graph $G[S]$ (the subgraph induced by $S$) with length slack $\hat{\lambda}$, congestion slack $\hat{\kappa}$ and step $\hat{t}$ (think of $\hat{\lambda}$ as roughly $1+\epsilon$).

\medskip

\noindent{\underline{Step 3.}} For each large cluster $S\in{\cal N}_{\big}$, we add a \emph{star graph} $H_{S}$ in which a newly created center vertex $x_{S}$ is connected to each original vertex $v\in S$ by an edge with length $h_{\diam}s$ and capacity $A_{\deg}(v)$. Intuitively, the star edges have such lengths and capacities because we use the star $H_{S}$ as a low-step simulation of the expander routing. Recall that we choose $h_{\diam} = h_{\ed}$, so any $A_{\deg}$-respecting demand on $S$ (which is definitely $h_{\ed}$-length) can be routed in the LC-expander $G$ with length $h_{\diam}s$ and low congestion. 

\medskip

\noindent{\underline{Step 4.}} For each pair of large clusters $S_{1},S_{2}\in{\cal N}_{\big}$, and each $h'\in[1,h]$ rounded up to the nearest integral power of $(1+\epsilon)$, we add an edge, called an \emph{inter-cluster edge}, connecting the star centers $x_{S_{1}}$ and $x_{S_{2}}$ with length $h' + 2h_{\diam}s$. Moreover, the capacity is the value of the maximum $(h' + 2h_{\diam}s)$-length single-commodity flow from $x_{S_{1}}$ to $x_{S_{2}}$ in $G\cup H_{S_{1}}\cup H_{S_{2}}$. Intuitively, these parallel edges between $x_{S_{1}}$ and $x_{S_{2}}$ provide $1$-step shortcuts for any feasible $h$-length flow $F$ routing an $A_{\deg}$-respecting demand $D_{F}$ from $S_{1}$ to $S_{2}$.

\medskip

\noindent{\underline{Step 5.}} The final shortcut $H$ is the union of $H_{S}$ from steps 2 and 3, and the inter-cluster edges from step 4.

\paragraph{Forward and Backward Mappings.} We are not going to give a rigorous analysis here but only some intuitions about the forward and backward mappings.

For the forward mapping, we can hope for low length slack because we pay roughly $1+\epsilon$ multiplicative length slack and additionally $O(h_{\diam}s) = O(\epsilon h)$ \emph{additive length slack} in this inductive step. To see this, consider an $h$-length flow path $P$ in the given $h$-length flow $F^{G}$. Let us partition $P$ into roughly $O(h/h_{\cov}) = \tilde{O}(1/\epsilon)$ many subpaths $\hat{P}$, each of which has length at most $h_{\cov}$. Thus each subpath $\hat{P}$ is totally contained by some cluster $\hat{S}$ in the neighborhood cover ${\cal N}$. 
\begin{figure}[htbp]
    \centering
    \includegraphics[]{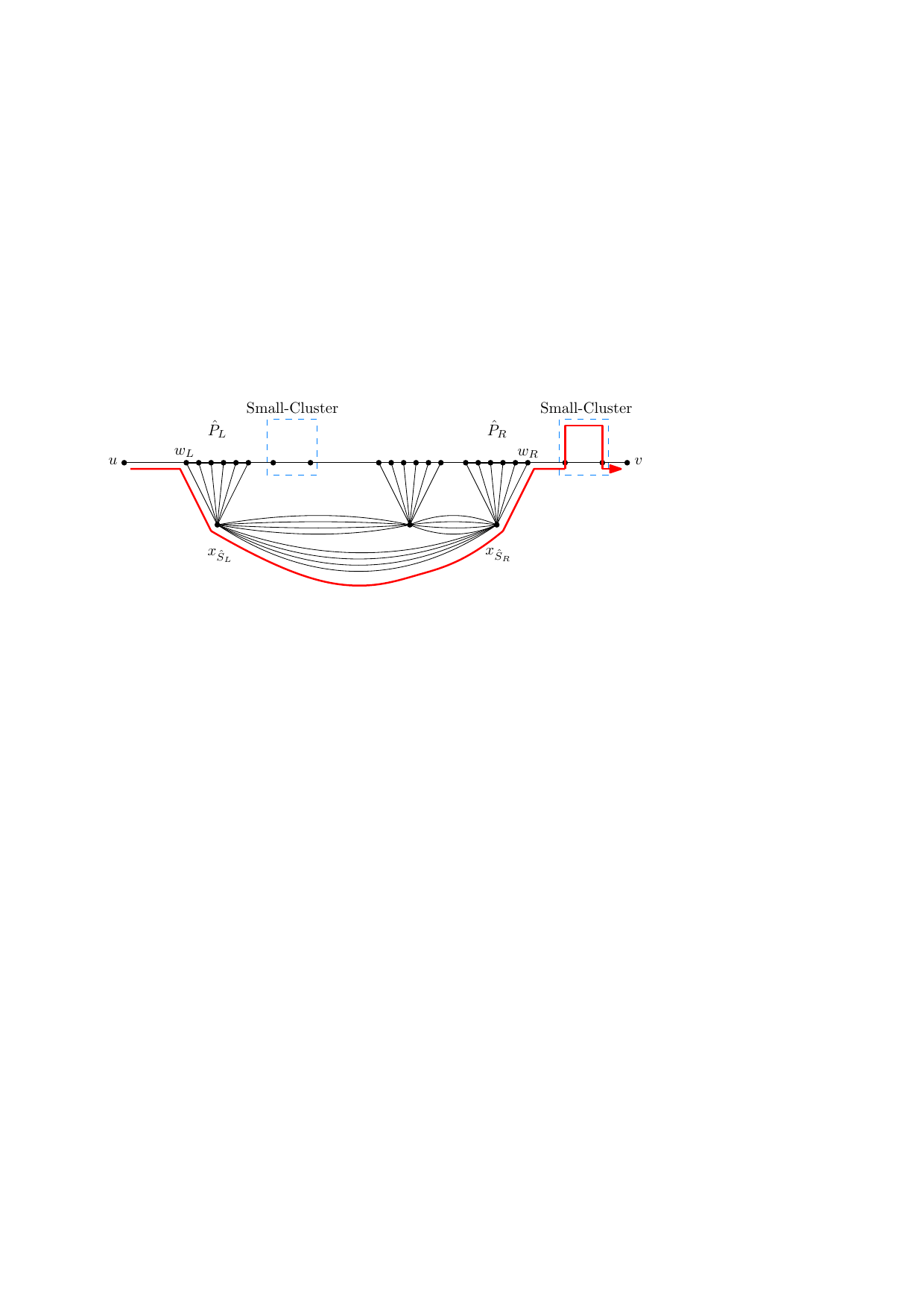}
    \caption{An example for the forward mapping in LC-expanders. Given flow path $P = P_{u,v}$, the red path is the output of the forward mapping. 
    We only draw few subpaths on $P$ and omit others for simplicity.
    We note that the routing given in the small-cluster part is only for illustration, which represents the inductive forward mapping results.}
    \label{fig:forward_mapping_expander}
\end{figure}
\medskip

\noindent{\underline{Small-Cluster Subpaths.}} For each subpath $\hat{P}$ whose $\hat{S}$ is a small cluster, we are good because the shortcut $H_{\hat{S}}$ from Step 2 can reroute $\hat{P}$ with multiplicative length slack $\hat{\lambda}$ (roughly $1+\epsilon$) and step $\hat{t}$. 

\medskip

\noindent{\underline{Large-Cluster Subpaths.}} For those $\hat{P}$ in large clusters $\hat{S}$, instead of rerouting them separately using star graphs on large clusters, we will do something different using the step-4 shortcut edges so that we only pay $O(h_{\diam}s)$ additive length slack. As illustrated in \Cref{fig:forward_mapping_expander}, let $\hat{P}_{L}$ and $\hat{P}_{R}$ be the leftmost and rightmost large-cluster subpaths, and let $w_{L}$ and $w_{R}$ be the left endpoint of $\hat{P}_{L}$ and the right endpoint of $\hat{P}_{R}$. Then we reroute the whole part of $P$ between $w_{L}$ and $w_{R}$ (denoted by $P[w_{L},w_{R}]$ and called the \emph{jumping subpath}) using shortcut edges
\[
(w_{L},x_{\hat{S}_{L}})\circ(x_{\hat{S}_{L}},x_{\hat{S}_{R}})\circ(x_{\hat{S}_{R}},w_{R}).
\]
Here $(w_{L},x_{\hat{S}_{L}})$ and $(x_{\hat{S}_{R}},w_{R})$ are from the star graphs $H_{\hat{S}_{L}}$ and $H_{\hat{S}_{R}}$, so they both have lengths $h_{\diam}s$. In the middle, $(x_{\hat{S}_{L}},x_{\hat{S}_{R}})$ is the inter-cluster edge between the star centers with appropriate length $(1+\epsilon)\cdot\leng(P[w_{L},w_{R}]) + 2h_{\diam}s$. In other words, rerouting the jumping subpath $P[w_{L},w_{R}]$ incurs $1+\epsilon$ multiplicative length slack and $4h_{\diam}s$ additive slack.

\medskip

Therefore, for this inductive step, the multiplicative length slack is $\max\{\hat{\lambda},1+\epsilon\}$ and additive length slack is $4h_{\diam}s$ as desired. Moreover, after rerouting, the step bound is roughly $\tilde{O}(1/\epsilon)\cdot \hat{t}$, because there are $\tilde{O}(1/\epsilon)$ subpaths and the bottleneck is the small-cluster subpaths. In summary, the key intuition behind this forward mapping argument is that we can get small additive length slack ($O(\epsilon h)$ instead of $\tilde{O}(h)$) by paying a $\tilde{O}(1/\epsilon)$ multiplicative factor on the step bound.

Regarding the backward mapping, we briefly discuss its congestion slack below. In Step 2, all shortcuts $H_{S}$ of small clusters together cause congestion slack $\hat{\kappa}\cdot \omega = O(\hat{\kappa}\log n)$ because ${\cal N}$ has width $\omega$. In Step 3, all star graphs of large clusters together cause congestion slack $O(\log n/\phi)\cdot \omega$ by the expander routing and the width of ${\cal N}$. In Step 4, each inter-cluster edge $(x_{S_{1}},x_{S_{2}})$ causes congestion slack $1 + O(\log n/\phi)$, because it corresponds to an underlying feasible flow in $G\cup H_{S_{1}}\cup H_{S_{2}}$ and further projecting this flow to $G$ requires expander routing inside $S_{1}$ and $S_{2}$. The number of inter-cluster edges is at most $(\omega\sigma)^{2}\cdot O(\log h/\epsilon)$ (note that there are at most $\omega\sigma$ large clusters and each pair has $O(\log h/\epsilon)$ parallel edges with different lengths), so step 4 causes congestion slack $O((\omega\sigma)^{2}\cdot(\log n/\phi)\cdot O(\log h/\epsilon))$. Adding up all above, the total congestion slack is $\tilde{O}(\hat{\kappa}) + \tilde{O}(\sigma^{2}/(\phi\epsilon))$.

\paragraph{The Final Shortcut Quality.} From the analysis above, one inductive step gives multiplicative length slack $\lambda = \hat{\lambda} + 5\epsilon$ (after merging the additive slack $4\epsilon h$ into the multiplicative slack), congestion slack $\kappa = \tilde{O}(1)\cdot \hat{\kappa} + \tilde{O}(\sigma^{2}/(\phi\epsilon))$ and step $t = \tilde{O}(1/\epsilon)\cdot \hat{t}$. Suppose we are in the ideal case that all graphs in the whole induction tree are good LC-expanders. Note that the induction depth is $z = \log_{\sigma}n = O(\log^{0.25} n)$ because the number of vertices drops by a factor of $\sigma = 2^{\log^{0.75} n}$ in each inductive step. Therefore, the final shortcut has length slack $\lambda = 1+ O(\log^{0.25} n)\cdot \epsilon$, congestion slack $\kappa = \tilde{O}(1)^{O(\log^{0.25} n)}\cdot \tilde{O}(2^{3\log^{0.75} n}/\epsilon)$ and step $t = \tilde{O}(1/\epsilon)^{O(\log^{0.25} n)}$. This implies $\lambda = 1 + \epsilon$, $\kappa = \hat{O}(1)$ and $t = \hat{O}(1)$.

\subsubsection{$h$-LC-Flow Shortcuts for General Graphs}
\label{Overview:GeneralShortcuts}

We will leverage \emph{LC-expander decompositions} and \emph{LC-expander hierarchies} to extend the argument in \Cref{Overview:ExpanderShortcuts} to general graphs, and now we give their definitions. 

The first is LC-expander decompositions. An \emph{$h_{C}$-length moving cut} $C$ is a function that assigns each edge $e$ a cut value $C(e)\in\{0,1/h_{C},2/h_{C},...,1\}$. The \emph{graph $G$ with $C$ applied}, denoted by $G-C$, is the same graph but with different edge lengths $\ell_{G-C} = \ell_{G} + h_{C}\cdot C$. An \emph{$(h_{\ed},s)$-length $\phi$-expander decomposition} of a node-weighting $A$ in $G$ is an $(h_{\ed}\cdot s)$-length moving cut $C$ such that $A$ is $(h_{\ed},s)$-length $\phi$-expanding in $G-C$ (recall the term ``expanding'' in \Cref{Overview:ExpanderShortcuts}). From now we fix 
\[
s = \log n\text{ and }\phi = 1/2^{\log^{0.75} n}.
\]
It is known that (see e.g. \cite{HaeuplerRG22}), given any node-weighting $A$ and length parameter $h_{\ed}$, there exists an $(h_{\ed},s)$-length $\phi$-expander decomposition $C$ of $A$ in $G$ with $|C| \leq \tilde{O}(\phi)\cdot |A|$,
where $|A| = \sum_{v\in V}A(v)$ and $|C| = \sum_{e\in E}C(e)\cdot U(e)$ are sizes of $A$ and $C$ respectively. An interpretation of an LC-expander decomposition is that we can make $A$ expand in $G$ by increasing the edge lengths a bit.

Next we consider LC-expander hierarchies. For simplicity, we only formally define a two-level LC-expander hierarchy, and assume the graph $G$ admits such a two-level hierarchy. We fix
\[
h_{\diam} = \epsilon h/s,\ h_{\cov} = h_{\diam}/\log n
\]
to be consistent with step 1 in \Cref{Overview:ExpanderShortcuts}.%
\begin{itemize}
\item At the first level, we start with the degree node-weighting $A_{1}:=A_{\deg}$. Then we select $h_{\ed,1} = \epsilon(1+2\epsilon)h_{\cov}/s$, and take an $(h_{\ed,1},s)$-length $\phi$-expander decomposition $C_{1}$ of $A_{1}$. Note that $C_{1}$ is an $(h_{\ed,1}\cdot s)$-length moving cut. 
\item Turn to the second level. We define the \emph{cut node-weighting} of $C_{1}$, denoted by $A_{C_{1}}$, by setting $A_{C_{1}}(v) = \sum_{e\in E\text{ incident to }v}C_{1}(e)\cdot U(e)$ for each vertex $v$. Then the second-level node-weighting $A_{2}$ is exactly $A_{C_{1}}$, and we assume $A_{2}$ is $(h_{\ed,2},s)$-length $\phi$-expanding in $G$ for $h_{\ed,2} = h_{\diam}$ (so the hierarchy ends at the second level).
\end{itemize}

\begin{figure}[htbp]
    \centering
    \includegraphics[]{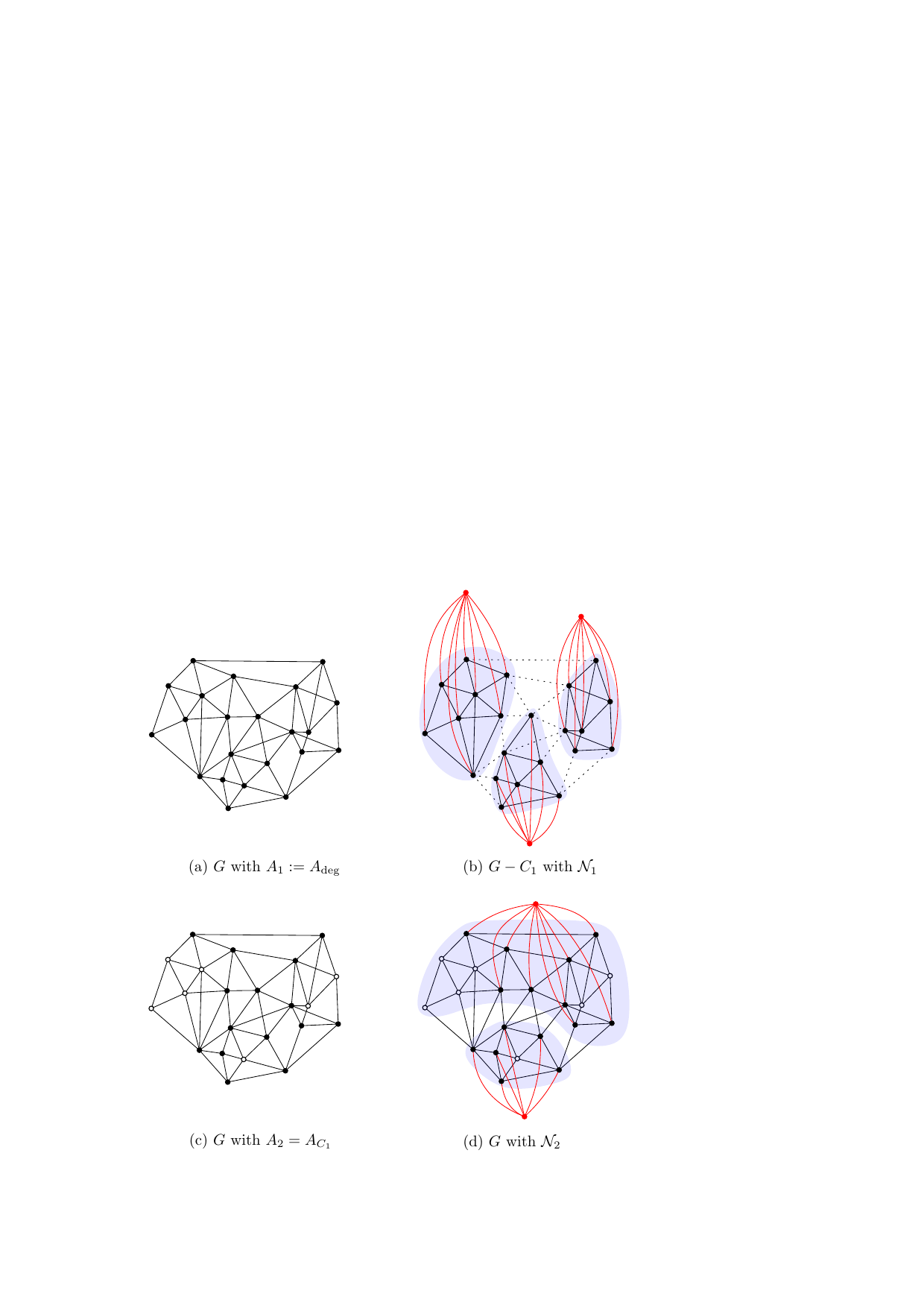}
    \caption{An example of a two-level expander decomposition hierarchy with shortcuts over large-clusters. For simplicity, we omit small-clusters and inter-cluster edges.
    (b) Dashed edges represent cut edges $e$ with $C_{1}(e) = 1$ (in this figure, we assume each $C_{1}(e)$ is either $0$ or $1$).
    (c) Vertices are not in the support set of $A_{2}$ is marked with hollow circles. }
    \label{fig:ED_hierarchy_example}
\end{figure}

\paragraph{Shortcut Construction.} The construction of the shortcut is almost identical to that in \Cref{Overview:ExpanderShortcuts}, except for the following. See \Cref{fig:ED_hierarchy_example} for an illustration.

\medskip

\noindent{\underline{Step 0.}} We add a Step 0 before Step 1. Assume \emph{inductively}, for $h_{1} = (1+2\epsilon) h_{\cov}$, there exists an $h_{1}$-LC-flow shortcut $H_{1}$ for the graph $G-C_{1}$ with length slack $\tilde{\lambda}$, congestion slack $\tilde{\kappa}$ and step $\tilde{t}$. Recall that $A_{1} = A_{\deg}$ is $(h_{\ed,1},s)$-length $\phi$-expanding in $G-C_{1}$ for $h_{\ed,1} = \epsilon h_{1}/s$, so this is exactly the same scenario with when building an $h_{1}$-LC-flow shortcut for $G-C_{1}$ in \Cref{Overview:ExpanderShortcuts}. 

\medskip

This completes the shortcut construction for level $1$, and we will move on to level 2, by following Steps 1 to 4 in \Cref{Overview:ExpanderShortcuts} (but replacing Step 3 with $3'$).

\medskip

\noindent{\underline{Step $3'$.}} For each large cluster $S$ in ${\cal N}_{2}$ (where ${\cal N}_{2}$ is the neighborhood cover from Step $1$), in the star graph $H_{S}$, the edge connecting the center $x_{S}$ and a vertex $v\in S$ now has capacity $A_{2}(v)$ instead of $A_{\deg}(v)$ (if some $v$ has $A_{2}(v) = 0$, this edge does not exist). This is a natural modification since we only have $A_{2}$ expanding in $G$ now.

\medskip

\noindent{\underline{Step $5'$.}} Further add the first-level shortcut $H_{1}$ from Step 0 into the final shortcut $H$.

\paragraph{The Forward Mapping.} There is no big change to the backward mapping, but the forward mapping becomes more involved. Consider a path $P$ from $F^{G}$, and we still start with partitioning $P$ into $\tilde{O}(1/\epsilon)$ subpaths $\hat{P}$ of length \emph{exactly}\footnote{We assume this to simplify the discussion, but generally it should be at most.} $h_{\cov}$. The small-cluster shortcuts from Step 2 can still take care of all the small-cluster subpaths. 

However, now the jumping subpath $P[w_{L},w_{R}]$ cannot be simply rerouted using the star graphs from Step $3'$ and the inter-cluster edges from Step $4$, because the star graphs is for the node-weighting $A_{2}$. In the extreme case that $A_{2}(w_{L}) = 0$, the star edge connecting $w_{L}$ and $x_{S_{1}}$ does not even exist. 

Fixing this needs to exploit the first-level shortcut $H_{1}$ from Step 0. We further classify large-cluster subpaths into \emph{light subpaths} and \emph{heavy subpaths}, where light subpaths are those $\hat{P}$ with $C_{1}(\hat{P}) = \sum_{e\in\hat{P}}C_{1}(e)\leq 1$ and heavy subpaths are the rest.

\medskip

\noindent{\underline{Light Large-Cluster Subpaths.}} Each light subpath $\hat{P}$ can be rerouted using the $h_{1}$-LC-flow shortcut $H_{1}$ of $G-C_{1}$ because in $G-C_{1}$, $\hat{P}$ has length $\ell_{G-C_{1}}(\hat{P}) = \ell_{G}(\hat{P}) + (h_{\ed,1}\cdot s)C_{1}(\hat{P}) \leq h_{\cov} + \epsilon h_{1}\leq h_{1}$ by $h_{1} = (1+2\epsilon) h_{\cov} $ and $h_{\ed,1} = \epsilon h_{1}/s$. After rerouting, the length and step become $\tilde{\lambda}h_{1}$ and $\tilde{t}$, so the multiplicative length slack w.r.t. $\ell_{G}(\hat{P})=h_{\cov}$ is $\tilde{\lambda}h_{1}/h_{\cov} \leq (1+2\epsilon)\tilde{\lambda}$. 

\medskip
\begin{figure}[htbp]
    \centering
    \includegraphics[]{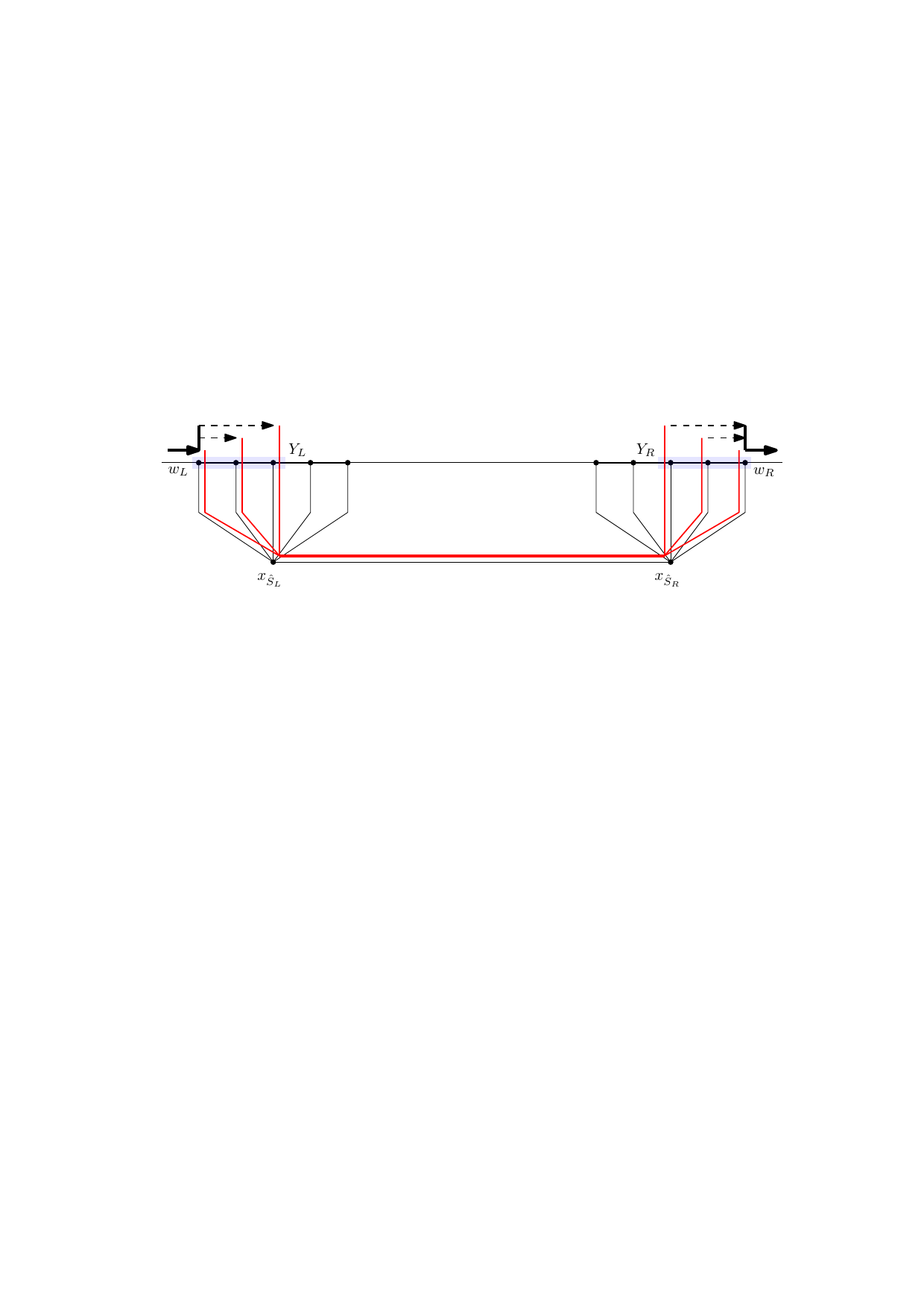}
    \caption{An example of the jumping subpath between heavy large-cluster subpaths. The blue region marks $Y_{L}$ and $Y_{R}$ in the leftmost and rightmost heavy large-cluster subpaths, respectively. The black dashed arrows represent the relocation using the first-level shortcut $H_{1}$, where we omit for simplicity.}
    \label{fig:fractional_jumping_subpath}
\end{figure}
\noindent{\underline{Heavy Large-Cluster Subpaths.}} The subpaths that have not been shortcut are heavy large-cluster subpaths. Let $\hat{P}_{L}$ and $\hat{P}_{R}$ be the leftmost and rightmost heavy large-cluster subpaths (say $\hat{S}_{L}$ and $\hat{S}_{R}$ are the clusters containing $\hat{P}_{L}$ and $\hat{P}_{R}$ respectively). The \emph{jumping subpath} $P[w_{L},w_{R}]$ is between the left endpoint $w_{L}$ of $\hat{P}_{L}$ and the right endpoint $w_{R}$ of $\hat{P}_{R}$. Since now $\hat{P}_{L}$ is further heavy (i.e. $A_{2}(\hat{P}_{L})\geq C_{1}(\hat{P}_{L})\geq 1$ is properly large), we can upload the flow units to $x_{S_{1}}$ from the $\hat{P}_{L}$-vertices $y_{1}$ (called \emph{portals}) with enough $A_{2}(y_{L})$, assuming that we can \emph{relocate} the flow units from $w_{L}$ to these $y_{L}$.

To choose these portals $y_{L}$, we first take the \emph{shortest} prefix $\hat{P}'_{L}$ of $\hat{P}_{L}$ such that $C_{1}(\hat{P}'_{L})\geq 1$. Then let $Y_{L} = \{\text{left endpoint $y_{L,e}$ of $e$}\mid e\in \hat{P}'_{L}, C_{1}(e)>0\}$ be the portals, and we restrict that each portal $y_{L,e}$ can receive up to $C_{1}(e)\cdot \vvalue(P)$ flow units relocated from $w_{1}$, which will not hurt since only $\vvalue(P)$ flow units need to be relocated. This restriction is to ensure that we will not violate the capacities of star edges at the very end (i.e. ensure the obtained $F^{H}$ has congestion $1$). For the right side, we choose $Y_{R}$ and apply the restriction similarly.

Rerouting the flow units from left portals $Y_{L}$ to right portals $Y_{R}$ can be done using star edges and inter-cluster edges similar to the argument in \Cref{Overview:ExpanderShortcuts}. It remains to relocate the units from $w_{L}$ to $Y_{L}$ (also from $Y_{R}$ to $w_{R}$). We can do this using the first-level shortcut $H_{1}$ again similar to the argument for light large-cluster subpaths, because for any $y_{L}\in Y_{L}$, the subpath $P[w_{L},y_{L}]$ has length in $G-C_{1}$ at most 
\[
\ell_{G}(P[w_{L},y_{L}]) + (h_{\ed,1}\cdot s)\cdot C_{1}(P[w_{L},y_{L}])\leq h_{\cov} + \epsilon(1+2\epsilon)h_{\cov}\leq 
h_{1}.
\]
Recall that $h_{1} = (1+2\epsilon) h_{\cov} $ and $h_{\ed,1} = \epsilon h_{1}/s$.
Note that $C_{1}(P[w_{L},y_{L}])\leq 1$ because of the ``shortest'' property of $\hat{P}'_{L}$.

\medskip

We directly state the length slack and the step bound of this inductive step without detailed explanations (which is quite straightforward following the argument in the expander-case in \Cref{Overview:ExpanderShortcuts}). The multiplicative length slack is $\lambda = \max\{\hat{\lambda}, (1+2\epsilon)\tilde{\lambda},1+\epsilon\} + O(\epsilon)$, where $\hat{\lambda}$ is from the small-cluster subpaths, $(1+2\epsilon)\tilde{\lambda}$ is from the light large-cluster subpaths, and $1+\epsilon$ and $O(\epsilon)$ are multiplicative and additive length slacks from rerouting (from left to right portals of) the jumping subpath. The step bound is $t = \tilde{O}(1/\epsilon)\cdot \max\{\hat{t},\tilde{t}\}$. For the backward mapping, the congestion slack is $\kappa = \tilde{O}(\hat{\kappa}) + \tilde{O}(\sigma^{2}/(\phi\epsilon)) + \tilde{\kappa}$ (just add one more term $\tilde{\kappa}$ from the first-level shortcut $H_{1}$).

\paragraph{The Final Shortcut Quality.} Although we focus on a two-level hierarchy above, generally the hierarchy has $r = O(\log^{0.25} n)$ levels because going up one level will reduce the size of node-weighting by a factor of roughly $2^{\log^{0.75} n}$ (recall that $|A_{2}| = 2|C_{1}|\leq \tilde{O}(\phi)|A_{1}|$ and we choose $\phi = 1/2^{\log^{0.75} n}$). Also we have shown that the first-dimensional induction has $z = \log_{\sigma} n = O(\log^{0.25} n)$ levels. Let $\lambda_{(\alpha,\beta)}$, $\kappa_{(\alpha,\beta)}$ and $t_{(\alpha,\beta)}$ be the shortcut quality when we are at level $\alpha\in[1,z]$ of the first-dimensional induction and level $\beta\in[1,r]$ of the hierarchy. The argument above gives the following recurrence relations:
\begin{align*}
&\lambda_{(\alpha,\beta)} = \max\{\lambda_{(\alpha-1,r)},(1+2\epsilon)\lambda_{(\alpha,\beta-1)},1+\epsilon\} + O(\epsilon),\\
&\kappa_{(\alpha,\beta)} = \tilde{O}(\kappa_{(\alpha-1,r)}) + \kappa_{(\alpha,\beta-1)} + \tilde{O}(\sigma^{2}/(\phi\epsilon)),\\
&t_{(\alpha,\beta)} = \tilde{O}(1/\epsilon)\cdot t_{(\alpha,\beta-1)},
\end{align*}
solving which gives $\lambda_{(z,r)}  = 1 + O(\log^{0.5} n)\cdot\epsilon$, $\kappa_{(z,r)} = \tilde{O}(1)^{O(\log^{0.5} n)}\cdot \tilde{O}(2^{3\log^{0.75} n}/\epsilon)$ and $t_{(z,r)} = \tilde{O}(1/\epsilon)^{O(\log^{0.5}n)}$. Substituting the above $\epsilon$ with an $\epsilon' = \epsilon/\log n$ gives a $h$-LC-flow shortcut with length slack $1 + \epsilon$, congestion slack $\hat{O}(1)$ and step $\hat{O}(1)$.

\subsubsection{Algorithmic Shortcuts for Vertex-Capacitated Graphs}
\label{Overview:VertexShortcut}

Lastly, we make some comments on making the above shortcut construction algorithmic and work for undirected graphs with vertex capacities and lengths. In fact, there is no significant barrier, except that we need \emph{length-constrained vertex expander decompositions} and they should be algorithmic. The existence of such decompositions was shown in a recent work \cite{HaeuplerLSW25}, but there is still no fast parallel algorithm for them to date. Therefore, we develop the first close-to-optimal parallel length-constrained vertex expander decomposition algorithm based on the LC-(edge)-expander decomposition algorithm in \cite{HaeuplerHT24}.

\begin{theorem}
\label{thm:OverviewVertexED}
Let $G$ be an undirected graph with vertex capacities and lengths. Given a node-weighting $A$ and parameters $h\geq 1, \epsilon\in(0,1), \phi\in(0,1)$, there is an algorithm that computes an $(h,s)$-length $\phi$-vertex-expander decomposition $C$ of $A$ in $G$ such that $s = \exp(\exp(O(\log \epsilon^{-1})\cdot \epsilon^{-1}))$ and $|C|\leq n^{O(\epsilon)}|A|/\phi$. The work is $\poly(h)\cdot m\cdot n^{\poly(\epsilon)}$ and the depth is $\poly(h)\cdot n^{\poly(\epsilon)}$.
\end{theorem}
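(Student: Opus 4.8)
The plan is to reduce the vertex-capacitated problem to the edge-capacitated one via the standard \emph{split graph} (a.k.a. bipartite double / vertex-splitting) construction, and then run the known fast parallel LC-(edge)-expander decomposition algorithm of \cite{HaeuplerHT24} as a black box, translating the resulting edge moving cut back into a vertex moving cut. Concretely, from $G=(V,E)$ with vertex capacities and lengths we build $G'$ in which each vertex $v$ is split into an ``in-copy'' $v_{\iin}$ and an ``out-copy'' $v_{\out}$ joined by an internal edge $e_v$ whose capacity is the vertex capacity of $v$ and whose length carries the vertex length of $v$; each original edge $uv$ becomes an edge from $u_{\out}$ to $v_{\iin}$ (and symmetrically, since $G$ is undirected) with the original edge length, and the original edge capacities are placed on these edges (or set to $\infty$/very large if we work in the pure vertex-capacity regime). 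A node-weighting $A$ on $V$ is transferred to $G'$ by putting $A(v)$ on, say, $v_{\iin}$. The key correspondence is: $h$-length $A$-respecting demands in $G$ that are routable with low vertex-congestion correspond, up to a constant factor blow-up in length (because each original step now traverses one internal edge $e_v$), to $h'$-length $A'$-respecting demands in $G'$ routable with low edge-congestion, where $h' = O(h)$. So an $(h', s)$-length $\phi$-edge-expander decomposition $C'$ of $A'$ in $G'$ yields, by reading off the cut values $C'(e_v)$ on internal edges and the cut values on the original-edge copies, an $(h, s')$-length $\phi'$-vertex-expander decomposition $C$ of $A$ in $G$ with $s' = O(s)$ and $\phi' = O(\phi)$, and with $|C| \le O(|C'|) \le n^{O(\epsilon)}|A|/\phi$.

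The main steps in order: (i) formally define the split graph $G'$, its capacities/lengths, and the transferred node-weighting $A'$, and verify $|V(G')|, |E(G')| = O(n+m)$; (ii) prove the two-way routing correspondence — forward, that any LC-vertex-routing in $G$ of a demand lifts to an LC-edge-routing in $G'$ of essentially the same demand with $O(1)$ length overhead; backward, that any LC-edge-routing in $G'$ projects down to an LC-vertex-routing in $G$ (a path in $G'$ from $u_{\iin}$ to $v_{\iin}$ maps to a walk in $G$, and one must argue the walk can be shortcut to a path without increasing length, and that edge-congestion on internal edges exactly equals vertex-congestion); (iii) invoke \cite{HaeuplerHT24} with parameters $h' = \Theta(h)$, $\epsilon$, $\phi$ to obtain $C'$ together with the claimed work $\poly(h')\cdot |E(G')|\cdot n^{\poly(\epsilon)} = \poly(h)\cdot m\cdot n^{\poly(\epsilon)}$ and depth $\poly(h)\cdot n^{\poly(\epsilon)}$; (iv) define $C$ on $V$ from $C'$ restricted to internal edges (and push any cut mass on original edges onto their endpoints' internal edges, or keep an edge-and-vertex cut if the target notion allows it), and certify expansion of $A$ in the cut graph $G - C$ by pulling back, through the correspondence of step (ii), the expansion guarantee of $A'$ in $G' - C'$; (v) bound $|C|$ and the resulting slack $s = \exp(\exp(O(\log\epsilon^{-1})\epsilon^{-1}))$, inheriting it directly from the edge-case slack.

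The hard part will be step (ii), specifically the backward direction of the routing correspondence and making sure no parameters degrade by more than constants: when we project a flow path of $G'$ down to $G$ it may become a non-simple walk, and turning it back into a simple path (to respect the length constraint $h$ rather than something larger) requires care, as does checking that the diameter/length notions for vertex-expanders — which typically measure the number of \emph{vertices} on a path or a vertex-weighted length — line up with the edge-length notion in $G'$ after the $\times 2$ from internal edges. A second subtlety is that the definition of ``$\phi$-vertex-expanding'' used in \Cref{sect:LCVED} may require the witnessing routing to have vertex-congestion $O(\log n/\phi)$ with a particular dependence; we must confirm the split-graph reduction preserves this up to the $n^{O(\epsilon)}$ slack already present, and that the moving cut we output is a legitimate \emph{vertex} moving cut (cut values in $\{0, 1/h_C, \dots, 1\}$ on vertices) rather than a mixed object. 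Finally, one should double-check that the reduction does not interact badly with the fact that $G$ is undirected: the split graph is naturally directed, so either \cite{HaeuplerHT24} must be stated for a symmetric/undirected variant of the split graph, or we add both orientations of each original edge and argue symmetry of the output cut; I expect this to be routine but it needs to be spelled out. Everything else — the work/depth bookkeeping, the size bound on $|C|$, and the inheritance of $s$ — should follow mechanically once the correspondence in step (ii) is nailed down.
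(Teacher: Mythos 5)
Your proposal has a genuine gap, and it sits exactly at the point you dismissed as ``routine'' in your last paragraph: the directedness of the split graph. The faithful vertex-splitting reduction (each $v$ becomes $v_{\iin},v_{\out}$ joined by an internal edge of capacity $U(v)$, with original edges going from out-copies to in-copies) is inherently \emph{directed}; in any undirected version of the gadget a path can enter and leave $v$ through the same copy (e.g.\ $a_{\out}\to v_{\iin}\to b_{\out}$ when both original edges attach to $v_{\iin}$), bypassing the internal edge and hence the vertex capacity entirely, and no assignment of the original edges to the two copies fixes this for a vertex of degree at least two. This is the same obstruction behind the paper's remark that $\ell_\infty$-oblivious routing exists for undirected edge capacities but not for undirected vertex capacities. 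So to run your plan you would need a fast parallel length-constrained expander decomposition for \emph{directed} edge-capacitated graphs, and no such black box exists: \cite{HaeuplerHT24} is for undirected graphs, and the directed/vertex-capacitated theory of \cite{HaeuplerLSW25} is existential only (the paper states explicitly that it comes ``without an efficient construction''). A secondary problem is your step (iv): pushing cut mass from an original edge onto an endpoint's internal edge lengthens \emph{all} paths through that vertex rather than only the paths through that edge, so routability of $A'$ in $G'-C'$ no longer implies routability of $A$ in $G-C$, breaking the expansion direction of the translation. (The cleaner fix of giving original edges capacity so large that the size bound forbids cutting them at all is available, but it does not rescue the directedness issue.)

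The paper's actual proof is a white-box adaptation rather than a reduction: it re-proves the ``union of sparse moving cuts is sparse'' machinery directly for \emph{vertex} moving cuts (via demand-matching graphs and an arboricity bound), and builds an approximately demand-size-largest sparse vertex cut through a constant-step cut-matching game in which the cut player of \cite{HaeuplerHT24} runs on auxiliary undirected edge-capacitated router graphs while the matching player solves a length-constrained \emph{flow} problem in the vertex-capacitated graph. Your split-graph idea does appear there, but only inside that flow primitive (the paper's Lemma~\ref{lem:matching-strategy} reduces the vertex-capacitated cutmatch to a directed edge-capacitated flow problem, for which \cite{HaeuplerHS23} does work on directed graphs). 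In short: vertex splitting is the right tool for the flow subroutines, but it cannot carry the whole decomposition, and that is precisely why the paper develops the vertex-capacitated decomposition from scratch.
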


There are two small issues when applying \Cref{thm:OverviewVertexED}. First, the tradeoff between the length slack $s = \exp(\exp(O(\log \epsilon^{-1})\cdot \epsilon^{-1}))$ and the coefficient $n^{O(\epsilon)}$ (called the \emph{cut slack}) in the bound of $|C|$ is much worse than the existential bound \cite{HaeuplerLSW25}. Hence, we need to tune the parameter in the argument accordingly. We point out that in \Cref{sect:Shortcut}, for convenience our full argument deviates a bit from the above overview: (1) we actually consider \emph{pure additive length slacks} when analysing the forward mapping, and (2) the parameters we select are very loose (but still get the same asymptotic bounds at the end).

Second, note that the work and depth of \Cref{thm:OverviewVertexED} depend on $\poly(h)$, which means we cannot compute $h$-LC-flow shortcuts for large $h$. Therefore, to get (general) LC-flow shortcuts, we use a standard technique called \emph{stacking/bootstrapping} (which has been used extensively, see e.g. \cite{DBLP:journals/jacm/Cohen00,DBLP:conf/focs/BernsteinGS21,DBLP:conf/stoc/HaeuplerH0RS24}). Concretely, we will bootstrap $h$-LC-flow shortcuts (for some subpolynomial $h$) by building them on top of each other.

\subsection{Low-Step Multi-Commodity Mincost Flows in Low Depth}
\label{Overview:LowStepFlow}

In this part, given a demand $D$, we want to compute a flow $F$ routing $D$ in $G$ with low congestion (say $\tilde{O}(1)$ or $\hat{O}(1)$), such that the total length (i.e. cost) of $F$ is within a $1+O(\epsilon)$ multiplicative factor from the min-total-length $t$-step feasible flow $F^{*}$ routing $D$. 

This problem can be reduced to the \emph{(approximate) length-constrained maximum flow} problem, which has been studied by \cite{HaeuplerHS23} but mainly in the single-commodity setting. Intuitively, the reduction is a simple greedy strategy. Since congestion $\tilde{O}(1)$ is allowed, we can consider length parameter $h = (1+\epsilon)^{i}$ in an increasing order of $i$. For each such $h$, we try to route as much demand as possible using a feasible $h$-length. Concretely, we partially route $D$ via a feasible $(1+\epsilon)h$-length flow $F_{h}$ with value (approximately) at least the maximum $t$-step $h$-length feasible flow $F^{*}_{h}$ partially routing $D$. Add this $F_{h}$ into $F$, subtract from $D$ the demand that has been routed, and then go to the next $h$. A small issue is that the LC-max flow subroutine will have work and depth depending on $\poly(h)$ (as we will see soon), so we cannot compute $F_{h}$ directly when $h$ is large. However, since we only need $\vvalue(F_{h})$ comparable to the value of the $t$-step $h$-length $F^{*}_{h}$, we actually allow $\epsilon h/t$ additive errors on the edge lengths. Hence we can just scale down the edge length by $\epsilon h/t$ (then round up to integers) and compute a $(t/\epsilon + t)$-length maxflow instead.

Now we discuss our result on approximate multi-commodity LC-maxflows, which is our second intermediate result which may be of independent interest.

\begin{theorem}
Let $G$ be a directed graph with edge lengths and capacities. Let $D$\footnote{The demand $D$ is in this form since we are considering \emph{non-concurrent} multi-commodity flows.} be a given demand such that at most $k$ pairs of vertices $(u,v)$ have $D(u,v) = \infty$ and the rest has $D(u,v) = 0$. Given parameters $h\geq 1$ and $0<\epsilon<1$, there is an algorithm that computes a $(1+\epsilon)$-approximate feasible $h$-length maxflow $F$ partially routing $D$. The work is $\tilde{O}(mk)\cdot \poly(h,\epsilon^{-1})$ and depth is $\tilde{O}(1)\cdot \poly(h,\epsilon^{-1})$.
\end{theorem}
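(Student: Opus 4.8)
The plan is to follow a white-box generalization of the single-commodity $(1+\epsilon)$-approximate length-constrained maxflow algorithm of \cite{HaeuplerHS23}, threading Cohen's path-count flow idea \cite{DBLP:journals/siamcomp/Cohen95} through the parts where a naive multi-commodity implementation would pay an extra factor of $k$ in depth. Recall the setting: we have a demand $D$ supported on $k$ source–sink pairs $(u_1,v_1),\dots,(u_k,v_k)$ with $D(u_i,v_i)=\infty$, and we want a feasible flow $F$ of length $\le h$ whose value is within $(1+\epsilon)$ of the optimum $h$-length flow partially routing $D$. The single-commodity algorithm of \cite{HaeuplerHS23} is an MWU/Lagrangian scheme: maintain edge weights (exponential in current congestion), and in each of $\tO(1)\cdot\poly(h,\epsilon^{-1})$ iterations route a ``blocking'' batch of shortest (length-$\le h$) flow along the current weights, then update weights multiplicatively. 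The work per iteration is dominated by an $h$-length shortest-path / blocking-flow computation, which is parallelizable to $\tO(1)\cdot\poly(h)$ depth and $\tO(m)\cdot\poly(h)$ work (a BFS-like layered computation since lengths are at most $h$, and after scaling we may assume lengths are small integers).

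First I would set up the multi-commodity MWU iteration so that in each round we simultaneously route, for every commodity $i$, an approximately-maximum-value batch of $h$-length $u_i$–$v_i$ flow under the current edge weights $w$, subject to the ``per-edge soft capacity'' that MWU imposes. The obstacle to doing this in $k$-independent depth is that running $k$ separate shortest-path-blocking-flow computations, one per commodity, costs $\tO(k)$ depth if done sequentially and $\tO(mk)$ work if done in parallel — the work is fine but we must avoid the depth blowup. The key insight, following Cohen, is to replace the $k$ separate flow computations by a single \emph{path-count flow} computation on an auxiliary graph: add a super-source $s^*$ connected to each $u_i$ and a super-sink $t^*$ connected from each $v_i$, and instead of tracking a flow as a set of weighted paths, track for each edge the \emph{number of $h$-length shortest-path witnesses} passing through it. Cohen showed such path-count quantities can be computed by a single layered dynamic program over the $\le h$ distance layers (hop-by-hop matrix-vector style products), giving $\tO(1)\cdot\poly(h)$ depth and $\tO(m)\cdot\poly(h)\cdot k$ work regardless of $k$ — the $k$ enters only through having $k$ commodity labels, not through $k$ separate traversals. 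From the path-count flow one extracts a fractional flow per commodity that routes, for each $i$, value comparable to the $i$-th commodity optimum, all in one shot.

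Next I would verify the MWU convergence and feasibility guarantees carry over. The standard width/oracle analysis (as in Garg–Könemann \cite{DBLP:journals/siamcomp/GargK07} and its length-constrained refinement in \cite{HaeuplerHS23}) says that after $\tO(1/\epsilon^2)\cdot\poly(h)$ weight-update rounds — with an extra $\poly(h)$ because $h$-length constraints restrict how aggressively each round can push flow — scaling the accumulated flow down by the appropriate factor yields a feasible $h$-length flow whose value is $(1-\epsilon)$ times the LP optimum of the multi-commodity $h$-length flow relaxation, which is exactly the optimum we compare against. Since the flow LP optimum is what ``maxflow partially routing $D$'' means, this gives the $(1+\epsilon)$-approximation after rescaling $\epsilon$. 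The only subtlety is to ensure the path-count-flow-based oracle is a valid MWU oracle, i.e. it returns a flow whose weighted length (w.r.t. current $w$) is within $(1+\epsilon)$ of the minimum possible for the value it carries — this follows because the layered DP computes \emph{shortest} $h$-bounded witnesses, and standard randomized path-decomposition of the path-count flow (with $\tO(1)$ depth) converts it back to an explicit weighted path collection without inflating length.

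The hard part will be the white-box interleaving of Cohen's path-count machinery with the MWU rounds: in \cite{HaeuplerHS23} the single-commodity oracle is a clean $h$-length blocking flow, but the path-count substitute is only an \emph{approximate} blocking flow, and one must check that (i) its value is still a constant-factor (or $(1-\epsilon)$-factor) approximation to the true per-commodity blocking value so that the number of MWU rounds stays $\poly(h,\epsilon^{-1})$ and $k$-independent, and (ii) the congestion introduced per round is bounded by $\poly(h)$ so the final rescaling is lossless — both of these require re-deriving the width bound in the multi-commodity path-count setting rather than black-boxing \cite{HaeuplerHS23}. A secondary technical point is the length-rounding/scaling step (reducing to small integer lengths so the layered DP has $\poly(h)$ layers) which introduces $\pm\epsilon h/(\text{poly})$ additive errors per edge; these must be charged against the $(1+\epsilon)$ slack, exactly as done in the reduction sketched for the low-step mincost flow in \Cref{Overview:LowStepFlow}. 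Once those width and rounding bounds are established, assembling the final work bound $\tO(mk)\cdot\poly(h,\epsilon^{-1})$ and depth bound $\tO(1)\cdot\poly(h,\epsilon^{-1})$ is routine.
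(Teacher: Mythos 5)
Your proposal is correct and follows essentially the same route as the paper: an MWU outer loop whose oracle is a multi-commodity blocking flow on a length-and-weight-expanded layered DAG, with the blocking flow computed via Cohen's path-count flows so that all $k$ commodities are handled by $k$ parallel dynamic programs sharing a single joint normalization and capacity update, giving $k$-independent depth. One caution: the super-source/super-sink construction you mention would, taken literally, merge the commodities and admit invalid $u_i$-to-$v_j$ flow paths; the paper instead keeps per-commodity path counts $c^{-}(v,i)$, $c^{+}(v,i)$ and normalizes every commodity by the common maximum total path count, which is exactly what your ``commodity labels'' remark already points to.
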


We note that, assuming subpolynomial $h$ and $\epsilon^{-1}$, we are the first to show such an algorithm with work $\hat{O}(mk)$ and depth $\hat{O}(1)$ independent from $k$. Previously, \cite{HaeuplerHS23} mentioned a multi-commodity generalization of their single-commodity LC-maxflow result (in an almost black-box way), but their algorithm has depth $\tilde{O}(k)\cdot \poly(h,\epsilon^{-1})$. 

Our result is obtained by generalizing the single-commodity LC-maxflow algorithm \cite{HaeuplerHS23} in a white-box manner. Roughly speaking, \cite{HaeuplerHS23} showed that, in the single-commodity setting, computing $(1+\epsilon)$-approximate LC-maxflow can be reduced to computing blocking flows in an $h$-layer DAG, and \cite{DBLP:journals/siamcomp/Cohen95} can solve the latter using the idea of \emph{path count flows}. We show that this reduction also works in the multi-commodity setting (in low depth), and observe that Cohen's path count flows can naturally be computed for $k$ commodities in parallel.
\section{Preliminaries}
\label{sect:Preliminaries}

Throughout the paper, all graphs are undirected unless specified to be directed. 

We always use $n$ and $m$ to denote the number of vertices and edges of the original graph in the context (for example, the original graph of each main theorem is its input graph). For a real function $F:D\to \bbR$ and $S\subseteq D$, we write $F(S)=\sum_{s\in S}F(s)$. 

\paragraph{Polynomially-Bounded Integral Functions and Positive Lengths and Capacities.}
Original graphs might have length, capacities, demands, node-weightings, etc, and we assume that all these values (which could be real numbers), after scaling up to integers, are upper bounded by a parameter denoted as $N$. We assume $N=\poly(n)$. 

Especially, we assume that all length and capacities are \emph{positive} integers. This is almost without loss of generality. We can ignore vertices/edges with zero capacity. For vertices/edges with zero length, we can add a very tiny length, and then scale up the whole length function. But we note that this scaling does not work in the context sensitive to absolute lengths, for example, in some algorithms whose work and depth depend on $h$, where $h$ is some absolute length parameter. In this case, we will carefully enforce the input graphs have positive integral lengths.

\paragraph{Undirected graphs with vertex lengths and capacities.}
A undirected graph with vertex lengths and capacities, denoted by $G=(V,E)$ is an undirected graph where each vertex is associated with a length and a capacity from $\mathbb{N}^+$. We define the length function $\ell:V\to \mathbb{N}^{+}$ and the capacity function $U:V\to \mathbb{N}^{+}$ as mapping a vertex to its associated vertex length and capacity. We use $\ell_G,U_G$ to emphasize they are the length and capacity function for $G$. We use $V(G),E(G)$ to denote the vertex and edge set of $G$.

For convenience, we will not assign lengths and capacities to edges for such graphs. Namely, each edge has length $0$ and infinite capacity. This is without loss of generality because we can always transform edge lengths and capacities into vertex lengths and capacities by putting a vertex on each edge (and our work and depth bound are not sensitive to number of vertices).

\paragraph{Basic graph terminologies.} We define $\delta_G(u)=\{(u,v)\in E\}$ to be all adjacent edges of $u$ in $G$. We can extend this definition to $\delta_H(u)=\{(u,v)\in H\}$ where $H\subseteq E$ is an edge subset (or a subgraph). We use $N_G(u)$ to denote the vertex set of the neighborhood of $u$. We write $N_G(S)=\cup_{v\in S}N_G(v)-S$. We write $N_G[v]=N_G(v)\cup \{v\}$ and $N_G[S]=N_G(S)\cup S$.

\paragraph{Path and distances.} 
A simple path $P$ in $G$ is a sequence of distinct vertices $(v_1,...,v_k)$ such that $(v_i,v_{i+1})\in E$ for every $i\in[k-1]$.  We define $\step(P)=k$.%
We define $V(P)=\{v_1,...,v_k\}$, $E(P)=\{(v_i,v_{i+1})\in E\mid i\in[k-1]\}$, and $\Start(P)=v_1,\End(P)=v_k$. We define $\Succ_P(v_i)=v_{i+1}$ and $\Prec_P(v_i)=v_{i-1}$. 
For two vertices $v_i,v_j\in V(P)$ with $i\le j$, we define $P[v_i,v_j]$ as the subpath $\{v_i,v_{i+1},...,v_j\}$. Notice that $E(P)$ only contains undirected edges, if we want to emphasis the path direction, we use $\vec{E}(P)$ to denote the set of ordered pairs $(v_i,v_{i+1})$ for $i\in[k-1]$.

We define the length of the path $P$ as $\ell(P) = \ell(V(P))$(also denoted by $\leng(P)$), and its step $|P|$ as the number of edges in $E(P)$. %
When we want to emphasize that $\leng(P)$ is with respect to $G$, we will write $\leng(P)$ as $\leng(P,G)$.
We define $\dist_G(u,v)$ as the minimum length of simple paths from $u$ to $v$. We define $\dist_G(S,T)$ for two vertex sets $S,T$ as $\min_{s\in S,t\in T}\dist_G(s,t)$. We say $S,T$ are $h$-far if $\dist_G(S,T)\ge h$. We define the \emph{diameter} of $S$ as $\max_{u,v\in S}\dist_G(u,v)$. When $G$ is clear from the context, we omit $G$ from the subscript.

\paragraph{Graph Arboricity.} Give graph $G$, a \emph{forest cover} of $G$ consists of sub-graphs $F_1, F_2, \ldots, F_k$ of $G$ where each $F_i$ is an edge-disjoint forest and every edge of $G$ occurs in some $F_i$. $k$ is called the size of the forest cover and the \emph{arboricity} $\alpha$ of graph $G$ is the minimum size of a forest cover of $G$.

\subsection{Flows and Demands}
\label{sect:PrelimFlow}

\paragraph{(Multicommodity) Flows.} A (multicommodity) \emph{flow} $F$ in $G$ is a function that assigns to each (directed) simple path $P$ in $G$ a flow value $F(P)\in\bbR_{\ge 0}$. A path $P$ is a \emph{flow path} of $F$ if $F(P)>0$, and we use $\path(F)$ to denote the set of all flow paths of $F$. Without ambiguity, each flow path $P\in\path(F)$ is also a multicommodity flow (which has only one flow path $P$ with value $F(P)$) and all the notations below can apply on $P$. The value of $F$ is $\vvalue(F) = \sum_{P\in\path(F)} F(P)$. 
We have the following parameters for a flow $F$.
\begin{itemize}
    \item \emph{Length}: The length of $F$ is $\leng(F) = \max_{P\in\path(F)}\leng(P)$, and again we write $\leng(F)$ as $\leng(F,G)$ to emphasize that it is with respect to $G$. We say $F$ is an $h$-length flow if $\leng(F)\leq h$. 
    \item \emph{Step}: The step of $F$ is $\step(F) = \max_{P\in\path(F)}\step(P)$. 
    \item \emph{Congestion}: For each vertex $v \in V(G)$, the amount of \emph{flow units} allocated on $v$ by $F$ is $F(v) = \sum_{v\in V(P),P\in\path(F)} F(P)$.
The congestion of $F$ is $\cong(F) = \max_{v\in V(G)}F(v)/U(v)$. 
\end{itemize}

For a flow path $P\in \path(F)$, a \emph{flow subpath} $\hat{P}$ of $P$ is a subpath of $P$, and $\hat{P}$ also represents a multicommodity flow which has only one flow path $\hat{P}$ with value $\vvalue(P)$. A flow $F'$ is a \emph{subflow} of $F$ if $F'(P)\leq F(P)$ for each simple path $P$. Given two flows $F_{1}$ and $F_{2}$, $F_{1}+F_{2}$ denotes the flow where $(F_{1}+F_{2})(P) = F_{1}(P) + F_{2}(P)$ for each simple path $P$.

A \emph{path representation} of a flow represents the flow by storing all its flow paths and associated values. For \emph{edge representation}, we define $\vec{E}$ to be the directed counterpart of $E$, i.e. we put two opposite directed edges into $\vec{E}$ for each edge in $E$. The \emph{edge representation} of a flow $F$ is representing the flow by storing $F(e)$ for every edge $e\in \vec{E}$, where 
\[
F(e) = \sum_{P\in\path(F),\vec{E}(P)\ni e} F(P).
\]
Note that a flow path $P$ has a direction, and we use $\vec{E}(P)$ denote the directed edges on $P$ consistent with $P$'s direction.

\medskip\noindent{\textbf{(Multi-Commodity) Demands.}}
A \emph{demand} $D:V\times V\rightarrow\mathbb{R}_{\ge0}$ assigns
a non-negative value $D(v,w) \ge 0$ to each ordered pair of vertices in $V$ to specify the units of demand $D(v,w)$ vertex $v$ wants to send to vertex $w$. We note that it is always safe to assume $D(u,u) = 0$.
The size of a demand is written as $|D|$ and is defined as $\sum_{v,w} D(v,w)$. The \emph{support} of the demand $D$ is defined as $\supp(D)=\{u,v\in V\mid D(u,v)>0\}$. For each vertex $v$, we let $D(v,\cdot) = \sum_{w}D(v,w)$ and $D(\cdot,v) = \sum_{w}D(w,v)$ denote the total demand starting and ending at $v$ respectively. 
The vertex set of $D$, denoted by $V(D)$, is the set of all vertices $v$ with $D(v,\cdot)>0$ or $D(\cdot,v)>0$. 
We say a demand $D$ is
\begin{itemize}
    \item \emph{h-length constrained}: if for each $v,w \in V$ with $D(v,w) > 0$, we have $\dist_G(v,w) \le h$;
    \item \emph{sub-demand for $D'$}: if for any pairs of $v,w \in V$, $D(v,w) \le D'(v,w)$;
    \item \emph{$A$-respecting for some node-weighting $A$}: if for any $v\in V$, $\max\{D(v,\cdot),D(\cdot,v)\}\leq A(v)$;
\end{itemize}

For a flow $F$, the demand routed by/corresponding to $F$ is denoted by $\Dem(F)$ or $D_{F}$, where for each $u,v\in V(G)$, $D_{F}(u,v) = \sum_{P\in\path(F)~\text{s.t. $P$ is a $(u,v)$-path}} F(P)$.
We say a demand $D$ is routable with length $h$, congestion $\gamma$ and step $t$ if there is a flow $F$ routing $D$ with length $h$, congestion $\gamma$ and step $t$.

\paragraph{Single-Commodity Demands and Flows.} 

A \emph{single-commodity demand} $D:V\to\bbR$ assigns a (possibly negative) value $D(v)$ to each vertex in $V$, satisfying that $\sum_{v\in V} D(v) = 0$. We say a single-commodity demand $D_{\sg}$ is \emph{corresponding to} a multi-commodity demand $D_{\mt}$ if for each $v\in V$,
\[
D_{\sg}(v) = D_{\mt}(v,\cdot) - D_{\mt}(\cdot,v).
\]

For a single-commodity demand $D$, we say a flow $F$ is a \emph{single-commodity flow routing $D$} if $\Dem(F)$ is corresponding to $D$. %

\medskip\noindent{\textbf{$k$-Commodity Demands and Flows.}} 
For an integer $k$, a \emph{$k$-commodity demand} $D$ is a collection of $k$ single-commodity demand $D_{1},D_{2},...,D_{k}$. A \emph{$k$-commodity flow} $F$ is a collection of $k$ flows $\{F_{i}\mid i\in[k]\}$, and we call $F_{i}$ the \emph{$i$-th commodity flow of $F$}. We say a $k$-commodity flow $F$ routes a $k$-commodity demand $D$ if each $F_{i}$ routes $D_{i}$. The \emph{($k$-commodity) edge representation} of $F$ stores the edge representation of each $F_{i}$.

\subsection{Neighborhood covers.} 
We define neighborhood cover as follows.
\begin{definition}[Neighborhood covers]\label{def:neicov}
    Given a graph $G=(V,E)$ with vertex length, a \emph{neighborhood cover} $\cN$ is a collection of sets called \emph{clusterings} $\cS_1,...,\cS_\omega$ where each $\cS_i$ contains vertex disjoint subsets of $V$, we care about the following parameters of a neighborhood cover.
    \begin{enumerate}
        \item \textbf{Covering radius $h_{\cov}$:} Every subset $A\subseteq V$ with diameter at most $h_{cov}$ is a subset of some $S\in\cS_i$ for some $i$.
        \item \textbf{Diameter $h_{\diam}$:} For every $i$ and every $S\in\cS_i$, $S$ has diameter at most $h_{\diam}$.
        \item \textbf{Separation-factor $s$:} For every $i$, every two different sets in $\cS_i$ are $\left(s\cdot h_{\diam}\right)$-far.
        \item \textbf{Width:} the number of clustering, denoted by $w$.
    \end{enumerate}
\end{definition}

The following theorem is an algorithm for neighborhood cover for graphs with vertex length. It is implied by previous works about neighborhood cover for graphs with edge lengths.

\begin{theorem}\label{thm:neicov}
    There exists an algorithm that is given a graph $G$ with vertex lengths, parameters $h_{\cov},s>1$ and $\eps\in(0,1)$, output a neighborhood cover with covering radius $h_{\cov}$, separation-factor $s$, diameter and width
    \[h_{\diam}=\frac{1}{\eps}\cdot O(s)^{O(1/\eps)}\cdot h_{\cov}\qquad \qquad \omega=n^{O(\eps)}\cdot \log n\]
    in $h_{\diam}\cdot n^{O(\eps)}$ depth and $m\cdot h_{\diam}\cdot n^{O(\eps)}$ work with high probability. 
\end{theorem}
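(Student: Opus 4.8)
The plan is to reduce the vertex-length neighborhood cover problem to the known edge-length version. First I would turn the vertex-length graph $G = (V,E)$ into an edge-length graph $G'$: subdivide every edge, or more directly, build a graph on the same vertex set (plus auxiliary vertices) where the length of an original vertex $v$ is encoded by splitting $v$ into $v_{\iin}, v_{\out}$ joined by an edge of length $\ell_G(v)$, and wiring original edges into this split. Then $\dist_{G'}$ restricted to the $v_{\iin}$ (or $v_{\out}$) copies equals $\dist_G$ up to the obvious $O(1)$ additive shift from endpoint-vertex lengths; I would be careful to pick a convention (e.g.\ charge half of each endpoint's length to an edge, or just track the $\pm \max$ vertex length as an extra slack that gets absorbed into the $O(s)^{O(1/\epsilon)}$ factor). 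The number of vertices and edges of $G'$ is $O(m)$, so work/depth bounds stated in terms of $m$ and $n$ translate directly.

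Next I would invoke the existing edge-length neighborhood cover algorithm (the one this theorem says it is ``implied by'') on $G'$ with the given $h_{\cov}$, separation factor $s$, and precision $\epsilon$, obtaining clusterings $\cS'_1,\dots,\cS'_\omega$ with $\omega = n^{O(\epsilon)}\log n$, diameter $h_{\diam}' = \frac{1}{\epsilon} O(s)^{O(1/\epsilon)} h_{\cov}$, and the quoted depth/work. I would then pull the clusters back to $G$: a cluster $S' \subseteq V(G')$ becomes the set of original vertices $v$ such that $v_{\iin}\in S'$ (or, to be safe, both copies, or either copy — the choice affects constants, not asymptotics). The three quality guarantees transfer: (1) covering radius — if $A\subseteq V(G)$ has $G$-diameter $\le h_{\cov}$, the corresponding copies have $G'$-diameter $\le h_{\cov} + O(\max_v \ell_G(v))$, which we handle either by feeding a slightly smaller covering radius to the edge-length routine or by noting the construction can be made tight; (2) diameter — a $G'$-diameter bound on $S'$ gives a $G$-diameter bound on its preimage up to the same additive shift; (3) separation — $(s\cdot h_{\diam}')$-farness in $G'$ implies $(s'\cdot h_{\diam})$-farness in $G$ for a slightly adjusted $s'$, again absorbed into constants. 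Width is unchanged.

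I expect the main obstacle — though it is more bookkeeping than deep — to be the additive discrepancy between vertex-length distances and the subdivided edge-length distances at path endpoints, i.e.\ whether ``distance'' counts the length of both endpoints, one, or neither. One clean fix is to define the reduction so that $G'$-distance between $v_{\out}$ and $w_{\iin}$ exactly equals $\ell_G$-path-length $\ell_G(P)$ for the $v$--$w$ path $P$ minus $\ell_G(v)$ or similar, making the correspondence exact; alternatively, since all target parameters already carry an $O(s)^{O(1/\epsilon)}$ slack and $\max_v\ell_G(v)\le h_{\cov}$ after appropriate scaling is not guaranteed (lengths can be large), one must instead subdivide each vertex of length $\ell$ into a path of $\ell$ unit-length edges only conceptually and in fact keep it as a single weighted edge, so the length blowup is in value not in graph size. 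The remaining verification — that the $G'$ instance has $O(m)$ size, that feeding the edge-length algorithm parameters $h_{\cov}$, $s$, $\epsilon$ reproduces exactly the claimed $h_{\diam}$ and $\omega$ formulas, and that depth $h_{\diam}\cdot n^{O(\epsilon)}$ and work $m\cdot h_{\diam}\cdot n^{O(\epsilon)}$ are preserved under the reduction — is routine and I would defer the detailed constant-chasing to the appendix.
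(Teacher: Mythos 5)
Your proposal is correct and follows the same high-level route as the paper: both reduce the vertex-length case to the known edge-length neighborhood cover algorithm of \cite{HaeuplerRG22} and pull the clusters back. The only difference is the reduction itself. You split each vertex $v$ into $v_{\iin},v_{\out}$ joined by an edge of length $\ell_G(v)$, which preserves distances exactly but forces you to chase the endpoint convention (whether a path's length counts the lengths of its first and last vertices) — the issue you correctly flag as the main bookkeeping obstacle. The paper sidesteps this entirely: it keeps the same vertex and edge set and simply assigns each edge $\{u,w\}$ the length $(\ell_G(u)+\ell_G(w))/2$. This gives $\dist_{G_e}(u,w)\le\dist_{G_v}(u,w)\le 2\dist_{G_e}(u,w)$, and the factor-$2$ distortion is absorbed into the $O(s)^{O(1/\epsilon)}$ diameter slack (covering and separation transfer in the favorable direction, since $G_v$-distances dominate $G_e$-distances). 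So your argument works, but the averaging trick buys a one-line correctness check in place of the endpoint case analysis and the auxiliary-vertex accounting; if you keep the splitting reduction, you should at least verify explicitly that the covering-radius guarantee survives your chosen convention, since that is the one property where the inequality must go the other way.
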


\begin{proof}
    The theorem follows from the algorithm by the Lemma 8.15 of \cite{HaeuplerRG22}, which is restated as Theorem 5.32 from \cite{HaeuplerHT24} with the same parameters under the setting of edge length. We show that we could transfer a vertex graph into a corresponding graph with edge length, and apply the algorithm from \cite{HaeuplerRG22} to get the same result.
    Namely, let $G_{v} = (V, E)$ be a graph with vertex length $\l_{G_{v}}$, we create a copy of it and denote it as $G_{e} = (V, E)$. For any edge $\{u, w\}$ in $E$, assign $(\l_{G_{v}}(u) + \l_{G_{v}}(w))/2$ to its length $\l_{G_{e}}(\{u, w\})$.
    We note that from the construction, we have that for any pair of vertices $u, w \in V$, $\dist_{G_{e}}(u,w) \le \dist_{G_{v}}(u,w) \le 2\dist_{G_{e}}(u,w)$.
    We then apply the algorithm from \cite{HaeuplerRG22} with the same covering radius $h_{\cov}$ and separation-factor $s$ over the graph $G_{e}$, the resulting neighborhood cover is also a valid one for the graph $G_{v}$ with the required parameters.
\end{proof}

When we do not care about the separation factor, there is a more efficient algorithm.

\begin{restatable}{lemma}{neicov}
Given an undirected graph $G$ with vertex lengths and parameters $h_{\cov}$ and $\beta$, there is an algorithm that computes a neighborhood cover ${\cal N}$ in $G$ with covering radius $h_{cov}$, diameter $h_{\diam} = \beta\cdot h_{\cov}$ and width $\omega = n^{O(1/\beta)}\log n$ in $h_{\diam}\cdot n^{O(1/\beta)}$ depth and $m\cdot h_{\diam}\cdot n^{O(1/\beta)}$ work.
\label{lemma:NeighborhoodCover}
\end{restatable}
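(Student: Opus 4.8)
The plan is to reduce the vertex-length setting to the edge-length setting exactly as in the proof of \Cref{thm:neicov}, but invoking a more efficient neighborhood cover construction that ignores the separation factor. First I would recall that in the edge-length regime there is a standard, efficient algorithm (a parallel ball-growing / random-shift construction, as used implicitly in \cite{HaeuplerRG22,HaeuplerHT24}) that, given covering radius $h_{\cov}$ and a slack parameter $\beta$, produces a neighborhood cover with diameter $h_{\diam} = \beta \cdot h_{\cov}$ and width $\omega = n^{O(1/\beta)}\log n$, running in depth roughly $h_{\diam}\cdot n^{O(1/\beta)}$ and work roughly $m\cdot h_{\diam}\cdot n^{O(1/\beta)}$. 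Dropping the separation-factor requirement is precisely what lets one use the cheaper construction: one no longer needs the iterated-clustering machinery that blows the diameter up to $\frac{1}{\eps}O(s)^{O(1/\eps)} h_{\cov}$, and the overhead becomes the single $n^{O(1/\beta)}$ factor rather than $O(s)^{O(1/\eps)}$.

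The key steps, in order, are as follows. (1) Transform $G_v = (V,E)$ with vertex lengths $\ell_{G_v}$ into $G_e = (V,E)$ with edge lengths $\ell_{G_e}(\{u,w\}) = (\ell_{G_v}(u)+\ell_{G_v}(w))/2$, and record the two-sided bound $\dist_{G_e}(u,w) \le \dist_{G_v}(u,w) \le 2\,\dist_{G_e}(u,w)$ for all $u,w$ (the same estimate used in the proof of \Cref{thm:neicov}). (2) Run the efficient edge-length neighborhood cover algorithm on $G_e$ with covering radius $h_{\cov}$ and slack parameter $\Theta(\beta)$ to obtain clusterings $\cS_1,\dots,\cS_\omega$ with $G_e$-diameter at most $\Theta(\beta)\cdot h_{\cov}$ and width $n^{O(1/\beta)}\log n$. (3) Verify the output is a valid neighborhood cover of $G_v$ with the claimed parameters: any $A\subseteq V$ with $G_v$-diameter $\le h_{\cov}$ has $G_e$-diameter $\le h_{\cov}$, so it is covered by some cluster in $G_e$ (covering radius); and any cluster with $G_e$-diameter $\le \Theta(\beta) h_{\cov}$ has $G_v$-diameter $\le 2\Theta(\beta) h_{\cov} = \beta\cdot h_{\cov}$ after absorbing the factor $2$ into the constant (diameter). (4) Track work and depth: the edge-length construction costs $h_{\diam}\cdot n^{O(1/\beta)}$ depth and $m\cdot h_{\diam}\cdot n^{O(1/\beta)}$ work, and the transformation in step (1) is trivially linear-work, constant-depth, so the bounds carry over.

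I expect the main obstacle to be purely a matter of citing (or re-deriving, if no black-box statement is available) the right edge-length subroutine with the separation factor suppressed: the paper's own \Cref{thm:neicov} only gives the separation-aware version with the bad $O(s)^{O(1/\eps)}$ diameter blowup, so to get the clean $\beta\cdot h_{\cov}$ diameter one must point to the underlying low-diameter-decomposition / sparse-cover primitive (e.g., the parallel exponential-start-time or random-shift clustering of \cite{HaeuplerRG22}) and observe that a single application — rather than the $O(1/\eps)$ nested applications needed to amplify separation — already yields diameter $O(\beta) h_{\cov}$, width $n^{O(1/\beta)}\log n$, in the stated work and depth. Everything else is routine: the two-sided distance comparison between $G_v$ and $G_e$ is immediate, the covering and diameter properties transfer directly, and adjusting the hidden constant in $\beta$ to absorb the factor $2$ is cosmetic. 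So the proof will be short, essentially a one-paragraph reduction mirroring the proof of \Cref{thm:neicov} but with the separation-free cover algorithm in place of the separation-aware one.
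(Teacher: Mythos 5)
Your proposal is correct and matches the paper's intended argument: the paper proves only the separation-aware version (\Cref{thm:neicov}) via exactly this vertex-to-edge-length reduction with $\ell_{G_e}(\{u,w\}) = (\ell_{G_v}(u)+\ell_{G_v}(w))/2$ and the two-sided distance bound, and leaves \Cref{lemma:NeighborhoodCover} without an explicit proof, the understanding being the same reduction applied to the standard separation-free parallel sparse-cover primitive. Your step-by-step transfer of covering radius and diameter (absorbing the factor $2$ into the choice of slack) is precisely the routine completion.
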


\subsection{Length-Constrained Vertex Expander Decomposition}\label{subsec:preliminaryexpander}

The goal of this section is to introduce the concept of length-Constrained vertex expander decomposition. We start with defining concepts related to length-constrained vertex expanders.

\paragraph{Node-Weightings}
A \emph{node-weighting} $A:V\rightarrow\mathbb{R}_{\ge0}$ assigns a non-negative value $A(v)$ to a vertex $v$. The \emph{size} of $A$ is denoted by $|A|=\sum_{v}A(v)$ and let $\supp(A) := \{v : A(v) > 0\}$. For two node-weightings $A,A'$ we define $\min(A,A')$
and $A+A'$ as pointwise operations, and we write $A \preceq A'$ if $A$ is pointwise at most $A'$. %
We say $A$ restricting to a vertex set $S$ (denoted by $A_S$) as another node-weighting which assigns the same weight on every vertex in $S$ as $A$, and $0$ weight for every other vertices.

For a demand $D$ and a node-weighting $A$, we say $D$ is $A$-respecting if for each vertex $v$, $D(v,\cdot) + D(\cdot,v)\leq A(v)$.
For a flow $F$, We say $F$ routes $A$ to $B$ for some node-weightings $A,B$ with $|A|=|B|$ if $F$ routes a demand $D$ with $D(v,\cdot)=A(v)$ and $D(\cdot,v)=B(v)$.

\paragraph{Vertex Moving Cuts.} 

An $h$-length (moving) cut $C: V \rightarrow \{0, \frac{1}{h}, \frac{2}{h},\ldots\}$ on a graph $G$ assigns to each vertex $v$ a fractional cut value which is a multiple of $\frac{1}{h}$. The size of $C$ is defined as $|C| =\sum_{v}U_G(v)\cdot C(v)$. 
The length increase associated with the $h$-length moving cut $C$ is defined as $\l_{C,h}(v) = h \cdot C(v)$.

\begin{definition}
    For a graph $G$ with length function $\l$ we define $G - C$ (or $G+\l_{C,h}(v)$) as the graph with an updated length function $\l_{G-C}$ which assigns each vertex $v$ the length $\l_{G-C}(v) = \l(v) + \l_{C,h}(v)$.
\end{definition}
\begin{definition}[$h$-Length Separated Demand]
    For any $h$-length moving cut $C$ and demand $D$ on graph $G$, we define the amount of $h$-length separated demand as the sum of demands between vertices that are $h$-length separated by $C$. We denote this quantity with $\sep_h(C,D)$, i.e.,
    \begin{align*}
        \sep_{h}(C,D) = \sum_{u,v : \dist_{G-C}(u,v)>h} D(u,v).
    \end{align*}
\end{definition}

\begin{remark}
    In the notation of $\sep_h(C,D)$, we hide the inherent graph $G$ where the cut $C$ and demand $D$ is on. When using the definition, we assume we are working on the graph $G$ where $C$ and $D$ are defined. When we want to emphasize it is on a specific graph $G$, we will add it in the subscript. The same simplification applies to the rest of the definitions.
\end{remark}

\begin{definition}[$h$-Length Sparsity of a Cut $C$ for Demand $D$]\label{dfn:CDSparse}
For any demand $D$ and any $h$-length moving cut $C$ on graph $G$ with $\sep_{h}(C,D)>0$, the $h$-length sparsity of $C$ with respect to $D$ is the ratio of $C$'s size to how much demand it $h$-length separates i.e.,
\begin{align*}
    \spa_{h}(C,D) = \frac{|C|}{\sep_{h}(C,D)}.
\end{align*}
\end{definition}

\begin{definition}[$(h,s)$-Length Sparsity of a Cut w.r.t.\ a Node-Weighting]\label{def:sparsity}
The $(h,s)$-length sparsity of any $h\cdot s$-length moving cut $C$ with respect to a node-weighting $A$ is defined as:
\begin{align*}
    \spa_{(h,s)}(C,A) = \min_{A\text{-respecting h-length demand}\ D} \spa_{h \cdot s} (C,D).
\end{align*}

We say $C$ is an $(h,s)$-length $\phi$-sparse cut (with respect to $A$ on $G$) if $\spa_{(h,s)}(C,A)\le \phi$.
\end{definition}
Intuitively, $(h\cdot s)$-length sparsity of a cut measures how much it $h\cdot s$-length separates $h$-length demand w.r.t its own size.
Furthermore, for a given node-weighting, we associate the sparsest cut w.r.t the node-weighting with its conductance.

\begin{definition}[$(h,s)$-Length Conductance of a Node-Weighting]
    The $(h,s)$-length conductance of a node-weighting $A$ in a graph $G$ is defined as the $(h,s)$-length sparsity of the sparsest $h\cdot s$-length moving cut $C$ with respect to $A$, i.e.,
    \begin{align*}
        \cond_{(h,s)}(A) = \min_{h \cdot s\text{-length moving cut } C} \spa_{(h,s)}(C,A).
    \end{align*}
\end{definition}

\begin{definition}[$(h,s)$-Length $\phi$-Expanding Node-Weightings]\label{def:LC directed expansion}
    We say a node-weighting $A$ is $(h,s)$-length $\phi$-expanding if the $(h,s)$-length conductance of $A$ in $G$ is at least $\phi$.
\end{definition}

The above definition of length-constrained expanders characterizes them in terms of conductance. The below fact from \cite{HaeuplerLSW25} (see their Theorem 4.2) exactly characterizes length-constrained expanders as those graphs that admit both low congestion and low dilation routings under the setting of vertex capacity adapted from \cite{HaeuplerRG22}.

\begin{theorem}
[Routing Characterization of Length-Constrained Expanders, \cite{HaeuplerLSW25}]\label{thm:flow character} Given graph $G$ with vertex lengths and capacities and a node-weighting $A$, for any $h \geq 1$, $\phi < 1$ and $s \geq 1$ we have:
\begin{itemize}
\item \textbf{Length-Constrained Expanders Have Good Routings} If $A$ is $(h,s)$-length $\phi$-expanding in $G$, then every $h$-length $A$-respecting demand can be routed in $G$ with congestion at most $O(\frac{\log n}{\phi})$ and dilation at most $h\cdot s$.
\item \textbf{Not Length-Constrained Expanders Have a Bad Demand} If $A$ is not $(h,s)$-length $\phi$-expanding in $G$, then some $h$-length $A$-respecting demand cannot be routed in $G$ with congestion at most $\frac{1}{2\phi}$ and dilation at most $\frac{h\cdot s}{2}$.
\end{itemize}
\end{theorem}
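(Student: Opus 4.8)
The plan is to prove \Cref{thm:flow character} by the standard LP-duality characterization of length-constrained flow, adapted to vertex capacities exactly as in \cite{HaeuplerRG22,HaeuplerLSW25}; in the paper one may simply cite \cite[Theorem 4.2]{HaeuplerLSW25}, but the argument is short. The ``not-expander-has-a-bad-demand'' bullet is essentially an unfolding of \Cref{def:sparsity} together with a flow-counting estimate, which I would carry out in full; the ``expanders-have-good-routings'' bullet is the genuinely analytic direction and reduces to showing that the length-bounded concurrent-flow LP for the given demand has value $\Omega(\phi/\log n)$, via LP duality plus a moving-cut rounding step. I treat the bullets in that (reversed) order.

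\textbf{Not-expander $\Rightarrow$ hard demand.} Assume $A$ is not $(h,s)$-length $\phi$-expanding, i.e.\ its $(h,s)$-length conductance is below $\phi$. Then there is an $hs$-length moving cut $C$ with $\spa_{(h,s)}(C,A)<\phi$; unfolding \Cref{def:sparsity} there is an $h$-length $A$-respecting demand $D$ with $\spa_{hs}(C,D)<\phi$, i.e.\ $\sep_{hs}(C,D)>|C|/\phi$. Let $D'$ be the restriction of $D$ to the pairs $(u,v)$ with $\dist_{G-C}(u,v)>hs$, so $|D'|=\sep_{hs}(C,D)$. Suppose toward a contradiction that some flow $F$ routes $D$ with $\leng(F)\le hs/2$, and let $F'$ be a subflow of $F$ routing $D'$ (so $F'(w)\le F(w)$ for all $w$ and $\leng(F')\le hs/2$). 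For each flow path $P$ of $F'$, say from $u$ to $v$, we have $\ell_G(P)\le hs/2$ while, using $\ell_{G-C}(w)=\ell_G(w)+hs\cdot C(w)$,
\[
\ell_G(P)+hs\sum_{w\in V(P)}C(w)\;=\;\ell_{G-C}(P)\;\ge\;\dist_{G-C}(u,v)\;>\;hs,
\]
hence $\sum_{w\in V(P)}C(w)>1/2$. Summing this over the flow paths of $F'$,
\[
\sum_{w}C(w)F(w)\;\ge\;\sum_{w}C(w)F'(w)\;=\;\sum_{P\in\path(F')}F'(P)\sum_{w\in V(P)}C(w)\;>\;\tfrac12\,|D'|\;>\;\tfrac{|C|}{2\phi}.
\]
On the other hand $\sum_{w}C(w)F(w)\le\bigl(\max_w F(w)/U(w)\bigr)\cdot\sum_w C(w)U(w)=\cong(F)\cdot|C|$, so $\cong(F)>\tfrac1{2\phi}$. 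Thus no flow routes $D$ with both $\leng\le hs/2$ and $\cong\le\tfrac1{2\phi}$.

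\textbf{Expander $\Rightarrow$ good routing.} Fix an $h$-length $A$-respecting demand $D$. Consider the length-bounded maximum concurrent flow for $D$: maximize $\lambda$ such that $\lambda\cdot D$ is routable in $G$ using only paths of $\ell_G$-length at most $hs$ with vertex congestion at most $1$; call its optimum $\lambda^*$. It suffices to show $\lambda^*=\Omega(\phi/\log n)$, since scaling an optimal flow by $1/\lambda^*$ then routes $D$ with $\leng\le hs$ and $\cong=O(\log n/\phi)$. By LP duality,
\[
\lambda^*\;=\;\min\Bigl\{\,\sum_w U(w)\,y(w)\ :\ y\ge 0,\ \sum_{(u,v)}D(u,v)\,d^{\le hs}_y(u,v)\ge 1\,\Bigr\},
\]
where $d^{\le hs}_y(u,v)$ denotes the minimum $y$-weight of a $u$-$v$ path whose $\ell_G$-length is at most $hs$. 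Let $y$ attain this minimum. I would turn $y$ into an $hs$-length moving cut $C$ that contradicts expansion whenever $\lambda^*$ is small: (i) by a layer-cake decomposition of the constraint $\sum_{(u,v)}D(u,v)\,d^{\le hs}_y(u,v)\ge 1$ over the dyadic scales of $d^{\le hs}_y$, and after discarding vertices carrying negligible $y$-mass (whose total weight along any simple path is $o(1)$), find a single threshold $\tau$ with $\tau\cdot D_{>\tau}=\Omega(1/\log n)$, where $D_{>\tau}$ is the total demand between pairs with $d^{\le hs}_y(u,v)>\tau$; (ii) rescale $y$ by $1/\tau$ and round each entry up to the nearest multiple of $1/(hs)$, capped at $1$, to obtain an $hs$-length moving cut $C$ with $|C|=O(\lambda^*/\tau)$; (iii) check that any pair with $d^{\le hs}_y(u,v)>\tau$ satisfies $\dist_{G-C}(u,v)>hs$, because any $u$-$v$ path of $\ell_G$-length $\le hs$ now has $\ell_{G-C}$-length $>hs$, and any longer one already does. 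Then $\sep_{hs}(C,D)\ge D_{>\tau}=\Omega(1/(\tau\log n))$, so $\spa_{hs}(C,D)\le|C|/\sep_{hs}(C,D)=O(\lambda^*\log n)$; since $D$ is $h$-length $A$-respecting, $\spa_{(h,s)}(C,A)=O(\lambda^*\log n)$. As $A$ is $(h,s)$-length $\phi$-expanding this last quantity is $\ge\phi$, forcing $\lambda^*=\Omega(\phi/\log n)$ as desired.

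\textbf{The main obstacle.} The two delicate points both sit in the ``good routing'' direction: (a) setting up strong duality for the \emph{length-bounded} concurrent-flow LP, where the path set is restricted to $\ell_G$-length at most $hs$ (one must use the correct restricted primal/dual pair rather than the plain transshipment LP), and (b) the rounding step~(ii) --- turning a fractional vertex length function into a genuine $hs$-length moving cut with entries in multiples of $1/(hs)$ while losing only an $O(\log n)$ factor, not a $\poly(n)$ factor, in sparsity; this is exactly where the dyadic bucketing and the ``zero out tiny $y$-mass'' trick are needed. Both are the technical core of \cite{HaeuplerRG22} (edge capacities) and \cite{HaeuplerLSW25} (vertex capacities), so the cleanest route in the paper is to invoke \cite[Theorem 4.2]{HaeuplerLSW25} directly; alternatively one can reduce the vertex-capacitated statement to the edge-capacitated one of \cite{HaeuplerRG22} via the standard vertex-splitting transformation, losing only constant factors in $h$, $s$, and $\phi$.
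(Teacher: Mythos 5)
The paper does not actually prove this statement: it is imported verbatim as \cite[Theorem 4.2]{HaeuplerLSW25} (itself the vertex-capacitated adaptation of \cite{HaeuplerRG22}), so there is no in-paper proof to compare against. Your proposal is consistent with how that result is proved in the cited works. Your ``not-expander $\Rightarrow$ hard demand'' direction is complete and correct: unfolding \Cref{def:sparsity} to get a witnessing cut--demand pair, observing that every flow path of $\ell_G$-length at most $hs/2$ between an $hs$-separated pair must accumulate moving-cut value more than $1/2$, and charging $\sum_w C(w)F(w)$ against $\cong(F)\cdot|C|$ is exactly the standard argument, and the constants work out to the stated $(\tfrac{hs}{2},\tfrac{1}{2\phi})$ bounds.

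The ``expander $\Rightarrow$ good routing'' direction is only an outline, and the two places you flag --- strong duality for the length-restricted concurrent-flow LP, and rounding the dual length function into a legitimate $hs$-length moving cut while losing only $O(\log n)$ --- are precisely where all the work lives; as written, steps (i)--(iii) are a plan rather than a proof (in particular, the ``discard negligible $y$-mass, then round up to multiples of $1/(hs)$'' step needs the quantitative threshold argument you allude to, using that an $hs$-length path has at most $hs$ vertices since vertex lengths are positive integers, to keep $|C|$ from blowing up). Since the paper itself handles this by citation, your suggestion to invoke \cite{HaeuplerLSW25} directly (or to reduce to the edge-capacitated statement of \cite{HaeuplerRG22} by vertex splitting) is the appropriate resolution; I would not count the unexecuted rounding details as a gap relative to what the paper does, but they would need to be filled in if one wanted a self-contained proof.
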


The following definition is about length-constrained Vertex Expander Decomposition.
\begin{definition}[Length-Constrained Vertex Expander Decomposition]
    Given a graph $G = (V,E)$, an $(h,s)$-length $\phi$-expander decomposition for a node-weighting $A$ with length slack $s$ and cut slack $\kappa$ is an $h \cdot s$-length cut $C$ of size at most $\kappa \cdot \phi|A|$ such that $A$ is $(h,s)$-length $\phi$-expanding in $G - C$.
\end{definition}

\section{Length-Constrained Vertex Expander Decompositions}
\label{sect:LCVED}

We show the first efficient parallel length-constrained vertex expander decomposition algorithm, which is also associated with an expander routing algorithm. The proof is deferred to the appendix.

\begin{restatable}{theorem}{vertexLCED}
    \label{thm:vertexLC-ED}
    There is a randomized algorithm $\textsc{VertexLC-ED}(G,A,h,\phi,\eps)$ that takes inputs an undirected graph $G$ with vertex capacities and lengths, a node-weighting $A$, length bound $h\ge 1$, conductance $\phi\in(0,1)$ and a parameter $\eps$ satisfying $0< \eps<1$, outputs a $(h,s)$-length $\phi$-expander decomposition $C$ for $G,A$ with cut slackness $\kappa$ where
    \begin{align*}
    \kappa = n^{O(\eps)} \qquad \qquad s = \exp(\exp(\frac{O(\log(1/\eps))}{\eps}))
    \end{align*}
    in $\poly(h)\cdot m\cdot n^{\poly(\eps)}$ work and $\poly(h)\cdot n^{\poly(\eps)}$ depth. \footnote{We note that our length slackness parameter $s$ is doubly exponential in $1/\varepsilon$, whereas in \cite{HaeuplerHT24} a theorem on length-constrained expander decompositions for graphs with edge lengths and capacities claimed that $s$ is only single exponential in $1/\varepsilon$. Importantly, this discrepancy does not arise from our focus on vertex lengths and capacities. After discussions with the authors of \cite{HaeuplerHT24}, we have confirmed that their paper contains an error, and that a straightforward correction yields a length slackness that is doubly exponential in $1/\varepsilon$.}
    
    Moreover, given an $h$-length $A$-respecting demand $D$, there is an algorithm that outputs a path representation of a flow routing $D$ in $G-C$ with length $h\cdot s$ 
    and congesiton $n^{\poly(\eps)}/\phi$ 
    in work $(|\supp(D)|+m)\cdot \poly(h)\cdot n^{\poly(\eps)}$ and depth $\poly(h)\cdot n^{\poly(\eps)}$

\end{restatable}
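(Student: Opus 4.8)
The plan is to port the parallel length-constrained \emph{edge}-expander decomposition framework of \cite{HaeuplerHT24} to the vertex-capacitated setting, treating its two algorithmic ingredients---parallel neighborhood covers and $(1+\eps)$-approximate length-constrained maxflow---as black boxes for which we supply vertex versions. For neighborhood covers we use \Cref{thm:neicov} directly (it is already stated for graphs with vertex lengths). For length-constrained maxflow we reduce to the edge-capacitated algorithm of \cite{HaeuplerHS23} via the standard split-graph transformation: replace each vertex $v$ by an edge $(v_{\mathrm{in}},v_{\mathrm{out}})$ of length $\ell_G(v)$ and capacity $U_G(v)$, route all edges incident to $v$ through this gadget, and give the original (now ``connection'') edges length $0$ and a capacity so large that no near-sparse cut ever touches them. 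Under this transformation, vertex-congested, vertex-length-bounded flows in $G$ correspond to edge-congested, edge-length-bounded flows in the split graph; the hop blow-up is a constant factor and, since lengths are positive integers, is absorbed into the $\poly(h)$ and $s$ factors.

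Next I would assemble the core \textbf{vertex cut-or-certify primitive}: given $G,A$, a length scale $h$ and a target conductance $\phi$, either output an $(h,s_0)$-length $\phi$-sparse \emph{vertex} moving cut, or output a flow certifying that $A$ is $(h,s_0)$-length $\Omega(\phi)$-expanding, for a base slack $s_0$. This is where most of the adaptation happens. Following \cite{HaeuplerHT24}, the primitive is built from a cut-matching-style game whose matching player is realized by (single-commodity) length-constrained maxflow on the split graph, and whose cut player produces a vertex moving cut on $G$; because the forbidden connection edges are never cut, a sparse edge cut on the split graph translates back to a sparse vertex cut on $G$, and a successful matching-player routing translates back to an $h$-length $A$-respecting routing with the guarantees of the routing characterization of length-constrained vertex expanders \cite{HaeuplerLSW25}. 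I would check that every structural lemma of \cite{HaeuplerHT24} that speaks of ``length'' and ``separation'' rather than of the cut object itself carries over verbatim, and re-prove the few that manipulate the cut.

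With the primitive in hand, I would run the \cite{HaeuplerHT24} \textbf{recursive decomposition scheme}: repeatedly invoke cut-or-certify, localized by the neighborhood covers of \Cref{thm:neicov}; whenever a sparse cut is found, fold its scaled length increase into the running moving cut $C$ and recurse on the residual node-weighting and the appropriate finer scale. The recursion bottoms out after $O(\log(1/\eps)\cdot\eps^{-1})$ nested levels, and a potential/charging argument bounds the total accumulated size by $|C|\le\kappa\phi|A|$ with $\kappa=n^{O(\eps)}$. The length slack compounds multiplicatively across the levels: each level multiplies it by roughly $s_0^{O(1/\eps)}$, where $s_0$ is itself of order $\exp(O(1/\eps))$ from the cut-matching rounds and the covers' diameter-to-radius ratio, which gives the claimed $s=\exp(\exp(O(\log(1/\eps))/\eps))$. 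Here I would flag, consistently with the footnote to \Cref{thm:vertexLC-ED}, that the corrected accounting yields a \emph{doubly}-exponential (rather than singly-exponential) dependence on $\eps^{-1}$; this is the one genuine deviation from the literal claim of \cite{HaeuplerHT24} and requires re-deriving the slack recurrence carefully. For the ``moreover'' part, the routing of a given $h$-length $A$-respecting demand $D$ in $G-C$ is obtained by stitching together the certified routings produced along the recursion---each level routes (a piece of) $D$ with length $\le h\cdot s$ and congestion $n^{\poly(\eps)}/\phi$, and composing over the $O(\log(1/\eps)\eps^{-1})$ levels multiplies congestion by only $n^{\poly(\eps)}$---and outputting the composed flow in path representation, which over all flow paths costs $(|\supp(D)|+m)\cdot\poly(h)\cdot n^{\poly(\eps)}$ work.

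Finally, the \textbf{work/depth bookkeeping}: there are $n^{\poly(\eps)}$ primitive invocations in total, each dominated by a call to length-constrained maxflow at $\tilde{O}(m)\cdot\poly(h,\eps^{-1})$ work and $\tilde{O}(1)\cdot\poly(h,\eps^{-1})$ depth and by a neighborhood cover at $m\cdot h_{\diam}\cdot n^{O(\eps)}$ work and $h_{\diam}\cdot n^{O(\eps)}$ depth; summing gives $\poly(h)\cdot m\cdot n^{\poly(\eps)}$ work and $\poly(h)\cdot n^{\poly(\eps)}$ depth, and the only source of randomness is inside the neighborhood cover routine. The main obstacle is the third step: faithfully carrying the \cite{HaeuplerHT24} recursion through the vertex-cut setting and re-establishing the compounded length-slack bound with the corrected (doubly-exponential) dependence while keeping $|C|=n^{O(\eps)}\phi|A|$; by comparison the split-graph reduction and the per-level routing composition are routine, though they must be checked not to disturb the length/hop accounting.
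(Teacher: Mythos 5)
Your plan follows essentially the same route as the paper: adapt the \cite{HaeuplerHT24} framework to vertex capacities, realize the matching player of a (constant-hop, batched) cut-matching game by length-constrained maxflow on the standard split graph (this is exactly how the paper proves its matching-strategy lemma from Theorem~16.1 of \cite{HaeuplerHS23}), obtain a cut-or-certify primitive that either returns an approximately demand-size-largest sparse vertex cut or an expansion witness, and iterate until no sparse cut remains, with the doubly-exponential slack arising from compounding $s_{\mathrm{new}} = s\cdot s^{O(1/\eps)}$ over $O(1/\sqrt{\eps})$ rounds of parameter degradation. Two points deserve flagging. First, the step you describe only as ``a potential/charging argument'' is where the real work lives: termination of the iteration (and hence the bound on the number of ApxDLSC calls, which also controls $|C|$) rests on the theorem that the \emph{union of a sequence of sparse vertex moving cuts is itself a sparse moving cut}, proved via demand-matching graphs whose arboricity is bounded by a parallel-greedy-spanner argument; this is precisely the structural lemma that ``manipulates the cut'' and must be re-established for vertex moving cuts, and without it the claim that the demand-size-largest sparse cut shrinks by $n^{-\Omega(\sqrt{\eps})}$ per epoch has no support. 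Second, your description of the ``moreover'' routing as stitching together per-level routings of pieces of $D$ does not match how the argument actually goes (and is hard to make sense of, since $D$ is only supplied after the decomposition is built): the paper routes $D$ once at the end, using the single expansion witness (neighborhood cover, per-cluster routers, and an embedding of the routers into $G-C$) emitted by the final certify call, splitting $D$ among clusters of the cover, routing inside each router, and projecting router edges back to $G-C$ through the embedding. Neither point is a wrong turn so much as an under-specification, but both are the places where executing your plan requires genuinely new proof content rather than black-box reuse.
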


\section{Length-Constrained Flow Shortcut with Length Slack $(1+\eps)$}
\label{sect:Shortcut}

In this section, we will give an algorithm for computing an LC-flow shortcut with length slack $(1+\eps)$ for arbitrarily small $\eps$ on undirected graphs with vertex lengths and congestion. The definition of LC-flow shortcuts is given in \Cref{def:LCFlowShortcut}. 

Before the definition, we need the following concept of \emph{path-mappings between flows}.  
Let $F^{G}$ and $F^{H}$ be two flows (which can be on different graphs $G$ and $H$) routing the same demand. A \emph{path-mapping} from $F^{G}$ to $F^{H}$, denoted by $\pi$, is a mapping from flow paths $P\in\path(F^{G})$ to subflows $F^{H}_{P}$ of $F^{H}$, satisfying that 
\begin{itemize}
\item $\Dem(P)=\Dem(F^H_P)$ (which implies $\vvalue(P) = \vvalue(F^{H}_{P})$) for each $P\in\path(F^{G})$, and
\item $\sum_{P\in\path(F^{G})} F^{H}_{P} = F^{H}$.
\end{itemize}
This path-mapping $\pi$ has length slack $(\lambda,\delta)$ if each $P\in\path(F^{G})$ has $\leng(F^{H}_{P})\leq \lambda\cdot\leng(P) + \delta$. Here we call $\lambda$ and $\delta$ the multiplicative and additive length slack respectively. When the additive length slack $\delta$ is zero, we simply say $\pi$ has length slack $\lambda$. When the multiplicative length slack $\lambda$ is one, we simply say $\pi$ has additive length slack $\delta$.

\begin{definition}
[Length-Constrained Flow Shortcut]
\label{def:LCFlowShortcut}
Given a graph $G=(V,E)$ with vertex length and capacities, we say an edge set $H$ (possibly with endpoints outside $V$) is an \emph{$t$-step LC-flow shortcut} of $G$ with \emph{length slack} $\lambda$ and \emph{congestion slack} $\kappa$ if the following holds.

\begin{itemize}

\item (Forward Mapping) For every feasible flow $F^G$ in $G$, there is a flow $F^H$ in $G\cup H$ routing $\Dem(F^G)$ with congestion 1 and step $t$. Furthermore, there is a path-mapping from $F^G$ to $F^H$ with length slack $\lambda$.

\item (Backward Mapping) For every feasible flow $F^H$ in $G\cup H$ such that $V(\Dem(F^H))\subseteq V(G)$, there is a flow $F^G$ in $G$ routing $\Dem(F^H)$ with congestion $\kappa$. Furthermore, there is a path-mapping from $F^{H}$ to $F^{G}$ with length slack $1$.

\end{itemize}

We define a \emph{backward mapping algorithm} associated with the LC-flow shortcut as follows. The input is an integer $k$ and the edge representation of a $k$-commodity flow $F^{H}$ in $G\cup H$ routing some $k$-commodity demand $D$ s.t. $V(D)\subseteq V$. The output is the edge representation of a $k$-commodity flow $F^{G}$ in $G$ routing $D$ with $\cong(F^{G})\leq \kappa\cdot \cong(F^{H})$ and $\totlen(F^{G}) \leq \totlen(F^{H})$.

\end{definition}

Before we state our result on LC-flow shortcuts, we make some comments on \Cref{def:LCFlowShortcut}. 
\begin{itemize}
\item We talk about path-mappings between flows in both forward and backward mappings because we want the length slacks to be \emph{path-competitive}. Existing a path-mapping with length slack $\lambda$ is a strong and flexible property in the sense that, taking the forward mapping as an example, it will imply claims such as $\leng(F^{H})\leq \lambda\leng(F^{G})$ and $\totlen(F^{H})\leq \lambda \totlen(F^{G})$. We note that our analysis for forward/backward mappings will actually map each flow path separately, so we get path-mappings for free.
\item In the backward mapping algorithm, we talk about $k$-commodity flows because we want better work and depth when $k$ is small. Also, when a flow is given in its edge representation, we lost the information of flow paths, so it no longer makes sense to talk about the length of the flow. However, we can still talk about total length (i.e. cost) because the total length is known given the edge representation. Moreover, we note that total length is exactly what we need when using LC-flow shortcuts in the future sections.
\end{itemize}

\Cref{thm:emulator} states our result on LC-flow shortcuts.

\begin{restatable}{theorem}{thmShortcut}\label{thm:emulator}
Let $G$ be an $n$-vertex $m$-edge undirected graph with vertex lengths and capacities. Let $1/\polylog(n)<\epsilon_{\ell}<1$ and $\Omega(1/\log\log\log n)< \epsilon_{\kappa},\epsilon_{h}<1$ be given parameters.
There is an algorithm that computes an LC-flow shortcut $H$ of $G$ of size $n^{1+\epsk}$ with length slack $1+\epsl$, congestion slack $n^{\epsk}$ and step 
\[
t = \left(1/(\epsl\epsh)\right)^{O\left(1/(\epsk\epsh)^2\right)}\cdot \exp(\exp(\log (1/(\epsk\epsh))\cdot O(1/(\epsk\epsh))))
\]
in work $n^{O(\epsh + \epsk)}m$ and depth $n^{O(\epsh)}$.
    
Moreover, the shortcut is associated with a backward mapping algorithm with work $n^{O(\epsh+\epsk)}(m+nk)$ and depth $n^{O(\epsh)}$.

\end{restatable}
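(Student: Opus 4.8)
The plan is to realize the two-dimensional induction sketched in \Cref{Overview:GeneralShortcuts}, but carried out with \emph{pure additive length slack} and algorithmically, invoking \Cref{thm:vertexLC-ED} and \Cref{lemma:NeighborhoodCover} as the workhorses. First I would reduce the general LC-flow shortcut to an $h$-LC-flow shortcut: construct, for every $h$ that is a power of $(1+\epsl)$ up to the polynomial diameter bound, an $h$-LC-flow shortcut whose forward mapping only needs to handle $h$-length flows, and then take the union over all such $h$. This multiplies the congestion slack by an $O(\log_{1+\epsl} N)$ factor and the length slack by another $(1+\epsl)$ factor, both absorbable by rescaling $\epsl$. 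Because the work/depth of \Cref{thm:vertexLC-ED} (and of the neighborhood-cover subroutine) depend polynomially on the length bound $h$, I cannot take $h$ polynomially large; so the next step is the standard stacking/bootstrapping trick: fix a subpolynomial base length $h_0=n^{\Theta(\epsh)}$, build $h_0$-LC-flow shortcuts, add them to the graph, and repeat $O(1/\epsh)$ times so that the composed shortcut handles lengths up to $n^{\Omega(1)}$. Each stacking layer multiplies length slack, congestion slack and step by the per-layer factors, so I need the per-layer length slack to be $1+O(\epsh\cdot\epsl/\log n)$ — which is why the argument is run with additive slack and $\epsl$ scaled down by the number of layers times $\log n$.

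The core is the construction of a single $h_0$-LC-flow shortcut via the two-dimensional recursion. The outer (``first-dimensional'') recursion is on the number of vertices: I take a neighborhood cover $\cN$ of $G$ via \Cref{lemma:NeighborhoodCover} with covering radius $h_{\cov}=\Theta(\epsl h_0/(\log n\cdot s))$ (where $s$ is the length slack from \Cref{thm:vertexLC-ED}), diameter $h_{\diam}=O(h_{\cov}\log n)$ and width $n^{O(\epsh)}$; split clusters into small ($\le n/\sigma$ vertices, with $\sigma=n^{\Theta(\epsh)}$) and large; recurse inside each small cluster $G[S]$ to get $h_{\cov}$-LC-flow shortcuts $H_S$. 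The inner (``hierarchy'') recursion handles the large clusters: starting from $A_1=A_{\deg}$, I iteratively compute an $(h_{\ed,i},s)$-length $\phi$-vertex-expander decomposition $C_i$ of $A_i$ via \Cref{thm:vertexLC-ED} with $\phi=1/n^{\Theta(\epsh)}$, set $A_{i+1}=A_{C_i}$, recurse to build an $h_i$-LC-flow shortcut for $G-C_i$, and stop after $O(1/\epsh)$ levels since $|A_{i+1}|\le n^{O(\epsh)}\phi|A_i|$ drives $|A_i|$ down by a subpolynomial factor each level. At the top level I add star graphs $H_S$ (center $x_S$, edge $(x_S,v)$ of capacity $A_{\mathrm{top}}(v)$ and length $h_{\diam}s$, using the routing direction of \Cref{thm:flow character}) and inter-cluster edges between star centers, one per target length $h'\in[1,h_0]$ rounded to a power of $(1+\epsl)$, with capacity the max $(h'+2h_{\diam}s)$-length single-commodity $x_{S_1}$–$x_{S_2}$ flow in $G\cup H_{S_1}\cup H_{S_2}$ (computable by the LC-maxflow subroutine \Cref{Sect:ApproxLCMCFlow}, or by the expander-routing part of \Cref{thm:vertexLC-ED}).

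For the forward mapping I take a feasible flow $F^G$ in $G$, look at each flow path $P$, chop it into $\le O(h_0/h_{\cov})=\tilde O(\log n/\epsl)$ subpaths of length $\le h_{\cov}$ each living inside some cluster, and reroute: small-cluster subpaths via the recursive $H_S$ (paying multiplicative slack $\hat\lambda$, step $\hat t$); light large-cluster subpaths (those with $C_i$-weight $\le 1$) via the level-$i$ shortcut $H_i$ for $G-C_i$ (whose $G-C_i$-length is $\le h_i$ by the choice $h_{\ed,i}=\epsl h_i/s$); and for heavy large-cluster subpaths, relocate the flow value from the subpath endpoint $w_L$ to a bounded-capacity portal set $Y_L$ inside the shortest prefix accumulating $C_i$-weight $\ge 1$ (again via $H_i$), lift to the star center, cross an inter-cluster edge of the appropriate length, descend symmetrically on the right. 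The portal-capacity restriction ($\le C_i(e)\cdot\vvalue(P)$ per portal) is exactly what keeps the star edges within capacity so the composed $F^H$ has congestion $1$. The backward mapping is by linearity: every shortcut edge of $H$ unfolds into an explicit feasible flow in $G$ (recursive $H_S$ edges by induction times width $\omega$; star edges by \Cref{thm:flow character} times $\omega$; each inter-cluster edge by its defining flow, itself projected through the expander routing inside the two clusters), and the congestion slacks add up to $\tilde O(\hat\kappa)+\tilde O(\sigma^2/(\phi\epsl))+\sum_i\tilde\kappa_i$; the backward-mapping algorithm just executes these unfoldings in parallel on edge representations, which is where the $k$-dependence and the depth bound come from. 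Solving the recurrences $\lambda_{(\alpha,\beta)}=\max\{\lambda_{(\alpha-1,r)},(1+O(\epsh))\lambda_{(\alpha,\beta-1)},1\}+O(\text{additive})$, $\kappa_{(\alpha,\beta)}=\tilde O(\kappa_{(\alpha-1,r)})+\kappa_{(\alpha,\beta-1)}+\tilde O(\sigma^2/(\phi\epsl))$, $t_{(\alpha,\beta)}=\tilde O(\log n/\epsl)\cdot t_{(\alpha,\beta-1)}$ over $\alpha\le z=\log_\sigma n=O(1/\epsh)$ and $\beta\le r=O(1/\epsh)$ layers, plus the $O(1/\epsh)$ stacking layers, yields the stated $n^{\epsk}$ congestion slack, $1+\epsl$ length slack, and the quoted step bound $t=(1/(\epsl\epsh))^{O(1/(\epsk\epsh)^2)}\cdot\exp(\exp(\log(1/(\epsk\epsh))\cdot O(1/(\epsk\epsh))))$ (the double-exponential coming from the length slack $s$ of \Cref{thm:vertexLC-ED}), with work $n^{O(\epsh+\epsk)}m$ and depth $n^{O(\epsh)}$ by multiplying the per-level costs.

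\textbf{Main obstacle.} The delicate part is the heavy-subpath relocation in the forward mapping: one must simultaneously (i) choose portals with enough $A_{i}$-weight to absorb the uploaded flow, (ii) respect the per-portal capacity cap so that summing over all flow paths never overloads a star edge, and (iii) keep every relocation subpath $P[w_L,y_L]$ within the length budget $h_i$ in $G-C_i$ — which forces the precise interlocking of the parameters $h_{\cov},h_{\diam},h_{\ed,i},h_i$ and the ``shortest prefix'' choice guaranteeing $C_i$-weight $\le 1$ on the relocation segment. Getting these inequalities to close with only additive slack $O(\epsl h_0/\log n)$ per outer level (so that $z\cdot r$ levels and the $\log n$ rescaling still land at $1+\epsl$), while the expander-decomposition length slack $s$ is a fixed subpolynomial quantity that inflates $h_{\diam}$, is the crux of the whole argument; everything else is bookkeeping over the two recursions and the stacking.
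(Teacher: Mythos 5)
Your proposal follows essentially the same route as the paper: a base construction for subpolynomial length bounds via a two-dimensional recursion (neighborhood covers with small/large clusters on one axis, the length-constrained vertex-expander-decomposition hierarchy on the other, with stars and length-bucketed inter-cluster edges on large clusters), a forward mapping that splits paths into trivial/small-cluster/light/heavy-jumping subpaths with portal-based relocation under per-portal capacity caps, a union over geometric length scales to convert additive into multiplicative slack, and $O(1/\epsh)$ bootstrapping layers to reach polynomial lengths. Apart from minor parameter bookkeeping (the paper ties the inner recursion parameters $\sigma,\phi,\omega$ to $\epsk$ rather than $\epsh$, and uses a $0.5$-approximate LC single-commodity maxflow, doubled, to set inter-cluster capacities), your plan matches the paper's proof and correctly identifies the heavy-subpath relocation as the crux.
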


We make some comments on the work bound of the backward mapping algorithm. Recall that it takes the edge representation of a $k$-commodity flow $F^{H}$ in $G\cup H$ as input, which means the input size is already $mk$. However, we can achieve $m+nk$ time because we assume the edge representation of $F^{H}$ is already stored in the memory, and our algorithm just needs to update part of it. Roughly speaking, we just perform backward mapping for the restriction of $F^{H}$ on $H$, and $H$ only has size roughly $n$. We point out that a roughly $mk$ work bound is already sufficient when we apply this theorem later.

Also, the restrictions $\epsl>1/\polylog(n)$ and $\epsk,\epsh>\Omega(1/\log\log\log n)$ are mainly added for convenience (so they are not tight). That is, this enable us to hide some terms purely depending on $\epsl,\epsk,\epsh$ into $n^{O(\epsk)}$ or $n^{O(\epsh)}$. For example, we can hide $\exp(\exp(1/\epsk))$ into $n^{\epsk}$.

By choosing parameters $\epsk = \epsh = \Theta(1/\log\log\log n)$, we immediate obtain the following cleaner version of \Cref{thm:emulator}.

\begin{corollary}
\label{coro:emulator}
Let $G$ be an $n$-vertex $m$-edge undirected graph with vertex lengths and capacities. Let $1/\polylog(n)<\epsilon<1$ be given parameters.
There is an algorithm that computes an LC-flow shortcut $H$ of $G$ of size $\hat{O}(n)$ with length slack $1+\eps$, congestion slack $\hat{O}(1)$ and step $\hat{O}(1)$
in work $\hat{O}(m)$ and depth $\hat{O}(1)$.
Moreover, the shortcut is associated with a backward mapping algorithm with work $\hat{O}(m+nk)$ and depth $\hat{O}(1)$.
\end{corollary}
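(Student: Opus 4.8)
\textbf{Proof proposal for \Cref{coro:emulator}.}
The plan is to derive \Cref{coro:emulator} directly from \Cref{thm:emulator} by instantiating the three free parameters so that all the subpolynomial-but-super-polylog overheads collapse into the $\hat O(\cdot)$ notation. Concretely, I would set $\epsilon_\ell := \epsilon$ (the parameter handed to us, which already satisfies $1/\polylog(n) < \epsilon < 1$ as required by \Cref{thm:emulator}) and choose a single slowly-vanishing parameter $g := \Theta(1/\log\log\log n)$, putting $\epsilon_\kappa := g$ and $\epsilon_h := g$. We must first check admissibility: \Cref{thm:emulator} requires $\Omega(1/\log\log\log n) < \epsilon_\kappa,\epsilon_h < 1$, which holds for $g$ with an appropriate constant, and $1/\polylog(n) < \epsilon_\ell < 1$, which is our hypothesis. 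So \Cref{thm:emulator} applies and produces an LC-flow shortcut $H$ with all the claimed quantitative guarantees for these parameter choices.

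Next I would simplify each of the four output quantities under the substitution $\epsilon_\kappa = \epsilon_h = g = \Theta(1/\log\log\log n)$. \textbf{Size and congestion slack:} the bound $n^{1+\epsilon_\kappa}$ (resp.\ $n^{\epsilon_\kappa}$) becomes $n^{1+\Theta(1/\log\log\log n)} = n^{1+o(1)} = \hat O(n)$ (resp.\ $\hat O(1)$). \textbf{Length slack:} this is $1+\epsilon_\ell = 1+\epsilon$ verbatim, with no dependence on $g$. \textbf{Step bound:} we have
\[
t = \left(\tfrac{1}{\epsilon_\ell\epsilon_h}\right)^{O(1/(\epsilon_\kappa\epsilon_h)^2)}\cdot \exp\!\exp\!\bigl(\log(1/(\epsilon_\kappa\epsilon_h))\cdot O(1/(\epsilon_\kappa\epsilon_h))\bigr).
\]
With $\epsilon_\kappa\epsilon_h = g^2 = \Theta(1/(\log\log\log n)^2)$ and $1/\epsilon_\ell = \polylog(n)$, the first factor is $(\polylog(n))^{(\log\log\log n)^{O(1)}} = n^{o(1)}$, since $(\log\log\log n)^{O(1)}\cdot \log\log n = o(\log n)$; and the second factor is $\exp\exp(O((\log\log\log n)^{2}\cdot\log\log\log n)) = \exp\exp(o(\log\log n)) = n^{o(1)}$ as well. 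Hence $t = n^{o(1)} = \hat O(1)$. \textbf{Work and depth:} the construction work $n^{O(\epsilon_h+\epsilon_\kappa)}m = n^{O(g)}m = \hat O(m)$, the construction depth $n^{O(\epsilon_h)} = n^{O(g)} = \hat O(1)$, the backward-mapping work $n^{O(\epsilon_h+\epsilon_\kappa)}(m+nk) = \hat O(m+nk)$, and the backward-mapping depth $n^{O(\epsilon_h)} = \hat O(1)$.

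Putting these together, the object $H$ produced by \Cref{thm:emulator} with the chosen parameters is an LC-flow shortcut of $G$ meeting exactly the specification of \Cref{coro:emulator}: size $\hat O(n)$, length slack $1+\epsilon$, congestion slack $\hat O(1)$, step $\hat O(1)$, computed in $\hat O(m)$ work and $\hat O(1)$ depth, with an associated backward-mapping algorithm of $\hat O(m+nk)$ work and $\hat O(1)$ depth. I do not expect a genuine obstacle here: this is a parameter-selection corollary, and the only thing requiring any care is the step-bound estimate, where one must confirm that a $\polylog(n)$ base raised to a $\poly(\log\log\log n)$ exponent, and a double exponential of $\poly(\log\log\log n)$, are both $n^{o(1)}$ — i.e.\ that $g = \Theta(1/\log\log\log n)$ decays slowly enough that $n^{O(g)}$ and $n^{o(1)}$ still absorb everything, yet fast enough that the $\exp\exp$ term stays subpolynomial. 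Choosing the hidden constant in $g$ small enough makes all these simultaneously valid, which is why \Cref{thm:emulator} was stated with the lower bound $\epsilon_\kappa,\epsilon_h > \Omega(1/\log\log\log n)$ in the first place.
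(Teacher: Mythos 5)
Your proposal is correct and is exactly the paper's own derivation: the paper obtains \Cref{coro:emulator} from \Cref{thm:emulator} by setting $\epsilon_\kappa=\epsilon_h=\Theta(1/\log\log\log n)$, and your verification that the step bound and the $n^{O(\epsilon_\kappa)}$, $n^{O(\epsilon_h)}$ factors all collapse to $n^{o(1)}$ matches what the theorem statement was designed to guarantee.
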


\subsection{An Algorithm for Small Length Bound}\label{sec:smalllengthemulator}

In this section, we will provide an algorithm for efficiently constructing an LC-flow shortcut only for small enough (subpolynomial) length bound $h$. The exact trade-off between parameters is described in \Cref{lem:smallhemulator}.

For some technical reason, we need to strengthen the definition of shortcut \Cref{def:LCFlowShortcut}. The difference is that we require the forward mapping to preserve the congestion of the original vertices in $V$, which will be exploited when showing the congestion slack of the forward mapping. %

\begin{definition}
[Path-mapping $V$-Preserving LC-Flow Shortcut]
\label{def:hLCFlowShortcut}
Given a graph $G=(V,E)$ with vertex length and capacities, we say an edge set $H$ (possibly with endpoints outside $V$) is a \emph{path-mapping $V$-preserving $h$-length $t$-step LC-flow shortcut} of $G$ with \emph{additive length slack} $\delta$ and \emph{congestion slack} $\kappa$ if 

\begin{itemize}

\item (Forward Mapping) for every flow $F^G$ in $G$ with congestion $1$ and length $h$, there is a flow $F^H$ with congestion $1$ and step $t$ in $G\cup H$ with $\Dem(F^G)=\Dem(F^H)$ and for every $v\in V$ we have $F^H(v)\le F^G(v)$. Furthermore, there is a path-mapping from $F^G$ to $F^H$ with additive length slack $\delta$.

\item (Backward Mapping) for every flow $F^H$ in $G\cup H$ with congestion 1 such that $V(\Dem(F^H))\subseteq V(G)$, there is a flow $F^G$ in $G$ routing $\Dem(F^H)$ with congestion $\kappa$.
Furthermore, there is a path-mapping from $F^{H}$ to $F^{G}$ with length slack $1$.

\end{itemize}

\end{definition}

\begin{lemma}\label{lem:smallhemulator}
Let $G$ be an $n$-vertex $m$-edge undirected graph with vertex lengths and capacities. Let $1/\polylog(n)<\epsl<1$, $1/\Omega(\log\log\log n)<\epsk<1$, and $h\geq 1$ be given parameters. There is an algorithm $\Emulator(G,\epsl,\epsk,h)$ computes a path-mapping $V$-preserving $h$-length $t$-step LC-flow shortcut of $G$ of size $n^{1+O(\epsk)}$ with additive length slack $\epsl\cdot h$ and congestion slack $n^{\epsk}$ where
\[t=\left(1/\epsl\right)^{O\left(1/\epsk^2\right)}\cdot \exp(\exp(\log (1/\epsk)\cdot O(1/\epsk))).\]
in $\poly(h)\cdot n^{\poly(\epsk)}$ depth and $\poly(h)\cdot m\cdot n^{\poly(\epsk)}$ work. 

Moreover, the shortcut is associated with a backward mapping algorithm with $\poly(h)\cdot n^{\poly(\epsk)}$ depth and $\poly(h)\cdot (m+nk)\cdot n^{\poly(\epsk)}$ work. %
\end{lemma}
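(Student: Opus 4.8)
The plan is to make rigorous the two-dimensional recursion sketched in \Cref{Overview:GeneralShortcuts}, adapted to the vertex-capacitated setting and with purely additive length slack. At the top level, I set up the \emph{LC-expander hierarchy} for the degree node-weighting $A_\deg$ with parameters $s=\exp(\exp(O(\log(1/\epsk))/\epsk))$ and $\phi = n^{-\Theta(\epsk)}$ taken from \Cref{thm:vertexLC-ED}: repeatedly call $\textsc{VertexLC-ED}$ to obtain a cut $C_i$ on the current level, pass to the cut node-weighting $A_{C_i}$ of size $\le n^{O(\epsk)}\phi\cdot|A_i|$, and stop after $r = O(\log n / (\epsk\log n)) = O(1/\epsk)$ levels once the node-weighting becomes trivial. (Each call has $\poly(h)\cdot n^{\poly(\epsk)}$ work/depth, and there are $O(1/\epsk)$ of them.) This is the ``outer'' dimension $\beta\in[1,r]$. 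The ``inner'' dimension $\alpha$ is the recursion on graph size: at a level of the hierarchy I build a neighborhood cover $\cN$ via \Cref{lemma:NeighborhoodCover} with covering radius $h_{\cov} = \epsl h / (s\cdot O(1/\epsl))$-type scaling, diameter $h_{\diam} = \beta' h_{\cov}$ for a subpolynomial $\beta'$, split clusters into \emph{small} ($\le n/\sigma$ vertices, $\sigma = n^{\Theta(\epsk)}$) and \emph{large}, recurse on $G[S]$ for small $S$ (with a smaller length bound $h_{\cov}$), attach \emph{star graphs} $H_S$ centered at a new vertex $x_S$ for large $S$ with edge $(x_S,v)$ of length $h_{\diam}s$ and capacity $A_\beta(v)$, and add \emph{inter-cluster edges} between every pair of star centers of large clusters for each length scale $h'\in\{1,(1+\epsl),\dots,h\}$ with capacity equal to the max $(h'+2h_{\diam}s)$-length single-commodity flow between the centers. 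The inner recursion has depth $z = \log_\sigma n = O(1/\epsk)$.

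The construction loop thus nests: for each hierarchy level $\beta$ from the top down, and within it recursing on graph size, I also (Step 0) recursively build an $h_1$-LC-flow shortcut $H_{\beta-1}$ for $G - C_{\beta-1}$ with the slightly larger length bound $h_1 = (1+O(\epsl))h_{\cov}$, reusing it to handle ``light'' large-cluster subpaths (those on which $C_{\beta-1}$ has total value $\le 1$). The forward mapping is proved by the argument in the overview: given a feasible $h$-length flow $F^G$, cut each flow path $P$ into $O(h/h_{\cov}) = \tilde O(1/\epsl)$ subpaths of length $\le h_{\cov}$, each contained in some cluster; reroute small-cluster subpaths by the inductive shortcut of $G[S]$; reroute light large-cluster subpaths through $H_{\beta-1}$; and for heavy large-cluster subpaths, use the ``shortest prefix with $C$-value $\ge 1$'' to pick portals $Y_L$ (with the per-portal cap $C_{\beta-1}(e)\cdot\vvalue(P)$ ensuring star capacities are respected), relocate flow from $w_L$ to $Y_L$ via $H_{\beta-1}$, then cross via $(Y_L \to x_{S_L})\circ(x_{S_L},x_{S_R})\circ(x_{S_R}\to Y_R)$ using a star edge and the inter-cluster edge of the right length scale. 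The $V$-preservation property ($F^H(v)\le F^G(v)$ for $v\in V$) is maintained because every rerouting step only reroutes a subpath through newly-created center vertices and/or through an inductively $V$-preserving sub-shortcut, never increasing load on original vertices beyond what $P$ already put there. Tracking the additive length slack: one inner step adds $O(h_{\diam}s) = O(\epsl h)$ (from two star edges at length $h_{\diam}s$ plus the $(1+\epsl)$ factor on the inter-cluster length, merged into additive via $\leng(P[w_L,w_R])\le h$), and the recursion on small clusters contributes additive slack scaled down by $h_{\cov}/h$; summing over the $z$ inner levels and $r$ outer levels gives total additive slack $O((z+r)\epsl h) = O(\epsl h / \epsk)$, which after rescaling $\epsl$ by a $\poly(1/\epsk)$ factor (absorbed since $\epsk > 1/\Omega(\log\log\log n)$) yields the claimed $\epsl h$.

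The backward mapping algorithm is the standard ``unfold the gadgets'' procedure: given the edge representation of a $k$-commodity flow $F^H$, process star edges by routing their flow through the expander $G[\supp(A_\beta)]$ inside the cluster using the routing algorithm of \Cref{thm:vertexLC-ED} (congestion $n^{\poly(\epsk)}/\phi$, dilation $h_{\diam}s$, matching the star edge length), process each inter-cluster edge by replacing it with its defining underlying $(h'+2h_{\diam}s)$-length flow in $G\cup H_{S_1}\cup H_{S_2}$ and recursing, and process small-cluster sub-shortcuts by the inductive backward algorithm; the path-mapping with length slack $1$ is immediate since every gadget edge was assigned a length at least that of its realizing flow. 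The congestion accounting is the overview's recurrence: $\kappa_{(\alpha,\beta)} = \tilde O(\kappa_{(\alpha-1,r)}) + \kappa_{(\alpha,\beta-1)} + \tilde O(\sigma^2/(\phi\epsl))$ with the neighborhood-cover width $\omega = n^{O(\epsk)}$ entering the $\tilde O(\cdot)$'s; since there are $z = O(1/\epsk)$ inner and $r = O(1/\epsk)$ outer levels and each level multiplies by $n^{O(\epsk)}$ and adds $n^{O(\epsk)}/\phi = n^{O(\epsk)}$, the final $\kappa = n^{O(\epsk)}$ and (after the $\epsl$-rescaling) size $n^{1+O(\epsk)}$, step $t = (1/\epsl)^{O(1/\epsk^2)}\cdot\exp(\exp(\log(1/\epsk)\cdot O(1/\epsk)))$ where the first factor comes from $(\tilde O(1/\epsl))^{z\cdot r}$ step blow-up and the second from the length slack $s$ of $\textsc{VertexLC-ED}$ propagating into the inter-cluster length scales. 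Work and depth are $\poly(h)\cdot n^{\poly(\epsk)}$ since each of the $O(1/\epsk^2)$ recursion nodes invokes $\textsc{VertexLC-ED}$, neighborhood cover, and single-commodity maxflow subroutines all of that cost, and the total number of nodes is $n^{o(1)}$.

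\textbf{Main obstacle.} The delicate point—and the one I expect to consume most of the proof—is the heavy-large-cluster case of the forward mapping: one must simultaneously (i) choose portals so that the relocation paths $P[w_L,y_L]$ have $G-C_{\beta-1}$-length at most $h_1$ (using the ``shortest prefix'' property to bound $C_{\beta-1}(P[w_L,y_L])\le 1$), (ii) respect the per-portal capacity cap so that, summed over all flow paths $P$ passing through a cluster, no star edge $(x_S,v)$ is overloaded beyond $A_\beta(v)$—this requires a charging argument tying the cut node-weighting $A_\beta = A_{C_{\beta-1}}$ to the $C_{\beta-1}$-values along subpaths—and (iii) ensure the inter-cluster-edge capacities actually suffice, which in turn needs the forward-mapped flow restricted to each cluster pair to be realizable as a single $(h'+2h_{\diam}s)$-length flow of the right value. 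Getting the constants in $h_1 = (1+2\epsl)h_{\cov}$, $h_{\ed,i}$, and the length-scale rounding to line up so that all three conditions hold simultaneously, while keeping the additive slack per level at $O(\epsl h)$, is the technical heart; the rest is bookkeeping of the two recursions.
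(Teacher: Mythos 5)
Your proposal is correct and follows essentially the same approach as the paper: the same two-dimensional recursion (expander-hierarchy iterations via \Cref{thm:vertexLC-ED} nested with graph-size recursion on small clusters of a neighborhood cover), the same gadgets (stars on large clusters, inter-cluster edges capacitated by approximate LC maxflows), the same four-way classification of subpaths with the shortest-prefix portal selection for heavy subpaths, and the same backward-mapping/congestion recurrences. The only cosmetic deviations are that the paper subdivides each star edge with an intermediate vertex $x_{S,v}$ (to encode capacity/length on a vertex rather than an edge) and rescales $\epsl$ by a $50^{O(r)}$ rather than $\poly(1/\epsk)$ factor, both of which are absorbed into the stated step bound.
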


\paragraph{Building Blocks.} There are three important building blocks for our algorithm: length-constrained vertex expander decomposition \Cref{thm:vertexLC-ED}, neighborhood cover \Cref{lemma:NeighborhoodCover}, and approximate length-constrained single-commodity max flows. We state them as below for convenience.

\newtheorem*{restateED}{\textbf{Theorem \ref{thm:vertexLC-ED}}}
\begin{restateED}
    There is a randomized algorithm $\textsc{VertexLC-ED}(G,A,h,\phi,\eps)$ that takes inputs an undirected graph $G$ with vertex capacities and lengths, a node-weighting $A$, length bound $h\ge 1$, conductance $\phi\in(0,1)$ and a parameter $\eps$ satisfying $0< \eps<1$, outputs a $(h,s)$-length $\phi$-expander decomposition $C$ for $G,A$ with cut slackness $\kappa$ where
    \begin{align*}
    \kappa = n^{O(\eps)} \qquad \qquad s = \exp(\exp(\frac{O(\log(1/\eps))}{\eps}))
    \end{align*}
    in $\poly(h)\cdot m\cdot n^{\poly(\eps)}$ work and $\poly(h)\cdot n^{\poly(\eps)}$ depth.
    
    Moreover, given an $h$-length $A$-respecting demand $D$, there is an algorithm that outputs a path representation of a flow routing $D$ in $G-C$ with length $h\cdot s$ 
    and congesiton $n^{\poly(\eps)}/\phi$ 
    in work $(|\supp(D)|+m)\cdot \poly(h)\cdot n^{\poly(\eps)}$ and depth $\poly(h)\cdot n^{\poly(\eps)}$
\end{restateED}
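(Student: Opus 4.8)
The plan is to define $\Emulator(G,\epsl,\epsk,h)$ by a \emph{two-dimensional} recursion following the blueprint sketched in \Cref{Overview:ExpanderShortcuts} and \Cref{Overview:GeneralShortcuts}: the outer dimension recurses on the vertex count (Cohen's hopset recursion), and the inner dimension walks down an LC-vertex-expander hierarchy of the current graph. Fix once and for all, with respect to the original vertex count $n$, an expander-decomposition accuracy $\Theta(\epsk)$, a conductance $\phi=n^{-\Theta(\epsk)}$, a neighborhood-cover diameter-to-covering ratio $\Theta(1/\epsk)$ (so that \Cref{lemma:NeighborhoodCover} yields width $\omega=n^{O(\epsk)}$), and a small-cluster size threshold $n/\sigma$ with $\sigma=n^{\Theta(\epsk)}$; write $s=\exp(\exp(O(\log(1/\epsk))/\epsk))$ for the length slack of \Cref{thm:vertexLC-ED} at accuracy $\Theta(\epsk)$. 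With these choices the hierarchy has $r=O(1/\epsk)$ levels (each level shrinks the node-weighting by a factor $n^{\Theta(\epsk)}$ since $|A_{i+1}|=2|C_i|\le n^{O(\epsk)}\phi|A_i|$), and the outer recursion reaches graphs of size below a fixed constant after $z=\log_\sigma n=O(1/\epsk)$ steps (the base case returning $H=\emptyset$, which is correct since such a graph routes everything in $O(1)\le t$ steps). Hence the slack and resource recurrences compose only $O(1/\epsk^2)$ times.

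Given the current graph, the per-level construction is exactly as in the overview. Starting from $A_1:=A_{\deg}$, iterate \Cref{thm:vertexLC-ED} to obtain $C_i=\textsc{VertexLC-ED}(G,A_i,h_{\ed,i},\phi,\Theta(\epsk))$ with $A_{i+1}$ the cut node-weighting of $C_i$, until $A_r$ is directly $(h_{\ed,r},s)$-length $\phi$-expanding. Then process $i=r,\dots,1$, working in $G$ with $C_1,\dots,C_{i-1}$ applied, with a neighborhood cover $\cN_i$ of suitably chosen covering radius $h_{\cov,i}$ and diameter $h_{\diam,i}=\Theta(\epsk^{-1})h_{\cov,i}$: recursively call $\Emulator(G[S],\epsl,\epsk,h_{\cov,i})$ on each small cluster $S$ (the outer recursion); attach to each large cluster $S$ a star at a fresh center $x_S$ whose edge $(x_S,v)$ has capacity $A_i(v)$ and length $\Theta(h_{\diam,i}s)$; and for each ordered pair of large centers and each $h'$ a power of $1+\epsl$ up to $h$, add an inter-cluster edge of length $h'+\Theta(h_{\diam,i}s)$ with capacity equal to the value of an approximately maximum $(h'+\Theta(h_{\diam,i}s))$-length single-commodity flow between the two centers in the current graph together with the stars and sub-shortcuts already built (computed by the approximate length-constrained single-commodity maxflow subroutine). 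Output $H$ as the union of all stars, inter-cluster edges, sub-shortcuts, and the level-$1$ shortcut of $G-C_1$.

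For the \textbf{forward mapping}, given a feasible $h$-length $F^G$ we map each flow path $P$ separately: cut $P$ into $\Theta(h/h_{\cov,r})$ subpaths of length $\le h_{\cov}$, each lying in one cluster; small-cluster subpaths are rerouted by the recursive sub-shortcuts; large-cluster subpaths at level $i$ split into \emph{light} ones ($C_i$-mass $\le1$), rerouted through the level-$(i-1)$ shortcut of $G-C_i$ since their $G-C_i$-length is $\le h_i$, and \emph{heavy} ones, whose stretch between the leftmost and rightmost heavy subpath is replaced by one ``jump'': portals are taken along the shortest prefix/suffix of $C_i$-mass $1$, the $\vvalue(P)$ units are relocated from the subpath endpoints to the portals via the level-$(i-1)$ shortcut, uploaded to the star centers along $(x_S,\text{portal})$ edges (never exceeding $A_i$ at a portal, using the per-portal cap $C_i(e)\cdot\vvalue(P)$), and crossed over one inter-cluster edge — whose capacity suffices because the corresponding restriction of $F^G$ is a feasible, length-bounded single-commodity flow between the centers. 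This costs per level a $(1+\epsl)$ multiplicative and $\Theta(h_{\diam,i}s)$ additive length slack and multiplies the step bound by $\Theta(h/h_{\cov})$; only original vertices already on $P$ get used, so $F^H(v)\le F^G(v)$ and the shortcut is $V$-preserving. For the \textbf{backward mapping}, each synthetic edge is backed by a concrete flow of $G$ — star edges by the expander routing of \Cref{thm:vertexLC-ED} for $A_i$, inter-cluster edges by their maxflow witnesses projected through the stars and sub-shortcuts — and summing over the $\omega=n^{O(\epsk)}$ clusterings, the $(\omega\sigma)^2 O(\log_{1+\epsl}h)$ inter-cluster edges, the $O(\log n/\phi)=n^{O(\epsk)}$ expander congestion, and the recursive slacks, each level adds $n^{O(\epsk)}$ to the congestion slack (multiplicatively across the outer recursion, additively across the hierarchy), with length slack $1$ since every witness is no longer than the edge it replaces.

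Finally, the quantities above satisfy recurrences (indexed by outer level $\le z$ and hierarchy level $\le r$) of the form: additive length slack gains $\Theta(h_{\diam}s)$ per step and takes a $\max$ of the two predecessors (up to a $1+O(\epsl)$ multiplicative factor absorbed by rescaling $\epsl$ down by the $O(1/\epsk^2)$ composition count); congestion slack is $n^{O(\epsk)}$ times the outer predecessor plus the hierarchy predecessor plus $n^{O(\epsk)}$; step is $\Theta(h/h_{\cov})$ times the larger predecessor; and size/work/depth each pick up $n^{O(\epsk)}$ and $\poly(h)$ per node. Since the recursion tree has $O(1/\epsk^2)$ composition steps and $h/h_{\cov}=\Theta(s/(\epsl\epsk))$, these solve to additive length slack $\epsl h$, congestion slack $n^{O(\epsk)}$, step $t=(1/\epsl)^{O(1/\epsk^2)}\cdot\exp(\exp(\log(1/\epsk)\cdot O(1/\epsk)))$ (the $1/\epsk$ and $s$ factors in the step multiplier, raised to the $O(1/\epsk^2)$ power, collapse into the doubly-exponential term), size $n^{1+O(\epsk)}$, and work / depth $\poly(h)\cdot m\cdot n^{\poly(\epsk)}$ / $\poly(h)\cdot n^{\poly(\epsk)}$. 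The backward-mapping \emph{algorithm} on $k$-commodity edge representations is obtained by replaying the same projections on the stored witness flows and the flow oracle of \Cref{thm:vertexLC-ED}; since it only rewrites the $n^{1+O(\epsk)}$-size shortcut part of the representation it runs in $\poly(h)\cdot(m+nk)\cdot n^{\poly(\epsk)}$ work and $\poly(h)\cdot n^{\poly(\epsk)}$ depth. I expect the main obstacle to be the forward-mapping ``jump'': it must avoid overloading any star or inter-cluster edge (forcing the shortest-unit-mass prefix choice and the per-portal capacity caps), incur only $\Theta(h_{\diam}s)$ \emph{additive} rather than multiplicative length per level, and remain $V$-preserving — all at once — and then one must verify that the three intertwined slack recurrences really do stay within the stated bounds under the two-dimensional composition.
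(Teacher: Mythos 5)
Your proposal does not prove the stated theorem. The statement you were asked to prove is \Cref{thm:vertexLC-ED} itself: the existence of the algorithm $\textsc{VertexLC-ED}$ that \emph{computes} a length-constrained vertex expander decomposition (together with its routing subroutine). What you have written instead is a construction of the LC-flow shortcut $\Emulator(G,\epsl,\epsk,h)$ — essentially the content of \Cref{lem:smallhemulator} and \Cref{thm:emulator} — and your very first construction step is ``iterate \Cref{thm:vertexLC-ED} to obtain $C_i=\textsc{VertexLC-ED}(\dots)$.'' That is, you invoke as a black box precisely the theorem you are supposed to establish, so the argument is circular with respect to the target statement and contains no proof of it.

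A proof of \Cref{thm:vertexLC-ED} looks entirely different. The paper's route is: (i) define demand-size-largest $(h,s)$-length $\phi$-sparse vertex moving cuts and show that the union of a sequence of sparse cuts is itself sparse (via the demand matching graph, its $s^3\log^3 n\cdot n^{O(1/s)}$ arboricity bound coming from parallel greedy spanners, and the matching-dispersed demand); (ii) give an algorithm $\textsc{ApxDLSC}$ that either finds an approximately demand-size-largest sparse cut or certifies expansion with a witness, by running a constant-round cut-matching game on each cluster of a well-separated neighborhood cover, with the cut strategy of \cite{HaeuplerHT24} and a matching strategy implemented by length-constrained vertex-capacitated maxflow/cut pairs from \cite{HaeuplerHS23}; (iii) repeatedly cut, organized into $O(1/\sqrt{\eps})$ epochs with degrading parameters $(h_{\epo},s_{\epo},\phi_{\epo})$, using the relation $\qLDSCS\le \mathrm{poly}(s,\log n)\cdot n^{O(1/s)}\cdot\qLDSC$ to bound the number of cuts per epoch and to drive the remaining demand-size below $\min_v A(v)$ — this parameter degradation is exactly where the doubly-exponential length slack $s=\exp(\exp(O(\log(1/\eps))/\eps))$ arises; and (iv) derive the routing algorithm from the expansion witness (neighborhood cover, per-cluster routers, and their length-bounded low-congestion embedding into $G-C$). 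None of these ingredients appear in your proposal, so as a proof of the stated theorem it has to be redone from scratch along these (or comparable) lines.
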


Another important building block is neighborhood cover \Cref{lemma:NeighborhoodCover}. 

\neicov*

The last building block is a $0.5$-approximate length-constrained single-commodity max flow algorithm. We present here a simplified yet sufficient version in \Cref{lemma:LCstFlows}, and refer to \cite{HaeuplerHT24} or \Cref{thm:ApproxLCMCMF} for detailed and generalized versions. We note that in the context of \Cref{sect:PrelimFlow}, an $h$-length single-commodity $s$-$t$ flow $F_{s,t}$ is just an $h$-length flow \emph{partially} routing the single-commodity demand $D$ where $D(s) = \infty$, $D(t) = -\infty$, and $D(v) = 0$ for other vertices $v$. It is feasible if it has congestion $1$. It is $0.5$-approximately maximum if $\vvalue(F_{s,t})$ is at least half of the maximum value of all such flows.

\begin{lemma}[See \cite{HaeuplerHT24} or \Cref{thm:ApproxLCMCMF}]
\label{lemma:LCstFlows}
Let $G$ be an undirected graph with vertex lengths and capacities. Given a source vertex $s$, a sink vertex $t$, and a length bound $h$, there is an algorithm that computes a feasible $h$-length single-commodity $s$-$t$ flow $F_{s,t}$ which is $0.5$-approximately maximum. The flow $F_{s,t}$ is given in its path representation, which contains $\tilde{O}(m\cdot\poly(h))$ flow paths. The work and depth are $\tilde{O}(m\cdot \poly(h))$ and $\tilde{O}(\poly(h))$ respectively.
\end{lemma}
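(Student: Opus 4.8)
The plan is to derive \Cref{lemma:LCstFlows} from the single-commodity, constant-accuracy case of our approximate length-constrained multi-commodity maxflow algorithm \Cref{thm:ApproxLCMCMF} (equivalently, one may cite the length-constrained maxflow machinery underlying \cite{HaeuplerHT24}, which already handles this regime). The one mismatch to bridge is that \Cref{thm:ApproxLCMCMF} is stated for \emph{directed} graphs with \emph{edge} lengths and capacities, whereas here $G$ is undirected with \emph{vertex} lengths and capacities; I would handle this with the standard vertex-splitting gadget. Concretely, build a directed graph $\hat G$ by replacing each vertex $v$ with $v_{\iin},v_{\out}$ joined by a directed edge $(v_{\iin},v_{\out})$ of capacity $U_G(v)$ and length $\ell_G(v)$, and replacing each undirected edge $\{u,v\}$ by the two directed edges $(u_{\out},v_{\iin})$ and $(v_{\out},u_{\iin})$, each of infinite capacity and length $0$; take $s_{\iin}$ as the source and $t_{\out}$ as the sink. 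Then $|V(\hat G)|=2n$, $|E(\hat G)|=n+2m=O(m)$, all edge lengths are nonnegative integers, and contracting the gadget edges gives a bijection between simple directed $s_{\iin}$-$t_{\out}$ paths in $\hat G$ and simple $s$-$t$ paths in $G$ that preserves length \emph{exactly} and preserves value; moreover the congestion of any flow in $\hat G$ on an edge $(v_{\iin},v_{\out})$ equals the vertex congestion $F(v)/U_G(v)$ of the corresponding flow in $G$. Hence feasible $h$-length $s$-$t$ flows in $G$ and feasible $h$-length $s_{\iin}$-$t_{\out}$ flows in $\hat G$ are in value-preserving bijection, and in particular the two maximum values coincide.

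Given this, I would invoke \Cref{thm:ApproxLCMCMF} on $\hat G$ with the single-commodity demand $D$ that sets $D(s_{\iin},t_{\out})=\infty$ (so $k=1$), length bound $h$, and accuracy $\epsilon=1/2$. This returns a feasible $h$-length $s_{\iin}$-$t_{\out}$ flow $\hat F$ whose value is at least $\tfrac{1}{1+1/2}>\tfrac12$ of the maximum, using $\tO{|E(\hat G)|}\cdot\poly(h,\epsilon^{-1})=\tO{m}\cdot\poly(h)$ work and $\tO{1}\cdot\poly(h,\epsilon^{-1})=\tO{1}\cdot\poly(h)$ depth, and --- by its construction via $h$-layer DAGs and Cohen's path-count flows --- in a path representation with $\tO{m\cdot\poly(h)}$ flow paths; if one instead treats that subroutine as returning an edge representation, a parallel path-decomposition of an $h$-length flow into $\tO{m\cdot\poly(h)}$ paths adds only $\tO{m\cdot\poly(h)}$ work and $\tO{\poly(h)}$ depth. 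Finally, push $\hat F$ back through the bijection above, flow path by flow path, to obtain $F_{s,t}$ in $G$: by the recorded properties, $F_{s,t}$ is a feasible $h$-length single-commodity $s$-$t$ flow with $\vvalue(F_{s,t})=\vvalue(\hat F)$ at least half the maximum $h$-length $s$-$t$ flow value in $G$, represented by $\tO{m\cdot\poly(h)}$ paths, and all side computations (building $\hat G$, mapping back) cost $\tO{m\cdot\poly(h)}$ work and $\tO{1}$ depth, dominated by the call to \Cref{thm:ApproxLCMCMF}.

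The only genuinely delicate point is the bookkeeping of the gadget: one must check that it preserves $h$-lengths \emph{exactly} (a constant-factor length blowup would also be tolerable here, since $h$ is only subpolynomial, but it is cleanest to have none), that simplicity of paths transfers in both directions, and that per-vertex congestion in $G$ matches per-edge congestion on the $(v_{\iin},v_{\out})$ edges in $\hat G$; these are routine but worth spelling out. With this in hand the lemma is immediate. Alternatively, and even more directly, one can cite the single-commodity length-constrained maxflow result underlying \cite{HaeuplerHT24} (built in turn on \cite{HaeuplerHS23}), which already outputs a $0.5$-approximate feasible $h$-length $s$-$t$ flow in path representation with $\tO{m\cdot\poly(h)}$ paths in $\tO{m\cdot\poly(h)}$ work and $\tO{\poly(h)}$ depth in exactly this vertex-capacitated setting, making \Cref{lemma:LCstFlows} a direct consequence.
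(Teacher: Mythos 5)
Your proposal matches the paper's intended justification exactly: the paper gives no separate proof of this lemma, pointing instead to \cite{HaeuplerHT24} or to \Cref{thm:ApproxLCMCMF}, and it explicitly notes at the start of \Cref{Sect:ApproxLCMCFlow} that the directed edge-capacitated setting subsumes the undirected vertex-capacitated one via the standard vertex-splitting reduction — which is precisely the gadget you construct (and the same one the paper uses in \Cref{ob:UndirToDir}). Your instantiation with $k=1$ and $\epsilon=1/2$ and the bookkeeping of lengths, congestion, and path counts are all correct, modulo the same zero-length-connector-edge technicality that the paper itself glosses over.
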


Now we are ready to describe the algorithm $\Emulator(G,\epsl,\epsk,h)$. The algorithm involves recursive calls to itself with a reduced number of vertices for the input graph $G$.

\newcommand{\tepsl}{\tilde{\epsl}}
\paragraph{Global variables.} %

We fix $n$ to be the number of vertices of the input graph of the root computation of the recursive algorithm, i.e., $n$ does not change during the recursive calls. We let $c$ to be a sufficiently large constant.

We will use the parameters $\kappa,s$ from \Cref{thm:vertexLC-ED} in our algorithm, which only depends on $n,\eps$. We fix $\eps$ to be $\epsk/c^2$ to $\Emulator(G,\epsl,\epsk,h)$, among all recursive calls. i.e., we always have
\[\eps=\epsk/c^2,\qquad\qquad\kappa\le n^{\epsk/c},\qquad\qquad s=\exp\exp(\frac{O(\log(1/\epsk))}{\epsk}).\]
When we use \Cref{thm:vertexLC-ED}, the input conductance $\phi$ is a fixed value depending only on $n,\eps$, which we set to be
\[\phi=\frac{1}{n^{\frac{\epsk}{9}}}.\]
It follows that, 
\[\phi\kappa=\frac{1}{n^{\frac{\epsk}{9}}}\cdot n^{\epsk/c}\le \frac{1}{n^{\epsk/10}}.\]

We will use the parameters $\beta,\omega$ from \Cref{thm:neicov}. Here $\omega$ depends on $\beta,n$ where we define $\beta$ to be
\[\beta=\frac{c}{\epsk^2},\]
so that
\[\omega=n^{O(1/\beta)}\le n^{\epsk^2/10}.\]

In the algorithm, we will use $\sigma$ to denote the recursion parameter, i.e., every recursion decreases the graph size by a factor of $\sigma$. We fix $\sigma$ to be
\[\sigma=n^{\epsk/6}.\]
The number of iterations in the algorithm is defined as 
\[r=c\cdot \log_{\frac{1}{\kappa\phi}}N=O(\frac{1}{\epsk}).\] Remember that $N=\poly(n)$ is an upper bound for the summation of all the capacities.

It will be clear from the analysis why we define these parameters in this way, but let us give a high-level explanation here. As discussed in \Cref{sect:OverviewPart1}, our argument can be viewed as a two-dimensional induction. To distinguish steps along these two dimensions, we refer to the first as \emph{recursions} and the second as \emph{iterations}. We hope to bound both the recursion depth and number of iterations by $O(1/\epsk)$, and as we will see, the former is controlled by $\sigma$ and the latter is controlled by $\kappa\phi$, so we set them in this way. Regarding $\omega$, the intuition is that the total graph size at each recursion level increases by a factor of $\omega$, so we want $\omega \leq n^{O(\epsk^{2})}$ to obtain a size overhead of $n^{O(\epsk)}$ after $O(1/\epsk)$ levels.

\paragraph{Base case.} The base case is when $G$ has fewer than $\sigma$ vertices. However, we do not need to handle it separately because the following recursive algorithm $\Emulator(G,\epsl,\epsk,h)$ automatically covers this case (in which no further recursion is performed). 

For better understanding, we describe a simpler and alternative treatment informally as follows. Because the number of vertices is at most $\sigma$, which is upper bounded by the desired congestion slack $n^{\epsk}$. We can simply create shortcuts between each pair of vertices. This introduces a factor of $\sigma^{2}$ to the congestion slack, which is acceptable. Formally, we implement this by performing Step 5 once with each vertex treated as a singleton large cluster.

\paragraph{Recursive algorithm $\Emulator(G,\epsl,\epsk,h)$.} Suppose the size of $G$ is larger than a sufficiently large constant. The algorithm consists of 
$r$
many iterations.
We set a node-weighting $A_1(v)=U_G(v)$. 

We define the relax slackness for $\epsl$ as
\[\tepsl=\frac{\epsl}{100\cdot 50^{r}}.\]

We define $h_r=h$, and we define recursively for each $i\in[r],$
\[h_{\diam,i}=\frac{\tepsl\cdot h_i}{s},\qquad\qquad h_{\cov,i}=\frac{h_{\diam,i}}{\beta},\qquad\qquad h_{i-1}=h_{\cov,i} = \frac{\tepsl\cdot h_{i}}{\beta s}.\]

Now we formally describe the $i$-th iteration as follows. The $i$-th interaction will construct the shortcut $H_i$. Initially let $H_i=\emptyset$.

\newcommand{\hed}{h_{ED}}
\paragraph{Step 1 (LC-expander decomposition).} We use \Cref{thm:vertexLC-ED} to call 
\[C_i\leftarrow \textsc{VertexLC-ED}(G,A_i,h_{\diam,i},\phi,\eps).\]
We set the node-weighting for the next iteration $A_{i+1}(v)=C_i(v)\cdot U_G(v)$. 

\paragraph{Step 2 (neighborhood cover).} Define $G_i=G-C_i$. We use \Cref{thm:neicov} with input graph $G_i$, and covering radius $h_{\cov,i}$. \Cref{lemma:NeighborhoodCover} returns a neighborhood cover $\cN_i$. We define $\cS_i=\cup_{\cS\in\cN}\cS$.

\paragraph{Step 3 (recursions on small clusters).} Define $\cS_{\ssmall,i}=\{S\in\cS_i\mid |S|\le n/\sigma\}$.
For every $S\in\cS_{\ssmall,i}$, we make the recursive call 
\[H_S\leftarrow \Emulator(G[S],\tepsl,\epsk,h_{\cov,i}),\]
and add the returned LC-flow shortcut $H_S$ to $H_i$. 

\paragraph{Step 4 (stars on large clusters).} Define $\cS_{\big,i}=\cS_i-\cS_{\ssmall,i}$. For each $S\in\cS_{\big,i}$, we construct a new artificial vertex $x_S$ with capacity $\sum_{v\in V}U_G(v)$ and length $1$. For each $v\in x_S$, we build a new artificial vertex $x_{S,v}$ with length $h_{\diam,i}\cdot s$ and capacity $A_i(v)$. We build edges $(x_S,x_{S,v})$ and $(x_{S,v},v)$ . We denote the set of these edges as $H_S$. We add $H_S$ to $H_i$.

\paragraph{Step 5 (edges connecting large clusters).} 
For each pair of large clusters $S_1,S_2\in\cS_{\big,i}$, construct 
\[\bar{k_i}=\log_{1+\tepsl}(3h_i)\]
nodes. For each $1\leq k\leq \bar{k}_{i}$, the $k$-th shortcut node denoted by $\shortcut(x_{S_{1}},x_{S_{2}},k)$, has 

\begin{itemize}[leftmargin=6em,labelsep=1.5em]
    \item[length] $(1+\tepsl)^{k} + 2h_{\diam,i}\cdot s$, and 
    \item[capacity] $2$ times the value of the $0.5$-approximate maximum $((1+\tepsl)^{k} + 2h_{\diam,i}\cdot s)$-length single-commodity flow (with congestion 1) from $x_{S_{1}}$ to $x_{S_{2}}$ in $G_i\cup H_{S_{1}}\cup H_{S_{2}}$.
\end{itemize}
\begin{remark}
    In the definitions, we require the length to be a positive integer. In the algorithm, we sometimes face a fractional length (for example $((1+\tepsl)^{k} + 2h_{\diam,i}\cdot s)$. In this case, we can simply round it up or down with a small loss. For simplicity, we do not reflect the change in the algorithm unless the length is small enough so that rounding results in a large loss.
\end{remark}
We can use \Cref{lemma:LCstFlows} to compute the capacity. Then we construct edges connecting $x_{S_1}$ to the $\bar{k_i}$ nodes, and edges connecting the $\bar{k_i}$ nodes to $x_{S_2}$. %

\paragraph{Outputs.} The returned LC-flow shortcut $H$ for $\Emulator(G,\epsl,\epsk,h)$ is the union of $H_i$ for all iterations $i\in[r]$. 

\paragraph{Backward mapping algorithm.} We will describe and analyze the algorithm after the backward mapping correctness section in \Cref{subsec:backwardmappingalgorithm} since it is almost identical to the backward mapping correctness proof.

\paragraph{Complexity of $\Emulator$.} Before proving the correctness, let us first calculate the complexity for the LC-flow shortcut part.

Step 1 takes $\poly(h_{\diam,i})\cdot n^{\poly(\epsk)}=\poly(h)\cdot n^{\poly(\epsk)}$ depth and $\poly(h)\cdot n^{\poly(\epsk)}\cdot m$ work according to \Cref{thm:vertexLC-ED}.

Step 2 takes $\poly(h_{\diam,i})\cdot \omega=\poly(h)\cdot n^{O(\epsk^2)}$ depth and $\poly(h_{\diam,i})\cdot m\cdot \omega=\poly(h)\cdot m\cdot n^{O(\epsk^2)} $ work according to \Cref{thm:neicov}.

Step 3 is a recursive procedure. We will analyze it in the end.

Step 4 can be done trivially.

Step 5 depends on the number of large clusters, which is calculated in the following lemma.

\begin{lemma}\label{lem:bigcluster}
    For every $i\in[r]$, $|\cS_{\big,i}|\le \omega\sigma \leq n^{O(\epsk)}$.
\end{lemma}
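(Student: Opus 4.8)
The plan is to bound $|\cS_{\big,i}|$ by relating it to the total node-weighting mass $|A_i|$ and the per-cluster mass. First I would recall that $\cS_{\big,i}$ consists exactly of those clusters $S$ in the neighborhood cover $\cN_i$ with $|S| > n/\sigma$. Since within each of the $\omega$ clusterings the clusters are vertex-disjoint, each clustering contains at most $n/(n/\sigma) = \sigma$ large clusters. Summing over the $\omega$ clusterings gives $|\cS_{\big,i}| \le \omega\sigma$ directly, without needing to reason about node-weighting mass at all.

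Then I would plug in the parameter bounds fixed among the global variables: $\omega \le n^{\epsk^2/10}$ from Theorem \ref{thm:neicov} with $\beta = c/\epsk^2$, and $\sigma = n^{\epsk/6}$. Multiplying, $\omega\sigma \le n^{\epsk^2/10} \cdot n^{\epsk/6} = n^{\epsk/6 + \epsk^2/10} \le n^{O(\epsk)}$, where the last inequality uses $\epsk < 1$ so that $\epsk^2 \le \epsk$. This gives the claimed $|\cS_{\big,i}| \le \omega\sigma \le n^{O(\epsk)}$.

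One subtlety to address is that in Step 3 the recursive calls operate on induced subgraphs $G[S]$ for small clusters, so one must make sure the $n$ in $n/\sigma$ and in the final bound $n^{O(\epsk)}$ refers to the \emph{root} graph's vertex count (which is fixed as a global variable and does not change across recursion), not the current recursive instance's size. In the current iteration $i$ of a recursive call on a graph $G'$ with $|V(G')| = n'$ vertices, the disjointness argument actually yields at most $n'/(n/\sigma) \le \sigma$ large clusters per clustering, since $n' \le n$; so the bound $|\cS_{\big,i}| \le \omega\sigma$ holds a fortiori. I expect this bookkeeping about which graph $n$ refers to is the only place requiring care; the core counting step is immediate from vertex-disjointness of clusters within each clustering.
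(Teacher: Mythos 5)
Your proof is correct and follows exactly the paper's argument: within each of the $\omega$ vertex-disjoint clusterings there can be at most $\sigma$ clusters of size exceeding $n/\sigma$, giving $|\cS_{\big,i}|\le\omega\sigma$, and the parameter choices $\omega\le n^{\epsk^2/10}$, $\sigma=n^{\epsk/6}$ yield $n^{O(\epsk)}$. Your added remark about $n$ referring to the root graph's vertex count (fixed as a global variable) is a correct and worthwhile clarification, though the paper does not spell it out.
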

\begin{proof}
    According to the definition of neighborhood cover \Cref{def:neicov}, a neighborhood cover contains $\omega$ layers of clustering, where in each layer, different clusters are vertex disjoint. Thus, since each large cluster $S\in\cS_{\big,i}$ contains at least $n/\sigma$ vertices, the number of large clusters in each layer is $\sigma$. There are $\omega$ layers, so the lemma follows.
\end{proof}

Step 5 requires $(\omega\sigma)^2=n^{O(\epsk)}$ calling of approximate length-constrained single-commodity flow. Notice that the length bound is always at most $O(h_i)=O(h)$. Thus, according to \Cref{lemma:LCstFlows}, the depth for each flow computation is $\tilde{O}(\poly(h))$ depth and $\tilde{O}(\poly(h)\cdot m)$ work. In total, the work becomes $n^{O(\epsk)}\cdot \poly(h)\cdot m$.

Summing up all the steps, the depth is $\poly(h)\cdot n^{\poly(\epsk)}$ and work is $\poly(h)\cdot m\cdot n^{\poly(\epsk)}$ besides the recursive call.

Now, we analyze the recursive procedure. According to the definition of neighborhood cover, each vertex is included in at most $\omega$ many different clusters $S\in S_i$. There are $r$ iterations. Thus, the $z$-th recursion layer contains $(\omega r)^z n$ many vertices and $(\omega r)^zm$ many edges. There are $\log_\sigma n=O(1/\epsk)$ many layers, thus, the total number of vertices and edges are at most $(\omega r)^{\log_\sigma n}=n^{O(\epsk)}$ factor of more vertices and edges than the original graph. For the input parameters, we decreased the input $\epsl$ and $h$ in each layer of recursion. Decreasing the input for $h$ can only decrease the complexity, and $\epsl$ does not affect the complexity. Thus, the complexity stated in \Cref{lem:smallhemulator} follows.

\paragraph{Size of $H$.} Let us analyze the size of the output shortcut $H$. We only add edges to $H$ by Step 4 and Step 5 besides recursion.

In Step 4, we add $O(|S|)$ edges for each large cluster $S$. According to \Cref{lem:bigcluster}, there are at most $\omega\sigma r n=n^{1+O(\epsk)}$ many edges added in this step.

In Step 5, we build $\bar{k_i}=\tO{1/\tepsl} = \tilde{O}(1)$ (by our lower bounds on $\epsl$ and $\epsk$) edges between every two large clusters. This part is dominated by the number of edges in Step 4.

For recursion, as shown in the complexity analysis, the total number of vertices and edges among all recursions are $n^{O(\epsk)}\cdot n$ and $n^{O(\epsk)}\cdot m$. Thus, the size of the returned shortcut $H$ for the root recursion is $n^{1+O(\epsk)}$.

\medskip

In the next two sections, we will prove that the returned edge set is indeed a LC-flow shortcut satisfying the parameters specified by \Cref{lem:smallhemulator}. According to the definition of LC-flow shortcut \Cref{def:hLCFlowShortcut}, we need to verify forward mapping and backward mapping.

\paragraph{Induction hypothesis and base cases.} The proofs are based on induction on the size of the input graph to $\Emulator$ during the recursive calls, for which we denote by $G'$. 

\begin{lemma}[Induction hypothesis for $\Emulator(G',\epsl',\epsk,h)$]\label{lem:inductionforsmallh}
    Let $G'$ be a graph with $n'\leq n$ vertices. The algorithm $\Emulator(G',\epsl',\epsk,h)$ outputs a path-mapping $V$-preserving $h$-length LC-flow shortcut of $G'$ with additive length slack $\epsl'\cdot h$ and the following congestion slack and step.  %
    Let $d' = \lceil\log_{\sigma}n'\rceil$. Then
    \begin{align*}
    \text{step}~~~~~~~~~t(n',\epsl') &= \left(\frac{40\beta s}{\epsl'}\right)^{(r+1)\cdot d'}\cdot 50^{O(r^{2}\cdot (d')^{2})},~\text{and}\\
    \text{congestion}~~~~\kappa(n') &= (\omega r+1)^{d'}\cdot n^{\epsk}.
    \end{align*}
\end{lemma}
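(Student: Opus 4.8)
The plan is to prove \Cref{lem:inductionforsmallh} by induction on $n' = |V(G')|$. The base case is $n' < \sigma$, where $d' = 1$ and $\Emulator$ behaves as the informal base case described above (every vertex is a singleton ``large cluster'', and one execution of the Step~4--5 logic is performed): each flow path of $F^{G'}$ is replaced by a star hop, a single Step~5 edge, and another star hop, giving step $O(1)$ and — since there are at most $\sigma^2$ ordered pairs — congestion slack $O(\sigma^2) \le n^{\epsk}$, consistent with the claimed $t(n',\epsl')$ and $\kappa(n')$ at $d'=1$; the forward map is $V$-preserving because a singleton leaf only ever carries demand incident to it. For the inductive step we assume the lemma for all graphs on at most $n'/\sigma$ vertices (for which the recursion parameter is at most $d'-1$), and we must verify the forward and backward mapping clauses of \Cref{def:hLCFlowShortcut} for $H = \bigcup_{i\in[r]} H_i$.

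For the \textbf{forward mapping} I would nest a second induction on the iteration index $i$, with invariant: $H_1\cup\cdots\cup H_i$ already reroutes every congestion-$1$ flow of length at most $h_i$ in $G - C_i$ into $G\cup(H_1\cup\cdots\cup H_i)$, with congestion $1$, step at most $t_i$, additive length slack at most $\delta_i$, and $F^H(v)\le F^G(v)$ on every original vertex $v$. Given a flow path $P$ with $\ell_{G-C_i}(P)\le h_i$, cut it into at most $h_i/h_{\cov,i} = \beta s/\tepsl$ subpaths of length at most $h_{\cov,i} = h_{i-1}$, each contained in one cluster of $\cN_i$ by the covering-radius property. A subpath inside a small cluster $S$ is handled by the recursively built $H_S$ via the outer hypothesis — valid since $\ell_{G[S]} = \ell_G \le \ell_{G-C_i}$ keeps it within the length bound $h_{\cov,i}$ of $H_S$ — costing step $t(n'/\sigma,\tepsl)$ and additive slack $\tepsl\cdot h_{\cov,i}$; because $h_{i-1}$ is smaller than $h_i$ by exactly the factor $\tepsl/(\beta s)$, these slacks sum to at most $\tepsl h_i$ over all subpaths of $P$. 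Subpaths inside large clusters split into \emph{light} ones (with $C_{i-1}(\hat P)<1$), which have length at most $(1+\tepsl)h_{i-1}$ in $G-C_{i-1}$ and are handled by the inner hypothesis at level $i-1$, and \emph{heavy} ones; the portion of $P$ between the leftmost and rightmost heavy large-cluster subpaths (the jumping subpath) is rerouted in one shot via the level-$(i-1)$ structures (relocating the $\vvalue(P)$ units from its endpoint $w_L$ to a capacity-limited portal set $Y_L$ inside the shortest prefix of $\hat P_L$ with $C_{i-1}\ge 1$), the star edges at $x_{\hat S_L}$, the matching Step~5 edge $\shortcut(x_{\hat S_L},x_{\hat S_R},k)$, and the symmetric right side. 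This yields $\delta_i \le (\beta s/\tepsl)\max\{\delta_{i-1},\tepsl h_{i-1}\} + O(\tepsl h_i)$ and $t_i \le (\beta s/\tepsl)\max\{t_{i-1},t(n'/\sigma,\tepsl)\} + O(1)$; unrolling the inner recursion over $i=1,\dots,r$ and then the outer recursion over $d'$ levels — where the passed-down parameter shrinks by $1/(100\cdot 50^r)$ each level — gives the stated $t(n',\epsl')$ and total additive slack at most $50^r\tepsl h = \epsl' h/100 \le \epsl' h$.

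The step I expect to be the main obstacle is the congestion and $V$-preservation bookkeeping of the forward map, which is tightly coupled with the Step~5 capacity argument. One must show that the aggregate of all jumping-subpath reroutes between a fixed pair $(\hat S_L,\hat S_R)$ at a fixed length scale, once lifted through $x_{\hat S_L}$ and $x_{\hat S_R}$, is a feasible single-commodity flow of the relevant length in $G_i\cup H_{\hat S_L}\cup H_{\hat S_R}$: feasibility on the star leaves $x_{\hat S_L,y}$ uses precisely the portal restriction that $y$ absorbs at most $C_{i-1}(e)\cdot\vvalue(P)$ units, keeping its load below $A_i(y)$, and feasibility inside $G_i$ uses the $V$-preservation of the level-$(i-1)$ relocations together with the width of $\cN_i$. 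Since the Step~5 edge's capacity is twice a $0.5$-approximate maximum flow, it is at least the true maximum flow value, hence at least what we push through it. The very same accounting — charging, for each original vertex $v$, the contributions of recursed small-cluster subpaths (bounded by $V$-preservation of $H_S$), of level-$(i-1)$ light-subpath and relocation flows (bounded by their $V$-preservation plus the portal caps), and of $v$ appearing as a star leaf — is what establishes $F^H(v)\le F^G(v)$ and keeps the final congestion at $1$; this charging is the technically heaviest part.

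For the \textbf{backward mapping}, given a feasible $F^H$ in $G\cup H$ with $V(\Dem(F^H))\subseteq V(G)$, I replace each non-$G$ edge used by $F^H$ with an equivalent flow in $G$: traversals of a recursive $H_S$ go into $G[S]$ by the outer hypothesis (congestion $\kappa(n'/\sigma)$); a detour through a large-cluster star $x_S$ realizes an $A_i$-respecting $h_{\diam,i}$-length demand on $S$, routable in $G-C_i$ — hence $G$ — with congestion $O(\log n/\phi)$ and dilation $h_{\diam,i}\cdot s$ by \Cref{thm:flow character} (valid because $C_i$ is an $(h_{\diam,i},s)$-length $\phi$-expander decomposition of $A_i$); each Step~5 edge is replaced by at most twice its defining $0.5$-approximate max flow in $G_i\cup H_{S_1}\cup H_{S_2}$, whose star parts are then pushed into $G$ as above. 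All these preserve or shorten lengths, so the composite path-mapping has length slack $1$. For congestion, one iteration contributes $\omega\cdot\kappa(n'/\sigma)$ from small clusters, $\omega\cdot O(\log n/\phi)$ from large-cluster stars, and at most $(\omega\sigma)^2\bar k_i\cdot O(\log n/\phi) = n^{O(\epsk)}$ from Step~5 edges; summing over the $r$ iterations and unrolling over $d'$ recursion levels gives $\kappa(n') \le (\omega r+1)^{d'}\cdot n^{\epsk}$, where the base $n^{\epsk}$ absorbs all $\phi$- and $\tepsl$-dependent lower-order terms via $\phi = n^{-\epsk/9}$, $\omega \le n^{\epsk^2/10}$, $r = O(1/\epsk)$ and $\epsk > 1/\Omega(\log\log\log n)$. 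This matches the stated bounds and closes the induction.
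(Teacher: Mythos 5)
Your proposal is correct and follows essentially the same route as the paper: a two-dimensional induction (outer on the graph size with recursion parameter $d'$, inner on the iteration index $i$), the same decomposition of each flow path into trivial/small-cluster/light/heavy subpaths with a single portal-based jumping subpath, the same Step-5 capacity argument for congestion and $V$-preservation, and the same backward mapping via elimination of shortcut nodes followed by expander routing through \Cref{thm:flow character}, with matching congestion accounting. The one caveat is that your inner invariant must be stated for flows of length up to $3h_i$ (or at least $(1+2\tepsl)h_i$) in $G-C_i$ rather than $h_i$, since the light subpaths and the prefix/suffix relocations you hand to level $i-1$ have length up to $(1+2\tepsl)h_{i-1}>h_{i-1}$ in $G_{i-1}$ — exactly the extra slack the paper builds into \Cref{lem:forwardinduction}.
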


When $n'=n$ and $\epsl'=\epsl$, i.e., the root case, we plug in $\sigma = n^{\epsk/6}, \beta = O(1/\epsk^{2}), s = \exp\exp(\log(1/\epsk)\cdot O(1/\epsk)), r = O(1/\epsk)$ and $\omega \leq n^{\epsk^{2}/10}$, and then obtain
\begin{align*}
t(n,\epsl) = (1/\epsl)^{O(1/\epsk^{2})}\cdot \exp\exp(\log(1/\epsk)\cdot O(1/\epsk)),\qquad\text{and}\qquad \kappa(n) = n^{O(\epsk)}
\end{align*}
exactly as stated in \Cref{lem:smallhemulator} (after scaling down $\epsk$ by a large constant initially). 

\paragraph{Base case.} The base case for \Cref{lem:inductionforsmallh} is when $G$ has fewer than $\sigma$ vertices. Recall that we can choose not to handle it separately, and the following full argument shows that the returned LC-flow shortcut has desired guarantee. Or we can use the simpler alternative treatment. In this case, the output is clearly a $V$-preserving LC-flow shortcut with length slack $1+\epsl$, congestion slack $n^{\epsk}$

\medskip

In the next several sections, we will assume all the recursive calls are correct (since they reduced the graph size by at least $1$) and assume $G$ has $n'\le n$ vertices, the input length slackness is $\epsl$ to prove \Cref{lem:inductionforsmallh}.

\subsection{Proof of Correctness: Forward Mapping}

We will use induction to prove the following lemma holds for every $i$. Recall that we defined $G_i=G-C_i$ where $C_i$ is a $(h_{\diam,i},s)$-length moving cut returned in step $1$. 

\newcommand{\tepsli}[1]{\epsl^{(#1)}}
\begin{lemma}[Induction hypothesis for forward mapping]\label{lem:forwardinduction}
    For every $0\le i\le r$, for every $3 h_i$-length flow $F^G$ in $G_i$ with congestion $1$, there is a flow $F^H$ in $G\cup \left(\cup_{j\le i}{H_j}\right)$ routing $\Dem(F^G)$ with step $t^{(i)}$ and congestion $1$. Furthermore, there is a path-mapping from $F^G$ to $F^H$ with additive length slackness $\tepsli{i}\cdot h_i$
    such that
    \[\tepsli{i}=\frac{\epsl}{2\cdot 50^{r-i}}\qquad\qquad t^{(i)}=\left(\frac{40\beta s}{\tepsl}\right)^{i+1}\cdot t(n'/\sigma,\tepsl)\]
    where $t(n'/\sigma,\tepsl)$ denotes the function specified in \Cref{lem:inductionforsmallh}. We do not write $\epsk$ as a parameter to $t$ because $\epsk$ never changes during the recursive calls.

    Moreover, for every $v\in V$, we have $F^H(v)\le F^G(v)$. 
\end{lemma}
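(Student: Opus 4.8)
The plan is to prove \Cref{lem:forwardinduction} by induction on $i$, taking the outer induction hypothesis \Cref{lem:inductionforsmallh} as given for every graph on fewer than $n'$ vertices (hence for each $G_i[S]$ with $S\in\cS_{\ssmall,i}$, which has at most $n'/\sigma$ vertices). The base case $i=0$ is immediate: $G_0=G$ and $\cup_{j\le 0}H_j=\emptyset$, so $F^H:=F^G$ together with the identity path-mapping has congestion $1$, additive length slack $0\le\tepsli{0}h_0$, satisfies $F^H(v)=F^G(v)$, and has step at most $3h_0\le t^{(0)}$. For the inductive step I assume the statement for $i-1$ and process each flow path $P\in\path(F^G)$ of the given $3h_i$-length congestion-$1$ flow $F^G$ in $G_i=G-C_i$ separately; mapping paths one at a time automatically produces a path-mapping, and since the single-path subflows partition $F^G$ and every rerouting primitive used below is $V$-preserving, it also yields $F^H(v)\le F^G(v)$ and keeps $\cong(F^H)\le 1$ (the fresh star and shortcut vertices carry no constraint).

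First I would chop $P$ greedily into $O(\beta s/\tepsl)$ maximal subpaths $\hat P$ of $G_i$-length at most $h_{\cov,i}=h_{i-1}$ and, via the covering radius of $\cN_i$, place each $\hat P$ in a fixed cluster $S\in\cS_i$, classifying it as small-cluster ($S\in\cS_{\ssmall,i}$), light large-cluster ($S\in\cS_{\big,i}$, $C_{i-1}(\hat P)\le 1$), or heavy large-cluster. Small-cluster subpaths sharing a cluster $S$ are bundled and routed by the forward mapping of the recursive shortcut $H_S=\Emulator(G_i[S],\tepsl,\epsk,h_{i-1})\subseteq H_i$, costing step $t(n'/\sigma,\tepsl)$ and additive slack $\tepsl h_{i-1}$. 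A light subpath has $\leng(\hat P,G_{i-1})\le\leng(\hat P,G)+h_{\diam,i-1}s\cdot C_{i-1}(\hat P)\le(1+\tepsl)h_{i-1}\le 3h_{i-1}$, so the induction hypothesis for $i-1$ reroutes it inside $G\cup(\cup_{j\le i-1}H_j)$ at cost step $t^{(i-1)}$ and additive slack $\tepsli{i-1}h_{i-1}$ over its $G_{i-1}$-length.

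The delicate case is the heavy subpaths; when $i=1$ this degenerates to the expander warm-up, since reading $C_0$ as the all-ones cut makes every large-cluster subpath heavy with trivial portal set $\{w_L\}$, so one routes directly through the $A_1=U_G$-weighted stars with no relocation. In general, let $\hat P_L,\hat P_R$ (clusters $S_L,S_R\in\cS_{\big,i}$, endpoints $w_L,w_R$) be the leftmost and rightmost heavy subpaths of $P$; I replace the whole jumping subpath $P[w_L,w_R]$ by four stages: (a) relocate $F^G(P)$ units from $w_L$ along the prefixes $P[w_L,y]$ to the portals $y\in Y_L:=V(\hat P'_L)\cap\supp(C_{i-1})$, where $\hat P'_L$ is the shortest prefix of $\hat P_L$ with $C_{i-1}(\hat P'_L)\ge 1$ and each $y$ absorbs at most $C_{i-1}(y)F^G(P)$ units (feasible since $\sum_y C_{i-1}(y)\ge 1$; moreover each $P[w_L,y]$ has $C_{i-1}$-value below $2$, hence $G_{i-1}$-length at most $3h_{i-1}$, so these relocations are rerouted via the $i-1$ hypothesis too); (b) lift each portal $y$ to the star center $x_{S_L}$ through $y\to x_{S_L,y}\to x_{S_L}$, which is legal because $A_i(y)=C_{i-1}(y)U_G(y)$ matches the per-path restriction and, summing over all flow paths, the load through $x_{S_L,y}$ stays at most $\sum_{P\ni y}F^G(P)=F^G(y)\le U_G(y)$; (c) jump $x_{S_L}\to\shortcut(x_{S_L},x_{S_R},k)\to x_{S_R}$ with $k$ chosen so that $(1+\tepsl)^k$ is just above $\leng(P[w_L,w_R],G_i)$; and (d) the mirror image of (a) and (b) back down to $w_R$. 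Feasibility of (c) hinges on the inter-cluster node's capacity being at least the maximum $((1+\tepsl)^k+2h_{\diam,i}s)$-length flow from $x_{S_L}$ to $x_{S_R}$ in $G_i\cup H_{S_L}\cup H_{S_R}$, which dominates the aggregate of all class-$k$ jumping subpaths between $S_L$ and $S_R$ since concatenating $x_{S_L}\to x_{S_L,w_L}\to w_L$, the original $P[w_L,w_R]$, and $w_R\to x_{S_R,w_R}\to x_{S_R}$ exhibits exactly such a flow with congestion at most $\cong(F^G)\le 1$.

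Finally comes the bookkeeping. The $O(\beta s/\tepsl)$ non-jumping subpaths each add at most $(\tepsl+\tepsli{i-1})h_{i-1}$ to the length, totaling $O(\beta s/\tepsl)\cdot(\tepsl+\tepsli{i-1})h_{i-1}=O(\tepsl+\tepsli{i-1})\,h_i$ (using $h_{i-1}=\tfrac{\tepsl}{\beta s}h_i$); the jump incurs a multiplicative $1+\tepsl$ on $\leng(P[w_L,w_R],G_i)\le 3h_i$ plus an additive $O(h_{\diam,i}s)+O(h_{i-1})=O(\tepsl h_i)$ from the four star edges and the two relocations. Since $\tepsl\le\tepsli{i}/50$ and $\tepsli{i-1}=\tepsli{i}/50$, the total additive slack comes out at most $\tepsli{i}h_i$. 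For the step, a non-jumping subpath costs at most $\max\{t(n'/\sigma,\tepsl),t^{(i-1)}\}=t^{(i-1)}$ and the jump at most $2t^{(i-1)}+O(1)$, so $P$ as a whole costs at most $(O(\beta s/\tepsl)+2)\,t^{(i-1)}+O(1)\le\tfrac{40\beta s}{\tepsl}\,t^{(i-1)}=t^{(i)}$. I expect the heavy case to be the real obstacle: one must simultaneously (1) not overload any star edge $x_{S,y}$ once all flow paths are superimposed, (2) preserve $F^H(v)\le F^G(v)$ despite the extra passes relocation makes over $\hat P'_L$ and $\hat P'_R$, and (3) match the max-flow-defined inter-cluster capacities to the aggregate jumping flow; it is exactly this triple requirement that forces the portal restriction ``$\le C_{i-1}(y)F^G(P)$'' and the use of $C_{i-1}$, rather than $C_i$, for the light/heavy split. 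A lesser but persistent nuisance is keeping the slack genuinely additive through the recursion, which is why the schedule $\tepsli{i}$ shrinks geometrically and $\tepsl$ is made so tiny.
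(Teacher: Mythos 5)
Your overall architecture matches the paper's proof: the same decomposition of each flow path into trivial / small-cluster / light / heavy subpaths, the same jumping-subpath machinery with portals drawn from the shortest prefix of $C_{i-1}$-mass at least $1$, relocation via the $(i-1)$-level hypothesis, and essentially the same length/step bookkeeping. Two steps, however, do not hold as written.

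First, the base case. You set $G_0=G$ and claim the identity mapping has step at most $3h_0\le t^{(0)}$; this fails because $h_0$ scales with the input length bound $h$ (which can be polynomial in $n$ after the bootstrapping) while $t^{(0)}$ is subpolynomial and independent of $h$. The intended convention --- which you yourself invoke when discussing $i=1$ --- is $G_0=G-C_0$ with $C_0\equiv 1$, so that every vertex of $G_0$ has length at least $sh_{\diam,0}=\tepsl h_0$ and any $3h_0$-length flow in $G_0$ automatically has step at most $3/\tepsl\le t^{(0)}$. As written, your base case is inconsistent with your treatment of $i=1$ and does not deliver the step bound on which the induction rests.

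Second, and more substantively, your witness for the capacity of $\shortcut(x_{S_L},x_{S_R},k)$ is the concatenation $x_{S_L}\to x_{S_L,w_L}\to w_L\to\cdots\to w_R\to x_{S_R,w_R}\to x_{S_R}$. The star edges through $x_{S_L,w_L}$ have capacity $A_i(w_L)=C_{i-1}(w_L)\cdot U_G(w_L)$, which is zero whenever $C_{i-1}(w_L)=0$ --- and $w_L$ is merely the left endpoint of the leftmost heavy large-cluster subpath, so nothing forces it to carry cut mass. Hence this witness flow is infeasible and cannot certify that the aggregate jumping flow fits the shortcut node's capacity, which is exactly the point where the capacity definition via an approximate max-flow must be matched. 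The correct witness enters and leaves the clusters through the portals: send $C_{i-1}(y)\cdot\vvalue(P)$ units from $x_{S_L}$ to each $y\in Y_L$ via $x_{S_L,y}$, traverse the original jumping subpath in $G_i$, and exit symmetrically through $Y_R$ --- precisely the portal distribution you already use for the forward routing itself. A smaller slip of the same flavor: the load on $x_{S_L,y}$ should be bounded by $C_{i-1}(y)\cdot F^G(y)\le A_i(y)$, not by $F^G(y)\le U_G(y)$, since the capacity of that star vertex is $A_i(y)$ rather than $U_G(y)$; your per-path restriction does imply the correct bound, but the inequality you wrote is not the one you need.
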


\paragraph{Base case.} For convenience, we define $G_0=G-C_0$ where $C_0$ is a $(h_{\diam,0},s)$-length moving cut such that $C_0(v)=1$ for every $v$, even though there is no `$0$-th iteration'. The reason for defining $G_0$ is that \Cref{lem:forwardinduction} is easy to verify for the base case when $i=0$. However, we point out that this is just for formality. One can also view $i=1$ as the base case and prove its guarantees directly using (a simplified version) of the general-case argument.

In $G-C_0$, every vertex has length at least $sh_{\diam,0}$ since $C_0(v)=1$ for every $v\in V$. Thus, every $3h_0$-length flow $F^G$ in $G-C_0=G_0$ has step at most $3/\tepsl$, and $F^{G}$ automatically has step at most $3/\tepsl$ in $G$, satisfying \Cref{lem:forwardinduction}. In the rest of the section, we will prove \Cref{lem:forwardinduction} for $i>0$ while assuming it is correct for every value less than $i$.

\paragraph{Ending case.} Before going into the induction step, let us first see \Cref{lem:forwardinduction} implies the required forward mapping when $i=r$. 

Remember that we set $A_{i+1}(v)=C_i(v)\cdot U(v)$. According to \Cref{thm:vertexLC-ED}, the size of $A_{i+1}$ decreases by a factor of $\kappa\phi$ in each iteration. Initially $A_{1}$ has size $N$, so after $r$ iterations, the size of $A_{r}$ decreases to at most $\frac{1}{N}$. This implies that the size of $C_{r}$ is at most $\frac{1}{N}$. Notice that in $G_{r}=G-C_{r}$, the total length increase on all vertices is at most $\sum_{v}C_{r}(v)\cdot h_{r}\le \frac{h_{r}}{N}$. Thus, we get that for every $h$-length flow $F^G$ in $G$, which is a $h_r+\frac{h_r}{N}$-length flow in $G_r$, with congestion $1$, there is a flow $F^H$ in $G\cup H$ routing $\Dem(F^G)$ with congestion $1$ and step $t^{(r)}$. Furthermore, there is a path-mapping from $F^G$ for $F^H$ with additive length slackness \[\tepsli{r}\cdot h_{r}+\frac{h_r}{N}\le \epsl\cdot h\] and step
\[t^{(r)}=\left(\frac{40\beta s}{\tepsl}\right)^{r+1}\cdot t(n'/\sigma,\tepsl)\le t(n',\epsl)\]

This is exactly the definition of forward mapping in \Cref{def:hLCFlowShortcut}.

\subsubsection{Construction of the Forward Mapping $F^H$}\label{subsubsec:constructionofforward}
Consider an arbitrary $3 h_{i}$-length flow $F^{G}$ in $G_i$. For each flow path $P\in\path(F^{G})$, we will construct a flow $F^{H}$ in $H$ routing $\Dem(P)$ as follows.

\paragraph{{Decompose the Flow Path $P$.}} We first decompose $P$ into flow subpaths $\hat{\cal P}$ using the following lemma.

\begin{lemma}\label{lem:decompose}
    A path $P$ can be decomposed into $z \leq 10\leng(P)/h_{\cov,i}$ subpaths $\hat{\cal P}=\{\hat{P}_1,...,\hat{P}_z\}$ such that
    \begin{enumerate}
        \item $\Start(\hat{P}_1)=\Start(P), \End(\hat{P}_z)=\End(P)$, for every $i\in[z-1]$, $\End(\hat{P}_i)=\Prec(\Start(\hat{P}_{i+1}))$,
        \item each $\hat{P}_i$ either
        \begin{itemize}
            \item contains exactly $1$ nodes, which we call a \emph{trivial path}, or
            \item has length at most $h_{\cov,i}$.
        \end{itemize}
    \end{enumerate}
\end{lemma}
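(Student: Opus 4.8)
The plan is to produce the decomposition by a single greedy left-to-right scan over the vertices of $P$, and then bound the number of pieces by a pairing/charging argument against $\leng(P)/h_{\cov,i}$. Write $P=(v_1,\dots,v_k)$, so $v_1=\Start(P)$ and $v_k=\End(P)$. Process the vertices in order: starting from the first as-yet-unused index $a$, let $b\ge a$ be the largest index with $\ell(v_a)+\ell(v_{a+1})+\dots+\ell(v_b)\le h_{\cov,i}$, and if already $\ell(v_a)>h_{\cov,i}$ just take $b=a$; output the contiguous subpath $(v_a,\dots,v_b)$ and repeat from $b+1$. This is essentially the obvious algorithm; the only feature forced on us by the vertex-length setting is that a single vertex may be longer than $h_{\cov,i}$, which is exactly why the statement must allow \emph{trivial} one-vertex pieces.

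First I would check the two structural properties, both of which are immediate. For Property~1, the first piece starts at $v_1=\Start(P)$, the last ends at $v_k=\End(P)$, and if a piece ends at $v_b$ then the next starts at $v_{b+1}$, so $\End(\hat P_i)=v_b=\Prec(\Start(\hat P_{i+1}))$; every output piece is a contiguous subpath of $P$. For Property~2, if a piece $\hat P_i$ has at least two vertices then it was not produced by the $b=a$ fallback, so by the stopping rule $\leng(\hat P_i)=\ell(v_a)+\dots+\ell(v_b)\le h_{\cov,i}$; a one-vertex piece is trivial by definition.

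The quantitative part is the bound $z\le 10\,\leng(P)/h_{\cov,i}$, and the key observation is that any two consecutive pieces together have total length strictly more than $h_{\cov,i}$. Indeed, fix $\hat P_j=(v_a,\dots,v_b)$ with $j<z$, so $v_{b+1}$ exists and equals $\Start(\hat P_{j+1})$. If $\ell(v_a)>h_{\cov,i}$ then already $\leng(\hat P_j)>h_{\cov,i}$; otherwise maximality of $b$ gives $\ell(v_a)+\dots+\ell(v_b)+\ell(v_{b+1})>h_{\cov,i}$, and since $v_{b+1}\in V(\hat P_{j+1})$ we get $\leng(\hat P_j)+\leng(\hat P_{j+1})>h_{\cov,i}$. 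Pairing the pieces as $(\hat P_1,\hat P_2),(\hat P_3,\hat P_4),\dots$ yields $\lfloor z/2\rfloor$ disjoint pairs, each of total length more than $h_{\cov,i}$; since the pieces partition $V(P)$ and $\sum_i\leng(\hat P_i)=\leng(P)$, summing over the paired pieces gives $\lfloor z/2\rfloor\cdot h_{\cov,i}<\leng(P)$, hence $z<2\,\leng(P)/h_{\cov,i}+1$. Absorbing the additive $1$ into the constant finishes the bound (note that when $\leng(P)<h_{\cov,i}$ the scan already returns $z=1$, so the regime where the lemma does any work is $\leng(P)\gtrsim h_{\cov,i}$, where $2\,\leng(P)/h_{\cov,i}+1\le 10\,\leng(P)/h_{\cov,i}$). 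I do not expect a genuine obstacle here — this is a warm-up lemma — and the only points needing a little care are the vertex-length convention (trivial pieces for over-long vertices) and this harmless additive slack in the piece count.
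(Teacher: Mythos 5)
Your proof is correct and follows essentially the same greedy left-to-right decomposition as the paper; the only differences are cosmetic (you cut off a trivial one-vertex piece when its length exceeds $h_{\cov,i}$ rather than the paper's $h_{\cov,i}/2$, and you bound $z$ by pairing consecutive pieces rather than counting trivial and non-trivial pieces separately), and neither difference matters downstream. Note that both your count and the paper's carry a harmless additive $+1$, so the stated bound $z\le 10\leng(P)/h_{\cov,i}$ is only literally true when $\leng(P)\gtrsim h_{\cov,i}$ — but this is the regime in which the lemma is used, and for shorter paths the decomposition is the single piece $z=1$, which causes no harm in the application.
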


\begin{proof}
    We will construct $\hat{P}_i$ iteratively. Suppose we have constructed $\hat{P}_i$ where $v=\End(\hat{P}_i)$, we will construct $\hat{P}_{i+1}$ based on $v$ in the following way (if $i=0$, we define $v=\Start(P)$): if $\ell(\Succ(v))>h_{\cov,i}/2$, then we let $\hat{P}_{i+1}$ be the only vertex $\Succ(v)$, which is a trivial path; otherwise, we let the subpath $\hat{P}_{i+1}$ to start with vertex $\Succ(v)$ and end at the last vertex on $P$ satisfying $\leng(\hat{P}_{i+1})\le h_{\cov,i}$. 

    Now we show that the number of paths is small. It is clear that paths are vertex-disjoint. The total length of all subpaths is at most $\leng(P)$. Thus, the number of trivial subpaths constructed in the first way is at most $2\leng(P)/h_{\cov,i}$ since each path has length at least $h_{\cov,i}/2$. Moreover, let $\hat{P}_i$ be a non-trivial path where $i<z$ and $\hat{P}_{i+1}$ is also a non-trivial path, then $\leng(\hat{P}_i)\ge h_{\cov,i}/2$ since otherwise we can augment $\hat{P}_i$ further. This means there are at most $2\leng(P)/h_{\cov,i}+1$ number of non-trivial paths even if we exclude all non-trivial paths before each trivial path. The total number of subpaths is at most $10\leng(P)/h_{\cov,i}$.
\end{proof}

Remember that $\cS_i$ contains all the clusters of a neighborhood cover with covering radius $h_{\cov,i}$ on the graph $G_i$, thus, each nontrivial subpath $\hat{P}\in\hat{\cal P}$ corresponds to a cluster $S_{\hat{P}}\in \cS_{i}$ such that $\hat{P}$ is totally inside $G[S_{\hat{P}}]$, and we call $\hat{P}$ a \emph{large-cluster subpath} (resp. \emph{small-cluster subpath}) if $S_{\hat{P}}$ is a large (resp. small) cluster.

We classify subpaths into \emph{light subpaths} and \emph{heavy subpaths}. For each subpath $\hat{P}\in \hat{\cal P}$, it is a light subpath if 
\[
C_{i-1}(\hat{P})\leq 1,
\]
otherwise $\hat{P}$ is a heavy subpath. %

Let $\hat{P}_{\HvLg,L}$ (resp. $\hat{P}_{\HvLg,R}$) be the leftmost (resp. rightmost) heavy and large-cluster subpath. Let $w_{L}$ be the left endpoint of $\hat{P}_{\HvLg,L}$ and let $w_{R}$ be the right endpoint of $\hat{P}_{\HvLg,R}$. 

We substitute the subpaths in $\hat{\cal P}$ between $\hat{P}_{\HvLg,L}$ and $\hat{P}_{\HvLg,R}$ (including them) with one flow subpath $\hat{P}_{\jump}$ (which is the subpath of $P$ starting at $w_L$ and ending at $w_R$), called the \emph{jumping subpath}. Note that after substituting, each subpath $\hat{P}\in\hat{\cal P}$ now belongs to one of the following types: trivial, small-cluster, light, and jumping. 

\medskip

\paragraph{Route the Flow Subpaths.} We will construct a routing $F^{H}_{P}$ of $\Dem(P)$ in $G\cup H$ by routing $\Dem(\hat{P})$ of each subpath $\hat{P}\in\hat{\cal P}$ separately and concatenating them at the end. 
Consider a subpath $\hat{P}\in\hat{\cal P}$.

\medskip

\noindent\underline{{Trivial Subpaths.}} If $\hat{P}$ is a trivial subpath, then we do not do anything: its routing $F^H_{\hat{P}}=\hat{P}$.

\noindent\underline{{Small-cluster Subpaths.}} Suppose $\hat{P}$ is a small-cluster subpath. Since $\hat{P}$ is not a trivial subpath, its length is at most $h_{\cov,i}$. Let $S=S_{\hat{P}}$ be the cluster containing $\hat{P}$. Remember that $H_S$ is returned in Step 3 of the algorithm by $\Emulator(G[S],\tepsl,\epsk,h_{\cov,i})$. We collect all the flow subpaths contained in $S$ and define

\[F_S=\sum_{P\in\path(F^G)}\sum_{\substack{\hat{P}\in \hat{\cP}\\S_{\hat{P}}=S}}\hat{P}\]

Notice that $F_S$ is a $h_{\cov,i}$-length flow. According to the definition of $H_S$, $\Dem(F_S)$ can be routed in $G\cup H_S$ by a flow $F^{H}_S$ with path mapping, congestion, and step guaranteed by \Cref{lem:smallhemulator}. We will analyze the routing quality later. We define the routing for $\hat{P}$ as the subflow in $F^H_S$ routing the demand $\Dem(\hat{P})$, denoted by $F^H_{\hP}$, notice that it satisfies the path mapping parameter stated in \Cref{lem:smallhemulator}.

\medskip
\noindent\underline{{Light Subpaths.}} Suppose $\hat{P}$ is a light subpath. Notice that $\hat{P}$ has length increase in $G_{i-1}=G-C_{i-1}$ at most $s\cdot h_{\diam,i-1}$ according to the definition of light subpaths and that $C_{i-1}$ is a $(h_{\diam,i-1},s)$-length moving cut. Thus, $\hat{P}$ in $G_{i-1}$ has length at most 
\[
h_{\cov,i}+s\cdot h_{\diam,i-1}\le 3 h_{i-1},
\]
because we define $h_{\cov,i} = h_{i-1}$ and $h_{\diam,i-1} = \tepsl\cdot h_{i}/s$. We collect all the light flow subpaths 

\[F_{\light}=\sum_{P\in\path(F^G)}\sum_{\substack{\hat{P}\in \hat{\cP} \\ \hat{P}\text{ is light}}}\hat{P}\]

Notice that $F_{\light}$ has length at most $3 h_{i-1}$. According to the induction hypothesis \Cref{lem:forwardinduction}, there is a flow $F^H_{\light}$ and a path mapping from $F_{\light}$ to $F^H_{\light}$ routing $\Dem(F_\light)$ in $G\cup H$ with the quality specified by \Cref{lem:forwardinduction}. We will analyze the quality later. We define $F^H_{\hat{P}}$ as the mapped flow path in $F^H_{\light}$ that routes the demand $\Dem(\hat{P})$.

\medskip

\noindent\underline{{Jumping Subpaths.}} Lastly, consider the case that $\hat{P}$ is the jumping subpath $\hat{P}_{\jump} = P[w_{L},w_{R}]$. This is the most complicated case and we need to first make some definitions.
\begin{description}
    \item[Definition of flows routing $w_L$ to vertices in $\hat{P}_{\HvLg,L}$.] Let $\hat{P}_{\jump,\pre}$ be the \emph{shortest} prefix of $\hat{P}_{\HvLg,L}$ (so it is also a prefix of $\hat{P}_{\jump}$) such that $C_{i-1}(\hat{P}_{\jump,\pre})\geq 1$. By the definition of heavy path, such a prefix always exists. Notice that $C_{i-1}(\hat{P}_{\jump,\pre})\le 2$ since $C_{i-1}$ assigns at most $1$ to each vertex (otherwise we decrease it to $1$ and it is still an length constraint expander decomposition). For every node $v\in V(\hat{P}_{\jump,\pre})$, we define the flow path $\hat{P}_{\pre,v}$ as the subpath $P[w_L,v]$ carrying flow value $\vvalue(\hat{P}_{\pre,v})=\vvalue(P)\cdot C_{i-1}(v)$. Notice that the length of $\hat{P}_{\pre,v}$ on the graph $G_{i-1}$ is at most 
\[h_{\cov,i}+2\cdot h_{\diam,i-1}s\le 3h_{i-1}\]
because $\hat{P}_{\pre,v}$ has length at most $h_{\cov,i}$ on $G$, and $C_{i-1}$ is a $(h_{\diam,i-1},s)$-length cut. 
We will also guarantee that the total flow value is exactly $\vvalue(P)$, i.e.,
\[\sum_{v\in\hat{P}_{\jump,\pre}}\vvalue(\hat{P}_{\pre,v})=\vvalue(P)\]
This can be easily achieved by reducing the flow value of $\hat{P}_{\pre,v}$ for some vertices $v$.

We collect all $\hat{P}_{\pre,v}$ among all possible $P,v$ and define
\[F_{\jump,\pre}=\sum_{P\in\path(F^G)}\sum_{v\in\hat{P}_{\jump,\pre}} \hat{P}_{\pre,v}\]

Notice that $F_{\jump,\pre}$ is $(3h_{i-1})$-length. According to the induction hypothesis \Cref{lem:forwardinduction}, there is a flow $F^H_{\jump,\pre}$ and a path mapping routing $\Dem(F_{\jump,\pre})$ in $G\cup H$ with the quality specified by \Cref{lem:forwardinduction}. We will analyze the quality later. We define $F^H_{\pre,v}$ as the mapped flow path in $F^H_{\jump,\pre}$ that routes the demand $\Dem(\hat{P}_{\pre,v})$.
    \item[Definition of flows routing vertices in $\hat{P}_{\HvLg,R}$ to $w_R$.] The definition is almost the same after changing $\pre$ to $\suf$. For completeness we provide the full definition again as follows.

Let $\hat{P}_{\jump,\suf}$ be the \emph{shortest} sufix of $\hat{P}_{\HvLg,R}$ (so it is also a sufix of $\hat{P}_{\jump}$) such that $C_{i-1}(\hat{P}_{\jump,\suf})\geq 1$. By the definition of heavy path, such a prefix always exists. Notice that $C_{i-1}(\hat{P}_{\jump,\suf})\le 2$ since $C_{i-1}$ assigns at most $1$ to each vertex. For every node $v\in V(\hat{P}_{\jump,\suf})$, we define the flow path $\hat{P}_{\suf,v}$ as the subpath $P[v,w_R]$ carrying flow value $\vvalue(\hat{P}_{\suf,v})=\vvalue(P)\cdot C_{i-1}(v)$. Notice that the length of $\hat{P}_{\suf,v}$ on the graph $G_{i-1}$ is at most 

\[\leng(\hat{P}_{\suf,v})\le h_{\cov,i}+2\cdot h_{\diam,i-1}s\le 3h_{i-1}\]

The first inequality is because $\hat{P}_{\suf,v}$ has length at most $h_{\cov,i}$ on $G$, and $C_{i-1}$ is a $(h_{\diam,i-1},s)$-length cut. 
We will also guarantee that the total flow value is exactly $\vvalue(P)$, i.e.,

\[\sum_{v\in\hat{P}_{\jump,\suf}}\vvalue(\hat{P}_{\suf,v})=\vvalue(P)\]
This can be easily achieved by reducing the flow value of $\hat{P}_{\suf,v}$ for some vertices $v$.
We collect all $\hat{P}_{\suf,v}$ among all possible $P,v$ and define

\[F_{\jump,\suf}=\sum_{P\in\path(F^G)}\sum_{v\in\hat{P}_{\jump,\suf}} \hat{P}_{\suf,v}\]

Notice that $F_{\jump,\suf}$ is $(3h_{i-1})$-length. According to the induction hypothesis \Cref{lem:forwardinduction}, there is a flow $F^H_{\jump,\suf}$ routing $\Dem(F_{\jump,\suf})$ in $G\cup H$ with the quality specified by \Cref{lem:forwardinduction}. We will analyze the quality later. We define $F^H_{\suf,v}$ as the mapped flow path in $F^H_{\jump,\suf}$ that routes the demand $\Dem(\hat{P}_{\suf,v})$.
\end{description}
\paragraph{} 

Now we are ready to define the flow $F^H_{\hat{P}_\jump}$, which is a routing for $\Dem(\hat{P}_{\jump})$ ($\vvalue(P)$ flows from $w_L$ to $w_R$). We will route $\Dem(\hat{P}_{\jump})$ by the following three steps.
\begin{description}
    \item[{Routing from $w_L$ to $\hat{P}_{\jump,\pre}$.}]
    For every $v\in \hat{P}_{\jump,\pre}$, we send flow through the flow path $F^H_{\pre,v}$. This guarantees that $\vvalue(P)$ units of flow in $w_L$ are distributed to vertices in $\hat{P}_{\jump,\pre}$ so that each vertex $v$ gets $\vvalue(F^H_{\pre,v})\le C_{i-1}(v)\cdot \vvalue(P)$ units of flow.
    
    \item[{Routing from $\hat{P}_{\jump,\pre}$ to $\hat{P}_{\jump,\suf}$.}] Denote the cluster in $\cS_i$ containing $\hat{P}_{\HvLg,L}$ by $S_L$, denote the cluster in $\cS_i$ containing $\hat{P}_{\HvLg,R}$ be $S_R$. 
    
    The routing first happens on the star shortcut $H_{S_L}$: for each vertex $v\in \hat{P}_{\jump,\pre}$, we route the $\vvalue(F^H_{\pre,v})$ units of flow from $v$ to $x_{S_L}$ through the vertex $x_{S_L,v}$.

    Then the routing happens through $\shortcut(x_{S_L},x_{S_R},k)$: Notice that now $x_{S_L}$ cumulates $\vvalue(P)$ unites of flow. We pick the smallest integer $k$ such that $(1+\tepsl)^k\ge \leng(\hat{P}_{\jump},G_i)$, and route the $\vvalue(P)$ units of flow from $x_{S_L}$ to $x_{S_R}$ through the vertex $\shortcut(x_{S_L},x_{S_R},k)$.

    At last the routing happens on the star shortcut $H_{S_R}$: for each vertex $v\in \hat{P}_{\jump,\suf}$, we route $\vvalue(F^H_{\suf,v})$ units of flow from $x_{S_R}$ to $v$ through the vertex $x_{S_R,v}$.

    \item[{Routing from $\hat{P}_{\jump,\suf}$ to $w_R$.}]
    For every $v\in \hat{P}_{\jump,\suf}$, we send flow through the flow path $F^H_{\suf,v}$. This guarantees that $w_R$ gets $\vvalue(P)$ units of flow and the routing finishes.

\end{description}

\paragraph{Concatenating the routing for flow subpaths.} We just defined $F^H_{\hat{P}}$ for each $\hat{P}\in\hat{\cP}$ as a routing of value $\vvalue(F^H_{\hat{P}})=\vvalue(P)$ from $\Start(\hat{P})$ to $\End(\hat{P})$. We construct the routing for $\Dem(P)$ by concatenating all $F^H_{\hat{P}}$: sending $\vvalue(P)$ of flow from $\End(\hat{P})$ to $\Start(\hat{P}')$ where $\hat{P}'$ is the next subpath following $\hat{P}$ in $\hat{\cP}$, using the original edge connecting $\End(\hat{P})$ to $\Start(\hat{P}')$. 

This finishes the description of routing the demand $\Dem(P)$, for which we denoted by $F^H_P$. The flow $F^H$ is the union of all the flows $F^H_P$ for every flow path $P\in F^G$ and the path mapping maps $P$ to $F^H_P$. 
\subsubsection{Quality of the Forward Mapping $F^H$}\label{subsubsec:qualityofforward}
It remains to analyze the quality of $F^H$.

\paragraph{Length of $F^H_P$.} We will prove that there is a path-mapping from $P$ to $F^H_P$ with additive length slackness $\tepsli{i}\cdot h_i$.

Consider a flow path $P\in\path(F^{G})$. We have that
\[\leng(F^H_{P})=\sum_{\hat{P}\in\hat{\cP}}\leng(F^H_{\hat{P}})
\]
by the concatenation definition of $F^H_P$. We will analyze the length $\leng(F^H_{\hat{P}},G\cup H)$ compared to $\leng(\hat{P},G_i)$ for different types of $\hat{P}$.

If $\hat{P}$ is a \underline{trivial subpath}, then $\leng(F^H_{\hat{P}},G\cup H)\le \leng(\hat{P},G_i)$ since $G_i$ only increases the length.

If $\hat{P}$ is a \underline{small-cluster subpath}, then $F^H_{\hat{P}}$ is a subflow of $F^H_{S_{\hat{P}}}$. According to \Cref{lem:inductionforsmallh}, we have that
\[
\leng(F^H_{\hat{P}},G\cup H)\le \leng(\hat{P},G_i)+\tepsl\cdot h_{\cov,i}\leq \tepsl \cdot \frac{\tepsl\cdot h_{i}}{s\beta}
\]

If $\hat{P}$ is a \underline{light subpath}, then on the graph $G_{i-1}$, $\hat{P}$ has length at most $\leng(\hat{P},G)+s\cdot h_{\diam,i-1}$ because $C_{i-1}$ is a $(h_{\diam,i-1},s)$-length moving cut which has value at most $1$ on $\hat{P}$. According to the induction hypothesis \Cref{lem:forwardinduction}, we have

\begin{align*}
    \leng(F^H_{\hat{P}},G\cup H)&\le \leng(\hat{P},G)+s\cdot h_{\diam,i-1}+\tepsli{i-1}\cdot h_{i-1}\\
    & \le \leng(\hat{P},G_i)+\left(\tepsl+\tepsli{i-1}\right)\cdot \frac{\tepsl\cdot h_i}{s\beta}
\end{align*}

If $\hat{P}$ is a \underline{jumping subpath}, then $F^H_{\hat{P}}$ is composed by three parts. For the first and third parts, i.e., from $w_L$ to $\hat{P_{\jump,\pre}}$ and from $\hat{P}_{\jump,\suf}$ to $w_R$, their lengths is by routing using \Cref{lem:forwardinduction}. On $G_{i-1}$, the length of the path $\hat{P}_{\jump,\pre}$ and $\hat{P}_{\jump,\suf}$ is at most $h_{\cov,i}+2sh_{\diam,i-1}$, because both $\hat{P}_{\jump,\pre}$ and $\hat{P}_{\jump,\suf}$ have length at most $h_{\cov,i}$ on $G$ and the value of the $(h_{\diam,i-1},s)$-length cut $C_{i-1}$ on $\hat{P}_{\jump,\pre}$ and $\hat{P}_{\jump,\suf}$ is at most $2$. According to \Cref{lem:forwardinduction}, the length of $F^H_{\hat{P}}$ of the first and the thrid path is (similar to the light subpath)

\[h_{\cov,i}+2sh_{\diam,i-1}+\tepsli{i-1}\cdot h_{i-1}\]

Now we analyze the second part of $F^H_{\hat{P}}$, i.e., from $\hat{P}_{\jump,\pre}$ to $\hat{P}_{\jump,\suf}$. This part consists of $5$ vertices. (i) The vertex on the star $x_{S,v}$ for some $v\in V(\hat{P}_{\jump,\pre})$ and $S\in\cS_i$ containing $\hat{P}_{\jump,\pre}$, this vertex has length $sh_{\diam,i}$. (ii) The vertex on the center of the star $x_S$ which has length $1$. (iii) The vertex $\shortcut(S_1,S_2,k)$ for some $S_1,S_2\in\cS_i$ containing $\hat{P}_{\jump,\pre},\hat{P}_{\jump,\suf}$, and $k$ is the smallest integer such that $(1+\tepsl)^k\ge \leng(\hat{P}_{\jump},G_i)$, which implies $(1+\tepsl)^k\le (1+\tepsl)\leng(\hat{P},G_i)$. Thus, the length of $\shortcut(S_1,S_2,k)$ is at most $(1+\tepsl)\leng(\hat{P},G_i)+2sh_{\diam,i}$. (iv) and (v), the last two vertices are symmetric to (i) and (ii). In total, this path has length 
\[2(sh_{\diam,i}+1)+(1+\tepsl)\leng(\hat{P})+2sh_{\diam,i}.\]

Thus, we have

\begin{align*}
    \leng(F^H_{\hat{P}})  &\le(h_{\cov,i}+2sh_{\diam,i-1}+\tepsli{i-1}\cdot h_{i-1}) +(2(sh_{\diam,i}+1)+(1+\tepsl)\leng(\hat{P},G_i)+2sh_{\diam,i})\\
    &\le \left(1+2\tepsl+\tepsli{i-1}\right)\cdot \frac{\tepsl\cdot h_i}{s\beta} + \left(4\tepsl h_{i}+2+\leng(\hat{P},G_i)+\tepsl\leng(\hat{P},G_i)\right)\\
    &\leq \leng(\hat{P},G_i)+\tepsl\leng(\hat{P},G_i) + \left(\frac{1+2\tepsl+\tepsli{i-1}}{s\beta} + 4\right)\cdot \tepsl \cdot h_{i} + 2.
\end{align*}

\underline{Summary.} Summing up all the values above, we get the length of $F^H_P$. Remember that $\leng(P)\le 3h_i$ and the the total number of subpaths is at most 
\[z\le 10\cdot \frac{3h_i}{h_{\cov,i}}=30\beta s/\tepsl\]
according to \Cref{lem:decompose}. There is exactly one jumping subpath and could be many small-cluster and light subpaths.

\begin{align*}
    \leng(F^H_P,G\cup H)-\leng(P,G_i)\le &\underbrace{\frac{30\beta s}{\tepsl}\cdot \tepsl \cdot \frac{\tepsl\cdot h_{i}}{s\beta}}_{\text{small-cluster subpath}}+\underbrace{\frac{30\beta s}{\tepsl}\left(\tepsl+\tepsli{i-1}\right)\cdot \frac{\tepsl\cdot h_i}{s\beta}}_{\text{light subpaths}}+\\
    &\underbrace{\tepsl\cdot \leng(P,G_i)+\left(\frac{1+2\tepsl+\tepsli{i-1}}{s\beta} + 4\right)\cdot \tepsl \cdot h_{i} + 2}_{\text{one jumping subpath}}\\
\end{align*}

By plugging in $\tepsl=\frac{\epsl}{100\cdot 50^{r}},\tepsli{i-1}=\frac{\epsl}{2\cdot 50^{r-i+1}}$ and $\leng(P,G_i)\le 3h_i$, we get 

\begin{align*}
\leng(F^H_P)-\leng(P)&\le \left(72\cdot\tepsl+31\cdot \tepsli{i-1}\right)\cdot h_i\\
&\le \frac{\epsl}{2\cdot 50^{r-i}}\cdot h_i\\
&\le \tepsli{i}\cdot h_i
\end{align*}

This finishes the length proof of \Cref{lem:forwardinduction}.

\paragraph{Congestion of $F^H$} Given that $F^{G}$ has congestion $1$, we will show that (1) $F^{H}$ has congestion $1$ in $G\cup H$, i.e., for every vertex $v\in G\cup H$, $F^H(v)\le U(v)$, and (2) the $V$-preserving-property, i.e.,  $F^H(v)\le F^G(v)$ if $v\in V$. To this end, we categorize the vertices in $G\cup H$ into different types.

\medskip

\noindent\underline{Original vertices in $G$.} For a vertex $v\in V$ where $V$ is the vertex set of $G$, we will show that $F^H(v)\le F^G(v)$. It suffices to show that for every $P\in\path(F^G)$, we have $F^H_P(v)\le F^G_P(v)$ since $F^H=\sum_{P\in\path(F^G)}F^H_P$ and $F^G=\sum_{P\in\path(F^G)}F^G_P$ (here we use $F^G_P$ to denote the flow path $P$ of $F^G$). Recall that we decompose $P$ into vertex disjoint subpaths $\hat{\cP}$ and $F^H_P$ is the concatenation of $F^H_{\hat{P}}$ for $\hat{P}\in\hat{\cP}$, we have $F^H_P(v)=\sum_{\hat{P}\in\hat{\cP}}F^H_{\hat{P}}(v)$. Thus, it suffices to show $F^H_{\hat{P}}(v)\le F^G_{\hat{P}}(v)$ where $F^G_{\hat{P}}$ is the flow subpath $\hat{P}$. 

If $\hat{P}$ is a trivial subpath, then $F^H_{\hat{P}}=\hat{P}$, so $F^H_{\hat{P}}(v)\le F^G_{\hat{P}}(v)$.

If $\hat{P}$ is a small-cluster subpath, then according to the induction hypothesis \Cref{lem:inductionforsmallh}, $F^H_{\hat{P}}(v)\le F^G_{\hat{P}}(v)$.

If $\hat{P}$ is a light subpath, then according to the induction hypothesis \Cref{lem:forwardinduction}, $F^H_{\hat{P}}(v)\le F^G_{\hat{P}}(v)$.

If $\hat{P}$ is a jumping subpath, according to the definition of jumping subpath, $F^H_{\hat{P}}(v)>0$ only if $v$ is on the subpath $\hat{P}$ and $F^H_{\hat{P}}(v)\le\vvalue(P)$. Thus, we have $F^H_{\hat{P}}(v)\le F^G_{\hat{P}}(v)$.

\medskip

\noindent\underline{Vertices (in $H_j$) added in the $j$-th iteration for $j<i$.} Let $v$ be a vertex not in $V$ but in $H_j$ for some $j<i$. 
One can verify that only routing for light subpaths can go through $v$. This is because the routings for small-cluster subpaths use shortcut $H_i$ (which is disjoint from $H_j$), and the routings for jumping subpaths use shortcuts constructed in Step 4 and 5, which is also in $H_i$. Thus, according to induction hypothesis \Cref{lem:forwardinduction}, we have $F^H_S(v)\le U(v)$. 

\medskip

\noindent\underline{Vertices (in $H_i$) added in Step 3 (recursions on small clusters).} Remember that in step 3, we add the LC-flow shortcut $H_S$ for some $S\in \cS_i$ to $H_i$ where $H_S$ might contain vertices outside $V$. Let $v$ be such a vertex. We need to show that $F^H(v)\le F^G(v)$. 

Recall that we define $F_S$ as the union of all small-cluster flow subpaths whose corresponding small-cluster is $S$. According to the definition of flow subpaths, we have $F_S(w)\le F^G(w)$ for every $w\in V$. Thus, $F_S$ has congestion $1$ on $G$. 
Remember that $F^H_S$ is defined as the routing for $\Dem(F_S)$ according to the induction hypothesis \Cref{lem:inductionforsmallh}. 
Moreover, one can verify that only flow in $F^H_S$ can go through $v\in H_S$: the routings for light subpaths are using induction hypothesis \Cref{lem:forwardinduction} which only uses the shortcut $H_j$ with $j<i$ but we have $H_S\subseteq H_i$; the routing for jumping subpath is only using the shortcut for large clusters.
Thus, we have $F^H(v)\le U(v)$ according to the induction hypothesis \Cref{lem:inductionforsmallh}. 

\medskip

\noindent\underline{Vertices (in $H_i$) added in Step 4 (stars on large clusters).} Let $v$ be a vertex $X_S$ for some $S\in\cS_i$, then $X_S$ has capacity $\sum_{v\in V}U_G(v)$, which is at least the value of $F^G$, so we must have $F^H(x_S)\le U(x_S)$. We only need to consider the vertex $x_{S,w}$ for some $w\in S$. 

One can verify that only the routing $F^H_{\hat{P}}$ for jumping subpaths $\hat{P}$ can go through $x_{S,w}$: the routing for small-cluster subpaths only use $H_S$ for small cluster $S$, and the routing for light subpaths only use $H_j$ for $j<i$. Let $\hat{P}$ be a jumping subpath, then $F^H_{\hat{P}}(x_{S,w})=\vvalue(P)\cdot C_{i-1}(v)$ according to the definition of $F^H_{\hat{P}}$. The total value of flow paths in $F^G$ going through $w$ is at most $U(w)$ since $F^G$ has congestion $1$. Thus, if we sum up all jumping subpaths, the value of flow going through $x_{S,w}$ is at most $U(w)\cdot C_{i-1}(v)$. Remember that we define $A_i(v)=C_{i-1}(v)\cdot U(v)$, we have that $F^H(x_{S,w})\le A_i(v)=U(x_{S,w})$, so the congestion on $x_{S,w}$ is at most $1$.

\medskip

\noindent\underline{Vertices (in $H_i$) added in Step 5 (edges connecting large clusters).} Let $v$ be a vertex $\shortcut(x_{S_1},x_{S_2},k)$ for some large clusters $S_1,S_2\in \cS_i$ and $k\in[\bar{k_i}]$. One can verify that only the routing $F^H_{\hat{P}}$ for jumping subpaths $\hat{P}$ can go through $\shortcut(x_{S_1},x_{S_2},k)$: the routing for small-cluster subpaths only use $H_S$ for small cluster $S$, and the routing for light subpaths only use $H_j$ for $j<i$. 

Suppose $F^H_{\hat{P}}$ for a jumping subpath $\hat{P}$ goes through $v=\shortcut(x_{S_1},x_{S_2},k)$, according to the definition of $F^H_{\hat{P}}$, we have $F^H_{\hat{P}}(v)=\vvalue(\hat{P})$ and $\leng(\hat{P})\le (1+\tepsl)^k$. We define a flow $F_{\hat{P}}(S_1,S_2)$ which routes $\vvalue(\hat{P})$ units of flow from $x_{S_1}$ to $x_{S_2}$ with length $\leng(\hat{P})$ in the following way
\begin{enumerate}
    \item for every $w\in P_{\jump,\pre}$, $x_{S_1}$ sends at most $C_{i-1}(w)\cdot \vvalue(\hat{P})$ units of flow to $w$,
    \item along with path $\hat{P}$, the flows on $P_{\jump,\pre}$ goes to $P_{\jump,\suf}$ such that every node $w\in P_{\jump,\suf}$ gets at most $C_{i-1}(w)\cdot \vvalue(\hat{P})$ units of flow,
    \item for every $w\in P_{\jump,\suf}$, $w$ sends at most $C_{i-1}(w)\cdot \vvalue(\hat{P})$ units of flow to $x_{S_2}$.
\end{enumerate}

Let $F(S_1,S_2)$ be the summation of flows $F_{\hat{P}}(S_1,S_2)$ for all jumping subpaths $\hat{P}$ with $\leng(\hat{P})\le (1+\tepsl)^k$ such that $F^H_{\hat{P}}$ goes through $\shortcut(x_{S_1},x_{S_2},k)$. It follows from the definition that $\leng(F(S_1,S_2),G_i)\le 2h_{\diam,i}+(1+\tepsl)^k$. Moreover, every flow $F_{\hat{P}}(S_1,S_2)$ corresponds to a flow path $P\in\path(F^G)$, so $F(S_1,S_2)$ cause congestion $1$ on $G$. For star vertice, $F(S_1,S_2)$ causes congestion $1$ on them by the same argument as our analysis for vertices added in step 4. Thus, $F(S_1,S_2)$ is a single-commodity flow with congestion $1$ from $x_{S_1}$ to $x_{S_2}$ in $G_i\cup H_{S_1}\cup H_{S_2}$. By the definition of $\shortcut(x_{S_1},x_{S_2},k)$, the congestion caused by $F^H$ on it is at most $1$.

\paragraph{Step of $F^H$} By the definition of path decomposition \Cref{lem:decompose}, for $P\in\path(F^G)$, $\step(F^H_P)$ is the summation of $\step(F^H_{\hat{P}})$ for all $\hat{P}\in\hat{\cP}$. We will analyze $\step(F^H_{\hat{P}})$ based on different types of $\hat{P}$.

If $\hat{P}$ is a \underline{trivial subpath}, then $\step(F^H_{\hat{P}})=1$ since a trivial subpath contains exactly $1$ vertex.

If $\hat{P}$ is a \underline{small-cluster subpath}, remember that $\step(F^H_{\hat{P}})$ is a subflow of $F_S$, where $F_S$ is from a recursive call to \Cref{lem:smallhemulator}. Thus, by the induction hypothesis \Cref{lem:inductionforsmallh}, we have
\[\step(F^H_{\hat{P}})\le t(n'/\sigma,\tepsl)\]

where we treat $t$ as a function of $n$ and $\epsl$ according to the parameter specified in \Cref{lem:smallhemulator}. We do not write $\epsk$ as a parameter to $t$ because $\epsk$ never changes during the recursive calls.

If $\hat{P}$ is a \underline{light subpath}, remember that $F^H_{\hP}$ is a subpath of $F_{\light}$, which according to the induction hypothesis \Cref{lem:forwardinduction} has step

\[\step(F^H_{\hP})\le t^{(i-1)}\]

If $\hat{P}$ is a \underline{Jumping subpath}, according to the construction, the step is two times the step $t^{(i-1)}$ plus a constant length path (with $5$ vertices), thus,

\[\step(F^H_{\hP})\le 5+2t^{(i-1)}\]

\underline{Summary.} According to \Cref{lem:decompose}, the total number of subpath of a flowpath in $\path(F^H)$ is $z\le \frac{30\beta s}{\tepsl}$, thus,

\begin{align*}
    \step(F^H)&\le z\cdot \left(1+t(n'/\sigma,\tepsl)+t^{(i-1)}\right)+5+2t^{(i-1)}\\
    &\le \frac{31\beta s}{\tepsl}\cdot \left(t(n'/\sigma,\tepsl)+t^{(i-1)}\right)\\
    &\le \frac{31\beta s}{\tepsl}\cdot t(n'/\sigma,\tepsl)\cdot \left(1+\left(\frac{40\beta s}{\tepsl}\right)^{i}\right)\\
    &\le t(n'/\sigma,\tepsl)\cdot \left(\frac{40\beta s}{\tepsl}\right)^{i+1}\\
    &=t^{(i)}
\end{align*}

\subsection{Proof of correctness: Backward Mapping}\label{subsec:backwardmappingcorrectness}

Let $F^H$ be an arbitrary flow $F^{H}$ in $G\cup H$ with congestion $1$. We will prove the existence of a flow $F^G$ in $G$ with congestion $\kappa$ and a path-mapping from $F^H$ to $F^G$ with length slack $1$. In the next section, we will show the correctness of the backward mapping algorithm, which is almost identical to the proof of backward mapping correctness.

\paragraph{Eliminating $\shortcut(x_{S_1},x_{S_2},k)$.} We first construct a flow $F^{H'}$ in $G\cup H$ routing the demand $\Dem(F^H)$ such that every flow path of $F^{H'}$ does not contain Steiner vertices constructed in Step 5, i.e., $\shortcut(x_{S_1},x_{S_2},k)$.

Let $P\in\path(F^H)$ and let $\hat{P}$ be a flow subpath of $P$ such that

\[\hat{P}=(x_{S_1},\shortcut(x_{S_1},x_{S_2},k),x_{S_2})\]

for some large clusters $S_1,S_2$ and some $k$. We will construct $F^{H'}_{\hP}$ routing the demand $\Dem(\hP)$, then replace $\hP$ by $F^{H'}_{\hP}$ for every such $\hP$. This defines the flow $F^{H'}$. Now we define the routing $F^{H'}_{\hP}$.

We define the largest $((1+\tepsl)^k+2h_{\diam,i}\cdot s)$-length single commodity flow from $x_{S_1}$ to $x_{S_2}$ in the graph $G\cup H_{S_1}\cup H_{S_2}$ as $F^{H'}_{S_1,S_2,k}$. We define $F^{H'}_{\hP}$ as scaling the flow $F^{H'}_{S_1,S_2,k}$ to have flow value exactly $\vvalue(\hP)$, i.e.,
\[F^{H'}_{\hP}=\frac{\vvalue(\hP)}{\vvalue(F^{H'}_{S_1,S_2,k})}\cdot F^{H'}_{S_1,S_2,k}\]

It is easy to see that $\vvalue(F^{H'}_{\hP})=\vvalue(\hP)$ and $\Dem(F^{H'}_{\hP})=\Dem(\hP)$. 

After defining $F^{H'}$, we now continue to define the flow $F^G$ that routes the demand $\Dem(F^{H'})$. For each flow path $P\in\path(F^{H'})$, we will define a flow $F^{G}_{P}$ routing the demand $\Dem(P)$. The final $F^G$ will be the union of $F^G_P$ for $P\in\path(F^{H'})$.

\paragraph{Decompose the Flow Path $P$.} We decompose $P$ into subpaths $\hat{\cal P}=\{\hat{P}_1,...,\hat{P}_z\}$ such that
\begin{itemize}
    \item $\Start(\hat{P}_1)=\Start(P),\End(\hat{P}_z)=\End(P)$ and for every $i\in[z-1]$, we have $\End(\hat{P}_i)=\Start(\hat{P}_{i+1})$,
    \item for every $\hat{P}\in \hat{\cP}$, both $\Start(\hat{P}),\End(\hat{P})$ are in the original vertex set $V(G)$, all inner vertices $V(\hat{P})-\{\Start(\hat{P}),\End(\hat{P})\}$ are not in the original vertex set $V(G)$, i.e., they only contain Steiner nodes $V(G\cup H)-V(G)$.
\end{itemize}

Such a decomposition trivially exists. %
Now we categorize subpaths based on their inner vertices.

If $\cP\in\hat{\cP}$ only contains original vertices $V(G)$, we call $\cP$ a \emph{trivial subpath}.

For a small-cluster $S\in\cS_{\ssmall,i}$, notice that the way we construct $H_S$ guarantees that Steiner nodes $V(H_S)-V(G)$ only have edges to $V(H_S\cup G)$. Thus, if $\hP\in\hat{\cP}$ has inner vertices in $V(H_S)-V(G)$, then all inner vertices of $\hP$ must be in $V(H_S)-V(G)$. We call such path \emph{small-cluster subpath}, and use $S_{\hP}$ to denote the corresponding small cluster.

If $\cP$ contains Steiner vertices as inner vertices, but they are not in $V(H_S)-V(G)$ for some small cluster $S\in\cS_{\ssmall,i}$. Remember the algorithm: step 5 constructs edges not connecting to the original vertices, step 3 constructs edges for small clusters, step 4 construct star shortcut connecting a large cluster. It must be the case that the second vertex of $\cP$ is $x_{S,v}$ for some large cluster $S\in\cS_{\big,i}$. Moreover, $\cP$ does not contain vertices constructed in Step 5. Thus, all inner vertices of $\cP$ are in $V(H_S)-V(G)$. According to step 4, in this case, there are only three inner vertices of $\cP$: $x_{S,v},x_S,x_{S,w}$ for some $v,w\in S$. We call $\cP$ a \emph{large-cluster subpath}, and use $S_{\hP}$ to denote the corresponding large cluster.

\paragraph{Route the Flow Subpaths.} To construct the flow $F^{G}_{P}$, we will first construct, for each subpath $\hat{P}\in\hat{\cal P}$, a flow $F^{G}_{\hat{P}}$ of $\Dem(\hat{P})$ in $G$, and finally let $F^{G}_{P}$ be the concatenation of all these flows $F^{G}_{\hat{P}}$. Consider a subpath $\hat{P}\in{\cal P}$.

\medskip

\noindent{\underline{Trivial subpaths.}} If $\cP$ is a trivial subpath, we set $F^G_{\hat{P}}$ to be just $\cP$.

\medskip

\noindent{\underline{Small-cluster subpaths.}} If $\cP$ is a small-cluster subpath with the corresponding small cluster $S$, we define the flow $F_S$ collecting all the small-cluster subpath as follows

\[F^{H'}_S=\sum_{P\in\path(F^{H'})}\sum_{\substack{\hP\in\hat{\cP}\\ S_{\hat{P}}=S}}\hat{P}\]

According to the induction hypothesis \Cref{lem:inductionforsmallh}, there is a flow $F^G_S$ on the original graph $G$ routing the demand $\Dem(F^{H'}_S)$ with the parameter specified in \Cref{lem:smallhemulator,def:LCFlowShortcut}. We define $F^G_{\hP}$ as the subflow in $F^G_S$ that routes the demand $\Dem(\hat{P})$. 

\medskip

\noindent{\underline{Large-cluster subpaths.}} If $\cP$ is a large-cluster suchpath with the corresponding large cluster $S\in\cS_i$ for some $i\in[r]$, we define the demand $D_S$ collecting all the demand $\Dem(\cP)$ for all large-cluster suchpath as follows. 

\[D_S=\sum_{P\in\path(F^{H'})}\sum_{\substack{\hP\in\hat{\cP}\\ S_{\hat{P}}=S}}\Dem(\hP)\]

The following lemma shows that $D_S$ is not too large compared to node-weighting $A_i$ so we can hope to use \Cref{thm:flow character}.
\begin{lemma}\label{lem:demandcongestion}
    Suppose $S\in\cS_i$ for some $i\in[r]$. Then $D_S$ is $A'_i$-respecting for
    \[A'_i=\cong(F^{H'})\cdot A_i\]
\end{lemma}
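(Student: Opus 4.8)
The plan is to read off from the construction the rigid shape of every large-cluster subpath and then do a simple capacity accounting at the star vertices $x_{S,v}$. Fix the (unique) iteration $i\in[r]$ in which the large cluster $S$ was created, so $S\in\cS_{\big,i}$ and the star $H_S$ was built in Step 4. By the definition of the decomposition in the ``Route the Flow Subpaths'' paragraph, every subpath $\hP\in\hat{\cP}$ with $S_{\hP}=S$ has exactly three Steiner inner vertices and is of the form $\hP=(v,x_{S,v},x_S,x_{S,w},w)$ for some $v,w\in S$. Since flow paths (hence their subpaths) are simple, all five vertices are distinct; in particular $v\neq w$, $\Dem(\hP)$ is supported on the single ordered pair $(v,w)$ with mass $\vvalue(\hP)$, and $\hP$ traverses each of the two vertices $x_{S,v}$ and $x_{S,w}$ exactly once.

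Next I would fix an arbitrary vertex $u$ and bound $D_S(u,\cdot)+D_S(\cdot,u)$. If $u\notin S$, or $A_i(u)=0$ (in which case $x_{S,u}$ has capacity $0$ and carries no flow), then no $S$-large-cluster subpath can have $u$ as an endpoint and the bound is $0$, so assume otherwise. The only edges incident to $x_{S,u}$ are $(x_S,x_{S,u})$ and $(x_{S,u},u)$, so any flow path of $F^{H'}$ through $x_{S,u}$ contains the stretch $u\!-\!x_{S,u}\!-\!x_S$ or its reverse, and therefore the maximal Steiner subpath containing $x_{S,u}$ is precisely an $S$-large-cluster subpath with $u$ as one of its two endpoints. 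Conversely, every $S$-large-cluster subpath contributing to $D_S(u,\cdot)$ starts at $u$ and every one contributing to $D_S(\cdot,u)$ ends at $u$, and in both cases it traverses $x_{S,u}$; by simplicity and $v\neq w$ no single subpath is counted on both sides. Hence
\[
D_S(u,\cdot)+D_S(\cdot,u)\;=\!\!\sum_{\substack{P\in\path(F^{H'}),\ \hP\in\hat{\cP}\\ S_{\hP}=S,\ x_{S,u}\in V(\hP)}}\!\!\vvalue(\hP)\;\le\;F^{H'}(x_{S,u}).
\]

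Finally, since $F^{H'}$ is a flow in $G\cup H$ with congestion $\cong(F^{H'})$ and Step 4 assigns $x_{S,u}$ the capacity $U(x_{S,u})=A_i(u)$, we get $F^{H'}(x_{S,u})\le\cong(F^{H'})\cdot U(x_{S,u})=\cong(F^{H'})\cdot A_i(u)=A'_i(u)$. Combining, $D_S(u,\cdot)+D_S(\cdot,u)\le A'_i(u)$ for every $u$, which is (the stronger, additive form of) $A'_i$-respecting, as claimed. The only step needing care is the middle identity/inequality: one must verify that every unit of $F^{H'}$-flow crossing $x_{S,u}$ lies inside some $S$-large-cluster subpath incident to $u$ (which follows from the edge set at $x_{S,u}$ and the subpath decomposition) and that the two sums $D_S(u,\cdot)$ and $D_S(\cdot,u)$ count disjoint families of subpaths; everything else is bookkeeping.
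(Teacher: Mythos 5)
Your proof is correct and follows essentially the same argument as the paper: every $S$-large-cluster subpath contributing to $D_S(u,\cdot)$ or $D_S(\cdot,u)$ must traverse $x_{S,u}$, whose capacity is $A_i(u)$, so the congestion bound on $F^{H'}$ at $x_{S,u}$ gives the claim. The only (harmless) difference is that you bound the sum $D_S(u,\cdot)+D_S(\cdot,u)$ jointly via disjointness of the two families of subpaths through $x_{S,u}$, while the paper bounds each of $D_S(u,\cdot)$ and $D_S(\cdot,u)$ separately by the same quantity and invokes symmetry.
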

\begin{proof}
    We need to show that for every $v\in S$, we have $D_S(v,\cdot)\le \cong(F^{H'})\cdot A_i(v)$. The other direction $D_S(\cdot,v)\le \cdot A_i(v)$ is symmetric.

    Notice that for a large-cluster subpaths $\cP$ such that $S_{\cP}=S$, it can contribute $\vvalue(\cP)$ to $D_S(v,\cdot)$ only if $\cP=(v, x_{S,v},x_S,...)$. In other words, $\cP$ puts $\vvalue(\cP)$ amount of flow on the vertex $x_{S,v}$. Notice that the total number of flow on $x_{S,v}$ cannot exceed $U(x_{S,v})\cdot \cong(F^{H'})$, we have
    
    \[D_S(v,\cdot )\le U(x_{S,v})\cdot \cong(F^{H'})=A_i(v)\cdot \cong(F^{H'})\]

    This finishes the proof.
\end{proof}

Then we can use \Cref{thm:flow character} to prove the following lemma.

\begin{lemma}\label{lem:routabledemand}
    The demand $D_S$ can be routed by a flow $F^G_S$ of congestion $O(\cong(F^{H'})\cdot \frac{\log n}{\phi})$ and length $h_{\diam,i}\cdot s$.
\end{lemma}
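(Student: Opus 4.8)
The plan is to combine \Cref{lem:demandcongestion} with the expander-decomposition guarantee of Step~1 of $\Emulator$ and the ``good routings'' direction of the routing characterization \Cref{thm:flow character}. Concretely, I would first argue that $D_S$ is an $h_{\diam,i}$-length demand in $G_i=G-C_i$, then pass to a scaled (and hence $A_i$-respecting) demand, route it through the expander $G_i$, and finally scale back and observe that a flow in $G_i$ is a flow in $G$ of no greater length.

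First I would verify that $D_S$ is $h_{\diam,i}$-length in $G_i$. By construction, every large-cluster subpath $\hat{P}$ with $S_{\hat{P}}=S$ has exactly the three inner Steiner vertices $x_{S,v},x_S,x_{S,w}$ for some $v,w\in S$, i.e.\ $\hat{P}=(v,x_{S,v},x_S,x_{S,w},w)$, so $\Dem(\hat{P})$ is a $v$-$w$ demand with both endpoints inside $S$. Since $S$ is a cluster of the neighborhood cover $\cN_i$ constructed in Step~2 on $G_i$ with covering radius $h_{\cov,i}$, and that cover has diameter at most $h_{\diam,i}=\beta\cdot h_{\cov,i}$, we have $\dist_{G_i}(v,w)\le h_{\diam,i}$. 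As $D_S$ is a sum of such $\Dem(\hat{P})$'s, its support lies in $\{(v,w):v,w\in S\}$, hence $D_S$ is $h_{\diam,i}$-length in $G_i$.

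If $D_S$ is the all-zero demand the lemma is trivial, so assume $D_S\neq 0$, which forces $\cong(F^{H'})>0$. By \Cref{lem:demandcongestion}, $D_S$ is $A'_i$-respecting with $A'_i=\cong(F^{H'})\cdot A_i$, so the scaled demand $D'_S:=D_S/\cong(F^{H'})$ is $A_i$-respecting and still $h_{\diam,i}$-length in $G_i$. Recall that Step~1 of $\Emulator$ sets $C_i\leftarrow \textsc{VertexLC-ED}(G,A_i,h_{\diam,i},\phi,\eps)$, which by \Cref{thm:vertexLC-ED} is an $(h_{\diam,i},s)$-length $\phi$-expander decomposition of $A_i$ in $G$; thus $A_i$ is $(h_{\diam,i},s)$-length $\phi$-expanding in $G_i$. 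Applying \Cref{thm:flow character} to the $h_{\diam,i}$-length $A_i$-respecting demand $D'_S$ in $G_i$ yields a flow $F'$ in $G_i$ routing $D'_S$ with congestion $O(\log n/\phi)$ and dilation at most $h_{\diam,i}\cdot s$. Then $F^G_S:=\cong(F^{H'})\cdot F'$ routes $D_S$ with congestion $O(\cong(F^{H'})\cdot \log n/\phi)$ and, since uniformly scaling a flow does not change the length of any flow path, dilation still at most $h_{\diam,i}\cdot s$. Finally, $G$ and $G_i$ share the same edge set and vertex capacities and satisfy $\ell_G\le \ell_{G_i}$ pointwise, so $F^G_S$ is also a valid flow in $G$, with the same congestion and with length at most $h_{\diam,i}\cdot s$ in $G$.

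The only step requiring genuine care is the first one: one must track that both endpoints of every large-cluster subpath contributing to $D_S$ lie in the \emph{same} cluster $S$, and that clusters of $\cN_i$ have diameter at most $h_{\diam,i}$ when measured in $G_i$ (not in $G$) — this is exactly the hypothesis that lets us invoke \Cref{thm:flow character}. The remaining manipulations (scaling the demand down by $\cong(F^{H'})$ to make it $A_i$-respecting, and noting that a flow in $G_i$ is a flow in $G$ of no larger length) are routine.
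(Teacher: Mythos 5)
Your proof is correct and follows essentially the same route as the paper's: scale $D_S$ down by $\cong(F^{H'})$ using \Cref{lem:demandcongestion} to make it $A_i$-respecting, note it is $h_{\diam,i}$-length because the cluster $S$ has diameter at most $h_{\diam,i}$ in $G_i$, invoke \Cref{thm:flow character} in the expander $G_i$, and scale back up, observing that lengths in $G$ are no larger than in $G_i$. The extra care you take with the endpoints of large-cluster subpaths and the degenerate case $D_S=0$ is fine but not a departure from the paper's argument.
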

\begin{proof}
    Given \Cref{lem:demandcongestion}, we have that $\frac{1}{\cong(F^{H'})}\cdot D_S$ is $A_i$-respecting. According to Step 1 (LC-expander decomposition), we have that $G_i$ is a $(h_{\diam,i},s)$-length $\phi$-expander. Notice that $D_S$ is a $h_{\diam,i}$-length demand since the diameter of $S\in\cS_i$ is guaranteed to be bounded by $h_{\diam,i}$. According to \Cref{thm:flow character}, $\frac{1}{\cong(F^{H'})}\cdot D_S$ can be routed in $G_i$ with congestion at most $O(\frac{\log n}{\phi})$ and length at most $h_{\diam,i}\cdot s$. Notice that vertex length in $G_i$ is at least the vertex length in $G$. The lemma follows by scaling the flow up by a factor of $\cong(F^{H'})$.
\end{proof}

We define the flow $F^G_{\hat{P}}$ as the subflow of $F^G_S$ routing the demand $\Dem(\hat{P})$.

\paragraph{Length of $F^G$.} We first analyze the length of $F^{H'}$. Notice that $F^{H'}_{S_1,S_2,k}$ has length at most $((1+\tepsl)^k+2h_{\diam,i}\cdot s)=\ell(\shortcut(x_{S_1},x_{S_2},k))$. Thus, by substituting the subpath $\hat{P}=(x_{S_1},\shortcut(x_{S_1},x_{S_2},k),x_{S_2})$ with $F^{H'}_{S_1,S_2,k}$, then length does not increase. We have $\leng(F^{H'})\le \leng(F^H)$. This length upper bound is path-wise, so the path mapping holds.

Compared to $F^{H'}$, the routing in $F^{G}$ for trivial subpaths does not increase the length, for small-cluster subpaths does not increase the length because of the induction hypothesis \Cref{lem:inductionforsmallh}. The only non-trivial part is for large-cluster subpaths.

According to \Cref{lem:routabledemand}, for $S\in\cS_i$, $\leng(F^G_S)\le h_{\diam,i}\cdot s$. Let $\hat{P}$ be a large-cluster subpath with $S_{\hP}=S$, according to the definition, we must have 
\[\leng(\hP)\ge h_{\diam,i}\cdot s\ge \leng(F^G_S)\]

The first inequality is because $\hP$ must contain a vertex $x_{S,v}$ for some $v\in S$ with length $h_{\diam,i}\cdot s$. Thus, substituting $\hP$ by a subflow of $F^G_S$ does not increase the length.

\paragraph{Congestion of $F^G$.} The rerouting for trivial subpaths does not change the congestion. 

\medskip

\noindent\underline{Small-cluster subpath.} We first need the following lemma certifying $F^{H'}$ does not increase the congestion of small-cluster subpath compared to $F^H$.
\begin{lemma}\label{lem:nonincreasesmallcluster}
    For every small cluster $S$, the collection of all small-cluster subpath $\hat{P}$ with $S_{\hP}=S$, denoted by $F^{H'}_S$, satisfies $\cong(F^{H'}_S)=1$.
\end{lemma}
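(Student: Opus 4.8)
The plan is to bound $F^{H'}_S(v)$ for every vertex $v$ of the graph $G[S]\cup H_S$ that supports $F^{H'}_S$ and show it is at most $U(v)$. I first record how $F^{H'}$ relates to $F^H$: the Step 5 elimination only replaces subpaths $(x_{S_1},\shortcut(x_{S_1},x_{S_2},k),x_{S_2})$ by scaled copies of the maximal length-constrained single-commodity flows $F^{H'}_{S_1,S_2,k}$ in $G\cup H_{S_1}\cup H_{S_2}$, where $S_1,S_2$ are \emph{large} clusters. Since the Steiner vertices of the recursive shortcut $H_S$ of the \emph{small} cluster $S$ are freshly created by the call $\Emulator(G[S],\ldots)$, they are disjoint from the star vertices $x_{S_1},x_{S_1,v},x_{S_2},x_{S_2,v}$ of the large clusters, so the elimination leaves $F^{H'}(v)=F^H(v)$ on every Steiner vertex $v$ of $H_S$ and only decreases the flow on the eliminated shortcut nodes.

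The second structural ingredient is that $H_S$ is internally connected: every edge of $H_S$ incident to a Steiner vertex of $H_S$ has its other endpoint in $V(H_S)\cup S$. Hence, in the decomposition of any flow path $P\in\path(F^{H'})$ into subpaths with endpoints in $V(G)$ and all inner vertices outside $V(G)$, a subpath that contains a Steiner vertex of $H_S$ as an inner vertex has \emph{all} its inner vertices in $V(H_S)\setminus V(G)$ and both endpoints in $S$, hence is exactly a small-cluster subpath $\hat{P}$ with $S_{\hat{P}}=S$. Now for a Steiner vertex $v$ of $H_S$: since every flow path is simple, $v$ lies on at most one subpath of each $P$, and that subpath (whenever it contains $v$) is counted in $F^{H'}_S$; therefore $F^{H'}_S(v)=F^{H'}(v)=F^H(v)\le U(v)$, using the previous paragraph and that $F^H$ has congestion $1$.

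It remains to treat an original vertex $v\in S$. Such a $v$ is never an inner vertex of a small-cluster-$S$ subpath, so it is touched only as an endpoint of such subpaths, and the subpath's edge incident to $v$ must be a star-leaf edge $(v,x_{S',v})$ of $H_S$; using simplicity of $P$ again — each $P$ enters $v$ along one edge and leaves along one edge — one charges the contribution of $P$ to $F^{H'}_S(v)$ against the flow $F^{H'}$ routes through $v$ and concludes $F^{H'}_S(v)\le U(v)$. The main obstacle is exactly this last step: a simple flow path $P$ may pass through $v$ while lying in the Steiner region of $H_S$ on \emph{both} sides of $v$ (via two distinct star-leaf edges $x_{S',v}\neq x_{S'',v}$), so that $v$ becomes the shared endpoint of two consecutive small-cluster-$S$ subpaths. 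Showing the resulting double count still fits within $U(v)$ — for instance by regrouping the maximal portion of $P$ contained in $V(H_S)\cup S$ into a single small-cluster-$S$ subpath before decomposing — is the one place where one genuinely exploits the structure of $H_S$; every other part of the argument is routine bookkeeping given the two structural facts and $\cong(F^H)=1$.
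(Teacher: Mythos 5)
Your first two paragraphs are exactly the paper's proof: the only thing the elimination step creates is the flows $F^{H'}_{S_1,S_2,k}$, which live in $G_i\cup H_{S_1}\cup H_{S_2}$ for \emph{large} clusters $S_1,S_2$ and therefore never touch a Steiner vertex of $H_S$ for a small cluster $S$; hence every small-cluster subpath of $F^{H'}$ is verbatim a subpath of a (simple) flow path of $F^{H}$, and $\cong(F^H)=1$ does the rest. The paper states precisely this and nothing more.

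The "obstacle" in your last paragraph is not a genuine one, and you should not leave it open. Since each flow path $P$ is simple, an original vertex $v\in S$ occurs at exactly one position of $P$, so in the decomposition of $P$ it can appear in at most two subpaths (as the shared endpoint of two consecutive ones); thus $F^{H'}_S(v)\le 2F^{H'}(v)=2F^{H}(v)\le 2U(v)$, i.e.\ the worst case is congestion $2$, not some unbounded double count requiring further structure of $H_S$. A factor of $2$ here is immaterial: $F^{H'}_S$ is only fed into the recursive backward mapping, whose congestion guarantee is $n^{\epsilon_\kappa}$ up to constants, so the final bounds are unaffected. (You have in fact spotted a small imprecision in the paper, whose proof asserts congestion exactly $1$ and silently ignores shared endpoints of consecutive subpaths; the honest statement is congestion $O(1)$, which suffices everywhere it is used.) Your suggested regrouping is not needed and would anyway violate the decomposition's requirement that inner vertices lie outside $V(G)$.
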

\begin{proof}
    For every small-cluster subpath $\hat{P}$, it corresponds to a subpath of $F^H$ because $F^{H'}$ only creates the flow $F^{H'}_{S_1,S_2,k}$ which does not contain any Steiner vertices of $H_S$ for a small cluster $S$. Thus, according to $\cong(F^H)=1$, we have $\cong(F^{H'}_S)=1$.
\end{proof}
Let $\cN_i$ be the neighborhood cover computed in the $i$-th iteration and let $\cS\in\cN_i$ be a vertex disjoint vertex sets. Let $S\in\cS_i$ be a small cluster, according to the induction hypothesis \Cref{lem:inductionforsmallh}, since $F^{H'}_S$ has congestions $1$, the congestion of $F^G_S$ is
\[\cong(F^{G}_{S})\leq \kappa(n'/\sigma)\]

Notice that $F^G_S$ only uses vertices in $S$. Vertex sets in $\cS_i$ are disjoint. Thus, we can collect all small clusters and get
\[\cong\left(F^G_{\ssmall}:=\sum_{i\in[r]}\sum_{\cS\in\cN_i}\sum_{\substack{S\in\cS\\S\text{ is a small cluster}}}F^G_S\right)\le r\omega\cdot \kappa(n'/\sigma)\]
\medskip

\medskip

\noindent\underline{Large-cluster subpath.} We first give the congestion bound of $F^{H'}$.
\begin{lemma}\label{lem:Hprimecongestion}
    $\cong(F^{H'})\le 1+r\cdot \left(\omega\sigma\right)^2\cdot \bar{k}_r$. 
\end{lemma}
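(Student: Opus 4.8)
The plan is to write $F^{H'}$ as the sum of an ``untouched'' part carried over verbatim from $F^{H}$ plus one rerouted flow per Step-5 shortcut node, and then to bound the congestion of each summand separately.

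First I would make the decomposition precise. By construction $F^{H'}$ is obtained from $F^{H}$ by taking every flow path $P\in\path(F^{H})$, replacing each of its maximal subpaths of the form $\hat{P}=(x_{S_{1}},\shortcut(x_{S_{1}},x_{S_{2}},k),x_{S_{2}})$ (for large clusters $S_{1},S_{2}\in\cS_{\big,i}$ of some iteration $i$ and some $k\in[\bar{k}_{i}]$) by the rescaled flow $F^{H'}_{\hat{P}}=\tfrac{\vvalue(\hat{P})}{\vvalue(F^{H'}_{S_{1},S_{2},k})}F^{H'}_{S_{1},S_{2},k}$, and concatenating. The key structural point is that Step 5 only inserts edges between $x_{S_{1}}$ (resp.\ $x_{S_{2}}$) and the shortcut nodes, so every shortcut node $\shortcut(x_{S_{1}},x_{S_{2}},k)$ has \emph{exactly} the two neighbours $x_{S_{1}},x_{S_{2}}$ in $G\cup H$; hence each visit of a shortcut node along a flow path of $F^{H}$ really does occur as such a length-$2$ subpath $\hat{P}$, the replacement $F^{H'}_{\hat{P}}$ has the same endpoint pair, and $\Dem(F^{H'}_{\hat{P}})=\Dem(\hat{P})$ makes the concatenation well defined. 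Writing $F^{H'}_{0}$ for the flow formed by all untouched segments — equivalently, $F^{H}$ with all flow value on edges incident to shortcut nodes deleted, so that $F^{H'}_{0}(v)\le F^{H}(v)$ pointwise — and $F^{H'}_{(i,S_{1},S_{2},k)}:=\sum_{\hat{P}}F^{H'}_{\hat{P}}$ for the sum of all rescaled pieces of the triple $(i,S_{1},S_{2},k)$, we get
\[
F^{H'}=F^{H'}_{0}+\sum_{i\in[r]}\ \sum_{\substack{S_{1},S_{2}\in\cS_{\big,i},\ S_{1}\neq S_{2}\\ k\in[\bar{k}_{i}]}}F^{H'}_{(i,S_{1},S_{2},k)}.
\]

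Second I would bound each summand. Since $F^{H'}_{0}(v)\le F^{H}(v)\le U(v)$ for every vertex $v$, we have $\cong(F^{H'}_{0})\le\cong(F^{H})=1$. Now fix a triple $(i,S_{1},S_{2},k)$. Every subpath $\hat{P}$ contributing to $F^{H'}_{(i,S_{1},S_{2},k)}$ passes through the single vertex $\shortcut(x_{S_{1}},x_{S_{2}},k)$, so $\sum_{\hat{P}}\vvalue(\hat{P})$ is at most the flow of $F^{H}$ through that vertex, which by $\cong(F^{H})=1$ is at most $U(\shortcut(x_{S_{1}},x_{S_{2}},k))$. By the Step-5 definition this capacity equals twice the value of a $0.5$-approximate maximum $((1+\tepsl)^{k}+2h_{\diam,i}s)$-length single-commodity $x_{S_{1}}$-$x_{S_{2}}$ flow, hence is at most $2\,\vvalue(F^{H'}_{S_{1},S_{2},k})$, using that $F^{H'}_{S_{1},S_{2},k}$ is the \emph{maximum} such flow in $G\cup H_{S_{1}}\cup H_{S_{2}}$ (which, having shorter vertex lengths than $G_{i}\cup H_{S_{1}}\cup H_{S_{2}}$, is at least as large). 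Therefore $F^{H'}_{(i,S_{1},S_{2},k)}$ is a rescaling of the feasible (congestion-$1$) flow $F^{H'}_{S_{1},S_{2},k}$ by a factor at most $2$, so its congestion is $O(1)$. Summing over all triples — at most $(\omega\sigma)^{2}$ ordered pairs of large clusters per iteration by \Cref{lem:bigcluster}, at most $\bar{k}_{i}\le\bar{k}_{r}$ values of $k$ (since $h_{i}\le h_{r}=h$ gives $\bar{k}_{i}=\log_{1+\tepsl}(3h_{i})\le\bar{k}_{r}$), and $r$ iterations — plus the base flow yields $\cong(F^{H'})\le 1+r(\omega\sigma)^{2}\bar{k}_{r}$ (absorbing the harmless constant factor; in any case only the crude bound $\cong(F^{H'})\le n^{O(\epsk)}$ is needed downstream, through \Cref{lem:demandcongestion} and \Cref{lem:routabledemand} for the large-cluster rerouting).

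The main obstacle is really the first paragraph: one must verify that replacing the length-$2$ subpaths $\hat{P}$ by the flows $F^{H'}_{\hat{P}}$ is a purely local surgery that removes the shortcut nodes entirely, does not cause different triples to interfere, and yields a genuine flow routing $\Dem(F^{H})$ — this is exactly where the degree-$2$ structure of shortcut nodes and the endpoint/demand matching of $F^{H'}_{\hat{P}}$ are used. Once the clean decomposition $F^{H'}=F^{H'}_{0}+\sum F^{H'}_{(i,S_{1},S_{2},k)}$ is established, the remainder is the single-shortcut-node capacity estimate above together with the counting of triples.
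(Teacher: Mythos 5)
Your proposal is correct and follows essentially the same route as the paper: decompose $F^{H'}$ into the untouched part of $F^{H}$ plus, for each shortcut node $\shortcut(x_{S_1},x_{S_2},k)$, the rescaled copies of $F^{H'}_{S_1,S_2,k}$; bound the total value through each shortcut node by its capacity (using $\cong(F^{H})=1$), which by the Step-5 definition is at most $2\,\vvalue(F^{H'}_{S_1,S_2,k})$, so each per-node contribution has congestion $O(1)$; then count the $r\cdot(\omega\sigma)^{2}\cdot\bar{k}_{r}$ triples via \Cref{lem:bigcluster}. Your explicit handling of the degree-$2$ structure of shortcut nodes and of the $G$ versus $G_i$ length discrepancy is slightly more careful than the paper's, and both arguments absorb the same harmless factor of $2$ in the final bound.
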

\begin{proof}

Let $\shortcut(x_{S_1},x_{S_2},k)$ be a Steiner vertex constructed in Step 5, let $\cP_{S_1,S_2,k}$ contain all subpath $\hat{P}$ of some path $P\in\path(F^H)$ such that $\hP=(x_{S_1},\shortcut(x_{S_1},x_{S_2},k),x_{S_2})$, then we have
    \begin{align*}
        \sum_{\hP\in\cP_{S_1,S_2,k}}F^{H'}_{\hP}&=\sum_{\hP\in\cP_{S_1,S_2,k}}\frac{\vvalue(\hP)}{\vvalue(F^{H'}_{S_1,S_2,k})}\cdot F^{H'}_{S_1,S_2,k}\\
        &\le \frac{U(\shortcut(x_{S_1},x_{S_2},k))}{\vvalue(F^{H'}_{S_1,S_2,k})}\cdot F^{H'}_{S_1,S_2,k}\\
        &\le 2\cdot F^{H'}_{S_1,S_2,k}
    \end{align*}
The first inequality is because $F^H$ has congestion $1$. The second inequality is because of the definition of $\shortcut(x_{S_1},x_{S_2},k)$ and $F^{H'}_{S_1,S_2,k}$. 

Notice that $F^{H'}_{S_1,S_2,k}$ has congestion $1$, so $\sum_{\hP\in\cP_{S_1,S_2,k}}F^{H'}_{\hP}$ has congestion $2$. We have

\[\cong(F^{H'})\le 1+\sum_{i\in[r]}\sum_{S_1,S_2\in\cS_{\big,i}}\sum_{k\in[\bar{k}_i]}\cong(\sum_{\hP\in\cP_{S_1,S_2,k}}F^{H'}_{\hP})\]

The first term $1$ comes from the original congestion from $F^H$. The second term is summing over all possible $S_1,S_2,k$. The size of $S_{\big,i}$ is provided in \Cref{lem:bigcluster}.

Finally, we get  (remember that $\bar{k_i}$ is increasing on $i$)

\[\cong(F^{H'})\le 1+r\cdot \left(\omega \sigma\right)^2\cdot \bar{k}_r\]

Notice that the flow $F^{H'}_{S_1,S_2,k}$ does not contain any Steiner vertices of $H_S$ for a small cluster $S$. Thus, the lemma follows.
\end{proof}

By combining \Cref{lem:bigcluster,lem:routabledemand,lem:Hprimecongestion}, we can collect all the routings of large clusters and get

\[\cong\left(F^G_{\big}:=\sum_{i\in[r]}\sum_{S\in\cS_{\big,i}}F^G_S\right)\le r\cdot \omega\sigma\cdot \left(1+r\cdot (\omega\sigma)^2\cdot\bar{k_r}\right)\cdot \frac{\log n}{\phi}\]

\medskip

\noindent\underline{Summary.} By summing up the congestion of $F^G_{\ssmall}$ and $F^G_{\big}$ and plugging in the corresponding parameters, we get
\begin{align*}
    \cong(F^G)&\le F^G_{\ssmall}+F^G_{\big}\\
    &\le r\omega\cdot\kappa(n'/\sigma)+r\cdot \omega\sigma\cdot \left(1+r\cdot (\omega\sigma)^2\cdot\bar{k_r}\right)\cdot \frac{\log n}{\phi}
\end{align*}

Remember that $\omega\leq n^{\epsk^{2}/10}$, $\sigma = n^{\epsk/6}$, $\phi = 1/n^{\epsk/9}$, $\Omega(\log\log\log n)\leq \epsk\leq 1$ and
\[
\bar{k_r}=\log_{1+\tepsl}(3h)=\tO{1/\tepsl} = \tilde{O}(1)
\]
because $1/\tepsl = \tilde{O}(1)$ for \emph{all} recursive calls in the recursion tree. Concretely, it is not hard to see $1/\tepsl \leq 50^{O(r\cdot \lceil \log_{\sigma}n\rceil)}/\epsl= 50^{O(\epsk^{2})}/\epsl = \tilde{O}(1)$, where we slightly abuse the notation by letting $\epsl$ denote the $\epsl$ of the \emph{root} instance, which satisfies $1/\epsl = \poly\log n$, the condition required in \Cref{lem:smallhemulator}.

Therefore, we have
\[\cong(F^{G})\leq (\omega r)\cdot \kappa(n'/\sigma) + n^{\epsk}\leq \kappa(n')\]
as required by \Cref{lem:inductionforsmallh} (recall that $\kappa(n') = (\omega r + 1)^{\lceil\log_{\sigma}n'\rceil}\cdot n^{\epsk}$).

\subsection{Backward Mapping Algorithm}\label{subsec:backwardmappingalgorithm}

We now describe a backward mapping algorithm for \Cref{lem:smallhemulator}
\[F^G\leftarrow \textsc{BackwardMapping}(G,H,F^H).\]

This algorithm is given $G$, the shortcut $H$ constructed by the procedure $\Emulator(G,\epsl,\epsk,h)$, and the edge representation of a $k$-commodity flow $F^{H}$ on $G\cup H$. From the edge representation of $F^{H}$, we can obtain the $k$-commodity demand $D$ routed by $F^{H}$, and the edge representation of each $q$-th commodity flow $F^{H}_{q}$ of $F^{H}$. We intend to output a $k$-commodity flow $F^{G}$ in $G$ routing $D$, with $\cong(F^{G})\leq n^{\epsk} \cdot \cong(F^{H})$ and $\totlen(F^{G})\leq \totlen(F^{H})$.

    \paragraph{Eliminating $\shortcut(x_{S_1},x_{S_2},j)$.} We will construct an edge representation of a $k$-commodity flow $F^{H'}$ in $G\cup H$ (in fact, $F^{H'}$ is obtained by \emph{updating} $F^{H}$), routing the same $k$-commodity demands as $F^H$ without any flow passing through the shortcut vertices constructed in Step 5. 
    
    For every vertex $\shortcut(x_{S_1},x_{S_2},j)$, suppose there are $\gamma$ units of $k$-commodity flows passing through this vertex in the direction of $x_{S_1}$ to $x_{S_2}$ (we do the same for the reverse direction $x_{S_{2}}$ to $x_{S_{1}}$). We first use \Cref{lemma:LCstFlows} to compute a $0.5$-approximate $((1+\tepsl)^j+2h_{\diam,i}\cdot s)$-length single-commodity max flow $F_{\sg}$ from $x_{S_1}$ to $x_{S_2}$ in the graph $G\cup H_{S_1}\cup H_{S_2}$. Note that $F_{\sg}$ is given in its path representation with $\tilde{O}(m\cdot\poly(h))$ flow paths. Then we scale $F_{\sg}$ to make sure that $\vvalue(F_{\sg}) = \gamma$, i.e., the $k$-commodity flow value passing through $\shortcut(x_{S_1},x_{S_2},j)$.
    
    Finally, we use $F_{\sg}$ to allocate the flow values of every commodity from $x_{S_1}$ to $x_{S_2}$ passing through $\shortcut(x_{S_1},x_{S_2},j)$, as follows. For each commodity $1\leq q\leq k$, suppose $\gamma_{q}$ units of flow are passing through $\shortcut(x_{S_1},x_{S_2},j)$ (note that $\sum_{1\leq q\leq k}\gamma_{q} = \gamma$). The allocation can be done by a straightforward \emph{sequential} algorithm, which just scans flow paths of $F_{\sg}$ one by one to allocate the $\gamma_{q}$ units of the $q$-th commodity flow in increasing order of $q$ (namely, once $\gamma_{q}$ units of the $q$-th commodity are fully allocated, we proceed to $\gamma_{q+1}$). Furthermore, this sequential algorithm can be easily simulated by a parallel algorithm, which takes $\tilde{O}((|\path(F_{\sg})|+k)h)$ work and $\tilde{O}(1)$ depth. The extra $h$ factor in work because we go through each $O(h)$-length flow path in $\path(F_{\sg})$ to update $F^{H}$.

    \paragraph{Recursive routing.} For every small cluster $S$, we run

    \[F^G_S\leftarrow \textsc{BackwardMapping}(G[S],H_S,F^H[H_S])\]

    where $F^H[H_S]$ is a $k$-commodity flow only preserving the flow of $F^{H}$ on edges in $H_S$. We replace the flow $F^H[H_S]$ by $F^G_S$ for every small cluster $S$ to get $F^{H}_{\big}$. Notice that $F^H_{\big}$ routes the same demand as $F^H$.

    \paragraph{Expander routing.} It remains to map the $F^{H}_{\big}$-flows on star edges back to $G$, and we will exploit expander routing. For every large cluster $S$ and every commodity of $F^H_{\big}$ defined as $F^H_{\big,q}$ (for the $q$-th commodity), let $F^{S}_{\big,q}$ be the restriction of $F^{H}_{\big,q}$ on the star graph $H_{S}$. Let $D_{\sg,S,q}$ be a single-commodity demand on vertices $S$, where for each $v\in S$, $D_{\sg,S,q}(v)$ is the net $F^{S}_{\big,q}$-flow value going out $v$. We take a multi-commodity demand $D_{S,q}$ corresponding to $D_{\sg,S,q}$ (recall the definition of \emph{corresponding} in \Cref{sect:PrelimFlow}), such that $\supp(D_{S,q}) = O(|S|)$ (note that such $D_{S,q}$ always exists). Note that, when we get a (multi-commodity) flow routing $D_{S,q}$ later, it is a single-commodity flow routing $D_{\sg,S,q}$.

    Let $D_S$ be the union of all $D_{S,q}$ for $q\in[k]$. We have $\supp(D_S)=O(|S|\cdot k)$.    
    For $i\in [r]$ and $\cS\in\cN_i$, define 
    \[D_{i,\cS,\big}=\sum_{S\in\cS\cap\cS_{\big,i}} D_S\]
    
    Notice that $D_{i,\cS,\big}$ is $h_{\diam,i}$-length because each large cluster has a diameter $h_{\diam,i}$ according to the definition of neighborhood cover. Moreover, $D_{i,\cS,\big}$ is $A_i$ respecting because clusters in $\cS$ are vertex disjoint, so every vertex $v$ has at most $1$ adjacent edges in $\cup_{S\in\cS}H_S$, which is $(v,x_{S,v})$.
    $F^H$ has congestion $1$, so that $F^H_{\big}$ put at most $A_i(u)$ flow on the vertex $x_{S,u}$, this means $D_{i,\cS,\big}$ is $A_i$-respecting.
    
    According to \Cref{thm:vertexLC-ED}, there is an algorithm that outputs a path representation (which can be easily transformed into an edge representation) of a flow routing $D_{i,\cS,\big}$ with the length and congestion specified in \Cref{thm:vertexLC-ED}, let the flow be $F_{i,\cS,\big}$. We replace flow $F^H_{\big}$ on $\cup_{S\in\cS\cap\cS_{i,\big}}H_S$ (i.e. those $F^{S}_{\big}$) by $F_{i,\cS,\big}$ for every $\cS\in\cN_{i}$ and $i\in[r]$ (taking the different commodities into consideration). The resulting flow $F_G$ is on the graph $G$ and will be returned.

\paragraph{Correctness.} Now, we proceed to the correctness. The correctness is based on induction on the size of $G$. The base case is trivial since we do not need to reroute. Now suppose the algorithm is correct for every graph with size less than $G$.

\paragraph{Correctness: congestion.} 
Without loss of generality, assume $\cong(F^{H}) = 1$.

Notice that the way we eliminate $\shortcut(x_{S_1},x_{S_2},k)$ is the same as in \Cref{subsec:backwardmappingcorrectness} with at most a constant factor of loss in congestion. Thus, according to the same proof as in \Cref{lem:Hprimecongestion}, we have 
\[\cong(F^{H'})\le O(r\cdot (\omega\sigma)^2\cdot \bar{k}_r)\]

For the expander routing, according to \Cref{thm:vertexLC-ED}, the congestion applied to $G$ in this step is  at most 
\[r\cdot \omega \cdot n^{\poly(\epsk)}\cdot n^{O(1/\epsk)}=O(n^{\poly(\epsk)})\]

Finally, analogous to the proof of \Cref{lem:inductionforsmallh}, the whole recursion introduces an overhead of $(\omega r+1)^{\lceil \log_{\sigma}n\rceil} = n^{O(\epsk)}$ to the congestion. Therefore, the congestion of $F^{G}$ in $G$ is
\[
\cong(F^{G}) = n^{O(\epsk)} \cdot O(r\cdot (\omega\sigma)^2\cdot \bar{k}_r)\cdot O(n^{\poly(\epsk)})\le n^{\poly(\epsk)}\leq n^{O(\epsk)}.
\]

\paragraph{Correctness: total length.}

Firstly, we show that $F^{H'}$ does not increase the total length of $F^H$. Notice that the way we eliminate $\shortcut(x_{S_1},x_{S_2},k)$ is by replacing a flow containing three vertices $(x_{S_1},\shortcut(x_{S_1},x_{S_2},k),x_{S_2}),x_{S_2})$ by a flow path with length at most $\ell(\shortcut(x_{S_1},x_{S_2},k))$ and the same value $q$. The total length is at most $q\cdot \ell(\shortcut(x_{S_1},x_{S_2},k))$, which is at most the total length of the original flow.

Then, consider the recursive routing step, according to the induction hypothesis, the total length does not increase by rerouting the flow in $H_S$ for small clusters in $H$ to $G$.

Lastly, we show that the expander routing does not increase the total length. Consider an expander routing to the demand $D_{i,\cS,\big}$ which corresponds to an original flow of value $q$ such that each unit of flow must go through $x_{S,v}\in V(H_S)$ for some large cluster $S\in\cS_{i,\big}$. Remember that the length of $x_{S,v}$ is $h_{\diam,i}\cdot s$, thus, the total length is at least $q\cdot h_{\diam,i}\cdot s$.
According to \Cref{thm:vertexLC-ED}, the expander routing results in a flow with length bound $h_{\diam,i}\cdot s$ with the same flow value. Thus, the total length of $F^G$ is $q\cdot h_{\diam,i},\cdot s$. This proves that the total length does not get increased by the expander routing.

\paragraph{Complexity.}
Consider the step for eliminating $\shortcut(x_{S_1},x_{S_2},k)$. For every pair of large clusters in $\cS_{i,\big}$ and $k\in [\bar{k}_i]$ for $i\in [r]$, we have one call to \Cref{lemma:LCstFlows} and a parallel subroutine to allocate the $k$-commodity flow values to the single-commodity flow given by \Cref{lemma:LCstFlows}, both of which take $\tilde{O}(\poly(h)\cdot m)$ work and $\tilde{O}(\poly(h))$ depth.  \Cref{lem:bigcluster} shows that the number of large clusters is $n^{O(\epsk)}$, which dominates $\bar{k}_{i}$ and $r$. Hence, the number of calls to \Cref{lemma:LCstFlows} is $n^{O(\epsk)}$, which takes totally $n^{O(\epsk)}\cdot \poly(h)\cdot m$ work and $\tilde{O}(\poly(h))$ depth. 

The step for recursive routing involves recursive calls, which we will analyze later.

The step for expander routing involves one call to \Cref{thm:vertexLC-ED} expander routing algorithm per $\cS\in\cN_i$ and $i\in [r]$. Each of time involves routing a demand with support size $O(nk)$, so the total work is $(kn+m)\cdot \poly(h)\cdot n^{\poly(\epsk)}\cdot \omega r=(kn+m)\cdot \poly(h)\cdot n^{\poly(\epsk)}$ and the total depth is $\poly(h)\cdot n^{\poly(\epsk)}$. 

The recursion, as analyzed in the complexity of $\Emulator$, in total apply a factor of $n^{O(\epsk)}$ to the total number of edges and vertices. The parameter $k$ does not change in every recursion since the number of flow commodities never increases. Thus, the total work is $(kn+m)\cdot \poly(h)\cdot n^{\poly(\epsk)}$ and total depth is $\poly(h)\cdot n^{\poly(\epsk)}$.

\subsection{Bootstrapping: Proof of \Cref{thm:emulator}}\label{subsec:bootstrapping}

In \Cref{lem:LCompSmallhEmulator}, we first turn the additive length slack in \Cref{lem:smallhemulator} into a multiplicative length slack.

\begin{lemma}\label{lem:LCompSmallhEmulator}
Let $G$ be an undirected graph with vertex lengths and capacities. Let $1/\polylog(n)<\epsl<1$, $1/\Omega(\log\log\log n)<\epsk<1$, and $h\geq 1$ be given parameters. There is an algorithm $\Emulator(G,\epsl,\epsk,h)$ computes an $h$-length $t$-step LC-flow shortcut of $G$ of size $n^{1+O(\epsk)}$ with length slack $(1+2\epsl)$ and congestion slack $n^{\epsk}\cdot O(\log n)$ where
\[t=\left(1/\epsl\right)^{O\left(1/\epsk^2\right)}\cdot \exp(\exp(\log (1/\epsk)\cdot O(1/\epsk)))\]
in $\poly(h)\cdot n^{\poly(\epsk)}$ depth and $\poly(h)\cdot m\cdot n^{\poly(\epsk)}$ work. 

Moreover, the shortcut is associated with a backward mapping algorithm with $\poly(h)\cdot n^{\poly(\epsk)}$ depth and $\poly(h)\cdot (m+k)\cdot n^{\poly(\epsk)}$ work.
\end{lemma}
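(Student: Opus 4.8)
The plan is to lift the additive-slack shortcut of \Cref{lem:smallhemulator} to a multiplicatively-competitive one by the standard ``union over geometric length scales'' idea, the one subtle point being that the union must not blow up the forward congestion. First I would, for each scale $h_j=2^j$ with $j=0,1,\dots,J$ where $J=\lceil\log_2\min\{h,nN\}\rceil=O(\log n)$ (vertex lengths are positive integers bounded by $N=\poly(n)$, so every flow path has length in $[1,nN]$), invoke $\Emulator(G,\epsl,\epsk,h_j)$ of \Cref{lem:smallhemulator} to obtain a path-mapping $V$-preserving $h_j$-length $t$-step shortcut $H_j$ of $G$ with additive length slack $\epsl h_j$ and congestion slack $n^{\epsk}$, arranging the calls to use pairwise-disjoint Steiner vertex sets; the output is $H:=\bigcup_j H_j$. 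Its size is $O(\log n)\cdot n^{1+O(\epsk)}=n^{1+O(\epsk)}$, and the work and depth are $O(\log n)$ times those of \Cref{lem:smallhemulator} with length parameter at most $h$, i.e.\ $\poly(h)\cdot m\cdot n^{\poly(\epsk)}$ and $\poly(h)\cdot n^{\poly(\epsk)}$.

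For the forward mapping, given a feasible $F^G$ in $G$ with $\leng(F^G)\le h$, I would split it by length, $F^G=\sum_j F^G_j$, where $F^G_j$ collects the flow paths $P$ with $2^{j-1}<\leng(P)\le 2^j$; each $F^G_j$ is feasible with $\leng(F^G_j)\le h_j$, so the forward mapping of $H_j$ produces $F^H_j$ in $G\cup H_j$ routing $\Dem(F^G_j)$ with congestion $1$, step at most $t$, a path-mapping of additive slack $\epsl h_j$, and $F^H_j(v)\le F^G_j(v)$ for every $v\in V$. Setting $F^H:=\sum_j F^H_j$: each Steiner vertex of $H_j$ receives flow only from $F^H_j$ (the $H_j$ are disjoint off $V$), so its congestion is at most $1$; each original vertex $v$ satisfies $F^H(v)=\sum_j F^H_j(v)\le\sum_j F^G_j(v)=F^G(v)\le U(v)$; and $\step(F^H)\le t$. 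Since $h_j=2^j\le 2\leng(P)$ for $P\in\path(F^G_j)$, the resulting path-mapping from $F^G$ to $F^H$ sends $P$ to a flow of length $\le\leng(P)+\epsl h_j\le(1+2\epsl)\leng(P)$, i.e.\ multiplicative length slack $1+2\epsl$.

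For the backward direction I would use the structural fact (inherent in the construction of \Cref{lem:smallhemulator}) that every Steiner vertex of $H_j$ is adjacent only to $V\cup V(H_j)$, so in any feasible $F^H$ on $G\cup H$ with $V(\Dem(F^H))\subseteq V$, cutting each flow path at its original vertices yields subpaths that each lie entirely inside a single $G\cup H_j$ (or use only edges of $G$). Collecting the scale-$j$ subpaths into a flow $F^H_j$ (which has $\cong(F^H_j)\le 2\cong(F^H)$, since a simple path meets a given vertex at most once and so contributes to at most two subpaths, and has demand supported in $V$), applying the backward mapping of $H_j$ to each, and re-concatenating the images along the original paths gives a flow $F^G$ in $G$ routing $\Dem(F^H)$ with a length-slack-$1$ path-mapping and $\cong(F^G)\le n^{\epsk}\cdot O(\log n)$. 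The backward mapping \emph{algorithm} is the edge-representation version of the same: restrict $F^H$ (per commodity) to the edges of $H_j$ to obtain a $k$-commodity flow $\Phi_j$ on $G\cup H_j$ whose per-commodity net demand is supported in $V$ (conservation at Steiner vertices of $H_j$ is inherited) and with $\cong(\Phi_j)\le\cong(F^H)$, run the backward mapping algorithm of $H_j$ on $\Phi_j$, and output $F^H|_{E(G)}+\sum_j F^G_j$; since $\{E(H_j)\}_j$ partitions $E(H)$, this routes the same $k$-commodity demand, has $\totlen(F^G)\le\totlen(F^H)$, has congestion $n^{\epsk}\cdot O(\log n)\cdot\cong(F^H)$, and costs $O(\log n)$ times the resources of \Cref{lem:smallhemulator}, matching the stated bounds.

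The step I expect to need the most care is the forward-mapping congestion argument: a naive union over $\Theta(\log n)$ scales would multiply the load on original vertices by $\Theta(\log n)$, and it is precisely the $V$-preserving strengthening of \Cref{def:hLCFlowShortcut} (the guarantee $F^H_j(v)\le F^G_j(v)$) that keeps the combined flow feasible — which is exactly why that strengthening was isolated in \Cref{lem:smallhemulator}. The dual $O(\log n)$ loss on the backward side is genuine and is what the congestion slack $n^{\epsk}\cdot O(\log n)$ in the statement records; the remaining work is routine bookkeeping of the scale decomposition and of the disjointness of the Steiner vertices.
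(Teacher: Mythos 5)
Your proposal is correct and follows essentially the same route as the paper's proof: take the union of the additive-slack shortcuts $H_j=\Emulator(G,\epsl,\epsk,2^j)$ over $O(\log n)$ geometric length scales, split the forward flow by path length so that $\epsl h_j\le 2\epsl\cdot\leng(P)$ converts the additive slack into multiplicative slack $1+2\epsl$, rely on the $V$-preserving property to keep the combined forward congestion at $1$, and decompose backward flow paths at original vertices into per-scale pieces, paying the $O(\log n)$ factor in congestion slack. The only differences are cosmetic (your explicit factor-$2$ bookkeeping for subpath congestion and the per-commodity edge-representation details), so no further comparison is needed.
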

\begin{proof}
    The general idea is that we compute $h'$-length flow shortcut with addtive length slack for all $h' \le h$.
    Namely, let $h_{j} = 2^{j}$ for $0 \le j \le \lceil\log_{2} h \rceil$, and we apply \Cref{lem:smallhemulator} to compute $H_{j} = \Emulator(G,\epsl,\epsk,h_{j})$ for each $j$.
    We denote the resulting flow shortcut as $H = \bigcup_{j}H_j$.
    Since there are at most $O(\log n)$ many shortcuts $H_{j}$, the overall edge size is still $n^{1+O(\epsk)}$.

\medskip
    
    \noindent\textbf{Forward Mapping.} For any flow $F^{G}$ in $G$ with congestion $1$, we decompose it into $F_j^{G}$ such that $h_{j-1} \le \leng(F_j^{G}, G) \le h_{j}$ for each $j$,
    and thus $F^{G} = \sum_{j}F_j^{G}$.
    Since $H_{j}$ is an $h_{j}$-length flow shortcut by \Cref{lem:smallhemulator}, then we have a flow $F^{H_{j}}$ in $G \cup H$ with $\Dem(F_j^{G}) = \Dem(F^{H_{j}})$ and $F^{H_{j}}$ has congestion $1$, 
    and step $t$.
    Note that the congestion from $F^{H_{j}}$ remains the same for parts in graph $G$, and congestion over each $H_{j}$ are independent from each other.
    Furthermore, there is a path-mapping from $F_{j}^{G}$ to $F^{H_{j}}$ with additive length slack $\delta = \epsl\cdot h_{j}$.
    Namely, for each path $P \in \path(F_{j}^{G})$, the path-mapping maps it to some subflows $F_{P}^{H_{j}}$ of $F^{H_{j}}$ with the same flow value of $P$ and \[\leng(F_{P}^{H_{j}},G\cup H_{j}) \le \leng(P, G) + \epsl\cdot h_{j} \le \leng(P, G) + 2\epsl\cdot \leng(P, G) \le (1+2\epsl)\leng(P, G),\]
    which certifies the path-mapping to admit a multiplicative length slack of $(1+2\epsl)$.
    By the union of $F^{H_{j}}$ for all $j$, we have a flow $F^{H}$ routing $\Dem(F^{G})$ with the same congestion $1$ and step $t$.
    What is more, similarly by the union of path-mapping from each $H_{j}$, there is a path-mapping from $F^{G}$ to $F^{H}$ with length slack $(1+2\epsl)$.

\medskip

    \noindent\textbf{Backward Mapping.}  
    Given the flow $F^{H}$ in $G \cup H$ with congestion $1$,
    for any path $P \in \path(F^{H})$, let $P^{G}$ and $P^{H_{j}}$ denote the path restrict in $G$ and $H_{j}$ respectively.
    We further let $F^{H_{j}}$ be the flow collecting the flow subpaths restricted in $H_{j}$ from each $P$,
    and $F_{0}^{G}$ be the one collecting parts from $P$ in $G$.
    Then by the shortcut property of $H_{j}$, we have a flow $F_{j}^{G}$ in $G$ routing the same demand with congestion $n^{\epsk}$.
    Again by concatenating flows $F_{j}^{G}$ for all $j$, we have a flow $F^{G}$ routing the demand $\Dem(F^{H})$. 
    We note that $\lceil\log_{2} h \rceil = O(\log n)$, and since we just take the union of $O(\log n)$ many flow shortcuts with the same congestion slack $n^{\epsk}$, the overall congestion of $F^{G}$ is at most $n^{\epsk}\cdot O(\log n)$.
    Since each $P^{H_{j}}$ is mapped to some subflows in $G$ with at most the same length by path-mapping from each $H_{j}$, we can concatenate those subflows together and the overall length is at most $\leng(P, G\cup H)$.

    Moreover, the backward mapping algorithm works similarily as the analysis above, which concludes the backward mapping property for $H$.

\medskip

    \noindent\textbf{Running Time.} 
    Each shortcut $H_{j}$ can be computed independently, thus there is only an extra $O(\log n)$ factor in the total work compared to \Cref{lem:smallhemulator}, which can be incorporated in the polynomial factor over $\epsk$.
    The argument also applies to the running time of the backward mapping algorithm.
\end{proof}

Now we start the bootstrapping. At the beginning, we receive parameters $1/\poly\log(n)<\epsl<1$ and $1/\Omega(\log\log\log n)<\epsk, \epsh<1$ from the input of \Cref{thm:emulator}. Let $c$ be some sufficiently large constant. Let
\[
\tepsl = \epsilon_{h}\cdot \epsl/c\qquad\text{and}\qquad\tepsk = \epsilon_{h}\cdot \epsk/c.
\]
Then we refer to
\begin{align*}
\lambda_{0} &= 1 + 2\tepsl,\\
\kappa_{0} &= n^{\tepsk}\cdot O(\log n) = n^{O(\tepsk)},\\
t_{0} &= \left(1/\tepsl\right)^{O\left(1/\tepsk^2\right)}\cdot \exp(\exp(\log (1/\tepsk)\cdot O(1/\tepsk)))
\end{align*}
as the length slack, congestion slack and step bound from \Cref{lem:LCompSmallhEmulator} with $\tepsl$ and $\tepsk$ as input. 

Next, we let $h = n^{\epsh}$, $d = \lceil \log_{h}N\rceil = O(1/\epsh)$ and 
\[
h_{0} = \frac{20(\lambda_{0}+\tepsl)^{d}\cdot t_{0}\cdot h}{\tepsl} = n^{O(\epsh)}
\]

\paragraph{Construct Shortcut $H$.} We let ${\cal A}_{0}$ be the algorithm $\Emulator(G_{0},\tepsl,\tepsk,h_{0})$ we will bootstrap. Note that given a graph $G_{0}$, ${\cal A}_{0}$ will compute an $h_{0}$-length flow shortcut $H_{0}$ of $G_{0}$ with length slack $\lambda_{0}$, congestion slack $\kappa_{0}$ and step $t_{0}$. 

Recall that $G$ denotes the input graph of \Cref{thm:emulator}. We let $V(G^{a}) = \{v\in V(G)\mid \l_{G}(v) \le a\}$ and denote by $G^{a}$ the graph induced by $V(G^{a})$.
Consider the following bootstrapping procedure. Initially, let $H_{1}$ be an $h_{0}$-LC flow shortcut of $G_{1}:=G^{h}$ by calling ${\cal A}_{0}$ on $G_{1}$, and let $\bar{H}_{1} = G_{1}\cup H_{1}$. For each $2\leq i\leq d$, 
\begin{itemize}
\item Let $G_{i} = \bar{H}_{i-1}\cup (G^{h^{i}} \setminus G^{h^{i-1}})$.%
\item Let $G'_{i}$ be $G_{i}$ but with different vertex lengths, and $\l_{i} = \tepsl\cdot h^{i-1}/(10t_{0})$ denote the length regularization factor for $i$, for each vertex $v\in V(G'_{i})$,
\[
\ell_{G'_{i}}(v)= \lceil \ell_{G_{i}}(v)/\l_{i} \rceil.
\]
\item Let $H'_{i}$ be an $h_{0}$-length flow shortcut of $G'_{i}$ by calling ${\cal A}_{0}$ on $G'_{i}$.
\item Let $H_{i}$ be $H'_{i}$ but with different vertex lengths:  for each vertex $v\in V(H_{i})$,
\[
\ell_{{H}_{i}}(v)= \ell_{ H'_{i}}(v)\cdot \l_{i}.
\]
\item Let $\bar{H}_{i} = G_{i} \cup H_{i}$.
\end{itemize}

We let the final shortcut $H$ be
\[ H = \bigcup_{j\leq d} H_{j}.\] A key observation is that $G\cup H$ is exactly the same as $G_{d}\cup H_{d}$, because we choose $d$ such that $h^{d}$ is at least the maximum edge length.

\paragraph{Shortcut Size.} When $i=1$ we have $|V(G_{1})|\leq n$. For each $i\geq 2$, we have
\[
|V(G_{i})| \leq |V(\bar{H}_{i-1})| + n\leq |V(G_{i-1})| + |V(H_{i-1})| + n\leq |V(G_{i-1})| + (|V(G_{i-1})|)^{1+O(\tepsk)} + n,
\]
where the last inequality is by \Cref{lem:LCompSmallhEmulator}. Therefore, we have
\[
|V(G_{i})|\leq n^{1+O(i\cdot\tepsk)}\leq n^{1+O(\tepsk/\epsh)}
\]
since $d = O(1/\epsh)$. Using \Cref{lem:LCompSmallhEmulator} again, we have
\[
|H| = \sum_{i}|H_{i}|\leq \sum_{i}|V(G_{i})|^{1+O(\tepsk)}\leq n^{1+O(\tepsk/\epsh)}/\epsh\leq n^{1+\epsk}
\]
by our choice of $\tepsk$.

\paragraph{Shortcut Construction Work and Depth.} We call the backward mapping algorithm of ${\cal A}_{0}$ on the graphs $G'_{i}$ sequentially. Each $G'_{i}$ has at most $n^{1+\epsk}$ vertices and $m + n^{1+\epsk}$ edges from our argument on shortcut size above. By \Cref{lem:LCompSmallhEmulator}, the work is 
\[
\poly(h_{0})\cdot (m+n^{1+\epsk})\cdot (n^{1+\epsk})^{\poly(\tepsk)}\cdot O(1/\epsh) = n^{O(\epsh+\epsk)}\cdot m,
\]
and the depth is
\[
\poly(h_{0})\cdot (n^{1+\epsk})^{\poly(\tepsk)}\cdot O(1/\epsh) = n^{O(\epsh)}.
\]

\paragraph{Shortcut Quality.} The quality of $H$ is given by the lemma below, and we defer the proof to \Cref{sect:ProofBootstrapQuality}.
\begin{lemma}
\label{lemma:BootstrapQuality}
$H$ is an LC-flow shortcut of $G$ with length slack $\lambda = (\lambda_{0} + \tepsl)^{d} + \tepsl/5$, congestion slack $\kappa = (\kappa_{0}+1)^{d}$ and step $t = t_{0}$.
\end{lemma}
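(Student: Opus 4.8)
The statement to prove is Lemma \ref{lemma:BootstrapQuality}: the bootstrapped shortcut $H = \bigcup_{j \le d} H_j$ is an LC-flow shortcut of $G$ with the claimed length slack, congestion slack, and step. We need to verify the two conditions in Definition \ref{def:LCFlowShortcut}, namely the forward mapping (congestion $1$, step $t$, length slack $\lambda$) and the backward mapping (congestion slack $\kappa$, length slack $1$), plus the matching claim for the backward mapping algorithm. The main tools are Lemma \ref{lem:LCompSmallhEmulator} applied to each graph $G'_i$, and the key structural observation that $G \cup H = G_d \cup H_d$ (since $h^d$ exceeds the max edge length, every vertex of $G$ has been absorbed into the hierarchy by level $d$).

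**Plan for the forward mapping.** I would argue by induction on $i$ from $1$ to $d$ that any congestion-$1$ flow $F$ in $G_i$ can be rerouted into a congestion-$1$, step-$t_0^{?}$ flow in $\bar H_i = G_i \cup H_i$ with an appropriately controlled length blow-up. The base case $i=1$ is just Lemma \ref{lem:LCompSmallhEmulator} applied to $G_1 = G^h$, giving length slack $\lambda_0$ and step $t_0$. For the inductive step, given a flow in $G_{i}$, I first apply the shortcut guarantee of $H'_i$ on the rescaled graph $G'_i$ (where lengths are divided by $\ell_i$ and rounded up): a congestion-$1$ flow maps to one in $G'_i \cup H'_i$ with length slack $\lambda_0$ and step $t_0$. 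Undoing the rescaling multiplies lengths back by $\ell_i$; the rounding-up step introduces an additive error of at most $\ell_i$ per edge along the flow path, and since flow paths in $H'_i$ have at most $t_0$ edges this contributes at most $t_0 \cdot \ell_i = \tepsl h^{i-1}/10$ extra length — this is where the factor $h_0$ and the definitions of $\ell_i$ and $h_0$ are engineered to make the additive term small relative to $h^{i-1}$, hence foldable into a $(1+\Theta(\tepsl))$ multiplicative slack on paths of length $\Omega(h^{i-1})$. One subtlety: a flow path in $G_i$ of length $\ell$ might have length much less than $h^{i-1}$, so the "fold into multiplicative" step needs care; I expect the argument handles this by noting that paths wholly inside the already-processed part $\bar H_{i-1}$ are handled by the inductive hypothesis rather than by $H_i$, and only paths that genuinely reach into the newly added vertices $G^{h^i}\setminus G^{h^{i-1}}$ — which forces length $\ge h^{i-1}$ — get the level-$i$ treatment. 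Composing over all $d$ levels, the length slack multiplies to $(\lambda_0 + \tepsl)^d$ plus a lower-order additive term absorbed into $\tepsl/5$, and the step multiplies to $t_0^d$ — wait, but the lemma claims step exactly $t_0$; so in fact the rerouting at each level must be applied only to the \emph{new} portions and the steps do not compound, because once a subpath lives in $\bar H_{i-1}$ it is already low-step and is not re-shortcut. I would make this precise by tracking that $H_i$ is a shortcut of $G_i$ which \emph{contains} $\bar H_{i-1}$ as a subgraph, so a single application of Lemma \ref{lem:LCompSmallhEmulator} at the top level $i=d$ (to $G'_d$, noting $G \cup H = G_d \cup H_d$) already yields step $t_0$ directly — the earlier $H_i$'s are just there so that $G_d$ has bounded size. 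Congestion stays $1$ throughout since each application preserves congestion $1$ exactly and the $H_i$ edge sets are on disjoint Steiner vertices.

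**Plan for the backward mapping.** Symmetrically, given a congestion-$1$ flow $F^H$ in $G \cup H = G_d \cup H_d$ with demand supported in $V(G)$, I apply the backward mapping of $H'_d$ (after length rescaling, which only \emph{decreases} total length since we rounded lengths up in $G'_d$) to get a flow in $G_d$ routing the same demand with congestion slack $\kappa_0$. But $G_d$ still contains Steiner vertices from $H_{d-1}, \dots, H_1$; so I recurse, peeling off one level at a time: apply the backward mapping of $H_{i}$ to push flow from $\bar H_i$ down into $G_i = \bar H_{i-1} \cup (\text{new vertices})$, each time multiplying congestion by a factor $\kappa_0$ (the "+1" in $(\kappa_0+1)^d$ presumably accommodates the already-present congestion of the incoming flow before this level's rerouting, or a rounding artifact). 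After $d$ levels we land in $G_1 \subseteq G$, and since demand was supported in $V(G)$ the flow is genuinely in $G$. Length only goes down (path-mapping with length slack $1$) because every backward step has length slack $1$ and the rescalings are length-non-increasing. The associated backward mapping algorithm claim follows by running the $d$ backward mapping algorithms of Lemma \ref{lem:LCompSmallhEmulator} in sequence on the $G'_i$, with the work/depth bounds already tallied in the "Shortcut Construction Work and Depth" paragraph.

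**Main obstacle.** The delicate point is the length-slack bookkeeping across the rescaling-and-rounding at each level: showing that the additive errors $t_0 \cdot \ell_i$ accumulated over $d$ levels, together with the geometric blow-up $\lambda_0^d$, telescopes to exactly $(\lambda_0+\tepsl)^d + \tepsl/5$ rather than something larger, and in particular that the rounding error at level $i$ is always dominated by a $\tepsl$-fraction of the relevant path length — which requires the argument that any flow path processed at level $i$ has length at least $\Omega(h^{i-1}/\epsilon\text{-factors})$, i.e. that short paths never escape the already-built lower levels. I would isolate this as a sub-claim (paths entirely within $G^{h^{i-1}}$ stay within $\bar H_{i-1}$) proved from the construction $G_i = \bar H_{i-1} \cup (G^{h^i}\setminus G^{h^{i-1}})$ and the fact that vertices in $G^{h^{i-1}}$ have all their neighbors' contributions already inside $\bar H_{i-1}$. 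Everything else is routine composition of the two-sided guarantees of Lemma \ref{lem:LCompSmallhEmulator}.
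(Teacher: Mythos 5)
Your plan follows the paper's proof essentially step for step: induct over the $d$ bootstrap levels, at each level decompose a path into length-$\le h^{i-1}$ subpaths, route those through $\bar H_{i-1}$ by the inductive hypothesis, pass to the rescaled graph $G'_i$, apply the $h_0$-length guarantee of $H'_i$, and undo the rescaling; the backward direction peels levels from $d$ down to $1$ with congestion $(\kappa_0+1)$ per level (your guess about the ``$+1$'' is exactly right) and length non-increasing because the rescaling rounds lengths up. The quantities you isolate ($t_0\cdot \l_i = \tepsl h^{i-1}/10$, the need for path length $\ge h^{i-1}$ at level $i$, the final split of $F^G$ by length scale) are the ones the paper uses.

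One claim in your sketch is wrong and would derail the write-up if taken literally: that ``a single application of Lemma~\ref{lem:LCompSmallhEmulator} at the top level $i=d$ already yields step $t_0$ directly --- the earlier $H_i$'s are just there so that $G_d$ has bounded size.'' The lower levels are not there for size control; they are what makes the level-$i$ forward mapping applicable at all. $H'_i$ is only an $h_0$-length shortcut of $G'_i$, and the rescaled length of a path is bounded by $\leng(\cdot)/\l_i$ \emph{plus its number of edges} (because of the ceiling in $\ell_{G'_i}$). A raw $h^d$-length flow in $G$ can have $\Omega(m)$ edges per path, so its image in $G'_d$ is nowhere near $h_0$-length. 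It is precisely the inductive step bound $t_0$ on the $\bar H_{i-1}$-routings of the length-$\le h^{i-1}$ subpaths (together with the $O(h)$ count of subpaths) that caps the rescaled length at $h_0$; only then does the level-$i$ shortcut apply and reset the step to $t_0$. So the correct statement is: steps do not compound because the level-$i$ shortcut application is the \emph{last} operation at level $i$ and re-shortcuts the whole path, but it can only do so because the lower levels first made the path low-step. Your own additive-error bookkeeping in the ``main obstacle'' paragraph already relies on this, so the fix is local.
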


Recall that $d = O(1/\epsh)$, $\lambda_{0} = 1 + 2\tepsl$, $\tepsl = \epsh\cdot\epsl/c$ and $\tepsk = \epsh\cdot \epsk/c$ for some sufficiently large $c$. For $H$, the length slack is $1+\epsl$, the congestion slack is $n^{\epsk}$, and the step is
\[
t = \left(1/(\epsl\epsh)\right)^{O\left(1/(\epsk\epsh)^2\right)}\cdot \exp(\exp(\log (1/(\epsk\epsh))\cdot O(1/(\epsk\epsh))))
\]

\paragraph{Algorithmic Backward Mapping.}

Recall that the input of the backward mapping algorithm is the edge representation of a $k$-commodity flow $F^{H}$ in $G\cup H$. We want to output the edge representation of a $k$-commodity flow $F^{G}$ in $G$ routing the same $k$-commodity demand, satisfying that $\cong(F^{G})\leq \kappa\cdot\cong(F^{H})$ and $\totlen(F^{G})\leq \totlen(F^{H})$. We note that all the operations between flows (such as addition and subtraction) are under their edge representations (the operations are natural so we will not define them formally).

The algorithm is as follows. Initially $F^{H}$ is a flow in $G_{d}\cup H_{d}$. Start from the last $i=d$ and let $F^{H_{d}} = F^{H}$.
We iterate $i$ from $d$ to $1$.
\begin{enumerate}
\item Let $F^{H_{i}}_{\true}$ be the restriction of $F^{H_{i}}$ on $H_{i}$. Let $F^{G_{i}}_{\old} = F^{H_{i}} - F^{H_{i}}_{\true}$ be the restriction of $F^{H_{i}}$ on $G_{i}$.
\item Apply the associated backward mapping algorithm of the shortcut $H_{i}$ (precisely, it should be $H'_{i}$) with $F^{H_{i}}_{\true}$ as the input. The output is the edge representation of a $k$-commodity flow $F^{G_{i}}_{\new}$ in $G'_{i}$ (so also $G_{i}$) with $\totlen(F^{G_{i}}_{\new},G'_{i})\leq \totlen(F^{H_{i}}_{\true}, H'_{i})$ and $\cong(F^{G_{i}}_{\new},G'_{i})\leq \kappa_{0}\cdot \cong(F^{H_{i}}_{\true},H'_{i})$. Because $\totlen(F^{H_{i}}_{\true}, H_{i}) = \ell_{i}\cdot  \totlen(F^{H_{i}}_{\true}, H'_{i})$ and $\totlen(F^{G_{i}}_{\new}, G_{i})\leq \ell_{i}\cdot \totlen(F^{G_{i}}_{\new}, G'_{i})$, we have \[
\totlen(F^{G_{i}}_{\new},G_{i})\leq \totlen(F^{H_{i}}_{\true},H_{i}).\]
Also, we have\[
\cong(F^{G_{i}}_{\new},G_{i})\leq \kappa_{0}\cdot \cong(F^{H_{i}}_{\true},H_{i})
\]
because the capacity of $G_{i}$ and $G'_{i}$ (also $H_{i}$ and $H'_{i}$) is the same.

\item Let $F^{G_{i}} = F^{G_{i}}_{\old} + F^{G_{i}}_{\new}$. Intuitively, we concatenate two flows $F^{G_{i}}_{\old}$ and $F^{G_{i}}_{\new}$, and observe that the concatenation indeed forms a flow. Naturally, we have
\[
\totlen(F^{G_{i}},G_{i})\leq \totlen(F^{G_{i}}_{\old},G_{i}) + \totlen(F^{H_{i}}_{\true},H_{i}) = \totlen(F^{H_{i}},G_{i}\cup H_{i}),\text{ and}
\]
\[
\cong(F^{G_{i}},G_{i})\leq \cong(F^{G_{i}}_{\old},G_{i}) + \kappa_{0}\cdot\cong(F^{H_{i}}_{\true},H_{i}) \leq (\kappa_{0}+1)\cdot \cong(F^{H_{i}},G_{i}\cup H_{i}),
\]

\item Let $F^{H_{i-1}}$ be the restriction of $F^{G_{i}}$ on $\bar{H}_{i-1} = G_{i-1}\cup H_{i-1}$. Let $F^{G,i}_{\output}$ be the restriction of $F^{G_{i}}$ on $(G^{h^{i}}\setminus G^{h^{i-1}})$. This means
\[
\totlen(F^{H_{i-1}},G_{i-1}\cup H_{i-1}) = \totlen(F^{G_{i}},G_{i}) - \totlen(F^{G,i}_{\output},G^{h^{i}}\setminus G^{h^{i-1}}),
\]
\[
\cong(F^{H_{i-1}},G_{i-1}\cup H_{i-1}) \leq \cong(F^{G_{i}},G_{i})\text{ and }\cong(F^{G,i}_{\output},G^{h^{i}}\setminus G^{h^{i-1}}) \leq \cong(F^{G_{i}},G_{i})
\]
\end{enumerate}
Let the final output be $F^{G} = \sum_{1\leq i\leq d} F^{G,i}_{\output}$. From the analysis above, we know
$\totlen(F^{G},G)\leq \totlen(F^{H},G\cup H)$ and $\cong(F^{G},G)\leq (\kappa_{0}+1)^{d}\cdot \cong(F^{H},G\cup H)$. So our backward mapping algorithm has the desired quality.

Regarding the work and depth, note that we call the backward mapping algorithm of ${\cal A}_{0}$ on the graphs $G'_{i}$ sequentially. Each $G'_{i}$ has at most $n^{1+\epsk}$ vertices and $m + n^{1+\epsk}$ edges from our argument on shortcut size above. By \Cref{lem:LCompSmallhEmulator}, the work is 
\[
\poly(h_{0})\cdot (m+n^{1+\epsk} + (n^{1+\epsk})k)\cdot (n^{1+\epsk})^{\poly(\tepsk)}\cdot O(1/\epsh) = n^{O(\epsh+\epsk)}\cdot (m + nk),
\]
and the depth is
\[
\poly(h_{0})\cdot (n^{1+\epsk})^{\poly(\tepsk)}\cdot O(1/\epsh) = n^{O(\epsh)}.
\]

\subsubsection{Proof of \Cref{lemma:BootstrapQuality}}
\label{sect:ProofBootstrapQuality}
\begin{lemma}[Forward Mapping]
The following holds for each $1\leq i\leq d$. Let $F^{G}$ be an $h^{i}$-length flow. Then there is a flow $F^{H_{i}}$ in $\bar{H}_{i}$ routing $\Dem(F^{G})$ with $\cong(F^{{H}_{i}})\leq \cong(F^{G})$, 
and $\step(F^{{H}_{i}})\leq t_{0}$. 
Moreover, there is a path-mapping from $F^{G}$ to $F^{H_{i}}$ with length slack $(\lambda_{i},\delta_{i})$
where $\lambda_{i} = (\lambda_{0} + \tepsl)^{i}$ and $\delta_{i} = \tepsl h^{i-1}/5$.
\label{lemma:BootstrappingForward}
\end{lemma}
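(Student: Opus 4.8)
The plan is to prove \Cref{lemma:BootstrappingForward} by induction on $i$; the case $i=d$ then yields the forward-mapping clause of \Cref{lemma:BootstrapQuality}, because $\bar{H}_{d}=G_{d}\cup H_{d}=G\cup H$ and because, given any flow $F^{G}$ in $G$ of length $L$ and the smallest $i$ with $h^{i}\ge L$, we have $h^{i-1}<L$, so the lemma produces an image in $\bar{H}_{i}\subseteq G\cup H$ of length at most $(\lambda_{0}+\tepsl)^{i}L+\tepsl h^{i-1}/5\le \big((\lambda_{0}+\tepsl)^{d}+\tepsl/5\big)L=\lambda L$, which is the purely multiplicative slack needed there. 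Since every construction below is linear in the flow, it suffices to treat $\cong(F^{G})=1$ and recover $\cong(F^{H_{i}})\le\cong(F^{G})$ by scaling. For the base case $i=1$: an $h$-length flow visits only vertices of $G$-length at most $h$, hence lives in $G_{1}=G^{h}$, and since $h\le h_{0}$ it is an $h_{0}$-length flow there; applying the forward mapping of ${\cal A}_{0}$ (\Cref{lem:LCompSmallhEmulator}) on $G_{1}$ gives $F^{H_{1}}$ in $\bar{H}_{1}$ with congestion $1$, step $t_{0}$, and a path-mapping of multiplicative slack $\lambda_{0}\le\lambda_{0}+\tepsl=\lambda_{1}$, which is in particular a $(\lambda_{1},\delta_{1})$-slack mapping.

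For the inductive step $i\ge2$, assume the claim for $i-1$. An $h^{i}$-length flow $F^{G}$ lies in $G^{h^{i}}\subseteq G_{i}$. Decompose each flow path $P$ as in \Cref{lem:decompose}, but with the rule that a vertex $v$ forms its own trivial subpath exactly when $\ell_{G}(v)>h^{i-1}$; this gives $z=O(\leng(P)/h^{i-1})=O(h)$ subpaths, each trivial or of $G$-length at most $h^{i-1}$, with the key feature that trivial subpaths sit on vertices of $V(G)\setminus V(G^{h^{i-1}})$, which are disjoint from $V(\bar{H}_{i-1})$ (the Steiner vertices of $\bar{H}_{i-1}$ lie outside $V(G)$, and its non-Steiner vertices are exactly $V(G^{h^{i-1}})$). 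Collect all non-trivial subpaths into one $h^{i-1}$-length flow $F_{\mathrm{low}}$ in $G^{h^{i-1}}\subseteq G$ and invoke the induction hypothesis to reroute it into $F^{H_{i-1}}_{\mathrm{low}}$ in $\bar{H}_{i-1}$ with step $\le t_{0}$, congestion $\le1$, and $(\lambda_{i-1},\delta_{i-1})$-slack. Concatenating the per-subpath images with the untouched trivial subpaths along the original $G^{h^{i}}$-edges produces a flow $\tilde F$ in $G_{i}$ routing $\Dem(F^{G})$; it still has congestion $\le1$ because the rerouted mass stays on $V(\bar{H}_{i-1})$ while the trivial mass stays on the disjoint new vertices, and each of its paths has $G_{i}$-length at most $\lambda_{i-1}\leng(P)+z\,\delta_{i-1}$ and step $O(z\,t_{0})$.

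Now view $\tilde F$ in $G'_{i}$: since $\ell_{G'_{i}}(v)\le\ell_{G_{i}}(v)/\l_{i}+1$, each path has $G'_{i}$-length at most $(\lambda_{i-1}\leng(P)+z\delta_{i-1})/\l_{i}+O(z t_{0})$, which by $\l_{i}=\tepsl h^{i-1}/(10 t_{0})$, $\delta_{i-1}=\tepsl h^{i-2}/5$, $z=O(\leng(P)/h^{i-1})$ and $\leng(P)\le h^{i}$ is at most $h_{0}$ with room to spare; so $\tilde F$ is an $h_{0}$-length flow in $G'_{i}$, and applying the forward mapping of ${\cal A}_{0}$ on $G'_{i}$ yields $\hat F$ in $G'_{i}\cup H'_{i}$ with step $\le t_{0}$, congestion $\le1$, and multiplicative slack $\lambda_{0}$. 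Set $F^{H_{i}}:=\hat F$ reinterpreted in $\bar{H}_{i}=G_{i}\cup H_{i}$ (the same graph with lengths multiplied back by $\l_{i}$, and the same capacities, so congestion is unchanged); using $\ell_{\bar{H}_{i}}(v)\le\l_{i}\cdot\ell_{G'_{i}\cup H'_{i}}(v)$, its paths have $\bar{H}_{i}$-length at most $\lambda_{0}\big(\lambda_{i-1}\leng(P)+z\delta_{i-1}\big)+\lambda_{0}\l_{i}\cdot O(z t_{0})$. Composing the path-mappings $F^{G}\to\tilde F\to F^{H_{i}}$ and substituting the bounds on $z$, $\delta_{i-1}$, $\l_{i}$, the $\leng(P)$-proportional part of the length collapses to $(\lambda_{0}\lambda_{i-1}+O(\tepsl))\leng(P)\le\lambda_{i}\leng(P)$ (using $\lambda_{i}=\lambda_{0}\lambda_{i-1}+\tepsl\lambda_{i-1}$ and $\lambda_{i-1}\ge\lambda_{0}$ for $i\ge2$), while the $\leng(P)$-independent part is $O(\tepsl h^{i-1})\le\delta_{i}=\tepsl h^{i-1}/5$; as the parameters $\l_{i},h_{0}$ were chosen very loosely these inequalities hold, and the step stays $t_{0}$ and the congestion stays $\le1$ since both are inherited directly from ${\cal A}_{0}$.

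The main obstacle is the coupled quantitative accounting of the inductive step. The rescaling factor $\l_{i}$ must be small enough that undoing it does not blow up lengths, yet large enough that the rounding error incurred in passing to $G'_{i}$ (which scales like $\l_{i}\cdot\step$) stays controlled, all while keeping $\tilde F$ of $G'_{i}$-length below $h_{0}$ so that ${\cal A}_{0}$ applies at all; and making the composed path-mapping land exactly at slack $(\lambda_{i},\delta_{i})$ requires carefully charging the $\leng(P)$-proportional error (from $z\delta_{i-1}$ and from the $O(z t_{0})$ rounding steps) against the surplus factor $\tepsl\lambda_{i-1}$ that $\lambda_{i}=(\lambda_{0}+\tepsl)^{i}$ leaves over $\lambda_{0}\lambda_{i-1}$, and the residual $\leng(P)$-independent error against $\delta_{i}$. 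I expect this constant-chasing, together with the bookkeeping that prevents congestion from ever increasing (for which the modified trivial-subpath rule, so trivial subpaths avoid $V(\bar{H}_{i-1})$, is essential), to be the bulk of the work rather than any conceptual difficulty; following the paper's stated philosophy I would carry generous slack through every inequality instead of optimizing constants.
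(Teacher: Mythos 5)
Your proposal is correct and follows essentially the same route as the paper: the same decomposition of each flow path into trivial subpaths (single vertices of length exceeding $h^{i-1}$) and non-trivial subpaths of length at most $h^{i-1}$, the same use of the induction hypothesis on the non-trivial part, the same rescaling to $G'_{i}$ to verify the $h_{0}$-length bound before invoking ${\cal A}_{0}$, and the same composition of path-mappings with the surplus $\tepsl\lambda_{i-1}$ absorbing the $\leng(P)$-proportional error and $\delta_{i}$ absorbing the rest. The deferred constant-chasing is exactly what the paper carries out, and its computation confirms the inequalities you leave implicit (e.g., the length-independent error is $\lambda_{0}(2/h+1)\tepsl h^{i-1}/10\leq \tepsl h^{i-1}/5$).
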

As mentioned in the construction, we take $H = H_{\le d}$ as the resulting flow shortcut, and for any flow $F^{G}$,
we decompose it as $F^{G} = \bigcup_{i} F^{G_{i}}$ such that each path $P \in \path(F^{G_{i}})$ has length $h^{i-1} \le \leng(P, G) \le h^{i}$.
From \cref{lemma:BootstrappingForward}, for each $F^{G_{i}}$, there is a flow $F^{H_{i}}$ routing the same demand with $\cong(F^{H_{i}}) \le \cong(F^{G_{i}})\le \cong(F^{G})$ and $\step(F^{H_{i}}) \le t_{0}$.
And there exists a path-mapping from $F^{G_{i}}$ to $F^{H_{i}}$ s.t. for each path $P \in \path(F^{G_{i}})$ and the corresponding subflows $F_{P}^{H_{i}}$ in $G \cup H$, we have
\[\leng(F_{P}^{H_{i}}, G \cup H) \le (\lambda_{0} + \tepsl)^{i}\cdot\leng(P,G) + \tepsl h^{i-1}/5.\]
Note that $\leng(P, G) \ge h^{i-1}$, we further have
\[\leng(F_{P}^{H_{i}}, G \cup H) \le (\lambda_{0} + \tepsl)^{d}\cdot\leng(P,G) + (\tepsl/5)\cdot\leng(P,G),\]
which shows that the path-mapping from each $H_{i}$ has a length slack $(\lambda_{0}+\tepsl)^{d} + \tepsl/5$. 
Let $F^{H} = \bigcup_{i}F^{H_{i}}$, since each shortcut $H_{i}$ is disjoint, and the congestion over vertices in $G$ is unchanged, we conclude $\cong(F^{H}) \le \cong(F^{G})$, $\step(F^{H}) \le t_{0}$ and the overall path-mapping has length slack $(\lambda_{0}+\tepsl)^{d} + \tepsl/5$.

\paragraph{Forward Mapping: Proof of \Cref{lemma:BootstrappingForward}}

For $i=1$, we note that $h_{0} > h$, and thus the statement holds by the way we construct $H_{1}$ and \Cref{lem:LCompSmallhEmulator}. To prove the statement for a general $i\geq 2$, we will use induction and assume the statement holds for $i-1$.

\paragraph{Decompose Flow Paths of $F^{G}$.} For each flow path $P\in\path(F^{G})$, similarly as \cref{lem:decompose}, we decompose it into flow subpaths $\hat{\cal P}$ such that each subpath $\hat{P}\in\hat{\cal P}$ either has $\leng(\hat{P}, G)\leq h^{i-1}$ (called a \emph{non-trivial subpath}) or contains exactly one vertex $v$ with vertex length larger than $h^{i-1}$ (called a \emph{trivial subpath}). 
We note that a simple greedy argument shows there exists a decomposition $\hat{\cal P}$ of size 
\[
|\hat{\cal P}|\leq 2\leng(P)/h^{i-1} \leq 2h+1.
\]
Let $F^{G}_{\sshort}$ collect all non-trivial flow subpaths.

\paragraph{Route $\Dem(F^{G}_{\sshort})$ in $\bar{H}_{i-1}$.} By the construction, $F^{G}_{\sshort}$ have length at most $h^{i-1}$ and congestion at most $\cong(F^{G})$. By the induction hypothesis, there is a flow $F^{H_{i-1}}_{\sshort}$ in $\bar{H}_{i-1}$ routing $\Dem(F^{G}_{\sshort})$ on $\bar{H}_{i-1}$ with $\cong(F^{H_{i-1}}_{\sshort})\leq \cong(F^{G})$ and $\step(F^{H_{i-1}}_{\sshort})\leq t_{0}$.
And there is a path-mapping from $F^{G}_{\sshort}$ to $F^{H_{i-1}}_{\sshort}$ with length slack $(\lambda_{i-1},\delta_{i-1})$.

\paragraph{Route $\Dem(F^{G})$ in $G_{i}$ and $G'_{i}$.} We construct a routing $F^{G_{i}}$ of $\Dem(F^{G})$ in $G_{i}$ by, for each flow path $P\in\path(F^{G})$, routing $\Dem(P)$ in $G_{i}$ via $F^{G_{i}}_{P}$ defined as follows (we use $F^{G'_{i}}$ denote the same routing on $G'_{i}$).  
\begin{itemize}
\item For each non-trivial subpath $\hat{P}\in{\cal \hat{P}}$, the corresponding flow $F^{H_{i-1}}_{\hat{P}}$ in $\bar{H}_{i-1}\subseteq G_{i}$ routes $\Dem(\hat{P})$. Note that $\step(F^{H_{i-1}}_{\hat{P}})\leq \step(F^{H_{i-1}}_{\sshort})\leq t_{0}$, and the length slack of the path-mapping guarantees
\[
\leng(F^{H_{i-1}}_{\hat{P}}, H_{i-1})\leq \lambda_{i-1}\cdot \leng(\hat{P}, G) + \delta_{i-1}
\]
\item For each trivial subpath $\hat{P}\in\hat{\cal P}$ containing vertex $v$, clearly $h^{i-1} < \l_{G}(v) \le h^{i}$, and we route $\Dem(\hat{P})$ in $G_{i}$ using the same vertex $v$.
\end{itemize}
Lastly, we get $F^{G_{i}}_{P}$ by replacing each non-trivial subpath with $F^{H_{i-1}}_{\hat{P}}$ and keeping each trivial subpath unchanged. 
Clearly we can bound the the congestion of $F^{G_{i}}$ by $\cong(F^{G_{i}})\leq \max\{\cong(F^{H_{i-1}}_{\sshort}),1\} = \cong(F^{H_{i-1}}_{\sshort})\leq \cong(F^{G})$. 
Note that we can naturally define a path-mapping $\pi_{F^{G}\to F^{G'_{i}}}$ by mapping each $P\in \path(F^{G})$ to $F^{G_{i}}_{P}$.
We let $F^{G'_{i}}_{P}$ be the same routing as $F^{G_{i}}_{P}$ but with the different length function.
By the way we define the vertex lengths $\ell_{G'_{i}}(\cdot)$ of $G'_{i}$, for each flow path $P\in\path(F^{G})$, we have
\begin{align*}
\leng(F^{G'_{i}}_{P},G'_{i})&\leq \sum_{\substack{\text{non-trivial subpath}\\\hat{P}\in \hat{\cal P}}}\left( \frac{\leng(F^{H_{i-1}}_{\hat{P}},\bar{H}_{i-1})}{\l_{i}} + \step(F^{H_{i-1}}_{\hat{P}})\right) + \sum_{\substack{\text{trivial subpath}\\\hat{P}=\{v\}\in\cal{\hat{P}}}}\left(\frac{\ell_{G}(v)}{\l_{i}}+1\right)\\
&\leq \sum_{\substack{\text{non-trivial subpath}\\\hat{P}\in \hat{\cal P}}}\left( \frac{\lambda_{i-1}\cdot\leng(\hat{P},G) +\delta_{i-1}}{\l_{i}} + t_{0}\right) + \sum_{\substack{\text{trivial subpath}\\\hat{P}=\{v\}\in\cal{\hat{P}}}}\left(\frac{\ell_{G}(v)}{\l_{i}}+1\right)\\
&\leq \frac{\lambda_{i-1}\cdot \leng(P,G)}{\l_{i}} + \frac{|\hat{\cal P}|\cdot \delta_{i-1}}{\l_{i}} + |\hat{\cal P}|\cdot t_{0}\\
&\leq  \frac{\lambda_{i-1}\cdot \leng(P,G)}{l_{i}}+ |\hat{\cal P}|\cdot (2/h+1)t_{0}
\end{align*}
where the last inequality we use $\delta_{i-1} = \tepsl h^{i-2}/5$ and $\l_{i} = \tepsl h^{i-1}/(10t_{0})$.
Plugging in $\leng(P,G)\leq h^{i}$, and $|\hat{\cal P}|\leq 2h$, we further have
\[
\leng(F^{G_{i}}_{P},G'_{i})\leq \frac{10\lambda_{i-1}t_{0}}{\tepsl}\cdot h + (2h+1)\cdot (2/h+1)t_{0}\leq \frac{20(\lambda_{0} + \tepsl)^{d}t_{0}}{\tepsl}\cdot h = h_{0}.
\]
That is, $F^{G'_{i}}$ is an $h_{0}$-length flow on $G'_{i}$.

\paragraph{Route $\Dem(F^{G})$ in $H'_{i}$ and $H_{i}$.} By the induction hypothesis and because $F^{G'_{i}}$ is an $h_{0}$-length flow on $G'_{i}$, there is a flow $F^{H'_{i}}$ in $G'_{i} \cup H'_{i}$ routing $\Dem(F^{G})$ (we use $F^{H_{i}}$ denote the same routing in $G_{i} \cup H_{i}$) with $\cong(F^{H'_{i}})\leq \cong(F^{G'_{i}})\leq \cong(F^{G})$ and $\step(F^{H'_{i}})\leq t_{0}$ as desired. 
Furthermore, there is a path-mapping $\pi_{F^{G'_{i}}\to F^{H'_{i}}}$ from $F^{G'_{i}}$ to $F^{H'_{i}}$ with length slack $\lambda_{0}$. 

Namely, for each $P\in\path(F^{G})$, from above we can map it to $F_{P}^{G'_{i}}$ with path-mapping $\pi_{F^{G}\to F^{G'_{i}}}$.
Further by composing it with $\pi_{F^{G'_{i}}\to F^{H'_{i}}}$, we obtain a path-mapping $\pi_{F^{G}\to F^{H'_{i}}}$ that maps each $P\in\path(F^{G})$ to a flow $F^{H'_{i}}_{P}$ in $G'_{i} \cup H'_{i}$
such that
\[
\leng(F^{H'_{i}}_{P},G'_{i} \cup H'_{i})\leq \lambda_{0}\cdot \leng(F^{G'_{i}}_{P},G'_{i}).
\]
Since $F^{H_{i}}$ and $F^{H'_{i}}$ are the same routing again, we use $\pi_{F^{G}\to F^{H_{i}}} = \pi_{F^{G}\to F^{H'_{i}}}$ denote the same path-mapping from $F^{G}$ to $F^{H_{i}}$.
By the way we define the vertex length $\ell_{H_{i}}(\cdot)$ of $H_{i}$, for each path $P \in \path(F^{G})$, 
we have
\begin{align*}
\leng(F^{H_{i}}_{P},G_{i} \cup H_{i}) &\le \leng(F^{H'_{i}}_{P},G'_{i}\cup H'_{i})\cdot \l_{i}\\
&\leq \lambda_{0}\cdot\leng(F^{G_{i}}_{P},G'_{i})\cdot \l_{i}\\
&\leq \lambda_{0}\cdot \left(\frac{\lambda_{i-1}\cdot \leng(P,G)}{\l_{i}} + |\hat{\cal P}|\cdot(2/h + 1) t_{0} \right)\cdot \l_{i}\\
& \leq \lambda_{0} \cdot \lambda_{i-1} \leng(P,G) + \lambda_{0}\cdot\left(\frac{2\leng(P,G)}{h^{i-1}} + 1\right)\cdot(2/h + 1)\cdot t_{0}\l_{i}\\
& \le (\lambda_{0}\lambda_{i-1} + \tepsl)\leng(P,G) + \lambda_{0}\cdot(2/h + 1) \cdot \tepsl h^{i-1}/10\\
& \leq (\lambda_{0} + \tepsl)\cdot\lambda_{i-1}\leng(P,G) + \tepsl h^{i-1}/5.
\end{align*}
For the fourth inequality, we use the fact that $|\hat{\cal P}| \le 2\leng(P,G)/h^{i-1}+1$, and we apply $\l_{i} = \tepsl h^{i-1}/(10t_{0})$ in the fifth inequality. Note that we assume $\lambda_{0}$ is close to $1$ and $h$ is large enough such that $\lambda_{0}\cdot(2/h + 1) \le 2$. 
This concludes the \cref{lemma:BootstrappingForward}.

\begin{lemma}[Backward Mapping]\label{lemma:BootstrappingBackward}
The following holds for each $1\leq i\leq d$. Let $F^{H_{i}}$ be a flow with $V(\Dem(F^{H_{i}})) \subseteq V(G)$ and congestion $1$ in $G_{i}\cup H_{i}$. Then there is a flow $F^{G}$ in $G$ routing $\Dem(F^{H_{i}})$ with $\cong(F^{G})\leq (\kappa_{0}+1)^{i}$. 
And here is a path-mapping from $F^{H_{i}}$ to $F^{G}$ with length slack $1$.
\end{lemma}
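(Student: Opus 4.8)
The plan is to prove the lemma by induction on $i$, following the algorithmic backward mapping described just above for \Cref{lem:smallhemulator}. In fact I will prove the slightly stronger statement: for \emph{any} flow $F^{H_i}$ in $G_i\cup H_i$ with $V(\Dem(F^{H_i}))\subseteq V(G)$ there is a flow $F^G$ in $G$ routing $\Dem(F^{H_i})$ that is supported only on the vertices of $G$ of length at most $h^i$, has $\cong(F^G)\le(\kappa_0+1)^i\cdot\cong(F^{H_i})$, and admits a path-mapping from $F^{H_i}$ to $F^G$ of length slack $1$ (the stated lemma is the case $\cong(F^{H_i})=1$; stating it for arbitrary congestion is what makes the induction close, since the flow handed to the recursive call will not have congestion $1$). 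For the base case $i=1$, recall $G_1=G^h$ is a subgraph of $G$ and $H_1$ is an $h_0$-length LC-flow shortcut of $G_1$ from \Cref{lem:LCompSmallhEmulator}; the only vertices of $G_1\cup H_1$ outside $V(G)$ are Steiner vertices of $H_1$, so $V(\Dem(F^{H_1}))\subseteq V(G_1)$ and the backward-mapping property of $H_1$ (\Cref{def:hLCFlowShortcut}) directly produces such an $F^G$ in $G_1\subseteq G$ with $\cong(F^G)\le\kappa_0\cdot\cong(F^{H_1})\le(\kappa_0+1)\cdot\cong(F^{H_1})$ and a length-slack-$1$ path-mapping.

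For the inductive step $i\ge2$, I would first \emph{peel off the level-$i$ shortcut}. Decompose every flow path of $F^{H_i}$ into maximal segments lying in $H_i$ and maximal segments lying in $G_i$; collect the former into $F^{H_i}_{\true}$ and the latter into $F^{G_i}_{\old}$. Since the non-Steiner endpoints of $H_i$-edges lie in $V(G_i)=V(G'_i)$, we have $V(\Dem(F^{H_i}_{\true}))\subseteq V(G'_i)$, so the backward-mapping property of the shortcut $H'_i$ of $G'_i$ gives a flow $F^{G'_i}_{\new}$ in $G'_i$ routing $\Dem(F^{H_i}_{\true})$ with congestion $\le\kappa_0\cdot\cong(F^{H_i}_{\true})$ and a length-slack-$1$ path-mapping measured with the rescaled lengths $\ell_{G'_i}$. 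Because $\ell_{G_i}(v)\le\l_i\ell_{G'_i}(v)$ for all $v$ while $\ell_{H_i}(v)=\l_i\ell_{H'_i}(v)$ exactly, this path-mapping still has length slack $1$ with respect to the $G_i$-/$H_i$-lengths; reconnecting $F^{G_i}_{\new}$ (the image of $F^{G'_i}_{\new}$) with $F^{G_i}_{\old}$ along the original segment decomposition yields a flow $F^{G_i}$ in $G_i$ routing $\Dem(F^{H_i})$, with $\cong(F^{G_i})\le(\kappa_0+1)\cdot\cong(F^{H_i})$ and a composed length-slack-$1$ path-mapping $F^{H_i}\to F^{G_i}$. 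Next I would \emph{recurse on the old part}: since $G_i=\bar{H}_{i-1}\cup(G^{h^i}\setminus G^{h^{i-1}})$, split $F^{G_i}$ along the boundary between $\bar{H}_{i-1}$ and the newly added part of $G$ into $F^{H_{i-1}}$, the segments through $\bar{H}_{i-1}$ (a flow in $G_{i-1}\cup H_{i-1}$ whose demand lies in $V(G)$, indeed on vertices of length $\le h^{i-1}$), and $F^{G,i}_{\output}$, the remaining segments (a flow in $G$ supported on vertices of length $>h^{i-1}$); both inherit congestion $\le\cong(F^{G_i})$ and length-slack-$1$ path-mappings. Applying the induction hypothesis to $F^{H_{i-1}}$ gives $F^{G,<i}$ in $G$ routing $\Dem(F^{H_{i-1}})$, supported on vertices of length $\le h^{i-1}$, with $\cong(F^{G,<i})\le(\kappa_0+1)^{i-1}\cong(F^{H_{i-1}})\le(\kappa_0+1)^i\cong(F^{H_i})$. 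Finally, set $F^G$ to be $F^{G,i}_{\output}$ re-concatenated with $F^{G,<i}$ (replacing, in $F^{G_i}$, each $\bar{H}_{i-1}$-segment by the corresponding piece of $F^{G,<i}$ and keeping the new-part segments); then $F^G$ routes $\Dem(F^{H_i})$ in $G$, is supported on vertices of length $\le h^i$, the composed path-mapping has length slack $1$, and, because $F^{G,i}_{\output}$ and $F^{G,<i}$ are supported on the disjoint vertex ranges $(h^{i-1},h^i]$ and $(0,h^{i-1}]$, their congestions do not add, so $\cong(F^G)\le\max\{\kappa_0+1,(\kappa_0+1)^i\}\cdot\cong(F^{H_i})=(\kappa_0+1)^i\cong(F^{H_i})$.

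I expect the main obstacle to be the bookkeeping at the region boundary together with the composition of path-mappings. Concretely, one must fix a consistent rule for assigning each boundary vertex of $V(G^{h^{i-1}})$ to exactly one of the two sides when splitting $F^{G_i}$, so that $F^{G,i}_{\output}$ genuinely uses no vertex of length $\le h^{i-1}$ --- this is precisely what prevents its congestion from compounding with that of the recursively produced $F^{G,<i}$ --- and one must verify that the three path-mappings in play, namely peeling $H_i$, the length-rescaling $G'_i\leftrightarrow G_i$, and the recursive call, compose to a single path-mapping $F^{H_i}\to F^G$ whose length slack is \emph{exactly} $1$; that is, that the length in $G$ of each reconnected flow path is at most the length of the original flow path in $G_i\cup H_i$, which is in turn at most its length in $G\cup H$. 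Once the boundary convention is fixed, the congestion recursion $\cong(F^{G_i})\le(\kappa_0+1)\cong(F^{H_i})$ and the support claim for $F^{G,<i}$ follow immediately from the recursive structure (every flow produced strictly below level $i$ lives in some $G_j$ with $j<i$, all of whose $G$-vertices have length $\le h^j\le h^{i-1}$), and everything else is routine.
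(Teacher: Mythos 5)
Your proof is correct and follows essentially the same route as the paper's: peel off the level-$i$ shortcut via the backward mapping of $H'_{i}$ (transferring lengths through the $\l_{i}$-rescaling), then split $F^{G_{i}}$ along the boundary of $\bar{H}_{i-1}$ and recurse, taking the maximum of the two congestions because the supports live on length-disjoint vertex sets. Your explicit strengthening to arbitrary input congestion merely makes precise a scaling the paper uses implicitly when it applies the induction hypothesis to $F^{H_{i-1}}$, whose congestion is $\kappa_{0}+1$ rather than $1$.
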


\paragraph{Backward Mapping: Proof of \Cref{lemma:BootstrappingBackward}}

For $i = 1$, let $G_{1} = G^{h}$, and we note that $H_{1}$ is the $h_{0}$-length flow shortcut over $G_{1}$,
thus the statement similarly holds from the definition of LC-flow shortcut. For $i \ge 2$, we then assume the statement holds for $i-1$ and use induction to prove the lemma.

Given the flow $F^{H_{i}}$ with $V(\Dem(F^{H_{i}})) \subseteq V(G)$ and congestion $1$ in $G_{i} \cup H_{i}$, 
we would like to show that the demand $\Dem(F^{H_{i}})$ can be routed in $G$ with congestion $\kappa_{0}^{i}$ and at most the same length.
    
\paragraph{Route $\Dem(F^{H_{i}})$ in $G'_{i} \cup H_{i}'$.} By the way we define the vertex length between $G_{i}, H_{i}$ and $G'_{i}, H'_{i}$ respectively,
this naturally gives the same routing $F^{H'_{i}}$ in $G'_{i} \cup H'_{i}$ with the same congestion $1$ but different length. 
Further we let the trivial path-mapping between them as $\pi_{F^{H_{i}}\to F^{H'_{i}}}$.
    
\paragraph{Route $\Dem(F^{H_{i}})$ in $G_{i}'$.}
For each path $P \in \path(F^{H'_{i}})$, we decompose it into two collections of subpaths $\hat{\cP}_{H'_{i}}$ and $\hat{\cP}_{G'_{i}}$ s.t. subpaths from them are restricted on $H'_{i}$ and $G'_{i}$ respectively. 
Let ${F}_{\mathrm{sub}}^{H'_{i}}$ and $F_{\mathrm{sub}}^{G'_{i}}$ collect all flow paths from $\hat{\cP}_{H'_{i}}$ and  $\hat{\cP}_{G'_{i}}$ respectively for each $P$. 

Note that the decomposition of each path $P$ takes place at the boundary between ${H'}_{i}$ and $G'_{i}$, we can have $V(\Dem({F}_{\mathrm{sub}}^{H'_{i}})) \subset V(G'_{i})$.
Then since $H_{i}'$ is an LC-flow shortcut over $G'_{i}$, given ${F}_{\mathrm{sub}}^{H'_{i}}$,
there exists a flow $\hat{F}^{G'_{i}}$ routing $\Dem(\Dem({F}_{\mathrm{sub}}^{H'_{i}}))$
with congestion $\kappa_{0}$.

We construct $F^{G'_{i}}$ by concatenating $\hat{F}^{G'_{i}}$ and $F_{\mathrm{sub}}^{G'_{i}}$,
and we have $\cong(F^{G'_{i}}) \le \cong(\hat{F}^{G'_{i}}) + \cong(F_{\mathrm{sub}}^{G'_{i}})\le \kappa_{0} + 1$.
Further, from the backward path-mapping from $H'_{i}$,
for each path $P \in F^{H'_{i}}$, each subpath $\hat{P}$ restricted on $H'_{i}$ is mapped to subflows $F_{\hat{P}}^{G'_{i}}$ with the length at most $\leng(\hat{P},G'_{i}\cup H'_{i})$. 
In general, by replacing each subpath $\hat{P}$ with $F_{\hat{P}}^{G'_{i}}$, we can map each path $P$ to some subflows $F_{P}^{G'_{i}}$, which gives a path-mapping $\pi_{F^{H'_{i}}\to F^{G'_{i}}}$ from $F^{H'_{i}}$ to $F^{G'_{i}}$.

\paragraph{Route $\Dem(F^{H_{i}})$ in $G_{i}$.} Considering $G_{i}$ is the same graph as $G'_{i}$ 
where the length of each vertex is increased by a multiplicative factor of at most $\l_{i}$.
then we also have a flow $F^{G_{i}}$ directly from $F^{G'_{i}}$ with the same congestion. 
Let the trivial path-mapping between $G'_{i}$ and $G_{i}$ as $\pi_{F^{G'_{i}}\to F^{G_{i}}}$, by composing it with all path-mappings above, we have a path-mapping $\pi_{F^{H_{i}}\to F^{G_{i}}}$ from $F^{H_{i}}$ to $F^{G_{i}}$.

For each path $P \in F^{H_{i}}$(then also in $F^{H'_{i}}$), we can map it to subflows $F_{P}^{G_{i}}$. 
Let $\hat{\cP}_{H_{i}}$ and $\hat{\cP}_{G_{i}}$ defined similarly as above, 
we note that 
\[\sum_{\hat{P} \in \hat{\cP}_{{H}_{i}} \sqcup \hat{\cP}_{G_{i}}} \leng(\hat{P}, G_{i}\cup H_{i}) = \leng(P, G_{i}\cup H_{i}).\]
Given any subpath $\hat{P} \in \hat{\cP}_{{H}_{i}}$, it is also in the $\hat{\cP}_{{H}'_{i}}$.
As discussed in the previous step, it can be mapped to some subflows $F_{\hat{P}}^{G'_{i}}$(then also in $F^{G_{i}}$) with length at most $\leng(\hat{P}, G'_{i}\cup H'_{i})$.
We have
\[\leng(F_{\hat{P}}^{G_{i}},G_{i}) \le \leng(F_{\hat{P}}^{G'_{i}},G'_{i})\cdot \l_{i} \le \leng(\hat{P}, G'_{i}\cup H'_{i}) \cdot \l_{i} \le \leng(\hat{P}, G_{i}\cup H_{i})\]
where the last inequality use the fact that $\hat{P}$ is restricted in $H'_{i}$.
Further, any subpath $\hat{P}$ restricted in $G_{i}$ remains the same under the path-mapping, and thus contribute the same length in $\leng(F_{P}^{G_{i}})$.
We conclude that $\leng(F_{P}^{G_{i}}) \le \leng(P,G_{i}\cup H_{i})$, which shows the path-mapping $\pi_{F^{H_{i}}\to F^{G_{i}}}$ has length slack $1$.

\paragraph{Decompose Flow Paths of $F^{G_{i}}$.} Note that $G_{i} = \bar{H}_{i-1}\cup (G^{h^{i}} \setminus G^{h^{i-1}})$ where $\bar{H}_{i-1} = G_{i-1}\cup H_{i-1}$.
We apply the same technique to each flow path $P \in \path(F^{G_{i}})$.
Namely, we decompose  it into two collections of subpaths $\hat{\cP}_{\bar{H}_{i-1}}$ and $\hat{\cP}_{G'}$ s.t. subpaths from them are restricted on the $\bar{H}_{i-1}$ and $(G^{h^{i}} \setminus G^{h^{i-1}})$ respectively. 

Let $F^{H_{i-1}}$ collect all flow paths from $\hat{\cP}_{\bar{H}_{i-1}}$ for each $P \in \path(F^{G_{i}})$, 
we can similarly have $V(\Dem(F^{{H}_{i-1}})) \subset V(G)$, and then by induction hypothesis there is a flow $F^{G''}$ in $G$ routing the same demand with $\cong(F^{G''}) \le (\kappa_{0}+1)^{i-1}\cdot \cong(F^{\bar{H}_{i-1}})\le (\kappa_{0}+1)^{i-1}\cdot \cong(F^{G_{i}}) \le (\kappa_{0}+1)^{i}$.
We further note that the congestion from it only involves vertex with length samller than $h^{i-1}$.
    
We then similarily define $F^{G'}$ to collect all flow paths from $\hat{\cP}_{G'}$ for each $P$, and by concatenating it with $F^{G''}$, we have a flow $F^{G}$ routing the demand $\Dem(F^{G_{i}}) = \Dem(F^{H_{i}})$ with congestion $\max\{\cong(F^{G''}),\cong(F^{G'})\} \le (\kappa_{0}+1)^{i}$.
    
Considering the path-mapping $\pi_{F^{H_{i-1}}\to F^{G''}}$ from induction hypothesis, for each path $P \in \path(F^{G_{i}})$,
each subpath $\hat{P} \in \hat{\cP}_{\bar{H}_{i-1}}$ corresponds to subflows $F_{\hat{P}}^{G''}$ with $\leng(F_{\hat{P}}^{G''}, G) \le \leng(\hat{P},G_{i})$.
By concatenating all subpaths $\hat{P} \in \hat{\cP}_{G'}$, each path $P$ corresponds to subflows $F_{P}^{G}$ in $G$ with length $\leng(F_{P}^{G}) \le \leng(P,G_{i})$.
This gives a path-mapping $\pi_{F^{G_{i}}\to F^{G}}$ from $F^{G_{i}}$ to $F^{G}$ with length slack $1$.

\paragraph{Route $\Dem(F^{H_{i}})$ in $G$.} We conclude by combining previous steps. 
Given a feasible flow $F^{H_{i}}$ in $G_{i}\cup H_{i}$, there exists flow $F^{G}$ in $G$ routing the same demand with congestion $(\kappa_{0}+1)^i$.
Further by composing $\pi_{F^{H_{i}}\to F^{G_{i}}}$ and $\pi_{F^{G_{i}}\to F^{G}}$, we have a path-mapping from $F^{H_{i}}$ to $F^{G}$ with length slack~$1$.
This concludes the \cref{lemma:BootstrappingBackward}.

\section{Length-Constrained Multi-Commodity Maxflows in Low Depths}
\label{Sect:ApproxLCMCFlow}

In this section we focus on directed graphs with edge lengths and capacities. Note that this setting is stronger than undirected graphs with vertex lengths and capacities because of the standard vertex-splitting reduction.

In this section, the notations of flows are slightly different from \Cref{sect:PrelimFlow}, because we want the notations to be consistent with a previous work \cite{HaeuplerHS23}. Precisely, we will talk about multi-commodity flows in a directed graph with edge lengths $\ell$, capacities $U$, and source-sink pairs $\{(S_{i},T_{i})\mid 1\leq i\leq k,\  S_{i},T_{i}\subseteq V(G)\text{ are disjoint}\}$. Consider a length parameter $h\geq 1$. Let ${\cal P}_{h}(S_{i},T_{i})$ collect all simple $S_{i}$-$T_{i}$ paths with length at most $h$. Then an \emph{$h$-length multi-commodity flow} $F = \{F_{i}\mid 1\leq i\leq k\}$ and \emph{moving cuts} $w$ are formalized by the following LP and its dual.
\begin{align*}
~~~~~~~~\max\ &\vvalue(F) =\sum_{1\leq i\leq k}\sum_{P\in {\cal P}_{h}(S_{i},T_{i})} F_{i}(P) ~~~~~~~~~~~~~~~~~~~~~~~~~~~~\min\ |w|=\sum_{e\in E(G)} U(e)w(e)\\
\text{s.t. }&\sum_{i}\sum_{P\ni e} F_{i}(P)\leq U(e), \forall e\in E(G)~~~~~~~~~~~~~~~~~~\text{s.t. }\sum_{e\in P}w(e)\geq 1,\forall P\in\bigcup_{i}{\cal P}_{i}(S_{i},T_{i})\\
&F_{i}(P)\geq 0, \forall i\in[k],P\in{\cal P}_{h}(S_{i},T_{i})~~~~~~~~~~~~~~~~~~~~~~~~~~~~w(e)\geq 0,\forall e\in E(G)
\end{align*}
An $(1+\epsilon)$-approximate $h$-length multi-commodity flow and moving cut pair $(F,w)$ is such that $|w|\leq (1+\epsilon)\vvalue(F)$.

In fact, in the context of \Cref{sect:PrelimFlow}, an ($h$-length) multi-commodity flow $F$ defined as above is equivalent to an ($h$-length) $k$-commodity flow $F$ \emph{partially} routing the $k$-commodity demand $D$ defined as follows. For each commodity $1\leq i\leq k$, the $i$-th commodity demand $D_{i}$ of $D$ is, for each $v\in V(G)$, if $v\in S_{i}$, $D_{i}(v) = \infty$; if $v\in T_{i}$, $D_{i}(v) = -\infty$; otherwise, $D_{i} = 0$. Intuitively, the multi-commodity flows in this section are non-concurrent.

The following is the main result of this section.

\begin{theorem}
\label{thm:ApproxLCMCMF}
Let $G = (V(G),E(G))$ be a directed graph with positive integral edge lengths $\ell$, integral capacities $U$ and source-sink pairs $\{(S_{i},T_{i})\subseteq V(G)\mid 1\leq i \leq k\}$. Given length parameter $h\geq 1$ and approximation parameter $0<\epsilon < 1$, there is a parallel algorithm that computes a feasible multi-commodity $h$-length flow, moving cut pair $(F,w)$ that is $(1+\epsilon)$-approximate. Furthermore, $F$ is $1/O(kh^{2}\log n/\epsilon^{3})$-fractional and it is outputted in its path representation, where the number of flow paths for each commodity is $O(mh^{5}\log n/\epsilon^{4})$.
The work and depth are $\tilde{O}(mkh^{8}/\epsilon^{4})$ and $\tilde{O}(h^{5}/\epsilon^{3})$ respectively.
\end{theorem}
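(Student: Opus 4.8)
The plan is to run the Garg--K\"onemann / multiplicative-weights framework exactly as in the single-commodity length-constrained maxflow algorithm of \cite{HaeuplerHS23}, and to generalize each of its components to $k$ commodities while keeping the depth free of any $k$ factor. Maintain a weight $w(e)\ge 0$ on every edge, initialized to $\delta/U(e)$ for a suitable $\delta=\delta(m,h,\epsilon)$. In each phase, with respect to the current $w$, compute an approximate \emph{length-constrained blocking flow}: a feasible $h$-length multi-commodity flow $F=\{F_i\}$ that routes along $w$-approximately-shortest length-$\le h$ paths until every remaining short $S_i$--$T_i$ path contains a $w$-saturated edge. Scale $F$ by the bottleneck, add it to the accumulated flow, and multiplicatively raise $w(e)$ in proportion to $F(e)/U(e)$. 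Stop once $\min_i\min_{P\in\mathcal P_h(S_i,T_i)}\sum_{e\in P}w(e)\ge 1$; the final scaled flow is then $(1+\epsilon)$-approximate and the final scaled weights $w$ are the certifying moving cut. The primal--dual argument here is oblivious to how many commodities share the capacity constraints, so the single-commodity analysis of \cite{HaeuplerHS23} carries over verbatim once the inner blocking-flow step is supplied for $k$ commodities.

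The inner problem is where multi-commodity structure is exploited. Following \cite{HaeuplerHS23}, rescale the $w$-induced lengths to small integers (a $(1+O(\epsilon))$ length error, absorbed into the final bound) and pass to the \emph{length-expanded DAG} $\vec G_h$ on vertex set $V(G)\times\{0,1,\dots,h'\}$ with $h'=O(h/\epsilon)$, placing an arc $(u,i)\to(v,i+\ell'(u,v))$ for each edge $(u,v)$; length-$\le h$ paths of $G$ correspond bijectively to level-monotone paths of $\vec G_h$, so a blocking flow in $\vec G_h$ projects to one in $G$. The key point is that the \emph{same} DAG $\vec G_h$, with $|V(\vec G_h)|=O(nh/\epsilon)$ and $|E(\vec G_h)|=O(mh/\epsilon)$, serves every commodity: commodity $i$ merely uses sources $S_i\times\{0\}$ and sinks $T_i\times\{0,\dots,h'\}$. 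The $k$ blocking-flow instances thus share the arc set but have their own terminals, i.e.\ they are $k$ independent computations on one common graph; the sole coupling --- the shared arc capacities --- is handled once per phase by the outer bottleneck scaling, never inside the blocking-flow routine.

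Each of the $k$ blocking flows is computed by Cohen's \emph{path-count flow} method \cite{DBLP:journals/siamcomp/Cohen95}: a layer-by-layer dynamic program counts, for every arc, the number of terminal-to-terminal paths through it; one routes along all paths with value proportional to these (truncated, polylog-bit) counts, scales down to meet capacities, deletes saturated arcs, and repeats $\tilde O(1/\epsilon)$ rounds until the flow is $(1+\epsilon)$-blocking. The path-count DP over $\vec G_h$ has depth proportional to the number of layers, $\tilde O(h/\epsilon)$, and we run it for all $k$ commodities in lockstep --- so the depth does \emph{not} depend on $k$, while the work is a factor $\tilde O(k)$ over the per-commodity cost $\tilde O(|E(\vec G_h)|)$. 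Composing the polynomial overheads --- the $\mathrm{poly}(h)/\epsilon^2$ outer phases, the $\tilde O(h/\epsilon)$ path-count rounds, the depth-$\tilde O(h/\epsilon)$ DP on the size-$\tilde O(mh/\epsilon)$ graph, and the $\tilde O(k)$ work blowup for commodities --- is precisely the bookkeeping that yields the stated $\tilde O(mkh^8/\epsilon^4)$ work and $\tilde O(h^5/\epsilon^3)$ depth. Returning the flow in path representation, keeping each round's decomposition into at most $|E(\vec G_h)|$ realized paths, gives the $O(mh^5\log n/\epsilon^4)$ paths per commodity and the $1/O(kh^2\log n/\epsilon^3)$-fractionality.

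The main obstacle will be the error bookkeeping of this white-box generalization: one must check that the single-commodity blocking-flow-to-maxflow analysis of \cite{HaeuplerHS23} --- both the per-phase progress bound and the final primal/dual gap --- survives when one blocking-flow step is replaced by $k$ commodity-wise blocking flows sharing capacities, and that the width and length parameters of the multiplicative-weights update still produce a $\mathrm{poly}(h,1/\epsilon)$ phase count with no hidden $k$. A secondary technical point is keeping the superposed path counts representable and the fractionality controlled: the truncated-path-count trick of \cite{DBLP:journals/siamcomp/Cohen95} must be applied so that the rounding errors incurred across the $k$ commodities and across all path-count rounds and phases accumulate to at most an overall $(1+\epsilon)$ factor, which is what ultimately pins down the explicit fractionality and per-commodity path count in the statement.
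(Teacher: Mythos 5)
Your outer loop (MWU with lightest-path blockers), your inner primitive (Cohen-style path-count blocking flows in a layered DAG, run for all commodities in lockstep so the depth is $k$-free), and your overall bookkeeping all match the paper's proof. However, there are two concrete gaps in how you set up the inner oracle.

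First, your expanded DAG is one-dimensional, and that is not enough. The oracle must return a flow whose paths simultaneously satisfy \emph{two} budgets: the input length budget $\ell(P)\le h$ (needed for primal feasibility of the $h$-length LP) and the MWU-weight budget $w(P)\le(1+O(\epsilon))\lambda$ (needed for the bound $\sum_e w(e)\hat F(e)\le(1+O(\epsilon))\lambda\cdot\vvalue(\hat F)$ that drives the convergence analysis). You build $\vec G_h$ on $V(G)\times\{0,\dots,h'\}$ indexed by the ``rescaled $w$-induced lengths,'' yet claim that length-$\le h$ paths of $G$ (with respect to $\ell$) correspond to level-monotone paths — these cannot both hold unless $\ell$ and $w$ coincide. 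The paper's construction (the length-weight expanded DAG $D^{(h,\lambda)}$) uses copies $v(x,h')$ indexed by \emph{both} the accumulated $\ell$-length ($h+1$ values) and the accumulated discretized weight ($O(h/\epsilon)$ values), giving $O(h^2/\epsilon)$ copies per vertex; dropping either coordinate breaks either feasibility or the MWU bound. This also changes your size accounting: the DAG has $O(mh^2/\epsilon)$ edges, not $\tilde O(mh/\epsilon)$.

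Second, your claim that the $k$ blocking-flow instances are ``independent computations on one common graph'' whose only coupling is ``handled once per phase by the outer bottleneck scaling, never inside the blocking-flow routine'' is wrong, and fixing it is exactly the point of the multi-commodity path-count construction. If each commodity computed a blocking flow against the full capacities $U$ independently, their superposition could congest edges by a factor of $k$, and absorbing that by outer scaling (or by MWU width) reintroduces a $k$ factor into the phase count and hence the depth. In the paper, the coupling lives \emph{inside} the routine: the path counts are summed over commodities, $c(e)=\sum_i c(e,i)$, the per-commodity flow is normalized by the global maximum $c_{\max}=\max_e\sum_i c(e,i)$, and the residual capacity is decremented by the \emph{total} flow $\sum_i F^{(j)}_i(e)$ each round. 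That joint normalization is what makes the combined flow feasible, $\alpha$-blocking, and computable in $\tilde O(h)$ rounds regardless of $k$. Your proposal as written would not yield a feasible $\alpha$-blocker without this.
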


We point out that, \cite{HaeuplerHS23} showed a single-commodity version of \Cref{thm:ApproxLCMCMF}. They also generalized their single-commodity result to the multi-commodity setting, but with work $\tilde{O}(mk)\cdot \poly(h,\epsilon^{-1})$ and depth $\tilde{O}(k)\cdot \poly(h,\epsilon^{-1})$. Our multi-commodity result is stronger in the sense that we get depth $\tilde{O}(\poly(h,\epsilon^{-1}))$ independent on $k$. To the best of our knowledge, we are the first to achieve this result and it may be of independent interest.

However, the flows they output have better integrality, i.e. their solution is $1/\tilde{O}(\epsilon^{-2})$-fractional, while ours is $1/\tilde{O}(k\cdot \poly(h,\epsilon^{-1}))$-fractional. But we note that we will not exploit the fractionality guarantee in \Cref{thm:ApproxLCMCMF} in the later sections.

Our approach is a while-box generalization of \cite{HaeuplerHS23}, so the techniques have a large overlap with theirs.

\subsection{Section Preliminaries}

\begin{lemma}[Sparse Decomposition of Acyclic Flows, Theorem 10.1 in \cite{HaeuplerHS23}]
\label{lemma:FlowDecomposition}
Let $G$ be an $h$-layer DAG with source and sink vertices $S,T\subseteq V(G)$. Given a $(1/\mu)$-fractional edge flow function $f$, there is an algorithm that computes the path representation of a $(1/\mu)$-fractional single-commodity flow $F$ with $|\path(F)|\leq m$ whose edge representation is $f$. The work is $\tilde{O}(mh)$ and depth is $\tilde{O}(h)$.
\end{lemma}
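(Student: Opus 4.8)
The plan is to realize the classical sequential flow-decomposition argument in a layer-parallel sweep. Since $G$ is an $h$-layer DAG, first I would fix a layer index in $\{0,1,\dots,h\}$ for every vertex so that each edge goes from a strictly smaller to a strictly larger layer; an edge that skips layers is either subdivided (adding at most $mh$ vertices/edges, which does not change the asymptotics) or, equivalently, handled by letting partial paths ``wait'' at intermediate layers. A \emph{weighted partial path} will be a simple path starting in $S$ together with a value that is a multiple of $1/\mu$. Initially, for each source $u$ with positive out-flow we take the single-vertex path $(u)$ with value equal to the total flow $f$ leaves $u$ on. The output $F$ will be the collection of partial paths that end in $T$ after $h$ extension rounds.

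The core step is extending all partial paths ending at layer $i$ across to layer $i+1$, done simultaneously for all $v$ in layer $i$. Fix such a $v$, let $a_1\ge a_2\ge\cdots$ be the values of the partial paths currently ending at $v$, and let $b_1\ge b_2\ge\cdots$ be the values $f(e)$ of the out-edges $e$ of $v$. Flow conservation of $f$ at $v$ gives $\sum_j a_j=\sum_j b_j$, so a ``merge of two sorted lists by prefix sums'' pairs them into at most $(\#\text{partial paths at }v)+\deg^+(v)-1$ new partial paths, each obtained by appending one out-edge to one old partial path and assigning it the corresponding (still a multiple of $1/\mu$) value. Across the whole layer this is a constant number of parallel sorts and prefix-sum scans, hence $\tO{1}$ depth; after $h$ rounds every partial path ends in layer $h$, its edge representation is exactly $f$ by construction, and $F$ is $(1/\mu)$-fractional.

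For the path-count bound $|\path(F)|\le m$, let $p_i$ be the number of partial paths just before processing layer $i$, so $p_0\le|S|$. The extension at layer $i$ replaces, at each vertex $v$ of that layer, its $(\#\text{at }v)$ partial paths by at most $(\#\text{at }v)+\deg^+(v)-1$ of them, so $p_{i+1}\le p_i+(\#\text{edges leaving layer }i)-(\#\text{vertices of layer }i\text{ that currently hold a partial path})$. Any partial path ending at a vertex with no out-flow must have reached a sink and is dropped from the active count; telescoping over the $h$ rounds then yields $|\path(F)|\le|S|+\sum_i(\#\text{edges leaving layer }i)-\sum_{i\ge1}(\#\text{active internal vertices of layer }i)\le m$, matching the sequential bound. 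Since each partial path has length at most $h$ and there are at most $m$ of them, the total description size of all partial paths at all times is $\tO{mh}$, so the per-round stitching costs $\tO{mh}$ work in total and $\tO{h}$ depth over the $h$ rounds.

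The main obstacle is precisely this counting step: unlike the one-path-at-a-time argument (``each new path saturates an edge''), the all-at-once extension forces the charging above, and one must verify carefully that the prefix-sum stitching simultaneously preserves conservation and creates no more than $(\#\text{at }v)+\deg^+(v)-1$ outgoing pieces at each $v$. A $\log h$-depth divide-and-conquer variant — split the layers in half and stitch the two halves at the middle layer with the same merge operation — gives the same $\tO{mh}$ work and an even smaller depth, but the $\le m$ accounting is cleanest in the layer-by-layer version, which already meets the stated bounds.
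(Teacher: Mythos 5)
The paper does not prove this lemma; it imports it verbatim as Theorem~10.1 of \cite{HaeuplerHS23}, and your proposal is a correct reconstruction of the standard argument behind that result: a layer-by-layer sweep that, at each vertex, pairs the sorted incoming partial-path values with the sorted out-edge values by prefix-sum interval intersection, producing at most $(\#\text{pieces at }v)+\deg^+(v)-1$ new pieces and hence at most $m$ paths after telescoping. Two small points to tighten. First, in your final displayed count the $|S|$ term only cancels because the recurrence also subtracts the active vertices of layer $0$ (which equal $p_0$); as written you restrict the subtraction to $i\ge 1$, which leaves $|S|+m-\sum_{i\ge1}(\cdots)$ and does not obviously give $\le m$ --- summing your own recurrence over all layers including $i=0$ fixes this. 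Second, the $\tO{mh}$ work bound requires that partial paths be extended by pointers (each new piece stores only its parent and the appended edge, with the full paths materialized once at the end); physically recopying each path at every round would cost $\Theta(mh^2)$, so you should state the representation explicitly rather than appeal to ``total description size at all times.''
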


\begin{lemma}[Parallel Flow Rounding \cite{DBLP:journals/siamcomp/Cohen95}]
\label{lemma:FlowRounding}
Let $G$ be a directed graph with source and sink vertices $S,T\subseteq V(G)$. Given the edge representation of a fractional single-commodity flow $F$ in $G$, there is an algorithm that computes the edge representation of an integral single-commodity flow $\hat{F}$ in $G$ satisfying that
\begin{itemize}
\item $\lceil\vvalue(F) - 1/\poly(n)\rceil\leq \vvalue(\hat{F})\leq \lceil \vvalue(F)\rceil$,
\item for each edge $e\in E(G)$, $\lfloor F(e)\rfloor\leq \hat{F}(e)\leq \lceil F(e)\rceil$.
\end{itemize}
The work is $\tilde{O}(m)$ and the depth is $\tilde{O}(1)$.
\end{lemma}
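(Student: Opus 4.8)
To prove \Cref{lemma:FlowRounding}, the plan is to carry out Cohen's bit-by-bit flow-rounding procedure \cite{DBLP:journals/siamcomp/Cohen95}. First I would normalize the input: round every $F(e)$ \emph{down} to the nearest integer multiple of $1/2^{k}$, with $k=\Theta(\log n)$ chosen so that $m\cdot 2^{-k}\le 1/\poly(n)$. The result is a flow $F'$ with $F'(e)\le F(e)$, $\lfloor F'(e)\rfloor=\lfloor F(e)\rfloor$, $\lceil F'(e)\rceil\le\lceil F(e)\rceil$ for every $e$, and $\vvalue(F')\ge\vvalue(F)-1/\poly(n)$, so it is enough to produce an integral $\hat F$ with $\hat F(e)\in\{\lfloor F'(e)\rfloor,\lceil F'(e)\rceil\}$ and $\vvalue(\hat F)=\lceil\vvalue(F')\rceil$. (In every invocation of this lemma in the paper $F$ is already $1/\polylog(n)$-fractional, so this step is vacuous.) To also control the value, I would close $F'$ into a circulation $C$ on a graph $G^{+}$: add a super-source $s^{*}$ with an arc to each $s\in S$ carrying the net $F'$-flow out of $s$, a super-sink $t^{*}$ with an arc from each $t\in T$ carrying the net $F'$-flow into $t$, and a return arc $e^{*}$ from $t^{*}$ to $s^{*}$ carrying $\vvalue(F')$; all arc values of $C$ are then nonnegative multiples of $1/2^{k}$.

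I would then run $k$ phases over bit positions $j=k,k-1,\dots,1$. Entering phase $j$, all arc values are multiples of $1/2^{j}$; let $E_{j}$ be the set of arcs whose value is an \emph{odd} multiple of $1/2^{j}$. Scaling the conservation equation at a vertex by $2^{j}$ and reducing modulo $2$ shows that, within $E_{j}$, every vertex of $G^{+}$ has its in-degree and out-degree of equal parity, hence even total degree; so $E_{j}$ decomposes into edge-disjoint closed walks, which I would obtain as an Euler partition. Choosing a consistent traversal of each walk --- and, for the walk through $e^{*}$ when $e^{*}\in E_{j}$, the traversal that goes forward along $e^{*}$ --- I add $+1/2^{j}$ to every arc of $E_{j}$ traversed forward by its walk and $-1/2^{j}$ to every arc traversed backward. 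This preserves conservation (the net change around each closed walk is zero) and turns every value in $E_{j}$ into an even multiple of $1/2^{j}$, i.e.\ a multiple of $1/2^{j-1}$; after phase $1$ the circulation is integral, and deleting the added arcs gives $\hat F$, with $\vvalue(\hat F)$ equal to the final value of $e^{*}$ by conservation at $s^{*}$.

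Correctness reduces to two invariants. (i) Writing an arc's value as an integer part plus a fractional part in $[0,1)$: in phase $j$ we change only values whose fractional part is an odd multiple of $1/2^{j}$, hence in $[1/2^{j},\,1-1/2^{j}]$, so adding $\pm 1/2^{j}$ keeps the value in $[\lfloor C(e)\rfloor,\lceil C(e)\rceil]$; and once a value becomes an integer it is, for every later $j$, an even multiple of $1/2^{j-1}$ and so never re-enters $E_{j}$. Thus $\hat C(e)\in\{\lfloor C(e)\rfloor,\lceil C(e)\rceil\}$, which on the arcs of $G$ is exactly the required per-edge guarantee. (ii) The return arc $e^{*}$ is never decreased, so its value only rises, clearing its lowest set bit each time it is touched; this is precisely the binary ``round up'', so its final value is $\lceil\vvalue(F')\rceil$, which lies in $[\lceil\vvalue(F)-1/\poly(n)\rceil,\ \lceil\vvalue(F)\rceil]$. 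For the running time, there are $k=\tO{1}$ phases, each of which only extracts $E_{j}$, computes an Euler partition of $E_{j}$ with a consistent orientation per cycle, and updates arc values; so the algorithm is $\tO{m}$ work and $\tO{1}$ depth as long as the Euler-partition step is. I would implement that step by the classical PRAM method --- at each even-degree vertex, arbitrarily pair up the incident arcs of $E_{j}$ (turning $E_{j}$ into a $2$-regular pointer structure whose components are the closed walks), then run list ranking / Euler-tour labelling to give each arc its walk index and a consistent direction --- in $\tO{m}$ work and $\tO{1}$ depth. I expect this Euler-partition subroutine to be the only real obstacle (cycle decomposition plus consistent cyclic orientation in polylog depth and near-linear work); the rest is bookkeeping about binary expansions and flow conservation.
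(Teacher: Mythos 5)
The paper does not actually prove \Cref{lemma:FlowRounding}; it imports it wholesale from Cohen, so there is no in-paper argument to compare against. Judged on its own terms, the core of your reconstruction --- closing the flow into a circulation with a return arc, processing bit positions from least to most significant, using the mod-$2$ conservation identity to show the odd-multiple arcs $E_j$ have even degree everywhere, Euler-partitioning $E_j$ into closed walks and pushing $\pm 2^{-j}$ around each, always orienting the walk through $e^*$ so that $e^*$ increases --- is Cohen's argument, and your verification of the per-edge invariant ($\hat F(e)\in\{\lfloor F(e)\rfloor,\lceil F(e)\rceil\}$, integers never re-enter $E_j$) and of the value guarantee (repeated ``round up to the next coarser dyadic'' yields $\lceil\vvalue(F')\rceil$) is correct, as is the standard pairing-plus-list-ranking implementation of the Euler partition.

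The gap is in the very first step. Rounding each $F(e)$ down to a multiple of $2^{-k}$ \emph{independently} does not produce a flow: conservation at an internal vertex $v$ is violated by up to $\deg(v)\cdot 2^{-k}$, and your entire phase argument is built on exact conservation --- without it, $E_j$ need not have even degree at every vertex, the Euler partition into closed walks does not exist, and the procedure cannot even start. Your parenthetical escape hatch (``in every invocation $F$ is already $1/\polylog(n)$-fractional, so this step is vacuous'') does not hold: being $1/\mu$-fractional is not the same as being $2^{-O(\log n)}$-integral, and in the paper's main application (\Cref{lemma:ApproxBlockingFlow}) the rounded flow is a path-count flow whose values $U'(e)\cdot c(e,i)/c_{\max}$ are ratios of integers of magnitude $n^{O(h)}$, so the discretization is genuinely load-bearing there. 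Repairing it is not a one-liner: truncating path-by-path after a flow decomposition preserves conservation but can move an edge value by up to $m\cdot 2^{-k}$ and hence below $\lfloor F(e)\rfloor$ when $F(e)$ sits just above an integer, and patching the vertex excesses through an auxiliary hub can leave unit-valued auxiliary arcs after rounding that cannot be deleted without breaking integrality of the conservation constraints. You need either to restate the lemma for inputs that are already $2^{-\Theta(\log n)}$-integral (and discharge the reduction separately where the lemma is applied), or to give a conservation-preserving discretization whose per-edge perturbation is controlled well enough to keep the floor/ceiling guarantee.
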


\begin{corollary}
\label{coro:FlowRounding}
Let $G$ be a directed graph with source and sink vertices $S,T\subseteq V(G)$. Given a rounding parameter $\mu\leq \poly(n)$ which is a power of $2$, and the edge representation of a fractional single-commodity flow $F$ in $G$, there is an algorithm that computes the edge representation of a $(1/\mu)$-fractional single-commodity flow $\hat{F}$ in $G$ satisfying that
\begin{itemize}
\item $\lceil\mu\cdot\vvalue(F) - 1/\poly(n)\rceil\leq \mu\cdot\vvalue(\hat{F})\leq \lceil \mu\cdot\vvalue(F)\rceil$,
\item for each edge $e\in E(G)$, $\lfloor \mu\cdot F(e)\rfloor\leq \mu\cdot \hat{F}(e)\leq \lceil \mu\cdot F(e)\rceil$.
\end{itemize}
The work is $\tilde{O}(m)$ and the depth is $\tilde{O}(1)$.
\end{corollary}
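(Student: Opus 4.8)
The plan is to reduce Corollary~\ref{coro:FlowRounding} directly to Lemma~\ref{lemma:FlowRounding} by a scale-round-unscale argument. Given the fractional single-commodity flow $F$ and the rounding parameter $\mu$ (a power of two, $\mu\leq\poly(n)$), first form the scaled flow $F'=\mu\cdot F$ on the edge representation, i.e. $F'(e)=\mu F(e)$ for every $e\in E(G)$. Since flow conservation at each internal vertex is a homogeneous linear constraint, $F'$ is again a fractional single-commodity flow in $G$ with the same source and sink sets $S,T$, and $\vvalue(F')=\mu\cdot\vvalue(F)$. Because $\mu\leq\poly(n)$ and (by the polynomial-boundedness convention, $N=\poly(n)$) the entries of $F$ are representable with $\poly\log n$ bits, the entries of $F'$ remain polynomially bounded, so Lemma~\ref{lemma:FlowRounding} is applicable to $F'$ with its stated $\tilde{O}(m)$ work and $\tilde{O}(1)$ depth.

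Second, invoke Lemma~\ref{lemma:FlowRounding} on $F'$ in $G$ to obtain the edge representation of an integral single-commodity flow $\hat F'$ satisfying $\lceil \vvalue(F')-1/\poly(n)\rceil\leq\vvalue(\hat F')\leq\lceil\vvalue(F')\rceil$ and $\lfloor F'(e)\rfloor\leq\hat F'(e)\leq\lceil F'(e)\rceil$ for every edge $e$. Substituting $F'(e)=\mu F(e)$ and $\vvalue(F')=\mu\cdot\vvalue(F)$, these become exactly $\lceil\mu\cdot\vvalue(F)-1/\poly(n)\rceil\leq\vvalue(\hat F')\leq\lceil\mu\cdot\vvalue(F)\rceil$ and $\lfloor\mu F(e)\rfloor\leq\hat F'(e)\leq\lceil\mu F(e)\rceil$.

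Third, unscale: set $\hat F=\hat F'/\mu$, i.e. $\hat F(e)=\hat F'(e)/\mu$. By linearity $\hat F$ is again a single-commodity flow in $G$; and since $\hat F'$ is integral, every edge value $\hat F(e)$ is an integer multiple of $1/\mu$, which (as $\mu$ is a power of two) is precisely the notion of a $(1/\mu)$-fractional flow used in the paper. Finally $\mu\cdot\vvalue(\hat F)=\vvalue(\hat F')$ and $\mu\cdot\hat F(e)=\hat F'(e)$, so the two displayed inequalities above turn into exactly the two bullet points in the statement. The total cost is the $O(m)$ work and $O(1)$ depth of the two scalings (by $\mu$ and by $1/\mu$) plus the single call to Lemma~\ref{lemma:FlowRounding}, which together give $\tilde{O}(m)$ work and $\tilde{O}(1)$ depth.

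There is no genuine obstacle here; the only points needing care are bookkeeping ones. First, one must confirm that multiplying and dividing the edge representation by $\mu$ preserves flow conservation — this is immediate since the constraints are homogeneous linear. Second, one must check that $F'=\mu F$ still has polynomially bounded bit complexity, so that the single invocation of Lemma~\ref{lemma:FlowRounding} runs within the claimed work and depth; this is exactly where the hypothesis $\mu\leq\poly(n)$ is used. Third, one must observe that dividing an integral flow by a power of two $\mu$ produces a flow whose edge values are multiples of $1/\mu$, matching the paper's definition of $(1/\mu)$-fractionality; this is where the power-of-two hypothesis on $\mu$ enters.
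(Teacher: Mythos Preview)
Your proposal is correct and follows exactly the paper's approach: scale $F$ by $\mu$, apply Lemma~\ref{lemma:FlowRounding} to obtain an integral flow $\hat F'$, then return $\hat F=\hat F'/\mu$. The paper states this in a single line, while you have spelled out the bookkeeping (preservation of conservation, bit complexity under $\mu\leq\poly(n)$, and why integrality of $\hat F'$ yields $(1/\mu)$-fractionality of $\hat F$), but there is no substantive difference.
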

\begin{proof}
Round $\mu\cdot F$ to an integral $\hat{F}'$ using \Cref{lemma:FlowRounding}. Return $\hat{F} = \hat{F}'/\mu$.
\end{proof}

\subsection{Multi-Commodity $h$-Length $(1+\epsilon)$-Lightest Path Blockers}
\begin{definition}
Let $G = (V(G),E(G))$ be a graph with lengths $\ell$, capacities $U$, weights $w$ and source-sink pairs $\{(S_{i},T_{i})\subseteq V(G)\mid 1\leq i\leq k\}$. Fix parameters $\epsilon > 0$, $h\geq 1$, $\lambda\leq \min_{1\leq i\leq k}d^{(h)}_{w}(S_{i},T_{i})$ where $d^{(h)}_{w}(S_{i},T_{i}):= \min_{P \in \cP_{h}(S_{i},T_{i})}w(P)$, $\mu\geq 1$ and $0\leq \alpha\leq 1$, an $h$-length $(1/\mu)$-fractional multi-commodity flow $F$ is an $(1/\mu)$-fractional $h$-length $(1+\epsilon)$-lightest path $\alpha$-blocker if
\begin{itemize}
\item Each flow path $P\in\path(F)$ has $w(P)\leq (1+2\epsilon)\lambda$, and
\item For each $P\in \bigcup_{1\leq i\leq k}{\cal P}_{h}(S_{i},T_{i})$ such that $w(P)\leq (1+\epsilon)\lambda$, there is some $e\in E(P)$ such that $F(e) \geq \alpha\cdot U(e)$.
\end{itemize}
\end{definition}

\begin{theorem}
\label{thm:PathBlockers}
Let $G = (V(G),E(G))$ be a directed graph with positive integral lengths $\ell$, integral capacities $U$, (possibly fractional) weights $w$ and source-sink pairs $\{(S_{i},T_{i})\subseteq V(G)\mid 1\leq i\leq k\}$. Given parameters $h\geq 1$, $0<\epsilon < 1$, and $\lambda \leq \min_{1\leq i\leq k}d^{(h)}_{w}(S_{i},T_{i})$, there is an algorithm that computes the path representation of a $(1/\mu)$-fractional $h$-length $(1+\epsilon)$-lightest path $\alpha$-blocker $F$, where $\alpha = \Omega(\epsilon/h^{2})$ and $\mu = O(kh^{2}/\epsilon)$. Furthermore, the number of flow paths in the path representation is $|\path(F)|\leq O(mh^{2}/\epsilon)$. The work and depth are $\tilde{O}(mkh^{5}/\epsilon)$ and $\tilde{O}(h^{2})$.
\end{theorem}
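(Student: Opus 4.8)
The strategy is a white-box lift of the single-commodity lightest-path-blocker construction of \cite{HaeuplerHS23}: reduce the weighted, length-constrained instance to a blocking-flow problem on an $O(h)$-depth DAG, solve it with Cohen's path-count flows \cite{DBLP:journals/siamcomp/Cohen95} run for all $k$ commodities simultaneously, and project back. First I would build the auxiliary DAG $\hat G$. Round every weight up to a multiple of $\epsilon\lambda/h$, giving $\bar w(e)\ge w(e)$ with $\bar w(P)\le w(P)+h\cdot(\epsilon\lambda/h)$ on any path of at most $h$ edges, and discard every edge with $\bar w(e)>(1+2\epsilon)\lambda$ since no $(1+\epsilon)\lambda$-light $h$-length path uses it. Let $\hat G$ have a copy $(v,p,q)$ of each vertex $v$ indexed by accumulated $\ell$-length $p\in\{0,\dots,h\}$ and accumulated $\bar w$-bucket $q\in\{0,\dots,O(h/\epsilon)\}$; each surviving edge $e=(u,v)$ with $\ell(e)=\ell_e\ge 1$ and $\bar w(e)=j_e\cdot\epsilon\lambda/h$ induces the arcs $(u,p,q)\to(v,p+\ell_e,q+j_e)$ over all feasible $p,q$, all sharing capacity $U(e)$. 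For each commodity $i$ add a source $\sigma_i$ into $\{(s,0,0):s\in S_i\}$ and a sink $\tau_i$ from all $(t,p,q)$, $t\in T_i$, with $q\cdot\epsilon\lambda/h\le(1+2\epsilon)\lambda$. Since $\ell$ is positively integral, $p$ strictly increases along every arc, so $\hat G$ is a DAG in which every $\sigma_i$--$\tau_i$ path has $O(h)$ arcs, and it has $O(mh^{2}/\epsilon)$ arcs overall. The two lift invariants are immediate: a $\sigma_i$--$\tau_i$ path projects (drop the counters) to an $h$-length $S_i$--$T_i$ path of $w$-weight $\le(1+2\epsilon)\lambda$, and any $h$-length $S_i$--$T_i$ path of $w$-weight $\le(1+\epsilon)\lambda$ lifts to a $\sigma_i$--$\tau_i$ path of $\hat G$.

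Next I would solve a fractional multi-commodity blocking flow on $\hat G$. Following \cite{HaeuplerHS23}, computing a lightest-path $\alpha$-blocker reduces to blocking all $O(h)$-arc source--sink paths, which Cohen's scheme \cite{DBLP:journals/siamcomp/Cohen95} does in rounds: in each round, compute for every commodity $i$ the path-count flow in the current residual graph (the value of residual $\sigma_i$--$\tau_i$ paths through each arc, by a layered DP along the $p$-axis), scale the $k$ resulting flows so that their sum respects residual capacities, route them, and delete saturated arcs. The key point — and the only genuinely new ingredient relative to the single-commodity case — is that the $k$ path-count DPs are independent and run in parallel; each costs $\tO{mh^{2}/\epsilon}$ work over the $O(h)$ layers, and $\poly(h,\epsilon^{-1})$ rounds suffice. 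This produces a feasible fractional $\hat F=\{\hat F_i\}$ in $\hat G$ in which, for every $i$ and every residual $\sigma_i$--$\tau_i$ path, some arc carries total flow at least $\alpha\cdot U(e)$ for $\alpha=\Omega(\epsilon/h^{2})$ — the first bullet of the definition being automatic from the sink-bucket cutoff, and the blocking bullet from this saturation.

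Finally I would round and transfer. Apply \Cref{coro:FlowRounding} to each $\hat F_i$ (in parallel over commodities) with $\mu=O(kh^{2}/\epsilon)$, choosing the rounding to be non-decreasing on the support so the blocking property survives after absorbing a $1/\mu$ slack into $\alpha$ and a $(1-1/\mu)$ prescaling into feasibility; decompose each $(1/\mu)$-fractional $\hat F_i$ into at most $|E(\hat G)|=O(mh^{2}/\epsilon)$ arc-paths via \Cref{lemma:FlowDecomposition}; project each arc-path to $G$ by deleting the counters. The result $F=\{F_i\}$ is a feasible $(1/\mu)$-fractional $h$-length multi-commodity flow, every flow path has $w$-weight $\le(1+2\epsilon)\lambda$, and by the lift invariant every $h$-length $S_i$--$T_i$ path of $w$-weight $\le(1+\epsilon)\lambda$ is $\Omega(\epsilon/h^{2})$-blocked, with $O(mh^{2}/\epsilon)$ flow paths per commodity. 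The work and depth then follow by tallying the DAG size, the $\poly(h,\epsilon^{-1})$ rounds of $k$ parallel path-count DPs, and the $k$ parallel decompositions and roundings, which amount to $\tO{mkh^{5}/\epsilon}$ work and $\tO{h^{2}}$ depth.

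The hard part is the multi-commodity blocking analysis inside the middle step: one must show that combining the $k$ per-commodity path-count flows under a \emph{single} shared-capacity scaling, iterated only $\poly(h,\epsilon^{-1})$ times, drives every commodity's near-lightest paths to be $\Omega(\epsilon/h^{2})$-blocked with a slack parameter $\alpha$ that does \emph{not} degrade with $k$, all while keeping $\sum_i\hat F_i$ feasible. This is precisely where the potential/progress argument of \cite{HaeuplerHS23} has to be redone for $k$ simultaneous path systems rather than one; a naive split giving each commodity a $U/k$ fraction of every capacity would only yield $\alpha=\Theta(1/k)$, far too weak. A secondary but necessary point is to fix the bucket width $\epsilon\lambda/h$ and the sink cutoff $(1+2\epsilon)\lambda$ so the forward and backward lifts are exact, and to route the rounding (\Cref{coro:FlowRounding}) so that it cannot un-block a path saturated by exactly $\alpha\cdot U(e)$.
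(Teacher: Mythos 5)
Your proposal follows essentially the same route as the paper: the length-weight expanded DAG with vertex copies indexed by accumulated length and discretized weight (bucket width $\epsilon\lambda/h$, cutoff $(1+2\epsilon)\lambda$), a multi-commodity blocking flow on it via Cohen's path-count flows computed for all $k$ commodities in parallel under a single shared normalization, rounding to $(1/\mu)$-fractional, and projection back with the blocking parameter degraded only by the number $O(h^2/\epsilon)$ of edge copies. The "hard part" you flag is exactly what the paper's blocking-flow lemma resolves — normalizing all commodities by the maximum \emph{total} path count $c_{\max}$ and showing $c_{\max}$ halves every $O(\log U_{\max})$ rounds, which yields a constant blocking parameter on the DAG independent of $k$ in $\tilde{O}(h)$ iterations.
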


\subsubsection{Length-Weight Expanded DAG}
\label{sect:LWExpandDAG}

Let $G$ be a directed graph with positive integral edge lengths $\ell$, capacities $U$, weights $w$ and source-sink pairs $\{(S_{i},T_{i})\subseteq V(G)\mid 1\leq i\leq k\}$, where edge lengths are positive integers. Given parameters $h\geq 1$, $0<\epsilon < 1$ and $\lambda \leq \min_{1\leq i\leq k}d^{(h)}_{w}(S_{i},T_{i})$, its \emph{length-weight expanded DAG} $D^{(h,\lambda)}$ is an $(h+1)$-layer directed graph with edge capacities $U'$ and source-sink pairs $\{(S'_{i},T'_{i})\mid 1\leq i\leq k\}$.

Before defining $D^{(h,\lambda)}$, we first define another edge weight $\tilde{w}$ which rounds up $w$ to the nearest multiple of $(\epsilon/h)\cdot\lambda$, i.e. for each $e\in E(G)$, 
\[
\tilde{w}(e) = (\epsilon/h)\cdot\lambda\cdot\left\lceil \frac{w(e)}{(\epsilon/h)\cdot\lambda}\right\rceil.
\]
Intuitively, $\tilde{w}$ discretizes $w$.
Now we define $D^{(h,\lambda)}$ as follows.
\begin{itemize}
\item The vertex set $V(D^{(h,\lambda)})$ is constructed by, for each $v\in V(G)$, making copies $v(x,h')$ where $0\leq h'\leq h$ and $x$ ranges over multiples of $\frac{\epsilon}{h}\cdot\lambda$ between $0$ and $(1+2\epsilon)\cdot \lambda$. Therefore, each original vertex $v\in V(G)$ has $\kappa = (h+1)\cdot (\lfloor\frac{(1+2\epsilon)\cdot\lambda}{(\epsilon/h)\cdot\lambda}\rfloor+1) = O(h^{2}/\epsilon)$ copies.

\item For each source-sink pair $(S_{i},T_{i})$ of $G$. The corresponding source-sink pair $(S'_{i},T'_{i})$ of $D^{(h,\lambda)}$ is defined by $S'_{i} = \{v(0,0)\mid v\in S_{i}\}$ and $T'_{i} = \{\text{all copies of $v$ in $V(D^{(h,\lambda)})$}\mid v\in T_{i}\}$.

\item Consider the edge set $E(D^{(h,\lambda)})$. For each original edge $e=(u,v)\in E(G)$, we add an edge from $u(x,h')$ to $v(x+\tilde{w}(e), h'+\ell(e))$ (called a copy of $e$) for each $x,h'$ such that $x+\tilde{w}(e)\leq (1+2\epsilon)\cdot\lambda$ and $h'+\ell(e)\leq h$. We let $E'(e)$ denote the set of all copies of $e$ and note that $|E'(e)|\leq \kappa$.

\item For each edge $e'\in E(D^{(h,\lambda)})$ which is a copy of an original edge $e\in E(G)$, the capacity of $e'$ is $U'(e') = U(e)$.

\end{itemize}

Next, we discuss how to project paths between $G$ and $D^{(h,\lambda)}$. We first consider the forward projection. 

\medskip

\noindent{\textbf{Forward Projection.}} For each $1\leq i\leq k$ and original path $P\in {\cal P}_{h}(S_{i},T_{i})$ with $w(P)\leq \lambda\cdot(1+\epsilon)$, we define its projection $P'$ on $D^{(h,\lambda)}$ as follows. Say $P$ visits original vertices $v_{1},v_{2},...,v_{k}$ where $v_{1}\in S_{i}$ and $v_{k}\in T_{i}$. Then $P'$ starts at the copy $v_{1}(0,0)$ of $v_{1}$, walks along copies of edges in $P$, and ends at some copy of $v_{k}$. Note that $P'$ must exist because (1) $x$ can be (almost) as large as $(1+2\epsilon)\lambda$ when creating vertex copies, and (2) $\tilde{w}(e)$ overestimates $w(e)$ by at most $(\epsilon/h)\cdot \lambda$ when creating edge copies (it means $\tilde{w}(P)\leq w(P) + \epsilon\lambda \leq (1+2\epsilon)\lambda$ because $P$ is $h$-length).

\medskip

\noindent{\textbf{Backward Projection.}} For each $1\leq i\leq k$ and an $S'_{i}$-$T'_{i}$ path $P'$ in $D^{(h,\lambda)}$, we can project it back to an original path $P$ in $G$ by simply taking the original edge of each edge in $P'$. Observe that $w(P)\leq (1+2\epsilon)\lambda$ because $x$ is at most $(1+2\epsilon)\lambda$ when creating vertex copies.

\subsubsection{Multi-Commodity Blocking Flows in $h$-Layer DAGs}

\begin{lemma}
\label{lemma:ApproxBlockingFlow}
Let $G$ be an $h$-layer DAG with integral capacities $U$ and source-sink pairs $\{(S_{i},T_{i})\mid 1\leq i\leq k\}$.
Given a constant $0<\alpha<1$, there is an algorithm that computes the edge representation of a multi-commodity $(1/O(k))$-fractional $\alpha$-blocking flow $F$. The work is $\tilde{O}(mkh^{3})$ and the depth is $\tilde{O}(h^{2})$.
\end{lemma}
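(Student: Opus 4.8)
The plan is to construct the multi-commodity $\alpha$-blocking flow by iterating \emph{path-count flows} in the style of Cohen~\cite{DBLP:journals/siamcomp/Cohen95}, with all $k$ commodities processed in parallel so that the depth never incurs a factor of $k$, and then to discretize the result with Cohen's parallel flow rounding. We maintain a multi-commodity flow $F=\{F_i\}_{i\in[k]}$, initially zero, whose total load $F(e)=\sum_i F_i(e)$ never exceeds $U(e)$; call an edge \emph{available} if $F(e)<\alpha U(e)$. In a single \emph{round} we do the following in parallel for every commodity $i$: restrict to the subgraph $G_i$ of available edges usable by commodity $i$ (pruning edges that lie on no $S_i$-$T_i$ path), which like $G$ is an $h$-layer DAG, and by a layer-by-layer dynamic program compute in $\tilde{O}(h)$ depth and $\tilde{O}(mh)$ work whether $S_i$ still reaches $T_i$ in $G_i$ and, if so, the path counts $a_i(v)$ (number of $S_i$-to-$v$ paths in $G_i$) and $b_i(v)$ (number of $v$-to-$T_i$ paths), together with $a_i(T_i)=\sum_{t\in T_i}a_i(t)$. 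The \emph{uniform} flow routing $\nu$ units over all $S_i$-$T_i$ paths of $G_i$ places $\nu\cdot a_i(u)b_i(v)/a_i(T_i)$ on each edge $(u,v)$; we pick $\nu$ maximal so that this flow respects the residual capacities $\alpha U(e)-F(e)$, and add it to $F_i$. To keep the counts polynomially bounded we carry them with bounded relative precision, rescaling per layer — this is exactly Cohen's device, and it is the source of both the $(1+\epsilon)$-type slack that we fold into the constants $\alpha$ and $\mu$, and of the fractionality.

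For correctness, each round removes at least the bottleneck edge from the available set, and — this is the point of using the uniform path-count flow rather than an arbitrary feasible one — because the flow spreads load evenly across all $S_i$-$T_i$ paths, saturating the bottleneck destroys a $1/\tilde{O}(h)$ fraction of the ``alive'' structure; following the layered-DAG blocking-flow analysis of \cite{HaeuplerHS23,DBLP:journals/siamcomp/Cohen95}, the potential $\Phi=\sum_i|V_i^{\mathrm{alive}}|$, where $V_i^{\mathrm{alive}}$ is the set of vertices lying on some available $S_i$-$T_i$ path, drops to $0$ within $\tilde{O}(h)$ rounds. Once $\Phi=0$, no commodity has an available $S_i$-$T_i$ path, which is exactly the $\alpha$-blocking condition. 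Pinning this round count down, together with the precision analysis showing that truncating the path counts to $\mathrm{poly}(n)$ costs only a small amount that can be charged to $\alpha$ and $\mu$, is the most delicate part of the argument, and is where the layered structure is genuinely used.

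Finally, given the feasible fractional $\alpha$-blocking $F$ in edge representation, we apply \Cref{coro:FlowRounding} to each $F_i$ in parallel with $\mu=\Theta(k)$; the per-edge guarantee $\lfloor\mu F_i(e)\rfloor\le\mu\hat{F}_i(e)\le\lceil\mu F_i(e)\rceil$ changes each commodity's load by $O(1/\mu)$, hence the total load on $e$ by $O(k/\mu)=O(1)$, which we absorb by running the blocking phase only up to total load $(1-\Theta(1))U(e)$ (or by first scaling $U$ up by a $1+o(1)$ factor); the blocking property survives since each threshold $\alpha U(e)\ge 1$ dwarfs $O(1/\mu)$. This yields a $(1/O(k))$-fractional multi-commodity $\alpha$-blocking flow, whose path representation we obtain by applying \Cref{lemma:FlowDecomposition} per commodity. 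For the complexity: each round is $\tilde{O}(h)$ depth and $\tilde{O}(mh)$ work per commodity, the only cross-commodity step being the $k$-fold summation $F(e)=\sum_i F_i(e)$ in $\tilde{O}(1)$ depth and $\tilde{O}(mk)$ work; over $\tilde{O}(h)$ rounds and $k$ parallel commodities this is $\tilde{O}(mkh^2)$ work and $\tilde{O}(h^2)$ depth, and the per-layer rescaling, the final flow rounding, and the decomposition contribute at most another $\mathrm{poly}(h,\log n)$ factor in work, giving the claimed $\tilde{O}(mkh^3)$ work and $\tilde{O}(h^2)$ depth. The main obstacle, as noted, is the layered potential argument bounding the number of rounds; a secondary point, needed for the depth bound, is organizing every per-commodity sub-task (available subgraph, path counts, scaled uniform flow, rounding, decomposition) as a self-contained DAG computation so that the $k$ commodities truly parallelize.
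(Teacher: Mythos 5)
Your high-level plan (iterate path-count flows for all $k$ commodities in parallel, then round) matches the paper's, but the two ingredients that actually make the round count $\tilde{O}(h)$ are missing or replaced by ones that do not work. First, your path counts are \emph{unweighted} counts of $S_i$--$v$ paths in the available subgraph, and your flow is the uniform distribution over $S_i$--$T_i$ paths scaled by a global feasibility factor $\nu$. The paper instead uses \emph{capacity-weighted} counts $c(P)=\prod_{e\in P}U'(e)$ and sets $F_i(e)=U'(e)\cdot c(e,i)/c_{\max}$, normalizing by the maximum total count $c_{\max}$ rather than by a feasibility-determined $\nu$. This choice is what guarantees that \emph{every} edge whose count is within a factor $2$ of $c_{\max}$ receives at least a constant fraction ($\geq 1/8$ after rounding and halving) of its residual capacity in that round; hence each such edge's capacity is driven to $0$ within $O(\log U_{\max})$ rounds, so $c_{\max}$ halves every $O(\log U_{\max})$ rounds, and since initially $c_{\max}\leq n^{O(h)}U_{\max}^{h}$ it reaches $0$ after $\tilde{O}(h)$ rounds. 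Your uniform flow has no such property: the scaling $\nu$ is dictated by the single worst edge, and an edge with a huge path count but tiny residual capacity forces $\nu$ to be minuscule, consuming essentially nothing elsewhere. Second, your substitute potential $\Phi=\sum_i|V_i^{\mathrm{alive}}|$ does not drop by a $1/\tilde{O}(h)$ fraction per round: in a capacitated DAG, saturating one bottleneck edge per commodity per round need not remove \emph{any} vertex from the alive set (other available paths through those vertices survive), so this argument only yields an $O(m)$-type round bound, not $\tilde{O}(h)$. The Dinitz-style ``distance increases after a blocking flow'' argument you are implicitly invoking requires that each round produce a full blocking flow on the current level graph, which a single scaled uniform path-count flow does not.

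Two smaller issues. Your per-commodity subgraphs $G_i$ and per-commodity scalars $\nu_i$ are chosen independently, but the commodities share edge capacities; without a joint normalization (the paper uses $c(e)=\sum_i c(e,i)$ and a single global $c_{\max}$) the sum $\sum_i \nu_i\cdot(\text{flow}_i)$ can overflow an edge. And the paper rounds the flow to $(1/\mu)$-fractional \emph{in every iteration} (then halves it) precisely so that the ``constant fraction of residual capacity consumed'' claim survives the rounding error of $k/\mu\leq(1-\alpha)/4$ per edge; rounding only at the end, as you propose, leaves the intermediate capacities irrational in a way that is harmless for feasibility but does not by itself repair the missing progress argument.
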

\begin{proof}

\noindent{\textbf{Path Counts and the Path Count Flow.}}
We first introduce the notion of \emph{path counts} and the \emph{path count flow} with respect to an arbitrary $(1/\poly(n))$-fractional capacity function $U'$ of $G$. We point out that it is possible that $U'(e)=0$ for some edges $e\in E(G)$, and we will ignore such edges in the following definition.

For any simple path $P$ in $G$, we let $c(P) = \prod_{e\in P} U'(e)$.
For each $1\leq i\leq k$ and edge $e\in E(G)$, we define its \emph{commodity-$i$ path count} to be 
\[
c(e,i) = \sum_{P\in{\cal P}(S_{i},T_{i})\text{ s.t. }e\in P} c(P).
\]
Equivalently, we can interpret $c(e,i)$ as follows. For each vertex $v\in V(G)$, let
$c^{-}(v,i) = \sum_{P\in{\cal P}(S_{i},v)}c(P)$ and $C^{+}(v,i) = \sum_{P\in{\cal P}(v,T_{i})}c(P)$. Then for each edge $e=(u,v)\in E(G)$, we have
\[
c(e,i) = c^{-}(u,i)\cdot U'(e)\cdot c^{+}(v,i).
\]
For each $e\in E(G)$, we let $c(e) = \sum_{1\leq i\leq k}c(e,i)$ be its total path counts of all commodities. Let $c_{\max} = \max_{e\in E(G)}c(e)$ be the maximum path counts.

The \emph{commodity-$i$ path count flow} w.r.t. $U'$, denoted by $\wtilde{F}_{i}$ is defined by the following edge representation: for each $e\in E(G)$, 
\[
\wtilde{F}_{i}(e) = U'(e)\cdot c(e,i)/c_{\max}.
\]
The \emph{multi-commodity path count flow} is then $\wtilde{F} = \{\wtilde{F}_{i}\mid 1\leq i\leq k\}$. Note that $\wtilde{F}$ is a feasible multi-commodity flow w.r.t. $U'$ in $G$ since the the flow conservation is guaranteed, and it can be computed efficiently as shown in \Cref{lemma:PathCountFlows}.

\begin{lemma}
\label{lemma:PathCountFlows}
Given a $(1/\poly(n))$-fractional capacity function $U'$ of $G$, there is an algorithm that computes the edge representation of $\wtilde{F}_{i}$ for all $1\leq i\leq k$ in $\tilde{O}(mkh^{2})$ work and $\tilde{O}(h)$ depth.
\end{lemma}
\begin{proof}
As we discussed above, computing the edge representation of $\wtilde{F}_{i}$ can be reduced to computing $c^{+}(v,i)$ and $c^{-}(v,i)$ for all $1\leq i\leq k$ and vertices $v\in V(G)$. It is easy to see that, for a fixed $1\leq i\leq k$, a dynamic-programming-type algorithm can compute $c^{-}(v,i)$ and $c^{+}(v,i)$ for all vertices $v$ in $\tilde{O}(mh^{2})$ work and $\tilde{O}(h)$ depth. Run the dynamic programming for each $1\leq i\leq k$ in parallel gives the lemma.

In fact, the dynamic programming takes $O(m)$ work and $O(h)$ depth assuming multiplying two large numbers only takes constant work. Without this assumption, note that each of these $c^{-}(v,i),c^{+}(v,i)$ and $c(e,i)$ can be represented precisely using $\tilde{O}(h)$ bits, because the number of simple paths is $n^{O(h)}$ and each $c(P)$ is $1/n^{O(h)}$-fractional and at most $n^{O(h)}$. Multiplying two $\tilde{O}(h)$-bit numbers can be done trivially in $\tilde{O}(h^{2})$ work and $\tilde{O}(1)$ depth, which becomes the overheads to the final work and depth.
\end{proof}

\medskip

\noindent{\textbf{The Algorithm.}} Now we are ready to computes the multi-commodity $(1/O(k))$-fractional $\alpha$-blocking flow $F$ with respect to the graph $G$ and capacities $U$. On a high level, the algorithm iteratively computes a path count flow, rounds it to be $(1/O(k))$-fractional, and subtracts it from the capacity. 

Formally speaking, we start with the initial capacities $U^{(0)} = U$. At each iteration $j\geq 0$, the algorithm do the following.
\begin{enumerate}
\item If the maximum path counts $c^{(j)}_{\max}$ w.r.t. $U^{(j)}$ is $0$, terminate the algorithm. Otherwise, compute the multi-commodity path count flow $\wtilde{F}^{(j)} = \{\wtilde{F}^{(j)}_{i}\mid 1\leq i\leq k\}$ w.r.t. $U^{(j)}$ using \Cref{lemma:PathCountFlows}. 
\item For all $1\leq i\leq k$ in parallel, Round $\wtilde{F}^{(j)}_{i}$ to a $(1/\mu)$-fractional flow $\hat{F}^{(j)}_{i}$ using \Cref{coro:FlowRounding}, where $\mu$ is the smallest power of $2$ such that $\mu\geq 4k/(1-\alpha)$. It guarantees that for each $e\in E(G)$, $\lfloor \mu\cdot\wtilde{F}^{(j)}_{i}(e)\rfloor\leq \mu\cdot \hat{F}^{(j)}_{i}(e)\leq \lceil \mu\cdot\wtilde{F}^{j}_{i}(e)\rceil$, which implies $\wtilde{F}^{(j)}_{i}(e) - 1/\mu\leq \hat{F}^{(j)}_{i}(e) \leq \wtilde{F}^{j}_{i}(e) + 1/\mu$.
\item Let $F^{(j)} = \{\hat{F}^{(j)}_{i}/2\mid 1\leq i\leq k\}$. For each edge $e\in E(G)$, update its capacity to
\begin{equation*}
U^{(j+1)}(e) = \left\{
\begin{aligned}
&0, ~~~~~~~~~~~~~~~~~~~~~\text{ if }U^{(j)}(e) - F^{(j)}(e)\leq 1-\alpha\\
&U^{(j)}(e) - F^{(j)}(e),~\text{otherwise}\\
\end{aligned}
\right.
\end{equation*}
\end{enumerate}
At the end, let $F = \sum_{j\geq 0} F^{(j)}$ be the output. 

\medskip

\noindent{\textbf{Correctness.}} We now argue that $F$ is indeed a multi-commodity $(1/O(k))$-fractional $\alpha$-blocking flow. 

First, we show that $F$ is a feasible flow in $G$ w.r.t. $U$. It suffices to show that for each $j$, $F^{(j)}$ is feasible w.r.t. $U^{(j)}$. To see this, for each edge $e\in E(G)$, we have $\wtilde{F}^{(j)}(e)\leq U^{(j)}(e)$ since we have shown that $\wtilde{F}^{(j)}$ is feasible w.r.t. $U^{(j)}(e)$. Then, we have 
\[
F^{(j)}(e) = \sum_{1\leq i\leq k}\hat{F}^{(j)}_{i}(e)/2 \leq  (\wtilde{F}^{(j)}(e) + k/\mu)/2 \leq (U^{(j)}(e) + 1-\alpha)/2\leq U^{(j)}(e),
\]
where the last inequality is because each non-zero $U^{(j)}(e)$ is at least $1-\alpha$.

Second, $F$ is $\alpha$-blocking by the following reason. Let $\bar{j}$ be the last iteration such that $c^{(\bar{j})}_{\max} = 0$. Consider an arbitrary path $P\in\bigcup_{1\leq i\leq k}{\cal P}(S_{i},T_{i})$. Note that $\prod_{e\in P}U^{(j)}(e) = 0$ because $c^{(\bar{j})}_{\max} = 0$. It means there is an edge $e\in P$ such that $U^{(\bar{j})}(e) = 0$. Then, by the update rule of capacities, we have $U(e) - \sum_{j<\bar{j}}F^{(j)}(e)\leq 1-\alpha$, which implies $F(e)\geq U(e)-(1-\alpha)\geq \alpha\cdot U(e)$ (since $U(e)\geq 1$). Therefore, $P$ is $\alpha$-blocked by $F$. 

Third, $F$ is $1/(O(k))$-fractional because each $\hat{F}^{j}_{i}$ is $(1/\mu)$-fractional, each $F^{(j)}$ is $(1/(2\mu))$-fractional, and $\mu = 4k/(1-\alpha) = O(k)$.

\medskip

\noindent{\textbf{Work and Depth.}} It remains to argue that the algorithm above computes $F$ in required work and depth. It suffices to show that the number of iterations is at most $\tilde{O}(h)$. 

The key observation is that, in each iteration $j$, we have $F^{(j)}(e)\geq U^{(j)}(e)/8$ for each edge $e\in E(G)$ s.t. $c^{(j)}(e)\geq c^{(j)}_{\max}/2$. To see this, for each such edge $e$, we have $\wtilde{F}^{(j)}(e)\geq U^{(j)}(e)/2$ by the definition. Thus 
\[
\hat{F}^{(j)}(e) = \sum_{1\leq i\leq k}\hat{F}^{j}_{i}(e)\geq \wtilde{F}^{(j)}(e) - k/\mu\geq \wtilde{F}^{(j)}(e) - (1-\alpha)/4\geq U^{(j)}(e)/4,
\]
where the last inequality is because $U^{(j)}(e)\geq 1-\alpha$. Finally, $F^{(j)}(e) = \hat{F}^{(j)}(e)/2 \geq U^{(j)}(e)/8$.

Providing this observation, we show that $c_{\max}$ will be halved every $O(\log U_{\max})$ iterations. Formally speaking, for each $j_{1}$ such that $c_{\max}^{(j_{1})}>0$, we have $c^{(j_{2})}_{\max}\leq c^{(j_{1})}_{\max}/2$ where $j_{2} = j_{1} + \Theta(\log U_{\max})$. Assume the opposite for contradiction. Then some edge $e$ has $c^{(j)}(e)\geq c_{\max}^{(j_1)}/2$ during for each $j_{1}\leq j\leq j_{2}$. By the observation, every $j_{1}\leq j\leq j_{2}-1$ has $U^{(j+1)}(e)\leq 7 U^{(j)}(e)/8$, meaning that $U^{(j_{2})}(e) < 1-\alpha$. Hence $U^{(j_{2})}(e) = 0$ and $c^{(j_2)}(e) = 0$, contradicting that $c^{(j_2)}(e) \geq c^{(j_{1})}_{\max}/2$.

Therefore, after $O(h\log(n\cdot U_{\max}))\cdot O(\log U_{\max}) = \tilde{O}(h)$ iterations, $c_{\max}$ will drop below $(1-\alpha)^{h}$ (note that initially $c^{(0)}_{\max}\leq n^{O(h)}\cdot (U_{\max})^{h}$). When $c_{\max}<(1-\alpha)^{h}$, we have $c_{\max} = 0$ because the edge capacity is either at least $1-\alpha$ or $0$.

\end{proof}

\subsubsection{Computing Path Blockers: Proof of \Cref{thm:PathBlockers}}

\noindent{\textbf{The Algorithm.}} The algorithm is as follows.
\begin{enumerate}
\item Construct the length-weight expanded DAG $D^{(h,\lambda)}$.
\item Compute the path representation of a multi-commodity $(1/O(k))$-fractional $0.5$-blocking flow $F'$ in $D^{(h,\lambda)}$ by applying \Cref{lemma:ApproxBlockingFlow} and \Cref{lemma:FlowDecomposition}.
\item Initialize $F$ to be a zero flow. For each flow path $P'\in\path(F')$, project it to an original path $P$ in $G$ (according to the backward projection in \Cref{sect:LWExpandDAG}), and add $P$ to the flow $F$ (i.e. update $F(P)\gets F(P) + F'(P')/\kappa$), where $\kappa = O(h^{2}/\epsilon)$ is from the definition of $D^{(h,\lambda)}$. 
\item Return $F$. 

\end{enumerate}

\noindent{\textbf{Correctness.}} Now we show that $F$ is a $(1/\mu)$-fractional $h$-length $(1+\epsilon)$-lightest path $\alpha$-blocker for $\alpha = \Omega(\epsilon/h^{2})$ and $\mu = O(kh^{2}/\epsilon)$. First $F$ is $(1/\mu)$-fractional because $F'$ is $1/O(k)$-fractional and $\kappa = O(h^{2}/\epsilon)$.
Next, each path $P\in\path(F)$ has $w(P)\leq (1+2\epsilon)\lambda$ as we discussed when introducing the backward path projection. It is also clear that the number of flow paths of each commodity in the path representation of $F$ is $O(mh^{2}/\epsilon)$, because of \Cref{lemma:FlowDecomposition} and because $D^{(h,\lambda)}$ has $O(mh^{2}/\epsilon)$ edges.

It remains to show that, for each $P\in \bigcup_{1\leq i\leq k}{\cal P}_{h}(S_{i},T_{i})$ such that $w(P)\leq (1+\epsilon)\lambda$, there is an edge $e\in P$ such that $F(e)\geq \alpha\cdot U(e)$. Recall that the forward projection in \Cref{sect:LWExpandDAG} maps $P$ to an $S'_{i}$-$T'_{i}$ path $P'$ in $D^{(h,\lambda)}$. Because $F'$ is a $0.5$-blocking flow in $D^{(h,\lambda)}$, there exists an edge $e'\in P'$ such that $F'(e)\geq U'(e)/2 = U(e)/2$. By the definition of $F$, the original edge $e\in P$ corresponding to $e'$ has $F(e)\geq F'(e)/\kappa\geq U(e)/(2\kappa) = U(e)\cdot \Omega(\epsilon/h^{2})$ as desired.

\medskip

\noindent{\textbf{The Work and Depth.}} The work and depth are dominated by applying \Cref{lemma:ApproxBlockingFlow}. Because $D^{(h,\lambda)}$ has size $O(mh^{2}/\epsilon)$, the final work and depth are $\tilde{O}(mkh^{5}/\epsilon)$ and $\tilde{O}(h^{2})$.

\subsection{$(1+\epsilon)$-Approx $h$-Length Multi-Commodity Maxflows: Proof of \Cref{thm:ApproxLCMCMF}}

The algorithm is described in \Cref{algo:LCMCMF}. To simplify the analysis, \Cref{algo:LCMCMF} only outputs a $(1+100\epsilon)$-approximate solution. To obtain a $(1+\epsilon)$-approximate solution required by \Cref{thm:ApproxLCMCMF}, we can update $\epsilon \gets \epsilon/100$ at the very beginning.

\begin{algorithm}[H]
\caption{Multi-Commodity Length-Constrained Flows and Moving Cuts}
\label{algo:LCMCMF}
\begin{algorithmic}
\Require A directed graph $G$ with edge lengths $\ell$, capacities $U$, and source-sink pairs $\{(S_{i},T_{i})\mid 1\leq i\leq k\}$; length parameter $h$ and constant $\epsilon \in (0,1/100)$.
\Ensure $(1\pm 100\epsilon)$-approximate $h$-length multi-commodity flow $F$ and moving cuts $w$.
\Statex Let $\zeta = 1/\epsilon$ and $\eta = \epsilon^{2}/((1+10\epsilon)\ln m)$.
\Statex Initialize $w(e) = 1/(m^{\zeta}\cdot U(e))$ for all edges $e\in E(G)$.
\Statex Initialize $\lambda = 1/m^{\zeta}$.
\Statex Initialize $F$ to be a zero multi-commodity flow.
\Statex Initialize $w_{\best} \gets w/\lambda$.
\While {$\lambda < 1$}
\For{$\Theta(h^{3}\log n/\epsilon^{3})$ iterations}
\Statex \hspace{3em}Compute a  $h$-length $(1+\epsilon)$-lightest path $\Omega(\epsilon/h^{2})$-blocker $\hat{F}$ by applying \Cref{thm:PathBlockers} \\\hspace{6em}on $G$ with weights $w$ and parameter $\lambda$.
\Statex \hspace{3em}$F\gets F + \eta \cdot \hat{F}$.
\Statex \hspace{3em}For each $e\in E(G)$, $w(e)\gets w(e)\cdot (1+\epsilon)^{\hat{F}(e)/U(e)}$.
\Statex \hspace{3em}If $\sum_{e}w(e)U(e)/\lambda\leq \sum_{e}w_{\best}(e)U(e)$, update $w_{\best}\gets w/\lambda$.
\EndFor
\Statex $\lambda \gets \lambda \cdot (1+\epsilon)$.
\EndWhile
\Statex Return $(F,w_{\best})$.
\end{algorithmic}
\end{algorithm}

\noindent{\textbf{Correctness.}} The feasibility and approximation ratio are given by \Cref{lemma:FeasibleLCMCMF} and \Cref{lemma:OptLCMCMF} respectively. Furthermore, $F$ is $1/O(kh^{2}\log n/\epsilon^{3})$-fractional because each $\hat{F}$ is $1/O(kh^{2}/\epsilon)$-fractional guaranteed by \Cref{thm:PathBlockers} and $\eta = \Theta(\epsilon^{2}/\log n)$. For each commodity, the number of flow paths in $\path(F)$ is at most $O(mh^{5}\log n/\epsilon^{4})$, because each $\hat{F}$ has $O(mh^{2}\epsilon)$ flow paths for this commodity and $F$ is the summation of $\Theta(h^{3}\log n/\epsilon^{3})$ such $\hat{F}$ (after scaling).

\begin{lemma}
\label{lemma:Lambda}
It always hold that $\lambda \leq \min_{1\leq i\leq k} d^{(h)}_{w}(S_{i},T_{i})$.
\end{lemma}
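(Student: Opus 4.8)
The plan is to prove the invariant $\lambda \leq \min_{1\leq i\leq k} d^{(h)}_w(S_i,T_i)$ by induction on the sequence of updates made by \Cref{algo:LCMCMF}. There are three places where the relevant quantities change: (i) the initialization, (ii) the weight update $w(e)\gets w(e)\cdot(1+\epsilon)^{\hat F(e)/U(e)}$ inside the inner loop, and (iii) the multiplicative update $\lambda\gets\lambda\cdot(1+\epsilon)$ at the end of the outer loop. Since $\hat F(e)\geq 0$ always, step (ii) can only increase each $w(e)$, hence can only increase every path weight $w(P)$ and in particular $\min_i d^{(h)}_w(S_i,T_i)$; so the invariant is preserved by (ii) trivially. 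The content of the lemma is therefore in (i) the base case and (iii) the $\lambda$-scaling step.

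For the base case, after initialization $w(e) = 1/(m^\zeta U(e))$ and $\lambda = 1/m^\zeta$. Any $h$-length path $P\in\mathcal P_h(S_i,T_i)$ uses at least one edge, so $w(P) = \sum_{e\in P} 1/(m^\zeta U(e)) \geq 1/(m^\zeta \cdot U_{\max})$; using that $U_{\max}\leq \poly(n)\leq m^{O(1)}$ and that $\zeta = 1/\epsilon$ is large, one checks $w(P)\geq 1/m^\zeta = \lambda$ — in fact more care is needed here, since a path could have up to $n$ edges each of large capacity, but each term is at least $1/(m^\zeta U_{\max})$ and there is at least one term, giving $d^{(h)}_w(S_i,T_i)\geq 1/(m^\zeta U_{\max})$; the cleaner route is simply to observe $w(e)U(e) = 1/m^\zeta$ for every edge, so $w(P)\cdot \min_{e\in P}U(e)\geq 1/m^\zeta$ is not quite it either. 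I would instead argue directly: each edge contributes $w(e)\geq 1/(m^\zeta\cdot U(e))\geq 1/(m^\zeta\cdot N)$ where $N=\poly(n)$, and since a simple path has at least one edge, $d^{(h)}_w(S_i,T_i)\geq 1/(m^\zeta N)$. Choosing $\zeta$ large enough (or noting the initialization already sets $\lambda=1/m^\zeta$ with the same $\zeta$ and $N\leq m^{c}$ for the fixed constant $c$) makes $\lambda=1/m^\zeta\leq 1/(m^\zeta N)$ false — so the correct base-case claim must instead use $\lambda = 1/m^\zeta$ and $w(P)\geq$ (number of edges on $P$)$\cdot 1/(m^\zeta U_{\max}) \geq 1/(m^\zeta U_{\max})$, and then rely on the fact that the true invariant we need is the weaker $\lambda \leq d^{(h)}_w$, which holds as long as $m^\zeta U_{\max}\cdot \lambda \leq m^\zeta\cdot\frac{1}{m^\zeta U_{\max}}\cdot m^\zeta U_{\max}$ — I will pin down the exact constant using $U_{\max}=\poly(m)$ and $\zeta=1/\epsilon$, which is routine once the inequality is set up carefully.

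For the inductive step at (iii): we scale $\lambda$ only after completing $\Theta(h^3\log n/\epsilon^3)$ iterations of the inner loop with the current $\lambda$. The key point is that after these iterations every edge $e$ lying on a ``light'' path (one with $w(P)\leq (1+\epsilon)\lambda$) must have received substantial flow, by the $(1+\epsilon)$-lightest-path $\Omega(\epsilon/h^2)$-blocker property, and hence its weight $w(e)$ was multiplied by $(1+\epsilon)^{\hat F(e)/U(e)}\geq (1+\epsilon)^{\Omega(\epsilon/h^2)}$ in each such iteration; accumulated over $\Theta(h^3\log n/\epsilon^3)$ iterations this multiplies $w(e)$ by a factor $\geq (1+\epsilon)^{\Omega(h\log n/\epsilon^2)}$, which is $\geq m^{\Omega(h)}$ — more than enough to push $w(P)$ above $(1+\epsilon)\lambda$ for every previously-light path. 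More precisely, I would argue that if after the inner loop some path $P\in\mathcal P_h(S_i,T_i)$ still had $w(P) < (1+\epsilon)\lambda$, then $P$ was light in every one of the $\Theta(h^3\log n/\epsilon^3)$ iterations (since $w(P)$ is nondecreasing, once it exceeds $(1+\epsilon)\lambda$ it stays there; and $\lambda$ is fixed throughout the inner loop), so in each iteration some edge $e_j\in P$ got $\hat F(e_j)\geq \Omega(\epsilon/h^2)U(e_j)$; by pigeonhole some fixed edge $e\in P$ is this blocking edge in $\geq \Theta(h\log n/\epsilon^3)$ of the iterations, so $w(e)$ — and hence $w(P)\geq w(e)$ — grew by a factor $(1+\epsilon)^{\Theta(h\log n/\epsilon^3)\cdot\Omega(\epsilon/h^2)}=(1+\epsilon)^{\Omega(\log n/\epsilon)}\gg 1/\lambda$, a contradiction with $w(P)<(1+\epsilon)\lambda<1$. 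Hence after the inner loop $d^{(h)}_w(S_i,T_i)\geq (1+\epsilon)\lambda$ for every $i$, so the new value $\lambda'=(1+\epsilon)\lambda$ still satisfies $\lambda'\leq \min_i d^{(h)}_w(S_i,T_i)$.

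The main obstacle is making the pigeonhole/accumulation argument in step (iii) fully rigorous while tracking that the number of inner iterations $\Theta(h^3\log n/\epsilon^3)$ is chosen precisely so that the resulting weight blow-up exceeds $1/\lambda$ uniformly over the whole run (recall $\lambda$ ranges down to $1/m^\zeta$, so $1/\lambda$ can be as large as $m^{1/\epsilon}$); one must verify the constant in the $\Theta(\cdot)$ is large enough relative to $\zeta=1/\epsilon$ and the $\Omega(\epsilon/h^2)$ blocking constant from \Cref{thm:PathBlockers}. A secondary subtlety is that $\hat F(e)/U(e)$ in the exponent need not be an integer and could in principle be smaller than the blocking threshold on some edges; but the lightest-path-blocker guarantee gives us the threshold on at least one edge per light path, which is exactly what the pigeonhole needs, so this is not a real difficulty. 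I expect the base case to reduce to a one-line estimate once the precise relationship between $\zeta$, $N=\poly(n)$, and the polynomial bound $U_{\max}$ is written out.
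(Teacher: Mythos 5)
Your proposal follows the same route as the paper: induct over phases of the outer loop, and for the inductive step assume some path $P\in\bigcup_i\mathcal P_h(S_i,T_i)$ stays below $(1+\epsilon)\lambda_0$ throughout the phase, invoke the $(1+\epsilon)$-lightest-path $\Omega(\epsilon/h^2)$-blocker guarantee in every inner iteration, pigeonhole onto a single edge of $P$, and derive a contradiction from the accumulated weight growth. Two points to fix. First, your pigeonhole count is off: $P$ has at most $h$ edges, so dividing the $\Theta(h^{3}\log n/\epsilon^{3})$ iterations among them yields a fixed edge blocked in $\Theta(h^{2}\log n/\epsilon^{3})$ iterations, not $\Theta(h\log n/\epsilon^{3})$; the resulting exponent is $\Omega(\epsilon/h^{2})\cdot\Theta(h^{2}\log n/\epsilon^{3})=\Theta(\log n/\epsilon^{2})$, giving a growth factor $n^{\Theta(1/\epsilon)}$. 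With your count the exponent would be $\Theta(\log n/(h\epsilon^{2}))$ (and your stated $\Omega(\log n/\epsilon)$ does not follow from it), which is too small for large $h$: the growth factor must beat $m^{\zeta}\cdot U_{\max}=m^{1/\epsilon}\cdot\poly(n)$ since $w(e)$ starts at $1/(m^{\zeta}U(e))$, and only the $n^{\Theta(1/\epsilon)}$ bound does that uniformly. Second, on the base case: the paper simply declares it trivial, and your extended back-and-forth correctly detects that the literal inequality $1/m^{\zeta}\le\sum_{e\in P}1/(m^{\zeta}U(e))$ can fail for a short path through a high-capacity edge — this is a gloss in the paper rather than something you can repair by "choosing $\zeta$ large", since both sides scale with $m^{-\zeta}$. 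You should either state the base case under the (harmless) normalization that makes it hold, or flag it as an imprecision; leaving it as "routine" is not accurate, because your own computation shows the naive estimate does not close.
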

\begin{proof}
We will prove this by induction. Initially, the statement trivially holds. Now consider a phase of the outer while loop. Let $\lambda_{0}$ be the $\lambda$ at the beginning of this phase. Assume $\lambda_{0}\leq \min_{1\leq i\leq k} d^{(h)}_{w}(S_{i},T_{i})$ at the beginning of this phase, we will show that $(1+\epsilon)\lambda_{0}\leq \min_{1\leq i\leq k}d^{(h)}_{w}(S_{i},T_{i})$ holds at the end of this phase.

Assume for contradiction that some path $P\in\bigcup_{i}{\cal P}_{h}(S_{i},T_{i})$ has $w(P)\leq (1+\epsilon)\lambda_{0}$ during the whole phase. Then, in each of the $\Theta(h^{3}\log n/\epsilon^{3})$ iterations of this phase, there is some edge $e\in P$ such that $\hat{F}(e)\geq \Omega(\epsilon/h^{2})\cdot U(e)$. By averaging over at most $h$ edges in $P$, there is a fixed edge $e\in P$ such that $\hat{F}(e)\geq \Omega(\epsilon/h^{2})\cdot U(e)$ in at least $\Theta(h^{2}\log n/\epsilon^{3})$ iterations. 
Therefore, at the end of this phase, 
\begin{align*}
    w(e) & \ge \frac{1}{m^{\zeta}\cdot U(e)}\cdot(1+\epsilon)^{\Omega(\epsilon/h^{2})\cdot \Theta(h^{2}\log n/\epsilon^{3})}\\
    & \ge \frac{1}{m^{\zeta}\cdot \poly(n)}\cdot  (1+\epsilon)^{\Theta(\log n/\epsilon^{2})}\\
    & \ge \frac{n^{\Theta(1/\epsilon)}}{m^{1/\epsilon}\cdot\poly(n)}\\
    & \ge 1+ \epsilon
\end{align*}
where we use $U(e) \le U_{\max} \le \poly(n)$ in the second inequality and
the last inequality is because we set the hidden constant behind $\Theta$ 
to be sufficiently large. This contradicts $w(P)\leq (1+\epsilon)\lambda_{0}$ since $\lambda_{0}<1$.
\end{proof}

\begin{lemma}
\label{lemma:FeasibleLCMCMF}
The output $(F,w_{\best})$ is feasible.
\end{lemma}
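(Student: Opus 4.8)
The plan is to show that the multiplicative-weights update (MWU) structure of \Cref{algo:LCMCMF} keeps the total weighted capacity $\sum_e w(e)U(e)$ from growing too fast, and then to argue that the flow $F$ that has been accumulated, once normalized by a suitable scaling factor, respects every edge capacity; separately, the cut $w_{\best}$ is feasible essentially by construction (it is a nonnegative edge weighting, and $w_{\best}=w/\lambda$ at the moment it was recorded, where $\lambda\le \min_i d^{(h)}_w(S_i,T_i)$ by \Cref{lemma:Lambda}, so every $h$-length $S_i$--$T_i$ path $P$ has $w_{\best}(P)=w(P)/\lambda\ge 1$, i.e. $w_{\best}$ satisfies the dual covering constraints). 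So the only real content is feasibility of $F$.

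First I would set up the standard MWU potential $\Phi = \sum_{e\in E(G)} w(e)U(e)$ and track its growth across a single inner iteration. Since $\hat F$ is a feasible $h$-length flow in $G$ (it is a $(1+\epsilon)$-lightest path $\Omega(\epsilon/h^2)$-blocker produced by \Cref{thm:PathBlockers}, hence $\hat F(e)\le U(e)$ for every $e$), we have $\hat F(e)/U(e)\in[0,1]$, and the update $w(e)\gets w(e)(1+\epsilon)^{\hat F(e)/U(e)}$ together with the elementary inequality $(1+\epsilon)^x \le 1+\epsilon x$ for $x\in[0,1]$ gives
\begin{align*}
\Phi_{\mathrm{new}} &= \sum_e w(e)(1+\epsilon)^{\hat F(e)/U(e)} U(e) \le \sum_e w(e)U(e)\Bigl(1+\epsilon\cdot \frac{\hat F(e)}{U(e)}\Bigr) = \Phi + \epsilon\sum_e w(e)\hat F(e).
\end{align*}
Because every flow path $P\in\path(\hat F)$ has $w(P)\le(1+2\epsilon)\lambda$ (the lightest-path-blocker guarantee) and $\hat F$ routes at most $\vvalue(\hat F)$ units, $\sum_e w(e)\hat F(e)=\sum_{P\in\path(\hat F)}\hat F(P)w(P)\le (1+2\epsilon)\lambda\cdot\vvalue(\hat F)$. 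This is the crux: it shows each iteration adds only $O(\epsilon\lambda\,\vvalue(\hat F))$ to $\Phi$ relative to the flow value being pushed, which by the usual MWU accounting caps the total flow routed through any single edge. Concretely, if edge $e$ ever accumulates more than $\Omega(\log m/\epsilon)$ normalized flow units $\sum \hat F(e)/U(e)$, then $w(e)U(e)\ge (1+\epsilon)^{\Omega(\log m/\epsilon)}\cdot \Phi_{\mathrm{init}} / m \gg \Phi$, contradicting that $\Phi$ stays within a $\poly(m)$ factor of its initial value $m\cdot m^{-\zeta}=m^{1-\zeta}$ — here one uses $\lambda<1$ throughout the while loop and $\Phi_{\mathrm{init}}=m^{1-\zeta}$, while $\Phi$ stays below roughly $m\cdot(1+\epsilon)^{(\text{\#phases})\cdot\Theta(h^3\log n/\epsilon^3)}$, which is $m^{O(1/\epsilon^?)}$ — so the exponent can grow by at most $O(\log m/\epsilon)$ before $w(e)U(e)$ alone would exceed $\Phi$. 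Thus $\sum_{\text{iters}} \hat F(e)/U(e) \le O(\log m/\epsilon)$ for every $e$, and since $F = \eta\sum \hat F$ with $\eta=\epsilon^2/((1+10\epsilon)\ln m)$, we get $F(e)\le \eta\cdot O(\log m/\epsilon)\cdot U(e) \le U(e)$ once the hidden constants are pinned down (this is exactly why $\eta$ was chosen of order $\epsilon^2/\log m$, leaving slack for the $O(1/\epsilon)$ and constant factors). The upper-bounding of $\Phi$ across all phases requires noting that $\lambda$ only ever reaches values $<1$, and carefully tallying the total iteration count $(\text{\#phases})\cdot\Theta(h^3\log n/\epsilon^3)$ with $\text{\#phases}=O(\log_{1+\epsilon} m^{\zeta})=O(\zeta\log m/\epsilon)=O(\log m/\epsilon^2)$.

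The main obstacle I anticipate is making the final numerical inequality $F(e)\le U(e)$ tight enough with the stated constants — i.e. verifying that the bound $\sum_{\text{iters}}\hat F(e)/U(e)\le (1+10\epsilon)\ln m/\epsilon^2$ holds with precisely the constant that matches $\eta$, rather than just up to $O(\cdot)$. This is the classic place where MWU feasibility proofs need care: one must use $(1+\epsilon)^x\ge 1+\epsilon x$ (or $x\ln(1+\epsilon)$ bounds) in the \emph{lower} direction on a single heavily-loaded edge, contrast with the \emph{upper} bound $\Phi\le \Phi_{\mathrm{init}}\cdot\exp(\epsilon\cdot(\text{total normalized flow through all edges, weighted}))$, and exploit that $\ln(1+\epsilon)\ge \epsilon - \epsilon^2/2\ge \epsilon(1-\epsilon/2)$ versus $\epsilon$ on the other side, so that the ratio of the two bases contributes the $(1+10\epsilon)$ slack factor. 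Everything else — feasibility of $w_{\best}$, nonnegativity, the $h$-length structure — is routine given \Cref{thm:PathBlockers}, \Cref{lemma:Lambda}, and the observation that $\hat F$ is itself feasible.
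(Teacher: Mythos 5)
Your treatment of $w_{\best}$ is fine and matches the paper (it follows directly from \Cref{lemma:Lambda}), and your lower bound $w(e)\ge m^{-\zeta}U(e)^{-1}(1+\epsilon)^{F(e)/(\eta U(e))}$ is the right starting point for feasibility of $F$. The gap is in your upper bound on the weights. You assert that $\Phi=\sum_e w(e)U(e)$ ``stays within a $\poly(m)$ factor of its initial value $m^{1-\zeta}$''; this is false. The algorithm only terminates once $\lambda$ reaches $\approx 1$, and since $\Phi\ge w(P)\ge d^{(h)}_{w}(S_i,T_i)\ge\lambda$ for any $h$-length $S_i$--$T_i$ path $P$ (capacities are at least $1$), $\Phi$ must grow from $m^{1-\zeta}=m\cdot m^{-1/\epsilon}$ to roughly $1$, i.e.\ by a factor of about $m^{1/\epsilon-1}$. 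Your fallback bound $\Phi\le m^{1-\zeta}(1+\epsilon)^{T}$ over $T=\Theta(h^3\log n/\epsilon^3)\cdot O(\log m/\epsilon^2)$ total iterations is valid (each iteration multiplies $\Phi$ by at most $1+\epsilon$ because $\hat F(e)\le U(e)$), but it is far too weak: it only yields $\sum_{\mathrm{iters}}\hat F(e)/U(e)\le O(T)$, hence $F(e)\le\eta\cdot O(T)\cdot U(e)=O(h^3\log n/\epsilon^{3})\cdot U(e)$, nowhere near $F(e)\le U(e)$. So as written, neither of the two bounds you offer for $\Phi$ closes the argument.

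The paper avoids the potential function entirely via a per-edge observation you are missing: $w(e)$ is only updated in an iteration in which $e$ lies on some flow path $P\in\path(\hat F)$, and the blocker guarantees $w(P)\le(1+2\epsilon)\lambda\le 1+2\epsilon$ at that moment; hence just before its last update $w(e)\le w(P)\le 1+2\epsilon$, and after it $w(e)\le(1+2\epsilon)(1+\epsilon)=O(1)$. Combining this $O(1)$ cap with your lower bound and substituting $\eta$ and $\zeta$ gives $F(e)/U(e)\le 1$ after exactly the constant-chasing you anticipate. Your potential route is repairable: the additive inequality you already derive, $\Phi_{\mathrm{new}}\le\Phi+\epsilon(1+2\epsilon)\lambda\,\vvalue(\hat F)$, together with $\lambda\le 1$ and $\vvalue(\hat F)\le\poly(n)$, gives the \emph{absolute} bound $\Phi\le\poly(n,h,1/\epsilon)$, and then $w(e)U(e)\le\Phi$ suffices because only the $m^{\zeta}$ term matters in the exponent. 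But you must replace both of your stated bounds on $\Phi$ with this one (or with the paper's per-edge cap) for the proof to go through.
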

\begin{proof}
First, $w_{\best}$ is feasible because $w_{\best}$ is $w/\lambda$ at some moment, and $\min_{1\leq i\leq k}d^{(h)}_{w}(S_{i},T_{i})\geq \lambda$ always holds by \Cref{lemma:Lambda}.

To see that $F$ is feasible, for each edge $e\in E(G)$ we prove $F(e)\leq U(e)$. By the update rule of $w$, we have $w(e) \ge 1/(m^{\zeta}\cdot U_{\max})\cdot (1+\epsilon)^{F(e)/(\eta\cdot U(e))}$. Also, observe that $w(e)\leq (1+2\epsilon)(1+\epsilon)$ because right before the last time $w(e)$ is updated, $w(e)\leq w(P)\leq (1+2\epsilon)\lambda\leq 1+2\epsilon$. Therefore, we have
\begin{align*}
    \frac{1}{m^{\zeta}\cdot \poly(n)}\cdot(1+\epsilon)^{F(e)/(\eta\cdot U(e))}&\leq (1+2\epsilon)(1+\epsilon)\\
    \frac{1}{m^{\zeta}\cdot \poly(n)}\cdot \exp(\frac{F(e)}{U(e)}\cdot\frac{(1+10\epsilon)\ln m}{\epsilon^2} \cdot\ln(1+\epsilon)) & \le O(1)\\
    m^{((F(e)/U(e))(1+10\epsilon)-1)/\epsilon} & \le \poly(n)\\
\end{align*}
Suppose $F(e)/U(e) > 1$, then we can find small enough $\epsilon$ to make the above inequality invalid.
Thus we have $F(e) \le U(e)$.
\end{proof}

\begin{lemma}
\label{lemma:OptLCMCMF}
The output $(F,w_{\best})$ satisfies $(1-100 \epsilon)\sum_{e\in E(G)}w_{\best}(e)\cdot U(e)\leq \vvalue(F)$.
\end{lemma}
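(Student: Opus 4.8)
The plan is to follow the standard multiplicative-weights-update (MWU) analysis, in the style of Garg--Könemann and the length-constrained adaptation in \cite{HaeuplerHS23}. I will track the potential $\Phi = \sum_{e} w(e)U(e)$ and relate its growth to the flow value produced. First I would set up notation: index the iterations of the inner \textbf{for}-loops across all phases of the outer \textbf{while}-loop as $t = 1, 2, \ldots, T$, and let $w^{(t)}$ be the weight function after the $t$-th iteration, $\hat F^{(t)}$ the lightest-path blocker computed in iteration $t$, and $\lambda^{(t)}$ the value of $\lambda$ during that iteration. Since each $\hat F^{(t)}$ is an $h$-length $(1+\epsilon)$-lightest path $\Omega(\epsilon/h^2)$-blocker and, by \Cref{lemma:Lambda}, $\lambda^{(t)} \le \min_i d^{(h)}_{w^{(t-1)}}(S_i,T_i)$, every flow path $P \in \path(\hat F^{(t)})$ satisfies $w^{(t-1)}(P) \le (1+2\epsilon)\lambda^{(t)}$. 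This bounds how much weight mass each unit of blocker flow "sees," which is the crucial input to the potential-growth estimate.

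Next I would bound the multiplicative growth of $\Phi$ per iteration. Using the update $w(e) \gets w(e)(1+\epsilon)^{\hat F^{(t)}(e)/U(e)}$ and the inequality $(1+\epsilon)^x \le 1 + \epsilon x$ valid for $x \in [0,1]$ (note $\hat F^{(t)}(e)/U(e) \le 1$ by feasibility of the blocker), we get
\begin{align*}
\Phi^{(t)} &= \sum_e w^{(t-1)}(e)U(e)(1+\epsilon)^{\hat F^{(t)}(e)/U(e)} \\
&\le \Phi^{(t-1)} + \epsilon \sum_e w^{(t-1)}(e)\hat F^{(t)}(e) \\
&\le \Phi^{(t-1)} + \epsilon (1+2\epsilon)\lambda^{(t)}\,\vvalue(\hat F^{(t)}),
\end{align*}
where the last step decomposes $\hat F^{(t)}$ into flow paths and uses $w^{(t-1)}(P) \le (1+2\epsilon)\lambda^{(t)}$ for each path. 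Writing $\vvalue(F) = \eta \sum_t \lambda^{(t)}\vvalue(\hat F^{(t)}) / \lambda^{(t)}$... actually more carefully: the flow added in iteration $t$ is $\eta \hat F^{(t)}$, so $\vvalue(F) = \eta \sum_t \vvalue(\hat F^{(t)})$. I would instead relate $\Phi$ growth to the \emph{scaled} flow. Telescoping and using $1 + x \le e^x$ gives $\Phi^{(T)} \le \Phi^{(0)} \exp\big(\epsilon(1+2\epsilon)\sum_t \lambda^{(t)}\vvalue(\hat F^{(t)})/\Phi^{(t-1)}\big)$ — but a cleaner route, matching $\eta = \epsilon^2/((1+10\epsilon)\ln m)$, is to observe that $\Phi$ starts at $m \cdot m^{-\zeta} = m^{1-\zeta}$ and, since the algorithm stops once $\lambda \ge 1$ and $\lambda \le \min_i d^{(h)}_w(S_i,T_i)$ forces $\min_i d^{(h)}_w \ge 1$, every path has weight $\ge 1$ hence $\Phi \ge 1$ at the end; combined with $\Phi \le$ (trivial upper bound) this caps $\sum_t \vvalue(\hat F^{(t)})$ and hence caps $\vvalue(F)/\eta$. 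Meanwhile the dual side: $w_{\best}$ is the minimizer of $\Phi/\lambda$ seen, so $\sum_e w_{\best}(e)U(e) \le \Phi^{(t)}/\lambda^{(t)}$ for every $t$, and I would lower-bound this by relating $\ln(\Phi^{(t)}/\Phi^{(0)})$ to the flow and using the LP-duality-style averaging over phases. Taking the geometric mean / using the concavity argument of Garg--Könemann, one extracts $(1-O(\epsilon))\sum_e w_{\best}(e)U(e) \le \vvalue(F)$; the $\zeta = 1/\epsilon$ and $m^{-\zeta}$ initialization is exactly what makes the additive "$m^{1-\zeta}$ starting mass" term negligible, contributing only an $O(\epsilon)$ loss, and $\eta$ is tuned so the scaled flow $F$ is feasible (already shown in \Cref{lemma:FeasibleLCMCMF}) while $\vvalue(F)$ loses only another $O(\epsilon)$ factor relative to $\sum_t \vvalue(\hat F^{(t)})$.

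The main obstacle I expect is the bookkeeping that turns the per-phase guarantee "$\lambda$ increases by $(1+\epsilon)$ and stays a lower bound on the min distance" into a clean global bound linking $\vvalue(F)$ to $\sum_e w_{\best}(e) U(e)$ with only a $(1-100\epsilon)$ slack — in particular, correctly accounting for (i) the $(1+2\epsilon)$ slack coming from the "$(1+\epsilon)$-lightest path" approximation in \Cref{thm:PathBlockers}, (ii) the $(1+\epsilon)$-geometric spacing of $\lambda$, (iii) the $(1+10\epsilon)$ factor baked into $\eta$, and (iv) the $\Theta(h^3\log n/\epsilon^3)$ iteration count per phase, which must be large enough that if a path stayed light for a whole phase its weight would blow up past $1+\epsilon$ (this is precisely what drives the contradiction in \Cref{lemma:Lambda}, and the same counting must be reconciled with the potential bound here). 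All of these are constant-factor losses in $\epsilon$ and combine to the claimed $100\epsilon$; the only genuinely delicate point is making sure the "number of iterations per phase" is simultaneously (a) enough for \Cref{lemma:Lambda} and (b) not so many that $\vvalue(F)$ is forced below $(1-100\epsilon)\sum_e w_{\best}(e)U(e)$ — i.e. that the flow and cut bounds meet. I would handle this by first proving $\sum_e w_{\best}(e)U(e) \le \frac{1}{1-O(\epsilon)}\cdot\frac{\vvalue(F)}{\eta} \cdot \frac{\eta \cdot (\text{const})}{1}$ via the potential telescoping, then plugging in the explicit value of $\eta$ and the number of phases $O(\log_{1+\epsilon} m^{\zeta}) = O(\zeta \log m/\epsilon) = O(\log m/\epsilon^2)$ to verify the constants.
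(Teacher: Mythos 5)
Your proposal is correct and follows essentially the same route as the paper's proof: the Garg--K\"onemann potential argument tracking $\Phi=\sum_e w(e)U(e)$, the per-iteration growth bound $\Phi^{(t)}\le\Phi^{(t-1)}+\epsilon(1+2\epsilon)\lambda^{(t)}\vvalue(\hat F^{(t)})$ from the lightest-path-blocker guarantee, the substitution $\lambda^{(t)}\le\Phi^{(t-1)}/\beta$ with $\beta=\sum_e w_{\best}(e)U(e)$, telescoping against $\Phi^{(0)}=m^{1-\zeta}$ and $\Phi^{(T)}\ge 1$, and plugging in $\eta$ and $\zeta$ to collect the constants into $1+100\epsilon$. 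The only wobble is the momentary direction confusion ("caps $\vvalue(F)/\eta$" where you mean a lower bound), which your subsequent dual-side discussion corrects.
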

\begin{proof}
Let $\beta = \sum_{e}w_{\best}(e)U(e)$ be the value of the best dual solution $w_{\best}$. Consider the $i$-th iteration of the inner loop (over the course of the whole algorithm), let $\lambda^{(i)}$ and $w^{(i)}$ denote the variable $\lambda$ and function $w$ at the beginning of this iteration. Let $D^{(i)} = \sum_{e\in E(G)}w^{(i)}(e)\cdot U(e)$. Note that $D^{(i)}/\lambda^{(i)} \geq \beta$ by the definition of $w_{\best}$. Let $\hat{F}_{i}$ denote the path blocker $\hat{F}$ computed in this iteration. We have
\begin{align*}
D^{(i+1)} &= \sum_{e\in E(G)} w^{(i)}(e)\cdot (1+\epsilon)^{\hat{F}_{i}(e)/U(e)}\cdot U(e)\\
&\leq \sum_{e\in E(G)} w^{(i)}(e)\cdot U(e)\cdot (1+\epsilon\cdot \hat{F}_{i}(e)/U(e))\\
&= D^{(i)} + \epsilon \cdot \sum_{e\in E(G)}w^{(i)}(e)\cdot \hat{F}_{i}(e)\\
&\leq D^{(i)} + \epsilon\cdot (1+2\epsilon)\lambda^{(i)}\cdot \vvalue(\hat{F}_{i})\\
&\leq D^{(i)} + \epsilon\cdot (1+2\epsilon)\cdot \vvalue(\hat{F}_{i})\cdot\frac{D^{(i)}}{\beta}\\
&\leq D^{(i)}\cdot \exp\left(\frac{\epsilon\cdot (1+2\epsilon)\cdot \vvalue(\hat{F}_{i})}{\beta}\right)
\end{align*}

Let $T-1$ be the index of the last iteration. Because $F = \eta\sum_{i}\hat{F}_{i}$, we have
\[
D^{(T)}\leq D^{(1)}\cdot \exp\left(\frac{\epsilon\cdot (1+2\epsilon)\cdot \vvalue(F)}{\eta\cdot \beta}\right).
\]
Note that $D^{(1)} = \sum_{e}w^{(1)}(e)U(e)= m^{1-\zeta}$, and $D^{(T)}\geq 1$. 
Therefore,
\begin{align*}
m^{1-\zeta}&\leq\exp\left(\frac{\epsilon\cdot (1+2\epsilon)\cdot \vvalue(F)}{\eta\cdot \beta}\right)\\
(\zeta - 1)\ln m &\leq \frac{\epsilon\cdot (1+2\epsilon)\cdot \vvalue(F)}{\eta\cdot \beta}\\
\frac{1-\epsilon}{(1+2\epsilon)(1+10\epsilon)}\cdot \beta&\leq \vvalue(F)\\
\frac{\beta}{1+100\epsilon}&\leq \vvalue(F),
\end{align*}
as desired.
\end{proof}

\noindent{\textbf{The Work and Depth.}} Basically the algorithm just applies \Cref{thm:PathBlockers} $\Theta(h^{3}\log n/\epsilon^{3})$ times, so the total work and depth are $\tilde{O}(mkh^{8}/\epsilon^{4})$ and $\tilde{O}(h^{5}/\epsilon^{3})$.
\section{Approximate Mincost Flows via Flow Boosting}
\label{sect:MincostFlowByBoosting}

In this section, we finally get parallel $(1+\epsilon)$-approximate mincost flows, for both the concurrent and non-concurrent versions. We first formally define the problems in \Cref{def:concurrent-flow} and \Cref{def:nonconcurrent-flow}, which are from \cite{DBLP:conf/stoc/HaeuplerH0RS24}. The main result is given by \Cref{thm:MainResult}.

\begin{definition}[Concurrent Flow Problem]\label{def:concurrent-flow}
    Let $G$ be a connected graph with edge capacities $u \geq 1$ and $D : V \times V \mapsto \mathbb{N}$ an integral demand. Let $k = |\supp(D)|$. The \textit{$k$-commodity concurrent flow problem} asks to find a capacity-respecting flow $F$ routing $\lambda D$ for maximum $\lambda$. An algorithm is a $C$-approximation for the concurrent flow problem if it always produces a capacity-respecting flow $F$ routing $\lambda D$, where $\lambda \geq  \lambda^{*}/C$ and $\lambda^{*}$ is the maximum value for which there exists a capacity-respecting $F^{*}$ routing $\lambda^{*} D$.

    In the concurrent flow problem with costs, each edge has a cost $b \geq 0$ and there is a total cost budget $B$: the produced flow $F$ must additionally have total cost $\sum_{P} F(P) \sum_{e \in P} b(e) \leq B$. An algorithm is a $C$-approximation for the concurrent flow problem with costs if it always produces a capacity-respecting flow $F$ of total cost at most $B$ routing $\lambda D$, where $\lambda \geq \lambda^{*}/C$ and $\lambda^{*}$ is the maximum value for which there exists a capacity-respecting $F^{*}$ of total cost at most $B$ routing $\lambda^{*} D$.
\end{definition}

\begin{definition}[Non-Concurrent Flow Problem]\label{def:nonconcurrent-flow}
    Let $G$ be a connected graph with edge capacities $u \geq 1$ and $S$ a set of vertex pairs. Let $k = |S|$. The \textit{$k$-commodity non-concurrent flow problem} asks to find a capacity-respecting flow $F$ routing flow between vertex pairs in $S$, i.e. $\supp(D_{F}) \subseteq S$ of maximum value. An algorithm is a $C$-approximation for the non-concurrent flow problem if it always produces a capacity-respecting flow $F$ routing flow between vertex pairs in $S$ of value $\val(F) \geq \val(F^{*})/C$, where $F^{*}$ is the maximum-value capacity-respecting flow routing flow between vertex pairs in $S$.

    In the non-concurrent flow problem with costs, each edge has a cost $b \geq 0$ and there is a total cost budget $B$: the produced flow $F$ must additionally have total cost $\sum_{P} F(P) \sum_{e \in P} b(e) \leq B$. An algorithm is a $C$-approximation for the concurrent flow problem with costs if it always produces a capacity-respecting flow $F$ of total cost at most $B$ routing flow between vertex pairs in $S$ of value $\val(F) \geq \val(F^{*})/C$, where $F^{*}$ is the maximum-value capacity-respecting flow of total cost at most $B$ routing flow between vertex pairs in $S$.
\end{definition}

\begin{restatable}{theorem}{MainResult}
\label{thm:MainResult}
Let $G$ be an undirected graph with vertex costs and capacities. Given a precision parameter $1/\poly\log n\leq \epsilon\leq 1$, there are $(1+\epsilon)$-approximate algorithms for both the $k$-commodity concurrent and non-concurrent mincost flow problems with work $\hat{O}(mk)$ and depth $\hat{O}(1)$. The flow solution is outputted explicitly in its edge representation.
\end{restatable}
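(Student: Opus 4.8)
\textbf{Proof proposal for \Cref{thm:MainResult}.}
The plan is to reduce both the concurrent and the non-concurrent mincost flow problems to the \emph{low-step, bicriteria} mincost flow subproblem and then to the multiplicative-weights flow-boosting template of \cite{DBLP:journals/siamcomp/GargK07,DBLP:conf/stoc/HaeuplerH0RS24}. Recall that the boosting template reduces $(1+\epsilon)$-approximate mincost flow (with $1+\epsilon$ congestion and cost slack) to $\hat{O}(1)$ rounds of a routine that, given a demand $D$ and edge/vertex lengths (from the MWU weights), returns a flow routing $D$ with \emph{total length at most $(1+O(\epsilon))$ times the optimal $t$-step total length} and congestion $\hat{O}(1)$. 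So it suffices to implement this one subroutine for $k$-commodity demands in $\hat{O}(mk)$ work and $\hat{O}(1)$ depth, and to check that the $\hat{O}(1)$ congestion slack is boosted down to $1+\epsilon$ by the standard $\hat{O}(1)$-depth-overhead MWU argument exactly as in \cite{DBLP:conf/stoc/HaeuplerH0RS24}.

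The low-step bicriteria mincost flow subroutine is assembled from the two tools built earlier in the paper. First, apply \Cref{coro:emulator} to $G$ (with costs/lengths rescaled to be positive integral as in the preliminaries) to obtain an LC-flow shortcut $H$ of size $\hat{O}(n)$ with length slack $1+\epsilon$, congestion slack $\hat{O}(1)$, step $\hat{O}(1)$, together with its backward mapping algorithm running in $\hat{O}(m+nk)$ work and $\hat{O}(1)$ depth. In $G\cup H$ every commodity's demand can be routed along $\hat{O}(1)$-step paths with $(1+\epsilon)$ length slack, so it remains to compute a near-optimal low-step flow \emph{in $G\cup H$}. For this, as sketched in \Cref{Overview:LowStepFlow}, we run a greedy sweep over length scales $h=(1+\epsilon)^i$: for each scale, use \Cref{thm:ApproxLCMCMF} (the depth-independent-of-$k$ approximate length-constrained multi-commodity maxflow) to route as much of the residual demand as possible at that length scale, allowing additive length error $\epsilon h/t$ by rescaling lengths so the effective length bound is $O(t/\epsilon)=\hat{O}(1)$; then subtract the routed demand and proceed. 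Summing over the $O(\log N/\epsilon)=\hat{O}(1)$ scales gives a flow in $G\cup H$ whose total length is within $1+O(\epsilon)$ of the optimal $\hat{O}(1)$-step flow and whose congestion is $\hat{O}(1)$; each \Cref{thm:ApproxLCMCMF} call costs $\hat{O}(mk)$ work and $\hat{O}(1)$ depth since $h,\epsilon^{-1}=\hat{O}(1)$ here. Finally push this flow back to $G$ via the backward mapping algorithm of $H$, which preserves total length and multiplies congestion by $\hat{O}(1)$. This yields exactly the subroutine the boosting template needs.

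For the \emph{non-concurrent} problem (a set $S$ of $k$ pairs, maximize value subject to the cost budget $B$), the template is applied directly: the demand fed to the subroutine at each MWU round is the current residual $S$-demand, the weight function includes the cost budget as an extra "edge" as in \cite{DBLP:conf/stoc/HaeuplerH0RS24}, and the output flow is assembled in edge representation over the $\hat{O}(1)$ rounds. For the \emph{concurrent} problem (route $\lambda D$ for maximum $\lambda$), we additionally binary-search over $\lambda$: for a guessed $\lambda$, run the boosting procedure on the scaled demand $\lambda D$ and test feasibility; since capacities and $|D|$ are polynomially bounded, $\hat{O}(1)$ guesses suffice, and the work/depth bounds are unaffected up to the $\hat{O}(\cdot)$. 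Path-counting flows (\cite{DBLP:journals/siamcomp/Cohen95}, used inside \Cref{thm:ApproxLCMCMF}) are what keep the depth independent of $k$, so the final bounds are $\hat{O}(mk)$ work and $\hat{O}(1)$ depth, with the flow output explicitly in edge representation.

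The main obstacle I expect is controlling the interaction between the $\hat{O}(1)$ congestion slack coming from the shortcut's backward mapping and the $(1+\epsilon)$ cost slack, throughout the $\hat{O}(1)$ boosting rounds: one must verify that the flow-boosting template of \cite{DBLP:conf/stoc/HaeuplerH0RS24} indeed tolerates a subpolynomial congestion slack in the inner subroutine and still boosts the \emph{final} congestion to $1+\epsilon$ at only a subpolynomial cost in depth, and that the cost/length slack stays multiplicative $1+\epsilon$ rather than compounding over rounds or over the $\hat{O}(1)$ length scales in the greedy sweep. A secondary technical point is bookkeeping the reductions from edge costs/capacities to vertex costs/capacities (via vertex splitting) and back, so that the final solution is reported in the edge representation of the \emph{original} graph $G$ without an extra $k$ factor in work — this is handled by the remark following \Cref{thm:emulator} that the backward mapping only needs to touch the $\hat{O}(n)$-size part $H$ of $G\cup H$.
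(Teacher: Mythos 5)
Your proposal matches the paper's proof essentially step for step: LC-flow shortcut from \Cref{coro:emulator}, a greedy sweep over length scales $(1+\epsilon)^i$ using the rescaled-length LC-maxflow of \Cref{thm:ApproxLCMCMF} (this is the paper's \Cref{lemma:LowStepFlow}), the backward mapping to return to $G$, and the MWU flow-boosting template with congestion slack $\kappa=\hat{O}(1)$ absorbed into $\hat{O}(1)$ extra rounds. The only differences are cosmetic: the paper lets the template itself maximize $\lambda$ (guessing over powers of two) rather than binary-searching externally, and it patches the concurrent case inside the greedy sweep by repeating the non-concurrent LC-maxflow $\Theta(\log n)$ times per scale so that no commodity is dropped entirely — a detail your sketch elides but which fits within your framework.
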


\begin{remark}
\label{remark:deterministic}
We note that our algorithms are randomized and correct with high probability, but they have great potential to be deterministic. The only randomized components are the parallel neighborhood cover algorithm \Cref{lemma:NeighborhoodCover} and the parallel length-constrained vertex expander decomposition algorithm \Cref{thm:vertexLC-ED}. The latter is randomized only because the parallel neighborhood cover with separation \Cref{thm:neicov}. It seems that there have been deterministic results on parallel neighborhood covers \cite{DBLP:conf/focs/RozhonEGH22,DBLP:conf/stoc/RozhonGHZL22}, but we have not looked into them carefully.
\end{remark}

The high-level approach is to use the flow boosting technique \cite{DBLP:journals/siamcomp/GargK07,DBLP:conf/stoc/HaeuplerH0RS24}. In \Cref{sect:FlowBoosting}, we state the flow boosting template. In \Cref{sect:LowStepFlow}, we show how to compute bi-criteria approximate low-step flows with $(1+\eps)$ total-length slack and moderate congestion, using the approximate length-constrained maxflow algorithm \Cref{thm:ApproxLCMCMF}. In \Cref{sect:ApproxMTLFlow}, using the LC-flow shortcut \Cref{thm:emulator}, we can compute the bi-criteria approximate flow on the shortcut graph and map it back the original graph. Lastly, in \Cref{sect:ProofBootstrapQuality} we complete the proof of \Cref{thm:MainResult} by plugging the subroutine of \Cref{sect:ApproxMTLFlow} into the flow boosting template.

\subsection{Section Preliminaries}

\begin{lemma}[Parallel Mincost Flow Rounding \cite{DBLP:journals/siamcomp/Cohen95}]
\label{lemma:MincostFlowRounding}
Let $G$ be a directed graph with edge lengths and capacities. Let $S,T\subseteq V(G)$ be source and sink vertices. Given the edge representation of a fractional single-commodity flow $\wtilde{F}$ in $G$, there is an algorithm that computes the edge representation of an integral single-commodity flow $\hat{F}$ in $G$ satisfying that
\begin{enumerate}
\item\label{item:MincostFlowRounding1} $\totlen(\hat{F})\leq \totlen(\wtilde{F}) + 1/\poly(n)$,
\item\label{item:MincostFlowRounding2} for each edge $e\in E(G)$, $\lfloor F(e)\rfloor\leq \hat{F}(e)\leq \lceil F(e)\rceil$.
\end{enumerate}
The work is $\tilde{O}(m)$ and the depth is $\tilde{O}(1)$.
\end{lemma}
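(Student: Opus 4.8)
The plan is to prove this by a parallel implementation of Cohen's bit-by-bit Eulerian cycle-cancelling rounding --- the algorithm underlying \Cref{lemma:FlowRounding} --- modified so that whenever a cycle is cancelled, the cancellation is performed in whichever of its two possible directions does not increase the total length. A standard preprocessing step first reduces to the case that $\wtilde{F}$ is $1/2^{\ell}$-fractional for some $\ell = O(\log n)$: when $\wtilde{F}$ already has a polynomially bounded denominator (the situation in all of our applications) one sets $2^{\ell}$ to a common denominator at no loss, and in general one rounds each $\wtilde{F}(e)$ down to the nearest multiple of $2^{-\ell}$ and re-routes the $O(m\cdot 2^{-\ell})$ total residual demand along a subflow of $\wtilde{F}$ itself, which preserves flow conservation, keeps every edge value inside $[\lfloor \wtilde{F}(e)\rfloor,\lceil \wtilde{F}(e)\rceil]$, and increases $\totlen$ by at most $1/\poly(n)$. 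Write $f_{0}$ for the resulting flow; if $|S|>1$ or $|T|>1$, add a super-source $s^{\ast}$, a super-sink $t^{\ast}$, zero-length arcs $s^{\ast}\to s$ and $t\to t^{\ast}$, and a return arc $t^{\ast}\to s^{\ast}$ carrying $\vvalue(f_{0})$ units, turning $f_{0}$ into a circulation without altering the flow on the real edges of $G$.

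Next I would eliminate the fractional bits of $f_{0}$ from least to most significant, producing flows $f_{1},\dots,f_{\ell}$ where each $f_{j}$ is $1/2^{\ell-j}$-fractional. To pass from $f_{j-1}$ to $f_{j}$, let $H_{j}$ be the set of edges whose value is an \emph{odd} multiple of $2^{-(\ell-j+1)}$. A parity argument applied to the integers $2^{\ell-j+1}f_{j-1}(e)$ together with flow conservation shows that $H_{j}$ has even degree at every vertex (the artificial return gadget absorbs the terminals), so the underlying undirected multigraph on $H_{j}$ is a disjoint union of edge-disjoint cycles. I would extract these cycles in parallel by first pairing up, at each vertex, the $H_{j}$-edges incident to it --- a purely local operation yielding a $2$-factor --- and then identifying the connected components and a cyclic traversal order of each cycle via the Euler-tour / list-ranking technique, all in $\tilde{O}(m)$ work and $\tilde{O}(1)$ depth. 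For each cycle $C$ with a chosen traversal direction, pushing $+2^{-(\ell-j+1)}$ around $C$ increases each forward edge and decreases each backward edge of $C$ by $2^{-(\ell-j+1)}$ (pushing $-2^{-(\ell-j+1)}$ does the opposite); in either case every value on $C$ becomes a multiple of $2^{-(\ell-j)}$, stays non-negative (an edge of $H_{j}$ has value at least $2^{-(\ell-j+1)}$), and the two choices have exactly opposite effect on $\totlen$. I therefore take the direction whose effect on $\totlen$ is non-positive, which is read off from a single prefix-sum of edge lengths along the traversal of $C$. Performing this for all cycles of $H_{j}$ in parallel gives $f_{j}$ with $\totlen(f_{j})\le \totlen(f_{j-1})$.

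After step $\ell$ no fractional bits remain, so $\hat{F}:=f_{\ell}$ is integral. The total change of any edge's value over the whole process is a signed sum of distinct elements of $\{2^{-1},2^{-2},\dots,2^{-\ell}\}$, hence $|\hat{F}(e)-f_{0}(e)|<1$; combined with the integrality of $\hat{F}(e)$ this forces $\hat{F}(e)\in\{\lfloor f_{0}(e)\rfloor,\lceil f_{0}(e)\rceil\}=\{\lfloor \wtilde{F}(e)\rfloor,\lceil \wtilde{F}(e)\rceil\}$, giving item~\ref{item:MincostFlowRounding2}; and $\totlen(\hat{F})\le\totlen(f_{0})\le\totlen(\wtilde{F})+1/\poly(n)$, giving item~\ref{item:MincostFlowRounding1}. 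Deleting the artificial arcs leaves an integral single-commodity $S$--$T$ flow in $G$ with the stated guarantees. The cost is $\ell=O(\log n)$ rounds, each of $\tilde{O}(m)$ work and $\tilde{O}(1)$ depth, so $\tilde{O}(m)$ work and $\tilde{O}(1)$ depth overall.

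The main obstacle is the parallel cycle-decomposition step in low depth: obtaining the $2$-factor is trivial and local, but in order for ``forward vs.\ backward edge'' and the per-cycle length balance to be well defined one needs a \emph{consistent cyclic orientation} of each cycle, which is exactly where the Euler-tour/list-ranking machinery --- and the care needed so that the super-source/super-sink gadget does not distort any real cycle --- enters; this is the one place where the shallow-depth claim is nontrivial. A secondary point to be checked rather than assumed is that the floor/ceiling guarantee genuinely survives all $\ell=\Theta(\log n)$ rounds and does not degrade to an additive $O(\log n)$ error; this is precisely the telescoping ``a sum of distinct negative powers of two is less than $1$'' observation above, and it relies on each round changing any single edge by at most one power of two and never in both directions within the same round.
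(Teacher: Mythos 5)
The paper does not prove this lemma at all --- it imports it verbatim from Cohen~\cite{DBLP:journals/siamcomp/Cohen95} --- and your argument is precisely the standard bit-by-bit Eulerian cycle-cancelling proof underlying that citation, with the length-aware choice of cancellation direction supplying the mincost guarantee; it is correct. The only loose step is the preprocessing that reduces to a $2^{-\ell}$-fractional flow, but this is exactly what the $1/\poly(n)$ additive slack in item~\ref{item:MincostFlowRounding1} is there to absorb, and in every use of the lemma in this paper the input is already $1/\mu$-fractional for a power of two $\mu=\poly(n)$, so that step is vacuous.
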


\begin{corollary}
\label{coro:MCMCFlowRounding}
Let $G$ be a directed graph with edge lengths and capacities, and let $\mu\geq 1$ be a given round parameter which is some power of $2$. Given a (multi-commodity) flow $\wtilde{F}$ in $G$, there is an algorithm that computes the edge representation of an $(1/\mu)$-fractional flow $\hat{F}$ satisfying the following.
\begin{enumerate}
\item\label{item:MCMCFlowRounding1} Let $\wtilde{D}$ and $\hat{D}$ denote $\Dem(\wtilde{F})$ and $\Dem(\hat{F})$ respectively. For each $(u,v)\in\wtilde{D}$, $\lfloor \wtilde{D}(u,v)\cdot \mu\rfloor/\mu\leq \hat{D}(u,v)\leq \lceil 
\wtilde{D}(u,v)\cdot \mu \rceil/\mu$.
\item\label{item:MCMCFlowRounding2} $\totlen(\hat{F})\leq \totlen(\wtilde{F}) + 1/\poly(n)$.
\item\label{item:MCMCFlowRounding3} For each edge $e\in E(G)$, $\wtilde{F}(e) - |\supp(D)|/\mu\leq \hat{F}(e)\leq \wtilde{F}(e) + |\supp(D)|/\mu$.
\end{enumerate}
The work is $\tilde{O}(m\cdot|\supp(D)|)$ and the depth is $\tilde{O}(1)$.
\end{corollary}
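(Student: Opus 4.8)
The plan is to round each commodity of $\wtilde{F}$ independently, reducing to the single-commodity mincost rounding primitive \Cref{lemma:MincostFlowRounding} by the same scaling argument used in \Cref{coro:FlowRounding}. Write $\wtilde{F} = \{\wtilde{F}_{j}\}_{j}$ for the per-commodity decomposition, where $\wtilde{F}_{j}$ is the single-commodity flow of the $j$-th commodity, which has a single source $s_{j}$ and single sink $t_{j}$ (this matches the multi-commodity flow settings considered in this section, where one commodity corresponds to one pair in the support), so that $\supp(\wtilde{D}) = \{(s_{j},t_{j})\}_{j}$, $\wtilde{D}(s_{j},t_{j}) = \vvalue(\wtilde{F}_{j})$, and $k := |\supp(\wtilde{D})|$ is the number of commodities. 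Since the $k$-commodity edge representation of $\wtilde{F}$ already stores each $\wtilde{F}_{j}$ separately, extracting the commodities is free.

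For a fixed $j$, first convert $\wtilde{F}_{j}$ into a circulation $\wtilde{F}_{j}^{\circ}$ by adjoining an artificial return arc $r_{j} = (t_{j},s_{j})$ of length $0$ carrying $\vvalue(\wtilde{F}_{j})$ units; then $\totlen(\wtilde{F}_{j}^{\circ}) = \totlen(\wtilde{F}_{j})$ and flow conservation holds at every vertex. Now apply \Cref{lemma:MincostFlowRounding} (with $S = T = \emptyset$) to the scaled circulation $\mu\cdot\wtilde{F}_{j}^{\circ}$, which is $\poly(n)$-bounded since $\mu,U_{\max}\le\poly(n)$; this yields an integral circulation $\hat{F}_{j}'$, and we set $\hat{F}_{j} := \hat{F}_{j}'/\mu$ and discard $r_{j}$. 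Because conservation held everywhere in $\hat{F}_{j}'$, deleting $r_{j}$ leaves a valid $s_{j}$-$t_{j}$ flow of value $\hat{F}_{j}'(r_{j})/\mu$. Set $\hat{F} := \{\hat{F}_{j}\}_{j}$ and output its edge representation.

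Correctness then follows by reading off the guarantees of \Cref{lemma:MincostFlowRounding} per commodity and summing. Applied to the return arc $r_{j}$, the per-edge bound of \Cref{lemma:MincostFlowRounding} gives $\lfloor\mu\vvalue(\wtilde{F}_{j})\rfloor\le\hat{F}_{j}'(r_{j})\le\lceil\mu\vvalue(\wtilde{F}_{j})\rceil$, i.e.\ $\lfloor\mu\wtilde{D}(s_{j},t_{j})\rfloor/\mu\le\hat{D}(s_{j},t_{j})\le\lceil\mu\wtilde{D}(s_{j},t_{j})\rceil/\mu$, which is the first claim of the corollary; applied to each real edge $e$ it gives $\wtilde{F}_{j}(e) - 1/\mu\le\hat{F}_{j}(e)\le\wtilde{F}_{j}(e) + 1/\mu$, and summing over the $k=|\supp(D)|$ commodities yields the third claim. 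For the second claim, the total-length bound of \Cref{lemma:MincostFlowRounding}, together with $\mu\ge 1$ and the zero length of $r_{j}$, gives $\totlen(\hat{F}_{j})\le\totlen(\wtilde{F}_{j}) + 1/\poly(n)$, and summing over the $k\le n^{2}$ commodities keeps the additive error at $1/\poly(n)$. Each $\hat{F}_{j}$, hence $\hat{F}$, is $(1/\mu)$-fractional since $\hat{F}_{j}'$ is integral. Running the $k$ invocations of \Cref{lemma:MincostFlowRounding}, each on a graph of size $\Theta(m)$ (the single extra arc is immaterial), in parallel costs $\tilde{O}(m\cdot|\supp(D)|)$ work and $\tilde{O}(1)$ depth; the scaling and extraction steps are trivial.

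The only mildly delicate point is the first claim: \Cref{lemma:MincostFlowRounding} as stated controls only per-edge flow values and total cost, not the $s$-$t$ value, so the return-arc construction is precisely what forces the flow value to equal the flow on a single (artificial) edge and thereby inherit its floor/ceiling rounding. Everything else is routine accounting of the $k/\mu$ and $k/\poly(n)$ additive slacks, and no new ideas beyond those already present in \Cref{coro:FlowRounding} are needed.
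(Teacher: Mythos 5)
Your proof is correct and follows essentially the same route as the paper: round each commodity independently in parallel via \Cref{lemma:MincostFlowRounding}, after attaching a zero-length artificial edge carrying the full flow value so that the per-edge floor/ceiling guarantee also controls the rounded demand value. The only (cosmetic) difference is that you close the flow into a circulation with a return arc $(t_j,s_j)$, whereas the paper attaches an artificial source $u'$ and sink $v'$ with edges $(u',u)$ and $(v,v')$ — the latter stays strictly within the source–sink flow setting in which \Cref{lemma:MincostFlowRounding} is stated, but both constructions yield the identical accounting of the $k/\mu$ and $1/\poly(n)$ slacks.
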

\begin{proof}
Roughly speaking, we just apply \Cref{lemma:MincostFlowRounding} on different commodities in parallel. Concretely, let $\wtilde{F}^{(i)}$ be the commodity-$i$ flow of $\wtilde{F}^{(i)}$ corresponding to the source-sink pair $(u,v)\in \supp(\wtilde{D})$. First, we create a graph $G^{(i)}$ by adding artificial source $u'$ and sink $v'$, and adding two edges $(u',u),(v,v')$ with length $0$. Second, augmenting $\wtilde{F}^{(i)}$ to be a $u'$-to-$v'$ single-commodity flow on $G^{(i)}$ by putting $\vvalue(\wtilde{F}^{(i)}) = \wtilde{D}(u,v)$ flow units on edges $(u',u)$ and $(v,v')$. Third, round $\wtilde{F}^{(i)}$ to a $(1/\mu)$-fractional flow $\hat{F}^{(i)}$ by applying \Cref{lemma:MincostFlowRounding} on $\mu\cdot \wtilde{F}^{(i)}$ and then scaling down the output by $\mu$.

\Cref{item:MCMCFlowRounding1,item:MCMCFlowRounding3} in \Cref{coro:MCMCFlowRounding} come from \Cref{item:MincostFlowRounding1} in \Cref{lemma:MincostFlowRounding}, and \Cref{item:MCMCFlowRounding2} in \Cref{coro:MCMCFlowRounding} comes from \Cref{item:MincostFlowRounding2} in \Cref{lemma:MincostFlowRounding}. The work and depth are obvious.
\end{proof}

\subsection{The Flow Boosting Template}
\label{sect:FlowBoosting}

The statement and the proof of the following flow boosting template are almost identical to those in \cite{DBLP:conf/stoc/HaeuplerH0RS24}, except that we transform them into undirected graphs with vertex lengths, costs and capacities.

\begin{lemma}
\label{lemma:FlowBoosting}
Let $G$ be an undirected graph with vertex capacity function $U$ and vertex cost function $b$, and let $B\geq 0$ be the cost budget. Let ${\cal F}$ be any convex set of flows in $G$ containing at least one capacity-respecting flow, and let $\epsilon,s,\kappa\geq 0$ be parameters. Suppose there is an algorithm ${\cal A}$ that, given any integral vertex length function $\ell:V\to\{1,2,...,O(m^{1/\epsilon}N/\epsilon)\}$, computes the edge representation of a (non necessarily capacity-respecting) flow $F\in{\cal F}$ such that
\begin{itemize}
\item Length slack $s$: $\sum_{v\in V(G)} F(v)\cdot \ell(v)\leq s\cdot\sum_{v\in V(G)}F^{*}(v)\cdot \ell(v)$ for any flow $F^{*}\in{\cal F}$ that is also capacity-respecting, and
\item Congestion slack $\kappa$: $F(v)\leq \kappa\cdot U(v)$ for all $v\in V(G)$, i.e., $F(v)/\kappa$ is capacity-respecting.
\end{itemize}
Then, there is an algorithm that makes $O(\kappa \epsilon^{-2}\log^{2}n)$ calls to the algorithm ${\cal A}$ and outputs the edge representation of a flow $\bar{F}\in{\cal F}$ and a scalar $\lambda\geq 0$ such that
\begin{enumerate}
\item Feasibility: The flow $\lambda\bar{F}$ is capacity-respecting with cost at most $b$, and \label{item:flowboostingFeasibility}
\item Approximation factor $\approx s$: Let $\lambda^{*}$ be the maximum value such that there exists flow $F^{*}\in {\cal F}$ where $\lambda^{*}F^{*}$ is capacity-respecting with cost at most $B$. Then $\lambda\geq \frac{1-O(\epsilon)}{s}\lambda^{*}$.\label{item:flowboostingApx}
\end{enumerate}
Moreover, the flow $\bar{F}$ is a convex combination of the flows returned by the algorithm ${\cal A}$, and this convex combination can be outputted as well. The algorithm takes $\tilde{O}(\kappa\epsilon^{-2}m)$ work and $\tilde{O}(\kappa\epsilon^{-2})$ depth besides calling ${\cal A}$.
\end{lemma}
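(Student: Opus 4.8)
\textbf{Proof proposal for \Cref{lemma:FlowBoosting}.}
The plan is to follow the multiplicative-weights-update (MWU) flow-boosting framework of \cite{DBLP:journals/siamcomp/GargK07,DBLP:conf/stoc/HaeuplerH0RS24}, adapting it from the edge-cost / edge-capacity setting to undirected graphs with \emph{vertex} costs and capacities. Since the algorithm $\mcA$ is a bi-criteria oracle (it returns a flow in $\mcF$ with length slack $s$ against the best capacity-respecting competitor and congestion slack $\kappa$), the outer loop is a Lagrangian relaxation: we fold the single cost constraint $\sum_v \bar F(v) b(v)\le B$ into the objective as one extra ``length'' coordinate, so that $\mcA$ is asked to (approximately) minimize a nonnegative linear combination of the true vertex lengths and the costs. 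First I would set up the standard MWU potential: maintain a weight $y(v)\ge 0$ per vertex and a weight $z\ge0$ for the cost constraint, all initialized to roughly $1/(U(v))$ and $1/B$ respectively (after the usual $m^{1/\epsilon}$-type rescaling so that the integral-length restriction $\ell:V\to\{1,\dots,O(m^{1/\epsilon}N/\epsilon)\}$ can be met by rounding). In each of the $T=O(\kappa\epsilon^{-2}\log^2 n)$ iterations, call $\mcA$ with the length function $\ell = \lceil c\cdot(y + z\,b)\rceil$ for an appropriate scaling constant $c$, obtaining a flow $F_t\in\mcF$; then update $y(v)\leftarrow y(v)\exp(\epsilon F_t(v)/U(v))$ and $z\leftarrow z\exp(\epsilon\cdot\mathrm{cost}(F_t)/B)$.

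The key steps, in order: (1) \emph{Correctness of the oracle reduction} — show that because $\mcF$ contains a capacity-respecting flow $F^*$ of cost $\le B$ scaled by $\lambda^*$, the linear functional $\langle y+zb,\cdot\rangle$ evaluated on $\lambda^* F^*$ is small, so by the length-slack guarantee $\langle y+zb, F_t\rangle \le s\cdot\langle y+zb,\lambda^*F^*\rangle$, which bounds the per-step potential growth. Here one must be careful that $\mcA$ competes against the best capacity-respecting flow in $\mcF$, and $\lambda^* F^*$ is exactly such a flow; the cost budget enters only through the $zb$ coordinate and the bound $\mathrm{cost}(\lambda^*F^*)\le B$. (2) \emph{Potential telescoping} — with $\Phi_t = \sum_v y_t(v) U(v) + z_t B$, the update rule and the inequality $e^{x}\le 1+x+x^2$ for small $x$ (using congestion slack $\kappa$ to bound $F_t(v)/U(v)$ and the cost ratio, after scaling $F_t$ down by $\kappa$) give $\Phi_{t+1}\le \Phi_t\exp(O(\epsilon)\cdot \text{(stuff)})$, and after $T$ steps $\Phi_T \le \Phi_0\cdot n^{O(1)}$. (3) \emph{Extracting $\bar F$ and $\lambda$} — take $\bar F = \frac1T\sum_t F_t$ (a convex combination, as required, and in $\mcF$ by convexity), and set $\lambda$ to be the largest scalar making $\lambda\bar F$ capacity-respecting and cost-$\le B$; the MWU analysis shows $\lambda \ge \frac{1-O(\epsilon)}{s}\lambda^*$. (4) \emph{Rounding to integral lengths} — the $\lceil\cdot\rceil$ and the choice of scaling $c$ introduce only a $(1\pm\epsilon)$ multiplicative distortion, absorbed into the $O(\epsilon)$ in the approximation factor; the polynomial range $O(m^{1/\epsilon}N/\epsilon)$ is exactly what is needed so that $\min$ and $\max$ weights stay polynomially separated over all $T$ iterations. (5) \emph{Work and depth} — each iteration is $\tilde O(m)$ work / $\tilde O(1)$ depth for the weight updates and length-function construction (vertex-indexed arrays, parallel prefix for the rounding), so the overhead besides the $T=O(\kappa\epsilon^{-2}\log^2 n)$ oracle calls is $\tilde O(\kappa\epsilon^{-2}m)$ work and $\tilde O(\kappa\epsilon^{-2})$ depth.

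The main obstacle I expect is \emph{bookkeeping the vertex-capacitated, cost-budgeted variant cleanly} rather than any new idea: one must verify that $F(v) = \sum_{v\in V(P),P\in\path(F)} F(P)$ (the vertex-load notion from \Cref{sect:PrelimFlow}) behaves linearly under the convex combination and that the oracle's ``$\sum_v F(v)\ell(v)$'' objective is precisely the linear functional MWU needs; in the edge-capacitated case this is immediate, but with vertices on paths a flow path of step $t$ contributes to $t$ vertex loads, so the constant hidden in the length-slack inequality and in the $e^x\le 1+x+x^2$ step must be tracked against $\max_P \step(P)$ only insofar as it affects the range of $\ell$, not the correctness. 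A secondary subtlety is that $\mcA$ only guarantees length slack against capacity-respecting competitors in $\mcF$, so when we add the cost coordinate we need $\lambda^* F^*$ to \emph{simultaneously} be capacity-respecting and witness near-optimal $\lambda^*$ — this is fine by definition of $\lambda^*$, but the argument must not accidentally require $\mcA$ to know the budget $B$. Modulo these care points, the proof is the standard width-$\kappa$ MWU analysis and I would largely cite \cite{DBLP:conf/stoc/HaeuplerH0RS24} for the detailed inequalities, re-deriving only the places where vertex-capacities change a constant.
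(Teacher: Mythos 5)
Your overall strategy matches the paper's: a Garg--K\"onemann-style multiplicative-weights scheme with one dual weight per vertex plus one extra weight for the cost constraint, the oracle called on the (rounded, rescaled) combined length $y+\phi b$, the output taken as a convex combination of the oracle's flows, and the approximation factor extracted by comparing the dual potential against $\alpha(y,\phi)=\min_{F\in\mcF\cap\mcK}\langle y+\phi b,F\rangle$ via LP duality. However, there is a genuine gap at exactly the point where the cost budget makes this harder than the capacity-only setting: the \emph{width of the cost constraint is unbounded}. The oracle $\mcA$ guarantees only $F(v)\le\kappa U(v)$; it gives no bound whatsoever on $b(F)$ relative to $B$, so the exponent $\epsilon\cdot b(F_t)/B$ in your update $z\leftarrow z\exp(\epsilon\, b(F_t)/B)$ can be arbitrarily large. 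Scaling $F_t$ down by $\kappa$ (your step (2)) controls $F_t(v)/U(v)$ but does nothing for $b(F_t)/B$, so both the $e^x\le 1+x+x^2$ potential bound and the feasibility argument for the budget break. The paper fixes this by first truncating each returned flow: it sets $z^{(i)}=\min\{1,\,B/b(F^{(i)})\}$, adds $z^{(i)}F^{(i)}$ (not $F^{(i)}$) to the accumulator, and updates both weights using the truncated flow, so every update exponent is at most $\epsilon/\kappa$. This truncation also forces the weighted (rather than uniform) average $\bar F=\sum_i z^{(i)}F^{(i)}/\sum_i z^{(i)}$ and the explicit formula $\lambda=\sum_i z^{(i)}/(\kappa\log_{1+\epsilon}1/\delta)$, whose feasibility is then proved by the ``every $\kappa U(v)$ units through $v$ multiplies $y(v)$ by $1+\epsilon$'' argument; your post-hoc choice of the largest feasible $\lambda$ would be feasible by fiat but leaves the lower bound $\lambda\ge\frac{1-O(\epsilon)}{s}\lambda^*$ unproved.

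A secondary omission: your fixed iteration count $T=O(\kappa\epsilon^{-2}\log^2 n)$ is asserted rather than derived. In the paper the stopping rule is $D(y,\phi)\ge 1$, and the resulting iteration bound is $O(\beta\kappa\epsilon^{-2}\log m)$ where $\beta=\lambda^*$ is the LP optimum; to remove the dependence on $\beta$ one must guess $\beta\in[1,2]$ by rescaling capacities and costs over $O(\log n)$ powers of two, which is where the second $\log$ factor comes from. You should either add this guessing step or explain why a fixed $T$ suffices without knowing $\lambda^*$.
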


The proof is from \cite{DBLP:conf/stoc/HaeuplerH0RS24}, and we append it here for completeness.
Let $\cK$ be the set of capacity-respecting flows in $G$, and consider the following flow LP of the graph $G$. We have a variable $x(F)\ge0$ for each $F\in\mathcal F\cap\mathcal K$ indicating that we send flow $F$ scaled by $x(F)$. To avoid clutter, we also define $b(F)=\sum_{v\in V}F(v)\cdot b(v)$ as the cost of the flow $F$.
\begin{align*}
\max \qquad & \sum_{F\in\mathcal F\cap\mathcal K} x(F)
\\\text{s.t.} \qquad & \sum_{F\in\mathcal F\cap\mathcal K} F(v) \cdot x(F) \le U(v) & \forall v\in V
\\ & \sum_{F\in\mathcal F\cap\mathcal K}b(F)\cdot x(F)\le B
\\ & x\ge0
\end{align*}
Let $\beta$ be the optimal value of this LP. Note that since $\mathcal F\cap\mathcal K$ is convex, there is an optimal solution with $x(F^*)=\beta$ for some $F^*\in\mathcal F$ and $x(F)=0$ elsewhere. It follows that $\beta=\lambda^*$.

The dual LP has a variable $y(v)\ge0$ for each vertex $v\in V$ as well as a variable $\phi\ge0$ of the cost constraint.
\begin{align*}
\min \qquad & \sum_{v\in V} U(v)\cdot y(v) + B\cdot\phi & =:D(y,\phi)
\\\text{s.t.} \qquad & \sum_{v\in V} F(v) \cdot(y(v)+b(v)\phi)\ge 1 & \forall F\in\mathcal F\cap\mathcal K
\\ & y\ge0,\,\phi\ge0
\end{align*}
Let $D(y,\phi)=\sum_{v\in V} U(v)\ell(v)+B\cdot\phi$ be the objective value of the dual LP. 
We note that for any vertex $v \in V$, we interpret the value $y(v) + b(v)\phi$ as a length over $v$,
and we further define $\alpha(y,\phi)$ as the minimum length taken over any flow $F\in\mathcal F\cap\mathcal K$ under length function $y+\phi b$:
\[ \alpha(\ell,\phi)=\min_{F\in\mathcal F\cap\mathcal K}\sum_{v\in V}F(v)\cdot(\ell(v)+b(v)\phi) .\]
We note that for any feasible solution $\{y_{0},\phi_{0}\}$, we have $\{\frac{y_{0}}{\alpha(y_{0}, \phi_{0})}, \frac{\phi_{0}}{\alpha(y_{0}, \phi_{0})}\}$ is also a feasible solution.
Thus by scaling, we can restate the dual LP as finding a length function $y+\phi b$ minimizing $D(y,\phi)/\alpha(y,\phi)$. 
By LP duality, the minimum is at least $\beta$, the optimal value of the primal LP, which we recall also equals $\lambda^*\ge1$.

\paragraph{Algorithm.} The algorithm initializes length functions $y^{(0)}(v)=\delta/U(v)$ and $\phi^{(0)}=\delta/B$ for parameter $\delta=n^{-1/\epsilon}$, and proceeds for a number of iterations. For each iteration $i$, the algorithm wishes to call the oracle $\mathcal O$ on length function $y^{(i-1)}+\phi^{(i-1)}b$, but the length function $y^{(i-1)}+\phi^{(i-1)}b$ is not integral. However, we will ensure that they are always in the range $[\delta/N,O(1)]$. So the algorithm first multiplies each length by $O(N/(\delta\epsilon))=O(n^{1/\epsilon}N/\epsilon)$ and then rounds the weights to integers so that each length is increased by at most roughly the same factor up to $(1+\epsilon)$ after scaled back.
The algorithm calls the oracle on these scaled, integral weights to obtain a flow $F^{(i)}$. On the original, unscaled graph, the flow satisfies the following two properties guaranteed by the oracle $\cO$:
 \begin{enumerate}
 \item \textup{Length slack $(1+\epsilon)s$:} $\sum_{v\in V}F^{(i)}(v)\cdot(y^{(i-1)}(v)+b(v)\phi^{(i-1)})\le(1+\epsilon)s\cdot\alpha(y^{(i-1)},\phi^{(i-1)})$, and
 \item \textup{Congestion slack $\kappa$:} $F(v)\le\kappa U(v)$ for all $v\in V$, i.e., $F/\kappa\in\mathcal K$.
 \end{enumerate}

Define $z^{(i)}=\min\{1,\,B/b(F^{(i)})\}$ so that $b(z^{(i)}F^{(i)})\le B$, i.e., the cost of the scaled flow $z^{(i)}F^{(i)}$ is within the budget $B$.
The lengths are then updated as
\[ y^{(i)}(v)=y^{(i-1)}(v)\bigg(1+\frac\epsilon\kappa\cdot\frac{z^{(i)}F^{(i)}(v)}{U(v)}\bigg) \quad\text{and}\quad \phi^{(i)}=\phi^{(i-1)}\bigg(1+\frac\epsilon\kappa\cdot\frac{b(z^{(i)}F^{(i)})}B\bigg) . \]
This concludes the description of a single iteration. The algorithm terminates upon reaching the first iteration $t$ for which $D(t)\ge1$ and outputs
\[ \bar F=\frac{z^{(1)}F^{(1)}+z^{(2)}F^{(2)}+\cdots+z^{(t-1)}F^{(t-1)}}{z^{(1)}+z^{(2)}+\cdots+z^{(t-1)}}\quad\text{and}\quad\lambda=\frac{z^{(1)}+z^{(2)}+\cdots+z^{(t-1)}}{\kappa\log_{1+\epsilon}1/\delta} .\]

\paragraph{Correctness Analysis.} We first show the feasibility of the flow $\bar{F}$.
\begin{lemma}
    The scaled down flow $\lambda\cdot \bar{F} = \frac1{\kappa\log_{1+\epsilon}1/\delta}(z^{(1)}F^{(1)}+z^{(2)}F^{(2)}+\cdots+z^{(t-1)}F^{(t-1)})$ is capacity-respecting with cost at most $B$.
\end{lemma}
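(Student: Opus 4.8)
The statement to prove is that the output flow $\lambda\bar F = \frac{1}{\kappa\log_{1+\epsilon}1/\delta}\sum_{i<t} z^{(i)}F^{(i)}$ is capacity-respecting and has cost at most $B$. The cost bound is immediate: each $z^{(i)}F^{(i)}$ has cost at most $B$ by the choice of $z^{(i)}$, and $\bar F$ is the convex combination $\frac{\sum z^{(i)}F^{(i)}}{\sum z^{(i)}}$, so $\bar F$ has cost at most $B$ by convexity (the cost functional $b(\cdot)$ is linear); then $\lambda\bar F$ has cost at most $\lambda\cdot B\cdot\frac{\sum z^{(i)}}{\sum z^{(i)}}\cdot(\ldots)$ — more precisely, $b(\lambda\bar F)=\frac1{\kappa\log_{1+\epsilon}1/\delta}\sum_{i<t}b(z^{(i)}F^{(i)})\le \frac{(t-1)B}{\kappa\log_{1+\epsilon}1/\delta}$, and the capacity argument below will in fact also bound this by $B$, so I will extract the cost bound as a by-product. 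The real content is the capacity bound, which I would prove by the standard multiplicative-weights potential argument tracking the dual objective $D^{(i)}:=D(y^{(i)},\phi^{(i)})=\sum_v U(v)y^{(i)}(v)+B\phi^{(i)}$.

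\textbf{Key steps.} First I would establish the per-iteration growth bound on the dual potential: using $y^{(i)}(v)=y^{(i-1)}(v)(1+\tfrac\epsilon\kappa\cdot\tfrac{z^{(i)}F^{(i)}(v)}{U(v)})$ and $\phi^{(i)}=\phi^{(i-1)}(1+\tfrac\epsilon\kappa\cdot\tfrac{b(z^{(i)}F^{(i)})}{B})$, one computes
\[
D^{(i)} = D^{(i-1)} + \frac{\epsilon}{\kappa}\Bigl(\sum_v z^{(i)}F^{(i)}(v)\,y^{(i-1)}(v) + b(z^{(i)}F^{(i)})\,\phi^{(i-1)}\Bigr) = D^{(i-1)} + \frac{\epsilon}{\kappa}z^{(i)}\sum_v F^{(i)}(v)\bigl(y^{(i-1)}(v)+b(v)\phi^{(i-1)}\bigr).
\]
By the length-slack guarantee of the oracle $\cO$, the sum $\sum_v F^{(i)}(v)(y^{(i-1)}(v)+b(v)\phi^{(i-1)}) \le (1+\epsilon)s\cdot\alpha(y^{(i-1)},\phi^{(i-1)})$, and by weak LP duality $\alpha(y^{(i-1)},\phi^{(i-1)}) \le D^{(i-1)}/\lambda^*\le D^{(i-1)}$ (since $\lambda^*\ge1$). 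Hence $D^{(i)}\le D^{(i-1)}(1+\tfrac{\epsilon(1+\epsilon)s}{\kappa}z^{(i)})\le D^{(i-1)}\exp(\tfrac{\epsilon(1+\epsilon)s}{\kappa}z^{(i)})$. Second, I would iterate this from $i=1$ to $t-1$: with $D^{(0)}=\sum_v U(v)\cdot\delta/U(v)+B\cdot\delta/B = (n+1)\delta$, we get $D^{(t-1)}\le (n+1)\delta\cdot\exp(\tfrac{\epsilon(1+\epsilon)s}{\kappa}\sum_{i<t}z^{(i)})$. Third, I use the termination condition: $t$ is the first iteration with $D^{(t)}\ge1$, so $D^{(t-1)}<1$, giving $\sum_{i<t}z^{(i)} < \tfrac{\kappa}{\epsilon(1+\epsilon)s}\ln\tfrac{1}{(n+1)\delta}$, i.e. $\sum_{i<t}z^{(i)} = O(\kappa\log_{1+\epsilon}(1/\delta))$ since $\delta=n^{-1/\epsilon}$ and $s$ is a fixed slack. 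Fourth — and this is the step that actually delivers capacity-respecting-ness — I track the \emph{congestion} of $\lambda\bar F$ at each vertex $v$ directly from the length update: since $y^{(i)}(v)$ multiplies by $(1+\tfrac\epsilon\kappa\cdot\tfrac{z^{(i)}F^{(i)}(v)}{U(v)})\ge \exp\bigl(\tfrac{\epsilon}{\kappa(1+\epsilon)}\cdot\tfrac{z^{(i)}F^{(i)}(v)}{U(v)}\bigr)$ (using $1+x\ge e^{x/(1+\epsilon)}$ for $x$ in the relevant small range, noting $\tfrac{z^{(i)}F^{(i)}(v)}{U(v)}\le\kappa$ by congestion slack so the exponent is bounded), we obtain $y^{(t-1)}(v)\ge \tfrac{\delta}{U(v)}\exp\bigl(\tfrac{\epsilon}{\kappa(1+\epsilon)U(v)}\sum_{i<t}z^{(i)}F^{(i)}(v)\bigr)$. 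Combined with $y^{(t-1)}(v)\le D^{(t-1)} < 1$, this yields $\sum_{i<t}z^{(i)}F^{(i)}(v) \le \tfrac{\kappa(1+\epsilon)U(v)}{\epsilon}\ln\tfrac{U(v)}{\delta} \le \kappa(1+\epsilon)U(v)\log_{1+\epsilon}(1/\delta)\cdot(1+o(1))$, so dividing by $\kappa\log_{1+\epsilon}(1/\delta)$ gives $(\lambda\bar F)(v)\le (1+O(\epsilon))U(v)$. To get an exactly capacity-respecting flow I would, as in Garg–Könemann, absorb the $(1+O(\epsilon))$ factor by a final rescaling of $\lambda$ (this is where the "$1-O(\epsilon)$" in the approximation factor comes from, handled in the next lemma), or note the standard trick of choosing constants so the bound is $\le U(v)$ outright.

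\textbf{Main obstacle.} The one genuinely delicate point is matching the constants between the potential-growth exponent $\tfrac{\epsilon(1+\epsilon)s}{\kappa}$ and the per-vertex congestion exponent $\tfrac{\epsilon}{\kappa(1+\epsilon)}$ so that the final congestion bound $\sum_{i<t}z^{(i)}F^{(i)}(v)/U(v) \le \kappa\log_{1+\epsilon}(1/\delta)$ holds with the normalization $\lambda = \tfrac{\sum z^{(i)}}{\kappa\log_{1+\epsilon}1/\delta}$ exactly as written, rather than with an extra $(1+\epsilon)$ slack; the resolution is that the inequality $1+x\ge e^{x/(1+\epsilon)}$ is only valid up to $x = O(\epsilon)$ or so, so one must verify that a single update step changes each length by at most a $(1+\epsilon)$-ish factor — which is exactly why the algorithm uses the scaled-down flow $z^{(i)}F^{(i)}$ with $z^{(i)}\le 1$ and congestion slack $\kappa$ appearing in the denominator, keeping $\tfrac\epsilon\kappa\cdot\tfrac{z^{(i)}F^{(i)}(v)}{U(v)}\le\epsilon$. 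I would state this as a small auxiliary claim ("each length increases by a factor in $[1,1+\epsilon]$ per iteration") and then the exponential estimates go through cleanly. The cost bound $b(\lambda\bar F)\le B$ then follows by the identical computation applied to the coordinate $\phi$ in place of a vertex $v$: $\phi^{(t-1)}\ge\tfrac{\delta}{B}\exp\bigl(\tfrac{\epsilon}{\kappa(1+\epsilon)B}\sum_{i<t}b(z^{(i)}F^{(i)})\bigr)$ and $\phi^{(t-1)}<1$ give $\tfrac1{\kappa\log_{1+\epsilon}1/\delta}\sum_{i<t}b(z^{(i)}F^{(i)})\le(1+O(\epsilon))B$, and again rescaling finishes it.
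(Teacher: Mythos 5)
Your proposal follows the same route as the paper: the standard Garg--K\"onemann feasibility argument that tracks the multiplicative growth of each dual length variable $y(v)$ (and of $\phi$) against the amount of flow routed, and uses the termination condition $D^{(t-1)}<1$ to cap that growth. Two remarks on where your constants drift. First, the clean way to close the argument is the inequality $1+\epsilon y\ge(1+\epsilon)^{y}$ for $y\in[0,1]$, applied with $y=\frac{z^{(i)}F^{(i)}(v)}{\kappa U(v)}\le 1$ (valid precisely because of the congestion slack $\kappa$ and $z^{(i)}\le1$); this gives $y^{(t-1)}(v)\ge\frac{\delta}{U(v)}(1+\epsilon)^{\sum_i z^{(i)}F^{(i)}(v)/(\kappa U(v))}$ and hence $\sum_i z^{(i)}F^{(i)}(v)<\kappa U(v)\log_{1+\epsilon}(1/\delta)$ \emph{exactly}, which is what the fixed normalization $\lambda=\frac{\sum_i z^{(i)}}{\kappa\log_{1+\epsilon}1/\delta}$ requires. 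Your detour through $1+x\ge e^{x/(1+\epsilon)}$ and natural logarithms converts back to $\log_{1+\epsilon}$ with a residual $(1+O(\epsilon))$ factor, and your proposed fix of ``rescaling $\lambda$'' is not available here since $\lambda$ is pinned down by the statement; the tight base-$(1+\epsilon)$ inequality removes the issue entirely. Second, the upper bound on the final length should be $U(v)\,y^{(t-1)}(v)\le D^{(t-1)}<1$, i.e.\ $y^{(t-1)}(v)<1/U(v)$, not $y^{(t-1)}(v)<1$; using the latter introduces a spurious $\ln U(v)$ term. The cost bound is handled identically via $\phi$, as you say. Finally, your first three ``key steps'' (the growth of $D^{(i)}$ via the oracle's length slack and the resulting bound on $\sum_i z^{(i)}$) are not needed for this lemma --- they belong to the approximation-factor and running-time analyses --- so the feasibility proof can be considerably shorter than you sketch.
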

\begin{proof}
    Consider any vertex $v$. On each iteration, we route $z^{(i)}F^{(i)}(v)\le\kappa U(v)$ units of flow through $v$ and increase its corresponding variable $y(v)$ by a factor $(1+\frac\epsilon\kappa\cdot\frac{z^{(i)}F^{(i)}(v)}{U(v)})\le1+\epsilon$. So for every $\kappa U(v)$ units of flow routed through $v$ over the iterations, we increase its variable by at least a factor $1+\epsilon$. 
    Initially, the value of its variable is $\delta/U(v)$, and after $t-1$ iterations, it satisfies $y^{(t-1)}(v)\le 1/U(v)$ because otherwise $D(y^{(t-1)},\phi^{(t-1)}) > 1$ which leads to an early termination. Therefore the total amount of flow through $v$ in the first $t-1$ phases is strictly less than $\kappa \log_{1+\epsilon}\frac{1/U(v)}{\delta/U(v)}\cdot U(v)=\kappa\log_{1+\epsilon}1/\delta \cdot U(v)$. Scaling the flow down by $\kappa\log_{1+\epsilon}1/\delta$, we obtain a capacity-respecting flow.

    We can similarly show that the scaled down flow has cost at most $B$. 
    On each iteration, we route a flow of cost $b(z^{(i)}F^{(i)})\le B$ and increase the variable $\phi^{(i)}$ by a factor $(1+\frac\epsilon\kappa\cdot\frac{b(z^{(i)}F^{(i)})}B)\le1+\epsilon/\kappa$ over the previous variable $\phi^{(i-1)}$. So for every $B$ cost of flow routed, we increase the variable $\phi^{(i)}$ by at least a factor $1+\epsilon/\kappa$. And for every $\kappa B$ cost of flow routed, the length increases by at least a factor $(1+\epsilon/\kappa)^\kappa\ge1+\epsilon$. Initially, $\phi^{(0)}=\delta/B$, and after $t-1$ iterations, we similarly have $\phi^{(t-1)}\le 1/B$. Therefore the total cost of flow in the first $t-1$ phases is strictly less than $\kappa B\log_{1+\epsilon}\frac{1/B}{\delta/B}=\kappa\log_{1+\epsilon}(1/\delta) \cdot B$. Scaling the flow down by $\kappa\log_{1+\epsilon}1/\delta$, we obtain a flow with cost at most $B$.
\end{proof}
Then we argue about the approximation factor for $\bar{F}$.
We will analyze the values of $D(y^{(i)},\phi^{(i)})$ and $\alpha(y^{(i)},\phi^{(i)})$ for the variables $y^{(i)},\phi^{(i)}$. To avoid clutter, we denote $D(i)=D(y^{(i)},\phi^{(i)})$ and $\alpha(i)=\alpha(y^{(i)},\phi^{(i)})$.  For each iteration $i$, we have
\begin{align*}
D(i)&=\sum_{v\in V} U(v)\cdot y^{(i)}(v)+B\cdot\phi^{(i)}
\\&=\sum_{v\in V}U(v)\cdot y^{(i-1)}(v)\bigg(1+\frac\epsilon\kappa\cdot\frac{z^{(i)}F^{(i)}(v)}{U(v)}\bigg)+B\cdot\phi^{(i-1)}\bigg(1+\frac\epsilon\kappa\cdot\frac{b(z^{(i)}F^{(i)})}B\bigg)
\\&=D(i-1)+\frac\epsilon\kappa\cdot\sum_{v\in V}z^{(i)}F^{(i)}(v)\cdot y^{(i-1)}(v)+\frac\epsilon\kappa\cdot z^{(i)}\cdot b(F^{(i)})\cdot\phi^{(i-1)}
\\&=D(i-1)+\frac\epsilon\kappa\cdot z^{(i)}\bigg(\sum_{v\in V}F^{(i)}(v)\cdot(y^{(i-1)}(v)+b(v)\phi^{(i-1)})\bigg)
\\&\le D(i-1)+\frac\epsilon\kappa\cdot z^{(i)}\cdot(1+\epsilon)s\cdot \alpha(i-1),
\end{align*}
where in the last inequality we use the length slack property of the flow from the oracle $\cO$.
Recall that $\beta$ is the optimal value of the primal LP problem, then we have $D(i-1)/\alpha(i-1)\ge\beta$. 
Further
\[ D(i) \le D(i-1)+\frac\epsilon\kappa\cdot z^{(i)}\cdot(1+\epsilon)s\cdot \frac{D(i-1)}\beta .\]
Define $\epsilon'=\epsilon(1+\epsilon)s/\kappa$ so that
\[ D(i)\le\bigg(1+\frac{\epsilon(1+\epsilon)sz^{(i)}}{\kappa\beta}\bigg)D(i-1) = \bigg(1+\frac{\epsilon'z^{(i)}}\beta\bigg)D(i-1) .\]
Since $D(0)=(n+1)\delta\le m\delta$, we have for $i\ge1$
\begin{align*}
D(i)\le\bigg(\prod_{j\le i}(1+\epsilon'z^{(j)}/\beta)\bigg)m\delta&=\bigg(1+\frac{\epsilon'z^{(i)}}\beta\bigg)m\delta\prod_{j\le i-1}\bigg(1+\frac{\epsilon'z^{(j)}}\beta\bigg)
\\&\le(1+\epsilon')m\delta\exp\bigg(\frac{\epsilon'\sum_{j\le i-1}z^{(j)}}\beta\bigg) ,
\end{align*}
where the last inequality uses our assumption that $\beta\ge1$ and the fact that $z^{(j)}\le1$ by definition. To avoid clutter, define $z^{(\le i)}=\sum_{j\le i}z^{(j)}$ for all $i$.

The procedure stops at the first iteration $t$ for which $D(t)\ge1$. Therefore,
\[ 1\le D(t)\le(1+\epsilon')m\delta\exp\bigg(\frac{\epsilon'z^{(\le t-1)}}\beta\bigg) ,\]
which implies
\begin{gather}
\frac\beta{z^{(\le t-1)}}\le\frac{\epsilon'}{\ln\frac1{(1+\epsilon')m\delta}} .\label{eq:to-establish-condition-2}
\end{gather}
We further note the fact that $\delta=n^{-1/\epsilon}$ to obtain
\[ \frac\lambda{\lambda^*}=\frac{z^{(\le t-1)}}{\kappa\log_{1+\epsilon}1/\delta}\cdot\frac1\beta\ge \frac{\ln\frac1{(1+\epsilon')m\delta}}{\epsilon'\cdot\kappa\log_{1+\epsilon}1/\delta} = \frac{\ln\frac1{(1+\epsilon')m\delta}}{\epsilon(1+\epsilon) s\log_{1+\epsilon}1/\delta} \ge \frac{1-O(\epsilon)}s .\]

\paragraph{Running time.}
Recall from above that $\lambda\bar F$ is capacity-respecting with cost at most $B$, so $\lambda\le\beta$. Since $\lambda=\frac{z^{(\le t-1)}}{\kappa\log_{1+\epsilon}1/\delta}$, we obtain $z^{(\le t-1)}\le\beta\kappa\log_{1+\epsilon}1/\delta$. On each iteration $i\le t-1$, either $z^{(i)}=1$ or $z^{(i)}<1$, and the latter case implies that $b(z^{(i)}F^{(i)})=B$, which means $\phi^{(i)}=\phi^{(i-1)}(1+\epsilon/\kappa)$. Initially, $\phi^{(0)}=\delta/B$, and after $t-1$ iterations, since $D(t-1)<1$, we have $\phi^{(t-1)}\le D(t-1)/B<1/B$. So the event $\phi^{(i)}=\phi^{(i-1)}(1+\epsilon/\kappa)$ can happen at most $\log_{1+\epsilon/\kappa}1/\delta$ times. It follows that $z^{(i)}<1$ for at most $\log_{1+\epsilon/\kappa}1/\delta$ values of $i\le t-1$. Since $z^{(\le t-1)}\le\beta\kappa\log_{1+\epsilon}1/\delta$, we have $z^{(i)}=1$ for at most $\beta\kappa\log_{1+\epsilon}1/\delta$ many values of $i\le t-1$. Therefore, the number of iterations $t$ is at most
\[ \log_{1+\epsilon/\kappa}1/\delta+\beta\kappa\log_{1+\epsilon}1/\delta+1=O(\beta\kappa\epsilon^{-2}\log m) .\]
By scaling all vertex capacities and costs by various powers of two, we can ensure that $\beta\in[1,2]$ on at least one guess, so the number of iterations is $O(\kappa\log_{1+\epsilon}1/\delta)=O(\kappa\epsilon^{-2}\log m)$. Doing so also ensures that $\lambda^*=\beta\ge1$ as we had previously assumed. For incorrect guesses, we terminate the algorithm above after $O(\kappa\epsilon^{-2}\log m)$ iterations to not waste further computation. Among all guesses, we take the one with maximum $\lambda$ that satisfies feasibility. Since there are $O(\log n)$ relevant powers of two, the running time picks up an overhead of $O(\log n)$.

\subsection{Approximate Low-Step Min-Total-Length Multi-Commodity Flows}
\label{sect:LowStepFlow}

In this subsection, we want to show \Cref{lemma:LowStepFlow}, but in fact we will prove a slightly more general version \Cref{lemma:DirectedLowStepFlow}. The proof of \Cref{lemma:LowStepFlow} is deferred to the end of this subsection after finishing proving \Cref{lemma:DirectedLowStepFlow}. For better understanding, we note that in \Cref{lemma:LowStepFlow}, choosing $\tau=1$ means we are in the non-concurrent setting, while choosing $\tau = |D|$ means the concurrent setting.

\begin{lemma}
\label{lemma:LowStepFlow}

Let $G$ be an undirected graph with positive integral vertex lengths $\ell$ and integral vertex capacities $U$. Given an integral demand $D$, step bound $t\geq 1$, a flow value threshold $\tau$ equal to either $1$ or $|D|$, suppose there exists a feasible flow $F^{*}$ partially routing $D$ with $\vvalue(F^{*}) = \tau$ and step $t$. Then further given a precision parameter $10/n\leq \epsilon<1$, there is an algorithm that computes the path representation of a flow $F$ satisfying the following.
\begin{enumerate}
\item $\Dem(F)\leq D$ and $\vvalue(F) = \tau$.
\item $F$ has congestion $O(\log^{2} n/\epsilon)$ in $G$.
\item $\totlen(F)\leq (1+\epsilon)^{5}\cdot\totlen(F^{*})$ for any feasible flow $F^{*}$ partially routing $D$ with $\vvalue(F^{*}) = \tau$ and step $t$.
\end{enumerate}
The work and depth are $\tilde{O}(mk\cdot t^{8}/\epsilon^{9})$ and $\tilde{O}(t^{5}/\epsilon^{6})$ respectively, where $k = |\supp(D)|$ is the number of commodities.
\end{lemma}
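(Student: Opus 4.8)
I would reduce \Cref{lemma:LowStepFlow} to its directed edge-capacitated analogue \Cref{lemma:DirectedLowStepFlow} by the standard vertex-splitting reduction: replace each vertex $v$ by $v_{\mathrm{in}},v_{\mathrm{out}}$ joined by an edge of length $\ell(v)$ and capacity $U(v)$, orient each original undirected edge both ways with length $0$ and infinite capacity, and recover positivity of lengths in the usual way; this lands exactly in the directed, edge-length setting. So the work is to prove the directed version: given a demand $D$, a step bound $t$, and $\tau\in\{1,|D|\}$ admitting a feasible step-$t$ flow $F^{*}$ of value $\tau$ partially routing $D$, compute a flow $F$ with $\Dem(F)\le D$, $\vvalue(F)=\tau$, congestion $O(\log^{2}n/\epsilon)$, and $\totlen(F)\le(1+\epsilon)^{5}\totlen(F^{*})$, within the stated work and depth.

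The core is a scale-by-scale greedy over $h_{i}=(1+\epsilon)^{i}$ for $i=0,1,\dots,i_{\max}$ with $i_{\max}=O(\log_{1+\epsilon}(nN))$. To process scale $h_{i}$ I first rescale lengths to $\ell'_{i}(e)=\lceil \ell(e)\cdot t/(\epsilon h_{i})\rceil$: any step-$t$ path of $\ell$-length at most $h_{i}$ then has $\ell'_{i}$-length at most $h':=t/\epsilon+t=O(t/\epsilon)$, while any path of $\ell'_{i}$-length at most $h'$ has $\ell$-length at most $(1+\epsilon)h_{i}$. Maintaining a residual demand $D'$ (initially $D$), I then repeatedly invoke \Cref{thm:ApproxLCMCMF} on $G$ with lengths $\ell'_{i}$, length parameter $h'$, one commodity per pair of $\supp(D)$, and a \emph{constant} precision, cap the returned flow per commodity at $D'$, add it to $F$, and subtract its demand from $D'$. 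The returned flow--cut pair certifies that each round routes an $\Omega(1)$ fraction of the maximum $\ell'_{i}$-length-$h'$ flow still routable within $D'$, so after $O(\log(nN))$ rounds this maximum drops below the output granularity $1/\poly(n)$ and I advance to $h_{i+1}$. This is $O(\log n)\cdot O(\log_{1+\epsilon}(nN))=O(\log^{2}n/\epsilon)$ invocations, each of a flow feasible in $G$, giving congestion $O(\log^{2}n/\epsilon)$; and since $h'=O(t/\epsilon)$ and the precision is constant, \Cref{thm:ApproxLCMCMF} yields total work $\tilde O(mkt^{8}/\epsilon^{9})$ and depth $\tilde O(t^{5}/\epsilon^{6})$.

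For correctness I would write $F^{*}=\sum_{j}F^{*}_{j}$, where $F^{*}_{j}$ is the sub-flow on paths of $\ell$-length in $(h_{j-1},h_{j}]$, and set $V^{*}_{\le i}=\sum_{j\le i}\vvalue(F^{*}_{j})$. An easy exchange argument---for each pair $(u,v)$ keep $\min\{\Dem(F^{*}_{\le i})(u,v),\,D(u,v)-\Dem(F^{<i})(u,v)\}$ units of $F^{*}_{\le i}$, where $F^{<i}$ is everything routed so far---shows the residual demand always admits a feasible step-$t$ $h_{i}$-length flow of value at least $V^{*}_{\le i}-\vvalue(F^{<i})$, hence (any such flow being $\ell'_{i}$-length-$h'$) the greedy routes essentially all of it before leaving scale $i$. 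Inducting over scales gives that after scale $i$ we have routed at least $V^{*}_{\le i}-\tilde O(1/\poly(n))$; at $i_{\max}$ this is $\tau-\tilde O(1/\poly(n))$, and since $\tau$ is an integer and $\vvalue(F)$ is $1/\poly(n)$-granular, choosing the internal precision fine enough forces $\vvalue(F)=\tau$ exactly (truncating the longest paths removes any overshoot without increasing total length). For the length bound, each $F^{*}_{j}$-unit is routed by us by the end of scale $j$ on a path of $\ell$-length at most $(1+\epsilon)h_{j}$, whereas $F^{*}$ pays more than $h_{j-1}$ for it; summing over $j$, absorbing the $\tilde O(1/\poly(n))$ residuals using $\totlen(F^{*})\ge\tau\ge 1$, and accounting for the lower-order slack from per-commodity capping and from rounding the fractional flows of \Cref{thm:ApproxLCMCMF}, gives $\totlen(F)\le(1+\epsilon)^{5}\totlen(F^{*})$.

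The main obstacle is the length rescaling: it has to make the per-scale subroutine cost depend on $t/\epsilon$ rather than on $h$ while keeping the cumulative length slack at $(1+\epsilon)^{O(1)}$, which forces calling \Cref{thm:ApproxLCMCMF} at constant precision and therefore requires the ``iterate within a scale'' trick, the flow--cut certificate to bound each round's progress, and the exchange argument to guarantee nothing important is left behind at a given scale. Keeping the residual-demand bookkeeping, the $1/\poly(n)$ granularity, and the exact-$\tau$ requirement consistent across all of these is the delicate part; the remaining estimates are routine.
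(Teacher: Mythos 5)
Your proposal follows essentially the same route as the paper: the vertex-splitting reduction to a directed edge-capacitated instance, the scale-by-scale greedy over $h_i=(1+\epsilon)^i$ with the length rescaling $\lceil \ell(e)/(\epsilon h_i/t)\rceil$ so that each call to the approximate length-constrained maxflow (\Cref{thm:ApproxLCMCMF}) runs with length bound $t/\epsilon+t$ at constant precision, the $O(\log n)$ inner iterations per scale on the residual demand to avoid dropping a commodity in the concurrent case, and the per-scale exchange/decomposition argument comparing against $F^*$ split by path length.

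The one step where your mechanism does not quite work is forcing $\vvalue(F)=\tau$ exactly. After all scales the algorithm is \emph{short} of $\tau$ by $\Theta(1/\poly(n))$ (each rounding and each early termination of the inner loop loses a little), so there is no overshoot to truncate, and ``$\vvalue(F)$ is $1/\poly(n)$-granular and within $1/\poly(n)$ of the integer $\tau$'' does not imply equality. Moreover, for $\tau=|D|$ you need $\Dem(F)=D$ per commodity, not just the right total. The paper closes this deficit by scaling each commodity's flow \emph{up} by a factor $\le 1+\epsilon$ (namely $\tau/(\tau-1/n)$ when $\tau=1$, or $D(u_i,v_i)/\vvalue(\wtilde F^{(i)})$ per pair when $\tau=|D|$), which is exactly where the fifth $(1+\epsilon)$ factor in the length bound comes from; you should replace the truncation step with this scaling.
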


The intuition behind the proof of \Cref{lemma:DirectedLowStepFlow} is a simple greedy strategy, as mentioned in \Cref{Overview:LowStepFlow}. But there is a small technical issue. When we pick $\tau = |D|$, we are actually in the concurrent setting, while \Cref{thm:ApproxLCMCMF} we will use is an approximate algorithm in the non-concurrent setting. Roughly speaking, when we miss an $\epsilon$-fractional of total demands in the non-concurrent setting, we may miss some commodity entirely, which is not allowed in the concurrent setting. A simple fix is to iterate the non-concurrent approximate algorithm $\tilde{O}(\log n)$ times to reduce the missing fraction to $1/n$. This is what the second For loop in \Cref{algo:LowStepFlows} is doing.

\begin{lemma}
\label{lemma:DirectedLowStepFlow}
Let $G$ be a directed graph with positive integral edge lengths $\ell$ and  integral capacities $U$. Given an integral demand $D$, step bound $t\geq 1$, and a flow value threshold $1\leq \tau\leq |D|$, suppose there exists a feasible flow $F^{*}$ partially routing $D$ with $\vvalue(F^{*}) = \tau$ and step $t$. Then further given a precision parameter $10/n\leq \epsilon<1$, there is an algorithm that computes the path representation of a flow $F$ satisfying the following.
\begin{enumerate}
\item $\Dem(F)\leq D$ and $\vvalue(F) = \tau - 1/n$,
\item $F$ has congestion $O(\log^{2} n/\epsilon)$ in $G$,
\item $\totlen(F)\leq (1+\epsilon)^{4}\cdot\totlen(F^{*})$ for any feasible flow $F^{*}$ partially routing $D$ with $\vvalue(F^{*}) = \tau$ and step $t$.
\end{enumerate}
The work and depth are $\tilde{O}(mk\cdot t^{8}/\epsilon^{9})$ and $\tilde{O}(t^{5}/\epsilon^{6})$ respectively, where $k = |\supp(D)|$ is the number of commodities.
\end{lemma}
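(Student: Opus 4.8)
The plan is to realize the greedy strategy sketched in \Cref{Overview:LowStepFlow}: process length scales $h = (1+\epsilon)^{0}, (1+\epsilon)^{1}, (1+\epsilon)^{2}, \ldots$ in increasing order, and at each scale extract as much of the still-unrouted demand as possible using a $t$-step $h$-length feasible flow, via \Cref{thm:ApproxLCMCMF}. We maintain a residual demand $D'$ (initially $D' = D$) and an accumulated flow $F$ (initially zero). The key point is that we must only lose a $(1+\epsilon)$ multiplicative factor in the \emph{total} length and only polylog overhead in congestion, while the step bound $t$ is handled by rescaling lengths, not by running a step-constrained flow subroutine directly.

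First, I would deal with the obstruction that \Cref{thm:ApproxLCMCMF} solves an $h$-length maxflow but has work and depth depending on $\poly(h)$, so we cannot call it directly for large $h$. The fix, as hinted in the overview, is to tolerate $\epsilon h / t$ additive error on each edge length: since $F^{*}$ has step at most $t$, a path of true length $\le h$ in $F^{*}$ has at most $t$ edges, so rounding each edge length $\ell(e)$ up to $\lceil \ell(e) \cdot t / (\epsilon h) \rceil$ and asking for a $(t/\epsilon + t)$-length flow in the rescaled graph captures all such paths. Thus, for each scale $h$, I would (i) build the rescaled graph $G_h$ with $\ell_{G_h}(e) = \lceil \ell(e) t/(\epsilon h)\rceil$, treating the current residual pairs $\mathrm{supp}(D')$ as source-sink pairs with $\infty$ capacity on the source/sink side (i.e.\ as a non-concurrent demand); (ii) apply \Cref{thm:ApproxLCMCMF} with length parameter $h' = t/\epsilon + t = O(t/\epsilon)$ and precision $\epsilon$ to get a $(1+\epsilon)$-approximate feasible $h'$-length flow $\hat F_h$; (iii) add $\hat F_h$ to $F$, subtract $\mathrm{Dem}(\hat F_h)$ from $D'$, and move to the next scale. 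Each flow path of $\hat F_h$ has length $\le h' \cdot \epsilon h / t \le (1+\epsilon) h$ in the original $G$ (rounding error is at most $\epsilon h / t$ per edge times $\le t$ edges plus the slack in $h'$), which gives one factor of $(1+\epsilon)$; comparing against the optimal $t$-step flow restricted to its length-$\le h$ portion gives another factor; summing the extracted-demand values over scales telescopes against $F^{*}$ thanks to the $(1+\epsilon)$-approximation of the maxflow subroutine, giving the claimed $(1+\epsilon)^{4}$ total-length bound after a constant number of compounding steps. The congestion bound of $O(\log^2 n/\epsilon)$ comes from summing congestion $1$ over the $O(\log n / \epsilon)$ scales and the inner repetition described next; the work/depth bounds $\tilde O(mk \cdot t^8/\epsilon^9)$ and $\tilde O(t^5/\epsilon^6)$ follow by plugging $h' = O(t/\epsilon)$ into the $\tilde O(mk h'^8/\epsilon^4)$ and $\tilde O(h'^5/\epsilon^3)$ bounds of \Cref{thm:ApproxLCMCMF} and multiplying by $O(\log n/\epsilon)$ scales (and the $O(\log n)$ inner repetitions).

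Second, there is the value-deficit issue: \Cref{thm:ApproxLCMCMF} is a $(1+\epsilon)$-approximate \emph{maxflow}, so at each scale we may leave behind an $\epsilon$-fraction of the demand that \emph{could} have been routed at that scale within the optimum's length budget. Over all scales this could snowball, and moreover in the concurrent-style setting we cannot afford to drop any commodity entirely. The remedy, implemented by the inner "for $\Theta(\log n)$ iterations" loop alluded to in the text, is: at each fixed scale $h$, repeatedly re-run the maxflow subroutine on the \emph{current} residual demand $D'$, accumulating the flows, until the residual portion routable at scale $\le h$ (within the optimum's budget) drops below $1/n^{2}$ in value. Since each invocation captures a $(1 - \epsilon')$ fraction of what remains (for a suitable constant-ish $\epsilon'$ depending on $\epsilon$), $O(\log n)$ repetitions suffice to drive the residual below $1/n^{2}$; after summing over the $O(\log n/\epsilon)$ scales the total unrouted value is at most $1/n$, giving $\vvalue(F) \ge \tau - 1/n$. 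This inner loop multiplies work and depth by $O(\log n)$ (absorbed in $\tilde O(\cdot)$) and adds at most an $O(\log n)$ factor to the total congestion, landing at $O(\log^2 n/\epsilon)$.

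The main obstacle I expect is bookkeeping the interaction between the three sources of length slack — the $(1+\epsilon)$ from the maxflow LP approximation, the $(1+\epsilon)$ from length rescaling/rounding (additive $\epsilon h/t$ per edge turned into multiplicative via the step bound), and the $(1+\epsilon)$ from discretizing length scales into powers of $(1+\epsilon)$ — so that they compound to only $(1+\epsilon)^{4}$ rather than degrading, and simultaneously arguing the telescoping comparison against $\mathrm{totlen}(F^{*})$ is valid: one must show that the portion of $F^{*}$ whose flow paths have length in $((1+\epsilon)^{j-1}, (1+\epsilon)^{j}]$ is a feasible $t$-step $(1+\epsilon)^{j}$-length flow that the subroutine at scale $(1+\epsilon)^{j}$ competes against, and that the residual demand only shrinks (so earlier scales never "steal" capacity needed later in a way that breaks the bound — here the $O(\log^2 n/\epsilon)$ congestion budget is exactly what buys us the slack to be greedy). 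Everything else is routine: converting between path and edge representations, handling the directed-vs-undirected and vertex-vs-edge-capacity reductions (standard vertex splitting, already noted in the excerpt), and deriving \Cref{lemma:LowStepFlow} from \Cref{lemma:DirectedLowStepFlow} by one extra clean-up pass that pays the fifth $(1+\epsilon)$ to route the last $1/n$ units trivially (e.g.\ along any feasible path, which costs at most $\totlen(F^{*})$ more since $\tau \ge 1$).
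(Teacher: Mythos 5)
Your proposal matches the paper's proof essentially step for step: the same greedy over length scales $(1+\epsilon)^{p}$, the same rescaling $\ell'(e)=\lceil \ell(e)\,t/(\epsilon h)\rceil$ to reduce to a $(t/\epsilon+t)$-length maxflow call, the same $\Theta(\log n)$ inner repetition to drive the residual value geometrically down to $O(1/n)$, and the same telescoping (Abel-summation) comparison of $\totlen(F)$ against the length-$\le h_{p}$ sub-flows of $F^{*}$, with the $(1+\epsilon)^{4}$ factor decomposed exactly as you describe. The only details you gloss over are routine ones the paper handles explicitly — rounding each approximate maxflow to a $(1/\mu)$-fractional flow so the residual demand stays compatible with the integral-capacity subroutine, and the final $\lambda$-scaling that pins $\vvalue(F)$ at exactly $\tau-1/n$.
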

\begin{proof}
The algorithm is described in \Cref{algo:LowStepFlows}. Here we discuss in details the main steps on \Cref{line:ApproxLCflow} and \Cref{line:FlowRounding}. Let $\mu$ be the smallest power of $2$ at least $n^{4}$.

\begin{algorithm}[H]
\caption{Low-Step Multi-Commodity Flows: \textsc{LowStepFlows}$(G,D,t,\epsilon)$}
\label{algo:LowStepFlows}
\begin{algorithmic}[1]
\Require Graph $G$ with integral edge lengths $\ell$ and integral capacities $u$, integral demand $D$, step bound $t\geq 1$, and a tradeoff parameter $\epsilon$.
\Ensure A flow $F$ partially routing $D$ with $|D| - \vvalue(F)\leq 1/n$ and congestion $O(\log^{2} n/\epsilon)$, such that $\totlen(F)\leq (1+\epsilon)^{4}\cdot\totlen(F^{*})$ for any feasible flow $F^{*}$ routing $D$ with step $t$.
\State Let $\mu$ be the smallest power of $2$ such that $\mu\geq n^{4}$.
\State Initialize $F\gets 0$ and $D'\gets D$.
\For{$p$ from $0$ to $\bar{p} = \lceil\log_{1+\epsilon}(nN)\rceil$}
\State $h_{p} \gets (1+\epsilon)^{p}$.
\State Let $G'$ be $G$ with edge lengths $\ell'$ such that $\ell'(e) = \lceil \ell(e)/(\epsilon h_{p}/t)\rceil$
\For{$j$ from $1$ to $\bar{j}=\Theta(\log n)$}
\State\label{line:ApproxLCflow}Let $\wtilde{F}_{p,j}$ be a $1.5$-approximate $(t/\epsilon + t)$-length maxflow with $\Dem(\wtilde{F}_{p,j})\leq D'$ in $G'$.
\Statex\Comment{Apply \Cref{thm:ApproxLCMCMF}}
\State\label{line:TerminateJ}If $\vvalue(\wtilde{F}_{p,j})\leq 1/(2n)$, terminate the inner loop.
\State\label{line:FlowRounding}Round $\wtilde{F}_{p,j}$ to a $(1/\mu)$-fractional flow $\hat{F}_{p,j}$. \Comment{Apply \Cref{coro:MCMCFlowRounding}}
\State\label{line:lambda}Let $\lambda$ be the maximum value in $[0,1]$ such that $\vvalue(F + \lambda \hat{F}_{p,j})\leq \tau - 1/n$
\State\label{line:ScaleF}$F_{p,j}\gets \lambda \hat{F}_{p,j}$.
\State\label{line:UpdateF}$F\gets F + F_{p,j}$.
\State\label{line:TerminateAlgo}If $\lambda<1$, terminate the algorithm and return $F$.
\State\label{line:NewDPrime}$D'\gets D'-\Dem(\hat{F}_{p,j})$.
\EndFor
\EndFor
\State\label{line:Return}Return $F$.
\end{algorithmic}
\end{algorithm}

\noindent{\underline{\Cref{line:ApproxLCflow}.}} We will compute a feasible (fractional) flow $\wtilde{F}_{p,j}$ in $G'$ with $\Dem(\wtilde{F}_{p,j})\leq D'$ and length $t/\epsilon + t$ satisfying that $\vvalue(\wtilde{F}_{p,j})\geq \vvalue(\wtilde{F}^{*}_{p,j})/(1+1/2)$ for any feasible flow $\wtilde{F}^{*}_{p,j}$ in $G'$ with $\Dem(\wtilde{F}^{*}_{p,j})\leq D'$ and length $t/\epsilon + t$. To do this, 
\begin{itemize}
\item We first construct a graph $G''$ by starting from $G'$ and adding, for each demand pair $(u_{i},v_{i})\in \supp(D')$, two artificial vertices $s_{i},t_{i}$ and two artificial directed edges $(s_{i},u_{i}),(v_{i},t_{i})$ with capacity $D'(u_{i},v_{i})$ and length $1$.
\item We will see in a moment that \Cref{line:FlowRounding} guarantees that $D'$ is always $(1/\mu)$-fractional.
Hence, we apply \Cref{thm:ApproxLCMCMF} on $G''$ with capacities scaled up by $\mu$ (to ensure integral capacities), source-sink pairs $\{(s_{i},t_{i})\mid i\}$, length parameter $t/\epsilon + t + 2$ and approximation parameter $0.5$. Let $\wtilde{F}''_{p,j}$ be the output.
\item $\wtilde{F}_{p,j}$ is obtained by scaling down $\wtilde{F}''_{p,j}$ by $\mu$ and removing the flow on the artificial edges.
\end{itemize}

\noindent\underline{\Cref{line:FlowRounding}.} We simply apply \Cref{coro:MCMCFlowRounding} on $G'$ and $\wtilde{F}_{p,j}$ with rounding parameter $\mu$. The output $\hat{F}_{p,j}$ is a $(1/\mu)$-fractional flow on $G'$ satisfying that
\begin{enumerate}
\item $\lfloor\Dem(\wtilde{F}_{p,j})\cdot \mu\rfloor /\mu\leq \Dem(\hat{F}_{p,j})\leq \lceil\Dem(\wtilde{F}_{p,j})\mu\rceil/\mu\leq D'$ (since $D'$ is $(1/\mu)$-fractional), which implies $\vvalue(\hat{F}_{p,j})\geq \vvalue(\wtilde{F}_{p,j})-n^{2}/\mu\geq \vvalue(\wtilde{F}_{p,j})-1/n^{2}$
\item $\totlen(\hat{F}_{p,j})\leq \totlen(\wtilde{F}_{p,j}) + 1/\poly(n)$. 
\item $\hat{F}_{p,j}(e)\leq U(e) + n^{2}/\mu$ for each $e\in E(G')$, which implies that $\hat{F}_{p,j}$ has congestion $1+1/n^{2}$.
\end{enumerate}
Finally, because $\hat{F}_{p,j}$ is $(1/\mu)$-fractional, the new $D'$ on \Cref{line:NewDPrime} is also $(1/\mu)$-fractional inductively.

\medskip

\noindent{\textbf{Correctness.}} The congestion of $F$ is obviously $O(\log^{2}n/\epsilon)$ because there are $O(\log^{2}n/\epsilon)$ different pairs of $p,j$, and each $\hat{F}_{p,j}$ has congestion $O(1)$. Also, the algorithm directly gives $\Dem(F)\leq D$. It remains to show $\vvalue(F)= \tau - 1/n$, and $\totlen(F)\leq (1+\epsilon)\totlen(F^{*})$ for any feasible flow $F^{*}$ partially routing $D$ in $G$ with $\vvalue(F^{*}) = \tau$ and step $t$. 

We begin by setting up some notations. First, we emphasize that for each $0\leq p\leq \bar{p}$ and $1\leq j\leq \bar{j}$, we will regard $\wtilde{F}_{p,j}$ as a zero flow if it is not computed by the algorithm (same for $\hat{F}_{p,j}$ and $F_{p,j}$).
Let $F_{p} = \sum_{p'\leq p,j} F_{p',j}$ and similarly, let $F^{*}_{p}$ be the maximum $h_{p}$-length subflow of $F^{*}$. For convenience, we refer to $F_{-1}$ and $F^{*}_{-1}$ as zero flows. We first prove \Cref{fact:LowStepLengthRounding} and \Cref{lemma:Value}, which is useful for our argument later. 
\begin{fact}
\label{fact:LowStepLengthRounding}
For any $0\leq p\leq \bar{p}$, we have
\begin{enumerate}
\item\label{item:LowStep1} An $h_{p}$-length $t$-step flow $F$ in $G$ has length $t/\epsilon + t$ in $G'$,
\item\label{item:LowStep2} An $(t/\epsilon + t)$-length flow $F'$ in $G'$ has length $(1+\epsilon)h_{p}$ in $G$.
\end{enumerate}
\end{fact}
\begin{proof}
For \Cref{item:LowStep1}, consider any flow path $P\in\path(F)$. We know $P$ has step $t$ and length $\sum_{e\in P} \ell(e)\leq h_{p}$. Therefore, the length of $P$ in $G'$ is
\[
\sum_{e\in P}\ell'(e) = \sum_{e\in P}\lceil \ell(e)/(\epsilon h_{p}/t)\rceil\leq \sum_{e\in P}(\ell(e)/(\epsilon h_{p}/t)+1)\leq t/\epsilon + t.
\]

For \Cref{item:LowStep2}, consider any flow path $P'\in\path(F')$. The length of $P'$ in $G$ is $\sum_{e\in P'}\ell(e)\leq \sum_{e\in P'}(\epsilon h_{p}/t)\cdot \ell'(e)\leq (1+\epsilon)h_{p}$.
\end{proof}

\begin{lemma}
\label{lemma:Value}
For each $0\leq p\leq \bar{p}$, we have $\vvalue(F_{p})\geq \vvalue(F^{*}_{p})-1/n$.
\end{lemma}
\begin{proof}
Consider a fixed $p$. We set up some notations: for each $j\geq 0$, let $F_{p,\leq j} = F_{p-1} + \sum_{j'\leq j}F_{p,j'}$, and let $D'_{p,j}$ be the $D'$ at the beginning of iteration $j$.

\medskip

\noindent\underline{Case 1: terminate on \Cref{line:TerminateAlgo}.} Consider the case that the inner loop for some $p'\leq p$ terminates on \Cref{line:TerminateAlgo}. Then we directly have $\vvalue(F_{p}) = \vvalue(F_{p'}) = \tau - 1/n\geq \vvalue(F^{*}) - 1/n\geq \vvalue(F^{*}_{p}) - 1/n$ as desired.

\medskip

\noindent\underline{Case 2: terminate on \Cref{line:TerminateJ}.} Next, we consider the case that the inner loop terminates on \Cref{line:TerminateJ} in some iteration $j$. By the definition of $F^{*}_{p}$, there is an $h'_{p}$-length $t$-step feasible flow in $G$ routing $\min\{\Dem(F^{*}_{p}),D'_{p,j}\}$. By \Cref{fact:LowStepLengthRounding}, this flow is an $(t/\epsilon + t)$-length feasible flow in $G'$ routing $\min\{\Dem(F^{*}_{p}),D'_{p,j}\}$. Therefore, $\wtilde{F}_{p,j}$ has 
\[
\vvalue(\wtilde{F}_{p,j})\geq |\min\{\Dem(F^{*}_{p}),D'_{p,j}\}|/1.5\geq (\vvalue(F^{*}_{P}) - \vvalue(F_{p,\leq j-1}))/1.5.
\]
where the second inequality is because 
\begin{align*}
\min\{\Dem(F^{*}_{p}),D'_{p,j}\} &= \Dem(F^{*}_{p}) + D'_{p,j} - \max\{\Dem(F^{*}_{p}),D'_{p,j}\}\\
&\geq \Dem(F^{*}_{p}) + D'_{p,j} - D\\
&= \Dem(F^{*}_{p}) - \Dem(F_{p,\leq j-1}).
\end{align*}
Since $\vvalue(\wtilde{F}_{p,j})\leq 1/(2n)$ from \Cref{line:TerminateJ}, we have 
\[
\vvalue(F^{*}_{P}) - \vvalue(F_{p,\leq j-1})\leq 1/n
\]
as desired.

\medskip

\noindent\underline{Case 3: terminate after $\Theta(\log n)$ iterations.} Finally, we consider the case that the inner loop has $\Theta(\log n)$ iterations. In this case, we will show the following statement instead: for each $j\geq 0$, 
\[
\vvalue(F^{*}_{p}) - \vvalue(F_{p,\leq j})\leq \max\{\vvalue(F^{*}_{p})/2^{j},1/n\}.
\]
Providing this statement, when $j = \Theta(\log n) \geq \log_{2}(\vvalue(F^{*}_{p})\cdot n)$ (we choose a sufficiently large constant behind the $\Theta$ on \Cref{line:ApproxLCflow}), it gives $\vvalue(F^{*}_{p}) - \vvalue(F_{p})\leq 1/n$ as desired.

Now we prove this statement by induction. In the base case $j=0$, the statement holds trivially. Consider an inductive step $j\geq 1$, and assume the statement holds for $j-1$. We also assume $\vvalue(F^{*}_{p}) - \vvalue(F_{p,\leq j-1})>1/n$, otherwise this inductive step is trivial. 

On \Cref{line:ApproxLCflow}, an argument similar to Case 2 gives 
\[
\vvalue(\wtilde{F}_{p,j})\geq (\vvalue(F^{*}_{P}) - \vvalue(F_{p,\leq j-1}))/1.5
\]
On \Cref{line:FlowRounding}, the flow $\hat{F}_{p,j}$ has
\[
\vvalue(\hat{F}_{p,j})\geq \vvalue(\wtilde{F}_{p,j}) - 1/n^{2}\geq (\vvalue(F^{*}_{P}) - \vvalue(F_{p,\leq j-1}))/2
\]
where we use $\vvalue(F^{*}_{P}) - \vvalue(F_{p,\leq j-1})\geq 1/n$. Finally,
\begin{align*}
\vvalue(F^{*}_{p}) - \vvalue(F_{p,\leq j}) &= \vvalue(F^{*}_{p}) - \vvalue(F_{p,\leq j-1}) - \vvalue(F_{p,j})\\
&= \vvalue(F^{*}_{p}) - \vvalue(F_{p,\leq j-1}) - \vvalue(\hat{F}_{p,j})\\
&\leq 2\cdot (\vvalue(F^{*}_{p}) - \vvalue(F_{p,\leq j-1}))\\
&\leq \vvalue(F^{*}_{p})/2^{j},
\end{align*}
where the second equality is because $F_{p,j} = \hat{F}_{p,j}$.
\end{proof}

To see $\vvalue(F) = \tau-1/n$. Note that when $p = \lceil \log_{1+\epsilon}(nN)\rceil$, $h_{p} \geq nN$ is sufficiently large and we have $F = F_{p}$ and $F^{*} = F^{*}_{p}$. By \Cref{lemma:Value}, $\vvalue(F)\geq \vvalue(F^{*}) - 1/n = \tau - 1/n$. On the other hand, \Cref{line:lambda,line:ScaleF,line:UpdateF} give $\vvalue(F)\leq \tau - 1/n$. Therefore, we can conclude $\vvalue(F) = \tau - 1/n$.

Lastly, we prove $\totlen(F)\leq (1+\epsilon)^{4}\cdot \totlen(F^{*})$. We note the notation $\totlen(\cdot)$ is with respect to the length function $\ell$. We first show \Cref{claim:TotlenFhat} below.

\begin{claim}
\label{claim:TotlenFhat}
For each $0\leq p\leq \bar{p}$ and $1\leq j\leq \bar{j}$, $\totlen(\hat{F}_{p,j})\leq \vvalue(\hat{F}_{p,j})\cdot (1+\epsilon)^{3}\cdot h_{p}$.
\end{claim}
\begin{proof}
We assume that $\hat{F}_{p,j}$ is computed by the algorithm, otherwise we view $\hat{F}_{p,j}$ as a zero flow and the claim is trivial.

By \Cref{fact:LowStepLengthRounding}, we have 
\[
\totlen(\wtilde{F}_{p,j})\leq \vvalue(\wtilde{F}_{p,j})\cdot (1+\epsilon)h_{p}.
\]
Therefore, 
\begin{align*}
\totlen(\hat{F}_{p,j})&\leq \totlen(\wtilde{F}_{p,j}) + 1/\poly(n)\\
&\leq \vvalue(\wtilde{F}_{p,j})\cdot (1+\epsilon)h_{p} + 1/\poly(n)\\
&\leq (\vvalue(\hat{F}_{p,j}) + 1/n^{2})\cdot(1+\epsilon)h_{p} + 1/\poly(n)\\
&\leq \vvalue(\hat{F}_{p,j})\cdot (1+\epsilon)^{3} h_{p},
\end{align*}
where the last inequality is because $\epsilon \geq 10/n$ and $\vvalue(\hat{F}_{p,j})\geq \vvalue(\wtilde{F}_{p,j})-1/n^{2}\geq 1/(2n)-1/n^{2}$ (\Cref{line:TerminateJ} guarantees that $\vvalue(\wtilde{F}_{p,j})\geq 1/(2n)$).
\end{proof}

Providing \Cref{claim:TotlenFhat}, for each $0\leq p\leq \bar{p}$, we have
\[
\totlen(F_{p} - F_{p-1})\leq \vvalue(F_{p} - F_{p-1})\cdot (1+\epsilon)^{3}h_{p},
\]
because $F_{p} - F_{p-1} = \sum_{0\leq j\leq \bar{j}}F_{p,j}$ is a linear combination of $\{\hat{F}_{p,j}\mid 1\leq j\leq \bar{j}\}$. On the other hand, we have
\[
\totlen(F^{*}_{p} - F^{*}_{p-1})\geq \vvalue(F^{*}_{p} - F^{*}_{p-1})\cdot h_{p}/(1+\epsilon),
\]
because when $p\geq 1$, each flow path in $F^{*}_{p} - F^{*}_{p-1}$ has length at least $h_{p-1} = h_{p}/(1+\epsilon)$ by definition. Moreover, when $p=0$, each flow path has length at least $1=h_{0}$ because the edge length $\ell$ are positive integers.

Next we rewrite $\totlen(F)$ and $\totlen(F^{*})$ as follows. For convenience, we let $h_{-1} = 0$. 
\begin{align*}
\totlen(F) &= \sum_{0\leq p\leq \bar{p}}\totlen(F_{p} - F_{p-1})\\
&\leq \sum_{0\leq p\leq \bar{p}}\vvalue(F_{p} - F_{p-1})\cdot (1+\epsilon)^{3}\cdot h_{p}\\
&=\sum_{0\leq p\leq \bar{p}}(\vvalue(F - F_{p-1}) - \vvalue(F - F_{p}))\cdot (1+\epsilon)^{3}\cdot h_{p}\\
&=\sum_{0\leq p\leq \bar{p}}\vvalue(F-F_{p-1})\cdot (1+\epsilon)^{3}\cdot(h_{p} - h_{p-1}).
\end{align*}
Similarly,
\[
\totlen(F^{*})\geq \sum_{0\leq p\leq \bar{p}}\vvalue(F^{*} - F^{*}_{p-1})\cdot (h_{p} - h_{p-1})/(1+\epsilon).
\]
Furthermore, for each $0\leq p\leq \bar{p}$, we have
$\vvalue(F-F_{p-1})\leq \vvalue(F^{*}-F^{*}_{p-1})$
because
\begin{align*}
\vvalue(F-F_{p-1})- \vvalue(F^{*}-F^{*}_{p-1})&= \vvalue(F) - \vvalue(F^{*}) + (\vvalue(F^{*}_{p-1}) - \vvalue(F_{p-1}))\\
&\leq (\tau-1/n) - \tau + 1/n\\
&\leq 0,
\end{align*}
where the second inequality is by $\vvalue(F) = \tau - 1/n$, $\vvalue(F^{*}) = \tau$ and \Cref{lemma:Value}. Combining all these above, we can conclude $\totlen(F)\leq (1+\epsilon)^{4}\cdot \totlen(F^{*})$

\medskip

\noindent{\textbf{Work and Depth.}} Clearly the bottleneck is \Cref{line:ApproxLCflow}. A single execution takes $\tilde{O}(m|\supp(D)|\cdot(t/\epsilon)^{8})$ work and $\tilde{O}((t/\epsilon)^{5})$ depth. We need to execute it $\bar{p}\cdot \bar{j} = \tilde{O}(1/\epsilon)$ times sequentially. Therefore, the final work and depth are $\tilde{O}(m|\supp(D)|\cdot t^{8}/\epsilon^{9})$ and $\tilde{O}(t^{5}/\epsilon^{6})$.

\end{proof}

\begin{proof}[Proof of \Cref{lemma:LowStepFlow}]
We first construct a directed graph $G_{\dir}$ (with positive integral edge lengths and capacities) as a counterpart of $G$ by standard vertex splitting. Initially $G_{\dir}$ is empty.
\begin{itemize}
\item For each vertex $v\in V(G)$, we add two vertices $v_{\iin}$ and $v_{\out}$ to $G_{\dir}$ with a directed edge $e_{v}$ from $v_{\iin}$ to $v_{\out}$. The edge $e_{v}$ has length $\ell(e_{v}) = \ell(v)$ and capacity $U(e_{v}) = U(v)$.
\item For each edge $e=\{u,v\}\in V(G)$, we add two edges, one from $u_{\out}$ to $v_{\iin}$ and the other one from $v_{\out}$ to $v_{\iin}$. Both edges have length $0$ and capacity infinity (to be precise, set the capacity to be $10mN$ which is sufficiently large). %
\end{itemize}
The following \Cref{ob:UndirToDir} basically says we can transfer flows between $G$ and $G_{\dir}$.
\begin{observation}
\label{ob:UndirToDir}
We have the following:
\begin{itemize}
\item (Forward mapping) Any flow $F$ in $G$ can be mapped to a flow $F_{\dir}$ routing the same demand in $G_{\dir}$ with $\totlen(F_{\dir})=\totlen(F)$, $\conge(F_{\dir})=\cong(F)$ and $\step(F_{\dir})\leq 2\step(F)$. 
\item (Backward mapping) Any feasible flow $F_{\dir}$ in $G$ can be mapped to a feasible flow $F$ routing the same demand in $G$ with $\totlen(F)=\totlen(F_{\dir})$ and $\cong(F)=\cong(F_{\dir})$.
\end{itemize}
Moreover, the backward mapping can be done in $O(m)$ work and $O(1)$ depth.
\end{observation}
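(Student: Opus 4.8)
\textbf{Proof plan for Observation \ref{ob:UndirToDir}.}

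The plan is to verify both mappings directly from the vertex-splitting construction, since the correspondence is essentially syntactic. First, for the forward mapping, given a flow $F$ in $G$ and a flow path $P = (v_1, v_2, \dots, v_k) \in \path(F)$, I would map $P$ to the directed path $P_{\dir}$ in $G_{\dir}$ that traverses, for each $i$, first the internal edge $e_{v_i}$ from $(v_i)_{\iin}$ to $(v_i)_{\out}$, and then the zero-length edge from $(v_i)_{\out}$ to $(v_{i+1})_{\iin}$ corresponding to $\{v_i, v_{i+1}\} \in E(G)$ (oriented in the direction consistent with $P$). Set $F_{\dir}(P_{\dir}) = F(P)$. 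Since each original vertex $v$ contributes exactly the internal edge $e_{v}$ to $P_{\dir}$, and the zero-length bridging edges contribute nothing to length, we get $\leng(P_{\dir}, G_{\dir}) = \sum_i \ell(v_i) = \leng(P, G)$, hence $\totlen(F_{\dir}) = \totlen(F)$. The congestion on $e_{v}$ equals $\sum_{P \ni v} F(P) = F(v)$, and $U(e_v) = U(v)$, so $\conge(F_{\dir}) = \cong(F)$ (the bridging edges have huge capacity $10mN$ and are never the bottleneck since $F$ has polynomially bounded values). Each $P_{\dir}$ has $\step$ at most $2\step(P)$ because we use at most one internal edge and one bridging edge per original vertex. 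The demand routed is unchanged since endpoints $\Start(P), \End(P)$ map to their $\iin$/$\out$ copies consistently; one should fix a convention (e.g. a demand pair $(u,v)$ in $G$ corresponds to $(u_{\iin}, v_{\out})$ or, more cleanly, one may add source/sink stubs — but for our use it suffices that the demand between the split copies matches).

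For the backward mapping, given a feasible flow $F_{\dir}$ in $G_{\dir}$, I would first decompose $F_{\dir}$ into flow paths (path representation) and then project each directed flow path $P_{\dir}$ back to $G$ by contracting each internal edge $e_{v}$ to the vertex $v$ and replacing each bridging edge $((u)_{\out}, (v)_{\iin})$ by the undirected edge $\{u,v\}$. This yields a walk in $G$; since we only need a flow (not necessarily simple paths) we may keep it as is, or shortcut cycles to get simple paths without increasing length or congestion. Length is preserved since bridging edges have length $0$ and internal edges carry exactly the vertex lengths. For congestion: the amount of $F$ routed through a vertex $v$ equals the flow of $F_{\dir}$ through $e_{v}$, which is at most $U(e_v) = U(v)$ by feasibility of $F_{\dir}$; hence $\cong(F) = \cong(F_{\dir}) = 1$, i.e. $F$ is feasible. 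The demand is again preserved by the same endpoint convention. For the complexity claim, note that feasibility of $F_{\dir}$ combined with the integral/bounded setting means the edge representation of $F$ is obtained just by reading off $F_{\dir}(e_v)$ for each $v$ and $F_{\dir}(e)$ for each bridging edge and summing the two directed copies of $\{u,v\}$; this is a local transformation taking $O(m)$ work and $O(1)$ depth, and it does not even require an explicit path decomposition since we are mapping edge representations to edge representations.

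I do not expect a serious obstacle here — the statement is a routine reduction. The only point requiring a little care is the bookkeeping of demands across the split: one must be consistent about which copy ($\iin$ or $\out$) of a terminal vertex carries the demand, and ensure that in the forward direction a flow path from $u$ to $v$ in $G$ genuinely starts and ends at the matching copies in $G_{\dir}$ so that $\Dem(F_{\dir})$ and $\Dem(F)$ agree under the identification. A second minor point is the factor $2$ in the step bound and the choice of the large finite capacity $10mN$ for bridging edges (rather than literal infinity), which is needed so that downstream algorithms see a finite integral instance; one checks $10mN$ exceeds the total flow value $|D| \le mN$, so bridging edges are never saturated. Everything else is immediate from the construction, so the proof will be short.
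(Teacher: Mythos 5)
Your proof is correct and matches the paper's intent exactly: the paper states this as an unproved observation about the standard vertex-splitting reduction, and your direct verification (alternating internal and zero-length bridging edges for the forward map, contraction for the backward map, edge-representation-to-edge-representation for the complexity claim) is the argument the authors have in mind. The points you flag — the endpoint convention for demands and the finite $10mN$ capacity on bridging edges — are precisely the only details worth checking, and you handle them correctly.
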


We then compute the desired flow $F$ as follows.
\begin{enumerate}
\item Apply \Cref{lemma:DirectedLowStepFlow} on $G_{\dir}$ with the same $D,\tau,\epsilon$ but a different step bound $2t$. Let $\wtilde{F}_{\dir}$ be the output with guarantees stated in \Cref{lemma:LowStepFlow}.
\item For each demand pair $(u_{i},v_{i})\in \supp(D)$, let $F^{(i)}_{\dir} = \lambda_{i}\cdot \wtilde{F}^{(i)}_{\dir}$, where 
\begin{equation*}
\lambda_{i} = \left\{
\begin{aligned}
&\tau/(\tau-1/n),\ \text{if}\ \tau=1\\
&D(u_{i},v_{i})/\vvalue(\wtilde{F}^{(i)}_{\dir}),\ \text{if}\ \tau=|D|
\end{aligned}
\right.
\end{equation*}
Let $F_{\dir}$ be the multi-commodity flow containing all $F_{\dir,i}$.
\item Compute $F$ by mapping $F_{\dir}$ back to $G$ using \Cref{ob:UndirToDir}.
\end{enumerate}

To see the correctness, first we clearly have $\Dem(F) \leq D$ and $\Dem(F) = \tau$ because of Step 2. Regarding the congestion and total length, note that the scaling factor in Step 2 is at most $1+\epsilon$ no matter $\tau$ equals to $1$ or $|D|$. In particular, when $\tau = |D|$, we have $D(u_{i},v_{i})/(D(u_{i},v_{i})-1/n)\leq 1/(1-1/n)\leq 1+\epsilon$ because $\Dem(\wtilde{F}_{\dir})\leq D$, $\Dem(\wtilde{F}_{\dir}) = \tau - 1/n$ and $D$ is an integral demand. Therefore, compared to \Cref{lemma:DirectedLowStepFlow}, the congestion and total length slack in \Cref{lemma:LowStepFlow} blow up by a factor of $(1+\epsilon)$.

The work and depth are the same as \Cref{lemma:DirectedLowStepFlow}

\end{proof}

\subsection{Approximate Min-Total-Length Multi-Commodity Flows}
\label{sect:ApproxMTLFlow}

\begin{lemma}
\label{lemma:ApproxMTLFlow}
Let $G$ be an undirected graph with vertex lengths $\ell$ and capacities $U$. Given an integral demand $D$, a flow value threshold $\tau$ equal to either $1$ or $|D|$, suppose there exists a feasible flow $F^{*}$ partially routing $D$ with $\vvalue(F^{*}) = \tau$. Given a precision parameter $1/\poly\log(n)\leq \epsilon\leq 1$, there is an algorithm that computes the edge representation of a flow $F$ such that
\begin{itemize}
\item $\Dem(F) \leq D$ and $\vvalue(F) = \tau$,
\item $F$ has congestion $\hat{O}(1)$, and
\item $\totlen(F)\leq (1+\epsilon)\totlen(F^{*})$ for any feasible flow $F^{*}$ partially routing $D$ with $\vvalue(F^{*}) = \tau$.
\end{itemize}
The work and depth are $\hat{O}(mk)$ and $\hat{O}(1)$ respectively, where $k = |\supp(D)|$ is the number of commodities.
\end{lemma}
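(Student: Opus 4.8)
The plan is to combine the two main tools developed earlier: the LC-flow shortcut of \Cref{thm:emulator} (together with its backward mapping algorithm) and the low-step min-total-length flow algorithm of \Cref{lemma:LowStepFlow}. The high-level idea is exactly the one sketched in \Cref{Overview:LowStepFlow} and \Cref{sec:overview}: the shortcut $H$ reduces the problem on $G$ to a problem with only subpolynomially many steps, and then \Cref{lemma:LowStepFlow} solves the low-step problem with $(1+\epsilon)$ total-length slack and $\hat O(1)$ congestion.

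\paragraph{Step 1: build the shortcut and lift the demand.} First I would invoke \Cref{thm:emulator} (or the cleaner \Cref{coro:emulator}) with an appropriately scaled-down precision parameter $\epsilon' = \Theta(\epsilon)$ to obtain an LC-flow shortcut $H$ of $G$ of size $\hat O(n)$ with length slack $1+\epsilon'$, congestion slack $\kappa = \hat O(1)$ and step $t = \hat O(1)$, in $\hat O(m)$ work and $\hat O(1)$ depth. Since $V(D)\subseteq V(G)$, the optimal feasible flow $F^{*}$ in $G$ (of value $\tau$) can be pushed through the forward mapping of the shortcut: there is a flow $F^{H,*}$ in $G\cup H$ routing $\Dem(F^{*})$ with congestion $1$, step $t$, and $\totlen(F^{H,*})\le (1+\epsilon')\totlen(F^{*})$ (the last bound uses that a path-mapping with length slack $1+\epsilon'$ implies the corresponding total-length bound). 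In particular, on the graph $G\cup H$ there exists a feasible $t$-step flow partially routing $D$ with value $\tau$ whose total length is at most $(1+\epsilon')\totlen(F^{*})$.

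\paragraph{Step 2: solve the low-step problem on $G\cup H$.} Now I would apply \Cref{lemma:LowStepFlow} to the graph $G\cup H$ with the demand $D$, step bound $t = \hat O(1)$, threshold $\tau$, and precision $\epsilon'$. This returns (the path representation of) a flow $F^{H}$ in $G\cup H$ with $\Dem(F^{H})\le D$, $\vvalue(F^{H}) = \tau$, congestion $O(\log^{2}n/\epsilon') = \hat O(1)$, and $\totlen(F^{H})\le (1+\epsilon')^{5}\cdot\totlen(\text{any }t\text{-step feasible value-}\tau\text{ flow in }G\cup H)$; combining with Step 1 this is at most $(1+\epsilon')^{6}\totlen(F^{*})$. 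The work is $\tilde O(mk\cdot t^{8}/\epsilon'^{9}) = \hat O(mk)$ since $t = \hat O(1)$ and $m$ is replaced by $|E(G\cup H)| = m + \hat O(n) = \hat O(m)$, and the depth is $\tilde O(t^{5}/\epsilon'^{6}) = \hat O(1)$. Converting the path representation of $F^{H}$ into its edge representation is routine (the number of flow paths is $\hat O(mk)$, which is within budget).

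\paragraph{Step 3: map back to $G$.} Finally, I would feed the edge representation of $F^{H}$ (viewed as a $k$-commodity flow, one commodity per pair in $\supp(D)$) to the backward mapping algorithm associated with the shortcut $H$. By the guarantee of \Cref{thm:emulator}, this outputs the edge representation of a $k$-commodity flow $F^{G}$ in $G$ routing the same $k$-commodity demand, with $\cong(F^{G})\le \kappa\cdot\cong(F^{H}) = \hat O(1)$ and $\totlen(F^{G})\le\totlen(F^{H})\le(1+\epsilon')^{6}\totlen(F^{*})$. Since the backward mapping preserves the routed demand commodity by commodity, we still have $\Dem(F^{G})\le D$ and $\vvalue(F^{G}) = \tau$. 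Choosing $\epsilon'$ so that $(1+\epsilon')^{6}\le 1+\epsilon$ gives the claimed total-length bound. The backward mapping runs in $n^{O(\epsilon_{h}+\epsilon_{\kappa})}(m+nk) = \hat O(mk)$ work and $\hat O(1)$ depth; all steps together therefore give $\hat O(mk)$ work and $\hat O(1)$ depth, and $F = F^{G}$ is the desired flow.

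\paragraph{Main obstacle.} The conceptual content is entirely in \Cref{thm:emulator} and \Cref{lemma:LowStepFlow}; the present lemma is a composition, so the only real care needed is bookkeeping. The subtle point I expect to need attention is the transition between the \emph{path-competitive length slack} of the shortcut's forward mapping (stated as a multiplicative slack on each flow path) and the \emph{total-length} guarantee we need here: one must observe that a path-mapping with length slack $1+\epsilon'$ implies $\totlen(F^{H})\le(1+\epsilon')\totlen(F^{G})$, and likewise that the backward mapping's length-slack-$1$ path-mapping implies $\totlen(F^{G})\le\totlen(F^{H})$. A second minor point is that \Cref{lemma:LowStepFlow} is stated for undirected graphs with vertex lengths/capacities, and $H$ consists of edges with endpoints possibly outside $V(G)$; I would note that the standard edge-to-vertex reduction (placing a vertex on each shortcut edge, as in the preliminaries) keeps the instance in the required form while changing the size only by a constant factor, so the work and depth bounds are unaffected.
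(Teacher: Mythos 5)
Your proposal is correct and follows essentially the same three-step composition as the paper's own proof: build the LC-flow shortcut via \Cref{coro:emulator}, run \Cref{lemma:LowStepFlow} on $G\cup H$ with the shortcut's step bound $t=\hat O(1)$, and map back with the associated backward mapping algorithm, with the same accounting of length slacks and of work/depth. The two bookkeeping points you flag (path-mapping slack implying the total-length bounds, and shortcut edges fitting the vertex-length/capacity model) are handled the same way in the paper.
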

\begin{proof}
Let $\epsilon' = \epsilon/100$. The algorithm is as follows.
\begin{enumerate}
\item We apply \Cref{coro:emulator} on $G$ to compute an LC-flow shortcut $H$ of $G$ with length slack $\lambda = 1+\epsilon'$, congestion slack $\kappa = \hat{O}(1)$ and step $t=\hat{O}(1)$. Let $G_{\sc} = G\cup H$ be the corresponding LC-flow shortcut graph.
\item Next, apply \Cref{lemma:LowStepFlow} on $G_{\sc}$ with demand $D$, step bound $t$, flow value threshold $\tau$ and precision parameter $\epsilon'$. Let $F_{\sc}$ be the output flow in $G_{\sc}$.
\item Finally, use the algorithmic backward mapping from \Cref{coro:emulator} to map $F_{\sc}$ back to a flow $F$ in $G$, and $F$ is the flow we desire.
\end{enumerate}

To see the correctness, let $F^{*}$ be an arbitrary feasible flow in $G$ partially routing $D$ with $\vvalue(F^{*}) = \tau$. In the first step, the forward mapping of \Cref{coro:emulator} shows that there exists a flow $F^{*}_{\sc}$ in $G_{\sc}$ routing $\Dem(F^{*})$ with 
\[
\cong(F^{*}_{\sc}) \leq 1,\ 
\totlen(F^{*}_{\sc})\leq (1+\epsilon')\totlen(F^{*})\text{ and step }t = \hat{O}(1).
\]
In the second step, \Cref{lemma:LowStepFlow} guarantees that $\Dem(F_{\sc})\leq D$, $\vvalue(F_{\sc}) = \tau$, the flow $F_{\sc}$ in $G_{\sc}$ has
\[
\cong(F_{\sc}) \leq \tilde{O}(1/\epsilon) = \tilde{O}(1)\text{ and }\totlen(F_{\sc})\leq (1+\epsilon')^{5}\cdot \totlen(F_{\sc}^{*}).
\]
Finally, in the third step, the backward mapping algorithm from \Cref{coro:emulator} guarantees that $F$ routes $\Dem(F_{\sc})$ and has
\[
\cong(F)\leq \hat{O}(1)\cdot \cong(F_{\sc})\text{ and }\totlen(F)\leq \totlen(F_{\sc}).
\]
Thus, we can conclude that $F$ partially routing $D$, $\vvalue(F) = \tau$ and
\[
\cong(F)\leq \hat{O}(1)\text{ and }\totlen(F)\leq (1+\epsilon')^{6}\totlen(F^{*})\leq (1+\epsilon)\totlen(F^{*})
\]
as desired.

Consider the work and depth. By \Cref{coro:emulator}, step 1 has work $\hat{O}(m)$ and depth $\hat{O}(1)$. By \Cref{lemma:LowStepFlow} and $t = \hat{O}(1)$, step 2 has work $\hat{O}(mk)$ and depth $\hat{O}(1)$. By \Cref{coro:emulator} again, step 3 has work $\hat{O}(m+nk)$ and depth $\hat{O}(1)$.

\end{proof}

\subsection{Approximate Multi-Commodity Mincost Flows: Proof of \Cref{thm:MainResult}}
\label{sect:MainResultProof}
\MainResult*

\begin{proof}
We first define the demand $D$ and the flow value threshold $\tau$. 
\begin{itemize}
\item For the concurrent version, the demand $D$ is given (with $|\supp(D)| = k$), and we set $\tau = |D|$. 
\item For the non-concurrent version, we are given a set $S$ of vertex pairs (with $|S| = k$), we define $D$ by setting, for each $(u,v)$, $D(u,v) = 1$ if $(u,v)\in S$, otherwise $D(u,v) = 0$. Set $\tau = 1$.
\end{itemize}

We now use the flow boosting template \Cref{lemma:FlowBoosting} with precision parameter
\[
\epsilon' = \epsilon/c
\]
for some sufficiently large constant $c$. We let the convex set ${\cal F}$ of flows collect all flows $F$ in $G$ routing $D$ with $\vvalue(F) \geq \tau$. To make sure that ${\cal F}$ contains at least one feasible flow (so that it satisfies the condition of \Cref{lemma:FlowBoosting}), we can simply work on the graph $G$ with vertex capacities scaled up by $\tau$, and then scale down the output flow by $\tau$ at the end.

Observe that we can use \Cref{lemma:ApproxMTLFlow} as the algorithm ${\cal A}$ in the flow boosting template \Cref{lemma:FlowBoosting}. Indeed, in graph $G$ with the vertex length function $\ell$ given by \Cref{lemma:FlowBoosting}, \Cref{lemma:ApproxMTLFlow} will compute the edge representation of a flow $F\in{\cal F}$ such that $\totlen(F)\leq (1+\epsilon')\totlen(F^{*})$ for any feasible flow $F^{*}$ partially routing $D$ with $\vvalue(F^{*}) = \tau$ (so also for all such $F^{*}$ with $\vvalue(F^{*}) \geq \tau$), i.e. for any capacity-respecting $F^{*}\in{\cal F}$. Furthermore, we have $\cong(F)\leq \hat{O}(1)$. Therefore, we set the length slack $s$ and congestion slack $\kappa$ of \Cref{lemma:FlowBoosting} to be
\[
s = 1+\epsilon'\text{ and }\kappa = \hat{O}(1)
\]

The output of the flow boosting template \Cref{lemma:FlowBoosting} is the edge representation of a flow $\bar{F}\in{\cal F}$ (so $\vvalue(F)\geq \tau$) and a scalar $\lambda\geq 0$ satisfying the following.
\begin{enumerate}
\item The flow $\lambda \bar{F}$ is capacity-respecting and has cost at most $B$.
\item Let $\lambda^{*}$ be the maximum value such that there exists flow $F^{*}\in{\cal F}$ where $\lambda^{*}F^{*}$ is capacity-respecting with cost at most $B$ (greedily we can assume $\vvalue(F^{*}) = \tau$). Then $\lambda \geq \frac{1-O(\epsilon')}{s}\lambda^{*} = \frac{1-O(\epsilon')}{1+\epsilon'}\lambda^{*}$.
\end{enumerate}
Let $F = \lambda \bar{F}$ be our final output.

Observe that in the concurrent version, the $\lambda$ and $\lambda^{*}$ above are exactly those in \Cref{def:concurrent-flow}, so the approximation factor is $\frac{1-O(\epsilon')}{1+\epsilon'}\leq 1+\epsilon$. In the non-concurrent version, the $\lambda$ and $\lambda^{*}$ above are the $\vvalue(F)$ and $\vvalue(F^{*})$ in \Cref{def:concurrent-flow}, so the approximation factor is also $1+\epsilon$.

The work and depth are $\hat{O}(mk)$ and $\hat{O}(1)$ by \Cref{lemma:FlowBoosting}, $\kappa = \hat{O}(1)$ and \Cref{lemma:ApproxMTLFlow}.
\end{proof}

\section{Conclusions and Open Problems}

In this work, we present a parallel $(1+\epsilon)$-approximate multi-commodity mincost flow algorithm with almost optimal work and depth, which applies to undirected graphs with capacities and costs on both edges and vertices. Our key technical contribution is devising an LC-flow shortcut with $1+\epsilon$ length slack, $\hat{O}(1)$ congestion slack, and $\hat{O}(1)$ step, which has size $\hat{O}(n)$ and can be computed in $\hat{O}(m)$ work and $\hat{O}(1)$ depth. Our work also leaves several interesting open problems.

\paragraph{Polylogarithmic Depth.} A natural and important open problem is to further reduce the depth from subpolynomial to polylogarithmic while retaining almost-optimal work or improving toward nearly-optimal work, even in the single-commodity setting. As we discussed in \Cref{sect:Introduction}, $(1+\epsilon)$-approximate parallel algorithms with polylogarithmic depth and nearly-linear work are known for \emph{shortest paths} \cite{DBLP:conf/stoc/Li20,DBLP:conf/stoc/AndoniSZ20,DBLP:conf/stoc/RozhonGHZL22} and \emph{(edge-capacitated) maxflows} \cite{DBLP:conf/soda/AgarwalKLPWWZ24}. However, it remains unclear whether similar results (particularly polylogarithmic depth) can be achieved for the (edge-capacitated) mincost flow problem or vertex-capacitated maxflow problem.

\begin{question}
Are there parallel $(1+\epsilon)$-approximate algorithms for either (edge-capacitated) mincost flows or vertex-capacitated maximum flows, with nearly-linear work and polylogarithmic depth?
\end{question}

\paragraph{Better LC-flow Shortcuts.} The second open problem is about the quality of the LC-flow shortcuts. Recall that LC-flow shortcuts have the following quality measures: length slack, congestion slack, step, and size. In our approach to obtaining approximate maximum/mincost flows using LC-flow shortcuts, the approximation depends on the length slack, the depth overhead depends on congestion slack and step, and the work overhead depends on congestion slack, step and size. 

Therefore, it is interesting to optimize the trade-offs between these measures, particularly between congestion slack and step, even for existential constructions without efficient algorithms. Note that our construction shows the existence\footnote{Use existential length-constrained expander decompositions with optimal tradeoffs.} of LC-flow shortcuts with $1+\epsilon$ length slack, $n^{O(1/k)}$ congestion slack,  $(1/\epsilon)^{\poly(k)}$ step and $n^{1+O(1/k)}$ size, for $\epsilon \leq 1$ and $k\geq 1$. It would be interesting to reduce the dependence on $k$ in the step bound to subexponential or even polynomial. We point out that such better step bounds are known for some weaker notions of shortcuts:
\begin{itemize}
\item Shortcuts with only length control \cite{DBLP:journals/jacm/Cohen00}: $1+\epsilon$ length slack, $(\frac{\log n}{\epsilon})^{O(\log k)}$ step and $n^{1+O(1/k)}$ size.
\item Flow shortcuts with only (edge-)congestion control (e.g., tree flow sparsifiers \cite{harrelson2003polynomial,racke2014computing}): $\tilde{O}(1)$ congestion slack, $\tilde{O}(1)$ step and $\tilde{O}(n)$ size.
\item LC-flow shortcuts with edge-congestion control and large length slack \cite{DBLP:conf/stoc/HaeuplerH0RS24}: $\poly(k)$ length slack, $n^{O(1/k)}$ congestion slack, $\poly(k)$ step and $n^{1+O(1/k)}$ size.
\end{itemize}
As stepping stones, the following simpler notions of flow shortcuts are also open to improved trade-offs. The first is LC-flow shortcuts with only \emph{edge}-congestion control and $1+\epsilon$ length slack, for which the best known trade-off is still achieved by our work. The second is flow shortcuts with vertex-congestion control (without length control), for which the best known tradeoff is $n^{O(1/k)}$ congestion slack, $2^{O(k)}$ step, and $\tilde{O}(n)$ size, as shown implicitly\footnote{In fact, \cite{HaeuplerLSW25} presents LC-flow shortcuts with vertex-congestion control and large length slack.} in \cite{HaeuplerLSW25}.

\paragraph{Distributed Mincost Flows.} Parallel and distributed algorithms are closely related, and there have been many technical transfers between the two settings. For example, \cite{DBLP:conf/stoc/RozhonGHZL22} studies the $(1+\epsilon)$-approximate shortest path problem in parallel and distributed settings simultaneously. 

As we give the almost-optimal parallel algorithm for $(1+\epsilon)$-approximate mincost flow problem, a natural next step is to solve it almost-optimally in the distributed setting. We note that there has been distributed $(1+\epsilon)$-approximate (single-commodity) maxflow algorithms with almost-optimal round complexity in the \textsf{CONGEST} model \cite{ghaffari2015near}. However, analogous results for the min-cost flow problem remain unclear, though there have been some efforts on distributed exact min-cost flow in directed graphs \cite{forster2022minor,de2023minimum}.

\begin{question}
Is there distributed $(1+\epsilon)$-approximate mincost flow algorithm with almost-optimal round complexity in the \textsf{CONGEST} model?
\end{question}

We note that our approach may potentially transfer to the distributed setting, since the main building blocks, including length-constrained expander decompositions, sparse neighborhood cover, and approximate length-constrained maxflows, all admit efficient distributed algorithms.

\begin{thebibliography}{vdBGJdV25}

\bibitem[AGOT92]{DBLP:journals/mp/AhujaGOT92}
Ravindra~K. Ahuja, Andrew~V. Goldberg, James~B. Orlin, and Robert~Endre Tarjan.
\newblock Finding minimum-cost flows by double scaling.
\newblock {\em Math. Program.}, 53:243--266, 1992.

\bibitem[AKL{\etalchar{+}}24]{DBLP:conf/soda/AgarwalKLPWWZ24}
Arpit Agarwal, Sanjeev Khanna, Huan Li, Prathamesh Patil, Chen Wang, Nathan White, and Peilin Zhong.
\newblock Parallel approximate maximum flows in near-linear work and polylogarithmic depth.
\newblock In David~P. Woodruff, editor, {\em Proceedings of the 2024 {ACM-SIAM} Symposium on Discrete Algorithms, {SODA} 2024, Alexandria, VA, USA, January 7-10, 2024}, pages 3997--4061. {SIAM}, 2024.

\bibitem[AMV21]{DBLP:conf/focs/AxiotisMV21}
Kyriakos Axiotis, Aleksander Madry, and Adrian Vladu.
\newblock Faster sparse minimum cost flow by electrical flow localization.
\newblock In {\em 62nd {IEEE} Annual Symposium on Foundations of Computer Science, {FOCS} 2021, Denver, CO, USA, February 7-10, 2022}, pages 528--539. {IEEE}, 2021.

\bibitem[ASZ20]{DBLP:conf/stoc/AndoniSZ20}
Alexandr Andoni, Clifford Stein, and Peilin Zhong.
\newblock Parallel approximate undirected shortest paths via low hop emulators.
\newblock In Konstantin Makarychev, Yury Makarychev, Madhur Tulsiani, Gautam Kamath, and Julia Chuzhoy, editors, {\em Proceedings of the 52nd Annual {ACM} {SIGACT} Symposium on Theory of Computing, {STOC} 2020, Chicago, IL, USA, June 22-26, 2020}, pages 322--335. {ACM}, 2020.

\bibitem[BGS21]{DBLP:conf/focs/BernsteinGS21}
Aaron Bernstein, Maximilian~Probst Gutenberg, and Thatchaphol Saranurak.
\newblock Deterministic decremental {SSSP} and approximate min-cost flow in almost-linear time.
\newblock In {\em 62nd {IEEE} Annual Symposium on Foundations of Computer Science, {FOCS} 2021, Denver, CO, USA, February 7-10, 2022}, pages 1000--1008. {IEEE}, 2021.

\bibitem[CF23]{DBLP:conf/soda/CaoF23}
Nairen Cao and Jeremy~T. Fineman.
\newblock Parallel exact shortest paths in almost linear work and square root depth.
\newblock In Nikhil Bansal and Viswanath Nagarajan, editors, {\em Proceedings of the 2023 {ACM-SIAM} Symposium on Discrete Algorithms, {SODA} 2023, Florence, Italy, January 22-25, 2023}, pages 4354--4372. {SIAM}, 2023.

\bibitem[CGS25]{chen2025accelerated}
Li~Chen, Andrei Graur, and Aaron Sidford.
\newblock Accelerated approximate optimization of multi-commodity flows on directed graphs.
\newblock {\em arXiv preprint arXiv:2503.24373}, 2025.

\bibitem[CKL{\etalchar{+}}22]{DBLP:conf/focs/ChenKLPGS22}
Li~Chen, Rasmus Kyng, Yang~P. Liu, Richard Peng, Maximilian~Probst Gutenberg, and Sushant Sachdeva.
\newblock Maximum flow and minimum-cost flow in almost-linear time.
\newblock In {\em 63rd {IEEE} Annual Symposium on Foundations of Computer Science, {FOCS} 2022, Denver, CO, USA, October 31 - November 3, 2022}, pages 612--623. {IEEE}, 2022.

\bibitem[CLS21]{cohen2021solving}
Michael~B Cohen, Yin~Tat Lee, and Zhao Song.
\newblock Solving linear programs in the current matrix multiplication time.
\newblock {\em Journal of the ACM (JACM)}, 68(1):1--39, 2021.

\bibitem[Coh95]{DBLP:journals/siamcomp/Cohen95}
Edith Cohen.
\newblock Approximate max-flow on small depth networks.
\newblock {\em {SIAM} J. Comput.}, 24(3):579--597, 1995.

\bibitem[Coh00]{DBLP:journals/jacm/Cohen00}
Edith Cohen.
\newblock Polylog-time and near-linear work approximation scheme for undirected shortest paths.
\newblock {\em J. {ACM}}, 47(1):132--166, 2000.

\bibitem[CS21]{DBLP:conf/soda/ChuzhoyS21}
Julia Chuzhoy and Thatchaphol Saranurak.
\newblock Deterministic algorithms for decremental shortest paths via layered core decomposition.
\newblock In D{\'{a}}niel Marx, editor, {\em Proceedings of the 2021 {ACM-SIAM} Symposium on Discrete Algorithms, {SODA} 2021, Virtual Conference, January 10 - 13, 2021}, pages 2478--2496. {SIAM}, 2021.

\bibitem[CY24]{chen2024high}
Li~Chen and Mingquan Ye.
\newblock High-accuracy multicommodity flows via iterative refinement.
\newblock In {\em 51st International Colloquium on Automata, Languages, and Programming (ICALP 2024)}, pages 45--1. Schloss Dagstuhl--Leibniz-Zentrum f{\"u}r Informatik, 2024.

\bibitem[DS08]{DBLP:conf/stoc/DaitchS08}
Samuel~I. Daitch and Daniel~A. Spielman.
\newblock Faster approximate lossy generalized flow via interior point algorithms.
\newblock In Cynthia Dwork, editor, {\em Proceedings of the 40th Annual {ACM} Symposium on Theory of Computing, Victoria, British Columbia, Canada, May 17-20, 2008}, pages 451--460. {ACM}, 2008.

\bibitem[dV23]{de2023minimum}
Tijn de~Vos.
\newblock Minimum cost flow in the congest model.
\newblock In {\em International Colloquium on Structural Information and Communication Complexity}, pages 406--426. Springer, 2023.

\bibitem[EK72]{DBLP:journals/jacm/EdmondsK72}
Jack Edmonds and Richard~M. Karp.
\newblock Theoretical improvements in algorithmic efficiency for network flow problems.
\newblock {\em J. {ACM}}, 19(2):248--264, 1972.

\bibitem[FGL{\etalchar{+}}22]{forster2022minor}
Sebastian Forster, Gramoz Goranci, Yang~P Liu, Richard Peng, Xiaorui Sun, and Mingquan Ye.
\newblock Minor sparsifiers and the distributed laplacian paradigm.
\newblock In {\em 2021 IEEE 62nd Annual Symposium on Foundations of Computer Science (FOCS)}, pages 989--999. IEEE, 2022.

\bibitem[Ful61]{fulkerson1961out}
Delbert~R Fulkerson.
\newblock An out-of-kilter method for minimal-cost flow problems.
\newblock {\em Journal of the Society for Industrial and Applied Mathematics}, 9(1):18--27, 1961.

\bibitem[GK96]{grigoriadis1996approximate}
Michael~D Grigoriadis and Leonid~G Khachiyan.
\newblock Approximate minimum-cost multicommodity flows in $\tilde{O}(\epsilon^{-2}knm)$ timetime.
\newblock {\em Mathematical Programming}, 75(3):477--482, 1996.

\bibitem[GK07]{DBLP:journals/siamcomp/GargK07}
Naveen Garg and Jochen K{\"{o}}nemann.
\newblock Faster and simpler algorithms for multicommodity flow and other fractional packing problems.
\newblock {\em {SIAM} J. Comput.}, 37(2):630--652, 2007.

\bibitem[GKK{\etalchar{+}}15]{ghaffari2015near}
Mohsen Ghaffari, Andreas Karrenbauer, Fabian Kuhn, Christoph Lenzen, and Boaz Patt-Shamir.
\newblock Near-optimal distributed maximum flow.
\newblock In {\em Proceedings of the 2015 ACM Symposium on Principles of Distributed Computing}, pages 81--90, 2015.

\bibitem[GT88]{DBLP:journals/jacm/GalilT88}
Zvi Galil and {\'{E}}va Tardos.
\newblock An o(n{\({^2}\)}(m + n log n)log n) min-cost flow algorithm.
\newblock {\em J. {ACM}}, 35(2):374--386, 1988.

\bibitem[GT90]{DBLP:journals/mor/GoldbergT90}
Andrew~V. Goldberg and Robert~E. Tarjan.
\newblock Finding minimum-cost circulations by successive approximation.
\newblock {\em Math. Oper. Res.}, 15(3):430--466, 1990.

\bibitem[HHG25]{HaeuplerH025}
Bernhard Haeupler, Jonas H{\"{u}}botter, and Mohsen Ghaffari.
\newblock A cut-matching game for constant-hop expanders.
\newblock In {\em {SODA}}, pages 1651--1678. {SIAM}, 2025.

\bibitem[HHL{\etalchar{+}}24]{DBLP:conf/stoc/HaeuplerH0RS24}
Bernhard Haeupler, D.~Ellis Hershkowitz, Jason Li, Antti Roeyskoe, and Thatchaphol Saranurak.
\newblock Low-step multi-commodity flow emulators.
\newblock In Bojan Mohar, Igor Shinkar, and Ryan O'Donnell, editors, {\em Proceedings of the 56th Annual {ACM} Symposium on Theory of Computing, {STOC} 2024, Vancouver, BC, Canada, June 24-28, 2024}, pages 71--82. {ACM}, 2024.

\bibitem[HHR03]{harrelson2003polynomial}
Chris Harrelson, Kirsten Hildrum, and Satish Rao.
\newblock A polynomial-time tree decomposition to minimize congestion.
\newblock In {\em Proceedings of the fifteenth annual ACM symposium on Parallel algorithms and architectures}, pages 34--43, 2003.

\bibitem[HHS23]{HaeuplerHS23}
Bernhard Haeupler, D.~Ellis Hershkowitz, and Thatchaphol Saranurak.
\newblock Maximum length-constrained flows and disjoint paths: Distributed, deterministic, and fast.
\newblock In Barna Saha and Rocco~A. Servedio, editors, {\em Proceedings of the 55th Annual {ACM} Symposium on Theory of Computing, {STOC} 2023, Orlando, FL, USA, June 20-23, 2023}, pages 1371--1383. {ACM}, 2023.

\bibitem[HHT23]{HaeuplerHT23}
Bernhard Haeupler, D~Ellis Hershkowitz, and Zihan Tan.
\newblock Parallel greedy spanners, 2023.

\bibitem[HHT24]{HaeuplerHT24}
Bernhard Haeupler, D.~Ellis Hershkowitz, and Zihan Tan.
\newblock New structures and algorithms for length-constrained expander decompositions.
\newblock In {\em {FOCS}}, pages 1634--1645. {IEEE}, 2024.

\bibitem[HKRL07]{DBLP:journals/talg/HajiaghayiKRL07}
Mohammad~Taghi Hajiaghayi, Robert~D. Kleinberg, Harald R{\"{a}}cke, and Tom Leighton.
\newblock Oblivious routing on node-capacitated and directed graphs.
\newblock {\em {ACM} Trans. Algorithms}, 3(4):51, 2007.

\bibitem[HLS24]{DBLP:conf/focs/HaeuplerLS24}
Bernhard Haeupler, Yaowei Long, and Thatchaphol Saranurak.
\newblock Dynamic deterministic constant-approximate distance oracles with n\({}^{\mbox{{\(\epsilon\)}}}\) worst-case update time.
\newblock In {\em 65th {IEEE} Annual Symposium on Foundations of Computer Science, {FOCS} 2024, Chicago, IL, USA, October 27-30, 2024}, pages 2033--2044. {IEEE}, 2024.

\bibitem[HLSW25]{HaeuplerLSW25}
Bernhard Haeupler, Yaowei Long, Thatchaphol Saranurak, and Shengzhe Wang.
\newblock Length-constrained directed expander decomposition and length-constrained vertex-capacitated flow shortcuts.
\newblock {\em arXiv preprint arXiv:2503.23217}, 2025.

\bibitem[HRG22]{HaeuplerRG22}
Bernhard Haeupler, Harald R\"{a}cke, and Mohsen Ghaffari.
\newblock Hop-constrained expander decompositions, oblivious routing, and distributed universal optimality.
\newblock In {\em Proceedings of the 54th Annual ACM SIGACT Symposium on Theory of Computing}, page 1325–1338, 2022.

\bibitem[JLS19]{DBLP:conf/focs/LiuJS19}
Arun Jambulapati, Yang~P. Liu, and Aaron Sidford.
\newblock Parallel reachability in almost linear work and square root depth.
\newblock In David Zuckerman, editor, {\em 60th {IEEE} Annual Symposium on Foundations of Computer Science, {FOCS} 2019, Baltimore, Maryland, USA, November 9-12, 2019}, pages 1664--1686. {IEEE} Computer Society, 2019.

\bibitem[KMP12]{kelner2012faster}
Jonathan~A Kelner, Gary~L Miller, and Richard Peng.
\newblock Faster approximate multicommodity flow using quadratically coupled flows.
\newblock In {\em Proceedings of the forty-fourth annual ACM symposium on Theory of computing}, pages 1--18, 2012.

\bibitem[Li20]{DBLP:conf/stoc/Li20}
Jason Li.
\newblock Faster parallel algorithm for approximate shortest path.
\newblock In Konstantin Makarychev, Yury Makarychev, Madhur Tulsiani, Gautam Kamath, and Julia Chuzhoy, editors, {\em Proceedings of the 52nd Annual {ACM} {SIGACT} Symposium on Theory of Computing, {STOC} 2020, Chicago, IL, USA, June 22-26, 2020}, pages 308--321. {ACM}, 2020.

\bibitem[LNPS23]{li2023near}
Jason Li, Danupon Nanongkai, Debmalya Panigrahi, and Thatchaphol Saranurak.
\newblock Near-linear time approximations for cut problems via fair cuts.
\newblock In {\em Proceedings of the 2023 Annual ACM-SIAM Symposium on Discrete Algorithms (SODA)}, pages 240--275. SIAM, 2023.

\bibitem[LS14]{DBLP:conf/focs/LeeS14}
Yin~Tat Lee and Aaron Sidford.
\newblock Path finding methods for linear programming: Solving linear programs in {\~{o}}(vrank) iterations and faster algorithms for maximum flow.
\newblock In {\em 55th {IEEE} Annual Symposium on Foundations of Computer Science, {FOCS} 2014, Philadelphia, PA, USA, October 18-21, 2014}, pages 424--433. {IEEE} Computer Society, 2014.

\bibitem[LSM{\etalchar{+}}91]{leighton1991fast}
Tom Leighton, Clifford Stein, Fillia Makedon, {\'E}va Tardos, Serge Plotkin, and Spyros Tragoudas.
\newblock Fast approximation algorithms for multicommodity flow problems.
\newblock In {\em Proceedings of the twenty-third annual ACM symposium on Theory of Computing}, pages 101--111, 1991.

\bibitem[Mad10]{madry2010faster}
Aleksander Madry.
\newblock Faster approximation schemes for fractional multicommodity flow problems via dynamic graph algorithms.
\newblock In {\em Proceedings of the forty-second ACM symposium on Theory of computing}, pages 121--130, 2010.

\bibitem[M{\k{a}}d11]{mkadry2011graphs}
Aleksander M{\k{a}}dry.
\newblock {\em From graphs to matrices, and back: new techniques for graph algorithms}.
\newblock PhD thesis, Massachusetts Institute of Technology, 2011.

\bibitem[Nes09]{nesterov2009fast}
Yurii Nesterov.
\newblock Fast gradient methods for network flow problems.
\newblock In {\em Proceeding of the 20th International Symposium of Mathematical Programming}, 2009.

\bibitem[Orl84]{orlin1984genuinely}
James~B Orlin.
\newblock Genuinely polynomial simplex and non-simplex algorithms for the minimum cost flow problem.
\newblock 1984.

\bibitem[Orl93]{DBLP:journals/ior/Orlin93}
James~B. Orlin.
\newblock A faster strongly polynomial minimum cost flow algorithm.
\newblock {\em Oper. Res.}, 41(2):338--350, 1993.

\bibitem[Orl97]{DBLP:journals/mp/Orlin97}
James~B. Orlin.
\newblock A polynomial time primal network simplex algorithm for minimum cost flows.
\newblock {\em Math. Program.}, 77:109--129, 1997.

\bibitem[OS93]{DBLP:journals/orl/OrlinS93}
James~B. Orlin and Clifford Stein.
\newblock Parallel algorithms for the assignment and minimum-cost flow problems.
\newblock {\em Oper. Res. Lett.}, 14(4):181--186, 1993.

\bibitem[Rad96]{radzik1996fast}
Tomasz Radzik.
\newblock Fast deterministic approximation for the multicommodity flow problem.
\newblock {\em Mathematical Programming}, 78:43--58, 1996.

\bibitem[REGH22]{DBLP:conf/focs/RozhonEGH22}
V{\'{a}}clav Rozhon, Michael Elkin, Christoph Grunau, and Bernhard Haeupler.
\newblock Deterministic low-diameter decompositions for weighted graphs and distributed and parallel applications.
\newblock In {\em 63rd {IEEE} Annual Symposium on Foundations of Computer Science, {FOCS} 2022, Denver, CO, USA, October 31 - November 3, 2022}, pages 1114--1121. {IEEE}, 2022.

\bibitem[RGH{\etalchar{+}}22]{DBLP:conf/stoc/RozhonGHZL22}
V{\'{a}}clav Rozhon, Christoph Grunau, Bernhard Haeupler, Goran Zuzic, and Jason Li.
\newblock Undirected (1+\emph{{\(\epsilon\)}})-shortest paths via minor-aggregates: near-optimal deterministic parallel and distributed algorithms.
\newblock In Stefano Leonardi and Anupam Gupta, editors, {\em {STOC} '22: 54th Annual {ACM} {SIGACT} Symposium on Theory of Computing, Rome, Italy, June 20 - 24, 2022}, pages 478--487. {ACM}, 2022.

\bibitem[RHM{\etalchar{+}}23]{DBLP:conf/stoc/RozhonHMGZ23}
V{\'{a}}clav Rozhon, Bernhard Haeupler, Anders Martinsson, Christoph Grunau, and Goran Zuzic.
\newblock Parallel breadth-first search and exact shortest paths and stronger notions for approximate distances.
\newblock In Barna Saha and Rocco~A. Servedio, editors, {\em Proceedings of the 55th Annual {ACM} Symposium on Theory of Computing, {STOC} 2023, Orlando, FL, USA, June 20-23, 2023}, pages 321--334. {ACM}, 2023.

\bibitem[RST14]{racke2014computing}
Harald R{\"a}cke, Chintan Shah, and Hanjo T{\"a}ubig.
\newblock Computing cut-based hierarchical decompositions in almost linear time.
\newblock In {\em Proceedings of the twenty-fifth annual ACM-SIAM symposium on Discrete algorithms}, pages 227--238. SIAM, 2014.

\bibitem[She17]{sherman2017area}
Jonah Sherman.
\newblock Area-convexity, $\ell_\infty$ regularization, and undirected multicommodity flow.
\newblock In {\em Proceedings of the 49th Annual ACM SIGACT Symposium on Theory of Computing}, pages 452--460, 2017.

\bibitem[SM90]{shahrokhi1990maximum}
Farhad Shahrokhi and David~W Matula.
\newblock The maximum concurrent flow problem.
\newblock {\em Journal of the ACM (JACM)}, 37(2):318--334, 1990.

\bibitem[Tar85]{DBLP:journals/combinatorica/Tardos85}
{\'{E}}va Tardos.
\newblock A strongly polynomial minimum cost circulation algorithm.
\newblock {\em Comb.}, 5(3):247--256, 1985.

\bibitem[vdBGJ{\etalchar{+}}22]{DBLP:conf/stoc/BrandGJLLPS22}
Jan van~den Brand, Yu~Gao, Arun Jambulapati, Yin~Tat Lee, Yang~P. Liu, Richard Peng, and Aaron Sidford.
\newblock Faster maxflow via improved dynamic spectral vertex sparsifiers.
\newblock In Stefano Leonardi and Anupam Gupta, editors, {\em {STOC} '22: 54th Annual {ACM} {SIGACT} Symposium on Theory of Computing, Rome, Italy, June 20 - 24, 2022}, pages 543--556. {ACM}, 2022.

\bibitem[vdBGJdV25]{brand2025parallel}
Jan van~den Brand, Hossein Gholizadeh, Yonggang Jiang, and Tijn de~Vos.
\newblock Parallel minimum cost flow in near-linear work and square root depth for dense instances.
\newblock In {\em {SPAA}}, pages 1--16. {ACM}, 2025.

\bibitem[vdBLL{\etalchar{+}}21]{DBLP:conf/stoc/BrandLLSS0W21}
Jan van~den Brand, Yin~Tat Lee, Yang~P. Liu, Thatchaphol Saranurak, Aaron Sidford, Zhao Song, and Di~Wang.
\newblock Minimum cost flows, mdps, and {\(\mathscr{l}\)}\({}_{\mbox{1}}\)-regression in nearly linear time for dense instances.
\newblock In Samir Khuller and Virginia~Vassilevska Williams, editors, {\em {STOC} '21: 53rd Annual {ACM} {SIGACT} Symposium on Theory of Computing, Virtual Event, Italy, June 21-25, 2021}, pages 859--869. {ACM}, 2021.

\end{thebibliography}
\newcommand{\etalchar}[1]{$^{#1}$}

\appendix

\section{Notations for Expander Decomposition Algorithm}\label{sec:notationsexpander}

We will prove \Cref{thm:vertexLC-ED} in the appendix. \Cref{thm:vertexLC-ED} involves an expander routing algorithm, which will be achieved by an expander witness. We define it as follows. 

\paragraph{Routers and expander witness.}
We will need the definition of routers as follows. Notice that the congestion is defined on the edges.
\begin{definition}[Routers]\label{def:routers}
Given an edge-capacitated graph $G=(V,E)$ and a node-weighting $A$ on $V$, we say that $G$ is a $t$-step $\kappa$-router for $A$ if every $A$-respecting demand can be routed in $G$ by a $t$-step flow $f$ such that for every edge $e\in E$ we have $\sum_{e\in p}f(p)\le \kappa\cdot U(e)$ flow paths in $f$. In this case, we say $f$ has \emph{edge-congestion} $\kappa$ on $G$. 

\end{definition}
 The following definition shows how to embed a router into a graph.

 \begin{definition}[Graph Embedding]\label{def:graphEmbedding}
     An \emph{$h$-length edge-to-vertex embedding with congestion $\kappa$} from an edge-capacitated graph $H$ to $G$ with vertex lengths and capacities where $V(H)=V(G)$, is an assignment of each edge $(u,v)\in E(H)$ a flow $F_{u,v}$ on $G$ routing the demand $D(u,v)=U(u,v)$, such that $\sum_{e\in E(H)}F_e$ has length at most $h$ and congestion at most $\kappa$.
 \end{definition}
\paragraph{Length-Constrained Expansion Witnesses}

Below, for node-weighting $A$ and nodes $S$, we let $A_S$ be $A$ restricted to nodes in $S$.

\begin{definition}[Length-Constrained Expansion Witness]\label{def:LCExpWitness}
    A $(h,s)$-length $\phi$-expansion witness for graph $G$ with vertex capacities and lengths and node-weighting $A$ consists of four numbers $s_0,s_1,\kappa_0,\kappa_1$ such that $s_0 \cdot s_1 \leq s,\kappa_0 \cdot \kappa_1 \leq 1/\phi$ and the following:
    \begin{itemize}
        \item \textbf{Neighborhood Cover:} a neighborhood cover $\mcN$ of $G$ with covering radius $h$;
        \item \textbf{Routers:} an $s_0$-step $\kappa_0$-router $R_S$ of $A_S$ for each $S \in \cS\in \mcN$;
        \item \textbf{Embedding of Routers:} an $(h \cdot s_1)$-length edge-to-vertex embedding of $\cup_{S\in\cS\in\cN}R_S$ into $G$ with congestion $\kappa_1$.
    \end{itemize}
\end{definition} 

We can define a witnessed expander decomposition as follows.
\begin{definition}[Witnessed Length-Constrained (Vertex) Expander Decomposition]
    Given a graph $G = (V,E)$, a \emph{witnessed $(h,s)$-length $\phi$-expander decomposition} for a node-weighting $A$ with length slack $s$ and cut slack $\kappa$ is an $h \cdot s$-length cut $C$ of size at most $\kappa \cdot \phi|A|$ such that $A$ is $(h,s)$-length $\phi$-expanding in $G -C$ along with a $(h,s)$-length $\phi$-expansion witness for $G-C$.
\end{definition}

The following theorem is the main focus of this appendix. It shows an algorithm computing a length-constrained expander decomposition with a witness after applying the decomposition. 

\begin{theorem}[Witnessed Length-Constrained Expander Decompositions]
    \label{thm:vertexLC-EDwitness}
    There is a randomized algorithm $\textsc{VertexLC-ED}(G,A,h,\phi,\eps)$ that takes inputs an undirected graph $G$ with vertex capacities and lengths, a node-weighting $A$, length bound $h\ge 1$, conductance $\phi\in(0,1)$ and a parameter $\eps$ satisfying $0< \eps<1$, outputs a witnessed $(h,s)$-length $\phi$-expander decomposition $C$ for $G,A$ with cut slackness $\kappa$ where
    \begin{align*}
    \kappa = n^{O(\eps)} \qquad \qquad s = \exp(\exp(\frac{O(\log(1/\eps))}{\eps}))
    \end{align*}

    The algorithm takes $\poly(h)\cdot m\cdot n^{\poly(\eps)}$ work and $\poly(h)\cdot n^{\poly(\eps)}$ depth. 
\end{theorem}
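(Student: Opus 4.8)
The plan is to prove \Cref{thm:vertexLC-EDwitness} by adapting the length-constrained \emph{edge}-expander decomposition algorithm of \cite{HaeuplerHT24} to the vertex-capacitated setting, and then deducing \Cref{thm:vertexLC-ED} as an immediate corollary (the routing algorithm promised there follows by combining the witness's routers with their embedding: route the demand through the neighborhood-cover routers, then push each router edge through its embedded flow). The overall structure follows the cut-matching/boosting framework: repeatedly either certify expansion by producing a witness, or find a sparse cut to add to $C$; since each cut reduces a potential tied to $|A|$, after $n^{O(\eps)}$-bounded many rounds we are done. The length slack $s$ and cut slack $\kappa$ are accumulated multiplicatively/recursively across the $O(1/\eps)$ levels of the construction, which is why $s$ ends up $\exp\exp(O(\log(1/\eps))/\eps)$ and $\kappa = n^{O(\eps)}$.

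First I would set up the vertex-capacitated analogues of the primitives in \cite{HaeuplerHT24}: (i) a length-constrained cut-matching game that, given a node-weighting $A$, either outputs an $s_0$-step $\kappa_0$-router embeddable into $G$ with $(h\cdot s_1)$-length and congestion $\kappa_1$ (giving the witness), or outputs an $(h,s)$-length $\phi$-sparse vertex moving cut; and (ii) a ``pruning'' step that, after removing the separated part, restores expansion of the surviving node-weighting. The router construction itself can be taken essentially black-box — constant-hop expanders / routers as in \cite{HaeuplerH025} work with vertex capacities after the standard vertex-splitting reduction — so the real work is the matching player: in each round we need to route a demand between the two sides of a partition, and a vertex-capacitated length-constrained max-flow subroutine (\Cref{lemma:LCstFlows}, or the multi-commodity version \Cref{thm:ApproxLCMCMF}) provides either the matching (to embed into the router) or a moving cut certifying sparsity. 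I would recurse on the ``large-diameter'' pieces exactly as in \cite{HaeuplerHT24}, with the neighborhood cover \Cref{thm:neicov} providing the clustering at each scale; the recursion bottoms out when diameters are below $h$, contributing the $\exp(O(\log(1/\eps))/\eps)$ factor per level and $n^{O(\eps)}$ width per level.

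The parameter bookkeeping is the most delicate but routine part: one tracks how $s_0,s_1,\kappa_0,\kappa_1$ compose down the recursion, ensures $s_0 s_1 \le s$ and $\kappa_0\kappa_1\le 1/\phi$ at every node of the recursion tree, and checks that the total cut size telescopes to $\kappa\phi|A|$ with $\kappa=n^{O(\eps)}$. For work and depth, each round makes one call to the length-constrained flow subroutine (cost $\poly(h)\cdot m\cdot n^{O(\eps)}$ work, $\poly(h)\cdot n^{O(\eps)}$ depth), there are $n^{O(\eps)}$ rounds per scale, $O(1/\eps)$ scales, and the neighborhood cover at each scale costs $\poly(h)\cdot m\cdot n^{O(\eps)}$; multiplying these gives the claimed $\poly(h)\cdot m\cdot n^{\poly(\eps)}$ work and $\poly(h)\cdot n^{\poly(\eps)}$ depth, where the recursion depth stays polylogarithmic because each scale shrinks the relevant length parameter geometrically.

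The main obstacle I anticipate is that \cite{HaeuplerHT24}'s machinery is written for \emph{edge} capacities and lengths, and the translation to \emph{vertex} capacities is not purely syntactic: the naive vertex-splitting reduction turns a vertex-length graph into an edge-length graph but can distort the neighborhood-cover structure and the router embeddings (a vertex of high capacity becomes an edge whose both endpoints must be handled consistently), so one has to verify that the cut-matching game, the pruning argument, and in particular the moving-cut sparsity certificates all survive this reduction with only constant-factor losses. A secondary subtlety is the doubly-exponential $s$: as the footnote after \Cref{thm:vertexLC-ED} notes, \cite{HaeuplerHT24} originally claimed single-exponential $s$ but this is erroneous, so I would be careful to state and use only the corrected $s=\exp\exp(O(\log(1/\eps))/\eps)$ bound, which flows from the corrected edge-case analysis and is not a new artifact of the vertex setting. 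Given these, the remaining steps — instantiating the flow subroutine, assembling the witness, and verifying the decomposition definition — are mechanical.
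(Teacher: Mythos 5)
Your proposal correctly identifies the main subroutine: a constant-hop cut-matching game whose matching player is a vertex-capacitated length-constrained flow/cut routine, run on the clusters of a neighborhood cover, which either certifies a sparse moving cut or yields the witness (neighborhood cover, routers, and an embedding of the routers). This is indeed how the paper's \ADLSC{} algorithm works, including the vertex-splitting reduction for the matching player and the need for the corrected doubly-exponential length slack.

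However, there is a genuine gap in your termination and progress argument. You assert that ``each cut reduces a potential tied to $|A|$'' so that $n^{O(\eps)}$ rounds suffice, but with moving cuts there is no smaller side to recurse on and a single sparse cut may separate a demand as small as $\min_v A(v)$, so no naive potential on $|A|$ bounds the number of rounds by $n^{O(\eps)}$, nor does it certify that the process ever terminates with an expander. The paper's actual progress measure is the demand-size of the \emph{demand-size-largest} sparse cut ($\LDSC$), and the engine that makes it work is the theorem that the union of a sequence of sparse vertex moving cuts is itself a sparse cut (proved via the demand matching graph and an arboricity bound for parallel-greedy graphs), which yields $\LDSCS \le s^3\log^3 n\cdot n^{O(1/s)}\cdot \LDSC$ and, through the chain $\LDSC \le \qLWSC \le \qLC \le \qLEC$, a contradiction if $n^{\Theta(\sqrt{\eps})}$ consecutive cuts each separate a near-largest demand. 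This forces $\LDSC$ to drop geometrically per ``epoch,'' and the algorithm runs $O(1/\sqrt{\eps})$ epochs with \emph{degrading} parameters $(h_{\epo},s_{\epo},\phi_{\epo})$ — which is also where the final $s=\exp\exp(O(\log(1/\eps))/\eps)$ actually comes from, not from a diameter-scale recursion. Your ``pruning'' step and the recursion on large-diameter pieces do not appear in (and are not needed for) this argument; without the union-of-cuts machinery and the demand-size-largest framework, your outline cannot bound the iteration count or the cut slack $\kappa=n^{O(\eps)}$.
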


We will show that \Cref{thm:vertexLC-EDwitness} implies \Cref{thm:vertexLC-ED}, so it suffices to prove \Cref{thm:vertexLC-EDwitness} in the appendix.

\begin{proof}[Proof of \Cref{thm:vertexLC-ED}]
    The first part (output an expander decomposition) directly follows from \Cref{thm:vertexLC-EDwitness}. We only need to prove an expander routing algorithm.
    
    Suppose we are given a $h$-length $A$-respecting demand $D$ on $G-C$, we will show how to route $D$ with length $O(\exp(\poly(1/\eps))\cdot hs)$ and congestion $n^{\poly(\eps)}/\phi$. We assume $\eps=\omega(1/\log \log n)$ as otherwise $s$ becomes $\poly(n)$ and there is no length bound, which makes this problem converge to normal expander routing. 

    Notice that from \Cref{thm:vertexLC-EDwitness}, we also get a witness defined in \Cref{def:LCExpWitness}. For each pair $(u,v)$ such that $D(u,v)>0$, since $D$ is $h$-length, the neighborhood cover $\cN$ contains a cluster $S\in\cS\in\cN$ such that $u,v\in S$, and we build the demand on $R_S$ by putting the same demand $D(u,v)$ on $R_S$. We do this for every pair $(u,v)$, and get a demand $D_S$ on $R_S$ for every $S\in\cS\in\cN$ so that $\sum_{S\in\cS\in\cN}D_S=D$, and $D_S$ is trivially $A_S$-respecting on $S$ because $D$ is $A$-respecting.

    Now we can apply Theorem 7.1 \cite{DBLP:conf/stoc/HaeuplerH0RS24} for each $R_S$ and $D_S$. For clarity, we restate Theorem 7.1 of \cite{DBLP:conf/stoc/HaeuplerH0RS24} as follows.

    \begin{theorem}[Theorem 7.1 of \cite{DBLP:conf/stoc/HaeuplerH0RS24}]\label{thm:routerrouting}
        Given a $s_0$-step $\kappa_0$-router $R_S$ for a node-weithting $A$, an $A$-respecting demand $D_S$ and $\eps\in(\log^{-c}N,1)$ for some sufficiently small constant $c$, one can compute a path-representation of flow $F$ routing $D$ on $R_S$ with 
        \begin{itemize}
            \item edge congestion $n^{\poly(\eps)}\cdot \kappa_0$ and step $\exp(\poly(1/\eps))\cdot s_0$, and
            \item $|\path(F)|\le (|E(R)|+|\supp(D)|)\cdot n^{\poly(\eps)}$
        \end{itemize} 
        in $(|E(R)|+|\supp(D)|)\cdot \poly(s_0)\cdot n^{\poly(\eps)}$ work and $\poly(s_0)\cdot n^{\poly(\eps)}$ depth.
    \end{theorem}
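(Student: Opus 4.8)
The plan is to compute the desired routing by a divide-and-conquer scheme that exploits the recursive structure underlying the routers produced in this framework. Recall that a router $R_S$ on $n_0 := |V(R_S)|$ vertices, as constructed by the (constant-hop) cut-matching procedure, carries a balanced hierarchical decomposition: $V(R_S)$ is split into $n_0^{\Theta(\eps)}$ parts $V_1,\dots,V_p$ of size $\Theta(n_0^{1-\eps})$ each, such that (i) the structure induced on $V_j$ is itself an $s_0$-step $\kappa_0$-router for $A_{V_j}$, and (ii) there is a ``glue'' structure --- a near-clique / constant-hop expander on part-representatives, embedded into $R_S$ --- certifying that any demand that is \emph{balanced} across parts (each $V_j$ sending and receiving an amount proportional to $|A_{V_j}|$) can be routed in $O(1)$ extra steps with $n_0^{O(\eps)}\cdot\kappa_0$ edge-congestion. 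If the input router is handed to us without such a decomposition we first extract one; this is the only point where the concrete output structure of the cut-matching construction is invoked, and is where the generic-router-versus-structured-router distinction must be pinned down.

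Given the $A$-respecting demand $D_S$, I would route it level by level. Split $D_S = D_{\mathrm{intra}} + D_{\mathrm{inter}}$, where $D_{\mathrm{intra}}$ keeps each demand pair inside a single part and $D_{\mathrm{inter}}$ carries the cross-part pairs; recurse on $D_{\mathrm{intra}}$ inside every part in parallel. For $D_{\mathrm{inter}}$, carry out a three-phase spread--cross--gather routing: in each source part $V_j$, use a recursive call on the sub-router to route the outgoing mass of $D_{\mathrm{inter}}$ to a demand that is balanced against $A_{V_j}$ (this target demand is $A_{V_j}$-respecting up to the $O(1)$ overhead, hence a legal recursive instance); then route the now part-balanced multicommodity demand across parts through the glue, treating \emph{all} of the cross-part mass as a single instance on the representative graph rather than one demand pair at a time; then, in each destination part, use another recursive call to gather the arriving mass down to the true destinations. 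Concatenating the spread, cross, and gather flows yields a routing of $D_{\mathrm{inter}}$, and adding the recursive routing of $D_{\mathrm{intra}}$ completes the level; the base case of constantly many vertices is routed by brute force.

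For the bookkeeping: each level shrinks the vertex count by an $n_0^{\Theta(\eps)}$ factor, so the recursion has $O(1/\eps)$ levels, and summing the branching ($O(n_0^{\eps})$ recursive subcalls, each of size $O(n_0^{1-\eps})$, per level) the recursion tree has $n^{\poly(\eps)}$ nodes in total. Each level multiplies the step bound by a constant (the $O(1)$-step glue plus concatenation of the spread/cross/gather pieces) and the congestion by $n_0^{O(\eps^2)}$ --- the extra $\eps$ in that exponent being exactly what one tunes so that the geometric product over the $O(1/\eps)$ levels telescopes to $n^{\poly(\eps)}$ instead of $n^{O(1)}$. Compounding gives step $\exp(\poly(1/\eps))\cdot s_0$ and edge-congestion $n^{\poly(\eps)}\cdot\kappa_0$; the bound $|\path(F)|\le (|E(R)|+|\supp(D)|)\cdot n^{\poly(\eps)}$ and the work/depth bounds follow because each base case and each glue routing is produced in path representation with few paths via a flow decomposition, there are $n^{\poly(\eps)}$ of them, and concatenating $s_0$-step flows along a recursion of depth $O(1/\eps)$ costs $\poly(s_0)\cdot n^{\poly(\eps)}$ work and $\poly(s_0)\cdot n^{\poly(\eps)}$ depth (the recursion depth $O(1/\eps) = o(\log N)$, using $\eps>\log^{-c}N$, is absorbed into $n^{\poly(\eps)}$).

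The main obstacle is controlling congestion through the spread and gather phases. One must argue, simultaneously over all cross-part demand pairs, that the balanced demand fed into each sub-router genuinely respects that sub-router's node-weighting, and that handling all of $D_{\mathrm{inter}}$'s mass \emph{in aggregate} --- as one balanced multicommodity instance on the representative graph, not pair by pair --- keeps the per-level congestion blowup down to $n_0^{O(\eps^2)}$, charging the spreading and gathering cost against the router's guarantee $\kappa_0$ rather than against each demand pair separately. A secondary subtlety is keeping the parallel depth at $\poly(s_0)\cdot n^{\poly(\eps)}$ in the face of $O(1/\eps)$ sequential recursion levels; this works out because each level's non-recursive work --- the glue routing, the flow concatenations, and the path decompositions --- has low depth, and the recursion depth itself is only $O(1/\eps)$.
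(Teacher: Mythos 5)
This statement is not proved in the paper at all: it is a verbatim restatement of Theorem~7.1 of \cite{DBLP:conf/stoc/HaeuplerH0RS24}, imported as a black box and used inside the proof of \Cref{thm:vertexLC-ED}. So there is no in-paper proof to compare against; I can only assess your proposal on its own terms and against the cited source.

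Your spread--cross--gather recursion over a balanced hierarchical decomposition of the router is the right shape of argument, and your parameter bookkeeping (recursion depth $O(1/\eps)$, constant step blowup per level compounding to $\exp(\poly(1/\eps))$, an $n^{O(\eps)}$ congestion factor per level telescoping to $n^{\poly(\eps)}$, and path counts/work controlled by the $n^{\poly(\eps)}$ size of the recursion tree) is consistent with the claimed bounds. The genuine gap is the sentence you flag yourself and then set aside: the theorem is stated for an \emph{arbitrary} $s_0$-step $\kappa_0$-router given only as a capacitated graph, with no hierarchical decomposition attached, and ``we first extract one'' is not a minor preprocessing step --- it is the entire algorithmic content of the theorem. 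The routing guarantee of a router is purely existential, so you cannot assume the parts $V_1,\dots,V_p$, the sub-routers, or the embedded glue structure are handed to you; producing them requires something like running a constant-round cut-matching game \emph{on the router itself}, with the matching player's matchings realized by approximate step-constrained multicommodity flow computations (length bound $s_0$, all lengths $1$), using the router property only to argue that the cut player never succeeds. That is also where the $\poly(s_0)$ factors in the work and depth bounds actually come from, which your account does not explain. Within the present paper the theorem is only ever applied to the graphs $H_S$ built explicitly by a cut-matching game, so the structured-router special case would suffice for the application; but as a proof of the theorem as stated, the proposal is missing its central step.
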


    We apply \Cref{thm:routerrouting} on $R_S$ for every $S\in\cS\in\cN$ to get a flow $F^R_S$ in $R$ routing $D_S$. To get a flow $F_S$ in $G-C$ routing $D_S$, we use the \emph{embedding of routers}: for every flow path $P\in F^R_S$ and edge $e\in E(P) $, we embed this edge into $G-C$ with flow $F_e$ of the same value as $P$, and concatenate all the flows $F_e$ to get the flow $F_P$, and take the union of all $F_P$ to get the flow $F_S$.

    \paragraph{Quality of the returned flow.} We first analyze the length. According to \Cref{thm:routerrouting}, the flow $F^R_S$ for every $S$ has step $\exp(\poly(1/\eps))\cdot s_0$. Thus, each flow path in $F_S$ is concatenated by $\exp(\poly(1/\eps))\cdot s_0$ many flows of length $h\cdot s_1$. Thus, $F_S$ has length at most $O(\exp(\poly(1/\eps))\cdot s_0\cdot h\cdot s_1)=O(\exp(\poly(1/\eps))\cdot hs)$. To get the length $hs$ stated in \Cref{thm:vertexLC-ED}, it suffices to relax $s$ and $\phi$ a bit to $s=\exp\exp(\frac{O(\log (1/\eps))}{\eps})\cdot O(\exp(\poly(1/\eps))=\exp\exp(\frac{O(\log (1/\eps))}{\eps})$. Notice that $C$ is still a valid $(h,s)$-length $\phi$-expander decomposition after relaxing (increasing) $s$ and decreasing $\phi$ by a constant according to the monotonicity \Cref{lem:monotonicity}.

    Then, we analyze the congestion. Notice that 
    \[\cong(\sum_{S\in\cS\in\cN}F_S)\le 2\cong(\sum_{S\in\cS\in\cN}\sum_{P\in\path F^R_S}\sum_{e\in E(P)}F_e)\]
    notice that $F_e$ is routing $U(e)$ many flow from one end point of $e$ to another. Moreover, according to \Cref{thm:routerrouting}, the value of the total amount of flow from one end point of $e$ to another is at most $n^{\poly(\eps)}\cdot \kappa_0\cdot U(e)$. Thus, after scaling the flow $\sum_{S\in\cS\in\cN}F_S$ down by $n^{\poly(\eps)}\cdot \kappa_0$, it becomes a subflow of $\sum_{S\in\cS\in\cN}\sum_{e\in E(R_S)}F_e$, which according to the definition of witness, has congestion at most $\kappa_1$. The congestion of the original flow $\sum_{S\in\cS\in\cN}F_S$ is thus at most $n^{\poly(\eps)}\cdot \kappa_0\cdot \kappa_1\le n^{\poly(\eps)}/\phi$.

    \paragraph{Complexity.} \Cref{thm:vertexLC-EDwitness} takes $\poly(h)\cdot m\cdot n^{\poly(\eps)}$ work and $\poly(h)\cdot n^{\poly(\eps)}$ depth. 
    
    After getting the witness, we apply \Cref{thm:routerrouting} on each router $R_S$. Notice that $R_S$ for all $S\in\cS\in\cN$ are explicit outputs of algorithm \Cref{thm:vertexLC-EDwitness}, so the size of $\cup_{S\in\cS\in\cN}R_S$ is at most the work, i.e., $\poly(h)\cdot m\cdot n^{\poly(\eps)}$. Each pair of demand in $D$ contributes to at most one demand pair in $D_S$, which implies $\sum_{S\in\cS\in\cN}\supp(D_S)\le \supp(D)$. Thus, According to \Cref{thm:routerrouting}, the routing takes in total 
    \[(\poly(h)\cdot m\cdot n^{\poly(\eps)}+|\supp(D)|)\cdot \poly(s_0)\cdot n^{\poly(\eps)}\]

    work and $\poly(s_0)\cdot n^{\poly(\eps)}$ depth.

    For each flow path of $F^R_S$, we map the (at most $\exp(\poly(1/\eps))\cdot s_0$ many) edges of the path to $G-C$ of (at most $hs$ many edges). According to the support flow size specified in \Cref{thm:routerrouting}, the total work of this step is

    \[\exp(\poly(1/\eps))\cdot s_0\cdot hs\cdot (\poly(h)\cdot m\cdot n^{\poly(\eps)}+|\supp(D)|)\cdot n^{\poly(\eps)}\]

    and depth $\tO{1}$. By accumulate all the work above, we get the total work (remember that $\eps=\omega(1/\log\log n)$ so that $n^{\poly(\eps)}\ge \poly(s)$

    \[(m+|\supp(D)|)\cdot \poly(h)\cdot n^{\poly(\eps)}\]

    and depth 
    \[\poly(h)\cdot n^{\poly(\eps)}\]
\end{proof}

\paragraph{Organization.} The algorithm is similar to classical expander decomposition and is based on repeatedly cutting the most balanced sparse cut. We will define this concept in the length constraint setting in this section. In \Cref{sec:unionofspasecut}, we will show some useful properties of sparse cut that will be used in the analysis of the algorithm. In \Cref{sec:apxDLSC}, we will give an algorithm for finding the (approximately) `most balanced' sparse cut. In \Cref{sec:expanderfromDLSC}, we will apply the algorithm in \Cref{sec:apxDLSC} repeatedly to finally prove \Cref{thm:vertexLC-ED}.

We first state the basic fact about the monotonicity of length constraint cuts when $h$ increases, $s$ decreases, and $\phi$ increases.

\begin{lemma}\label{lem:monotonicity}
    Fix a node-weighting $A$ and a (moving cut) $C$ on the vertex set $V$, for every two graphs $G,G'$ where length function of $G'$ is at most length function of $G$, and every $h'=ch,2\le s'\le s,\phi'\ge 2\phi$ where $c\in\bbN^+$, we have the following monotonicity.
    
    \begin{enumerate}
        \item If $C$ is an $(h,s)$-length $\phi$-sparse cut in $G$, then $2C$ is an $(h',s')$-length $\phi'$-sparse cut in $G'$.\label{item:mono_cut_item1}
        \item $DLSC_{G',A}(\phi',h',s')\ge DLSC_{G,A}(\phi,h,s)$.\label{item:mono_cut_item2}
        \item $G'$ is $(h',s')$-length $\phi'$-expanding implies $G$ is $(h,s)$-length $\phi$-expanding.\label{item:mono_cut_item3}
    \end{enumerate}

\end{lemma}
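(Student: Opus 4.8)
\textbf{Proof plan for \Cref{lem:monotonicity}.}
The plan is to unwind the definitions of $h$-length separation, sparsity, $\LDSC$ (the maximum-balance sparse cut value), and length-constrained expansion, and in each case observe that the relevant inequalities only get easier as $h$ grows, $s$ shrinks, and $\phi$ grows. All three parts will follow from two elementary observations: (i) if $G'$ has pointwise smaller vertex lengths than $G$, then $\dist_{G'}(u,v)\le \dist_{G}(u,v)$ for every pair $(u,v)$, and more generally $\dist_{G'-2C}(u,v)\le \dist_{G-C}(u,v) + (\text{extra length from the factor-}2\text{ cut})$ after appropriate bookkeeping; and (ii) scaling a moving cut by $2$ while simultaneously relaxing the target length from $h$ to $h'=ch\ge h$ and from $s$ to $s'\le s$ preserves the property of $h'\cdot s'$-length separating a given demand pair, because $h'\cdot s'$ may be smaller or larger than $h\cdot s$ but the cut value $2C$ compensates for the discretization and for the change in scale. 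Concretely, I would first record the distance-monotonicity statement as a one-line sublemma, then handle the three items in order, since item \ref{item:mono_cut_item2} follows from item \ref{item:mono_cut_item1} by taking the cut achieving the $\LDSC$ value, and item \ref{item:mono_cut_item3} is the contrapositive of item \ref{item:mono_cut_item1} combined with the definition of conductance.

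For item \ref{item:mono_cut_item1}: suppose $C$ is an $(h,s)$-length $\phi$-sparse cut with respect to $A$ in $G$, i.e.\ for every $A$-respecting $h$-length demand $D$ we have $\spa_{h\cdot s}(C,D)\le\phi$ in $G$, meaning $|C|\le\phi\cdot\sep_{h\cdot s,G}(C,D)$. I want to show $|2C|=2|C|\le\phi'\cdot\sep_{h'\cdot s',G'}(2C,D)$ for every $A$-respecting $h'$-length demand $D'$. The subtlety is that the "bad demand" realizing sparsity is quantified differently ($h$-length vs.\ $h'$-length demands), so I would instead argue via $\cond$: it is cleaner to show the conductance statement $\cond_{(h',s')}(A)\ge \cond_{(h,s)}(A)/2$ up to the relaxations, then read off sparsity. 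The key computation is that if $u,v$ are $(h\cdot s)$-length separated by $C$ in $G$, i.e.\ $\dist_{G-C}(u,v)>h\cdot s$, then in $G'$ with cut $2C$ we have $\dist_{G'-2C}(u,v)\ge \dist_{G-C}(u,v) + |C|\text{-contribution along the path}$; more carefully, since $\ell_{G'}\le\ell_G$ but the cut doubles, on any $u$–$v$ path $P$ the quantity $\ell_{G'-2C}(P) = \ell_{G'}(P) + (h'\cdot s')\cdot(2C)(P)$ and I need this to exceed $h'\cdot s'$. Using $\dist_{G-C}(u,v)>h\cdot s$ gives $\ell_G(P) + (h\cdot s)C(P) > h\cdot s$, hence either $\ell_G(P) > h\cdot s/2$ or $C(P) > 1/2$, and in the second case (since cut values are multiples of $1/(h\cdot s)$, so $C(P)\ge 1/(h s)$ whenever positive, but more usefully $C(P)>1/2$ forces $2C(P)>1$) we directly get $(h'\cdot s')\cdot 2C(P) > h'\cdot s'$; in the first case $\ell_{G'}(P)\le\ell_G(P)$ does not immediately help, so I would instead not split this way but rather use that $h'\cdot s' = ch\cdot s'$ with $c\ge1$ and $s'$ possibly much smaller, and argue demand-by-demand that the amount of $(h'\cdot s')$-separated demand by $2C$ in $G'$ is at least the amount of $(h\cdot s)$-separated demand by $C$ in $G$ — this is the place where the factor $2$ on the cut and the constraint $s'\ge 2$ (ensuring $h'\cdot s'\ge 2h$, comparable to the discretization granularity) are used. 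For item \ref{item:mono_cut_item2}, let $C$ be the $(h,s)$-length $\phi$-sparse cut of size $\LDSC_{G,A}(\phi,h,s)$ with maximal balance; by item \ref{item:mono_cut_item1}, $2C$ is an $(h',s')$-length $\phi'$-sparse cut in $G'$, and it separates at least as much demand (hence is at least as balanced), so $\LDSC_{G',A}(\phi',h',s')\ge$ the balance of $2C\ge$ the balance of $C = \LDSC_{G,A}(\phi,h,s)$, modulo checking that "balance" is measured in a scale-invariant way. Item \ref{item:mono_cut_item3} is then immediate: if $G$ is not $(h,s)$-length $\phi$-expanding, there is an $(h,s)$-length $\phi$-sparse cut $C$ in $G$, so $2C$ is an $(h',s')$-length $\phi'$-sparse cut in $G'$ by item \ref{item:mono_cut_item1}, witnessing that $G'$ is not $(h',s')$-length $\phi'$-expanding; contrapositive gives the claim.

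The main obstacle I anticipate is the bookkeeping around the discretization of moving cuts (cut values are multiples of $1/(h\cdot s)$, so a cut for length scale $h\cdot s$ is a priori not a legal cut for length scale $h'\cdot s'$ unless $h'\cdot s' \mid h\cdot s$ or we round), and around the fact that $h'\cdot s'$ is \emph{not} monotone in a single direction relative to $h\cdot s$ — it can be larger (because $h'=ch$) or could shrink if $s'\ll s$. The factor of $2$ on the cut and the hypotheses $c\in\bbN^+$ and $s'\ge 2$ are exactly what absorb these issues: doubling the cut gives enough slack to cover both the rounding of cut values to multiples of $1/(h'\cdot s')$ and any shrinkage of the separation target, and I expect the cleanest writeup routes everything through the length function $\ell_{G-C}$ and the observation that $\ell_{G'-2C}\ge \min\{?, ?\}\cdot\ell_{G-C}$ pointwise-along-paths is not literally true, so I will instead prove the demand-separation inequality $\sep_{h'\cdot s',G'}(2C,D)\ge \sep_{h\cdot s,G}(C,D)$ directly for each fixed pair, which is the technical heart and where I would spend the actual effort. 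Once that inequality is in hand, sparsity, $\LDSC$, and expansion all follow formally.
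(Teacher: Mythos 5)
Your overall route is the same as the paper's: transfer the witnessing demand of $C$ to $G'$, show that $2C$ still separates it, and derive items \ref{item:mono_cut_item2} and \ref{item:mono_cut_item3} formally from item \ref{item:mono_cut_item1}. Before the main point, one definitional slip: by \Cref{def:sparsity}, $\spa_{(h,s)}(C,A)$ is a \emph{minimum} over $A$-respecting $h$-length demands, so ``$C$ is $\phi$-sparse'' means there \emph{exists} a witnessing demand $D$ with $|C|\le\phi\cdot\sep_{hs}(C,D)$ --- not ``for every demand'' as you write. This dissolves your worry that the demands are ``quantified differently'': you only need to exhibit one $h'$-length demand witnessing the sparsity of $2C$ in $G'$, and the natural choice (also the paper's) is the same $D$, which is automatically $h'$-length in $G'$ because $\ell_{G'}\le\ell_G$ and $h'\ge h$.

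The substantive problem is that the pairwise separation inequality --- the step you yourself call ``the technical heart'' --- is never actually proved. Your one concrete attempt, the dichotomy ``$\ell_G(P)>hs/2$ or $C(P)>1/2$,'' fails in the first branch exactly as you observe, and the replacement you offer (``argue demand-by-demand that the separated demand does not decrease'') is a restatement of the goal rather than an argument. The paper closes this step by exploiting that $D$ is $h$-length: for a witnessed pair $(u,v)$, the graph length accounts for at most $h$ of the required $>hs$ in $G-C$, so the moving cut alone must contribute length increase $>hs-h\ge hs/2$ (this is exactly where $s\ge 2$ enters); hence $2C$ alone contributes $>hs$, a bound that does not depend on the underlying length function at all and therefore survives the passage to $G'$, after which one only reconciles the scale $hs$ with $h's'$. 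Your dichotomy is the wrong one precisely because it lets the graph length --- the one quantity that does \emph{not} transfer to $G'$ --- carry half of the load. (Note that even this accounting directly controls only paths whose $G$-length is at most $h$; the paper is terse about $u$--$v$ paths that already exceed $hs$ in $G$ while carrying no cut value, which is the very case you flagged, so a fully rigorous write-up should state explicitly which paths the cut-only bound is applied to.)
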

\begin{proof}
    Given $C$ is an $(h,s)$-length $\phi$-sparse cut, denote by $D$ its witnessing demand that has the demand-size of $C$, we have $A_{(h,s)}(C) = |D| \ge |C|/\phi$ and
    $\leng(P_{u,v}, G+\l_{C,hs}) > hs$
    for any demand pairs $D(u,v) > 0$.
    We show in the following that $2C$ also $h's'$-separates the $h'$-length demand $D$ since $h' \ge h$, which yields a sparsity upper bound of at most $2\phi$.
    Namely, from $\leng(P_{u,v}, G+\l_{C,hs}) > hs$, we know that the length increase along path $P_{u,v}$ is at least $hs-h$. 
    With the assumption of $s\ge 2$, the length increase from $2C$ is larger than $hs$.
    We can further assume that $h's' = c'hs$ where $c' \in \bbN^+$, and view $C$ as an $(h', s')$-length moving cut with the same cut value but larger length increase on the graph.
    Then for any graph $G'$ is a copy of $G$ but with length function at most the one of $G$, we ahve
    $\leng(P_{u,v}, G'+\l_{2C,h's'}) > h's'$.
    Thus the sparisty of the $(h',s')$-length moving cut $2C$ in $G'$ is upper bounded by $2|C|/|D| \le 2\phi$.
    This immediately concludes the \textbf{\cref{item:mono_cut_item1}}, and implies the following ones.

    \noindent\textbf{Proof of \cref{item:mono_cut_item2}}: For the second point, since $|D| = DLSC_{G,A}(\phi,h,s)$, and we have an $(h',s')$-length $\phi'$-sparse cut $2C$ also fully separates $D$, it shows that $DLSC_{G',A}(\phi',h',s') \ge |D|$.

    \noindent\textbf{Proof of \cref{item:mono_cut_item3}}: For the last one, we show it by contraposition, suppose $G$ is not $(h,s)$-length $\phi$-expanding, then there exists an $(h,s)$-length $\phi$-sparse cut $C$ in $G$.
    As shown above, we can construct an $(h',s')$-length $\phi'$-sparse cut $2C$ in $G'$, which implies that $G'$ is not $(h',s')$-length $\phi'$-expanding.
\end{proof}

\paragraph{Concepts of demand sizes.} We will define demand-size-largest sparse cut, which corresponds to the concept of `most balanced sparse cut' in the classical setting, i.e., demand-size corresponds to the balance, as defined below.

\begin{definition}[$(h,s)$-Separated Demand-Size]\label{def:demandSize}
    For a graph $G$, an $h$-length cut $C$, and a node-weighting $A$, we define the $(h,s)$-length demand-size of $C$ (with respect to $A$ on $G$) as the size of the largest $A$-respecting $h$-length demand which is $(hs)$-separated by $C$. We denote this ``demand-size'' by $A_G(h,s,C)$. When $G$ is clear from the context, we simply write $A(h,s,C)$. 
\end{definition}

\begin{definition}[Demand-Size-Largest Sparse Cut, \qLDSC]\label{def:LDSC}
    For a graph $G$, a node weighting $A$, and an $(h,s)$-length $\phi$-sparse cut $C$, we call $C$ the demand-size-largest $(h,s)$-length $\phi$-sparse cut if its demand-size is maximum among all $(h,s)$-length $\phi$-sparse cuts. We notate the size of this cut as \begin{align*}
        \qLDSC_{G,A}(\phi, h, s) := A_{G}(h,s,C)
    \end{align*}

    When $G,A$ are clear from the context, we omit them and simply write $\qLDSC(\phi,h,s)$. 
\end{definition}

\begin{restatable}[Approximately Demand-Size-Largest Sparse Cut]{definition}{apxsparsecut}\label{dfn:apxLargeCut}
    $C$ is an $\alpha$-approximate demand-size-largest $(h,s)$-length $\phi$-sparse cut for a node-weighting $A$ with length approximation $\alpha_h$, length slackness approximation $\alpha_s$ and sparsity approximation $\alpha_{\phi}$ if it is an $(\alpha_{h}\cdot h, s)$-length $\phi$-sparse cut and we have
    \begin{align*}
        A_{(h,s)}(C) \geq \frac{1}{\alpha} \cdot \qLDSC(\phi/\alpha_\phi, h/\alpha_s, s \cdot \alpha_s).
    \end{align*}
\end{restatable}

\section{Union of Sparse Vertex Moving Cut Sequence is Sparse}\label{sec:unionofspasecut}

Formally, we consider a sequence of moving cuts, defined as follows.
\begin{definition}[Sequence of Moving Cuts]\label{dfn:movingCutSequence}
    Given graph $G = (V,E)$ and node-weighting $A$, a sequence of moving cuts is
    $(C_1, C_2, \ldots)$. 
    Let $G - \sum_{j < i} C_j$ denote the graph that is applied with cuts from $C_1$ to $C_{i-1}$ sequentially.    
    We refer to the $(h,s)$-length sparsity of $C_i$ with respect to $A$ in $G - \sum_{j < i} C_j$ as its $(h,s)$-length sparsity in the sequence. 
    We say the sequence $(C_1, C_2, \ldots)$ is $(h,s)$-length $\phi$-sparse if each $C_i$ is $(h,s)$-length $\phi$-sparse in the sequence.
\end{definition}

\begin{definition}[Sequence of Witnessing Demands]\label{dfn:witnessDemandSequence}
    Given graph $G = (V,E)$, a node-weighting $A$, and an $(h,s)$-length $\phi$-sparse sequence of moving cuts $(C_1, C_2, \ldots)$, we define the sequence of witnessing demands as $\mcD = (D_{1}, D_{2}, \ldots)$ such that $D_{i}$ is an $h$-length $A$-respecting demand in $G - \sum_{j < i} C_j$ and $\spa_{hs}(C_{i}, D_{i}) \le \phi$. 
    Furthermore, we assume $D_{i}$ is fully separated, i.e., $\sep_{hs}(C_{i}, D_{i}) = |D_{i}|$.
\end{definition}
Intuitively, witnessing demand $D_{i}$ indicates close vertex pairs $hs$-separated by $C_i$. 
In particular, if $D_{i}(u,v) >0$, then we have $\dist_{G - \sum_{j < i} C_j}(u, v) \le h$ but $\dist_{G - \sum_{j \le i} C_j}(u, v) \ge hs$.

The main result we would like to show in this section is that the union of sparse length-constrained vertex moving cuts is still sparse:
\begin{restatable}[Union of Sparse Vertex Moving Cuts is a Sparse Vertex Moving Cut]{theorem}{unionOfCuts}
	\label{thm:unionOfMovingCuts} Let $(C_1, \ldots, C_k)$ be a sequence of moving cuts where $C_i$ is an $(h,s)$-length $\phi_i$-sparse cut in $G - \sum_{j < i} C_j$ w.r.t.\ node-weighting $A$. Then the moving cut $\sum_i C_i$ is an $(h',s')$-length $\phi'$-sparse cut w.r.t.\ $A$ where $h' = 2h$, $s' = \frac{(s-2)}{2}$ and $\phi' = s^3 \cdot \log^3 n \cdot n^{O(1/s)} \cdot \frac{\sum_i |C_i|}{\sum_i |C_i|/\phi_i}$.
\end{restatable}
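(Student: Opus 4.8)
\textbf{Proof plan for \Cref{thm:unionOfMovingCuts}.}
The strategy is to construct an explicit $h'$-length $A$-respecting demand $D^{*}$ that is $(h' s')$-separated by $C := \sum_i C_i$ and whose size is comparable to $\sum_i |C_i|/\phi_i$ — this directly certifies the sparsity bound for $C$. The natural candidate is to take (scaled versions of) the witnessing demands $D_1,\dots,D_k$ from \Cref{dfn:witnessDemandSequence} and combine them. Each $D_i$ satisfies $|D_i| \ge |C_i|/\phi_i$ and is $(hs)$-separated by $C_i$ in $G - \sum_{j<i} C_j$; since applying more cuts only increases lengths (\Cref{lem:monotonicity}, the monotonicity of distances under adding moving cuts), each pair in $D_i$ is still $(hs)$-separated in $G - C$. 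The obstacle is that the union $\sum_i D_i$ need not be $A$-respecting: a single vertex $v$ may carry demand in many of the $D_i$, so the total can blow up by a factor of $k$ (or worse).

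\textbf{Overcoming the congestion blow-up.}
The key idea — standard in length-constrained expander decomposition arguments (cf.\ the corresponding edge-capacitated statement in \cite{HaeuplerHT24}) — is a bucketing/rescaling argument rather than a naive union. First, I would group the cuts $C_i$ into $O(\log n)$ buckets according to the value of $\phi_i$ (rounded to powers of $2$, using that all quantities are polynomially bounded so only $O(\log n)$ buckets matter); within a bucket all $\phi_i$ are equal up to a factor of $2$, so it suffices to prove the bound for a single bucket and lose only an $O(\log n)$ factor in $\phi'$. Within one bucket, I would further observe that the demands $D_i$ live in graphs $G - \sum_{j<i}C_j$ with increasing lengths, and exploit the length slack: because $h' = 2h$ and $s' = (s-2)/2$, a demand that is $(hs)$-separated and $h$-length in an intermediate graph is comfortably $(h's')$-separated and $h'$-length in $G-C$, which gives room to merge overlapping demands by a \emph{fractional} combination. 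Concretely, set $D^{*} = \frac{1}{\Lambda}\sum_i D_i$ where $\Lambda$ is chosen (roughly $s \log n \cdot n^{O(1/s)}$) large enough that $D^{*}$ becomes $A$-respecting; the $n^{O(1/s)}$ factor arises because, by the neighborhood-cover / length-constrained structure, within an $h$-ball the number of "active" witnessing demands overlapping at a vertex is bounded by $n^{O(1/s)}$ (the width-type bound associated with separation $s$), not by $k$. The remaining $s^3$ factor comes from the length rescaling: passing from separation parameter $s$ at scale $h$ to separation $s'$ at scale $h' = 2h$ costs polynomially in $s$ when one carefully tracks how the intermediate-graph distances add up along a path.

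\textbf{Assembling the bound.}
After the rescaling, $D^{*}$ is an $h'$-length $A$-respecting demand with $|D^{*}| = \frac{1}{\Lambda}\sum_i |D_i| \ge \frac{1}{\Lambda}\sum_i |C_i|/\phi_i$, and it is $(h's')$-separated by $C$ in $G$ (each contributing pair is separated, and separation is preserved under taking the union of cuts and under shrinking $s$ to $s'$, $h$ to $h'$ appropriately). Hence
\[
\spa_{h' s'}(C, D^{*}) \le \frac{|C|}{|D^{*}|} \le \Lambda \cdot \frac{\sum_i |C_i|}{\sum_i |C_i|/\phi_i},
\]
and plugging in $\Lambda = s^3 \log^3 n \cdot n^{O(1/s)}$ gives exactly $\phi' = s^3 \log^3 n \cdot n^{O(1/s)} \cdot \frac{\sum_i |C_i|}{\sum_i |C_i|/\phi_i}$, which is the claimed $(h',s')$-length sparsity of $C$. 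The main technical obstacle, and where most of the work lies, is the second step: proving the $n^{O(1/s)}$ bound on the overlap multiplicity of the witnessing demands at any vertex, which requires a careful ball-growing / neighborhood-cover argument showing that a vertex cannot participate in too many "far-separating" demands at nearby scales — this is the length-constrained analogue of the fact that a vertex lies in boundedly many clusters of a sparse neighborhood cover, and it is what forces the subpolynomial (rather than logarithmic) slack in $\phi'$.
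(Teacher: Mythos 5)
There is a genuine gap at the heart of your argument: the claim that after dividing by $\Lambda = s^3\log^3 n\cdot n^{O(1/s)}$ the demand $D^{*}=\frac{1}{\Lambda}\sum_i D_i$ becomes $A$-respecting, because "the number of active witnessing demands overlapping at a vertex is bounded by $n^{O(1/s)}$." No such per-vertex overlap bound holds. A single vertex $v$ with $A(v)>0$ can carry demand in every one of the $k$ witnessing demands (e.g.\ $D_i(v,u_i)=A(v)$ for a different partner $u_i$ in each round), and $k$ is not bounded by anything subpolynomial — the sequence of cuts can be arbitrarily long. The structural constraint that the separation parameter $s$ actually gives is a \emph{density} (average-degree) bound, not a max-degree bound: in the paper this is formalized by building the demand matching graph $G(\mcD)$ on $2A(v)$ copies of each vertex and showing (\Cref{lem:arbBound}) that the reversed matching sequence is $s$-parallel-greedy, hence $G(\mcD)$ has \emph{arboricity} $s^3\log^3 N\cdot N^{O(1/s)}$. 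Individual vertices of $G(\mcD)$ can still have degree $k$, so uniform scaling of $\sum_i D_i$ cannot work; the required $\Lambda$ could be as large as $k$, destroying the bound.

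The step you are missing is that the witness demand must be \emph{redistributed}, not merely rescaled. The paper takes a forest cover of $G(\mcD)$ with $\alpha=s^3\log^3 N\cdot N^{O(1/s)}$ forests and replaces each tree by its tree-matching demand (\Cref{def:treeMatchDemand}); after dividing by $4\alpha$ this is $A$-respecting because each vertex copy is matched $O(1)$ times per forest, while \Cref{lem:treeMatchSize} preserves the total size up to the arboricity factor. Crucially, the resulting matching-dispersed demand has \emph{different pairs} from the $D_i$: a new pair $(u,v)$ may be two vertices joined through a common hub $w$ by two different demands $D_i,D_j$. This is exactly why the theorem degrades to $h'=2h$ and $s'=(s-2)/2$ — one needs a triangle-inequality argument through the hub (Property 2 of \Cref{lem:sparseOfMatching}, using that $D_i$ is already $hs$-separated in $G-\sum_{\ell\le i}C_\ell$ while $D_j$, $j>i$, is still $h$-length there) to show the new pairs are $h(s-2)$-separated. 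In your proposal the pairs of $\frac{1}{\Lambda}\sum_i D_i$ would remain $h$-length and $hs$-separated, so the parameters $2h$ and $(s-2)/2$ have no role, which is a sign the mechanism is different from what the theorem requires. The $s^3\log^3 n\cdot n^{O(1/s)}$ factor in $\phi'$ comes from the arboricity of $s$-pg graphs, not from "length rescaling"; and the bucketing over $\phi_i$ is unnecessary, since the weighted form $\phi'=\alpha\cdot\frac{\sum_i|C_i|}{\sum_i|C_i|/\phi_i}$ handles heterogeneous sparsities directly via $|D_i|\ge|C_i|/\phi_i$.
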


\subsection{Low Arboricity Demand Matching Graph via Parallel Greedy Spanners}
We follow the idea by \cite{HaeuplerHT24} to construct a graph called demand matching graph.
It helps to construct an overall demand that witnesses the sparsity of the union of moving cuts in the sequence of moving cuts.
Given node-weighting $A$ and demand $D$, we can scale them (arbitrarily close to being) integral.
We can thus assume, w.l.o.g., that $A$ and $D$ are integral.
Informally, the graph creates $A(v)$ copies for each vertex $v$. 

\begin{definition}[Demand Matching Graph\cite{HaeuplerHT24}]\label{def:demandMatching}
Given a graph $G = (V,E)$ with vertex capacity and length, a node-weighting $A$ and $A$-respecting demands $\mcD = (D_1, D_2, \ldots )$, we denote $2A(v)$ unique ``copies'' of $v$ by $\copies_{A}(v) := \{v_{i} : i \in \{1,2,\ldots,2A(v)\}\}$.
We further define the demand matching (multi)-graph $G(\mcD) = (V', E')$ with unit edge length as follows:
\begin{itemize}%
    \item  \textbf{Vertices:} $G(\mcD)$ has vertices $V' = \bigsqcup_v \copies_{A}(v)$.
    \item \textbf{Edges:} For each demand $D_i$, let $M_i$ be any matching where the number of edges between $\copies_{A}(u)$ and $\copies_{A}(v)$ for each $u,v \in V$ is $D_i(u,v)$. Then $E' = \bigcup_i M_i$.
\end{itemize}
\end{definition}
We note there always exists a matching $M_{i}$ defined as above for each demand $D_{i}$ since $D_{i}$ is $A$-respecting and we create $2A(v)$ unique ``copies'' of each vertex $v$. 
Given a sequence of witnessing demands, 
we can construct a corresponding demand matching graph as defined above.
The key property is that it has a low arboricity.

\begin{lemma}[Bounded Arboricity of Demands of Sequence of Cuts\cite{HaeuplerHT24}]\label{lem:arbBound} Let $(C_1, C_2, \ldots)$ be a sequence of $(h, s)$-length cuts with witnessing demands $\mcD = (D_1, D_2, \ldots)$. Then $G(\mcD)$ has arboricity at most $s^3 \cdot \log^3 N \cdot N^{O(1/s)}$.
\end{lemma}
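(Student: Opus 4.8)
The statement to prove is \Cref{lem:arbBound}: the demand matching graph $G(\mcD)$ built from the witnessing demands of a sequence of $(h,s)$-length cuts has arboricity at most $s^3 \cdot \log^3 N \cdot N^{O(1/s)}$. Since this is exactly the analogue of the bound from \cite{HaeuplerHT24} in the vertex-capacitated setting, the plan is to follow the greedy-spanner argument of \cite{HaeuplerHT24} essentially verbatim, checking that the only place the edge-versus-vertex distinction matters is the distance bookkeeping, which behaves the same way for moving cuts on vertices. The high-level idea: if $G(\mcD)$ had arboricity much larger than the claimed bound, one could extract a dense subgraph, hence many matching edges among few vertex-copies; by the pigeonhole on the $k$ demands, many of these edges come from demands appearing \emph{late} in the sequence, where the preceding cuts have already paid a lot of capacity per unit of separated demand. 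Summing the sparsity guarantees $\spa_{hs}(C_i, D_i) \le \phi_i$ and comparing against a single auxiliary demand supported on the dense subgraph would then contradict the sparsity of some $C_i$ — but actually the contradiction is cleaner to phrase via a spanner: a dense graph has short cycles / a low-girth certificate, and a low-girth structure in $G(\mcD)$ forces a short path in $G-\sum_{j<i}C_j$ between two vertices that a later cut $C_i$ is supposed to $hs$-separate, forcing $C_i$ to be large.

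Concretely, first I would recall the parallel greedy spanner guarantee (this is the ``parallel greedy spanners'' ingredient referenced in the section title, cf. \cite{HaeuplerHT23}): for a unit-length graph on $N'$ vertices and any stretch parameter $t$, the greedy $(2t-1)$-spanner has $N'^{1+O(1/t)}$ edges, equivalently any graph of arboricity $> N'^{O(1/t)}$ contains a cycle of length $\le 2t$. Apply this to $G(\mcD)$ with $t = \Theta(s)$ and $N' = |V'| = \sum_v 2A(v) = 2|A| \le \poly(N)$ (after the integral scaling), so $\log N' = \Theta(\log N)$. Second, suppose for contradiction the arboricity of $G(\mcD)$ exceeds $s^3 \log^3 N \cdot N^{O(1/s)}$; then $G(\mcD)$ contains many short cycles, and by an averaging/pigeonhole over the $k \le |\mcD|$ matchings $M_i$ there is some short closed walk (of length $O(s)$) whose edges are distributed across the matchings in a controlled way — the factor $s^3 \log^3 N$ is exactly the slack needed to guarantee that after removing the ``early'' contributions there is still a short path witnessing a violation for some single index $i$. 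Third, translate this short cycle in $G(\mcD)$ back to $G$: each matching edge of $M_i$ between $\copies_A(u)$ and $\copies_A(v)$ certifies $\dist_{G-\sum_{j<i}C_j}(u,v)\le h$ (that is what ``$D_i$ is $h$-length in $G-\sum_{j<i}C_j$'' means), and each such edge of $M_i$ with $D_i(u,v)>0$ that is \emph{not} yet $hs$-separated would contradict $\sep_{hs}(C_i,D_i)=|D_i|$. The cycle structure lets us stitch these $h$-length hops into a path of length $O(s)\cdot h \le hs$ (for $s$ larger than the hidden constant) in an appropriate intermediate graph, contradicting that the later cut in the cycle $hs$-separated its endpoints.

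The bookkeeping that controls the constants is where the factors $s^3$, $\log^3 N$, $N^{O(1/s)}$ enter: $N^{O(1/s)}$ is the spanner size overhead at stretch $\Theta(s)$; one factor of roughly $s \log N$ comes from a standard decomposition of the demand sequence into $O(\log N)$ ``scales'' of $\phi_i$ (so that within a scale all $\phi_i$ are comparable up to a factor $2$, letting us treat the union as a single sparse cut per scale); and the remaining $s^2\log^2 N$ comes from two further layers — one from the number of edges on a short cycle being $O(s)$ and thus the per-cycle accounting losing a factor $s$, and one from the peeling argument that repeatedly extracts dense subgraphs, which loses $O(\log N)$ rounds each contributing an $s\log N$ overhead. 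I would set up these three reductions (scale-bucketing, dense-subgraph peeling, cycle-to-path translation) as three short lemmas and then multiply the overheads.

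The main obstacle I expect is the cycle-to-path translation in step three: in the edge-capacitated case of \cite{HaeuplerHT24} a short cycle in $G(\mcD)$ maps to a short closed walk in $G$ whose edges have bounded multiplicity, and one argues that the \emph{last} cut touching this walk cannot have $hs$-separated a pair it witnesses. In the vertex setting, ``length'' of a path counts vertex lengths, and a moving cut raises vertex lengths; a closed walk may revisit vertices, so one must be careful that the multiset of vertices traversed still has total cut-increased length bounded by $O(s)\cdot h$, and that the relevant witnessing pair sits on a \emph{simple} sub-path of length $\le hs$. I would handle this by taking the cycle in $G(\mcD)$ to be chordless (which the greedy spanner / low-girth certificate provides), so that the copies are distinct and the induced walk in $G$ visits each underlying vertex $O(1)$ times, keeping the length accounting clean — but verifying this carefully, and confirming that the monotonicity facts (\Cref{lem:monotonicity}) suffice to pass between the intermediate graphs $G-\sum_{j<i}C_j$ for different $i$ without further loss, is the delicate part that distinguishes this from a pure restatement of \cite{HaeuplerHT24}.
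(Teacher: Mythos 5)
Your proposal contains the correct kernel of the argument — the translation step in which each matching edge of $M_j$ certifies $\dist_{G-\sum_{j'<j}C_{j'}}(u,v)\le h$, distances only grow as more cuts are applied, and the triangle inequality stitches $s$ such hops into a path of length $\le hs$ contradicting the $hs$-separation enforced by an earlier cut. That is exactly how the paper verifies the relevant structural property, and it works directly on distances, so your worries about chordless cycles, vertex revisits, and vertex-versus-edge length accounting are unnecessary. However, the overall architecture you build around this kernel has a genuine gap. The bound $s^3\log^3 N\cdot N^{O(1/s)}$ is not assembled from scale-bucketing over the sparsities $\phi_i$, dense-subgraph peeling, and a cycle-accounting loss: the sparsities of the cuts play no role whatsoever in this lemma (it holds for \emph{any} sequence of $(h,s)$-length cuts with fully-separated witnessing demands), so your proposed $s\log N$ factor ``from bucketing the $\phi_i$ into $O(\log N)$ scales'' has nothing to attach to. The paper instead shows that the \emph{reversed} sequence of matchings $(M_k,\dots,M_1)$ is an $s$-parallel-greedy ($s$-\pg) sequence — every edge of $M_{k-i}$ has endpoints at distance $>s$ in the graph spanned by the later matchings $M_{k-i+1},\dots,M_k$ — and then cites as a black box the theorem of \cite{HaeuplerHT23} that every $s$-\pg graph on $n$ vertices has arboricity $s^3\log^3 n\cdot n^{O(1/s)}$.

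The reason your girth-based route would fail, rather than merely being longer, is that $G(\mcD)$ genuinely can contain short cycles: the $s$-\pg condition only forbids a short cycle in the prefix graph that contains a \emph{single} edge of the newest matching, and a generic low-girth certificate may produce a short cycle using two edges of the same matching or edges of earlier matchings, from which no contradiction with the separation property follows (your triangle-inequality argument requires all the other edges on the path to come from \emph{later} demands). Handling exactly this weaker structure is the nontrivial content of the cited arboricity theorem and the source of the $s^3\log^3 n$ slack; it cannot be replaced by the plain Moore-bound statement ``arboricity $> N'^{O(1/t)}$ implies a cycle of length $\le 2t$'' plus pigeonhole. To repair your proof you should drop the sparsity bucketing and peeling entirely, state the $s$-\pg definition, prove the $s$-\pg property of the reversed matching sequence via your (correct) distance argument, and invoke the arboricity bound for $s$-\pg graphs as a citation.
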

\begin{proof}
    We say that a sequence $(E_1, E_2, \ldots,E_k)$ of edge sets on $V$ is $s$-$\pg$ (abbreviation for $s$-parallel greedy) for some integer $s\ge 2$, iff for each $1\le i\le k$,
    \begin{itemize}
        \item set $E_i$ is a matching on $V$; and
        \item if we denote by $G_{i-1}$ the graph on $V$ induced by edges in $\bigcup_{1\le j\le i-1}E_j$, then for every edge $(u,v)\in E_i$, $d_{G_{i-1}}(u,v)> s$.
    \end{itemize}
    Equivalently, a sequence $(E_1,\ldots,E_k)$ is $s$-$\pg$ iff for each $1\le i\le k$, every cycle in $G_i$ of length at most $s+1$ contains at least two edges in $E_i$.
    We say that a graph $G$ is $s$-$\pg$ iff its edge set $E(G)$ is the union of some $s$-$\pg$ sequence on $V(G)$. \cite{HaeuplerHT23} proves that every $s$-\pg graph on $n$ vertices has arboricity $s^3 \cdot \log^3 n \cdot n^{O(1/s)}$.
    
    We now show that this fact implies \Cref{lem:arbBound} by showing that $G(\mcD)$ is a $s$-$\pg$ graph. 
    Specifically, given witnessing demands $\mcD = (D_{1}, D_{2},\ldots, D_{k})$, we would like to show that the reversed sequence $(M_{k}, M_{k-1},\ldots, M_{1})$ is $s$-$\pg$.
    
    For each $i$, let $G_{i}(\mcD)$ be the graph induced by edges in $\bigcup_{k-i+1\le j\le k}M_{k}$.
    Consider an edge in $M_{k-i}$, the matching corresponding to demand $D_{k-i}$. 
    By definition, it suffices to show that for each $(u,v)\in M_{k-i}$, there is no length-at-most-$s$ path in $G_{i}(\mcD)$ containing $u,v$. 
    Assume for contradiction that there exists a path $P=(u,x_1,\ldots,x_{s-1},v)$ in $G_{i}(\mcD)$. 
    This means that in graph $G-\sum_{1\le j\le k-i}C_{j}$, every pair in $(u,x_1),(x_1,x_2),\ldots,(x_{s-1},v)$ is at distance at most $h$. %
    By triangle inequality, this implies that the distance between $u,v$ in $G-\sum_{1\le j\le k-i}C_{j}$ is less than $hs$. However, as the moving cut $C_{k-i}$ separates all pairs in $D_{k-i}$ to distance more than $hs$, as an edge in $M_{k-i}$, $u$ and $v$ should be at distance more than $hs$ in $G-\sum_{1\le j\le k-i}C_{j}$, a contradiction.
\end{proof}

\subsection{Matching-Dispersed Demand}
We describe how to construct the witness demand for the union of moving cuts from the demand matching graph constructed previously.
\begin{definition}[Tree Matching Demand\cite{HaeuplerHT24}]\label{def:treeMatchDemand}
Given tree $T = (V,E)$ we define the tree-matching demand on $T$ as follows. Root $T$ arbitrarily. For each vertex $v$ with children $C_v$ do the following. If $|C_v|$ is odd let $U_v = C_v \cup \{v\}$, otherwise let $U_v = C_v$. Let $M_v$ be an arbitrary perfect matching on $U_v$ and define the demand associated with $v$ as 
\begin{align*}
    D_{v}(u_1, u_2) = D_{v}(u_2, u_1):= 
    \begin{cases}
        1 & \text{if $\{u_1,u_2\} \in M_v$}\\
        0 & \text{otherwise}.
    \end{cases}
\end{align*}
where each edge in $M_v$ has an arbitrary canonical $u_1$ and $u_2$. Then, the tree matching demand for $T$ is defined as
\begin{align*}
    D_{T} := \sum_{v \text{ internal in }T} D_v
\end{align*}
\end{definition}

We observe that a tree matching demand has size equal to the input size (up to constants).
\begin{lemma} \label{lem:treeMatchSize}
Let $T$ be a tree with $n$ edges. Then $|D_T| \geq n$.
\end{lemma}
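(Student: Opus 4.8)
\textbf{Proof proposal for \Cref{lem:treeMatchSize}.}

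The plan is to show that the tree matching demand $D_T$ is built from matchings $M_v$ across all internal vertices $v$, and that these matchings collectively saturate a constant fraction of the $n+1$ vertices (equivalently, the $n$ edges) of $T$. Recall from \Cref{def:treeMatchDemand} that for each internal vertex $v$ with child set $C_v$, we form $U_v$ (either $C_v$ or $C_v\cup\{v\}$, whichever has even cardinality) and take a perfect matching $M_v$ on $U_v$; then $D_T=\sum_{v\text{ internal}}D_v$ where $D_v$ has one unit of demand per edge of $M_v$. Since $|D_v|=|M_v|=|U_v|/2$, we get $|D_T|=\sum_{v\text{ internal}}|U_v|/2$.

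First I would bound $\sum_v |U_v|$ from below. Note that $|U_v|\ge |C_v|$ for every internal $v$, and in fact $|U_v|=|C_v|$ when $|C_v|$ is even and $|U_v|=|C_v|+1$ when $|C_v|$ is odd. Summing over all internal vertices, $\sum_{v\text{ internal}}|C_v|$ counts each non-root vertex exactly once (it is the child of exactly one vertex), so $\sum_{v\text{ internal}}|C_v| = n+1-1 = n$ (there are $n+1$ vertices, and the root is not a child of anyone). Therefore $\sum_{v\text{ internal}}|U_v|\ge n$, which would give $|D_T|\ge n/2$ — this is already $\Omega(n)$, consistent with the "up to constants" phrasing in \Cref{lem:arbBound}'s usage, but the statement claims the cleaner bound $|D_T|\ge n$.

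To get the factor of $1$ rather than $1/2$, I would sharpen the accounting by using the parity correction: whenever $|C_v|$ is odd, $v$ itself is added to $U_v$. The key observation is that a vertex $v$ can be "added via parity" to its own matching $M_v$ only once, and this contributes an extra $+1$ beyond its count as a child elsewhere. Actually the cleanest route is a direct double count: $2|D_T| = \sum_{v\text{ internal}}|U_v|$, and I claim $\sum_{v\text{ internal}}|U_v|\ge 2n$... but this is false in general (a path on $n+1$ vertices has $|D_T|$ roughly $n/2$). So I would instead revisit whether the intended claim is $|D_T|\ge n/2$ or whether the tree is assumed to have some additional structure (e.g. every internal vertex has $\ge 2$ children, as would hold for the specific trees arising from forest covers in the application). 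The honest version of the argument establishes $|D_T|\ge \lceil n/2\rceil$ from the child-counting identity above, and if the paper truly needs $|D_T|\ge n$ it must be invoking a structural hypothesis on $T$ (such as $T$ being the tree obtained after suppressing degree-two vertices, in which case every internal vertex has $\ge 2$ children and a short parity argument bumps the bound up).

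The main obstacle I anticipate is precisely this constant-factor gap: the elementary double-counting gives $n/2$, and closing it to $n$ requires either a hypothesis on $T$ that I would need to extract from the surrounding construction (the demand matching graph $G(\mcD)$ and its forest cover), or a more clever pairing argument that reuses the "leftover" unmatched vertices $v$ across parent-child relationships. I would first check whether the lemma as used downstream only needs $\Omega(n)$ — in which case the $n/2$ bound suffices and the proof is the clean child-counting computation above — and only if the exact constant matters would I dig into the structure of trees in forest covers (where one may assume WLOG no degree-two internal vertices, since suppressing them changes $n$ by a constant factor and preserves the relevant properties) to push the constant to $1$.
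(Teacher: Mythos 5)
Your child-counting identity $\sum_{v\text{ internal}}|C_v| = n$ is exactly the right backbone, but the factor-of-two gap you worry about is not real: it comes from miscounting $|D_v|$, not from any missing structural hypothesis on $T$. Recall that the size of a demand is $|D| = \sum_{v,w} D(v,w)$, a sum over \emph{ordered} pairs, and \Cref{def:treeMatchDemand} sets $D_v(u_1,u_2) = D_v(u_2,u_1) = 1$ for each matched pair $\{u_1,u_2\}\in M_v$. So each edge of the matching contributes $2$ to $|D_v|$, giving $|D_v| = 2|M_v| = |U_v|$ rather than $|U_v|/2$. Plugging this into your computation yields
\[
|D_T| \;=\; \sum_{v\text{ internal}} |U_v| \;\geq\; \sum_{v\text{ internal}} |C_v| \;=\; n,
\]
since every non-root vertex is the child of exactly one vertex and a tree with $n$ edges has exactly $n$ non-root vertices. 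No assumption about degree-two vertices or the provenance of $T$ from the forest cover is needed, and your fallback of settling for $\Omega(n)$ is unnecessary. The rest of your reasoning (in particular the observation that $|U_v|\ge|C_v|$, with equality unless $|C_v|$ is odd) is correct and is all that is required.
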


We can construct a matching-dispersed demand from a forest cover by applying the tree matching demand to each tree and scaling down by the arboricity.
\begin{definition}[Matching-Dispersed Demand\cite{HaeuplerHT24}]\label{dfn:matchingDemand}
Given graph $G$, node-weighting $A$ and demands $\mcD = (D_1, D_2, \ldots)$, let $G(\mcD)$ be the demand matching graph (\Cref{def:demandMatching}), let $T_1, T_2, \ldots$ be the trees of a minimum size forest cover with $\alpha$ forests of $G(\mcD)$ (\Cref{def:demandMatching}) and let $D_{T_1}, D_{T_2}, \ldots$ be the corresponding tree matching demands (\Cref{def:treeMatchDemand}). Then, the matching-dispersed demand on nodes $u,v \in V$ is
\begin{align*}
    \mathrm{MD}_{\mcD, A}(u,v) := \frac{1}{4\alpha} \cdot \sum_i \sum_{u' \in \copies_{A}(u)}  \sum_{v' \in \copies_{A}(v)}D_{T_i}(u', v')
\end{align*}
\end{definition}

We begin with a simple helper lemma that observes that the matching-dispersed demand has size essentially equal to the input demands (up to the arboricity).
\begin{lemma}[\cite{HaeuplerHT24}]\label{lem:matchingDemandSize}
    Given graph $G$, node-weighting $A$ and and demands $\mcD = (D_1, D_2, \ldots)$ where $G(\mcD)$ has arboricity $\alpha$, we have that the matching-dispersed demand $\mathrm{MD}_{\mcD, A}$ satisfies $|\mathrm{MD}_{\mcD, A}| \geq \frac{1}{4 \alpha} \sum_i |D_i|$.
\end{lemma}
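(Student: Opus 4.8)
\textbf{Proof proposal for \Cref{lem:matchingDemandSize}.}

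The plan is to chain together the two size bounds we already have at our disposal: \Cref{lem:treeMatchSize}, which says each tree matching demand $D_{T_i}$ has size at least the number of edges of $T_i$, and the fact that the forest cover $T_1, T_2, \ldots$ of $G(\mcD)$ is edge-disjoint and covers every edge of $G(\mcD)$. First I would observe that by \Cref{def:demandMatching} the edge multiset of $G(\mcD)$ is exactly $\bigcup_i M_i$ where $M_i$ is the matching realizing $D_i$ on the copies, so $|E(G(\mcD))| = \sum_i |E(M_i)| = \sum_i |D_i|$ (here I am using that $D_i$ is integral, so $|D_i|$ literally counts matching edges; the general case follows by the scaling remark preceding \Cref{def:demandMatching}). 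Since the forests $T_1, T_2, \ldots$ partition $E(G(\mcD))$, we get $\sum_i |E(T_i)| = |E(G(\mcD))| = \sum_i |D_i|$.

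Next I would apply \Cref{lem:treeMatchSize} to each tree $T_i$ to get $|D_{T_i}| \ge |E(T_i)|$, and sum over $i$ to obtain $\sum_i |D_{T_i}| \ge \sum_i |E(T_i)| = \sum_i |D_i|$. Finally I would unfold \Cref{dfn:matchingDemand}: summing $\mathrm{MD}_{\mcD,A}(u,v)$ over all pairs $u,v \in V$ just re-collects, for each $i$, all the demand mass of $D_{T_i}$ — each demand pair $(u',v')$ in $D_{T_i}$ lies in exactly one pair of copy-classes $(\copies_A(u), \copies_A(v))$ — so $|\mathrm{MD}_{\mcD,A}| = \frac{1}{4\alpha}\sum_i |D_{T_i}|$. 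Combining with the previous inequality gives $|\mathrm{MD}_{\mcD,A}| \ge \frac{1}{4\alpha}\sum_i |D_i|$, as claimed.

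There is essentially no serious obstacle here; this is a bookkeeping lemma. The only point that needs a little care is making sure the ``copies'' bookkeeping in the last step is exact — i.e. that passing from the demand $D_{T_i}$ on $V(G(\mcD))$ down to a demand on $V(G)$ via the map $v' \mapsto v$ preserves total mass, which holds because that map is well-defined on every vertex of $G(\mcD)$ and we are only aggregating, never discarding, demand. One should also double-check that $G(\mcD)$ genuinely admits a forest cover with $\alpha$ forests in the sense used here, which is exactly the definition of arboricity recalled in the preliminaries, so nothing new is needed.
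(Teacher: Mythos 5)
Your proof is correct and follows essentially the same route as the paper's: observe that $|E(G(\mcD))| = \sum_i |D_i|$, apply \Cref{lem:treeMatchSize} to each tree of the edge-disjoint forest cover, and unfold \Cref{dfn:matchingDemand} to collect $\frac{1}{4\alpha}\sum_j |D_{T_j}|$. The extra bookkeeping you spell out (integrality/scaling of the $D_i$ and mass preservation under the copies-to-vertices map) is implicit in the paper's one-line argument but adds nothing new in substance.
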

\begin{proof}
    Observe that the number of edges in $G(\mcD)$ is exactly $\sum_i |D_i|$ and so summing over each tree $T_j$ in our forest cover and applying \Cref{lem:treeMatchSize} gives
    \begin{align*}
        \sum_j |D_{T_j}| \geq  \sum_i |D_i|
    \end{align*}
    Combining this with the definition of $\mathrm{MD}_{\mcD, A}$ (\Cref{dfn:matchingDemand}) gives the claim.
\end{proof}

\subsection{Sparsity Proof for Union of Sparse Vertex Moving Cut}

We now argue the key properties of the matching-dispersed demand which will allow us to argue that it can be used as a witnessing demand for $\sum_i D_i$.
\begin{lemma}[Properties of Matching-Dispersed Demand\cite{HaeuplerHT24}]\label{lem:sparseOfMatching}
    Given graph $G = (V, E)$ and node-weighting $A$, let $C_1, C_2, \ldots$ be a sequence of moving cuts where $C_i$ is $(h, s)$-length $\phi_i$-sparse in $G - \sum_{j < i} C_i$ w.r.t.\ $A$ with witnessing demands $\mcD = (D_1, D_2, \ldots)$. Then the matching dispersed demand $\mathrm{MD}_{\mcD, A}$ is:
    \begin{enumerate}
        \item\label{item:property_md1} a $2h$-length $A$-respecting demand;
        \item\label{item:property_md2} $h \cdot (s-2)$-separated by $\sum_i C_i$ and;
        \item\label{item:property_md3} of size $|\mathrm{MD}_{\mcD,A}| \geq \frac{1}{s^3 \cdot \log^3 N \cdot N^{O(1/s)}} \sum_i \frac{|C_i|}{\phi_i} $.
    \end{enumerate}
\end{lemma}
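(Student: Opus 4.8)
\textbf{Proof plan for \Cref{lem:sparseOfMatching}.}
The plan is to establish the three claimed properties of $\mathrm{MD}_{\mcD,A}$ one at a time, using the demand matching graph $G(\mcD)$ together with the arboricity bound of \Cref{lem:arbBound} and the size bound of \Cref{lem:matchingDemandSize}. The overall structure mirrors the edge-length argument in \cite{HaeuplerHT24}, but we must check that every distance estimate goes through under \emph{vertex} lengths.

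For \textbf{\Cref{item:property_md3}} (the size bound): this is essentially immediate. By \Cref{lem:arbBound}, $G(\mcD)$ has arboricity $\alpha \le s^3 \cdot \log^3 N \cdot N^{O(1/s)}$. Apply \Cref{lem:matchingDemandSize} to get $|\mathrm{MD}_{\mcD,A}| \ge \frac{1}{4\alpha}\sum_i |D_i|$. Since each $D_i$ is a witnessing demand for the $(h,s)$-length $\phi_i$-sparse cut $C_i$ in $G - \sum_{j<i}C_j$, we have $\spa_{hs}(C_i, D_i)\le\phi_i$ with $\sep_{hs}(C_i,D_i) = |D_i|$, so $|D_i| = \sep_{hs}(C_i,D_i) \ge |C_i|/\phi_i$. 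Summing and absorbing the constant $4$ into the $N^{O(1/s)}$ factor yields the claim.

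For \textbf{\Cref{item:property_md1}} ($2h$-length and $A$-respecting): The $A$-respecting part follows from the construction — each vertex $v$ has exactly $2A(v)$ copies in $G(\mcD)$, a tree-matching demand places at most one unit of demand on each copy (it is a union of matchings $M_v$), and after summing over trees of the forest cover and dividing by $4\alpha$, each original vertex $v$ carries at most $\frac{1}{4\alpha}\cdot 2A(v)\cdot(\text{number of forests}) = \frac{1}{4\alpha}\cdot 2A(v)\cdot\alpha \le A(v)$ total demand on each side. For the $2h$-length part, I would argue: if $\mathrm{MD}_{\mcD,A}(u,v)>0$ then there is an edge $\{u',v'\}$ in some tree $T_j$ between $\copies_A(u)$ and $\copies_A(v)$, which means $\{u',v'\}\in M_i$ for some $i$, i.e. $D_i(u,v)>0$. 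Since $D_i$ is an $h$-length demand in $G - \sum_{j<i}C_j$ and removing cuts only increases lengths, $\dist_G(u,v)\le\dist_{G-\sum_{j<i}C_j}(u,v)\le h$. Actually here I should be careful: a tree-matching edge $\{u',v'\}$ of $T_j$ need not itself be one of the $M_i$-edges since the forest cover $T_j$ is a subgraph of $G(\mcD)$ whose edges \emph{are} the $M_i$-edges, so in fact each edge of $T_j$ \emph{is} an $M_i$-edge for some $i$; hence $\dist_G(u,v)\le h$, and certainly $\le 2h$. The factor $2$ slack is headroom we carry for the separation argument.

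For \textbf{\Cref{item:property_md2}} ($h(s-2)$-separation by $\sum_i C_i$): this is the main obstacle and the heart of the lemma. I would fix a pair $(u,v)$ with $\mathrm{MD}_{\mcD,A}(u,v)>0$, witnessed by an edge $\{u',v'\}\in M_i$ (so $D_i(u,v)>0$) for the \emph{largest} such index $i$, and show $\dist_{G-\sum_\ell C_\ell}(u,v) > h(s-2)$. Since $C_i$ $hs$-separates $D_i$ in $G-\sum_{j<i}C_j$, we have $\dist_{G-\sum_{j\le i}C_j}(u,v) \ge hs$. The issue is that $\sum_\ell C_\ell$ adds \emph{more} cuts ($C_\ell$ for $\ell>i$), which only helps, but it also includes $C_j$ for $j<i$ which were already counted — so $G - \sum_\ell C_\ell$ has length function pointwise at least that of $G - \sum_{j\le i}C_j$, hence $\dist_{G-\sum_\ell C_\ell}(u,v)\ge hs > h(s-2)$. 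Wait — this seems too easy, so the real subtlety must be: the witnessing demands $D_i$ for later cuts live in graphs $G-\sum_{j<i}C_j$ with \emph{different} (larger) length functions, and we need a single demand that is simultaneously $2h$-length and $h(s-2)$-separated, and the \emph{same} matching-dispersed demand must work. The actual difficulty is that a pair $(u,v)$ may receive demand from several $M_i$'s, and for the separation to hold we can only rely on the structure at the \emph{last} index. Here I would invoke the $\pg$-reversal idea from the proof of \Cref{lem:arbBound}: because $(M_k,\ldots,M_1)$ is $s$-$\pg$, for the last index $i$ with an edge between $\copies_A(u),\copies_A(v)$, no short path in $G(\mcD_{>i})$ connects $u',v'$, which translates back to $\dist_{G-\sum_{j\le i}C_j}$ being large. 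I would then combine this with the monotonicity of distances under adding moving cuts (\Cref{lem:monotonicity}, \Cref{item:mono_cut_item2}/\Cref{item:mono_cut_item3} in spirit) to conclude separation by $\sum_\ell C_\ell$, with the $-2$ in $s-2$ accounting for the $2h$-length (rather than $h$-length) blowup in property~1 when applying triangle inequality. Assembling \Cref{item:property_md1}, \Cref{item:property_md2}, \Cref{item:property_md3} then immediately gives \Cref{thm:unionOfMovingCuts} with $h'=2h$, $s'=(s-2)/2$, $\phi' = s^3\log^3 N\cdot N^{O(1/s)}\cdot \frac{\sum_i|C_i|}{\sum_i|C_i|/\phi_i}$.
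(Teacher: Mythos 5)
Your \Cref{item:property_md3} argument is correct and matches the paper's. The gap is in \Cref{item:property_md1} and, fatally, in \Cref{item:property_md2}, and it stems from a single misreading of the tree-matching demand (\Cref{def:treeMatchDemand}): you assume that $\mathrm{MD}_{\mcD,A}(u,v)>0$ forces an \emph{edge} of $G(\mcD)$ between $\copies_A(u)$ and $\copies_A(v)$, i.e.\ that some $D_i(u,v)>0$. That is false. The demand $D_{T_j}$ is not supported on the edges of $T_j$; it matches \emph{children of a common parent} in the rooted tree (plus possibly the parent itself), so a demand pair $(u',v')$ of $D_{T_j}$ is joined by a path of up to \emph{two} edges of $T_j\subseteq G(\mcD)$. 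Each such edge lies in some $M_i$ and hence corresponds to an $h$-length pair of some witnessing demand; the triangle inequality then gives $2h$, which is exactly why \Cref{item:property_md1} reads ``$2h$-length'' rather than ``$h$-length''. Your parenthetical self-correction (``each edge of $T_j$ is an $M_i$-edge'') is true but answers the wrong question: the issue is not whether tree edges are matching edges, but whether demand pairs are tree edges --- they are not.

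This misreading is what makes your \Cref{item:property_md2} argument ``seem too easy'': in the one-edge case, separation by $hs$ is indeed immediate, but that is not the general case. The paper's proof handles the genuine case of a two-edge path $(u',w',v')$ with $\{u',w'\}$ from $M_i$ and $\{w',v'\}$ from $M_j$, $i\neq j$, WLOG $i<j$. Then $C_{\le i}$ $hs$-separates $u$ and $w$, so $\dist_{G-\sum_{l\le i}C_l}(u,w)\ge hs$, while $D_j$ being $h$-length in the later (longer) graph forces $\dist_{G-\sum_{l\le i}C_l}(w,v)\le h$; if $u$ and $v$ were within $h(s-2)$, the triangle inequality would put $u$ and $w$ within $hs-h$, a contradiction. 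Monotonicity under applying the remaining cuts finishes. Your proposed repair via the $s$-$\pg$ reversal is a red herring: that machinery is what proves the arboricity bound (\Cref{lem:arbBound}) feeding \Cref{item:property_md3}, not the separation. (Minor: in the $A$-respecting count, a copy can be incident to two matchings per tree --- as a child in its parent's matching and as the appended odd vertex in its own $M_v$ --- so the per-copy degree across all trees is $2\alpha$, not $\alpha$; the normalization by $4\alpha$ is chosen for exactly this reason, though your undercount still lands below $A(v)$.)
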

\begin{proof}
    We show the each property separatedly.

    \noindent{\textbf{Proof of \Cref{item:property_md1}.}} For a pair of vertices $u$ and $v$, we note $\mathrm{MD}_{\mcD, A}(u, v) > 0$ only if there is a path consisting of at most two edges between a node in $\copies_{A}(u)$ and a node in $\copies_{A}(v)$ in the demand matching graph $G(\mcD)$. 
    Furthermore, $u' \in \copies_{A}(u)$ and $v' \in \copies_{A}(v)$ is connected by an edge in $G(\mcD)$ only if there is some witness demand $D_{i}$ such that $D_{i}(u, v) > 0$.
    Since each $D_{i}$ is $h$-length, we have $\dist_{G}(u, v) \le h$, and thus by triangle inequality $\mathrm{MD}_{\mcD, A}(u, v)$ is $2h$-length in $G$.

    To see that $\mathrm{MD}_{\mcD, A}(u, v)$ is $A$-respecting, we observe that each vertex in $G(\mcD)$ is incident to at most $2\alpha$ matchings across all of the tree matching demands.
    Thus for any $u \in V$ since $|\copies_{A}(u)| = 2A(u)$ we have
    \[\sum_{u' \in \copies_{A}(u)}\sum_{j}\sum_{v}\sum_{v' \in \copies_{A}(v)} D_{T_j}(u', v') \le \sum_{u' \in \copies_{A}(u)}2\alpha \le 4\alpha\cdot A(u).\]
    Then for any $u \in V$ in matching-dispersed demand we have,
    \[\sum_{v}\mathrm{MD}_{\mcD, A}(u, v) = \sum_{v} \frac{1}{4\alpha} \sum_{j}\sum_{u' \in \copies_{A}(u)}\sum_{v' \in \copies_{A}(v)} D_{T_j}(u', v') \le A(u)\]
    A symmetric argument shows that $\sum_{v}\mathrm{MD}_{\mcD, A}(v, u) \le A(u)$, and so we have that $\mathrm{MD}_{\mcD, A}$ is $A$-respecting.
    
    \noindent{\textbf{Proof of \Cref{item:property_md2}.}} Consider an arbitrary pair of vertices $u$ and $v$ such that $\mathrm{MD}_{\mcD, A}(u,v) > 0$; it suffices to argue that $\sum_i C_i$ indeed $h(s-2)$-separates $u$ and $v$.
    Given $\mathrm{MD}_{\mcD, A}(u, v) > 0$,
    if $u' \in \copies_{A}(u)$, $v' \in \copies_{A}(v)$ and $(u', v')$ is an edge in $G(\mcD)$, then we know some $D_i$ has value between $u$ and $v$ and $C_{i}$ $hs$-separates them.
    
    Otherwise, there is a path $(u', w', v')$ in $G(\mcD)$ where $u' \in \copies_{A}(u)$, $v' \in \copies_{A}(v)$ and for some $w \in V$ we have $w' \in \copies_{A}(w)$.
    It shows that there exists some demand $D_{i}$ and $D_{j}$ that has value between $\{u,w\}$ and $\{w,v\}$ respectively.
    By definition of $G(\mcD)$ (\Cref{def:demandMatching}), each $D_i$ corresponds to a different matching in $G(\mcD)$ and so since $\{u', w'\}$ and $\{w', v'\}$ share the vertex $w'$, we may assume $i \neq j$ and without loss of generality that $i < j$. 
    Let $G_{\leq i}$ be $G$ with $\sum_{l \leq i} C_l$ applied.

    Since $D_i$ is $hs$-separated by $C_{\leq i}$, we know that
    \begin{align}
        \dist_{G_{\leq i}}(u, w) \geq hs.\label{eq:a}
    \end{align}

    On the other hand, since $D_j$ is an $h$-length demand, and $j > i$, we know that the distance between $w$ and $v$ in $G_{\leq i}$ is
    \begin{align}
        d_{G_{\leq i}}(w, v) \leq h. \label{eq:b}
    \end{align}

    Thus, it follows that $C_{\leq i}$ must $h(s-2)$ separate $u$ and $v$ since otherwise we would know that $d_{G_{\leq i}}(u,w) \leq h(s-2)$ and so combining this with \Cref{eq:b} and the triangle inequality we get $d_{G_{\leq i}}(u, w) \leq hs - h$, contradicting \Cref{eq:a}. Thus, $\sum_{i} C_i$ must $h(s-2)$ vertices $u$ and $v$.
    
    \noindent{\textbf{Proof of \Cref{item:property_md3}.}} By \Cref{lem:matchingDemandSize}, we know that $|\mathrm{MD}_{\mcD, A}| \ge \frac{1}{4\alpha} \sum_{i}|D_{i}|$ where $\alpha$ is the arboricity of $G(\mcD)$. 
    From \Cref{lem:arbBound}, we have a bound of $s^3 \cdot \log^3 N \cdot N^{O(1/s)}$ on the arboricity of $G(\mcD)$.
    Further, each $D_{i}$ is a witnessing demand for a $\phi_{i}$-sparse cut $C_{i}$ such that $|D_{i}| \ge \frac{|C_{i}|}{\phi_{i}}$,
    we conclude
    \[|\mathrm{MD}_{\mcD, A}| \ge \frac{1}{4\cdot s^3 \cdot \log^3 N \cdot N^{O(1/s)}} \sum_{i} \frac{|C_i|}{\phi_{i}}\]
\end{proof}

We conclude with the proof of \Cref{thm:unionOfMovingCuts}:
\unionOfCuts*
\begin{proof}
Let $\widehat{C}$ be the union of sparse cuts and defined as $\widehat{C} = \sum_{i}C_{i}$.
Recall that to demonstrate that $\widehat{C}$ is a $\phi'$-sparse $(h', s')$-length sparse cut, it suffices to argue that there exists an $h'$-length $A$-respecting demand $D$ that is $h's'$-separated by $\widehat{C}$ where $|D| \geq  \frac{|\widehat{C}|}{\phi'}$.

\Cref{lem:sparseOfMatching} demonstrates the existence of exactly such a demand---namely the matching dispersed demand denoted by $D$ as defined in \Cref{dfn:matchingDemand}---for $h' = 2h$, $s' = \frac{(s-2)}{2}$ and $\phi' = s^3 \cdot \log^3 N \cdot N^{O(1/s)} \cdot \frac{\sum_i |C_i|}{\sum_i |C_i| / \phi_i}$.
\end{proof}
\begin{remark}
    We note that when taking the union of sparse cuts, we scale the cut value of $\widehat{C}$ to be the multiple of $1/h's'$ such that the length increase remains the same as $\sum_{i}C_{i}$.
    This gives $|\widehat{C}| = \frac{s}{s-2}\cdot \sum_{i} |C_{i}|$.
    For simplicity of presentation, we take $|\widehat{C}| = \sum_{i}|C_{i}|$ whenever the $\frac{s}{s-2}$ factor can be hidden in the $O(\cdot)$ notation. 
\end{remark}

\subsection{Application: Demand-Size-Largest Sequence of Sparse Cuts}
\begin{definition}[\qLDSCS] Fix a graph $G$, node-weighting $A$ and parameters $h$, $s$ and $\phi$. $\LDSCS$ is the demand-size of the demand-size-largest sequence of $(h,s)$-length $\phi$-sparse moving cuts. Specifically,
    \begin{align*}
        \LDSCS(\phi, h, s)  := \sum_i A_{(h,s)}(C_i)
    \end{align*}
    where above $A_{(h,s)}(C_i)$ is computed after applying all $C_j$ for $j < i$ and $(C_1, C_2, \ldots)$ is the $(h,s)$-length $\phi$-expanding moving cut sequence maximizing $\sum_i A_{(h,s)}(C_i)$.
\end{definition}

\begin{restatable}{lemma}{LDSCSAtMostLDSC}\label{lem:LDSCSAtMostLDSC}
    Given graph $G$ and node-weighting $A$, we have that 
    \begin{align*}
    \LDSCS(\phi, h, s) \leq s^3 \cdot \log^3 n \cdot n^{O(1/s)} \cdot \LDSC(\phi', h', s')
    \end{align*}
    where $h' = 2h$, $s' = \frac{(s-2)}{2}$ and $\phi' = s^3 \cdot \log^3 n \cdot n^{O(1/s)} \cdot \phi$.
\end{restatable}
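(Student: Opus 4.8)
\textbf{Proof plan for \Cref{lem:LDSCSAtMostLDSC}.} The plan is to directly combine the main union-of-cuts theorem (\Cref{thm:unionOfMovingCuts}) with the definitions of $\LDSCS$ and $\LDSC$. Let $(C_1, C_2, \ldots, C_k)$ be the $(h,s)$-length $\phi$-sparse moving cut sequence that achieves $\LDSCS(\phi,h,s) = \sum_i A_{(h,s)}(C_i)$, where each $A_{(h,s)}(C_i)$ is measured in $G - \sum_{j<i}C_j$. By definition of a sparse sequence, each $C_i$ is $(h,s)$-length $\phi_i$-sparse in $G - \sum_{j<i}C_j$ for some $\phi_i \le \phi$. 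Moreover, since $C_i$ is $(h,s)$-length $\phi_i$-sparse, it has a witnessing $h$-length $A$-respecting demand $D_i$ fully separated by $C_i$ with $|D_i| = A_{(h,s)}(C_i) \ge |C_i|/\phi_i \ge |C_i|/\phi$. Hence $\sum_i A_{(h,s)}(C_i) = \sum_i |D_i| \ge \sum_i |C_i|/\phi$.

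Next I would apply \Cref{thm:unionOfMovingCuts} to the sequence $(C_1, \ldots, C_k)$: the union $\widehat C = \sum_i C_i$ is an $(h',s')$-length $\phi'$-sparse cut with respect to $A$, where $h' = 2h$, $s' = (s-2)/2$, and
\[
\phi' = s^3 \cdot \log^3 n \cdot n^{O(1/s)} \cdot \frac{\sum_i |C_i|}{\sum_i |C_i|/\phi_i} \le s^3 \cdot \log^3 n \cdot n^{O(1/s)} \cdot \phi,
\]
where the inequality uses $\phi_i \le \phi$ for all $i$ (so $\sum_i |C_i|/\phi_i \ge \sum_i |C_i|/\phi$). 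This matches the claimed $\phi'$ in the lemma statement. More precisely, the proof of \Cref{thm:unionOfMovingCuts} (via \Cref{lem:sparseOfMatching}) produces an explicit witnessing demand for $\widehat C$ — the matching-dispersed demand $\mathrm{MD}_{\mcD,A}$ — which is $2h$-length, $A$-respecting, $h(s-2)$-separated by $\widehat C$, and of size at least $\frac{1}{4 s^3 \log^3 N \cdot N^{O(1/s)}}\sum_i |C_i|/\phi_i$.

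Now I would lower bound $\LDSC(\phi', h', s')$. Since $\LDSC(\phi',h',s')$ is the demand-size of the demand-size-largest $(h',s')$-length $\phi'$-sparse cut, and $\widehat C$ is one such $\phi'$-sparse cut, we get $\LDSC(\phi',h',s') \ge A_{(h',s')}(\widehat C) \ge |\mathrm{MD}_{\mcD,A}|$. Combining with \Cref{item:property_md3} of \Cref{lem:sparseOfMatching},
\[
\LDSC(\phi', h', s') \ge \frac{1}{s^3 \cdot \log^3 n \cdot n^{O(1/s)}} \sum_i \frac{|C_i|}{\phi_i} \ge \frac{1}{s^3 \cdot \log^3 n \cdot n^{O(1/s)}} \sum_i \frac{|C_i|}{\phi} .
\]
Finally, recall $\LDSCS(\phi,h,s) = \sum_i |D_i| = \sum_i A_{(h,s)}(C_i)$; I need to relate this back to $\sum_i |C_i|/\phi_i$. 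Since $|D_i| \ge |C_i|/\phi_i$, the bound $\sum_i |C_i|/\phi_i \le \sum_i |D_i| = \LDSCS(\phi,h,s)$ — but note this goes the \emph{wrong} direction for a direct chain, so the argument must instead route through the demand sizes: actually $\LDSCS(\phi,h,s) = \sum_i |D_i|$, and each $D_i$ is itself an $h$-length $A$-respecting demand $hs$-separated by $C_i$ in $G-\sum_{j<i}C_j$; applying the union construction to the \emph{witnessing demands} $\mcD = (D_1,\ldots,D_k)$ directly and using \Cref{lem:matchingDemandSize} gives $|\mathrm{MD}_{\mcD,A}| \ge \frac{1}{4\alpha}\sum_i |D_i| = \frac{1}{4\alpha}\LDSCS(\phi,h,s)$ with $\alpha = s^3\log^3 n\cdot n^{O(1/s)}$ from \Cref{lem:arbBound}. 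Combining $\LDSC(\phi',h',s') \ge |\mathrm{MD}_{\mcD,A}| \ge \frac{1}{s^3\log^3 n\cdot n^{O(1/s)}}\LDSCS(\phi,h,s)$ rearranges to the claim.

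\textbf{Main obstacle.} The only subtle point is making sure the arboricity/size bookkeeping is consistent between the two ways of invoking the machinery — once through $|D_i| \ge |C_i|/\phi_i$ to get the $\phi'$ value for $\widehat C$, and once through $\sum_i|D_i| = \LDSCS$ to get the size lower bound on the matching-dispersed demand. The cleanest route is to carry the witnessing demands $\mcD = (D_1,\ldots,D_k)$ throughout: they witness the sparsity of each $C_i$ (giving the $\phi'$ of $\widehat C$ via \Cref{thm:unionOfMovingCuts}), and simultaneously $\mathrm{MD}_{\mcD,A}$ is a single demand that is $(h',s')$-separated by $\widehat C$ of size $\ge \LDSCS(\phi,h,s)/(s^3\log^3 n\cdot n^{O(1/s)})$ by \Cref{lem:matchingDemandSize} and \Cref{lem:arbBound}. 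The $n$-vs-$N$ discrepancy in the arboricity bound is harmless since $N = \poly(n)$, so $N^{O(1/s)} = n^{O(1/s)}$; I would just note this in passing.
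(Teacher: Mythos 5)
Your proposal is correct and follows essentially the same route as the paper's proof: take the optimal sparse-cut sequence with its witnessing demands, form the union $\widehat{C}$, invoke \Cref{thm:unionOfMovingCuts} to certify that $\widehat{C}$ is $(h',s')$-length $\phi'$-sparse, and use the matching-dispersed demand (via \Cref{lem:matchingDemandSize} and \Cref{lem:arbBound}) as an explicit $h'$-length, $A$-respecting, $h's'$-separated demand of size at least $\LDSCS(\phi,h,s)/(s^3\log^3 n\cdot n^{O(1/s)})$, which lower-bounds $A_{(h',s')}(\widehat{C})$ and hence $\LDSC(\phi',h',s')$. Your observation about the "wrong direction" is resolved exactly as in the paper by taking each $\phi_i$ to be the minimum sparsity of $C_i$, so that $|D_i|=|C_i|/\phi_i$ and the two bookkeeping routes coincide.
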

\begin{proof}
    Let $(C_1, C_2, \ldots)$ be a sequence of $(h,s)$-length $\phi$-sparse cuts such that $\sum_i A_{(h,s)}(C_i) = \qLDSCS(\phi, h, s)$ where $\phi_i \leq \phi$ is the minimum value for which $C_i$ is $\phi_i$-sparse. Let $D_1, D_2, \ldots$ be the demands witnessing these cuts so that for all $i$ we have
    \begin{align*}
        |C_i|/\phi_{i} = |D_i|.
    \end{align*}
    Let $\widehat{C} = \sum_i C_i$ where $|\widehat{C}| = \sum_{i} |C_{i}|$. 
    By \Cref{thm:unionOfMovingCuts} we know that $\widehat{C}$ is an $(h',s')$-length $\phi'$-sparse cut for $A$.
    What is more, we can further tighten the sparsity of $\widehat{C}$ by each $\phi_{i}$, and we have
    \[\spa_{(h',s')}(\widehat{C}, A) = s^3 \cdot \log^3 n \cdot n^{O(1/s)} \cdot \frac{\sum_{i} |C_{i}|}{\sum_{i}|C_{i}|/\phi_{i}},\]
    and so there must be some $h'$-length demand $D$ which is $h's'$-separated by $\widehat{C}$, and we further have
    \begin{align*}
        s^3 \cdot \log^3 n \cdot n^{O(1/s)} \cdot \frac{|\widehat{C}|}{\spa_{(h',s')}(\widehat{C}, A)} = |\widehat{C}|\cdot\frac{\sum_i |C_i|/\phi_i}{\sum_i |C_i|} = \sum_i |D_i|.
    \end{align*}
    In other words, the demand-size of $\widehat{C}$ multiplied by a sparsity slack is at least $\sum_i |D_i|$, as required.
\end{proof}
\subsection{Application: Demand-Size Largest Sparse Cut at Most Largest Expanders' Complement}
\begin{definition}[Largest Expander's Complement Size]\label{def:LEC}
    $\qLEC(\phi, h, s)$ is $\phi$ times the size of the complement of the largest $(h,s)$-length $\phi$-expanding subset of $A$. That is, let $\hat{A}$ be the $(h,s)$-length $\phi$-expanding node-weighting on $G$ satisfying $\hat{A} \preceq A$ with largest size and let $\bar{A} = A - \hat{A}$ be its complement. Then 
    \begin{align*}
        \qLEC(\phi, h, s) := \phi \cdot |\bar{A}|.
    \end{align*}\label{eqv: node weighting}
\end{definition}
Observe that applying our previous relations we can get a simple lower bound on the demand-size of the largest-demand-size length-constrained sparse cut.
\begin{restatable}{lemma}{lemDLSCAtMostLEC}\label{lem:DLSCAtMostLEC}
    Given graph $G$ and node-weighting $A$ and parameters $h,s,\phi$, we have that 
    \begin{align*}
        \qLDSC(\phi, h, s) \leq (1/\phi)\cdot \qLEC(\phi', h', s)
    \end{align*}
    where $\phi' = \tilde{O}(\phi \cdot s^3 \cdot n^{O(1/s)})$ and $h' = 2h$.
\end{restatable}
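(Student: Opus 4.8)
\textbf{Proof plan for \Cref{lem:DLSCAtMostLEC}.}
The plan is to chain together the two relations already established in this section with the definition of $\qLEC$. Starting from $\qLDSC(\phi,h,s)$, the first step is to observe that a single $(h,s)$-length $\phi$-sparse cut is (trivially) a sequence of length one, so $\qLDSC(\phi,h,s) \le \qLDSCS(\phi,h,s)$. Then I would apply \Cref{lem:LDSCSAtMostLDSC} to pass from $\qLDSCS$ at parameters $(\phi,h,s)$ to $\qLDSC$ at parameters $(\phi_1, h_1, s_1)$ with $h_1 = 2h$, $s_1 = (s-2)/2$, and $\phi_1 = s^3\log^3 n\cdot n^{O(1/s)}\cdot \phi = \tilde O(\phi\cdot s^3\cdot n^{O(1/s)})$, picking up a multiplicative factor of $s^3\log^3 n\cdot n^{O(1/s)}$ in the demand-size. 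The claimed final parameters are $h' = 2h$ (matching $h_1$) and $\phi' = \tilde O(\phi\cdot s^3\cdot n^{O(1/s)})$ (matching $\phi_1$, up to absorbing constants), with the $s$ in the statement understood as the already-shrunk length slack $s_1$ (or, symmetrically, one just tracks the slack degradation $s \mapsto (s-2)/2$ as is standard throughout this section).

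The heart of the argument is the remaining inequality $\qLDSC(\phi',h',s') \le (1/\phi')\cdot\qLEC(\phi',h',s')$, i.e.\ that the demand-size of the demand-size-largest sparse cut is at most the size of the complement of the largest expanding sub-node-weighting. First I would take $C$ to be the demand-size-largest $(h',s')$-length $\phi'$-sparse cut, with witnessing $h'$-length $A$-respecting demand $D$ that is $(h'\cdot s')$-separated by $C$ and has $|D| = A_{(h',s')}(C) = \qLDSC(\phi',h',s')$. The plan is to build from $D$ a node-weighting $\bar A \preceq A$ that ``carries'' this demand: set $\bar A(v) = \min\{A(v), D(v,\cdot)+D(\cdot,v)\}$ (up to the usual factor-$2$ bookkeeping for the two endpoints of each demand unit), so that $D$ is $\bar A$-respecting and $|\bar A| \le |D|$ up to a constant, while $\hat A := A - \bar A$ must be $(h',s')$-length $\phi'$-expanding — because any $\hat A$-respecting $h'$-length demand avoids every vertex touched by $D$, hence cannot be $(h'\cdot s')$-separated by $C$ in a way that would contradict $\hat A$'s conductance lower bound; more precisely, if $\hat A$ were not expanding it would admit a sparse cut, and combining it with $C$ via \Cref{thm:unionOfMovingCuts} would produce a cut of larger demand-size than $C$, contradicting maximality of $C$. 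Then by the definition of $\qLEC$ (the largest expanding sub-weighting is at least as large as $\hat A$, so its complement is at most $\bar A$ in size), we get $\qLEC(\phi',h',s') = \phi'\cdot|\text{complement}| \le \phi'\cdot |\bar A| \le \phi'\cdot O(|D|)$, which rearranges to the claim after adjusting constants into the $\tilde O(\cdot)$.

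The main obstacle I expect is the second step: making precise the maximality-of-$C$ argument that forces $\hat A = A - \bar A$ to be expanding. The subtlety is that removing the ``support'' of $D$ from $A$ does not literally make the leftover demand disjoint from $C$'s action — one has to argue at the level of conductance, not just supports. The cleanest route is contrapositive via union of cuts: if $\hat A$ admits an $(h_0,s_0)$-length $\phi_0$-sparse cut $C'$ with a positive-size witnessing demand, then $(C, C')$ is a sparse sequence, and \Cref{thm:unionOfMovingCuts} plus \Cref{lem:sparseOfMatching} give a single cut whose demand-size strictly exceeds $|D|$ (the matching-dispersed demand of the two witnesses has size $\gtrsim |D| + |D'|/\text{poly}$), at parameters only a constant-factor worse — which is exactly the slack we have already budgeted into $h' = 2h$ and $\phi' = \tilde O(\phi\cdot s^3\cdot n^{O(1/s)})$. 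Getting all these constant-and-polylog slack factors to line up consistently with the statement (in particular ensuring the $h'=2h$ is tight and the $n^{O(1/s)}$ absorbs everything) is the bookkeeping I would be most careful about; everything else is a direct substitution into the lemmas proved earlier in the section.
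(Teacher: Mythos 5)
There is a genuine gap, and it sits at what you call the heart of the argument: the final comparison with $\qLEC$ points in the wrong direction. Recall $\qLEC(\phi',h',s') = \phi'\cdot|\bar A^*|$ where $\bar A^*$ is the complement of the \emph{largest} expanding sub-node-weighting, so $|\bar A^*|$ is as \emph{small} as possible. To prove $\qLDSC \le (1/\phi')\cdot\qLEC$ you must therefore show that \emph{every} expanding $\hat A \preceq A$ has complement of size at least (roughly) $|D|$, i.e.\ you need a \emph{lower} bound on $\qLEC$. Your construction does the opposite: by exhibiting one specific expanding $\hat A = A - \bar A$ with $|\bar A| = O(|D|)$, you conclude $\qLEC \le \phi'\cdot O(|D|)$, which rearranges to $\qLDSC \ge \Omega((1/\phi')\qLEC)$ --- the converse of the lemma. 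The paper's \Cref{lem:LCAtMostLEC} handles the correct direction with the projected-down demand (\Cref{def:projDemand,lem:projDemProps}): take an \emph{arbitrary} expanding $\hat A$, project the witnessing demand $D$ onto it, and use the fact that no cut can be sparse against an expanding node-weighting to force $|D| - 2|\bar A| \le |D|/3$, hence $|\bar A| \ge |D|/3$. The overall chain in the paper is $\phi\cdot\qLDSC \le \qLWSC \le \qLC \le \qLEC$ via \Cref{lem:LWSCAtMostLC,lem:LCAtMostLEC}, none of which your proposal invokes.

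Separately, even as a statement about some expanding subset, the claim that $\hat A = A - \supp(D)$ is $(h',s')$-length $\phi'$-expanding is unsupported, and your fallback contradiction does not close it. A single sparse cut's witnessing demand need not touch all the non-expanding mass of $A$; there can be further sparse cuts with disjoint witnesses (this is exactly why the paper develops $\qLDSCS$ and \Cref{thm:unionOfMovingCuts}). Your proposed contradiction --- combine a hypothetical sparse cut $C'$ for $\hat A$ with $C$ and appeal to the maximality of $C$ --- fails because \Cref{thm:unionOfMovingCuts} only certifies the union as sparse at \emph{degraded} parameters $(2h', (s'-2)/2, \phi'\cdot s^3 n^{O(1/s)}\log^3 n)$, whereas $C$ is only maximal among cuts at the \emph{original} parameters $(h',s',\phi')$; a larger demand-size at worse parameters contradicts nothing. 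The opening detour $\qLDSC \le \qLDSCS \le \alpha\cdot\qLDSC(\phi_1,h_1,s_1)$ is arithmetically consistent but does no work toward fixing either issue.
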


\subsubsection{Largest Weighted Sparse Cut Sequence At Most Largest Cut}
\begin{definition}[Largest Weighted Sparse Cut Sequence Size]\label{def:LWSC}
    $\qLWSC(\phi, h, s)$ is the largest weighted size of a sparse cut sequence, namely
    \begin{align*}
        \qLWSC(\phi, h, s) := \sum_i \frac{\phi}{\spa_{(h,s)}(C_i,A)} \cdot |C_i|
    \end{align*}
    where $(C_1, C_2, \ldots)$ is the $(h,s)$-length $\phi$-expanding moving cut sequence maximizing $\sum_i \frac{\phi}{\spa_{(h,s)}(C_i,A)} \cdot |C_i|$ and each $\spa_{(h,s)}(C_i, A)$ is computed after applying $C_j$ for $j <i$.
\end{definition}

\begin{definition}[Largest Sparse Cut Size]\label{def:LC}
    $\qLC(\phi, h, s)$ is the size of the largest $(h,s)$-length $\phi$-sparse cut in $G$ w.r.t.\ $A$. That is 
    \begin{align*}
        \qLC(\phi, h, s) := |C_0|
    \end{align*}
    where $C_0$ is the moving cut of largest size in the set $\{C : \spa_{(h,s)}(C,A) \leq \phi \}$.
\end{definition}

\begin{restatable}{lemma}{lemLWSCAtMostLC}\label{lem:LWSCAtMostLC}
Given graph $G$ and node-weighting $A$, we have that 
\begin{align*}
\qLWSC(\phi, h, s) \leq \qLC(\phi', h', s')
\end{align*}
where $\phi' = \phi \cdot s^3 \cdot \log^3 n \cdot n^{O(1/s)}$, $h' = 2h$ and $s' = \frac{(s-2)}{2}$.
\end{restatable}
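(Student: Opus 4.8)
\textbf{Proof proposal for \Cref{lem:LWSCAtMostLC}.}

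The plan is to follow exactly the template established by \Cref{lem:LDSCSAtMostLDSC} and \Cref{lem:DLSCAtMostLEC}, namely: take the optimal sparse cut \emph{sequence} witnessing $\qLWSC(\phi,h,s)$, apply \Cref{thm:unionOfMovingCuts} to collapse the whole sequence into a single $(h',s')$-length moving cut $\widehat C$, and then argue that $\widehat C$ is $\phi'$-sparse with size at least $\qLWSC(\phi,h,s)$, so that it is a valid competitor in the maximization defining $\qLC(\phi',h',s')$. Concretely, first I would let $(C_1,C_2,\ldots)$ be the $(h,s)$-length $\phi$-sparse moving cut sequence attaining the maximum in \Cref{def:LWSC}, and for each $i$ let $\phi_i = \spa_{(h,s)}(C_i,A)\le\phi$ be its exact sparsity (computed after applying $C_1,\ldots,C_{i-1}$), with $D_i$ the witnessing demand so that $|D_i| = |C_i|/\phi_i$. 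By definition of $\qLWSC$ we then have $\qLWSC(\phi,h,s) = \sum_i (\phi/\phi_i)\cdot|C_i|$.

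Next I would set $\widehat C = \sum_i C_i$ with $|\widehat C| = \sum_i |C_i|$ (absorbing the harmless $\frac{s}{s-2}$ rescaling factor into the $O(\cdot)$ in $\phi'$, exactly as in the \textbf{Remark} after \Cref{thm:unionOfMovingCuts}). Applying \Cref{thm:unionOfMovingCuts} with the tighter per-cut sparsities $\phi_i$ in place of a uniform $\phi_i$, we obtain that $\widehat C$ is an $(h',s')$-length $\phi'$-sparse cut with $h'=2h$, $s'=(s-2)/2$, and
\[
\spa_{(h',s')}(\widehat C, A) \le s^3\cdot\log^3 n\cdot n^{O(1/s)}\cdot\frac{\sum_i |C_i|}{\sum_i |C_i|/\phi_i}.
\]
Since $\phi' = \phi\cdot s^3\cdot\log^3 n\cdot n^{O(1/s)}$, the key inequality to verify is that this sparsity bound is at most $\phi'$, i.e. that $\frac{\sum_i |C_i|}{\sum_i |C_i|/\phi_i}\le\phi$; but this is immediate because each $\phi_i\le\phi$, hence $\sum_i|C_i|/\phi_i\ge\sum_i|C_i|/\phi$, giving the ratio $\le\phi$. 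Therefore $\widehat C\in\{C:\spa_{(h',s')}(C,A)\le\phi'\}$, so by definition of $\qLC$ (\Cref{def:LC}) we get $\qLC(\phi',h',s')\ge|\widehat C| = \sum_i|C_i| \ge \sum_i(\phi/\phi_i)|C_i| = \qLWSC(\phi,h,s)$, where the middle inequality again uses $\phi/\phi_i\le 1$. This is precisely the claimed bound.

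The only subtle points — and where I'd spend the most care — are bookkeeping ones rather than conceptual: (i) making sure the ``sequence'' in \Cref{def:LWSC} and the ``sequence of moving cuts'' in \Cref{dfn:movingCutSequence}/\Cref{thm:unionOfMovingCuts} refer to the same object (they do, since in both cases $C_i$'s sparsity is measured in $G-\sum_{j<i}C_j$), so that \Cref{thm:unionOfMovingCuts} applies verbatim; and (ii) confirming that the $\frac{s}{s-2}$ cut-value rescaling needed to keep $\widehat C$'s cut values a multiple of $1/(h's')$ (per the Remark) does not break the size comparison — it multiplies $|\widehat C|$ by $\frac{s}{s-2}\ge 1$, which only helps the direction we need, and in any case is swallowed by the $n^{O(1/s)}$ slack in $\phi'$. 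No genuine obstacle is expected: this lemma is the direct ``$\qLWSC \le \qLC$'' analogue of the already-proved $\qLDSCS \le \qLDSC$ and $\qLDSC \le \qLEC$ relations, and \Cref{thm:unionOfMovingCuts} does all the heavy lifting.
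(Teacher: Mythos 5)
There is a genuine gap, and it is a sign error at the crux of your argument. You write that the final chain closes because $\phi/\phi_i \le 1$; but by definition $\phi_i = \spa_{(h,s)}(C_i,A) \le \phi$, so $\phi/\phi_i \ge 1$ and hence
\[
\sum_i |C_i| \;\le\; \sum_i \frac{\phi}{\phi_i}\,|C_i| \;=\; \qLWSC(\phi,h,s),
\]
which is the opposite of the inequality you need. In other words, $|\widehat{C}| = \sum_i |C_i|$ can be strictly \emph{smaller} than $\qLWSC(\phi,h,s)$ — precisely when some $C_i$ is much sparser than $\phi$, its weight $\phi/\phi_i$ is large — so exhibiting $\widehat{C}$ itself as a competitor for $\qLC(\phi',h',s')$ does not suffice. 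Your verification that $\widehat{C}$ is $(h',s')$-length $\phi'$-sparse is fine and matches the paper, but the size comparison fails.

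The paper closes this gap with one extra step you are missing: it \emph{pads} the union cut. Starting from $\widehat{C}$, it arbitrarily increases cut values to obtain a moving cut $C'$ with $|C'| = \frac{\phi'}{\spa_{(h',s')}(\widehat{C},A)}\cdot|\widehat{C}|$, which by the tightened sparsity bound $\spa_{(h',s')}(\widehat{C},A) \le s^3\log^3 n\cdot n^{O(1/s)}\cdot\frac{\sum_i|C_i|}{\sum_i|C_i|/\phi_i}$ equals (at least) $\phi\sum_i|C_i|/\phi_i = \qLWSC(\phi,h,s)$. The padded $C'$ still $h's'$-separates the matching-dispersed witnessing demand $D$ of $\widehat{C}$ (adding cut value only increases distances), so $\spa_{(h',s')}(C',A)\le|C'|/|D|\le\phi'$, and $C'$ is a valid competitor for $\qLC(\phi',h',s')$ of the required size. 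You should incorporate this padding step; without it the lemma does not follow from \Cref{thm:unionOfMovingCuts} alone.
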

\begin{proof}
    Let $(C_{1}, C_{2},\ldots)$ be the $(h,s)$-length $\phi$-sparse moving cut sequence w.r.t $A$ in $G$ maximizing the weighted cut sequence size as defined in \Cref{def:LWSC}. 
    Similarily take $\widehat{C} = \sum_{i} C_{i}$ where $|\widehat{C}| = \sum_{i} |C_{i}|$.
    By \Cref{thm:unionOfMovingCuts} we know that $\widehat{C}$ is an $(h',s')$-length $\phi'$-sparse cut for $A$.
    Naturally, the weighted cut sequence size is related to the size of the union of cuts.
    We note that,
    \begin{align*}
        \sum_i \frac{\phi}{\spa_{(h,s)}(C_i,A)} \cdot |C_i| & = \phi \cdot \sum_{i} |C_{i}|/\spa_{(h,s)}(C_{i}, A)\\
        & = \phi \cdot \frac{\sum_{i} |C_{i}|/\spa_{(h,s)}(C_{i}, A)}{\sum_{i} |C_{i}|} \cdot |\widehat{C}|\\
        & = \frac{\phi'}{\spa_{(h', s')}(\widehat{C}, A)}\cdot |\widehat{C}| .
    \end{align*}
    Then it suffices to argue that there exists some $(h', s')$-length $\phi'$-sparse moving cut $C'$ can achieve the above size.
    The basic idea to get $C'$ is to simply add arbitrary length increases to $\widehat{C}$ such that the size increases while $C'$ remains to be $\phi'$-sparse.
    Namely, based on $\widehat{C}$, we arbitrarily increase the cut value to get an $(h',s')$-length moving cut $C'$ where $|C'| = \frac{\phi'}{\spa_{(h', s')}(\widehat{C}, A)}\cdot |\widehat{C}|$. 
    We note that $C'$ also $h's'$-separates the witness demand $D$ of cut $\widehat{C}$, thus
    \[\spa_{(h', s')}(C', A) \le \frac{|C'|}{|D|} = \frac{\phi' \cdot |\widehat{C}|}{|D| \cdot \spa_{(h', s')}(\widehat{C}, A)} = \phi'\].
    This concludes the lemma.
\end{proof}

\subsubsection{Largest Cut At Most Largest Expander's Complement}
\begin{restatable}{lemma}{lemLCAtMostLEC}\label{lem:LCAtMostLEC}
Given graph $G$ and node-weighting $A$ and parameters $h,s,\phi$, we have that 
\begin{align*}
\qLC(\phi, h, s) \leq \qLEC(\phi', h, s)  
\end{align*}
where $\phi' = 3 \phi$.
\end{restatable}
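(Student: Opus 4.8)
\textbf{Proof plan for \Cref{lem:LCAtMostLEC}.}

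The statement asserts that $\qLC(\phi,h,s) \le \qLEC(3\phi,h,s)$. Unwinding the definitions: the left side is the size $|C_0|$ of the largest $(h,s)$-length $\phi$-sparse cut, and the right side is $3\phi\cdot|\bar A|$ where $\bar A = A - \hat A$ is the complement of the largest $(h,s)$-length $3\phi$-expanding node-weighting $\hat A \preceq A$. So the plan is to take the witness-demand $D_0$ associated with $C_0$ (an $h$-length, $A$-respecting demand with $\sep_{hs}(C_0,D_0) = |D_0|$ and $|C_0|/|D_0| = \spa_{hs}(C_0,D_0) \le \phi$), and argue that the ``endpoints'' of $D_0$ force a large non-expanding chunk of $A$, hence $|\bar A|$ must be at least $|D_0|/3 \ge |C_0|/(3\phi)$.

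The key step is a node-weighting decomposition argument. I would define, from the witness demand $D_0$, a node-weighting $A_{D_0}$ capturing the demand mass at each vertex, i.e.\ roughly $A_{D_0}(v) = D_0(v,\cdot) + D_0(\cdot,v)$ (or perhaps $\max$ of the two, to match the $A$-respecting convention; a constant factor is absorbed by the slack $3$). Since $D_0$ is $A$-respecting we have $A_{D_0} \preceq A$ (up to a factor $2$, giving one of the three factors of $\phi$). The crucial claim is that no $(h,s)$-length $3\phi$-expanding sub-node-weighting $\hat A \preceq A$ can ``contain'' too much of $A_{D_0}$: if $\hat A$ overlapped $A_{D_0}$ substantially, then restricting $D_0$ to the support of $\hat A$ would yield an $h$-length $\hat A$-respecting demand that is $hs$-separated by $C_0$ with sparsity still at most $\phi < 3\phi$ — contradicting that $\hat A$ is $3\phi$-expanding (since expansion means \emph{every} $h$-length $\hat A$-respecting demand that is $hs$-separated has the cut sparsity $> 3\phi$, equivalently conductance $\ge 3\phi$, which bounds the ratio $|C|/\sep_{hs}(C,D)$ from below). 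Therefore $\hat A$ must be ``disjoint enough'' from the demand endpoints, which forces $\bar A = A - \hat A$ to absorb a $\ge 1/3$ fraction (in size) of $A_{D_0}$, and hence $|\bar A| \ge |D_0|/3 \ge |C_0|/(3\phi)$, which is exactly $\qLC(\phi,h,s) \le \qLEC(3\phi,h,s)$.

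The main obstacle I anticipate is making precise the step ``restricting $D_0$ to $\supp(\hat A)$ still gives a demand with the right separation and sparsity.'' One has to be careful: restricting a demand can break the $hs$-separation property only in the trivial direction (a sub-demand of an $hs$-separated demand is still $hs$-separated, so that's fine), and the sparsity $|C_0|/|D_0'|$ of the restricted demand $D_0'$ only \emph{increases}, so it could exceed $\phi$ — but the expansion contradiction only needs it to be $\le 3\phi$, which we get provided the restricted demand retains at least a $1/3$ fraction of $|D_0|$. So the real content is a counting/pigeonhole step: partition $V$ into the support of $\hat A$ and its complement, split the demand mass of $D_0$ accordingly, and observe that if \emph{more} than $2|D_0|/3$ of the mass lands entirely within $\supp(\hat A)$ (with both endpoints inside), we get a contradiction; hence at least $|D_0|/3$ of the mass has an endpoint in $\supp(\bar A)$, forcing $|\bar A|$ large. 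The factor bookkeeping — one factor of $2$ from $A$-respecting vs.\ demand mass, one factor from the pigeonhole split, landing on a total slack of $3$ — is routine once the logical skeleton is set up, and I would verify the constants only at the end. I would also double-check against the intended use in \Cref{lem:DLSCAtMostLEC} (via the chain $\qLDSC \to \qLWSC \to \qLC \to \qLEC$) that this constant-factor sparsity loss is acceptable there, which it is since that lemma already allows an $n^{O(1/s)}$ slack.
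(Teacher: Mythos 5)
Your overall strategy is the same as the paper's: take the witness demand $D$ of the largest sparse cut $C_0$, argue that a large sub-demand of $D$ must be $\hat A$-respecting, and derive from the expansion of $\hat A$ that $|\bar A|\ge |D|/3\ge |C_0|/(3\phi)$. The paper implements this via the ``projected down demand'' $D^{\downharpoonright\hat A}$ (\Cref{def:projDemand}, \Cref{lem:projDemProps}): at each vertex $v$ one removes up to $\bar A(v)$ units of outgoing and up to $\bar A(v)$ units of incoming demand, yielding an $\hat A$-respecting sub-demand of size at least $|D|-2|\bar A|$ that is still $hs$-separated by $C_0$; combining $3\phi\le \spa_{hs}(C_0,D^{\downharpoonright\hat A})\le |C_0|/(|D|-2|\bar A|)$ with $|D|\ge |C_0|/\phi$ gives $|C_0|\le 3\phi|\bar A|$.

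The one step in your sketch that would fail as written is the pigeonhole based on $\supp(\hat A)$ versus its complement. Node-weightings are fractional, so $\hat A\preceq A$ can satisfy $0<\hat A(v)\ll A(v)$ at every vertex while $\supp(\hat A)=\supp(A)$; a demand pair with both endpoints in $\supp(\hat A)$ need not contribute to any $\hat A$-respecting sub-demand, so ``more than $2|D|/3$ of the mass lands entirely within $\supp(\hat A)$'' yields no contradiction. The correct granularity is pointwise: you must subtract demand mass at each vertex $v$ in proportion to $\bar A(v)=A(v)-\hat A(v)$, which is exactly what the projected-down demand does and which directly produces the loss of $2|\bar A|$ (one $|\bar A|$ for sources, one for sinks). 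Once you replace the support-based split with this pointwise projection, your constant accounting ($|\bar A|\ge |D|/3$, hence $|C_0|\le \phi|D|\le 3\phi|\bar A|$) goes through and matches the paper.
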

Intuitively, if there exists a large expanding sub-node-weighting, then any sparse cut would have relatively small size, otherwise it contradicts the fact that no sparse cut can cut too much into an expanding sub-node-weighting.
To formally show the proof of above lemma, we first formalize the idea with projected down demand\cite{HaeuplerHT24}.
\paragraph{Projected Down Demands.}
\begin{definition}[Projected Down Demand\cite{HaeuplerHT24}]\label{def:projDemand}
Suppose we are given graph $G$, node-weighting $A$, $A$-respecting demand $D$ and $\hat{A} \preceq A$ where $\bar{A} := A - \hat{A}$ is the complement of $\hat{A}$. Then, let $D^+$ be any demand such that $\sum_v D^+(u, v) = \min(\bar{A}(u), \sum_v D(u,v))$ for every $u$ and  $D^+ \preceq D$. Symmetrically, let $D^-$ be any demand such that $\sum_v D(v,u) = \min(\bar{A}(u), \sum_v D(v,u))$ and $D^- \preceq D$. Then we define the demand $D$ projected down to $\hat{A}$ on $(u,v)$ as
\begin{align*}
    D^{\downharpoonright \hat{A}}(u,v) := \max(0, D(u,v) - D^+(u,v) - D^-(u,v)).
\end{align*}
\end{definition}
The following establishes the basic properties of $D^{\downharpoonright \hat{A}}$.
\begin{lemma}\label{lem:projDemProps} Given graph $G$, node-weighting $A$, $\hat{A} \preceq A$ where $\bar{A} := A - \hat{A}$, we have that $D^{\downharpoonright \hat{A}}$ is $\hat{A}$-respecting, $|D^{\downharpoonright \hat{A}}| \geq |D| - 2 |\bar{A}|$ and $D^{\downharpoonright \hat{A}} \preceq D$.
\end{lemma}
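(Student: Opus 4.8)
\textbf{Proof plan for Lemma~\ref{lem:projDemProps}.}

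The plan is to verify the three claimed properties directly from the construction in Definition~\ref{def:projDemand}, handling each separately. First I would establish $D^{\downharpoonright \hat A}\preceq D$: this is immediate, since $D^{\downharpoonright\hat A}(u,v) = \max(0, D(u,v) - D^+(u,v) - D^-(u,v))$ and both $D^+,D^-$ are nonnegative, so $D^{\downharpoonright\hat A}(u,v)\le D(u,v)$ pointwise.

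Next I would show $D^{\downharpoonright\hat A}$ is $\hat A$-respecting, i.e.\ for every vertex $u$, $\sum_v D^{\downharpoonright\hat A}(u,v)\le \hat A(u)$ and symmetrically $\sum_v D^{\downharpoonright\hat A}(v,u)\le \hat A(u)$. I would only argue the out-degree bound; the in-degree one is symmetric. Fix $u$. Since $D$ is $A$-respecting, $\sum_v D(u,v)\le A(u) = \hat A(u)+\bar A(u)$. By the defining property of $D^+$, $\sum_v D^+(u,v) = \min(\bar A(u), \sum_v D(u,v))$. There are two cases. If $\sum_v D(u,v)\le \bar A(u)$, then $\sum_v D^+(u,v) = \sum_v D(u,v)$, hence $\sum_v (D(u,v) - D^+(u,v)) = 0$, and since $D^{\downharpoonright\hat A}(u,v)\le \max(0, D(u,v)-D^+(u,v))$ we get $\sum_v D^{\downharpoonright\hat A}(u,v) = 0\le \hat A(u)$. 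Otherwise $\sum_v D^+(u,v) = \bar A(u)$, so $\sum_v(D(u,v)-D^+(u,v)) = \sum_v D(u,v) - \bar A(u)\le A(u)-\bar A(u) = \hat A(u)$; dropping the further subtraction of $D^-$ and the truncation at $0$ only decreases the sum, so $\sum_v D^{\downharpoonright\hat A}(u,v)\le \hat A(u)$.

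Finally I would prove the size bound $|D^{\downharpoonright\hat A}|\ge |D| - 2|\bar A|$. Here $|D| = \sum_{u,v}D(u,v)$, and from the definition $D(u,v) - D^{\downharpoonright\hat A}(u,v)\le D^+(u,v)+D^-(u,v)$ pointwise (when the max is $0$ the left side is at most $D(u,v)\le D^+(u,v)+D^-(u,v)+D^{\downharpoonright\hat A}(u,v)$ trivially; when it is positive it is an equality). Summing over all pairs, $|D| - |D^{\downharpoonright\hat A}|\le |D^+| + |D^-|$. Now $|D^+| = \sum_u \sum_v D^+(u,v) = \sum_u \min(\bar A(u), \sum_v D(u,v))\le \sum_u \bar A(u) = |\bar A|$, and likewise $|D^-|\le |\bar A|$, giving $|D|-|D^{\downharpoonright\hat A}|\le 2|\bar A|$, i.e.\ $|D^{\downharpoonright\hat A}|\ge |D|-2|\bar A|$. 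None of these steps is a real obstacle; the only mild subtlety is being careful that the $\max(0,\cdot)$ truncation is handled correctly in the pointwise inequalities, which is why I separate the cases above rather than treating the truncation as an equality throughout.
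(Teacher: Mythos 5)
Your proposal is correct and follows essentially the same route as the paper: the same case analysis on the minimizer of $\min(\bar A(u),\sum_v D(u,v))$ for the $\hat A$-respecting property, the same pointwise bound $|D^{\downharpoonright \hat A}|\ge |D|-|D^+|-|D^-|$ combined with $|D^+|,|D^-|\le|\bar A|$ for the size bound, and the same trivial observation for $D^{\downharpoonright\hat A}\preceq D$. If anything, your explicit handling of the $\max(0,\cdot)$ truncation is slightly more careful than the paper's write-up.
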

\begin{proof}
We do a case analysis by the 
minimizer of $\min(\bar{A}(u), \sum_v D(u,v))$.
\begin{itemize}
    \item If $\sum_v D^+(u,v) = \bar{A}(u)$ (where $D^+$ is defined in \Cref{def:projDemand}) then by the fact that $D$ is $A$-respecting we have
    \begin{align*}
        \sum_v D^{\downharpoonright \hat{A}}(u,v) &\leq \sum_v D(u,v) - \sum_{v} D^+(u,v) \\
        &\leq A(v) - \bar{A}(u) \\
        & = \hat{A}(u).
    \end{align*}
    \item On the other hand, if $\sum_v D^+(u,v) = \sum_v D(u,v)$ then by the non-negativity of node-weightings we have \begin{align*}
        \sum_v D^{\downharpoonright \hat{A}}(u,v) &\leq \sum_v D(u,v) - \sum_{v} D^+(u,v) \\
        &= 0 \\
        & = \hat{A}(u).
    \end{align*}
\end{itemize}

In either case we have $\sum_v D^{\downharpoonright \hat{A}}(u,v) \leq \hat{A}(u)$. A symmetric argument using  $D^-$ (where $D^-$ is defined in \Cref{def:projDemand}) shows that $\sum_v D^{\downharpoonright \hat{A}}(v,u) \leq \hat{A}(u)$ and so $D^{\downharpoonright \hat{A}}(u,v)$ is $\hat{A}$-respecting.

To see that $|D^{\downharpoonright \hat{A}}| \geq |D| - 2 |\bar{A}|$, observe that by definition, $|D^{\downharpoonright \hat{A}}| \geq |D|-|D^+| - |D^-|$. But, also by definition, $|D^+|, |D^-| \leq |\bar{A}|$, giving the claim. Lastly, observe that $D^{\downharpoonright \hat{A}} \preceq D$ trivially by construction.
\end{proof}
We can now formally prove the main result of this section.
\lemLCAtMostLEC*
\begin{proof}
    Let $C_0$ be the $(h,s)$-length $\phi$-sparse cut of largest size w.r.t.\ $A$ in $G$ and let $\bar{A}$ be the complement of the largest $(h,s)$-length $\phi'$-expanding subset $\hat{A} \preceq A$ as in \Cref{def:LEC}.

    Let $D$ be the demand that witnesses the $(h,s)$-length $\phi$-sparsity of $C_0$; that is, it is the minimizing $A$-respecting demand in \Cref{def:sparsity}. We may assume, without loss of generality, that $C_0$ $hs$-separates all of $D$; that is, $\sep_{hs}(C_0, D) = |D|$. Let $D^{\downharpoonright \hat{A}}$ be the projected down demand (as in \Cref{def:projDemand}). Recall that by \Cref{lem:projDemProps} we know that $D^{\downharpoonright \hat{A}}$ is $\hat{A}$-respecting, $|D^{\downharpoonright \hat{A}}| \geq |D| - 2 |\bar{A}|$ and $D^{\downharpoonright \hat{A}} \preceq D$.
    
    However, since $C_0$ $hs$-separates all of $D$ and $D^{\downharpoonright \hat{A}} \preceq D$ we know that $C_0$ must $hs$-separate all of $D^{\downharpoonright \hat{A}}$ and so applying this and $|D^{\downharpoonright \hat{A}}| \geq |D| - 2 |\bar{A}|$ we have
    \begin{align}\label{eq:ay}
        \spa_{s \cdot h}(C_0, D^{\downharpoonright \hat{A}}) = \frac{|C_0|}{\sep(C_0, D^{\downharpoonright \hat{A}})} = \frac{|C_0|}{|D^{\downharpoonright \hat{A}}|}\leq  \frac{|C_0|}{|D| - 2 |\bar{A}|}
    \end{align}
    where, as a reminder, $\spa$ is defined in \Cref{dfn:CDSparse} and \Cref{def:sparsity}.
    
    On the other hand, since $D^{\downharpoonright \hat{A}}$ is $\hat{A}$-respecting and $\hat{A}$ is $(h,s)$-length $\phi'$-expanding by definition, we know that no cut can be too sparse w.r.t.\ $\hat{A}$ and, in particular, we know that
    \begin{align}\label{eq:by}
        3\phi = \phi' \leq \spa_{(h,s)}(C_0, \hat{A}) \leq \spa_{s
        \cdot h}(C_0, D^{\downharpoonright \hat{A}})
    \end{align}
    
    Combining \Cref{eq:ay} and \Cref{eq:by} and solving for $\phi' \cdot |\bar{A}|$ we have
    \begin{align*}
        \frac{3\phi \cdot |D| - |C_0|}{2} \leq \phi' \cdot |\bar{A}|.
    \end{align*}
    
    However, recall that $C_0$ is an $(h,s)$-length $\phi$-sparse cut witnessed by $D$ and, in particular, this means that $\frac{1}{\phi} |C_0|\geq |D|$. Applying this we conclude that 
    \begin{align*}
        |C_0| \leq \phi' \cdot |\bar{A}|.
    \end{align*}
    as required.
\end{proof}

\subsubsection{Proof of \Cref{lem:DLSCAtMostLEC}.}
With \Cref{lem:LWSCAtMostLC,lem:LCAtMostLEC} as shown above, it is rather straight forward to prove \Cref{lem:DLSCAtMostLEC}.
\lemDLSCAtMostLEC*
\begin{proof}
    Let $C$ be the $(h,s)$-length $\phi$-sparse cut of largest demand-size. 
    Observe that $C$ is also a candidate for the largest weighted sparse cut sequence, and thus
    \[\frac{\phi}{\spa_{(h,s)}(C,A)}\cdot |C| = \phi \cdot \qLDSC(\phi, h, s) \leq \qLWSC(\phi, h, s)\]
    Further, by \Cref{lem:LWSCAtMostLC} we have 
    \begin{align*}
        \qLWSC(\phi, h, s) \leq \qLC(\phi', h', s')
    \end{align*}
    where $\phi' = \tilde{O}(\phi \cdot s^3 \cdot n^{O(1/s)})$, $h' = 2h$ and $s' = (s-2)/2$. Lastly, by \Cref{lem:LCAtMostLEC} we have 
    \begin{align*}
        \qLC(\phi', h', s') \leq \qLEC(3\phi', h', s').
    \end{align*}
    Combining the above and observing that $\qLEC(3\phi', h', s') \leq \qLEC(3\phi', h', s)$ gives the lemma.
\end{proof}

\section{Approximating Demand-Size-Largest Sparse Cuts (DLSC)}\label{sec:apxDLSC}

In this section, we will prove the following lemma. 
\begin{restatable}{lemma}{ApxSparsecut}\label{lem:approxsparsecut}
        There exists an algorithm $\ADLSC{}(G,A,h,s,\phi,\eps)$ that is given a \emph{vertex-capacitated graph} $G=(V,E,\l,u)$, a length slackness $s>2$, a length bound $h>s^{c/\eps}$ for some sufficiently large constant $c$, a conductance parameter $\phi\in(0,1)$, a parameter $\eps\in (1/\log^{0.1}n,1)$, either output an empty cut, or computes an $\alpha$-approximate demand-size-largest $(h,s)$-length $\phi$-sparse cut with sparsity approximation $\alpha_\phi$, length slackness approximation $\alpha_s$ where
        \[\alpha = s^3\cdot n^{O(\eps+\frac{1}{s})} \qquad \qquad \alpha_\phi = s^3 \cdot n^{O(\eps+\frac{1}{s})} \qquad \qquad \alpha_s = s^{O(1/\eps)}\]
        
        In the case when the algorithm outputs an empty cut, it also outputs a $(h/s^{O(1/\eps)},s/\eps)$-length $\phi/n^{O(\eps)}$-expansion witness for $G$.
        
        The algorithm uses $\poly(h)\cdot m\cdot n^{f(\eps)}$ work and $\poly(h)\cdot n^{f(\eps)}$ depth, where $\lim_{\eps\to 0}f(\eps)\to 0$.
\end{restatable}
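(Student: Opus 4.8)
\textbf{Proof proposal for \Cref{lem:approxsparsecut}.}

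The plan is to follow the cut-matching / most-balanced-sparse-cut paradigm adapted to the length-constrained vertex-capacitated setting, mimicking the edge-version algorithm of \cite{HaeuplerHT24} but tracking the degradation parameters carefully. The algorithm will be iterative. In each iteration it runs a length-constrained cut-matching game on the current node-weighting $A$ restricted to a neighborhood cover $\mcN$ of $G$ with covering radius $h/s^{O(1/\eps)}$: for each cluster $S$ of $\mcN$ we play the game trying either to certify that $A_S$ is $(h_{\ed},s')$-length $\phi'$-expanding in $G[S]$ via a router $R_S$ together with a low-length low-congestion embedding (this is exactly the \emph{witness} data of \Cref{def:LCExpWitness}), or to extract a moderately sparse moving cut. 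The cut-matching game needs $O(\log n)$ rounds, and in each round we invoke the approximate length-constrained single-commodity maxflow routine (\Cref{thm:ApproxLCMCMF}, or the simplified \Cref{lemma:LCstFlows}) to either route the matching demand or find a sparse cut; the matching player is implemented via a (fractional) perfect-matching-type subroutine on the flow support. Each round contributes a length blow-up of a constant factor and a congestion/conductance loss of $n^{O(\eps)}$, so after $O(1/\eps)$ nested recursions (recursing inside clusters that are too large) and $O(\log n)$ rounds the length slackness degrades multiplicatively to $s^{O(1/\eps)}$ and the conductance to $\phi/n^{O(\eps)}$, which is precisely the claimed $\alpha_s$ and $\alpha_\phi$. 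When the game succeeds everywhere, the collected $\{R_S\}$ with their embeddings form a $(h/s^{O(1/\eps)}, s/\eps)$-length $\phi/n^{O(\eps)}$-expansion witness for $G$, and we output the empty cut. When a cut is found, we peel it off, add it to an accumulating moving-cut sequence, recurse on the two sides, and at the end take the union of all the peeled cuts.

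The correctness that the union of peeled cuts is an $\alpha$-approximate demand-size-largest sparse cut is exactly where the machinery of \Cref{sec:unionofspasecut} is used: \Cref{thm:unionOfMovingCuts} says the union of an $(h,s)$-length $\phi_i$-sparse \emph{sequence} of cuts is again a sparse cut with parameters degrading by $h'=2h$, $s'=(s-2)/2$, $\phi' = s^3\log^3 n\, n^{O(1/s)}\cdot(\sum_i|C_i|)/(\sum_i|C_i|/\phi_i)$; combined with \Cref{lem:LDSCSAtMostLDSC} (demand-size-largest \emph{sequence} is within an $s^3\log^3 n\, n^{O(1/s)}$ factor of demand-size-largest single cut) and \Cref{lem:DLSCAtMostLEC} (single cut at most largest-expander-complement, via \Cref{lem:LWSCAtMostLC}, \Cref{lem:LCAtMostLEC}), we get that whenever the cut-matching game fails to produce an expansion witness, the total demand-size it has peeled off must already be comparable — up to the $s^3 n^{O(\eps + 1/s)}$ slack — to $\qLDSC$ of $G$ with the relaxed parameters. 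So the approximation factor is $\alpha = s^3 n^{O(\eps + 1/s)}$ as stated. The work/depth bound follows by charging: there are $O(1/\eps)$ recursion levels, each neighborhood cover (via \Cref{thm:neicov}) blows up total size by $n^{O(\eps)}$ and costs $\poly(h)\cdot n^{O(\eps)}$ depth and $\poly(h)\cdot m\cdot n^{O(\eps)}$ work per level, and the maxflow calls cost $\poly(h)\cdot n^{\poly(\eps)}$ each with $O(\log n)$ rounds; everything multiplies out to $\poly(h)\cdot m\cdot n^{f(\eps)}$ work and $\poly(h)\cdot n^{f(\eps)}$ depth with $f(\eps)\to 0$.

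The main obstacle, I expect, is the length-constrained cut-matching game itself in the \emph{vertex-capacitated} setting: unlike the edge case, the matching player must route a matching demand that is $A_S$-respecting on vertices, and the standard cut-matching potential argument has to be redone so that (i) each round's routing stays within length $h\cdot s^{O(1)}$ and vertex-congestion $n^{O(\eps)}/\phi$, and (ii) when the flow player fails, the cut returned is genuinely vertex-sparse (a vertex moving cut) with demand-size comparable to the imbalance — this is where the $\alpha_\phi$ and $\alpha_s$ slacks are actually incurred. A secondary subtlety is that \Cref{thm:ApproxLCMCMF}/\Cref{lemma:LCstFlows} are for \emph{edge}-capacitated \emph{directed} graphs, so we must first apply the standard vertex-splitting reduction (noting, as remarked in \Cref{Sect:ApproxLCMCFlow}, that directed edge-capacitated is more general than undirected vertex-capacitated) and verify that the split graph's layered/length structure is preserved and that the returned flow/cut pulls back correctly to a vertex moving cut on $G$. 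The condition $h > s^{c/\eps}$ in the hypothesis is exactly what guarantees that after the $O(1/\eps)$-fold length shrinkage by factors of $s$ per recursion the covering radius $h/s^{O(1/\eps)}$ is still $\ge 1$, so the neighborhood covers are non-degenerate; this must be threaded through the recursion carefully. Once the cut-matching game's length-vertex-congestion tradeoff is nailed down, the rest is bookkeeping using the already-established lemmas of \Cref{sec:unionofspasecut}.
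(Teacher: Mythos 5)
Your high-level paradigm (neighborhood cover, per-cluster cut-matching game, witness-or-cut dichotomy, and the comparison chain through \Cref{lem:DLSCAtMostLEC}) matches the paper's, but two of your concrete mechanisms would not deliver the stated parameters. First, you run a classical cut-matching game with $O(\log n)$ rounds. The paper explicitly cannot afford this: the number of rounds becomes the step bound $s_0$ of the routers $R_S$ in the witness, and the witness length slackness is $s_0\cdot s_1$; with $O(\log n)$ rounds you would get a witness with slackness $O(s\log n)$ and $O(\log n)$-step routers, not the claimed $s/\eps$ and $1/\eps$. The paper instead uses the constant-hop cut-matching game (\Cref{lem:cut-strategy}, from \cite{HaeuplerH025,HaeuplerHT24}), in which the cut player plays a \emph{batch} of $n^{O(\eps)}$ node-weighting pairs per round so that only $1/\eps$ rounds are needed. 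Relatedly, the matching player is not the $(1+\epsilon)$-approximate maxflow of \Cref{thm:ApproxLCMCMF}: it is the cutmatch primitive (\Cref{lem:matching-strategy}, from Theorem 16.1 of \cite{HaeuplerHS23}), which returns a flow together with a moving cut of size at most $\phi$ times the \emph{unrouted} mass that separates the unsaturated sources from the unsaturated sinks; that deficit-proportional bound is exactly what makes the output cut $\phi$-sparse, and an approximate maxflow/mincut pair does not directly provide it.

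Second, your correctness route is misplaced. The algorithm for \Cref{lem:approxsparsecut} is single-shot: one separated neighborhood cover, $1/\eps$ rounds, and the output $C^*$ is the union of all moving cuts returned by the cutmatch calls; there is no peeling and no recursion on ``two sides'' (a moving cut is not a bipartition, so that step is not well defined here). The sparsity of $C^*$ follows by direct averaging over the cutmatch guarantees, since every witnessing demand lives inside a single cluster and is hence already $h$-length in the original graph, so \Cref{thm:unionOfMovingCuts} and \Cref{lem:LDSCSAtMostLDSC} are not needed in this proof (they are used one level up, in \Cref{alg:EDsfromCuts}, where \ADLSC{} is invoked $n^{O(\sqrt{\eps})}$ times per epoch). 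Likewise $\alpha_s = s^{O(1/\eps)}$ comes from the diameter-to-covering-radius ratio of the separated neighborhood cover (\Cref{thm:neicov}), not from nested recursion inside large clusters — there is no such recursion. The demand-size lower bound is obtained by assembling an expanding $\hat{A}\preceq A$ from the routers, bounding $|A|-|\hat{A}|$ by the total matching deficit, observing that this deficit equals, up to $n^{O(\eps)}$, the size of the demand separated by $C^*$, and then invoking \Cref{lem:DLSCAtMostLEC} to get $\alpha$.
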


To recall the definition of approximately demand-size-largest sparse cut, we restate \Cref{dfn:apxLargeCut} as follows.

\apxsparsecut*

The idea to prove \cref{lem:approxsparsecut} is to use the constant-hop cut-matching game introduced by \cite{HaeuplerH025}, which we will explain in the next section.

\subsection{Constant Steps Cut Matching Games}

The cut-matching game in the literature was used to find sparse cuts in the classical setting where the cut strategy simply returns a bi-partition of the vertex set. This classical cut-matching game will generate an expander with polylogarithmic hops, which is too much for our case where the hop corresponds to the length slackness, and we want the length slackness to be a constant. Thus, \cite{HaeuplerH025} proposed a generalized cut-matching game where instead of simply bi-partitioning the vertex set, they do multiple partitions (in our case partition the node-weighting). This generalized version allows them to achieve constant hops, We define them formally as follows.

\paragraph{Definition.} In this section, only for the definition of a cut-matching game, we will forget about the vertex capacity and length and only consider edge-capacitated graphs $G=(V,E)$ where each edge $e\in E$ is represented as $(\text{id}_e,\text{capacity}_e)$, i.e., associated with a capacity. We define $U:E\to\bbN^+$ as the edge capacity function. This is because the cut-matching game was only defined with congestion on edges, which also suffices for our case (this should be clear in the next section). 

\paragraph{Cut strategies.} A cut strategy is an algorithm that is given an undirected graph $G=(V,E)$ and a node-weighting $A$, outputs a set of node-weightings pairs $\{(A^{(j)},B^{(j)})\}_j$ satisfying
\begin{enumerate}
    \item for every $j$ we have $|A^{(j)}|=|B^{(j)}|$, and $A^{(j)}+B^{(j)}\preceq A$,
    \item $A\preceq \sum_{j}(A^{(j)}+B^{(j)})$.
\end{enumerate}

\paragraph{Matching strategies.} A matching strategy is an algorithm that is given a graph $G$ and node-weighting pairs $\{(A^{(j)},B^{(j)})\}_j$, outputs for each $j$ a set of capacitated edges $M^{(j)}\subseteq \supp(A^{(j)})\times\supp(B^{(j)})$ such that for every vertex $u$ we have $U(\delta_{M^{(j)}}(u))\le A^{(j)}(u)$ and $U(\delta_{M^{(j)}}(u))\le B^{(j)}(u)$.

\paragraph{Cut-matching games.} A cut-matching game is an algorithm that is given a cut strategy and matching strategy, starts from an empty edge-capacitated graph $G_0=(V,E_0=\emptyset)$ and runs in a certain number of rounds (denoted by $r$) to update $G_0$. In the $i$-the round (starting from $i=1$), the algorithm first applies the cut strategy on the graph $G_{i-1}$ and gets a set of node-weighting pairs $\{(A^{(j)}_i,B^{(j)}_i)\}_j$. Then the algorithm applies the matching strategy on $\{(A^{(j)}_i,B^{(j)}_i)\}_j$ to get $\{M^{(j)}_i\}_j$ and get the graph $G_i=(V,E_i)$ where $E_i=E_{i-1}\bigcup\left(\cup_jM^{(j)}_i\right)$. The resulting graph of a cut-matching game is $G_r$. We care about the following parameters of a cut-matching game.

\begin{itemize}
    \item \textbf{Rounds of Interaction:} The number of rounds $r$.
    \item \textbf{Cut Batch Size:} The maximum number of pairs the cut strategy plays in each round of interaction, i.e., $\max_i |\{(A_{i}^{(j)}, B_i^{(j)})\}_j|$. The classical cut matching games can be viewed as the special case when the cut batch size is $1$. We need larger cut batch size in order to achieve constant hop.
    \item \textbf{Matching Perfectness:} If each set of edges the matching player plays for a batch always has a total capacity of at least a $1-\alpha$ fraction of the total node-weighting then we say that the cut matching game is $(1-\alpha)$-perfect. That is, a cut matching game is $(1-\alpha)$-perfect if for every $i$ we have
        \[\sum_j U\left(M_i^{(j)}\right) \geq (1-\alpha) \cdot \sum_j|A_i^{(j)}| = (1-\alpha) \cdot \sum_j|B_i^{(j)}|\]
\end{itemize}

In \cite{HaeuplerHT24}, they proved a high-quality cut strategy. By combining Theorem 1.1 and Theorem 13.6 of \cite{HaeuplerHT24}, we get the following lemma.

\begin{lemma}[Cut strategy, \cite{HaeuplerHT24}]\label{lem:cut-strategy}
    For every $\eps\in\left(0,1\right)$ (possibly a function of $n$), there is a cut strategy with cut batch size $n^{O(\eps)}$ which when used in a cut matching game with $1/\eps$ rounds of interaction and an arbitrary $(1-\alpha)$-perfect matching strategy results in a $G_r$ that is a $1/\eps$-step and $n^{O(\eps)}$-router (see \Cref{def:routers}) for some $A' \preceq A$ of size $|A'| \geq (1-O(\frac{\alpha}{\eps})) \cdot |A|$. 
    This cut-strategy costs work $m\cdot n^{f(\eps)}\cdot \poly(h)$ and depth $n^{f(\eps)}\cdot\poly(h)$ where $f(\eps)\to 0$ as $\eps\to 0$.
\end{lemma}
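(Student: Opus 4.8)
The plan is to obtain \Cref{lem:cut-strategy} by combining two results of \cite{HaeuplerHT24} essentially as black boxes, with the only genuinely new ingredient being the bookkeeping that converts an arbitrary $(1-\alpha)$-perfect matching strategy into a controlled loss on the node-weighting. First I would recall Theorem~1.1 of \cite{HaeuplerHT24}, which supplies the cut strategy itself: over $1/\eps$ rounds of interaction it plays a batch of at most $n^{O(\eps)}$ pairs $(A_i^{(j)},B_i^{(j)})$ per round (satisfying the two defining constraints of a cut strategy, with the batch morally being close to a partition of $A$), and it guarantees that when the matching player responds with a \emph{perfect} matching in every round (i.e.\ $\alpha=0$, so $\sum_j U(M_i^{(j)}) = \sum_j|A_i^{(j)}|$ for all $i$), the resulting graph $G_r=(V,\bigcup_{i,j}M_i^{(j)})$ is an $O(1/\eps)$-step $n^{O(\eps)}$-router for the full node-weighting $A$ in the edge-congestion sense of \Cref{def:routers}. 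Absorbing the constant in front of $1/\eps$, this already proves the lemma in the special case $\alpha=0$, together with the claimed $m\cdot n^{f(\eps)}\cdot\poly(h)$ work and $n^{f(\eps)}\cdot\poly(h)$ depth, which are exactly the cost of running the cut-strategy algorithm of \cite{HaeuplerHT24}; the $\poly(h)$ factor enters only through its internal length-constrained flow subroutines, and $f(\eps)\to0$ as $\eps\to0$.

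For general $\alpha\in(0,1)$ the matching player leaves an $\alpha$-fraction of each batch unmatched, so $G_r$ need no longer route all of $A$, and here I would invoke the robustness (``fake-edge'') version of the cut-matching game, Theorem~13.6 of \cite{HaeuplerHT24}. It states that the very same cut strategy, played against any $(1-\alpha)$-perfect matching strategy, still certifies that $G_r$ is an $O(1/\eps)$-step $n^{O(\eps)}$-router for a sub-node-weighting $A'\preceq A$. The quantitative loss is the one computation I would actually carry out: because the batch in round $i$ has total node-weighting $\sum_j(A_i^{(j)}+B_i^{(j)}) = O(1)\cdot A$ for the HHT24 cut strategy, the node-weighting the matching player fails to match in round $i$ has size at most $\alpha\cdot\sum_j|A_i^{(j)}| = O(\alpha)\cdot|A|$; summing these per-round deficits over the $1/\eps$ rounds gives a total deficit of $O(\alpha/\eps)\cdot|A|$, and taking $A'$ to be $A$ minus (a node-weighting absorbing) this deficit yields $|A'|\ge(1-O(\alpha/\eps))\cdot|A|$. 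Crucially the fake edges are an analysis device inside \cite{HaeuplerHT24} and are never added to $G_r$, so the step bound $O(1/\eps)$, the edge-congestion bound $n^{O(\eps)}$, and the cut batch size $n^{O(\eps)}$ are all unchanged, and the work and depth remain as in the perfect case.

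Assembling the two pieces gives the lemma as stated. Two conventions should be reconciled along the way, which I would spell out explicitly. First, as the surrounding text already observes, in this section only the \emph{edge}-congestion router conclusion is needed, so the vertex-length/vertex-capacity machinery of \cite{HaeuplerHT24} plays no role here and the parameter $h$ appears purely as a running-time parameter of the flow subroutines invoked by the cut strategy; making this explicit pins down that the $\poly(h)$ factors are the only place $h$ occurs. Second, Theorems~1.1 and~13.6 of \cite{HaeuplerHT24} are phrased in the language of length-constrained expander decomposition rather than of the bare cut-matching game of this section, so a short translation step is needed to extract precisely the cut-strategy/router statement used here (in particular, reading their ``routing witness'' output as a router in the sense of \Cref{def:routers}).

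The main obstacle is the second paragraph: faithfully importing Theorem~13.6 of \cite{HaeuplerHT24} and verifying that its fake-edge construction composes with an \emph{arbitrary} $(1-\alpha)$-perfect matching strategy (not merely a specific adversary), and that the resulting sub-node-weighting obeys exactly the $|A'|\ge(1-O(\alpha/\eps))|A|$ bound with the router step and congestion untouched. The tight constant in that bound is itself a consequence of the internal structure of HHT24's cut strategy (its per-round batch being close to a partition of $A$), so the delicate point is checking that this structural property is available in the statement we cite, rather than something we must re-derive; everything else is either a direct citation or the elementary deficit summation above.
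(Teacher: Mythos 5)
Your proposal takes essentially the same route as the paper: the paper offers no proof beyond the single sentence "By combining Theorem 1.1 and Theorem 13.6 of \cite{HaeuplerHT24}, we get the following lemma," and your plan is precisely that combination, fleshed out with the per-round deficit summation giving $|A'|\ge(1-O(\alpha/\eps))|A|$. Your elaboration of the $(1-\alpha)$-perfect bookkeeping and the edge-versus-vertex-congestion convention is consistent with how the lemma is used later (e.g.\ in the proof of \Cref{lem:routers} and \Cref{lem:expandingoftA}), so there is nothing to correct.
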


In order to prove \Cref{lem:approxsparsecut}, i.e., find an approximately demand-size-largest sparse cut of $G$, we will use cut-matching games to construct a graph $H$ on the same vertex set but with a different edge set. Specifically, we will use the cut strategy provided by \Cref{lem:cut-strategy} to construct $H$. The high-level idea is to find a $(1-\alpha)$-perfect matching after each round of cut strategy and add them to $H$ so that we can use \Cref{lem:cut-strategy} to argue that $H$ becomes a good router in the end; moreover, we will make sure that each edge in $H$ (i.e., in each matching) corresponds to a path in the original graph $G$, such that these path are vertex length bounded with low vertex congestion. For that purpose, the matching player can be viewed as routing as many as possible demands in the original graph with length bound and low congestion, and each successfully routed demand corresponds to an edge in $H$. The following lemma implied by \cite{HaeuplerHS23} is for that purpose. 

\begin{lemma}[Application of Theorem 16.1 \cite{HaeuplerHS23}]\label{lem:matching-strategy}
    There is an algorithm that is given a graph $G=(V,E)$ with vertex capacities and length, two node-weightings $A,B\subseteq V$, length bound $h\ge 1$ and sparsity $0<\phi<1$, outputs a pair of flow and moving cut $(F,C)$ such that
    \begin{enumerate}
        \item $F$ has vertex congestion $\tO{1/\phi}$, vertex length $h$ and routes some $\hA\preceq A$ to $\tB\preceq B$.
        \item $C$ is a $h$-length vertex moving cut, $\{v\mid \hA(v)<A(v)\}$ and $\{v\mid \tB(v)<B(v)\}$ are $h$-far on $G-C$, and $C$ has size at most
        \[|C|\le \phi\cdot\left(|A|-\val(F)\right)\]
    \end{enumerate}
    The algorithm runs in $\tO{m\cdot\poly(h)}$ work and $\tO{\poly(h)}$ depth.
\end{lemma}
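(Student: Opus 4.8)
The plan is to derive this statement from Theorem 16.1 of \cite{HaeuplerHS23}, which provides exactly the ``matching-player'' primitive for length-constrained cut-matching games: given source and sink node-weightings it (approximately) routes as much $h$-length flow as possible at bounded congestion and certifies the residual demand by a sparse $h$-length moving cut. What is left for me to do is to translate the present hypotheses --- an \emph{undirected} graph with \emph{vertex} capacities and lengths and two node-weightings $A,B$ --- into the exact input form of that theorem, and to translate its output back; the calibration of the congestion to $\tilde{O}(1/\phi)$ and of the cut size to $\phi\cdot(|A|-\val(F))$ is internal to \cite[Theorem 16.1]{HaeuplerHS23} (route in the graph with capacities scaled up by $\Theta(\log n/\phi)$, so that feasibility there means congestion $\tilde{O}(1/\phi)$ in $G$, while the corresponding dual/blocking moving cut has size at most $\phi$ times the unrouted demand).

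First I would reduce to a single-commodity $h$-length flow problem on an edge-capacitated directed graph. Apply the standard vertex-splitting reduction (as in \Cref{ob:UndirToDir}): replace each $v$ by $v_{\iin}\!\to\! v_{\out}$ joined by an edge $e_v$ of capacity $U(v)$ and length $\ell(v)$, and each undirected edge $\{u,w\}$ by the two length-$0$, capacity-$\infty$ arcs $(u_{\out},w_{\iin})$ and $(w_{\out},u_{\iin})$; then (if \cite[Theorem 16.1]{HaeuplerHS23} is stated for single source/sink rather than node-weightings) add a super-source $s^{*}$ with an arc to each $v_{\iin}$ of capacity $A(v)$ and length $0$, and a super-sink $t^{*}$ with an arc from each $w_{\out}$ of capacity $B(w)$ and length $0$. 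A feasible length-$h$ $s^{*}$--$t^{*}$ flow of edge-congestion $\kappa$ in this digraph corresponds, via the backward map, to a length-$h$ flow in $G$ of vertex-congestion $\kappa$ routing some $\hA\preceq A$ to some $\tB\preceq B$, and the saturation pattern of the super-arcs records exactly on which $v$ one has $\hA(v)<A(v)$ and on which $w$ one has $\tB(w)<B(w)$.

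Next I would invoke \cite[Theorem 16.1]{HaeuplerHS23} (or, as an alternative route that is already parallel, \Cref{thm:ApproxLCMCMF} with $k=1$ together with its MWU weights at termination) on this instance with length bound $h$ and sparsity $\phi$, obtaining a feasible $h$-length $s^{*}$--$t^{*}$ flow $F_{\dir}$ of edge-congestion $\tilde{O}(1/\phi)$ and an $h$-length edge moving cut $C_{\dir}$ with $|C_{\dir}|\le \phi\cdot(|A|-\val(F_{\dir}))$ that $h$-separates the super-arc-unsaturated heads from the super-arc-unsaturated tails in $G_{\dir}-C_{\dir}$. Mapping $F_{\dir}$ back yields the desired $F$. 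To recover $C$: every non-$e_v$ arc has length $0$ and infinite capacity, hence carries zero weight in $C_{\dir}$, and weight on the super-arcs can be absorbed (cutting the super-arc at $v$ fully is equivalent to declaring $v$ unrouted), so setting $C(v):=C_{\dir}(e_v)$ gives a vertex $h$-length moving cut (values are already multiples of $1/h$ and at most $1$) with $\ell_{G-C}$ matching $\ell_{G_{\dir}-C_{\dir}}$ under the natural correspondence of paths; therefore the $h$-farness of $\{v:\hA(v)<A(v)\}$ from $\{v:\tB(v)<B(v)\}$ and the size bound $|C|\le|C_{\dir}|\le\phi\bigl(|A|-\val(F)\bigr)$ both transfer.

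I expect the main obstacle to be the bookkeeping in this last step: matching ``super-arc unsaturated'' precisely with the sets $\{v:\hA(v)<A(v)\}$ and $\{v:\tB(v)<B(v)\}$, and confirming that the cut returned by \cite{HaeuplerHS23} never needs to charge the super-arcs --- this one enforces by working with the dual MWU weights directly, which are large only along $h$-length paths carrying residual demand, so they can be restricted to the $e_v$-edges without loss. Everything else is routine. The complexity bound is then immediate: vertex-splitting and the gadget inflate $m$ by an $O(1)$ factor, and the length-constrained maxflow subroutine --- \cite[Theorem 16.1]{HaeuplerHS23}, a CONGEST algorithm whose round complexity is $\poly(h)\cdot\polylog n$ and which parallelizes in the standard round-to-depth way, or equivalently \Cref{thm:ApproxLCMCMF} with $k=1$ --- runs in $\tilde{O}(m\cdot\poly(h))$ work and $\tilde{O}(\poly(h))$ depth, exactly as claimed.
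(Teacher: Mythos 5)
Your proposal follows essentially the same route as the paper's proof: vertex-split $G$ into an edge-capacitated directed graph, attach a super-source with arcs of capacity $A(v)$ and a super-sink with arcs of capacity $B(v)$, invoke Theorem~16.1 of \cite{HaeuplerHS23} on that instance, and map the flow and the edge moving cut back. The one place where your write-up is looser than the paper's is exactly the step you flag as "the main obstacle": converting the edge moving cut $C_{\dir}$ back to a vertex moving cut. You assert that the connecting arcs carry zero cut weight because they have infinite capacity and that weight on the super-arcs "can be absorbed"; the paper instead gives these arcs \emph{finite} (large) capacity and length $1$ (Theorem~16.1 assumes lengths at least $1$, so your length-$0$ assignment is not directly admissible), calls the theorem with length bound $3h$ and sparsity $\phi/3$, and then defines $C(v)/3$ as the sum of $\hat w$ over the vertex edge $(v^{\iin},v^{\out})$, the outgoing connecting arcs, and the super-source arc at $v$. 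This absorbs all auxiliary cut weight into the vertex cut at a constant-factor loss and makes both the size bound $|C|\le\phi(|A|-\val(F))$ and the $h$-separation of $\{v:\hA(v)<A(v)\}$ from $\{v:\tB(v)<B(v)\}$ go through cleanly; you would need to carry out this (routine but necessary) absorption rather than argue the auxiliary weights vanish. With that fix your argument coincides with the paper's.
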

\begin{proof}
    We first apply standard transformation from vertex capacity to directed edge capacity in order to apply Theorem 16.1 of \cite{HaeuplerHS23}. We construct a \emph{directed graph} with edge capacities and length $G'=(V_{in}\cup V_{out}\cup\{s,t\},E')$ defined as follows.
    \begin{itemize}
        \item $V_{in},V_{out}$ are both copies of $V$, we write $v^{in},v^{out}$ for the corresponding copies for $v\in V$.
        \item $E'=\{(v^{in},v^{out})\mid v\in V\}\cup\{(u^{out},v^{in}\mid (u,v)\in E\}\cup\{(s,u^{in}),(v^{out},t)\mid u\in\supp(A),v\in\supp(B)\}$.
        \item Each edge $(v^{in},v^{out})$ get capacity $U(v)$ and length $\l(v)$; each edge $(u^{out},v^{in})$ get sufficiently large capacity, for example, $10\max_{v\in V}U(v)$, and length $1$\footnote{Actually we want the length to be $0$ here, but in \cite{HaeuplerHS23}, they have the assumption that lengths are at least 1}; each edge $(s,u^{in})$ gets capacity $A(u)$ and length $1$, each edge $(v^{out},t)$ gets capacity $B(v)$ and length $1$.
    \end{itemize} 
    We now apply Theorem 16.1 of \cite{HaeuplerHS23} with parameters $S=\{s\},T=\{t\}$ and $h,\phi$ to be $3\cdot\text{(our length bound h)},(\text{our sparsity }\phi)/3$ in \Cref{lem:matching-strategy}. We will get $(\hat{f},\hat{w})$ in $\tO{m\cdot\poly(h)}$ work and $\tO{\poly(h)}$ depth. 
    
    \paragraph{Getting $F$.} We transfer $\hat{f}$ into a flow in $F$ in $G$ in the following way. Every flow path $p\in\hat{f}$ must be like $(s,v_1^{in},v_1^{out},v_2^{in},v_2^{out},...,v^{in}_k,v_k^{out},t)$ for some sequence $(v_1,...,v_k)$ due to the definition of $G'$. We let $(v_1,...,v_k)$ be the corresponding flow in $F$ if $(s,v_1^{in})$ is saturated by $\hat{f}$. In this way, as long as $\hat{f}$ has edge-congestion $\tO{1/\phi}$ (as implied by Theorem 16.1 of \cite{HaeuplerHS23}), each vertex $v$ cannot have vertex congestion more than $\tO{1/\phi}\cdot U(v)$ in $F$ since each of them corresponds to an edge $v^{in},v^{out}$ used in $\hat{f}$. Thus, $F$ has vertex-congestion at most $\tO{1/\phi}$. Moreover, $F$ routes some $\hA\preceq A$ to $\tB\preceq B$. So $F$ is a valid output.

    \paragraph{Getting $C$.} For each $v\in V$, we assign the moving cut $C(v)$ as 
    \[C(v)/3=\hat{w}(v^{in},v^{out})+\left(\sum_{(v,u)\in E}\hat{w}(v^{out},u^{in})\right)+\hat{w}(s,v^{in})\] 
    The last term only exists when $v\in \supp(A)$. Clearly, we have 
    \[|C|\le 3\cdot \sum_{v\in V}C(v)\cdot U(v)\le 3\cdot \sum_{e\in E'}\hat{w}(e)\cdot U(e)\le \phi\cdot\left(\left(\sum_{(s,v^{in})\in E'}U(s,v^{in})\right)-\val(\hat{f})\right)\]
    Notice that the last inequality is from Theorem 16.1 of \cite{HaeuplerHS23}. According to the definition of $G'$, we have $\sum_{(s,v^{in})\in E'}U(s,v^{in})=|A|$, and $\val(\hat{f})=\val(F)$, so $|C|\le \phi\cdot\left(|A|-\val(F)\right)$. 
    
    It remains to verify $\{v\mid \hA(v)<A(v)\}$ and $\{v\mid \tB(v)<B(v)\}$ are $h$-far on $G-C$. According to Theorem 16.1 of \cite{HaeuplerHS23}, the distance of $s,t$ in $G'$ is at least $3h$ under the distance function where (i) all edges $(s,v^{in})$ or $(t,v^{out})$ that are saturated by $\hat{f}$ has distance $3h+1$, (ii) all other edges $e$ has distance in $G'$ increased by $\hat{w}(e)\cdot 3h$. Suppose there is a path $p_G$ in $G$ from $a\in\{v\mid \hA(v)<A(v)\}$ to $b\in\{v\mid \tB(v)<B(v)\}$ with distance less than $h$ in $G-C$, this corresponds to a path in $G'$ denoted as $p'_G=(s,a^{in},a^{out},...,b^{in},b^{out},t)$ where $(s,a^{in}),(b^{out},t)$ are both not saturated by $\hat{f}$. The edge length of $p$, according to the definition of $C$, is at most
    \[\l(p'_G)\le (|P|+1)+\l_{C,h}(P_G)<3h\]

    The first term $(|P|+1)$ is from the fact that we set the initial length of every edge to be at least $1$. The second term is because every length increases on edges by $\hat{w}$ is absorbed by the same length increase on its adjacent vertex by $C$. This is a contradiction that $s,t$ in $G'$ is at least $3h$-far.  
\end{proof}
\subsection{Approximating DLSC from Cut-Matching Games (Proof of \Cref{lem:approxsparsecut})}\label{subsec:apxDLSCcutstrategy}

This section is adapted from Section 13.2 of \cite{HaeuplerHT24}, with the simplification that we do not need to guarantee the cut strategy to work on a smaller node-weighting (for their paper they need this property to do recursion while gradually reducing the node-weighting size), and we do not do $(\le h,s)$-length decomposition as they do. 

In this section, we show how to combine the cut and matching strategy in \Cref{lem:cut-strategy,lem:matching-strategy} to get a cut-matching game that can prove \Cref{lem:approxsparsecut}. Intuitively, since a length-constraint expander is a low-step expander under every low-diameter cluster, we are going to apply our cut-matching game on every low-diameter cluster. Thus, \Cref{thm:neicov} will be used to get a collection of low-diameter clusters to apply cut-matching games. We describe our algorithm formally as follows.

\paragraph{Step 1: Create Clusters for Cut Matching Games.}\label{step:1} Apply \Cref{thm:neicov} to $G$ to compute a neighborhood cover $\cN$ with separation factor $2s$, covering radius $h_{\cov}=h/\left(\frac{1}{\eps} \cdot (2s)^{O(1/\eps)}\right)=h/s^{O(1/\eps)}$ (because $s>2$), cluster diameter $h_{\diam} = h$ and width $\omega = n^{O(\eps)}\cdot\log n$.

\paragraph{Step 2: Run Cut Matching Games.}\label{step:2} 
First, let 
\begin{align}\label{eq:phiPrime}
    \phi' := \phi / n^{O(\eps)}
\end{align}
be the (relaxed) sparsity with respect to which we will run our cut-matching game. 

Next, we do the following for each $\cS\in\cN$. Intuitively, we wish to run a cut-matching game on every cluster in $\cS$ which is a low-diameter cluster. We do so by running the cut player for each of them simultaneously and running the matching player by combining the source and sink node-weighting from all of them: difference clusters will not interfere with each other because they are far away from each other. Formally, we initialize empty graph $H_S=(S,\emptyset)$ for every $S\in\cS$ and we repeat the following from $i=1$ to $1/\eps$.

\begin{enumerate}
    \item \textbf{Run cut strategies:} For each $S \in \cS$, apply the the cut strategy (from \Cref{lem:cut-strategy}) to $H_S$. Let $\{A^{(j)}_{i,S},B^{(j)}_{i,S}\}_{j\in[b]}$ be the output pairs of node-weightings from the cut strategy for cluster $S$. Here we fix $b$ to be an upper bound for the cut batch size for all cut strategies over $S\in\cS$, which satisfies $b=n^{O(\eps)}$ according to \Cref{lem:cut-strategy}. For the cut strategy that returns less than $b$ pairs of node weightings, we simply augment it to $b$ pairs by appending empty node weightings to them.
    \item \textbf{Compute a cutmatch:} For each $j\in[b]$, we apply \Cref{lem:matching-strategy} on the graph $G$, two node-weightings $A^{(j)}_{i,\cS}:=\sum_{S\in\cS}A^{(j)}_{i,S},B^{(j)}_{i,\cS}:=\sum_{S\in\cS}B^{(j)}_{i,S}$, length bound $s\cdot h_{\diam}$ and sparsity $\phi'$, to get a flow and moving cut pair $(F^{(j)}_{i,\cS},C^{(j)}_{i,\cS})$. Define $F_{i,\cS}=\cup_jF^{(j)}_{i,\cS}$
    
    \item \textbf{Update graphs:} For each $S\in\cS$ and each $j\in[b]$, add a matching $M^{(j)}_{i,S}\subseteq \supp(A_{i,S}^{(j)})\times \supp(B_{i,S}^{(j)})$ to $H_S$ defined as follows: $(a,b)\in M^{(j)}_{i,S}$ if there is a flow path in $F_{i,\cS}$ from $a$ to $b$, and the capacity of this edge $(a,b)$ is the summation of $F_{i,\cS}(p)$ among all flow path $p$ from $a$ to $b$.
    
\end{enumerate}

\paragraph{Algorithm Output.}
We return as our cut $C^*$ the union of all cuts among all $C^{(j)}_{i,\cS}$ for $j\in[b],i\in[1/\eps]$ and $\cS\in\cN$. If $C^*$ is empty, then we return a witness:
\begin{itemize}
    \item \textbf{Neighborhood Cover:} we return the neighborhood cover $\cN$ constructed in Step 1, it has covering radius $h/s^{O(1/\eps)}$;
    \item \textbf{Routers:} an $s_0$-step $\kappa_0$-router of $A_S$ for each $S \in \cS\in \mcN$, for which we return the final graph $H_S$ in the cut-matching game. Notice that $s_0=1/\eps$ and $\kappa_0=n^{O(\eps)}$ according to \Cref{lem:routers}.
    
    \item \textbf{Embedding of Routers:} an $(h_{\cov} \cdot s_1)$-length edge-to-vertex embedding of $\cup_{S\in\cS\in\cN}R_S$ into $G$ with congestion $\kappa_1$. We take the embedding as in Step 2 (3): every edge in $H_S=R_S$ corresponds to a flow path in $F_{i,\cS}$ for some $i,\cS$. According to \Cref{lem:matching-strategy}, these flow path has length at most $h_{\diam}\cdot s$. For the congestion, notice that for each $j\in[b]$ and $i\in[1/\eps]$, the flow $F^{(j)}_{i,\cS}$ has congestion $\tO{1/\phi'}$ according to \Cref{lem:matching-strategy}. Thus, the total congestion is at most $n^{O(\eps)}/\phi$.
    Notice that $s_1=s$ and $\kappa_1=n^{O(\eps)}/\phi$ follows from the argument above.
\end{itemize}

\begin{lemma}\label{lem:routers}
    If $C^*$ is an empty cut, then $H_S$ is a $1/\eps$-step $n^{O(\eps)}$-router.
\end{lemma}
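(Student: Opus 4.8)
\textbf{Proof proposal for \Cref{lem:routers}.}

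The plan is to invoke the cut-strategy guarantee of \Cref{lem:cut-strategy} for each cluster $S\in\cS\in\cN$ separately. Recall that in Step 2 of the algorithm, for each fixed $\cS$ we run a cut-matching game on every $S\in\cS$ simultaneously: at each round $i$ the cut strategy is applied to the current graph $H_S$, producing node-weighting pairs $\{A^{(j)}_{i,S},B^{(j)}_{i,S}\}_{j\in[b]}$, and then the matching $M^{(j)}_{i,S}$ added to $H_S$ comes from the flow $F_{i,\cS}$ computed by \Cref{lem:matching-strategy}. The key observation is that if $C^*$ is empty, then every moving cut $C^{(j)}_{i,\cS}$ returned by \Cref{lem:matching-strategy} is empty, which by the size bound $|C^{(j)}_{i,\cS}|\le \phi'\cdot(|A^{(j)}_{i,\cS}|-\val(F^{(j)}_{i,\cS}))$ forces $\val(F^{(j)}_{i,\cS})=|A^{(j)}_{i,\cS}|$, i.e. the matching player routes the \emph{entire} source node-weighting. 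So the cut-matching game played on each $H_S$ is $1$-perfect (i.e. $(1-\alpha)$-perfect with $\alpha=0$).

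First I would make precise that the matching added to $H_S$ is a legitimate matching strategy response in the sense of the cut-matching-game definition: $M^{(j)}_{i,S}\subseteq \supp(A^{(j)}_{i,S})\times\supp(B^{(j)}_{i,S})$, and the capacity constraints $U(\delta_{M^{(j)}_{i,S}}(u))\le A^{(j)}_{i,S}(u)$ and $\le B^{(j)}_{i,S}(u)$ hold. This requires checking that the restriction of the global flow $F_{i,\cS}$ to flow paths with endpoints in $S$ really does respect the per-cluster node-weightings — here one uses that distinct clusters in the \emph{same} clustering $\cS$ are $2s\cdot h_{\diam}$-far (separation factor $2s$ from Step 1), while the flow $F_{i,\cS}$ has length at most $s\cdot h_{\diam}$, so no flow path can have its two endpoints in two different clusters of $\cS$; hence each flow path is confined to a single cluster and the contributions to different $H_S$ are genuinely disjoint, and the capacity bounds on $M^{(j)}_{i,S}$ follow from the node-congestion/feasibility of $F_{i,\cS}$ together with the fact that $A^{(j)}_{i,\cS}=\sum_{S}A^{(j)}_{i,S}$ and $\val(F^{(j)}_{i,\cS})=|A^{(j)}_{i,\cS}|$ decomposes cluster by cluster (so the full source mass of each $S$ is routed). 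Second, with $1$-perfectness established, I would apply \Cref{lem:cut-strategy} with the given $\eps$: the cut-matching game runs $1/\eps$ rounds of interaction, the cut batch size is $n^{O(\eps)}$, and the conclusion is that the resulting graph $H_S$ is a $(1/\eps)$-step $n^{O(\eps)}$-router for some $A'\preceq A_S$ of size $|A'|\ge (1-O(0/\eps))|A_S|=|A_S|$, i.e. $A'=A_S$. That is exactly the claim: $H_S$ is a $1/\eps$-step $n^{O(\eps)}$-router for $A_S$.

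The main obstacle I anticipate is the bookkeeping in the first part — carefully arguing that the per-cluster matchings $M^{(j)}_{i,S}$ extracted from the combined flow $F_{i,\cS}$ satisfy the exact syntactic requirements of the matching-strategy definition (in particular that capacities are not over-subscribed when one flow path could in principle wind through a cluster and revisit vertices, and that the "entire source is routed" statement localizes to each cluster). The separation-factor argument handles the cross-cluster issue cleanly, but one must also note that \Cref{lem:matching-strategy} only promises $\val(F^{(j)}_{i,\cS})=|A^{(j)}_{i,\cS}|$ \emph{when} $C^{(j)}_{i,\cS}$ is empty, and that $C^*$ being empty is precisely the union of all these cuts being empty, so each individual cut is empty — this is the one place where the global hypothesis "$C^*$ is an empty cut" is used. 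Everything else is a direct citation of \Cref{lem:cut-strategy} with $\alpha=0$.
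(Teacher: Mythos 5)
Your reduction to \Cref{lem:cut-strategy} with $\alpha=0$ is the right endgame, but the step where you establish $1$-perfectness is broken. You claim that $C^{(j)}_{i,\cS}=\emptyset$ together with the bound $|C^{(j)}_{i,\cS}|\le \phi'\cdot(|A^{(j)}_{i,\cS}|-\val(F^{(j)}_{i,\cS}))$ ``forces'' $\val(F^{(j)}_{i,\cS})=|A^{(j)}_{i,\cS}|$. The inequality goes the wrong way for that inference: with $|C^{(j)}_{i,\cS}|=0$ it only yields $0\le \phi'\cdot(|A^{(j)}_{i,\cS}|-\val(F^{(j)}_{i,\cS}))$, i.e.\ $\val(F^{(j)}_{i,\cS})\le|A^{(j)}_{i,\cS}|$, which is vacuous. \Cref{lem:matching-strategy} is a cutmatch-type guarantee: it is perfectly consistent for the returned cut to be empty while a large fraction of the source mass is unrouted, provided the unsaturated sources and unsaturated sinks happen to already be $(s\cdot h_{\diam})$-far in $G$. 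So emptiness of $C^*$ does not, by itself, give perfectness.

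The mechanism the paper uses (and that you need) is precisely the separation property of \Cref{lem:matching-strategy}, not the size bound. If $C^{(j)}_{i,\cS}=\emptyset$ and some source $a$ and some sink $b$ are both unsaturated, they must be $(s\cdot h_{\diam})$-far in $G$; since each cluster has diameter $h_{\diam}$, they cannot lie in the same cluster. Hence in the cluster $S$ containing $a$, \emph{all} of $B^{(j)}_{i,S}$ is saturated; but $|A^{(j)}_{i,S}|=|B^{(j)}_{i,S}|$ and $a$'s mass is not fully routed, so some flow reaching $B^{(j)}_{i,S}$ must originate in a different cluster $S'\in\cS$. This contradicts the separation factor: $S$ and $S'$ are $2s\cdot h_{\diam}$-far while every flow path in $F^{(j)}_{i,\cS}$ has length at most $s\cdot h_{\diam}$. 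That contradiction is what forces every $\hA^{(j)}_{i,\cS},\hB^{(j)}_{i,\cS}$ to equal $A^{(j)}_{i,\cS},B^{(j)}_{i,\cS}$, giving matching perfectness $1$, after which your application of \Cref{lem:cut-strategy} goes through. (Your side remark that the separation factor confines each flow path to a single cluster is correct and is indeed the same geometric fact, but you deploy it only to justify well-definedness of the per-cluster matchings, not where it is actually needed.)
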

\begin{proof}
    As $C^*$ is an empty cut, every cut $C^{(j)}_{i,\cS}$ is an empty cu. According to \Cref{lem:cut-strategy}, that means $\hA^{(j)}_{i,\cS},\hB^{(j)}_{i,\cS}$ must be equal to $A^{(j)}_{i,\cS},B^{(j)}_{i,\cS}$ (here $\hA,\hB$ are the vertices saturated by the flow), otherwise there exists $a\in \supp(\hA^{(j)}_{i,\cS}),b\in \supp(\hB^{(j)}_{i,\cS})$ such that $a,b$ are $h_{\diam}\cdot s$-far. In this case, we must have $a,b$ in different clusters $S\in\cS$. In other words, if we look at the cluster $S$ that contains $a$, we have that all vertices in $B^{(j)}_{i,S}$ are saturated. Notice that $|A^{(j)}_{i,S}|=|B^{(j)}_{i,S}|$, so there must be a flow from $A^{(j)}_{i,S}$ to $B^{(j)}_{i,S'}$ for some $S'\not=S$. This is a contradiction as $S,S'$ are at least $h_{\diam}\cdot s$ far, but the flow has length at most $h_{\diam}\cdot s$. 
    
    Thus, we must have $\hA^{(j)}_{i,\cS},\hB^{(j)}_{i,\cS}$ equal to $A^{(j)}_{i,\cS},B^{(j)}_{i,\cS}$ for every $i,j,\cS$. This means the matching perfectness is $1$. According to \Cref{lem:cut-strategy}, $H_S$ finally is a router for $A_S$ with $1/\eps$ step and $n^{O(\eps)}$ congestion.

\end{proof}

\paragraph{Witness.} Suppose $C^*$ is empty, then the algorithm does not output $C^*$, but an expander witness. The correctness follows from \Cref{lem:routers} and the above arguments. 

\begin{proof}[Proof of \Cref{lem:approxsparsecut}]

Now we assume $C^*$ is not empty and we will prove that the algorithm returns an approximately demand-size-largest sparse cut. 

We first show the complexity argument.

\paragraph{Complexity.} Step 1 uses \Cref{thm:neicov}, which costs $h\cdot n^{O(\eps)}$ depth and $h\cdot n^{O(\eps)}\cdot m$ work. 

Step 2 consists of $1/\eps$ iterations. For each iteration, there are three steps.

Firstly, it simultaneously runs cut strategies for all $S\in\cS\in\cN$. Notice that according to \Cref{thm:neicov}the total number of vertices and edges for all $S\in\cS\in\cN$ are at most $n^{O(\eps)}\cdot n$ and $n^{O(\eps)}\cdot m$ because each vertex in $V$ is included in at most $n^{O(\eps)}$ many different $S$. Thus, according to \Cref{lem:cut-strategy}, all the cut-strategies in total uses depth $n^{f(\eps)}\cdot\poly(h)$ and work $m\cdot n^{f(\eps)}\cdot\poly(h)$ where $f$ is a function that goes to $0$ as $\eps$ goes to $0$.

Secondly, it simultaneously compute a cutmatch for each $j\in[b]$ and $\cS\in\cN$ by applying \Cref{lem:matching-strategy}, in total costs $\tO{m\cdot\poly(h)\cdot n^{O(\eps)}}$ work and $\tO{\poly(h)}$ depth. 

Thirdly, it updates the graphs, which is straightforward by using the computed flow, so the work and depth are subsumed by other operations. The output is by union of all the computed cut, so the cost is subsumed by other costs.

\paragraph{Correctness.} We now proceed to prove the correctness. 

The intuition for the correctness are as follows. According to \Cref{lem:LDSCSAtMostLDSC}, i.e., demand-size largest sparse cut (DLSC) is roughly at most the largest expander's complement size (LEC). Recall from \Cref{def:LEC} that largest expander's complement size (LED) is roughly less than $A-\hat{A}$ for every expanding node-weighting $\hat{A}\preceq A$. Thus, our plan is to construct a large expanding node-weighting to serve as an upper bound for DLSC, and then the demand-size of the returned cut $C^*$ will not be too smaller than that value because of the guarantee from \Cref{lem:matching-strategy}. 

For every $S\in\cS\in\cN$, let the matching perfectness of the cut-matching game on $H_S$ be $(1-\alpha_S)$. In other words, for every iteration $i$, we have $\sum_{j}U\left(M^{(j)}_{i,S}\right)\ge (1-\alpha_S)\cdot\sum_j|A^{(j)}_{i,S}|$. Moreover, there exists an iteration $i^*_S$ such that $\sum_{j}U\left(M^{(j)}_{i^*_S,S}\right)= (1-\alpha_S)\cdot\sum_j|A^{(j)}_{i^*_S,S}|$.

\paragraph{Correctness: Constructing an expanding node-weighting.} According to \Cref{lem:cut-strategy}, for every $S\in\cS\in\cN$, there exists a node-weighting $A'_S\preceq A_S$ (here $A_S$ is the node-weighting $A$ restricting to $S$) with $|A'_S|\ge (1-O(\frac{\alpha_S}{\eps}))\cdot|A|$ such that $H_S$ in the end becomes a $1/\eps$-step and $n^{O(\eps)}$-router for $A'_S$. We define $\hat{A}$ in the following way. For every node $v\in V$, if there exists $S\in\cS$ such that $A'_S(v)\le A_S(v)/2$, let $\hat{A}(v)=0$, otherwise let $\hat{A}(v)=A_S(v)$. Clearly, we have $\hA\preceq A$. We will show that $\hA$ is expanding by using \Cref{thm:flow character}.

\begin{lemma}\label{lem:expandingoftA}
    $\hA$ is $(h_{\cov},s^{O(1/\eps)})$-length $\phi/n^{O(\eps)}$-expanding on $G$.
\end{lemma}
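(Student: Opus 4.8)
The plan is to invoke the routing characterization of length-constrained expanders (\Cref{thm:flow character}): to show $\hat{A}$ is $(h_{\cov}, s^{O(1/\eps)})$-length $\phi/n^{O(\eps)}$-expanding, I would take an arbitrary $h_{\cov}$-length $\hat{A}$-respecting demand $D$ and exhibit a flow routing $D$ in $G$ with dilation $h_{\cov}\cdot s^{O(1/\eps)}$ and congestion $n^{O(\eps)}/\phi$. Since $D$ is $h_{\cov}$-length, the covering-radius property of the neighborhood cover $\cN$ from Step 1 guarantees that for every pair $(u,v)$ with $D(u,v)>0$, both endpoints lie together inside some cluster $S\in\cS\in\cN$ (more precisely, $D$ restricted to pairs within distance $h_{\cov}$ can be split across clusters so that $D = \sum_{S\in\cS\in\cN} D_S$ where each $D_S$ is supported inside $S$). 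By construction $D$ is $\hat{A}$-respecting, and on $\supp(\hat{A})$ we have $\hat{A}(v) = A_S(v) \le 2 A'_S(v)$ for the relevant cluster $S$; hence $D_S$ is (up to a factor $2$) $A'_S$-respecting.

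Next I would route each $D_S$ first through the router $H_S$ and then push it down into $G$ via the embedding. Recall from \Cref{lem:routers} (whose hypothesis $C^*$ empty may not hold in general — but in the regime where we are proving expansion of $\hat{A}$, the relevant guarantee is the one from \Cref{lem:cut-strategy}) that $H_S$ is a $1/\eps$-step $n^{O(\eps)}$-router for $A'_S$; so $D_S$ (scaled down by $2$) routes in $H_S$ with step $1/\eps$ and edge-congestion $n^{O(\eps)}$. Each edge of $H_S$ was added in Step 2 together with a flow path in $G$ of length at most $s\cdot h_{\diam}$, and the embedding $\cup_{S}H_S \hookrightarrow G$ has vertex-congestion $n^{O(\eps)}/\phi'$ by the argument following the algorithm output, where $\phi' = \phi/n^{O(\eps)}$. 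Concatenating: each $D_S$-path of step $\le 1/\eps$ in $H_S$ becomes a $G$-path of length $\le (1/\eps)\cdot s\cdot h_{\diam}$; since $h_{\diam} = h = h_{\cov}\cdot s^{O(1/\eps)}$, this is $h_{\cov}\cdot s^{O(1/\eps)}$ as desired. The congestion compounds multiplicatively: $n^{O(\eps)}$ (router congestion) times $n^{O(\eps)}/\phi'$ (embedding congestion) times $O(1)$ (the width $\omega = n^{O(\eps)}$ of $\cN$, since a vertex lies in $n^{O(\eps)}$ clusters, but the clusters within one clustering $\cS$ are disjoint and far apart so only the $\omega$ clusterings stack) gives $n^{O(\eps)}/\phi$. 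Feeding these bounds back into \Cref{thm:flow character} then certifies the claimed expansion.

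The main obstacle I anticipate is bookkeeping the interaction between the separation factor of $\cN$ and the length bound used in the cut-matching game. The matching strategy (\Cref{lem:matching-strategy}) is applied with length bound $s\cdot h_{\diam}$ on the \emph{combined} node-weightings $A^{(j)}_{i,\cS} = \sum_{S\in\cS} A^{(j)}_{i,S}$, so I must argue that a flow path of length $\le s\cdot h_{\diam}$ cannot cross between two clusters of the same clustering $\cS$ — this is exactly why $\cN$ was built with separation factor $2s$ in Step 1, and I will need to spell out that any successfully-routed demand pair therefore stays within a single cluster, which is what makes the matchings $M^{(j)}_{i,S}$ well-defined and the per-cluster router guarantee applicable. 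A secondary subtlety is the factor-$2$ slack between $\hat{A}$ and $A'_S$: I must check that halving $D_S$ and then scaling the resulting flow back up by $2$ only costs a factor $2$ in congestion, which is harmless inside $n^{O(\eps)}$. Finally I should double-check the exact exponent in $s^{O(1/\eps)}$: the covering radius in Step 1 is $h/(\frac{1}{\eps}(2s)^{O(1/\eps)})$, the router adds a $1/\eps$ step factor, and the embedding adds an $s$ length factor, so the total dilation-to-covering-radius ratio is $(2s)^{O(1/\eps)}\cdot(1/\eps)\cdot s = s^{O(1/\eps)}$ after absorbing the $1/\eps$ (using $s>2$), which matches the statement.
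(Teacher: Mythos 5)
Your proposal is correct and follows essentially the same route as the paper's proof: invoke the routing characterization (\Cref{thm:flow character}), use the covering radius to confine each demand pair to a single cluster, exploit the factor-$2$ relation between $\hat A$ and $A'_S$ to route through the router $H_S$ guaranteed by \Cref{lem:cut-strategy}, and then project onto $G$ via the edge-to-path embedding, with the same length and congestion bookkeeping (separation factor $2s$ preventing congestion from stacking across clusters of one clustering, and the width $\omega=n^{O(\eps)}$ accounting for the different clusterings). Your side remarks — that the router guarantee here comes from \Cref{lem:cut-strategy} rather than \Cref{lem:routers}, and that the factor-$2$ scaling is absorbed into $n^{O(\eps)}$ — are exactly the points the paper's argument relies on.
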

\begin{proof}
    According to the second point of \Cref{thm:flow character}, we only need to show that, for every $h_{cov}$-length demand $D$ that is $\hA$ respecting, we can route $D$ on $G$ with congestion $n^{O(\eps)}/\phi$ and length (dilation) at most $h_{\cov}\cdot s^{O(1/\eps)}$.

    For every positive demand $D(u,v)>0$, since $\dist_G(u,v)\le h_{\cov}$, according to the definition of neighborhood cover, there exists $S\in\cS\in\cN$ such that both $u,v$ are in $S$. Moreover, since $D$ is $\hA$ respecting, we have $A'_S(v)\ge A_S(v)/2$. We route the demand $D(u,v)$ by using the $1/\eps$-step $n^{O(\eps)}$-router $H_S$ with an additional congestion factor of at most $2$ (because $D(u,v)$ can be as large as $A_S(v)$ but in the router we can only have $A'_S(v)$). We will get a routing path on $H_S$ from $u$ to $v$ and we need to map it back to the original graph $G$, this can be done by mapping each edge in $H_S$ to a path in $G$: remember that each edge in $H_S$ corresponds to a flow in step (update graphs), which according to the step (compute a cutmatch), has length bound $s\cdot h_{\diam}=s^{O(1/\eps)}\cdot h$. This gives the length bound.
    
    To see the total congestion, notice that every edge in $H_S$ is used at most $n^{O(\eps)}$ times, where edges in $H_S$ can be decomposed into at most $(1/\eps)\cdot b=(1/\eps)\cdot n^{O(\eps)}=n^{O(\eps)}$ (remember that $\eps>\log^{0.1}n$) matchings, each matching can be mapped to flow paths of the original graph $G$ with vertex congestion $\tO{1/\phi'}=n^{O(\eps)}/\phi$ according to the usage of \Cref{lem:matching-strategy}. Thus, routing in a single $S$ cause a total vertex congestion of $n^{O(\eps)}/\phi$ in $G$. For a single $\cS\in\cN$, it is easy to see that different routings in difference $S\in\cS$ do not share vertices with each other: every two clusters in $\cS$ are $2s\cdot h_{\diam}$-far, and remember that every flow path for the routing is $s\cdot h_{diam}$-length bounded. So we can route all the demands contributed by $\cS\in\cN$ in vertex congestion of $n^{O(\eps)}/\phi$. Notice that $|\cN|$ is the width of the neighborhood cover, bounded by $n^{O(\eps)}$, so the total vertex congestion is $n^{O(\eps)}/\phi$.
\end{proof}

According to the definition of LEC \Cref{def:LEC}, we get the following inequality from \Cref{lem:expandingoftA}.

\[LEC(\phi/n^{O(\eps)},h_{\cov},s^{O(1/\eps)})\le \left(\phi/n^{O(\eps)}\right)\cdot(|A|-|\hA|)\]

By using \Cref{lem:DLSCAtMostLEC}, we can get an upper bound for DLSC.

\begin{equation}\label{eq:DLSC}
    DLSC\left(\phi/\left(s^3\cdot n^{O(\eps+1/s)}\right),h_{\cov}/2,s^{O(1/\eps)}\right)\le \left(s^3\cdot n^{O(\eps+1/s)}\right)\cdot (|A|-|\hA|)
\end{equation}

Next, we will show the cut returned has demand-size close to $(|A|-|\hA|)$. 

\paragraph{Correctness: Returned cut is approximately demand-size-largest sparse.} We first show that the returned cut $C^*$ is sparse. In the proof we will define a demand $D$ to show the sparsity of $C^*$. $D$ will also be used later to bound the demand size of $C^*$. 

\begin{lemma}\label{lem:C*issparse}
    If $C^*$ is not empty, $C^*$ is a $(h,s)$-length $\phi$-sparse cut with respect to the node-weighting $A$.
\end{lemma}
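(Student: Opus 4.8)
## Proof Plan for Lemma \ref{lem:C*issparse}

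The plan is to exhibit an explicit $h$-length $A$-respecting demand $D$ that is $hs$-separated by $C^*$ and whose size is at least $|C^*|/\phi$, which is precisely what is needed to certify $(h,s)$-length $\phi$-sparsity with respect to $A$ (cf. \Cref{def:sparsity}). The natural candidate for $D$ is built from the flows rejected by the matching strategy across all iterations and clusters. Specifically, for each $\cS \in \cN$, each iteration $i \in [1/\eps]$, and each batch index $j \in [b]$, the call to \Cref{lem:matching-strategy} on $(A^{(j)}_{i,\cS}, B^{(j)}_{i,\cS})$ with length bound $s\cdot h_{\diam}$ and sparsity $\phi'$ returns a flow $F^{(j)}_{i,\cS}$ routing some $\hat{A}^{(j)}_{i,\cS}\preceq A^{(j)}_{i,\cS}$ to $\hat{B}^{(j)}_{i,\cS}\preceq B^{(j)}_{i,\cS}$, together with a moving cut $C^{(j)}_{i,\cS}$ of size $|C^{(j)}_{i,\cS}| \le \phi'\cdot(|A^{(j)}_{i,\cS}| - \val(F^{(j)}_{i,\cS}))$, such that the unsaturated source vertices $\{v : \hat{A}^{(j)}_{i,\cS}(v) < A^{(j)}_{i,\cS}(v)\}$ and unsaturated sink vertices are $(s\cdot h_{\diam})$-far in $G - C^{(j)}_{i,\cS}$. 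The demand $D$ will be the union (over all $i,j,\cS$) of the demands witnessing these separations, appropriately capped so that $D$ remains $A$-respecting.

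First I would define, for each triple $(i,j,\cS)$, a demand $D^{(j)}_{i,\cS}$ supported on pairs $(a,b)$ with $a$ an unsaturated source vertex and $b$ an unsaturated sink vertex of the $j$-th cutmatch in iteration $i$ for clustering $\cS$, with total size $|A^{(j)}_{i,\cS}| - \val(F^{(j)}_{i,\cS})$ and respecting the residual node-weights $A^{(j)}_{i,\cS} - \hat{A}^{(j)}_{i,\cS}$ on the source side and $B^{(j)}_{i,\cS} - \hat{B}^{(j)}_{i,\cS}$ on the sink side. Since these residual node-weightings are pointwise at most $A$ (because $A^{(j)}_{i,\cS}+B^{(j)}_{i,\cS}\preceq A$ by the cut strategy axioms, summed over clusters in $\cS$ which are vertex-disjoint within a single clustering), and since for distinct batch indices $j$ within a fixed iteration the node-weightings $A^{(j)}_{i,\cS}$ are disjoint up to $A$, one only has to be careful about accumulation across the $1/\eps$ iterations and across clusterings $\cS \in \cN$. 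The total blow-up here is at most $(1/\eps)\cdot\omega = n^{O(\eps)}$, so after scaling each $D^{(j)}_{i,\cS}$ down by this factor and summing, the resulting demand $D$ is $A$-respecting. Each pair in $D^{(j)}_{i,\cS}$ is $(s\cdot h_{\diam})$-separated by $C^{(j)}_{i,\cS}$, hence $(s\cdot h_{\diam})$-separated by $C^*$ since $C^* \succeq C^{(j)}_{i,\cS}$ and applying more cut only increases distances; moreover each such pair has distance at most $s\cdot h_{\diam}$ in $G$ (it was fed as source/sink to the matching strategy, which only makes sense for pairs already within that length — more precisely, pairs at distance $>s\cdot h_{\diam}$ contribute nothing useful, so we may restrict $D^{(j)}_{i,\cS}$ to $h$-length pairs, recalling $h_{\diam}=h$, and actually the separation direction is the relevant one: the endpoints lie in a common cluster of $\cS$ of diameter $h_{\diam}=h$). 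Thus $D$ is an $h$-length $A$-respecting demand $hs$-separated by $C^*$.

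It remains to check $|D| \ge |C^*|/\phi$, i.e., that $C^*$ is not too large relative to what it separates. Here I would sum the size bound $|C^{(j)}_{i,\cS}| \le \phi'\cdot(|A^{(j)}_{i,\cS}| - \val(F^{(j)}_{i,\cS}))$ over all $i,j,\cS$. The right-hand side is exactly $\phi'$ times the total un-routed mass, which equals (up to the scaling factor $n^{O(\eps)}$ absorbed above) $|D|$ before scaling, i.e., $n^{O(\eps)}\cdot|D|$ after scaling. Hence $|C^*| \le \sum |C^{(j)}_{i,\cS}| \le \phi' \cdot n^{O(\eps)} \cdot |D| = (\phi/n^{O(\eps)})\cdot n^{O(\eps)}\cdot|D|$, and by choosing the exponent in $\phi' = \phi/n^{O(\eps)}$ (Eq.~\eqref{eq:phiPrime}) large enough to dominate the loss factor, we get $|C^*| \le \phi\cdot|D|$, i.e., $\spa_{hs}(C^*, D) \le \phi$, so $\spa_{(h,s)}(C^*, A) \le \phi$ as claimed.

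The main obstacle I anticipate is the bookkeeping of the node-weighting accumulation: one must verify carefully that the residual source/sink demands from different batches $j$ (same iteration), from different iterations $i$, and from different clusterings $\cS$ overlap in a controlled way, so that their sum, after dividing by $(1/\eps)\cdot\omega = n^{O(\eps)}$, is genuinely $A$-respecting — in particular that a single vertex is not over-counted beyond this factor. This relies on the cut-strategy axioms ($A^{(j)}+B^{(j)}\preceq A$ and $A \preceq \sum_j (A^{(j)}+B^{(j)})$) together with the fact that within one clustering $\cS$ the clusters are vertex-disjoint and the neighborhood cover has width $\omega = n^{O(\eps)}$. Getting the exact constant in the exponent right (so that it is consistent with the $\phi' = \phi/n^{O(\eps)}$ choice and with the statement's $\alpha_\phi = s^3 n^{O(\eps+1/s)}$) is the delicate part, but it is purely a matter of matching exponents, not a conceptual difficulty.
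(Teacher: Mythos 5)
Your proposal is correct and follows essentially the same route as the paper's proof: you construct the same witnessing demand from the unsaturated source/sink mass of each cutmatch $(F^{(j)}_{i,\cS},C^{(j)}_{i,\cS})$, use the $(s\cdot h_{\diam})$-separation guarantee of \Cref{lem:matching-strategy} together with monotonicity under adding more cut, scale by the same $n^{O(\eps)}$ factor to make the union $A$-respecting, and close with the identical ratio bound $|C^*|/|D|\le \phi'\cdot n^{O(\eps)}\le\phi$. The bookkeeping concern you flag at the end is handled in the paper exactly as you suggest, by absorbing the $(1/\eps)\cdot b\cdot\omega$ overcounting into the $n^{O(\eps)}$ slack built into the choice of $\phi'$.
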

\begin{proof}
    Recall that $C^*$ is the union of cuts $C^{(j)}_{i,\cS}$ for $j\in[b],i\in[1/\eps]$ and $\cS\in\cN$, and $C^{(j)}_{i,\cS}$ is computed from \Cref{lem:matching-strategy} which is guaranteed to satisfy

    \[|C^{(j)}_{i,\cS}|\le \phi'\cdot \left(|A^{(j)}_{i,\cS}|-\val(F^{(j)}_{i,\cS})\right)\]

    Moreover, suppose $F^{(j)}_{i,\cS}$ routes the demand $\tA^{(j)}_{i,\cS}\preceq A^{(j)}_{i,\cS}$ to $\tB^{(j)}_{i,\cS}\preceq  B^{(j)}_{i,\cS}$. Then, the two vertex sets which are not saturated by $F^{(j)}_{i,\cS}$, i.e., $V_1=\{v\mid \tA^{(j)}_{i,\cS}(v)<A^{(j)}_{i,\cS}(v)\}$ and $V_2=\{v\mid \tB^{(j)}_{i,\cS}(v)<B^{(j)}_{i,\cS}(v)\}$, they are $s\cdot h_{\diam}$-far. Thus, we can define a demand $D^{(j)}_{i,\cS}$ from $V_1$ to $V_2$ that when combined with the node-weighting saturated by $F^{(j)}_{i,\cS}$, becomes the full node-weighting $A^{(j)}_{i,\cS}$. $D^{(j)}_{i,\cS}$ is an $h_{\diam}$-length demand separated by $C^{(j)}_{i,\cS}$ and thus $C^*$, with separation length $s\cdot h_{\diam}=s \cdot h$ and demand size $\left(|A^{(j)}_{i,\cS}|-\val(F^{(j)}_{i,\cS})\right)$. We define the demand $D$ as the union of all demands $D^{(j)}_{i,\cS}$ scale down by $n^{O(\eps)}\cdot (1/\eps)=n^{O(\eps)}$ (to make sure it is $A$ respecting, because $i$ can iterate over $1/\eps$ choices and $j,\cS$ can iterate over $n^{O(\eps)}$ choices). According to the definition of sparse cut \Cref{def:sparsity}, we can give the sparsity bound of $C^*$ as (remember $h_{\diam}=h$)

    \begin{align*}
        \spa_{(h,s)}(C^*,A)&\le \frac{|C^*|}{|D|}\\
        &\le \frac{\sum_{i,j,\cS}|C^{(j)}_{i,\cS}|}{\sum_{i,j,\cS}|D^{(j)}_{i,\cS}|}\cdot n^{O(\eps)}\\
        &\le \phi'\cdot n^{O(\eps)}\\
        &\le \phi
    \end{align*}

    The third inequality comes from the fact that $|C^{(j)}_{i,\cS}|/|D^{(j)}_{i,\cS}|=|C^{(j)}_{i,\cS}|/\left(|A^{(j)}_{i,\cS}|-\val(F^{(j)}_{i,\cS})\right)\le\phi'$ for all $i,j,\cS$ from \Cref{lem:matching-strategy}.
\end{proof}

Then we bound the demand-size of $C^*$. From the proof of \Cref{lem:C*issparse}, we have already shown that $D$ is separated by $C^*$ by length $h\cdot s$. $D$ is $A$-respecting $h$-length, and has size 

\begin{equation}\label{eq:sizeofD}
    |D|=\frac{1}{n^{O(\eps)}}\cdot \sum_{i,j,\cS}|D^{(j)}_{i,\cS}|= \frac{1}{n^{O(\eps)}}\cdot \sum_{i,j,\cS}\left(\left(\sum_{S\in\cS}|A^{(j)}_{i,S}|\right)-\val(F^{(j)}_{i,\cS})\right)
\end{equation}

Recall the definition of $\hA$: a vertex $v$ has $\hA(v)<A(v)$ only happens when $\hA(v)=0$ and there exists some $S\in\cS\in\cN$ such that $A'_S(v)\le A_S(v)/2$. The definition of $A'_S$ is from \Cref{lem:cut-strategy}, i.e., an expanding sub-node-weighting of $A_S$ which has the size guarantee to be at least $(1-\frac{O(\alpha_S)}{\eps})|A_S|$ where $\alpha_S$ is the matching perfectness on the cut-matching game of $S$. Thus, every $v$ with $\hA(v)<A(v)$ contributes to the difference $\frac{O(\alpha_S)}{\eps}|A_S|$ between $|A_S|$ and $|A'_S|$ for some $S$, so we have
\[|A|-|\hA|\le \sum_{S\in\cS\in\cN}\frac{O(\alpha_S)}{\eps}|A_S|\]

According to the definition of matching perfectness, there exists an iteration $i^*_S$ for every $S$ such that $\sum_{j}U\left(M^{(j)}_{i^*_S,S}\right)= (1-\alpha_S)\cdot\sum_j|A^{(j)}_{i^*_S,S}|$, i.e., in that iteration we get

\begin{equation}\label{eq:alphas}
    \alpha_S\cdot\sum_j|A^{(j)}_{i^*_S,S}|=\sum_j|A^{(j)}_{i^*_S,S}|-\sum_{j}U\left(M^{(j)}_{i^*_S,S}\right)
\end{equation}

Recall from the definition of cut strategy that $\sum_j|A^{(j)}_{i^*_S,S}|\ge |A_S|/2$. We can take the sum over all $S$ and all iterations $i$, which gives us
\begin{align*}
    |A|-|\hA|&\le \sum_{S\in\cS\in\cN}\frac{O(\alpha_S)}{\eps}|A_S|\\
    &\le O(\frac{1}{\eps})\cdot \sum_{S\in\cS\in\cN}\alpha_S\cdot \sum_j|A^{(j)}_{i^*_S,S}|\\
    &\le^{\ref{eq:alphas}} O(\frac{1}{\eps})\cdot \sum_{S\in\cS\in\cN}\left(\sum_j|A^{(j)}_{i^*_S,S}|-\sum_{j}U\left(M^{(j)}_{i^*_S,S}\right)\right)\\
    &\le O(\frac{1}{\eps})\cdot \sum_{S\in\cS\in\cN}\sum_{i}\left(\sum_j|A^{(j)}_{i,S}|-\sum_{j}U\left(M^{(j)}_{i,S}\right)\right)\\
    &\le O(\frac{1}{\eps})\cdot \left(\sum_{i,j,S}|A^{(j)}_{i,S}|-\sum_{i,j,S}U\left(M^{(j)}_{i,S}\right)\right)\\
    &=O(\frac{1}{\eps})\cdot \left(\sum_{i,j,S}|A^{(j)}_{i,S}|-\sum_{i,j,\cS}\val\left(F^{(j)}_{i,\cS}\right)\right)
\end{align*}
\end{proof}

We can compare the last line with the size of $D$ as shown in \Cref{eq:sizeofD}, and get

\[|A|-|\hA|\le \frac{n^{O(\eps)}}{\eps}|D|\le n^{O(\eps)}\cdot |D| \]

Combine with \Cref{eq:DLSC}, we get

\[DLSC\left(\phi/\left(s^3\cdot n^{O(\eps+1/s)}\right),h_{\cov}/2,s^{O(1/\eps)}\right)\le \left(s^3\cdot n^{O(\eps+1/s)}\right)\cdot |D|\]

Recall that $D$ is $h$-length which is $hs$-far in $G-C^*$. Thus, we have that 

\[A_{(h,s)}(C^*)\ge |D|\ge \frac{1}{s^3\cdot n^{O(\eps+1/s)}}\cdot DLSC\left(\phi/\left(s^3\cdot n^{O(\eps+1/s)}\right),h/s^{O(1/\eps)},s^{O(1/\eps)}\right)\]

This proves the correctness of \Cref{lem:approxsparsecut} (remember that both $A_{(h,s)}(C)$ and $DLSC$ increases as $h$ increases or $s$ decreases by \Cref{lem:monotonicity}).

\section{Expander Decomposition from Approximately DLSC}\label{sec:expanderfromDLSC}

In this section, we will show how to get length-constraint expander decomposition by repeatedly calling and cutting approximately demand-size-largest sparse cut. We will prove the following theorem.

\vertexLCED*

Notice that if $\eps$ is too small, say less than $o(1/\log\log n)$, then the length slackness becomes an arbitrary large polynomial on $n$. In this case, \Cref{thm:vertexLC-ED} is trivial since every cut cannot separate any demand as $h\cdot s$ is too large, so $G$ is an expander anyway. Thus, in what follows we assume $\eps=\Omega(1/\log \log n)$. 

Recall that from \Cref{lem:approxsparsecut} and the definition \Cref{dfn:apxLargeCut}, we can use $\textsc{ApxDLSC}(G,A,h,s,\phi,\eps)$ to get a $(h,s)$-length $\phi$-sparse cut $C$ such that the demand size is at least $\frac{1}{\alpha}$ fraction of the largest demand size among all $(h/\alpha_s,s\cdot \alpha_s)$-length $\phi/\alpha_\phi$-sparse cut. Here the parameters $\alpha,\alpha_s,\phi_\phi$ depend on $n,\eps,s$ can be found in \Cref{lem:approxsparsecut}, we state them here for convenience. 

\ApxSparsecut*
To recall the definition of approximate sparse cut, we restate the definition as follows.

\apxsparsecut*

\paragraph{Intuition.} The algorithm repeatedly finds and cuts the approximately demand-size-largest $(h,s)$-length $\phi$-sparse cut until the demand size decreases by a factor of $1/n^{O(\eps)}$. We can prove that there will be not too many such cuts because of \Cref{lem:LDSCSAtMostLDSC}. For convenience, we restate the lemma as below.

\LDSCSAtMostLDSC*

The demand-size of a cut sequence is not too large than the demand-size of one cut. We call the above procedure an `epoch'. One can see that after this epoch, the demand-size-largest $(h/\alpha_s,s\cdot \alpha_s)$-length $\phi/\alpha_\phi$-sparse cut get reduced by a factor of $1/n^{O(\eps)}$. We update $h,s,\phi$ to the corresponding new parameters and continue to the next epoch to further reduce the demand-size. After $O(1/\eps)$ epochs, the graph will have no sparse cut and it becomes a good expander. The detailed algorithm is stated as follows.

\Cref{lem:approxsparsecut} did not specify the constant hidden inside $O$ for $\alpha,\alpha_\phi,\alpha_s$. We use $c$ in the algorithm to denote a sufficiently large constant that is larger than every constant hidden inside these $O$. The same holds for the constants hidden inside \Cref{lem:LDSCSAtMostLDSC}. This is to prevent potential confusion.

\newcommand{\li}{\l}
\begin{algorithm}
\caption{$C \leftarrow \textsc{VertexLC-ED}(G,A,h,\phi)$}
\label{alg:EDsfromCuts}
\begin{algorithmic}[1]
    \State $z \gets n^{2c\sqrt{\eps}},\quad \li_{0,z} \gets 0$
    \State $h_0 \gets h\cdot (1/\eps)^{(3c/\eps)^{1/\sqrt{\eps}}},\quad s_0 \gets 1/\eps,\quad \phi_0 \gets \phi \cdot n^{5c\sqrt{\eps}}$
    \For{$\epo = 1, \ldots, 1/\sqrt{\eps}$}
        \State $\li_{\epo,0} \gets \li_{\epo-1,z}$
        \For{$j = 1, \ldots, z$}
            \State $C_{\epo,j} \gets \textsc{ApxDLSC}(G_{\epo,j-1},A,h_\epo,s_\epo,\phi_\epo,\eps)$
            \State $\li_{\epo,j} \gets \li_{\epo,j-1}+h_{\epo}\cdot C_{\epo,j}$
            \State $G_{\epo,j} \gets G+\li_{\epo,j}$
        \EndFor
        \State $h_{\epo+1} \gets h_{\epo}/s_{\epo}^{2c/\eps}$
        \State $s_{\epo+1} \gets s_{\epo}\cdot s_{\epo}^{2c/\eps}$
        \State $\phi_{\epo+1} \gets \phi_{\epo}/n^{5c\eps}$
    \EndFor
    \State \textbf{return} $\li_{1/\sqrt{\eps},z}/h$
\end{algorithmic}
\end{algorithm}

\begin{proof}[Proof of \Cref{thm:vertexLC-ED}]
    We use the algorithm as described in \Cref{alg:EDsfromCuts}. We first show the correctness. The correctness follows from the following lemma, describing the invariance throughout the algorithm. 
    
    \begin{lemma}\label{lem:invariance}
        At the end of the $\epo$ iteration, we have 
        \[DLSC_{G_{\epo,z}}\left(\phi_{\epo+1}\cdot n^{2c\eps},2h_{\epo+1},\frac{s_{\epo+1}-2}{2}\right)\le |A|\cdot \left(\frac{1}{n^{c\sqrt{\eps}}}\right)^\epo\]
    \end{lemma}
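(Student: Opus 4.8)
\textbf{Proof plan for \Cref{lem:invariance}.}
The plan is to prove the invariant by induction on $\epo$, where the inductive step analyzes a single epoch (the inner loop of $z$ iterations of $\textsc{ApxDLSC}$). The base case $\epo = 0$ is trivial since $DLSC(\cdot,\cdot,\cdot) \le |A|$ always (any $A$-respecting demand has size at most $|A|$), and $n^{2c\sqrt{\eps}} = z$ is the empty-product factor. For the inductive step, fix $\epo$ and assume the bound holds at the end of epoch $\epo-1$, i.e., for $G_{\epo,0} = G_{\epo-1,z}$. I would track how the demand-size-largest sparse cut quantity drops over the $z$ iterations of the inner loop.

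The key structural tool is that the sequence of cuts $C_{\epo,1}, C_{\epo,2}, \ldots, C_{\epo,z}$ produced inside epoch $\epo$ is a sequence of sparse moving cuts in the sense of \Cref{dfn:movingCutSequence}: each $C_{\epo,j}$ is an $(\alpha_h h_\epo, s_\epo)$-length $\phi_\epo$-sparse cut in $G_{\epo,j-1} = G + \li_{\epo,j-1}$, which is exactly $G$ with all previous cuts of this epoch (and earlier epochs) applied. Here I would invoke \Cref{lem:approxsparsecut} to control the length blowup $\alpha_h = \alpha_s = s_\epo^{O(1/\eps)}$ and sparsity relaxation $\alpha_\phi = s_\epo^3 n^{O(\eps+1/s_\epo)}$; these are the reasons the parameters are updated as $h_{\epo+1} = h_\epo/s_\epo^{2c/\eps}$, $s_{\epo+1} = s_\epo \cdot s_\epo^{2c/\eps}$, $\phi_{\epo+1} = \phi_\epo/n^{5c\eps}$ — the constant $c$ is chosen large enough to absorb all these slacks. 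The first main step is then: \emph{either} some $\textsc{ApxDLSC}$ call in this epoch returns an empty cut, in which case $G_{\epo,j-1}$ has no $(h_\epo/s_\epo^{O(1/\eps)}, s_\epo \cdot s_\epo^{O(1/\eps)})$-length $\phi_\epo/n^{O(\eps)}$-sparse cut at all, so by the definition of $DLSC$ (\Cref{def:LDSC}) and monotonicity (\Cref{lem:monotonicity}) the target quantity is already $0$ and we are done; \emph{or} all $z$ cuts are non-empty. In the latter case, by the approximation guarantee (\Cref{dfn:apxLargeCut}), each cut $C_{\epo,j}$ separates an $A$-respecting $h_\epo$-length demand of size at least $\frac{1}{\alpha}\,\qLDSC_{G_{\epo,j-1}}(\phi_\epo/\alpha_\phi,\, h_\epo/\alpha_s,\, s_\epo\alpha_s)$, where $\alpha = s_\epo^3 n^{O(\eps+1/s_\epo)}$.

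The second main step is to chain these together. Let $X_j = DLSC_{G_{\epo,j}}(\phi_{\epo+1} n^{2c\eps},\, 2h_{\epo+1},\, (s_{\epo+1}-2)/2)$; I claim $X_z$ is small. By \Cref{lem:LDSCSAtMostLDSC} applied with appropriate parameters, the \emph{sum} of demand-sizes of the sparse cut sequence $(C_{\epo,1},\ldots,C_{\epo,z})$ — which lower-bounds $\sum_j \frac{1}{\alpha}\qLDSC$ — is at most $s_\epo^3 \log^3 n \cdot n^{O(1/s_\epo)} \cdot \qLDSC_{G_{\epo,0}}(\ldots) \le s_\epo^3 \log^3 n \cdot n^{O(1/s_\epo)} \cdot |A| (1/n^{c\sqrt\eps})^{\epo-1}$ by the inductive hypothesis (matching the parameters requires the $h$ and $s$ updates to line up, which is why the $h_\epo$ update loses a $s_\epo^{2c/\eps}$ factor on each side — a factor of $2$ for the $2h$ in the statement, and $s_\epo^{O(1/\eps)}$ for the $\alpha_s$ blowup). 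On the other hand, if every one of the $z = n^{2c\sqrt\eps}$ cuts in the epoch had demand-size at least $\tau := \frac{1}{\alpha}X_{j-1}$ (monotonicity of $\qLDSC$ in the graph: applying more cuts only shrinks $DLSC$, so $\qLDSC_{G_{\epo,j-1}} \ge X_{j-1} \ge X_z$), then $z \cdot \tau \le \sum_j A_{(h_\epo,s_\epo)}(C_{\epo,j}) \le$ (the $\qLDSCS$ bound above). Rearranging, $X_z \le \alpha \cdot \frac{s_\epo^3 \log^3 n \cdot n^{O(1/s_\epo)}}{z}\,|A|(1/n^{c\sqrt\eps})^{\epo-1}$. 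Since $s_\epo \ge 1/\eps$ (it only grows), $n^{O(1/s_\epo)} = n^{O(\eps)}$, and $\alpha \cdot s_\epo^3 \log^3 n \cdot n^{O(\eps)} \le n^{O(\eps)}$ (using $\eps = \Omega(1/\log\log n)$ to absorb $\log^3 n$ and $s_\epo^3$ into $n^{O(\eps)}$ — here $c$ large), dividing by $z = n^{2c\sqrt\eps}$ gives an extra factor $n^{O(\eps)}/n^{2c\sqrt\eps} \le 1/n^{c\sqrt\eps}$ since $\sqrt\eps \gg \eps$. Hence $X_z \le |A|(1/n^{c\sqrt\eps})^{\epo}$, completing the induction.

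\textbf{Main obstacle.} The delicate part is bookkeeping the parameter alignment across three layers of slack: the length/sparsity slack $(\alpha_h,\alpha_s,\alpha_\phi)$ from each $\textsc{ApxDLSC}$ call, the length/sparsity slack $(h',s',\phi')$ from \Cref{lem:LDSCSAtMostLDSC} when collapsing the cut sequence, and the target parameters $(2h_{\epo+1}, (s_{\epo+1}-2)/2, \phi_{\epo+1}n^{2c\eps})$ in the statement. One must verify that the specific updates $h_{\epo+1} = h_\epo/s_\epo^{2c/\eps}$, $s_{\epo+1}=s_\epo^{1+2c/\eps}$, $\phi_{\epo+1}=\phi_\epo/n^{5c\eps}$, together with $c$ sufficiently large, make every inequality go through simultaneously — in particular that $2h_{\epo+1} \le h_\epo/\alpha_s$ (the $\textsc{ApxDLSC}$ side) $\le h_\epo/s_\epo^{O(1/\eps)}$, and symmetrically for $s$ and $\phi$, and that after $1/\sqrt\eps$ epochs the accumulated length is still $\le h\cdot s^{O(\cdot)}$ with $s = \exp\exp(O(\log(1/\eps))/\eps)$ as claimed. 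I expect this to be entirely mechanical once the per-epoch recurrence $X_z \le X_0 \cdot n^{-c\sqrt\eps}$ is established cleanly; the conceptual content is fully captured by the "empty cut vs. all-nonempty-and-large" dichotomy combined with \Cref{lem:LDSCSAtMostLDSC}.
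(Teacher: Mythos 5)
Your plan is correct and follows essentially the same route as the paper's proof: induct on $\epo$, lower-bound each $C_{\epo,j}$'s demand-size via the $\ADLSC$ approximation guarantee against the end-of-epoch $DLSC$, upper-bound the sum of demand-sizes over the epoch via \Cref{lem:LDSCSAtMostLDSC} and the inductive hypothesis, and let $z=n^{2c\sqrt{\eps}}$ force the drop (the paper merely phrases this as a contradiction rather than directly, and handles your empty-cut case implicitly). One small caution on execution: when relating $\qLDSC_{G_{\epo,j-1}}$ to $X_z$, compare $G_{\epo,j-1}$ to $G_{\epo,z}$ in a single application of \Cref{lem:monotonicity} (as the paper does, paying one factor of $2$ in $\phi$) rather than chaining $X_{j-1}\ge X_j\ge\cdots\ge X_z$, since each application of the monotonicity lemma costs a factor $2$ in sparsity and chaining would accumulate $2^{z}$.
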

    \begin{proof}
        We prove it by induction on $\epo$. When $\epo=0$, we view `the end of the $0$ iteration' as `the start of the $1$ iteration'. The induction hypothesis holds because the demand-size of every cut is at most $|A|$ according to the definition of demand-size of a cut \Cref{def:demandSize}. 

        Suppose $\epo\ge 1$. Suppose to the contrary, at the end of iteration $\epo$, we still have $DLSC_{G_{\epo,z}}(\phi_{\epo+1}\cdot n^{2c\eps},2h_{\epo+1},(s_{\epo+1}-2)/2)>|A|\cdot \left(1/n^{c\sqrt{\eps}}\right)^\epo$. Consider the $\epo$ iteration. Notice that according to the monotonicity \Cref{lem:monotonicity}, we have $DLSC_{G_{\epo,j}}(2\phi_{\epo+1}\cdot n^{2c\eps},2h_{\epo+1},(s_{\epo+1}-2)/2)>|A|\cdot \left(1/n^{c\sqrt{\eps}}\right)^\epo$ throughout the whole $\epo$ iteration since $G_{\epo,j}$ only increases its length as $j$ increases. 
        According to the output guarantee of \textsc{ApxDLSC} \Cref{lem:approxsparsecut}, the demand size of $C_{\epo,j}$ for every $j\in[z]$ is at least (notice that $s_{\epo}>1/\eps$ and $s_{\epo}<(1/\eps)^{(3c/\eps)^{1/\sqrt{\eps}}}<n^{\eps}$ for sufficiently large $n$ because of $\eps>\Omega(1/\log\log n)$)
        \begin{align*}
        &A_{G_{\epo,j-1}}(h_{\epo},s_{\epo},C_{\epo,j})\\
            \ge&\frac{1}{s^3_{\epo}\cdot n^{c\left(\eps+\frac{1}{s_{\epo}}\right)}}\cdot DLSC_{G_{\epo,j-1}}\left(\frac{\phi_{\epo}}{s^3_{\epo}\cdot n^{c\left(\eps+\frac{1}{s_{\epo}}\right)}},\frac{h_\epo}{s_{\epo}^{c/\eps}},s_\epo\cdot s_{\epo}^{c/\eps}\right)\\
            \ge&\frac{1}{n^{3c\eps}}\cdot DLSC_{G_{\epo,j-1}}\left(\frac{\phi_{\epo}}{n^{3c\eps}},\frac{h_\epo}{s_{\epo}^{c/\eps}},s_\epo\cdot s_{\epo}^{c/\eps}\right)\\
            \ge&\frac{1}{n^{3c\eps}}\cdot DLSC_{G_{\epo,j-1}}\left(2\phi_{\epo+1}\cdot n^{2c\eps},2h_{\epo+1},\frac{s_{\epo+1}-2}{2}\right)\\
            \ge&\frac{|A|}{n^{3c\eps}}\cdot \left(\frac{1}{n^{c\sqrt{\eps}}}\right)^{\epo}
        \end{align*}
        
        We have that $(C_{\epo,0},C_{\epo,1},...)$ is a sequence of $(h_\epo,s_\epo)$-length $\phi_\epo$-sparse moving cut of graph $G_{\epo-1,z}$. Thus, according to \Cref{lem:LDSCSAtMostLDSC}, we get that (remember that $s_{\epo}>1/\eps$ and $s_{\epo}<(1/\eps)^{(3c/\eps)^{1/\sqrt{\eps}}}<n^{\eps}$ for sufficiently large $n$ because of $\eps>\Omega(1/\log\log n)$)
        
        \begin{align*}
            &\sum_{j\in[z]}A_{G_{\epo,j-1}}(h_{\epo},s_{\epo},C_{\epo,j})\\
            \le & s_{\epo}^3\cdot \log ^3n\cdot n^{c/s_\epo}\cdot DLSC_{G_{\epo-1,z}}\left(s^3\cdot \log ^3n\cdot n^{c/s_\epo}\cdot n^{O(\eps)},2h_{\epo},\frac{s_{\epo}-2}{2}\right)\\
            \le & n^{2c\eps}\cdot DLSC_{G_{\epo-1,z}}\left(\phi_{\epo}\cdot n^{2c\eps},2h_{\epo},\frac{s_{\epo}-2}{2}\right)\\
            \le &n^{2c\eps}\cdot |A|\cdot \left(\frac{1}{n^{c\sqrt{\eps}}}\right)^{\epo-1}
        \end{align*}

        The last inequality is from the induction hypothesis. Combining the two inequalities, we get

        \[z\cdot \frac{|A|}{n^{3c\eps}}\cdot \left(\frac{1}{n^{c\sqrt{\eps}}}\right)^{\epo}\le n^{2c\eps}\cdot |A|\cdot \left(\frac{1}{n^{c\sqrt{\eps}}}\right)^{\epo-1}\]

        This is a contradiction as $z$ is as large as $n^{2c\sqrt{\eps}}$. 
    \end{proof}

    \Cref{lem:invariance} shows that at the end of the algorithm, when $\epo=1/\sqrt{\eps}$, we get (remember that size of $A$ scaling up to a integer is polynomial on $n$ according to our assumption)
    \[DLSC_{G_{\epo,z}}\left(\phi_{\epo+1}\cdot n^{2c\eps},2h_{\epo+1},\frac{s_{\epo+1}-2}{2}\right)<\min_v A(v)\]

    Also, notice that according to the update rule of $h_\epo,s_\epo,\phi_{\epo}$, we have
    \[s_{\epo}\le (s_0)^{(2c/\eps+1)^{\epo}}\le (1/\eps)^{(2c/\eps+1)^{1/\sqrt{\eps}}}=O(\exp(\exp(O(\log (1/\eps)/\sqrt{\eps})))\]
    \[h_{\epo}\ge h_0/\prod_{\epo\le 1/\eps}(1/\eps)^{(2c/\eps+1)^{\epo}}\ge h_0/ (1/\eps)^{2\cdot (2c/\eps+1)^{1/\sqrt{\eps}}}\ge h\]
    \[\phi_{\epo}\ge \phi_0/\left(n^{5c\eps}\right)^{1/\sqrt{\eps}}\ge \phi\]

    Thus, according to the monotonicity \Cref{lem:monotonicity}, by setting $s=\exp(\exp(O(\log (1/\eps)/\sqrt{\eps}))$, we get
    \[DLSC_{G_{1/\sqrt{\eps},z}}(\phi,h,s)<\min_v A(v)\]

    In other words, on the graph $G_{1/\sqrt{\eps},z}$, there is no $(h,s)$-length $\phi$-sparse cut (since demand size of a sparse cut should be at least $\min_v A(v)$). Thus, $G_{1/\sqrt{\eps},z}$ is a $(h,s)$-length $\phi$-expander for $A$. Moreover, $G_{1/\sqrt{\eps},z}=G+\ell_{1/\sqrt{\eps},z}$. If we interpreted the returned cut $C:=\ell_{1/\sqrt{\eps},z}/h$ as a $h\cdot s$-length cut, then $G_{1/\sqrt{\eps},z}=G-C$ is $(h,s)$-length $\phi$-expanding for $A$ (here we again use the monotonicity \Cref{lem:monotonicity}). Thus, the returned cut $C$ is a $(h,s)$-length $\phi$-expander decomposition. 

    Now analyze the size of the returned cut. Notice that $\ell_{1/\sqrt{\eps,z}}$ is the summation of $h_{\epo}\cdot C_{\epo,j}$ for $\epo\in[1/\sqrt{\eps}],j\in[z]$, where $C_{\epo,j}$ is a $(h_{\epo},s_{\epo})$-length $\phi_{\epo}$-sparse cut according to \Cref{lem:approxsparsecut}. Thus, we have that $|C_{\epo,j}|\le |A|\cdot \phi_{\epo}$ according to \Cref{def:sparsity}. Now we can give an upper bound for the returned cut $C$ through the following calculations.
    \begin{align*}
        |C|&\le \frac{1}{h}\cdot \sum_{\epo\in[1/\sqrt{\eps}],j\in[z]}|C_{\epo,j}|\cdot h_{\epo}\\
        &\le \frac{1}{h}\cdot \sum_{\epo\in[1/\sqrt{\eps}],j\in[z]}|A|\cdot \phi_{\epo}\cdot h_{\epo}\\
        &\le \frac{|A|}{h}\cdot \sum_{\epo\in[1/\sqrt{\eps}],j\in[z]}\phi\cdot n^{5c\sqrt{\eps}}\cdot h\cdot (1/\eps)^{(3c/\eps)^{1/\sqrt{\eps}}}\\
        &\le |A|\cdot \phi\cdot \left(\frac{1}{\sqrt{\eps}}\cdot n^{2c\sqrt{\eps}} \cdot n^{5c\sqrt{\eps}}\cdot (1/\eps)^{(3c/\eps)^{1/\sqrt{\eps}}}\right)\\
        &\le |A|\cdot \phi\cdot n^{O(\sqrt{\eps})}\\
    \end{align*}

    So the cut slack $\kappa=n^{O(\sqrt{\eps})}$.

    \paragraph{Witness.} We just argued that at the end of the algorithm, the graph $G_{1/\sqrt{\eps},z}$ contains no $(h,s)$-length $\phi$-sparse cut. According to \Cref{lem:approxsparsecut}, the algorithm must return a witness with length

    \[h_{1/\sqrt{\eps}}/s^{O(1/\eps)}\ge h\]

    length slackness 
    
    \[s_{1/\sqrt{\eps}}/\eps\le s\]

    and expansion

    \[\phi_{1/\sqrt{\eps}}/n^{O(\eps)}\ge \phi\]

    So it is a $(h,s)$-length $\phi$-expansion witness for $G_{1/\sqrt{\eps}}=G-C$. %
    
    \paragraph{Complexity.} It remains to argue the complexity of \Cref{alg:EDsfromCuts}. The algorithm consists of $1/\sqrt{\eps}=O(\log \log n)$ interactions, each iteration consists of $z=n^{2c\sqrt{\eps}}$ iterations where each of them contains a call to \textsc{ApxDLSC}. According to \Cref{lem:approxsparsecut}, the work for one call is $\poly(h_{\epo})\cdot m\cdot n^{f(\eps)}$ and depth is $\poly(h_{\epo})\cdot n^{f(\eps)}$ where $f$ is a function that goes to $0$ as $\eps$ goes to $0$. Notice that the number of iterations is subsumed by $n^{f(\eps)}$. Moreover, since we have $h_{\epo}\le h\cdot (1/\eps)^{(3c/\eps)^{1/\sqrt{\eps}}}$ and $\eps>1/\sqrt{\log \log n}$, we get that $\poly(h_{\epo})=\poly(h)\cdot n^{O(\eps)}$.

    The theorem follows by setting $\eps=\sqrt{\eps}$. 
\end{proof}
\end{document}